\renewcommand\footnotetextcopyrightpermission[1]{}
\let\@authorsaddresses\@empty
\newcommand{\Bw}{$\text{BLDC}^{\omega}$}
\newcommand{\Gc}{\textsc{GCore}}
\newcommand{\Dt}{$\text{DDC}^{\top}$}
\title{Unifying Linearity and Dependency Analyses}
\author{Pritam Choudhury}
\affiliation{
 \institution{University of Pennsylvania}
  \city{Philadelphia}
  \country{USA}
}
\newcommand{\ottdrule}[4][]{{\displaystyle\frac{\begin{array}{l}#2\end{array}}{#3}\quad\ottdrulename{#4}}}
\newcommand{\ottpremise}[1]{ #1 \\}
\newenvironment{ottdefnblock}[3][]{ \framebox{\mbox{#2}} \quad #3 \\[0pt]}{}
\newcommand{\ottnt}[1]{\mathit{#1}}
\newcommand{\ottmv}[1]{\mathit{#1}}
\newcommand{\ottdrulename}[1]{\textsc{#1}}
\newcommand{\ottdruleSTXXLamOmega}[1]{\ottdrule[#1]{%
\ottpremise{  \Gamma  ,   \ottmv{x}  :^{  \ottnt{q}  \cdot  \ottnt{r}  }  \ottnt{A}    \vdash  \ottnt{b}  :^{ \ottnt{q} }  \ottnt{B} }%
\ottpremise{  \ottnt{q}  =   \omega    \Rightarrow   \ottnt{r}  =   \omega   }%
\ottpremise{ \ottnt{q_{{\mathrm{0}}}}  \neq   0  }%
}{
  \ottnt{q_{{\mathrm{0}}}}  \cdot  \Gamma   \vdash   \lambda^{ \ottnt{r} }  \ottmv{x}  :  \ottnt{A}  .  \ottnt{b}   :^{  \ottnt{q_{{\mathrm{0}}}}  \cdot  \ottnt{q}  }   {}^{ \ottnt{r} }\!  \ottnt{A}  \to  \ottnt{B}  }{%
{\ottdrulename{ST\_LamOmega}}{}%
}}
\newcommand{\ottdruleSTXXLamOmegaZero}[1]{\ottdrule[#1]{%
\ottpremise{  \Gamma  ,   \ottmv{x}  :^{  \ottnt{q}  \cdot  \ottnt{r}  }  \ottnt{A}    \vdash  \ottnt{b}  :^{ \ottnt{q} }  \ottnt{B} }%
\ottpremise{  \ottnt{q}  =   \omega    \Rightarrow   \ottnt{r}  =   \omega   }%
}{
 \Gamma  \vdash   \lambda^{ \ottnt{r} }  \ottmv{x}  :  \ottnt{A}  .  \ottnt{b}   :^{ \ottnt{q} }   {}^{ \ottnt{r} }\!  \ottnt{A}  \to  \ottnt{B}  }{%
{\ottdrulename{ST\_LamOmega0}}{}%
}}
\newcommand{\ottdruleSTXXOmega}[1]{\ottdrule[#1]{%
\ottpremise{   \omega   \cdot  \Gamma   \vdash  \ottnt{a}  :^{ \ottnt{q} }  \ottnt{A} }%
\ottpremise{ \ottnt{q}  \neq   0  }%
}{
   \omega   \cdot  \Gamma   \vdash  \ottnt{a}  :^{  \omega  }  \ottnt{A} }{%
{\ottdrulename{ST\_Omega}}{}%
}}
\newcommand{\ottdrulePTSXXLamOmega}[1]{\ottdrule[#1]{%
\ottpremise{  \Gamma  ,   \ottmv{x}  :^{  \ottnt{q}  \cdot  \ottnt{r}  }  \ottnt{A}    \vdash  \ottnt{b}  :^{ \ottnt{q} }  \ottnt{B} }%
\ottpremise{ \Delta  \vdash_{0}   \Pi  \ottmv{x}  :^{ \ottnt{r} } \!  \ottnt{A}  .  \ottnt{B}   :   \ottmv{s}  }%
\ottpremise{  \lfloor  \Gamma  \rfloor   =  \Delta }%
\ottpremise{  \ottnt{q}  =   \omega    \Rightarrow   \ottnt{r}  =   \omega   }%
\ottpremise{ \ottnt{q_{{\mathrm{0}}}}  \neq   0  }%
}{
  \ottnt{q_{{\mathrm{0}}}}  \cdot  \Gamma   \vdash   \lambda^{ \ottnt{r} }  \ottmv{x}  :  \ottnt{A}  .  \ottnt{b}   :^{  \ottnt{q_{{\mathrm{0}}}}  \cdot  \ottnt{q}  }   \Pi  \ottmv{x}  :^{ \ottnt{r} } \!  \ottnt{A}  .  \ottnt{B}  }{%
{\ottdrulename{PTS\_LamOmega}}{}%
}}
\keywords{Graded Type System, Dependent Types, Heap Semantics}
\begin{document}

\bibliographystyle{ACM-Reference-Format}
\citestyle{acmauthoryear} 

\begin{abstract}

Linearity and dependency analyses are key to several applications in computer science, especially, in resource management and information flow control. What connects these analyses is that both of them need to model at least two different worlds with constrained mutual interaction. To elaborate, a typical linearity analysis would model nonlinear and linear worlds with the constraint that derivations in the nonlinear world cannot make use of assumptions from the linear world; a typical dependency analysis would model low-security and high-security worlds with the constraint that information from the high-security world never leaks into the low-security world. Now, though linearity and dependency analyses address similar problems, these analyses are carried out by employing different methods. For linearity analysis, type systems employ the comonadic exponential modality from Girard's linear logic. For dependency analysis, type systems employ the monadic modality from Moggi's computational metalanguage. Owing to this methodical difference, a unification of the two analyses, though theoretically and practically desirable, is not straightforward. 

Fortunately, with recent advances in graded-context type systems, it has been realized that linearity and dependency analyses can be viewed through the same lens. However, existing graded-context type systems fall short of a unification of linearity and dependency analyses. The problem with existing graded-context type systems is that though their linearity analysis is general, their dependency analysis is limited, primarily because the graded modality they employ is comonadic and not monadic. In this paper, we address this limitation by systematically extending existing graded-context type systems so that the graded modality is both comonadic and monadic. This extension enables us to unify linearity analysis with a general dependency analysis. We present a unified Linear Dependency Calculus, LDC, which analyses linearity and dependency using the same mechanism in an arbitrary Pure Type System. We show that LDC is a general linear and dependency calculus by subsuming into it the standard calculi for the individual analyses.  

%In a type system, bounded linearity analysis guarantees fairness of resource usage whereas dependency analysis guarantees security of information flow. Both analyses have several applications in programming languages, for example, counting uses of variables in programs, erasing irrelevant parts of programs, etc. These analyses, however, use different methods to realize their objectives. Bounded linearity analysis fine-grains the exponential modality from Girard's linear logic to track resource usage whereas dependency analysis fine-grains the modal type constructor from Moggi's computational metalanguage to track information flow. These modalities are respectively comonadic and monadic in nature. In this paper, by using a graded modality that is simultaneously comonadic and monadic, we show that these analyses can be smoothly combined into a single system. Our system unifies several existing calculi analysing either bounded linearity or dependency. Our work also shows that bounded linearity analysis and dependency analysis are essentially the same analysis viewed from two different angles.

%Linearity analysis needs to model linear and non-linear worlds with the constraint that derivations in the non-linear world cannot depend upon assumptions from the linear world. Dependency analysis needs to model low and high security worlds with the constraint that information from the high-security world is not leaked to the low-security world.
\end{abstract}

\maketitle
\pagestyle{plain}

\section{Introduction} \label{intro}

Type systems formalize our intuition of correct programs: upon formalization, a correct program is one that is well-typed. Our intuition of correctness, however, varies depending upon application. For example: in a distributed system, correctness would entail absence of simultaneous write access to a file by multiple users; in a security system, correctness would entail absence of read access to secret files by public users. To formalize such and other similar notions of correctness, we employ linear type systems and dependency type systems. These type systems have a wide variety of applications. 

Linear type systems are good at \textit{managing resource usage}. They can reason about state and enable compiler optimizations like in-place update of memory locations \citep{linearchangeworld}. They can be employed for fine-grained memory management \citep{chirimar,linearregions}. Through session types, they can guarantee absence of deadlocks in distributed systems \citep{sessiontypes}.  Owing to their utility, linear types have found their way into several programming languages, like Haskell \citep{linearhaskell}, Granule \citep{orchard}, Idris 2 \citep{idris2}.

Dependency type systems are good at \textit{controlling information flow}. In the form of security type systems \citep{slam,volpano}, they guarantee that low-security outputs do not depend upon high-security inputs. In the form of binding-time type systems \citep{gomard,lambdacirc}, they guarantee that early-bound expressions do not depend upon late-bound ones. Dependency type systems are quite common in practical programming languages. For example, the metaprogramming language MetaOcaml \citep{meta} is based on the dependency type system of \citet{lambdacirc}. The Jif extension of Java, employed for ensuring secure information flow, is based on the dependency type system of \citet{jflow}. 

Though linear type systems and dependency type systems serve different purposes, they essentially address the same abstract problem. The problem is to model at least two different worlds that interact following given constraints. For example:  a typical linear type system  needs to model nonlinear and linear worlds with the constraint that derivations in the nonlinear world cannot make use of assumptions from the linear world; a typical dependency type system needs to model low-security and high-security worlds with the constraint that information from the high-security world never leaks into the low-security world. This fundamental similarity suggests that linearity and dependency analyses could be unified.

There are several benefits to such a unification. First, from a theoretical perspective, it would unify the standard calculi for linearity analysis \cite{benton,dill} with the standard calculi for dependency analysis \cite{dcc,igarashi}. Second, from a practical perspective, it would allow programmers to use the same type system for both linearity and dependency analyses. Presently, programmers use different type systems for this purpose. For example, Haskell programmers use the Linear Haskell extension \cite{linearhaskell} for linearity analysis and LIO library \cite{ifchaskell} for dependency analysis. Third, it would allow a combination of the two analyses. A combined analysis is more powerful than the individual analyses done separately because it would allow arbitrary combination of usage and flow constraints. For example, in a combined analysis, a piece of data may be simultaneously linear and private. 

A unification of linear and dependency type systems, though desirable, is not straightforward. This is so because these type systems employ different methods to enforce their respective constraints. Linear type systems \cite{abramsky,ill,dill,turner} employ the \textit{comonadic} exponential modality, $!$, taken from linear logic \cite{girard}, to manipulate nonlinear resources in a base linear world. On the other hand, dependency type systems \cite{dcc,slam,volpano} employ the \textit{monadic} modality, $T$, taken from computational metalanguage \cite{moggi}, to manipulate high-security values in a base low-security world. 

The modalities $!$ and $T$ behave differently. For example: In a dependency calculus, there is no non-constant function of type $  T \:   \mathbf{Bool}    \to   \mathbf{Bool}  $ because any such function would leak information. However, in a linear calculus, there exists a non-constant function, $  \emptyset   \vdash   \lambda  \ottmv{x}  .   \mathbf{derelict} \:   \ottmv{x}     : \:    ! \:   \mathbf{Bool}    \to   \mathbf{Bool}   $, since nonlinear resources can be used linearly. Next, in a dependency calculus, $  \ottmv{x}  :   \mathbf{Bool}    \vdash   \eta \:   \ottmv{x}    : \:   T \:   \mathbf{Bool}   $ is a valid typing judgment because information can flow from low-security world to high-security world. However, in a linear calculus, $  \ottmv{x}  :   \mathbf{Bool}    \vdash   ! \:   \ottmv{x}    : \:   ! \:   \mathbf{Bool}   $ is not a valid typing judgment because nonlinear resources cannot make use of linear assumptions. Owing to these differences, for a long time, linearity and dependency analyses have been carried out independent of one another.

However, with recent advances in graded-context type systems \cite{ghica,brunel,petricek,orchard,abel20}, it has been realized that the two analyses can be viewed through the same lens. Graded-context type systems are type systems parametrized over preordered semirings or similar abstract algebraic structures. These type systems can carry out a wide variety of usage analyses through different instantiations of the parametrizing structure. In other words, they are not limited to linear/nonlinear usage analysis but can also analyze other forms of usage, like no usage, affine usage, bounded usage, etc. As such, graded-context type systems are more general than linear type systems. Now, to understand how a graded-context type system might analyze dependency, we need to consider the similarity between the modalities $!$ and $T$, from the perspective of usage analysis. Assuming the base world to be linear, the monadic modality, $T$, may be understood as indicating no usage because terms from a `monad world' cannot be used outside that world. This understanding brings the modalities $!$ (indicating unrestricted usage) and $T$ (indicating no usage) under the same umbrella and gives us a perspective on the differences in their behavior mentioned above.

%Graded-context type systems have their basis in bounded linear logic \cite{bounded} that allows bounded reuse in lieu of just linear and nonlinear use, as is the case with linear logic.  This makes graded-context type systems more general than linear type systems because by grading the $!$-modality, they can reason about linear use, unrestricted use, no use, bounded use, etc.  For example, in a graded-context type system, unrestricted use and no use are expressed using $!_{ \omega }$ and $!_0$ modalities respectively. Note that the standard $!$ and $T$ modalities of linear and dependency calculi can be represented using $!_{ \omega }$ and $!_0$ modalities respectively. This representation brings the two modalities 

However, the problem with existing graded-context type systems is that though their linearity analysis is general, their dependency analysis is limited. There are several aspects of dependency analysis that these systems cannot capture. We discuss them in detail in the next section. The main reason behind this shortcoming is that graded-context type systems have been designed for analyzing coeffects, i.e. how programs depend upon their contexts. Coeffects include linearity (single usage), irrelevance (no usage), etc. Dependency, however, behaves more like an effect. To elaborate: low and high security computations may be seen as pure and effectful computations respectively. An effect like dependency is not well captured by existing graded-context type systems, which are basically coeffect calculi.

In this paper, we show that by systematically extending existing graded-context type systems, we can use them for a general dependency and linearity analysis. We design a calculus, LDC, that can simultaneously analyze, using the same mechanism, a coeffect like linearity and an effect like dependency. LDC is parametrized by an arbitrary Pure Type System and it subsumes standard calculi for linearity and dependency analyses. We show that linearity and dependency analyses in LDC are correct using a heap semantics.

In summary, we make the following contributions:
\begin{itemize}
%\item We show that bounded linearity analysis and dependency analysis can be smoothly combined into a single system.
\item We present a language, LDC, parametrized by an arbitrary pure type system, that analyzes linearity and dependency using the same mechanism.
\item We show that LDC subsumes the standard calculi for analyzing linearity and dependency, for example, Linear Nonlinear $\lambda$-calculus of \citet{benton}, DCC of \citet{dcc}, Sealing Calculus of  \citet{igarashi}, etc.
\item We show that correctness of both linearity and dependency analyses in LDC follow from the soundness theorem for the calculus.
\item We show that LDC can carry out a combined linearity and dependency analysis.
\end{itemize} 

\section{Challenges and Resolution} \label{sec:challenges}

\subsection{Dependency Analysis: Salient Aspects} \label{subsec:depsalient}

In the previous section, we discussed about dependency analysis with respect to low-security and high-security worlds. Such an analysis can be extended to an arbitrary (finite) number of worlds with dependency constraints among them. \citet{denning1} observed that dependency constraints upon worlds result in a lattice structure. Recall that a lattice $ \mathcal{L}  = (L, \sqsubseteq )$ is a partially-ordered set, where every pair of elements has a least upper bound, also called join and denoted by $\sqcup$, and a greatest lower bound, also called meet and denoted by $\sqcap$. If $L$ is finite, then it has a top element and a bottom element, denoted $\top$ and $\bot$ respectively, such that $\bot  \sqsubseteq  \ell  \sqsubseteq  \top$, for all $\ell \in  \mathcal{L} $. Now, to give an example of a dependency lattice, consider the following set of worlds: a low-security world $ \mathbf{L} $, two medium-security worlds $ \mathbf{M_1} $ and $ \mathbf{M_2} $ that do not share information with each other, and a high-security world $ \mathbf{H} $. These constraints may be modeled by a diamond lattice, $ \mathcal{L}_{\diamond} $, where $  \mathbf{L}   \sqsubseteq   \mathbf{M_1}   \sqsubseteq  \mathbf{H} $ and $  \mathbf{L}   \sqsubseteq   \mathbf{M_2}   \sqsubseteq  \mathbf{H} $, with the idea that any information flow that goes against this lattice order is illegal. More generally, for an arbitrary lattice, $ \mathcal{L} $, given levels $\ell_{{\mathrm{1}}}, \ell_{{\mathrm{2}}} \in  \mathcal{L} $, we have: $ \ell_{{\mathrm{1}}}  \sqsubseteq  \ell_{{\mathrm{2}}} $ if and only if information may flow from $\ell_{{\mathrm{1}}}$ to $\ell_{{\mathrm{2}}}$. 

Dependency type systems \cite{dcc,igarashi} are based on this lattice model of information flow. These type systems grade the monadic modality, $T$, of Moggi's computational metalanguage with labels drawn from an abstract dependency lattice, $ \mathcal{L} $. The idea behind the grading is that for any $\ell \in  \mathcal{L} $, $ T_{ \ell } \:  \ottnt{A} $ would denote $\ell$-secure terms of type $\ottnt{A}$. Figuratively, $ T_{ \ell } \:  \ottnt{A} $ represents the terms of type $\ottnt{A}$, but enclosed in $\ell$-secure boxes, which may be opened only with $\ell$-security clearance. For example, $ T_{  \mathbf{H}  } \:   \mathbf{Bool}  $ is the type of high-secure booleans, which may be observed only with high-security clearance. Now, a dependency type system may be said to be sound if it ensures that $\ell_{{\mathrm{1}}}$-secure inputs do not affect $\ell_{{\mathrm{2}}}$-secure outputs, whenever $\neg( \ell_{{\mathrm{1}}}  \sqsubseteq  \ell_{{\mathrm{2}}} )$. This principle may be rephrased as: whenever $\neg( \ell_{{\mathrm{1}}}  \sqsubseteq  \ell_{{\mathrm{2}}} )$, one cannot gain any information about inputs from $\ell_{{\mathrm{1}}}$ by observing outputs at $\ell_{{\mathrm{2}}}$. This is the well-known principle of \textit{noninterference}, from which dependency type systems draw their strength. To give a concrete example, a corollary of this principle would be that given security lattice $ \mathcal{L}_{\diamond} $, any function of type $   T_{  \mathbf{H}  } \:   \mathbf{Bool}     \to   T_{  \mathbf{L}  } \:   \mathbf{Bool}   $ is a constant function.  

Now, we consider the nature of the graded modality, $T_{\ell}$. While analyzing dependency, one should be able to move freely between the types $ T_{ \ell } \:   T_{ \ell } \:  \ottnt{A}  $ and $ T_{ \ell } \:  \ottnt{A} $ because enclosing a term in a cascade of $\ell$-secure boxes is the same as enclosing it in a single $\ell$-secure box. Moving from $ T_{ \ell } \:  \ottnt{A} $ to $ T_{ \ell } \:   T_{ \ell } \:  \ottnt{A}  $ is easy because one just needs to put an extra wrapper. But moving from $ T_{ \ell } \:   T_{ \ell } \:  \ottnt{A}  $ to $ T_{ \ell } \:  \ottnt{A} $ requires some consideration because it involves unwrapping. To handle the situation, one can invoke the properties of the modality. The modality, being monadic, supports the standard join operation: $   T_{ \ell_{{\mathrm{1}}} } \:   T_{ \ell_{{\mathrm{2}}} } \:  \ottnt{A}     \to   T_{  \ell_{{\mathrm{1}}}  \: \sqcup \:  \ell_{{\mathrm{2}}}  } \:  \ottnt{A}  $, where $\ell_{{\mathrm{1}}}, \ell_{{\mathrm{2}}} \in  \mathcal{L} $. Via this join operation, one can move from  $ T_{ \ell } \:   T_{ \ell } \:  \ottnt{A}  $ to $ T_{ \ell } \:  \ottnt{A} $. Now, if dependency labels correspond to effects, then the join operation corresponds to computing the union of these effects. Just as computing union is important in an effect calculus, having a join operation is important in a dependency calculus. Later in this section, we shall see that existing graded-context type systems cannot derive a join operation. This significantly limits dependency analysis in these type systems.

Before moving further, we want to point out another important aspect of dependency analysis that is sometimes ignored. It pertains to the treatment of functions that are wrapped under $T_{ \mathbf{H} }$. Consider the type: $ T_{  \mathbf{H}  } \:   (     T_{  \mathbf{H}  } \:   \mathbf{Bool}     \to   \mathbf{Bool}    )  $. What should the values of this type be? DCC \cite{dcc} would answer: just $ \eta_{  \mathbf{H}  } \:   (   \lambda  \ottmv{x}  .   \mathbf{true}    )  $ and $ \eta_{  \mathbf{H}  } \:   (   \lambda  \ottmv{x}  .   \mathbf{false}    )  $. But Sealing Calculus \cite{igarashi} would answer: $ \mathbf{seal}_{  \mathbf{H}  } \:   (   \lambda  \ottmv{x}  .   \mathbf{true}    )  $, $ \mathbf{seal}_{  \mathbf{H}  } \:   (   \lambda  \ottmv{x}  .   \mathbf{false}    )  $, $ \mathbf{seal}_{  \mathbf{H}  } \:   (   \lambda  \ottmv{x}  .   \mathbf{unseal}_{  \mathbf{H}  } \:   \ottmv{x}     )  $ and $ \mathbf{seal}_{  \mathbf{H}  } \:   (   \lambda  \ottmv{x}  .   \mathbf{not} \:   (   \mathbf{unseal}_{  \mathbf{H}  } \:   \ottmv{x}    )     )  $. This difference stems from the fact that in Sealing Calculus, the function $   T_{  \mathbf{H}  } \:   \mathbf{Bool}     \to   \mathbf{Bool}  $, if wrapped under $T_{ \mathbf{H} }$, may return a high-security output whereas in DCC, it must always return a constant output. In this regard, Sealing Calculus is more general than DCC because it doesn't restrict any function from returning high-security values, if the function itself is wrapped under the high-security label, $ \mathbf{H} $.  Note here that over terminating computations, Sealing Calculus subsumes DCC and is, in fact, more general than DCC, as we see above.

What makes Sealing Calculus more general than DCC is the form of its typing judgment. While DCC employs the traditional form of typing judgment, Sealing Calculus employs a labeled typing judgment of the form: 
\begin{equation} 
  \ottmv{x_{{\mathrm{1}}}}  :  \ottnt{A_{{\mathrm{1}}}}   ,   \ottmv{x_{{\mathrm{2}}}}  :  \ottnt{A_{{\mathrm{2}}}}  , \ldots , x_n : A_n \vdash b :^{\ell} B,
\label{sctyp}
\end{equation}
where $\ell \in \mathcal{L}$. The intuitive meaning of this judgment is that $b$ is an $\ell$-secure term of type $\ottnt{B}$, assuming $x_i$ has type $A_i$, for $i = 1, 2, \ldots, n$. The key advantage of a labeled typing judgment is that it facilitates smooth sealing and unsealing of secure values, as we see in the rules below:
\begin{center}
\drule[]{SC-Seal} \hspace*{20pt} \drule[]{SC-Unseal}
\end{center}
On the other hand, DCC needs to employ a nonstandard \textit{bind}-rule along with an auxiliary protection judgment in order to unseal secure values. The labeled typing judgment of Sealing Calculus also enables any function to return high-security outputs, provided the function itself is wrapped under a high-security label, as discussed above. Owing to these good properties, we use the Sealing Calculus as our model for dependency analysis. As an aside, we allow nonterminating computations in our language, even though Sealing Calculus does not. For a more elaborate comparison of Sealing Calculus and DCC, see \citet{gmcc}.  

\subsection{Graded-Context Type Systems: Salient Aspects} \label{gradedsalient}

Over the recent years, graded-context type systems \cite{ghica,brunel,petricek,effcoeff,mcbride,atkey,orchard,abel20,grad,moon} have been successfully employed for reasoning about resource usage in programs. Graded-context type systems have their roots in bounded linear logic \citep{bounded}, which adapts the `propositions as resources' doctrine of linear logic to characterize resource-bound computations. To achieve its goal, bounded linear logic  imposes wider distinctions on usage of resources, compared to linear logic. Graded-context type systems draw inspiration from bounded linear logic in widening the nature and scope of usage analysis, compared to traditional linear type systems \citep{benton,dill}. 

The power and flexibility of graded-context type systems stem from the fact that they are parametrized by an abstract preordered semiring or a similar structure that represents an algebra of resources. Recall that a semiring, $(Q,+,\cdot,0,1)$, is a set $Q$ with two binary operators, $+$ (addition) and $\cdot$ (multiplication), along with their respective identities, $0$ and $1$, such that addition is commutative and associative, multiplication is associative and distributive over addition, and $0$ is an annihilator for multiplication. A preordered semiring $ \mathcal{Q}  = (Q,+,\cdot,0,1, <: )$ is a semiring $(Q,+,\cdot,0,1)$ with a binary preorder relation, $ <: $, that respects the binary operations, meaning, if $ \ottnt{q_{{\mathrm{1}}}}  <:  \ottnt{q_{{\mathrm{2}}}} $, then $  \ottnt{q}  +  \ottnt{q_{{\mathrm{1}}}}   <:   \ottnt{q}  +  \ottnt{q_{{\mathrm{2}}}}  $ and $  \ottnt{q}  \cdot  \ottnt{q_{{\mathrm{1}}}}   <:   \ottnt{q}  \cdot  \ottnt{q_{{\mathrm{2}}}}  $ and $  \ottnt{q_{{\mathrm{1}}}}  \cdot  \ottnt{q}   <:   \ottnt{q_{{\mathrm{2}}}}  \cdot  \ottnt{q}  $, for all $q \in Q$. By varying the parameter $ \mathcal{Q} $, graded-context type systems can carry out a variety of analyses. Among these analyses, two specific ones that interest us are linearity and affinity. The preordered semirings employed for the two analyses, denoted $ \mathcal{Q}_{\text{Lin} } $ and $ \mathcal{Q}_{\text{Aff} } $ respectively, both have 3 elements: $0,1$ and $ \omega $, with $1 + 1 =   \omega   +   1   =   1   +   \omega   =   \omega   +   \omega   =  \omega $ and $  \omega   \cdot   \omega   =  \omega $. However, $ \mathcal{Q}_{\text{Lin} } $ and $ \mathcal{Q}_{\text{Aff} } $ are ordered differently, as shown in Figure \ref{linafford}. Owing to this difference in ordering, $1$ signifies linear usage in $ \mathcal{Q}_{\text{Lin} } $ but affine usage in $ \mathcal{Q}_{\text{Aff} } $. However, in both $ \mathcal{Q}_{\text{Lin} } $ and $ \mathcal{Q}_{\text{Aff} } $, $0$ signifies no usage while $ \omega $ signifies unrestricted usage.
\begin{figure}
\centering
\begin{subfigure}[b]{0.3\textwidth}
\centering
\begin{tikzpicture}[scale=0.3]
\node at (0,0) (lw) {$ \omega $};
\node at (2,2) (l1) {$1$};
\node at (-2,2) (l0) {$0$};
\draw (lw) -- (l1);
\draw (lw) -- (l0);
\end{tikzpicture}
\caption{Ordering in $ \mathcal{Q}_{\text{Lin} } $} 
\end{subfigure}
\begin{subfigure}[b]{0.3\textwidth}
\centering
\begin{tikzpicture}[scale=0.3]
\node at (0,0) (aw) {$ \omega $};
\node at (0,2) (a1) {$1$};
\node at (0,4) (a0) {$0$};
\draw (aw) -- (a1);
\draw (a1) -- (a0);
\end{tikzpicture} 
\caption{Ordering in $ \mathcal{Q}_{\text{Aff} } $}
\end{subfigure}
\caption{Ordering for tracking linear and affine use}
\label{linafford}
\end{figure}
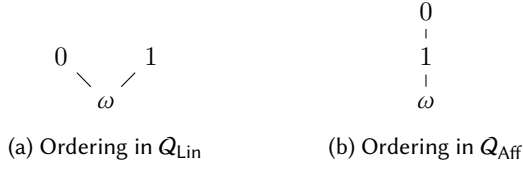

%For example: instantiating $ \mathcal{Q} $ to the set of natural numbers with discrete order, they can carry out an exact usage analysis; instantiating $ \mathcal{Q} $ to the set of natural numbers with descending natural order, they can carry out a bounded usage analysis; etc. 

Graded-context type systems use elements of the parametrizing preordered semiring, $ \mathcal{Q} $, to grade the $!$-modality. The idea behind the grading is that for any $\ottnt{q} \in  \mathcal{Q} $, $ \mspace{2mu} !_{ \ottnt{q} } \mspace{1mu}  \ottnt{A} $ would denote $\ottnt{q}$-usage terms of type $\ottnt{A}$. For example, when $ \mathcal{Q} $ is set to $ \mathcal{Q}_{\text{Lin} } $, the type $ \mspace{2mu} !_{  1  } \mspace{1mu}  \ottnt{A} $ represents terms of type $A$ that must be used exactly once. Again, when $ \mathcal{Q} $ is set to $ \mathcal{Q}_{\text{Aff} } $, the type $ \mspace{2mu} !_{  1  } \mspace{1mu}  \ottnt{A} $ represents terms that may be used at most once. However in both the cases, the type $ \mspace{2mu} !_{  \omega  } \mspace{1mu}  \ottnt{A} $ represents terms that may be used without restriction. Now, a graded-context type system may be said to be sound if it accounts usage correctly, thereby ensuring fairness of usage. Fairness of usage implies absence of arbitrary copying or discarding of resources. To give a concrete example, in a graded-context polymorphic type system, parametrized by $ \mathcal{Q}_{\text{Lin} } $, the types $\forall \alpha .    \mspace{2mu} !_{  1  } \mspace{1mu}   \alpha    \to   \mspace{2mu} !_{  1  } \mspace{1mu}   \alpha     \: \times \:   \mspace{2mu} !_{  1  } \mspace{1mu}   \alpha   $ and $\forall \alpha .   \mspace{2mu} !_{  1  } \mspace{1mu}   \alpha    \to   \mathbf{Unit}  $ should be uninhabited.

The key feature of graded-context type systems is that they grade contexts of typing judgments with elements of the parametrizing structure. A typical typing judgment in a graded-context type system, parametrized by a preordered semiring, $ \mathcal{Q} $, looks like: 
\begin{equation}
  \ottmv{x_{{\mathrm{1}}}}  :^{ \ottnt{q_{{\mathrm{1}}}} }  \ottnt{A_{{\mathrm{1}}}}   ,   \ottmv{x_{{\mathrm{2}}}}  :^{ \ottnt{q_{{\mathrm{2}}}} }  \ottnt{A_{{\mathrm{2}}}}  , \ldots , x_n :^{q_n} A_n \vdash b : B,
\label{gctyp}
\end{equation}
where $q_i \in \mathcal{Q}$. Note that a graded context, $\Gamma$, can be decomposed into a ungraded context, $ \lfloor  \Gamma  \rfloor $, and a vector of grades, $ \overline{ \Gamma } $. The intuitive meaning of the graded context $x_1 :^{q_1} A_1 , x_2 :^{q_2} A_2 , \ldots, x_n :^{q_n} A_n$ is the same as that of the standard context $x_1 : \, !_{q_1} A_1 , x_2 : \, !_{q_2} A_2 , \ldots, x_n : \, !_{q_n} A_n$. However, there is a very good reason behind employing graded contexts in place of standard ones: context operations can be easily defined on graded contexts by lifting the corresponding operations of the preordered semiring to the level of contexts. For example, for graded contexts $\Gamma_{{\mathrm{1}}}$ and $\Gamma_{{\mathrm{2}}}$, given $  \lfloor  \Gamma_{{\mathrm{1}}}  \rfloor   =   \lfloor  \Gamma_{{\mathrm{2}}}  \rfloor  $, the context addition operation, $ \Gamma_{{\mathrm{1}}}  +  \Gamma_{{\mathrm{2}}} $, is defined by pointwise addition of grades and the context multiplication operation, $q \cdot \Gamma$, is defined by pre-multiplying every grade in $\Gamma$ by $\ottnt{q}$. These context operations facilitate smooth accounting of resource usage in graded-context type systems. The introduction and elimination rules for $!_q$, shown below, illustrates how.
\begin{center}
\drule[]{GC-ExpIntro} \hspace*{20pt} \drule[]{GC-ExpElim}
\end{center}  

Next, we consider the nature of the graded modality, $!_q$. This graded modality is graded comonadic over $(Q,\cdot,1)$ \citep{fujii,cokatsumata}, meaning, one can derive the standard extract and fork functions for this modality:
\begin{align*}
& \mathbf{extract}  :=  \lambda  \ottmv{x}  .   \mathbf{let} \: !_{  1  } \:  \ottmv{y}  \: \mathbf{be} \:   \ottmv{x}   \: \mathbf{in} \:   \ottmv{y}    \; : \:    \mspace{2mu} !_{  1  } \mspace{1mu}  \ottnt{A}    \to  \ottnt{A} \\
& \mathbf{fork}^{\ottnt{q_{{\mathrm{1}}}},\ottnt{q_{{\mathrm{2}}}}}  :=  \lambda  \ottmv{x}  .   \mathbf{let} \: !_{  \ottnt{q_{{\mathrm{1}}}}  \cdot  \ottnt{q_{{\mathrm{2}}}}  } \:  \ottmv{y}  \: \mathbf{be} \:   \ottmv{x}   \: \mathbf{in} \:   \mspace{2mu} !_{ \ottnt{q_{{\mathrm{1}}}} } \mspace{1mu}   \mspace{2mu} !_{ \ottnt{q_{{\mathrm{2}}}} } \mspace{1mu}   \ottmv{y}      \; : \:     \mspace{2mu} !_{  \ottnt{q_{{\mathrm{1}}}}  \cdot  \ottnt{q_{{\mathrm{2}}}}  } \mspace{1mu}  \ottnt{A}    \to    \mspace{2mu} !_{ \ottnt{q_{{\mathrm{1}}}} } \mspace{1mu}   \mspace{2mu} !_{ \ottnt{q_{{\mathrm{2}}}} } \mspace{1mu}  \ottnt{A}    
\end{align*}
Here, $\ottnt{A}$ is an arbitrary type and $\ottnt{q_{{\mathrm{1}}}}, \ottnt{q_{{\mathrm{2}}}} \in Q$. The comonadic nature of this modality is essential to supporting usage analysis. To give an example, for the promotion $   \mspace{2mu} !_{  \omega  } \mspace{1mu}  \ottnt{A}    \to   \mspace{2mu} !_{  \omega  } \mspace{1mu}   \mspace{2mu} !_{  \omega  } \mspace{1mu}  \ottnt{A}   $, one needs $\mathbf{fork}^{ \omega , \omega }$. Though the fork function is necessary for usage analysis, the same is not true of the corresponding join function. As such, graded-context type systems do not derive a join function. In fact, these type systems \textit{cannot} derive a join function in general, meaning, there is no general function of type $   \mspace{2mu} !_{ \ottnt{q_{{\mathrm{1}}}} } \mspace{1mu}   \mspace{2mu} !_{ \ottnt{q_{{\mathrm{2}}}} } \mspace{1mu}  \ottnt{A}     \to   \mspace{2mu} !_{  \ottnt{q_{{\mathrm{1}}}}  \cdot  \ottnt{q_{{\mathrm{2}}}}  } \mspace{1mu}  \ottnt{A}  $ derivable in these type systems. Proposition \ref{nojoin} in Appendix \ref{app2} presents a model-theoretic proof of this claim. This shortcoming of graded-context type systems, while inconsequential for usage analysis, is limiting for dependency analysis, as we see next.

\subsection{Limitations of Dependency Analysis in Graded-Context Type Systems} \label{limitdep}

Existing graded-context type systems are limited in their analysis of dependency. Below, we elaborate why.
\begin{enumerate}
\item Graded-context type systems are parametrized by preordered semirings. Dependency analysis, however, is parametrized by lattices. Both preordered semirings and lattices are algebraic structures with two binary operators and a binary order relation. However, a crucial distinction between these structures is that while semirings need to ensure that one operator (multiplication) distributes over the other (addition), lattices do not need to ensure any such property. In fact, there are lattices where neither operator distributes over the other. The simplest of such lattices are $M_3$ and $N_5$ \citep{birkhoff}, both with 5 elements, ordered as shown in Figure \ref{nondistr}. 
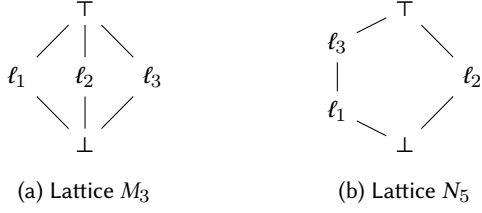
\begin{figure}
\centering
\begin{subfigure}[b]{0.3\textwidth}
\centering
\begin{tikzpicture}[scale=0.3]
\node at (0,0) (b) {$ \bot $};
\node at (3,3) (l3) {$\ell_{{\mathrm{3}}}$};
\node at (0,3) (l2) {$\ell_{{\mathrm{2}}}$};
\node at (-3,3) (l1) {$\ell_{{\mathrm{1}}}$};
\node at (0,6) (t) {$ \top $};
\draw (b) -- (l1);
\draw (b) -- (l2);
\draw (b) -- (l3);
\draw (l1) -- (t);
\draw (l2) -- (t);
\draw (l3) -- (t);
\end{tikzpicture}
\caption{Lattice $M_3$} 
\end{subfigure}
\begin{subfigure}[b]{0.3\textwidth}
\centering
\begin{tikzpicture}[scale=0.3]
\node at (0,0) (b) {$ \bot $};
\node at (3,3) (l2) {$\ell_{{\mathrm{2}}}$};
\node at (-3,1.5) (l1) {$\ell_{{\mathrm{1}}}$};
\node at (-3,4.5) (l3) {$\ell_{{\mathrm{3}}}$};
\node at (0,6) (t) {$ \top $};
\draw (b) -- (l1);
\draw (b) -- (l2);
\draw (l1) -- (l3);
\draw (l3) -- (t);
\draw (l2) -- (t);
\end{tikzpicture} 
\caption{Lattice $N_5$}
\end{subfigure}
\caption{Examples of non-distributive lattices}
\label{nondistr}
\end{figure} 
In $M_3$, join and meet do not distribute over one another: 
\begin{align*}
&   (   \ell_{{\mathrm{1}}}  \: \sqcup \:  \ell_{{\mathrm{2}}}   )   \: \sqcap \:   (   \ell_{{\mathrm{3}}}  \: \sqcup \:  \ell_{{\mathrm{2}}}   )   =  \top  \neq \ell_{{\mathrm{2}}} =   (   \ell_{{\mathrm{1}}}  \: \sqcap \:  \ell_{{\mathrm{3}}}   )   \: \sqcup \:  \ell_{{\mathrm{2}}}  \\
&   (   \ell_{{\mathrm{1}}}  \: \sqcap \:  \ell_{{\mathrm{2}}}   )   \: \sqcup \:   (   \ell_{{\mathrm{3}}}  \: \sqcap \:  \ell_{{\mathrm{2}}}   )   =  \bot  \neq \ell_{{\mathrm{2}}} =   (   \ell_{{\mathrm{1}}}  \: \sqcup \:  \ell_{{\mathrm{3}}}   )   \: \sqcap \:  \ell_{{\mathrm{2}}} .
\end{align*}
The same is true of $N_5$: 
\begin{align*}
&   (   \ell_{{\mathrm{2}}}  \: \sqcup \:  \ell_{{\mathrm{1}}}   )   \: \sqcap \:   (   \ell_{{\mathrm{3}}}  \: \sqcup \:  \ell_{{\mathrm{1}}}   )   = \ell_{{\mathrm{3}}} \neq \ell_{{\mathrm{1}}} =   (   \ell_{{\mathrm{2}}}  \: \sqcap \:  \ell_{{\mathrm{3}}}   )   \: \sqcup \:  \ell_{{\mathrm{1}}}  \\
&   (   \ell_{{\mathrm{1}}}  \: \sqcap \:  \ell_{{\mathrm{3}}}   )   \: \sqcup \:   (   \ell_{{\mathrm{2}}}  \: \sqcap \:  \ell_{{\mathrm{3}}}   )   = \ell_{{\mathrm{1}}} \neq \ell_{{\mathrm{3}}} =   (   \ell_{{\mathrm{1}}}  \: \sqcup \:  \ell_{{\mathrm{2}}}   )   \: \sqcap \:  \ell_{{\mathrm{3}}} .
\end{align*} Thus, an arbitrary lattice can not be viewed as a preordered semiring. Distributive lattices, i.e. lattices where join and meet distribute over one another, however, may be viewed as preordered semirings where multiplication, addition and order are given by join, meet and the lattice order respectively. So, existing graded-context type systems could \textit{potentially} carry out dependency analysis over distributive lattices. However, we view this as a limitation since we see no principled justification in restricting dependency analysis to distributive lattices only. 

\item Dependency analysis needs a join operator, as discussed in Section \ref{subsec:depsalient}. However, graded-context type systems cannot derive a join operator, as discussed in Section \ref{gradedsalient}. An ad hoc solution to this problem might be to add an explicit join operator to the type system, as follows:
\begin{center}
\drule[]{GC-Join}
\end{center}
%for example, an operator $\mathbf{join}$ of type $   \mspace{2mu} !_{ \ottnt{q_{{\mathrm{1}}}} } \mspace{1mu}   \mspace{2mu} !_{ \ottnt{q_{{\mathrm{2}}}} } \mspace{1mu}  \ottnt{A}     \to   \mspace{2mu} !_{  \ottnt{q_{{\mathrm{1}}}}  \cdot  \ottnt{q_{{\mathrm{2}}}}  } \mspace{1mu}  \ottnt{A}  $ for all $\ottnt{A}$ and $\ottnt{q_{{\mathrm{1}}}}, \ottnt{q_{{\mathrm{2}}}} \in Q$. 
However, such a solution is not very satisfactory owing to the following reasons:
\begin{itemize}
\item This $\mathbf{join}$ operator raises problems in operational semantics, especially with call-by-name reduction. To understand why, consider the question: how should $ \mathbf{join}^{ \ottnt{q_{{\mathrm{1}}}} , \ottnt{q_{{\mathrm{2}}}} } \:  \ottnt{b} $ reduce in a call-by-name calculus?  To reduce terms headed by $\mathbf{join}$, one might come up with the following rules:

\begin{center}
\drule[]{Step-JoinLeft} \hspace*{15pt} \drule[]{Step-JoinBeta}
\vspace*{2pt}
\end{center} 

Then, to reduce $ \mathbf{join}^{ \ottnt{q_{{\mathrm{1}}}} , \ottnt{q_{{\mathrm{2}}}} } \:  \ottnt{b} $, one would first reduce $\ottnt{b}$ to a value and thereafter attempt to apply \rref{Step-JoinBeta}. But notice that such an attempt might not be successful even for well-typed terms. This is so because in a call-by-name calculus, $ \mspace{2mu} !_{ \ottnt{q} } \mspace{1mu}  \ottnt{a'} $ is a value, irrespective of whether $\ottnt{a'}$ itself is a value or not. As such, the reduction of $ \mathbf{join}^{ \ottnt{q_{{\mathrm{1}}}} , \ottnt{q_{{\mathrm{2}}}} } \:  \ottnt{b} $ might just stop at a term like $ \mathbf{join}^{ \ottnt{q_{{\mathrm{1}}}} , \ottnt{q_{{\mathrm{2}}}} } \:   \mspace{2mu} !_{ \ottnt{q_{{\mathrm{1}}}} } \mspace{1mu}  \ottnt{b'}  $, where $\ottnt{b'}$ is not headed by $!_{q_2}$. In such a case, the term $ \mathbf{join}^{ \ottnt{q_{{\mathrm{1}}}} , \ottnt{q_{{\mathrm{2}}}} } \:   \mspace{2mu} !_{ \ottnt{q_{{\mathrm{1}}}} } \mspace{1mu}  \ottnt{b'}  $ cannot step via \rref{Step-JoinBeta}. Given this situation, to maintain type soundness, one would be forced to declare terms like $ \mathbf{join}^{ \ottnt{q_{{\mathrm{1}}}} , \ottnt{q_{{\mathrm{2}}}} } \:   \mspace{2mu} !_{ \ottnt{q_{{\mathrm{1}}}} } \mspace{1mu}  \ottnt{b'}  $ as values, thereby allowing unprincipled values within the calculus.    

%If $v$ is a value, should $ \mathbf{join} \:  \ottnt{v} $ be also a value? Then, we would have more values like $ \mathbf{join} \:   \mspace{2mu} !_{ \ottnt{q_{{\mathrm{1}}}} } \mspace{1mu}   \mspace{2mu} !_{ \ottnt{q_{{\mathrm{2}}}} } \mspace{1mu}   \mathbf{true}    $. Alternatively, if $ \mathbf{join} \:  \ottnt{v} $ is not a value, how should it step? One might answer $ \mathbf{join} \:   \mspace{2mu} !_{ \ottnt{q_{{\mathrm{1}}}} } \mspace{1mu}   \mspace{2mu} !_{ \ottnt{q_{{\mathrm{2}}}} } \mspace{1mu}  \ottnt{v}   $ should step to $ \mspace{2mu} !_{  \ottnt{q_{{\mathrm{1}}}}  \cdot  \ottnt{q_{{\mathrm{2}}}}  } \mspace{1mu}  \ottnt{v} $. While such a stepping rule would make sense in a call-by-value calculus, it would not in a call-by-name calculus because there $ \mspace{2mu} !_{ \ottnt{q} } \mspace{1mu}  \ottnt{a} $ is a value, irrespective of whether $\ottnt{a}$ itself is a value or not. 

\item Further, the addition of this $\mathbf{join}$ operator breaks the symmetry of the type system since \rref{GC-Join} is neither an introduction rule nor an elimination rule for $!$. Owing to this break in symmetry, there is no principled way to ensure that $\mathbf{join}$ is the inverse of $\mathbf{fork}$, which is already derivable in the type system, as shown in Section \ref{gradedsalient}. Thus, this $\mathbf{join}$ operator does not bode well for the equational theory of the type system as well.

\end{itemize}

In light of these drawbacks, we conclude that adding an ad hoc $\mathbf{join}$ operator to enable dependency analysis in a graded-context type system is not the right approach. So, we avoid this approach.

\item Next, as discussed in Section \ref{subsec:depsalient}, we want a dependency calculus where any function can return high-security values, whenever the function itself is wrapped under a high-security label. Existing graded-context type systems do not allow this. For example, in a graded-context type system parametrized by the lattice, $  \mathbf{L}   \sqsubseteq   \mathbf{H}  $, the type $ \mspace{2mu} !_{  \mathbf{H}  } \mspace{1mu}   (     \mspace{2mu} !_{  \mathbf{H}  } \mspace{1mu}   \mathbf{Bool}     \to   \mathbf{Bool}    )  $ contains only two distinct terms, $ \mspace{2mu} !_{  \mathbf{H}  } \mspace{1mu}   (   \lambda  \ottmv{x}  .   \mathbf{true}    )  $ and $ \mspace{2mu} !_{  \mathbf{H}  } \mspace{1mu}   (   \lambda  \ottmv{x}  .   \mathbf{false}    )  $. Put differently, in a graded-context type system, there are no terms corresponding to $ \mathbf{seal}_{  \mathbf{H}  } \:   (   \lambda  \ottmv{x}  .   \mathbf{unseal}_{  \mathbf{H}  } \:   \ottmv{x}     )  $ and $ \mathbf{seal}_{  \mathbf{H}  } \:   (   \lambda  \ottmv{x}  .   \mathbf{not} \:   (   \mathbf{unseal}_{  \mathbf{H}  } \:   \ottmv{x}    )     )  $ from the Sealing Calculus.
\end{enumerate} 

Hence, we see that there are several impediments to dependency analysis in existing graded-context type systems. This motivates us to look for other solutions for unifying linearity and dependency analyses.

\subsection{Towards Resolution} \label{subsec:res}

Recent work by \citet{ddc} points towards a possible way of unifying linearity and dependency analyses. \citet{ddc} present $\text{DDC}^{\top}$, a type system for general dependency analysis in Pure Type Systems. $\text{DDC}^{\top}$, though similar to existing graded-context type systems, avoids all their limitations listed above. \Dt{} is parametrized by an arbitrary lattice and subsumes the Sealing Calculus. Further, \Dt{} can analyze dependencies in a dependent setting, for example, run-time irrelevance in dependently-typed programs. The key difference between $\text{DDC}^{\top}$ and graded-context type systems is in the form of their typing judgments. $\text{DDC}^{\top}$, in addition to grading contexts of typing judgments, also puts labels on the judgments, as shown below: 
\begin{equation}
  \ottmv{x_{{\mathrm{1}}}}  :^{ \ell_{{\mathrm{1}}} }  \ottnt{A_{{\mathrm{1}}}}   ,   \ottmv{x_{{\mathrm{2}}}}  :^{ \ell_{{\mathrm{2}}} }  \ottnt{A_{{\mathrm{2}}}}  , \ldots , x_n :^{\ell_n} A_n \vdash b :^{\ell} B,
\label{ddctyp}
\end{equation}
where $\ell_i, \ell$ are elements of the parametrizing lattice. The label to the right of the turnstile in (\ref{ddctyp}) denotes the observer's level. In graded-context type systems, the observer's level is fixed at $1$. The added flexibility of varying the observer's level enables $\text{DDC}^{\top}$ carry out a general dependency analysis.

However, $\text{DDC}^{\top}$ cannot carry out linearity analysis. So the problem of unifying linearity and dependency analyses still remains open. Nevertheless, we can take inspiration from $\text{DDC}^{\top}$ and see what happens when we allow graded-context type systems to vary the observer's level via typing judgments of the form:
\begin{equation} \label{typform}
  \ottmv{x_{{\mathrm{1}}}}  :^{ \ottnt{q_{{\mathrm{1}}}} }  \ottnt{A_{{\mathrm{1}}}}   ,   \ottmv{x_{{\mathrm{2}}}}  :^{ \ottnt{q_{{\mathrm{2}}}} }  \ottnt{A_{{\mathrm{2}}}}  , \ldots , x_n :^{q_n} A_n \vdash b :^q B,
\end{equation} 
where $q_i, q \in \mathcal{Q}$. This typing judgment is essentially a fusion of the typing judgments of the Sealing Calculus and existing graded-context type systems, shown in (\ref{sctyp}) and (\ref{gctyp}) respectively. This seems to be a good start. However, a known roadblock awaits us along this way. \citet{atkey} showed that a type system that uses the above judgment form and is parametrized by an \textit{arbitrary} semiring does not admit substitution. This roadblock is discouraging but we found that it is not a dead end. We found that though substitution is inadmissible over some semirings, it is in fact admissible over several preordered semirings that interest us. In particular, substitution is admissible over the standard preordered semirings employed for tracking linearity and affinity, viz., $ \mathcal{Q}_{\text{Lin} } $ and $ \mathcal{Q}_{\text{Aff} } $ respectively. 

Thus, we finally have a way to unify linearity and dependency analyses. In our Linear Dependency Calculus, LDC, we use typing judgments of the form shown in (\ref{typform}). For linearity and other usage analyses, we parametrize LDC over certain preordered semirings that we describe as we go along. For dependency analysis, we parametrize LDC over an \textit{arbitrary} lattice. For a combined linearity and dependency analysis, we parametrize LDC over the cartesian product of the structures used for the individual analyses.

Now, LDC can analyze linearity and dependency in an arbitrary pure type system. However, some of the key ideas of the calculus are best explained in a simply-typed setting. So, we first present the simply-typed version of LDC and thereafter generalize it to its pure type system version.

\section{Linearity and Dependency Analyses in Simple Type Systems}

\subsection{Type System for Linearity Analysis} \label{subsec:linsimple}

First, we shall analyze exact usage and bounded usage. We shall add unrestricted usage, referred to by $ \omega $, to our calculus in Section \ref{sec:unrestricted}. Exact usage can be analyzed by $ \mathbb{N}_{=} $, the semiring of natural numbers with discrete order. Bounded usage can be analyzed by $ \mathbb{N}_{\geq} $, the semiring of natural numbers with descending natural order. The ordering in $ \mathbb{N}_{\geq} $ looks like: $\ldots <:   4   <:   3   <:   2   <:   1   <: 0$. The reason behind the difference in ordering is that in bounded usage analysis, resources may be discarded. But note that resources are never copied in either of these analyses. %Note that the semirings $ \mathbb{N}_{=} $ and $ \mathbb{N}_{\geq} $ satisfy a special property: they are left multiplicatively cancellable \cite{golan}, meaning if $q \neq 0$ and $  \ottnt{q}  \cdot  \ottnt{r_{{\mathrm{1}}}}   <:   \ottnt{q}  \cdot  \ottnt{r_{{\mathrm{2}}}}  $, then $ \ottnt{r_{{\mathrm{1}}}}  <:  \ottnt{r_{{\mathrm{2}}}} $. This property is useful in some of our proofs. 

Next, we present LDC parametrized over these two preordered semirings. Whenever we need precision, we refer to LDC parametrized over an algebraic structure, $\mathcal{AS}$, as LDC($\mathcal{AS}$). The algebraic structure, $\mathcal{AS}$, may be either a preordered semiring or a lattice or their cartesian product (in case of combined analysis).  

Now, let $ \mathcal{Q}_{\mathbb{N} } $ vary over $\{  \mathbb{N}_{=} ,  \mathbb{N}_{\geq} \}$. The grammar of LDC($ \mathcal{Q}_{\mathbb{N} } $) appears in Figure \ref{GrammarSimple}. The calculus has function types, product types, sum types and a $ \mathbf{Unit} $ type. Product types correspond to multiplicative conjunction of linear logic; sum type corresponds to additive disjunction. Observe that the function type and the product type are annotated with grades. Types $ {}^{ \ottnt{r} }\!  \ottnt{A}  \to  \ottnt{B} $ and $ {}^{ \ottnt{r} }\!  \ottnt{A_{{\mathrm{1}}}}  \: \times \:  \ottnt{A_{{\mathrm{2}}}} $ are essentially $  (   \mspace{2mu} !_{ \ottnt{r} } \mspace{1mu}  \ottnt{A}   )   \to  \ottnt{B} $ and $  (   \mspace{2mu} !_{ \ottnt{r} } \mspace{1mu}  \ottnt{A_{{\mathrm{1}}}}   )   \: \times \:  \ottnt{A_{{\mathrm{2}}}} $ respectively. In lieu of $ {}^{ \ottnt{r} }\!  \ottnt{A}  \to  \ottnt{B} $ and $ {}^{ \ottnt{r} }\!  \ottnt{A_{{\mathrm{1}}}}  \: \times \:  \ottnt{A_{{\mathrm{2}}}} $, we could have used unannotated function and product types, i.e., $ \ottnt{A}  \to  \ottnt{B} $ and $ \ottnt{A_{{\mathrm{1}}}}  \: \times \:  \ottnt{A_{{\mathrm{2}}}} $ respectively, along with a graded exponential modal type, $ \mspace{2mu} !_{ \ottnt{r} } \mspace{1mu}  \ottnt{A} $. However, we chose to present this way because it generalizes easily to the Pure Type System setting. For terms, we have the introduction and the elimination forms corresponding to these types. The terms are also annotated with grades that track resources used by them. Assumptions in the context also appear along with their allowed usages. 

\begin{figure}
\begin{align*}
& \text{types}, A, B, C && ::=  \mathbf{Unit}  \: | \:  {}^{ \ottnt{r} }\!  \ottnt{A}  \to  \ottnt{B}  \: | \:  {}^{ \ottnt{r} }\!  \ottnt{A_{{\mathrm{1}}}}  \: \times \:  \ottnt{A_{{\mathrm{2}}}}  \: | \:  \ottnt{A_{{\mathrm{1}}}}  +  \ottnt{A_{{\mathrm{2}}}}  \\
& \text{terms}, a, b, c && ::= x \: | \:  \lambda^{ \ottnt{r} }  \ottmv{x}  :  \ottnt{A}  .  \ottnt{b}  \: | \:  \ottnt{b}  \:  \ottnt{a} ^{ \ottnt{r} }  \: \\
                   & && | \:  \mathbf{unit}  \: | \:  \mathbf{let}_{ \ottnt{q_{{\mathrm{0}}}} } \: \mathbf{unit} \: \mathbf{be} \:  \ottnt{a}  \: \mathbf{in} \:  \ottnt{b}  \: | \:  (  \ottnt{a_{{\mathrm{1}}}} ^{ \ottnt{r} } ,  \ottnt{a_{{\mathrm{2}}}}  )  \: | \:                      \mathbf{let}_{ \ottnt{q_{{\mathrm{0}}}} } \: (  \ottmv{x} ^{ \ottnt{r} } ,  \ottmv{y}  ) \: \mathbf{be} \:  \ottnt{a}  \: \mathbf{in} \:  \ottnt{b}  \\
                   & && | \:  \mathbf{inj}_1 \:  \ottnt{a_{{\mathrm{1}}}}  \: | \:  \mathbf{inj}_2 \:  \ottnt{a_{{\mathrm{2}}}}  \: | \:  \mathbf{case}_{ \ottnt{q_{{\mathrm{0}}}} } \:  \ottnt{a}  \: \mathbf{of} \:  \ottmv{x_{{\mathrm{1}}}}  .  \ottnt{b_{{\mathrm{1}}}}  \: ; \:  \ottmv{x_{{\mathrm{2}}}}  .  \ottnt{b_{{\mathrm{2}}}}  \\ 
& \text{contexts}, \Gamma && ::=  \emptyset  \: | \:  \Gamma  ,   \ottmv{x}  :^{ \ottnt{q} }  \ottnt{A}  
\end{align*}
\caption{Grammar of LDC($ \mathcal{Q}_{\mathbb{N} } $) (Simple Version)}
\label{GrammarSimple}
\end{figure}  

Next, we look at the type system. The type system appears in Figure \ref{TypeSystemSimple}. There are a few things to note:
\begin{itemize}
\item A graded context $\Gamma$ is essentially a combination of the underlying ungraded context, denoted $\Delta :=  \lfloor  \Gamma  \rfloor $, and the associated vector of grades, denoted $ \overline{ \Gamma } $.
\item For contexts $\Gamma_{{\mathrm{1}}}$ and $\Gamma_{{\mathrm{2}}}$ such that $  \lfloor  \Gamma_{{\mathrm{1}}}  \rfloor   =   \lfloor  \Gamma_{{\mathrm{2}}}  \rfloor   = \Delta$, we define $\Gamma :=  \Gamma_{{\mathrm{1}}}  +  \Gamma_{{\mathrm{2}}} $ as: $  \lfloor  \Gamma  \rfloor   =  \Delta $ and $ \overline{ \Gamma }  =  \overline{ \Gamma_{{\mathrm{1}}} }  +  \overline{ \Gamma_{{\mathrm{2}}} } $ (pointwise vector addition).
\item For context $\Gamma_{{\mathrm{0}}}$ and grade $\ottnt{q}$, we define $\Gamma :=  \ottnt{q}  \cdot  \Gamma_{{\mathrm{0}}} $ as: $ \lfloor  \Gamma  \rfloor  =  \lfloor  \Gamma_{{\mathrm{0}}}  \rfloor $ and $ \overline{ \Gamma }  = \ottnt{q} \cdot  \overline{ \Gamma_{{\mathrm{0}}} } $ (scalar multiplication).
\item For contexts $\Gamma$ and $\Gamma'$ such that $  \lfloor  \Gamma  \rfloor   =   \lfloor  \Gamma'  \rfloor  $, we say $ \Gamma  <:  \Gamma' $ if and only if $ \overline{ \Gamma }   <:   \overline{ \Gamma' } $ (pointwise order).
\item For any context $\Gamma$, we implicitly assume that no two assumptions assign types to the same variable.
\end{itemize}

\begin{figure}
\drules[ST]{$ \Gamma  \vdash  \ottnt{a}  :^{ \ottnt{q} }  \ottnt{A} $}{Simple Version}{Var,Lam,App,Pair,LetPair,Unit,LetUnit,InjOne,InjTwo,Case,SubL,SubR}
\caption{Typing rules for LDC($ \mathcal{Q}_{\mathbb{N} } $)}
\label{TypeSystemSimple}
\end{figure} 

The typing judgment $  \ottmv{x_{{\mathrm{1}}}}  :^{ \ottnt{q_{{\mathrm{1}}}} }  \ottnt{A_{{\mathrm{1}}}}   ,   \ottmv{x_{{\mathrm{2}}}}  :^{ \ottnt{q_{{\mathrm{2}}}} }  \ottnt{A_{{\mathrm{2}}}}  , \ldots, x_n :^{q_n} A_n \vdash a :^q A$ may be intuitively understood as: one can produce $\ottnt{q}$ copies of $\ottnt{a}$ of type $\ottnt{A}$, using $q_i$ copies of $x_i$ of type $A_i$, where $i = 1, 2, \ldots, n$. With this understanding, let us look at the typing rules.

Most of the rules are as expected. A point to note is that the elimination \rref{ST-LetPair,ST-LetUnit,ST-Case} have a side condition $ \ottnt{q_{{\mathrm{0}}}}  <:   1  $. The reason behind this condition is that for reducing any of these elimination forms, we first need to reduce the term $\ottnt{a}$, implying that in any case, we would need the resources for reducing at least one copy of $\ottnt{a}$. We may set $\ottnt{q_{{\mathrm{0}}}}$ to $1$ but allowing any $ \ottnt{q_{{\mathrm{0}}}}  <:   1  $  makes these rules more flexible. For example, owing to this flexibility, in \rref{ST-LetPair}, $\ottnt{b}$ may use $ \ottnt{q}  \cdot  \ottnt{q_{{\mathrm{0}}}} $ copies of $\ottmv{y}$ in lieu of just $\ottnt{q}$ copies of $\ottmv{y}$. Another point to note is how the \rref{ST-SubL,ST-SubR} help discard resources in LDC($ \mathbb{N}_{\geq} $). %The \rref{ST-SubL} discards away some resources and uses the rest to produce $\ottnt{q}$ copies of $\ottnt{a}$. The \rref{ST-SubR} produces $\ottnt{q}$ copies of $\ottnt{a}$ and then discards some of them, retaining only $\ottnt{q'}$ copies.  

Now, let us look at a few examples of derivable and non-derivable terms in LDC($ \mathbb{N}_{\geq} $). (For the sake of simplicity, we omit the domain types in the $\lambda$s.)
\begin{align*}
&   \emptyset   \vdash   \lambda^{  1  }  \ottmv{x}  .   \ottmv{x}    :^{  1  }   {}^{  1  }\!  \ottnt{A}  \to  \ottnt{A}   \text{ but }  \emptyset   \nvdash   \lambda^{  0  }  \ottmv{x}  .   \ottmv{x}    :^{  1  }   {}^{  0  }\!  \ottnt{A}  \to  \ottnt{A}   \\
&   \emptyset   \vdash   \lambda^{  1  }  \ottmv{x}  .   \ottmv{x}    :^{  2  }   {}^{  1  }\!  \ottnt{A}  \to  \ottnt{A}   \text{ but }   \emptyset   \nvdash   \lambda^{  1  }  \ottmv{x}  .   (   \ottmv{x}  ^{  2  } ,   \mathbf{unit}   )    :^{  1  }   {}^{  1  }\!  \ottnt{A}  \to   {}^{  2  }\!  \ottnt{A}  \: \times \:   \mathbf{Unit}    \\
&   \emptyset   \vdash   \lambda^{  1  }  \ottmv{x}  .   \mathbf{let}_{  1  } \: (  \ottmv{x_{{\mathrm{1}}}} ^{  1  } ,  \ottmv{x_{{\mathrm{2}}}}  ) \: \mathbf{be} \:   \ottmv{x}   \: \mathbf{in} \:   \ottmv{x_{{\mathrm{1}}}}     :^{  1  }    {}^{  1  }\!   (   {}^{  1  }\!  \ottnt{A}  \: \times \:  \ottnt{A}   )   \to  \ottnt{A}    \text{ but }   \emptyset   \nvdash   \lambda^{  1  }  \ottmv{x}  .   (   \ottmv{x}  ^{  1  } ,   \ottmv{x}   )    :^{  1  }   {}^{  1  }\!  \ottnt{A}  \to   {}^{  1  }\!  \ottnt{A}  \: \times \:  \ottnt{A}   
\end{align*}

On a closer look, we find that the non-derivable terms above can not use resources fairly. Consider the types of these terms: $ {}^{  0  }\!  \ottnt{A}  \to  \ottnt{A} $, $ {}^{  1  }\!  \ottnt{A}  \to   {}^{  2  }\!  \ottnt{A}  \: \times \:   \mathbf{Unit}   $ and $ {}^{  1  }\!  \ottnt{A}  \to   {}^{  1  }\!  \ottnt{A}  \: \times \:  \ottnt{A}  $. Such types may be inhabited only if resources can be copied. Since we disallow copying, they are essentially uninhabited. However, since LDC($ \mathbb{N}_{\geq} $) allows discarding of resources, the type $ {}^{  1  }\!   (   {}^{  1  }\!  \ottnt{A}  \: \times \:  \ottnt{A}   )   \to  \ottnt{A} $ is, in fact, inhabited. 

Note that in order to produce $0$ copies of any term, we do not need any resources. So in the $0$ world, any annotated type is inhabited, provided its unannotated counterpart is inhabited. However, resources do not have any meaning in the $0$ world. In other words, the judgment $ \Gamma  \vdash  \ottnt{a}  :^{  0  }  \ottnt{A} $ conveys no more information than its corresponding standard $\lambda$-calculus counterpart. 

Next, we look at the operational semantics and metatheory of LDC($ \mathcal{Q}_{\mathbb{N} } $).

\subsection{Metatheory of Linearity Analysis} \label{MetaSimpleBounded}

First, we consider some syntactic properties. The calculus satisfies the multiplication lemma stated below. This lemma says that if we need $\Gamma$ resources to produce $\ottnt{q}$ copies of $\ottnt{a}$, then we would need $ \ottnt{r_{{\mathrm{0}}}}  \cdot  \Gamma $ resources to produce $ \ottnt{r_{{\mathrm{0}}}}  \cdot  \ottnt{q} $ copies of $\ottnt{a}$.
\begin{lemma}[Multiplication]\label{BLSMult}
If $ \Gamma  \vdash  \ottnt{a}  :^{ \ottnt{q} }  \ottnt{A} $, then $  \ottnt{r_{{\mathrm{0}}}}  \cdot  \Gamma   \vdash  \ottnt{a}  :^{  \ottnt{r_{{\mathrm{0}}}}  \cdot  \ottnt{q}  }  \ottnt{A} $.
\end{lemma}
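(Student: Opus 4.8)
The plan is to prove the Multiplication Lemma by a straightforward structural induction on the derivation of $\Gamma \vdash a :^{q} A$. The single structural fact that drives every case is this: in each typing rule, the grades occurring in the premises are obtained from the conclusion grade $q$ by multiplication with rule-specific constants (the annotations $r$ on arrows/pairs, the $q_0$ in the eliminators), and each context — premise and conclusion — is assembled from the others by context addition $+$ and scalar multiplication $q\cdot(-)$. So the proof reduces to: (i) apply the induction hypothesis to every premise \emph{with the same scalar $r_0$}, turning a premise $\Gamma' \vdash b :^{q'} B$ into $r_0 \cdot \Gamma' \vdash b :^{r_0 \cdot q'} B$; and (ii) re-apply the very same rule with $r_0 \cdot q$ in place of $q$, checking the scaled data lines up. The algebraic ingredients needed for (ii) are exactly the semiring axioms lifted to contexts — left-distributivity $r_0 \cdot (\Gamma_1 + \Gamma_2) = (r_0 \cdot \Gamma_1) + (r_0 \cdot \Gamma_2)$, associativity $r_0 \cdot (q \cdot \Gamma) = (r_0 \cdot q) \cdot \Gamma$ and $r_0 \cdot (q \cdot r) = (r_0 \cdot q) \cdot r$, and annihilation $r_0 \cdot 0 = 0$ — together with the fact that $<:$ respects multiplication. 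All of these hold in $ \mathbb{N}_{=} $ and $ \mathbb{N}_{\geq} $, indeed in any preordered semiring, so no case-analysis on which $ \mathcal{Q}_{\mathbb{N} } $ we picked is required.

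Concretely: for \rref{ST-Var} the context is $0$ on every slot except the one for $x$, which carries $q$; scaling by $r_0$ zeroes the zero slots by annihilation and sends the $x$-slot to $r_0 \cdot q$, giving again an instance of \rref{ST-Var} at output grade $r_0 \cdot q$. \rref{ST-Unit} is immediate. For \rref{ST-Lam}, the body premise is $\Gamma, x :^{ q \cdot r } A \vdash b :^{q} B$; the induction hypothesis yields $r_0 \cdot \Gamma, x :^{ r_0 \cdot (q \cdot r) } A \vdash b :^{ r_0 \cdot q } B$, and since $r_0 \cdot (q \cdot r) = (r_0 \cdot q) \cdot r$ this is precisely the premise needed to re-derive $r_0 \cdot \Gamma \vdash \lambda^{r} x{:}A.\, b :^{ r_0 \cdot q } {}^{r}\! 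A \to B$. For \rref{ST-App}, the argument premise types $a$ at grade $q \cdot r$, which the hypothesis turns into $r_0 \cdot (q \cdot r) = (r_0 \cdot q) \cdot r$, the function premise into grade $r_0 \cdot q$, and distributivity recombines the two scaled contexts into $r_0 \cdot (\Gamma_1 + \Gamma_2)$; \rref{ST-Pair}, \rref{ST-InjOne}, \rref{ST-InjTwo} are handled the same way. For the eliminators \rref{ST-LetPair}, \rref{ST-LetUnit}, \rref{ST-Case}, the scrutinee and the bound variables are typed at grades built from $q$ and $q_0$ (and $r$), while the result is typed at $q$; applying the hypothesis with $r_0$ replaces $q$ by $r_0 \cdot q$ uniformly throughout, and — crucially — the side condition $ \ottnt{q_{{\mathrm{0}}}}  <:   1  $ is untouched, since $q_0$ is a fixed constant of the rule and is never scaled. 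The scaled premise contexts reassemble into $r_0$ times the original conclusion context by distributivity and associativity. Finally, \rref{ST-SubL} and \rref{ST-SubR} go through because $<:$ is a congruence for multiplication: $ \Gamma  <:  \Gamma' $ implies $ r_0 \cdot \Gamma  <:  r_0 \cdot \Gamma' $ pointwise, and $ \ottnt{q_{{\mathrm{1}}}}  <:  \ottnt{q_{{\mathrm{2}}}} $ implies $ r_0 \cdot \ottnt{q_{{\mathrm{1}}}}  <:  r_0 \cdot \ottnt{q_{{\mathrm{2}}}} $, so the subsumption step can be replayed at the scaled grades.

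I do not expect any genuinely hard step; the only place that needs care is the bookkeeping in the pair/unit/sum eliminators, where one must confirm that scaling the rule's single free parameter $q$ by $r_0$ produces \emph{all} the scaled premise grades at once (via the identities above) while leaving the constants $q_0$ and $r$, and hence the side condition, alone. It is also worth observing that the statement is automatically well-behaved at the boundary $r_0 = 0$: annihilation collapses $ r_0 \cdot \Gamma $ to the all-zero context and $ r_0 \cdot q $ to $0$, and the same induction still closes without a separate argument — consistent with the remark in the text that any type whose unannotated counterpart is inhabited is inhabited in the $0$-world.
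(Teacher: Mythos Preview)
Your proposal is correct and takes essentially the same approach as the paper: a straightforward induction on the typing derivation, applying the induction hypothesis with the same scalar $r_0$ to each premise and re-applying the rule, using associativity, distributivity, annihilation, and monotonicity of $<:$ under multiplication. The paper's proof works through the same cases with the same algebraic identities; your treatment of the eliminators (noting that $q_0$ and its side condition are unaffected by scaling) matches the paper exactly.
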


Note that we don't provide proofs of the lemmas and the theorems in the main body of the paper but the proofs are available in the appendices.
 
%As a corollary of the above lemma, if $ \Gamma  \vdash  \ottnt{a}  :^{ \ottnt{q} }  \ottnt{A} $, then $   0   \cdot  \Gamma   \vdash  \ottnt{a}  :^{  0  }  \ottnt{A} $. Now, for any $\Gamma_{{\mathrm{0}}}$ satisfying $  \lfloor  \Gamma_{{\mathrm{0}}}  \rfloor   =   \lfloor  \Gamma  \rfloor  $, we have, $ \Gamma_{{\mathrm{0}}}  <:    0   \cdot  \Gamma  $, and as such, $ \Gamma_{{\mathrm{0}}}  \vdash  \ottnt{a}  :^{  0  }  \ottnt{A} $ by \rref{ST-SubL}. This shows why resources don't have any meaning in the $0$ world.
 
The calculus also satisfies a factorization lemma, stated below. This lemma says that if a context $\Gamma$ produces $\ottnt{q}$ copies of $\ottnt{a}$, then $\Gamma$ can be split into $\ottnt{q}$ parts whereby each part produces $1$ copy of $\ottnt{a}$. We need the precondition, $q \neq 0$, since resources don't have any meaning in the $0$ world but they are meaningful in all other worlds, in particular, the $1$ world.
\begin{lemma}[Factorization]\label{BLSFact}
If $ \Gamma  \vdash  \ottnt{a}  :^{ \ottnt{q} }  \ottnt{A} $ and $q \neq 0$, then there exists $\Gamma'$ such that $ \Gamma'  \vdash  \ottnt{a}  :^{  1  }  \ottnt{A} $ and $ \Gamma  <:   \ottnt{q}  \cdot  \Gamma'  $.
\end{lemma}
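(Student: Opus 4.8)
The plan is to proceed by induction on the typing derivation of $ \Gamma  \vdash  \ottnt{a}  :^{ \ottnt{q} }  \ottnt{A} $, with the case analysis driven by the last rule used. The target is to extract a single "unit-weight" derivation $ \Gamma'  \vdash  \ottnt{a}  :^{  1  }  \ottnt{A} $ together with the inequality $ \Gamma  <:   \ottnt{q}  \cdot  \Gamma'  $. The intuition guiding every case is that a derivation at weight $\ottnt{q} \neq 0$ is, structurally, a derivation at weight $1$ whose context-grades have all been pre-multiplied by $\ottnt{q}$; factorization just reads this off. Note that over $ \mathbb{N}_{=} $ the order is discrete and the inequality becomes an equality $ \Gamma  =   \ottnt{q}  \cdot  \Gamma'  $; over $ \mathbb{N}_{\geq} $ the looseness in $ <: $ is genuinely needed because of the subsumption rules, so I will keep the statement with $ <: $ throughout.

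First I would handle \rref{ST-Var}: here $\ottnt{a} = \ottmv{x}$, $\ottnt{q} = \ottnt{q}$ is fixed by the rule to be whatever grade labels $\ottmv{x}$ (up to subsumption on the rest of the context, which are the $0$-graded entries), and taking $\Gamma'$ to be the context with $\ottmv{x}$ graded $1$ and everything else $0$ works, since $ \ottnt{q}  \cdot  \Gamma'  $ then has $\ottmv{x}$ graded $\ottnt{q}$ and the rest $0$, dominating $\Gamma$. For the introduction rules \rref{ST-Lam,ST-Pair,ST-InjOne,ST-InjTwo,ST-Unit}: in each of these the conclusion's weight $\ottnt{q}$ multiplies the whole premise context, so I apply the IH to the subderivation(s) at weight $\ottnt{q}$ (which is $\neq 0$), get unit-weight subderivations, reassemble them with the same rule at weight $1$, and the scaling inequality propagates through context addition and scalar multiplication using the semiring laws and the fact that $ <: $ respects $+$ and $\cdot$. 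The multiplication lemma (Lemma~\ref{BLSMult}) is the companion fact I would invoke to go the other direction when reassembling, i.e. to re-inflate a unit-weight subderivation back up; more precisely, for the constructors whose premises are themselves at weight $\ottnt{q} \cdot (\text{something})$ I combine the IH with Lemma~\ref{BLSMult}. For the elimination rules \rref{ST-App,ST-LetPair,ST-LetUnit,ST-Case}, the subderivation for the scrutinee $\ottnt{a}$ sits at weight $\ottnt{q} \cdot \ottnt{q_{{\mathrm{0}}}}$ or $\ottnt{r}\cdot\ottnt{q}$ etc.; I split on whether that weight is $0$ (handled by a direct argument since only the $0$-world content matters there) or nonzero (apply IH). The subsumption rules \rref{ST-SubL,ST-SubR} are immediate: apply IH to the premise, then transport the $\Gamma'$ and compose the orderings, using transitivity of $ <: $ and, for \rref{ST-SubR}, that $\ottnt{q_{{\mathrm{1}}}} <: \ottnt{q_{{\mathrm{2}}}}$ together with monotonicity gives $\ottnt{q_{{\mathrm{1}}}}\cdot\Gamma' <: \ottnt{q_{{\mathrm{2}}}}\cdot\Gamma'$.

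The main obstacle I anticipate is the interaction between the nonzero hypothesis and the subderivations whose weights are products involving $\ottnt{q_{{\mathrm{0}}}}$, which is only constrained by $ \ottnt{q_{{\mathrm{0}}}}  <:   1  $ and can legitimately be $0$. In those elimination cases the IH is not directly applicable to the scrutinee subderivation, so I need a separate clean lemma — call it the "$0$-world collapse" observation — saying that any derivation at weight $0$ can be reconstructed at any desired context (it carries no resource information), which lets me pick $\Gamma'$ freely for that subpiece and still satisfy the outer inequality since $ \ottnt{q}  \cdot  \ottnt{q_{{\mathrm{0}}}}  = 0$ kills the corresponding term in the context sum. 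A secondary subtlety is purely bookkeeping: keeping the underlying ungraded contexts $ \lfloor \Gamma \rfloor $ aligned across all the $\Gamma'$ produced by the IH for multi-premise rules, so that context addition is even defined; this is routine but must be stated carefully, perhaps by strengthening the induction hypothesis to also assert $ \lfloor \Gamma' \rfloor  =  \lfloor \Gamma \rfloor $.
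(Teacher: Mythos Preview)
Your overall approach---induction on the derivation, using the Multiplication lemma to re-inflate and handling zero-weight subderivations separately---matches the paper's proof. Two points deserve comment.

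First, a small misidentification: in the elimination rules \rref{ST-LetPair,ST-LetUnit,ST-Case}, the side condition $q_0 <: 1$ actually \emph{forces} $q_0 \neq 0$ in both $\mathbb{N}_{=}$ (discrete order: $q_0 = 1$) and $\mathbb{N}_{\geq}$ (descending order: $q_0 \geq 1$). So the scrutinee weight $q \cdot q_0$ is nonzero whenever $q$ is, and the IH applies directly; no ``$0$-world collapse'' is needed there. The genuine zero-weight case arises only in \rref{ST-App} (argument at weight $q \cdot r$) and \rref{ST-Pair} (first component at weight $q \cdot r$), where the annotation $r$ can be $0$. The paper handles these by applying Multiplication to obtain $0 \cdot \Gamma_2 \vdash a :^0 A$ and then arguing $\Gamma_1 + \Gamma_2 <: \Gamma_1$---which in $\mathbb{N}_{\geq}$ is immediate and in $\mathbb{N}_{=}$ uses the observation that any derivation at weight $0$ has an all-zero grade vector.

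Second, a real gap: the \rref{ST-Case} case is not as routine as you suggest. The IH applied to the two branches yields possibly distinct factored contexts $\Gamma'_{21}$ and $\Gamma'_{22}$, but \rref{ST-Case} at weight $1$ requires both branches to share a \emph{single} context $\Gamma'_2$. Over $\mathbb{N}_{=}$ this is free (discreteness of $<:$ forces $\Gamma'_{21} = \Gamma'_{22}$), but over $\mathbb{N}_{\geq}$ you must manufacture a common refinement. The paper introduces a pointwise operator $q_1 \mathbin{;} q_2$ returning the $<:$-smaller (i.e.\ numerically larger) of the two, sets $\Gamma'_2 := \Gamma'_{21} \mathbin{;} \Gamma'_{22}$, checks that $\Gamma_2 <: q \cdot \Gamma'_2$ still holds, and uses \rref{ST-SubL} to coerce both branch derivations to $\Gamma'_2$. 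Your proposal does not address this merging step, and it is the one place where the two parametrizing semirings genuinely diverge in the argument.
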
 

Using the above two lemmas, we can prove a splitting lemma, stated below. This lemma says that if we have the resources, $\Gamma$, to produce $ \ottnt{q_{{\mathrm{1}}}}  +  \ottnt{q_{{\mathrm{2}}}} $ copies of $\ottnt{a}$, then we can split $\Gamma$ into two parts, $\Gamma_{{\mathrm{1}}}$ and $\Gamma_{{\mathrm{2}}}$, such that $\Gamma_{{\mathrm{1}}}$ and $\Gamma_{{\mathrm{2}}}$ can produce $\ottnt{q_{{\mathrm{1}}}}$ and $\ottnt{q_{{\mathrm{2}}}}$ copies of $\ottnt{a}$ respectively. \citet{atkey} showed that the splitting lemma does not hold in a type system that is similar to ours, when parametrized over certain semirings. However, the splitting lemma holds for the preordered semirings we use for parametrizing LDC, i.e., $ \mathbb{N}_{=} $ and $ \mathbb{N}_{\geq} $.
  
\begin{lemma}[Splitting] \label{BLSSplit}
If $ \Gamma  \vdash  \ottnt{a}  :^{  \ottnt{q_{{\mathrm{1}}}}  +  \ottnt{q_{{\mathrm{2}}}}  }  \ottnt{A} $, then there exists $\Gamma_{{\mathrm{1}}}$ and $\Gamma_{{\mathrm{2}}}$ such that $ \Gamma_{{\mathrm{1}}}  \vdash  \ottnt{a}  :^{ \ottnt{q_{{\mathrm{1}}}} }  \ottnt{A} $ and $ \Gamma_{{\mathrm{2}}}  \vdash  \ottnt{a}  :^{ \ottnt{q_{{\mathrm{2}}}} }  \ottnt{A} $ and $ \Gamma  =   \Gamma_{{\mathrm{1}}}  +  \Gamma_{{\mathrm{2}}}  $. 
\end{lemma}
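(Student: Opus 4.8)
The plan is to prove the Splitting Lemma by combining the Multiplication and Factorization lemmas, with the proof splitting on whether $\ottnt{q_{{\mathrm{1}}}}$ or $\ottnt{q_{{\mathrm{2}}}}$ is $0$. I would first dispatch the degenerate cases: if $\ottnt{q_{{\mathrm{1}}}} = 0$, then $ \ottnt{q_{{\mathrm{1}}}}  +  \ottnt{q_{{\mathrm{2}}}}  = \ottnt{q_{{\mathrm{2}}}}$, so I can take $\Gamma_{{\mathrm{2}}} = \Gamma$ and $\Gamma_{{\mathrm{1}}} = 0 \cdot \Gamma$; the judgment $ 0 \cdot \Gamma  \vdash  \ottnt{a}  :^{  0  }  \ottnt{A} $ follows from Multiplication (with $\ottnt{r_{{\mathrm{0}}}} = 0$), and $ \Gamma = \Gamma_{{\mathrm{1}}} + \Gamma_{{\mathrm{2}}} $ holds since $0 \cdot \Gamma$ has all-zero grades. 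The case $\ottnt{q_{{\mathrm{2}}}} = 0$ is symmetric.

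For the main case, where both $\ottnt{q_{{\mathrm{1}}}} \neq 0$ and $\ottnt{q_{{\mathrm{2}}}} \neq 0$ (so also $ \ottnt{q_{{\mathrm{1}}}}  +  \ottnt{q_{{\mathrm{2}}}}  \neq 0$ in both $ \mathbb{N}_{=} $ and $ \mathbb{N}_{\geq} $), I would apply Factorization to $ \Gamma  \vdash  \ottnt{a}  :^{  \ottnt{q_{{\mathrm{1}}}}  +  \ottnt{q_{{\mathrm{2}}}}  }  \ottnt{A} $ to obtain some $\Gamma'$ with $ \Gamma'  \vdash  \ottnt{a}  :^{  1  }  \ottnt{A} $ and $ \Gamma  <:   (  \ottnt{q_{{\mathrm{1}}}}  +  \ottnt{q_{{\mathrm{2}}}}  )  \cdot  \Gamma'  $. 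Then, by Multiplication applied twice, $ \ottnt{q_{{\mathrm{1}}}}  \cdot  \Gamma'   \vdash  \ottnt{a}  :^{ \ottnt{q_{{\mathrm{1}}}} }  \ottnt{A} $ and $ \ottnt{q_{{\mathrm{2}}}}  \cdot  \Gamma'   \vdash  \ottnt{a}  :^{ \ottnt{q_{{\mathrm{2}}}} }  \ottnt{A} $. The desired $\Gamma_{{\mathrm{1}}}$ and $\Gamma_{{\mathrm{2}}}$ should then be recovered from $ \ottnt{q_{{\mathrm{1}}}}  \cdot  \Gamma' $ and $ \ottnt{q_{{\mathrm{2}}}}  \cdot  \Gamma' $ by distributing the slack in the inequality $ \Gamma  <:   (  \ottnt{q_{{\mathrm{1}}}}  +  \ottnt{q_{{\mathrm{2}}}}  )  \cdot  \Gamma'  = (\ottnt{q_{{\mathrm{1}}}} \cdot \Gamma') + (\ottnt{q_{{\mathrm{2}}}} \cdot \Gamma')$: I want to write $\Gamma$ itself as a sum $ \Gamma_{{\mathrm{1}}}  +  \Gamma_{{\mathrm{2}}} $ where $ \Gamma_{{\mathrm{1}}}  <:   \ottnt{q_{{\mathrm{1}}}}  \cdot  \Gamma'  $ and $ \Gamma_{{\mathrm{2}}}  <:   \ottnt{q_{{\mathrm{2}}}}  \cdot  \Gamma'  $, and then apply the subsumption \rref{ST-SubR} (or the appropriate grade-weakening rule) to transport the judgments for $ \ottnt{q_{{\mathrm{i}}}}  \cdot  \Gamma' $ down to $\Gamma_{{\mathrm{i}}}$.

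The main obstacle is precisely this last decomposition step: given a pointwise inequality $ \overline{ \Gamma }  <:  \overline{u} + \overline{v}$ in the semiring, I need to split each component $ \overline{ \Gamma } _k$ as $a_k + b_k$ with $a_k <: \overline{u}_k$ and $b_k <: \overline{v}_k$. For $ \mathbb{N}_{=} $ the order is discrete so this is trivial ($ \overline{ \Gamma } _k = \overline{u}_k + \overline{v}_k$ exactly), but for $ \mathbb{N}_{\geq} $ (where $m <: n$ means $m \geq n$) I need: if $c \geq u + v$ then $c = a + b$ with $a \geq u$, $b \geq v$ — which holds by taking $a = c - v$ (valid since $c \geq u+v \geq v$) and $b = v$. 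This kind of order-theoretic splitting property of the semiring is exactly what Atkey identified as the crux, and it is the place where the restriction to $ \mathbb{N}_{=} $ and $ \mathbb{N}_{\geq} $ (rather than an arbitrary semiring) is doing real work; I would isolate it as an auxiliary observation about the grade algebra and feed it componentwise into the context-level argument. A secondary technical point is checking that the subsumption rules of the system are strong enough to weaken the context-grade vector downward in the $<:$ order at an arbitrary typing judgment, which should follow from \rref{ST-SubL}/\rref{ST-SubR} together with a routine induction, or may already be available as admissible.
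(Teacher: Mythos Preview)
Your approach is essentially the same as the paper's: apply Factorization to get $\Gamma' \vdash a :^{1} A$ with $\Gamma <: (q_1+q_2)\cdot\Gamma'$, then use Multiplication to obtain the two judgments at $q_1$ and $q_2$, and finally absorb the slack via \rref{ST-SubL}. Two small remarks: the paper case-splits only on $q_1+q_2=0$ versus $q_1+q_2\neq 0$ (your separate treatment of $q_1=0$ and $q_2=0$ is harmless but unnecessary, since Factorization only needs the sum to be nonzero), and the paper dumps the entire slack $\Gamma_0$ onto one side by writing $\Gamma = (q_1+q_2)\cdot\Gamma' + \Gamma_0$ and setting $\Gamma_1 := q_1\cdot\Gamma'$, $\Gamma_2 := q_2\cdot\Gamma' + \Gamma_0$---which is exactly your componentwise ``$a=c-v$, $b=v$'' recipe, so no general splitting lemma for the grade algebra is needed beyond the observation that $<:$ in these semirings is witnessed by an additive difference. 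Note also that the rule you need to transport the judgment to the larger context is \rref{ST-SubL} (context weakening), not \rref{ST-SubR}.
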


Next, we look at weakening and substitution. For weakening, we add an extra assumption, whose allowed usage is $0$. For substitution, we need to ensure that the allowed usage of the variable matches the number of available copies of the substitute. Note that after substitution, the term needs the combined resources. %This lemma says that we can always add, to the context, a new assumption with an arbitrary allowed usage. The resources corresponding to such an assumption are discarded during the derivation of the term.  
\begin{lemma}[Weakening] \label{BLSWeak}
If $  \Gamma_{{\mathrm{1}}}  ,  \Gamma_{{\mathrm{2}}}   \vdash  \ottnt{a}  :^{ \ottnt{q} }  \ottnt{A} $, then $    \Gamma_{{\mathrm{1}}}  ,   \ottmv{z}  :^{  0  }  \ottnt{C}     ,  \Gamma_{{\mathrm{2}}}   \vdash  \ottnt{a}  :^{ \ottnt{q} }  \ottnt{A} $.
\end{lemma}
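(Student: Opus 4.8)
The plan is to proceed by induction on the derivation of $ \Gamma_{{\mathrm{1}}}  ,  \Gamma_{{\mathrm{2}}}   \vdash  \ottnt{a}  :^{ \ottnt{q} }  \ottnt{A} $, inserting the fresh assumption $ \ottmv{z}  :^{  0  }  \ottnt{C} $ at the designated splitting point of the context in each rule. Since $\ottmv{z}$ is fresh, we may assume, by the usual $\alpha$-conversion convention, that it is distinct from every bound variable occurring in the derivation, so no capture arises. The only properties of the parametrizing preordered semiring $ \mathcal{Q}_{\mathbb{N} } $ that the argument uses are that $0$ is the additive identity (so $ 0  +  0  =  0 $), that $0$ annihilates multiplication (so $ \ottnt{q}  \cdot   0   =  0 $ for every $\ottnt{q}$), and that the preorder $ <: $ is reflexive (so $ 0  <:   0  $); all of these hold in $ \mathbb{N}_{=} $ and $ \mathbb{N}_{\geq} $.

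For the base cases, \rref{ST-Var} types a single variable against a context all of whose other grades are $0$, and inserting $ \ottmv{z}  :^{  0  }  \ottnt{C} $ preserves exactly this shape; \rref{ST-Unit} is immediate since $\ottmv{z}$ is added with grade $0$. For the remaining rules the work is bookkeeping on how contexts are combined. Rules that pass the ambient context down unchanged, or extend it with a binder at the far end — \rref{ST-Lam,ST-InjOne,ST-InjTwo} and the body premises of \rref{ST-LetPair,ST-LetUnit,ST-Case} — go through by a direct appeal to the induction hypothesis with the context split chosen so that $ \ottmv{z}  :^{  0  }  \ottnt{C} $ lands at the required position. Rules that decompose the context additively — \rref{ST-App,ST-Pair} and the scrutinee/body split of \rref{ST-LetPair,ST-LetUnit,ST-Case} — build $\Gamma$ as $ \Gamma_{{\mathrm{1}}}  +  \Gamma_{{\mathrm{2}}} $; here we insert $ \ottmv{z}  :^{  0  }  \ottnt{C} $ into \emph{each} subcontext and use $ 0  +  0  =  0 $ to reconstruct the extended sum, and where a subderivation is premultiplied by a grade $\ottnt{q_{{\mathrm{0}}}}$ we use $ \ottnt{q_{{\mathrm{0}}}}  \cdot   0   =  0 $ so the scaled extended context still assigns grade $0$ to $\ottmv{z}$. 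Finally, \rref{ST-SubL,ST-SubR}, which relax along the context preorder, need $ 0  <:   0  $, i.e. reflexivity.

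I do not expect a genuine obstacle here: the lemma is essentially administrative. The one point that deserves care is the interaction of the binding rules with the fixed insertion site — one must check that appending a new binder at the end of a context does not disturb the position at which $ \ottmv{z}  :^{  0  }  \ottnt{C} $ is to be inserted, and that the freshness convention is invoked \emph{before} applying the induction hypothesis so that $\ottmv{z}$ never clashes with a rule's bound variable. A secondary point is to observe that the context operations $+$, $\cdot$, and $<:$ are all defined only between contexts sharing the same underlying ungraded part, so inserting the same $ \ottmv{z}  :^{  0  }  \ottnt{C} $ uniformly keeps every such side condition satisfied throughout the induction.
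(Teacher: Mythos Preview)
Your proposal is correct and matches the paper's approach: the paper's proof is simply ``By induction on $\Gamma_{{\mathrm{1}}}, \Gamma_{{\mathrm{2}}} \vdash \ottnt{a} :^{\ottnt{q}} \ottnt{A}$'', and you have filled in the routine case analysis faithfully. One minor note: none of the rules in this simply-typed system actually premultiply a premise context by a grade, so the clause about $\ottnt{q_{{\mathrm{0}}}} \cdot 0 = 0$ is not needed here, though it does no harm.
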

%The substitution lemma for the system appears below. 
\begin{lemma}[Substitution] \label{BLSSubst}
If $    \Gamma_{{\mathrm{1}}}  ,   \ottmv{z}  :^{ \ottnt{r_{{\mathrm{0}}}} }  \ottnt{C}     ,  \Gamma_{{\mathrm{2}}}   \vdash  \ottnt{a}  :^{ \ottnt{q} }  \ottnt{A} $ and $ \Gamma  \vdash  \ottnt{c}  :^{ \ottnt{r_{{\mathrm{0}}}} }  \ottnt{C} $ and $  \lfloor  \Gamma_{{\mathrm{1}}}  \rfloor   =   \lfloor  \Gamma  \rfloor  $, then $    \Gamma_{{\mathrm{1}}}  +  \Gamma    ,  \Gamma_{{\mathrm{2}}}   \vdash   \ottnt{a}  \{  \ottnt{c}  /  \ottmv{z}  \}   :^{ \ottnt{q} }  \ottnt{A} $. 
\end{lemma}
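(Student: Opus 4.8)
The plan is to prove the Substitution Lemma by induction on the typing derivation $\mathcal{D}$ of $  \Gamma_{{\mathrm{1}}}  ,   \ottmv{z}  :^{ \ottnt{r_{{\mathrm{0}}}} }  \ottnt{C}     ,  \Gamma_{{\mathrm{2}}}   \vdash  \ottnt{a}  :^{ \ottnt{q} }  \ottnt{A} $, keeping $\Gamma_{{\mathrm{1}}}$, $\Gamma_{{\mathrm{2}}}$, $\ottnt{q}$, $\ottnt{A}$ and $\Gamma$ universally quantified so that the induction hypothesis is available at every subderivation. It helps to first record the context-algebra facts that do the bookkeeping: $0$ is an additive identity and a multiplicative annihilator, so $  (  0  \cdot  \Delta  )   +  \Gamma  = \Gamma$ whenever $ \lfloor  \Gamma  \rfloor  = \Delta$; context addition is commutative and associative; scalar multiplication distributes over context addition; and the context preorder $ <: $ is preserved by $+$ and by $\cdot$. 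Given these, the structural rules \rref{ST-SubL,ST-SubR} are immediate: apply the induction hypothesis to the subderivation and reapply the same rule, using compatibility of $ <: $ with the context operations.

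The base case \rref{ST-Var} is the crux. If $\ottnt{a} = \ottmv{z}$, inverting the derivation (absorbing trailing \rref{ST-SubL,ST-SubR} steps) gives $\ottnt{A} = \ottnt{C}$, every grade of $\Gamma_{{\mathrm{1}}}$ and $\Gamma_{{\mathrm{2}}}$ bounded by $0$ under $ <: $, and $\ottnt{q}$ compatible with $\ottnt{r_{{\mathrm{0}}}}$; since $ \lfloor  \Gamma_{{\mathrm{1}}}  \rfloor  =  \lfloor  \Gamma  \rfloor $ this yields $ \Gamma_{{\mathrm{1}}}  +  \Gamma   <:  \Gamma $, so it suffices to derive $  \Gamma  ,  \Gamma_{{\mathrm{2}}}   \vdash  \ottnt{c}  :^{ \ottnt{q} }  \ottnt{C} $, which follows from $ \Gamma  \vdash  \ottnt{c}  :^{ \ottnt{r_{{\mathrm{0}}}} }  \ottnt{C} $ by Weakening (Lemma~\ref{BLSWeak}) and the subtyping rules. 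If $\ottnt{a} = \ottmv{y} \neq \ottmv{z}$, the same inversion forces $ \ottnt{r_{{\mathrm{0}}}}  <:   0  $; then $ \ottnt{a}  \{  \ottnt{c}  /  \ottmv{z}  \}  = \ottmv{y}$ syntactically, and $   \Gamma_{{\mathrm{1}}}  +  \Gamma    ,  \Gamma_{{\mathrm{2}}}   \vdash  \ottmv{y}  :^{ \ottnt{q} }  \ottnt{A} $ follows by re-applying \rref{ST-Var} with a subtyping step --- in $ \mathbb{N}_{\geq} $ every grade is $ <:   0  $, while in $ \mathbb{N}_{=} $ the equality $\ottnt{r_{{\mathrm{0}}}} = 0$ forces, via the routine invariant that $0$-graded judgments have $0$-graded contexts, that $\Gamma$ itself is $0$-graded, which is exactly what is needed there.

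For the inductive step, the non-splitting rules (\rref{ST-Lam,ST-Unit,ST-InjOne,ST-InjTwo}) are mechanical: commute the substitution with the term former (using $\alpha$-freshness of the bound variable in \rref{ST-Lam} and extending $\Gamma_{{\mathrm{2}}}$ by that variable when recurring), apply the induction hypothesis, and reassemble. The real work is in \rref{ST-App,ST-Pair,ST-LetUnit,ST-LetPair,ST-Case}, whose conclusions combine premise contexts with $+$ and scalar $\cdot$: there the grade $\ottnt{r_{{\mathrm{0}}}}$ of the substituted variable decomposes along the same arithmetic (e.g.\ $ \ottnt{r_{{\mathrm{0}}}}  =   \ottnt{r_{{\mathrm{1}}}}  +  \ottnt{r_{{\mathrm{2}}}} $ in \rref{ST-App}, and with an additional rule-scalar $\ottnt{q_{{\mathrm{0}}}}$ in \rref{ST-Case}), and one must decompose the substitute's typing $ \Gamma  \vdash  \ottnt{c}  :^{ \ottnt{r_{{\mathrm{0}}}} }  \ottnt{C} $ accordingly before invoking the induction hypothesis on each premise. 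This is precisely where the Splitting Lemma (Lemma~\ref{BLSSplit}) is indispensable: from $ \Gamma  \vdash  \ottnt{c}  :^{  \ottnt{r_{{\mathrm{1}}}}  +  \ottnt{r_{{\mathrm{2}}}}  }  \ottnt{C} $ it produces $\Gamma = \Gamma^{(1)} + \Gamma^{(2)}$ with $ \Gamma^{(1)}  \vdash  \ottnt{c}  :^{ \ottnt{r_{{\mathrm{1}}}} }  \ottnt{C} $ and $ \Gamma^{(2)}  \vdash  \ottnt{c}  :^{ \ottnt{r_{{\mathrm{2}}}} }  \ottnt{C} $; for premises carrying a rule-scalar, one additionally routes $\Gamma$'s share through the Factorization and Multiplication Lemmas (Lemmas~\ref{BLSFact} and~\ref{BLSMult}) to line it up with the $\ottnt{q_{{\mathrm{0}}}} \cdot (-)$ in the conclusion, and a Weakening step brings $\Gamma$ under the binders in \rref{ST-LetPair,ST-Case}. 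I expect these splitting cases to be the main obstacle: they are the reason the whole lemma depends on the Splitting Lemma, hence on the fact that $ \mathbb{N}_{=} $ and $ \mathbb{N}_{\geq} $ are among the well-behaved preordered semirings (for a general semiring this very step, and so substitution itself, fails, as \citet{atkey} showed), and getting the context arithmetic to close --- reconciling the $ <: $ slack introduced by Factorization with the equalities demanded by the product and sum-elimination rules --- is where most of the care will go.
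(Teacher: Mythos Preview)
Your approach is essentially the same as the paper's: induction on the typing derivation, using the Splitting Lemma to decompose the substitute's typing across the context-summing rules, with Weakening handling the \rref{ST-Var} case where $z$ is the variable being looked up. Two minor overcomplications are worth flagging. First, you do not need Factorization or Multiplication directly in the splitting cases: in \rref{ST-Case}, \rref{ST-LetPair}, etc., the grade of $z$ in the conclusion's context decomposes purely additively as $r_0 = r_{01} + r_{02}$ (one summand per premise-context), independent of the rule-scalar $q_0$ that appears in the premises' \emph{levels}; a single application of Splitting gives $\Gamma = \Gamma_{31} + \Gamma_{32}$ with $\Gamma_{3i} \vdash c :^{r_{0i}} C$, and the IH applies to each premise at whatever level it carries. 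Second, no explicit Weakening is needed to bring $\Gamma$ under binders in \rref{ST-LetPair} or \rref{ST-Case}: since the IH is quantified over $\Gamma_2$, you simply instantiate it with the binder-extended context. Finally, the ``absorbing trailing \rref{ST-SubL,ST-SubR}'' phrasing in your \rref{ST-Var} case is slightly off for a straight induction on derivations: \rref{ST-SubL} and \rref{ST-SubR} are their own induction cases (handled by IH plus the rule, using \rref{ST-SubR} on $c$ in the \rref{ST-SubL} case), so when the last rule really is \rref{ST-Var} the context has the exact $0$-graded shape, with $r_0 = 0$ (not merely $r_0 <: 0$) when $z$ is not the typed variable and $r_0 = q$ exactly when it is.
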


Now, we consider the operational semantics of the language. Our operational semantics is a call-by-name small-step semantics. All the step rules are standard other than the $\beta$-rules that appear in Figure \ref{SmallStep}. These rules ensure that the grade in the introduction form matches with that in the elimination form.% These rules require that the quantity in the introduction form matches with that in the elimination form. This makes sense because: a function that uses its argument at most $\ottnt{r}$ times would require at most $\ottnt{r}$ copies of its argument; similarly, a term that uses at most $\ottnt{r}$ copies of a pattern variable would require at most $\ottnt{r}$ copies of its substitute.
\begin{figure}
\drules[Step]{$ \vdash  \ottnt{a}  \leadsto  \ottnt{a'} $}{Excerpt}{AppBeta,LetPairBeta}
\caption{Small-step reduction for LDC($ \mathcal{Q}_{\mathbb{N} } $)}
\label{SmallStep}
\end{figure}
With this operational semantics, our language enjoys the standard type soundness property.
\begin{theorem}[Preservation] \label{SimplePreserve}
If $ \Gamma  \vdash  \ottnt{a}  :^{ \ottnt{q} }  \ottnt{A} $ and $ \vdash  \ottnt{a}  \leadsto  \ottnt{a'} $, then $ \Gamma  \vdash  \ottnt{a'}  :^{ \ottnt{q} }  \ottnt{A} $.
\end{theorem}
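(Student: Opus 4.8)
The plan is to prove preservation by induction on the typing derivation $\Gamma \vdash \ottnt{a} :^{\ottnt{q}} \ottnt{A}$, with an inner case analysis on the step rule $\vdash \ottnt{a} \leadsto \ottnt{a'}$ that applies. For the congruence/structural step rules (those that reduce a proper subterm, e.g. reducing the function position of an application or the scrutinee of a \textbf{let} or \textbf{case}), the result follows immediately from the induction hypothesis applied to the relevant premise, followed by re-applying the same typing rule; the grade bookkeeping is unchanged because the congruence rules do not touch any grade annotations. The two subtlety-bearing cases are \rref{Step-AppBeta} and \rref{Step-LetPairBeta} shown in Figure \ref{SmallStep} (together with the analogous $\beta$-rules for \textbf{unit}, injections, and \textbf{case}, which the excerpt elides but which I would handle the same way). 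I also expect to need the usual preliminaries: \rref{ST-SubL,ST-SubR} must be permuted past the main rules, which means I should either prove an inversion lemma that "bundles in" subsumption (giving, e.g., from $\Gamma \vdash \lambda^{\ottnt{r}} \ottmv{x} : \ottnt{A} . \ottnt{b} :^{\ottnt{q}} {}^{\ottnt{r'}}\! \ottnt{A'} \to \ottnt{B'}$ a derivation of $\ottnt{b}$ in a suitably sub-graded context) or carry \rref{ST-SubL,ST-SubR} as explicit cases that conclude by transitivity of $<:$ and monotonicity of the context operations.

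For the \rref{Step-AppBeta} case, the redex is $(\lambda^{\ottnt{r}} \ottmv{x} : \ottnt{A} . \ottnt{b}) \; \ottnt{a}^{\ottnt{r}}$ reducing to $\ottnt{b}\{\ottnt{a}/\ottmv{x}\}$. By inversion on \rref{ST-App}, the context splits as $\Gamma = \Gamma_b + \ottnt{r} \cdot \Gamma_a$ with $\Gamma_b \vdash \lambda^{\ottnt{r}} \ottmv{x} : \ottnt{A} . \ottnt{b} :^{\ottnt{q}} {}^{\ottnt{r}}\! \ottnt{A} \to \ottnt{B}$ and $\Gamma_a \vdash \ottnt{a} :^{\ottnt{q}} \ottnt{A}$ — the point being that the argument is typed at the same multiplicity $\ottnt{q}$ as the application, and inversion on \rref{ST-Lam} gives $\Gamma_b, \ottmv{x} :^{\ottnt{q} \cdot \ottnt{r}} \ottnt{A} \vdash \ottnt{b} :^{\ottnt{q}} \ottnt{B}$. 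To discharge the substitution I first apply the Multiplication lemma (Lemma \ref{BLSMult}) to $\Gamma_a \vdash \ottnt{a} :^{\ottnt{q}} \ottnt{A}$ to obtain $\ottnt{r} \cdot \Gamma_a \vdash \ottnt{a} :^{\ottnt{q} \cdot \ottnt{r}} \ottnt{A}$ — this is exactly why Multiplication is proved first — and then apply the Substitution lemma (Lemma \ref{BLSSubst}) with substitute grade $\ottnt{r_0} := \ottnt{q} \cdot \ottnt{r}$, yielding $\Gamma_b + \ottnt{r} \cdot \Gamma_a \vdash \ottnt{b}\{\ottnt{a}/\ottmv{x}\} :^{\ottnt{q}} \ottnt{B}$, which is the desired judgment since $\Gamma_b + \ottnt{r} \cdot \Gamma_a = \Gamma$. (If the $\beta$-rule as stated in the excerpt's full Figure \ref{SmallStep} requires matching grades on the two sides, inversion will force those to agree, so no extra case split is needed.) The \rref{Step-LetPairBeta} case is structurally the same: inversion on \rref{ST-LetPair} and \rref{ST-Pair} yields two subderivations for the two components of the pair plus a body derivation with the two bound variables graded appropriately (with the extra factor of $\ottnt{q_0}$ coming from the $\ottnt{q} \cdot \ottnt{q_0}$ usage noted in the discussion of \rref{ST-LetPair}), and two applications of Multiplication followed by two applications of Substitution reassemble the context; the side condition $\ottnt{q_0} <: 1$ propagates harmlessly.

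The main obstacle I anticipate is the inversion step, specifically interleaving \rref{ST-SubL} and \rref{ST-SubR} with the introduction-rule inversions. Because a derivation of, say, $\Gamma \vdash \lambda^{\ottnt{r}} \ottmv{x} : \ottnt{A} . \ottnt{b} :^{\ottnt{q}} \ottnt{C}$ may end in a chain of subsumptions before the actual \rref{ST-Lam}, the inversion lemma must thread the preorder through: it should conclude that there exist $\Gamma', \ottnt{q}'$ with $\Gamma <: \Gamma'$ (or the appropriate direction) and $\ottnt{q}' <: \ottnt{q}$ such that the body is typeable under the correspondingly-scaled context, and then I rely on monotonicity of $+$, $\cdot$, and pointwise $<:$ on contexts — which hold because the parametrizing structure is a \emph{preordered} semiring — to push the slack back out to the conclusion. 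The other place to be careful is that these inversions must be stated for an \emph{arbitrary} context, not just the one in the redex, since the IH for the congruence cases is applied to subderivations; but since the step relation $\vdash \ottnt{a} \leadsto \ottnt{a'}$ is context-free this causes no real trouble. Everything else — the congruence cases, weakening invariance, and the remaining $\beta$-rules for \textbf{unit}, \textbf{inj}, and \textbf{case} — is routine and follows the same Multiplication-then-Substitution template.
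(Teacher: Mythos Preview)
Your inversion of \rref{ST-App} is off: the argument premise is $\Gamma_2 \vdash a :^{q \cdot r} A$ with conclusion context $\Gamma_1 + \Gamma_2$, so the argument already lives at grade $q \cdot r$ and there is no $r \cdot \Gamma_a$ factor to manufacture via Multiplication. That misreading is minor on its own, but it hides where the real work lies.

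The genuine gap is your treatment of the \rref{ST-SubR} slack coming out of lambda inversion. Inversion on $\Gamma_1 \vdash \lambda^r x : A . b :^q {}^r A \to B$ only yields $\Gamma_1, x :^{q_0 \cdot r} A \vdash b :^{q_0} B$ for some $q_0 <: q$, and to invoke Substitution against the argument (which is at grade $q \cdot r$) you must obtain the body at observer $q$ with $x$ at grade $q \cdot r$. Your proposed ``monotonicity'' fix does not close this: \rref{ST-SubR} raises the observer level to $q$ but leaves the grade on $x$ at $q_0 \cdot r$; moving it to $q \cdot r$ via \rref{ST-SubL} would require $q \cdot r <: q_0 \cdot r$, whereas $q_0 <: q$ and monotonicity of $\cdot$ give only the opposite direction $q_0 \cdot r <: q \cdot r$. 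Concretely, in $\mathbb{N}_{\geq}$ with $q_0 = 2$, $q = r = 1$, from $x :^2 A \vdash b :^2 B$ no chain of sub-rules yields $x :^1 A \vdash b :^1 B$. The paper's proof closes this with the Factorization lemma (Lemma~\ref{BLSFact}), which you never invoke: it strips the body to observer level $1$, giving $\Gamma'_1, x :^{r'} A \vdash b :^1 B$ with $\Gamma_1 <: q_0 \cdot \Gamma'_1$ and $q_0 \cdot r <: q_0 \cdot r'$; cancellativity in the concrete semirings (using $q_0 \neq 0$) then gives $r <: r'$, so \rref{ST-SubL} fixes $x$ to grade $r$, and only \emph{then} does Multiplication by $q$ rebuild the body at observer $q$ with $x :^{q \cdot r}$. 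So Multiplication is applied to the \emph{body} after factoring, not to the argument. Without Factorization your \rref{Step-AppBeta} case does not close over $\mathbb{N}_{\geq}$.
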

\begin{theorem}[Progress] \label{BLSProg}
If $  \emptyset   \vdash  \ottnt{a}  :^{ \ottnt{q} }  \ottnt{A} $, then either $\ottnt{a}$ is a value or there exists $\ottnt{a'}$ such that $ \vdash  \ottnt{a}  \leadsto  \ottnt{a'} $.
\end{theorem}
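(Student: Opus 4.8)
The plan is to prove Progress by induction on the derivation of $ \emptyset   \vdash  \ottnt{a}  :^{ \ottnt{q} }  \ottnt{A} $, following the standard recipe, with a \emph{canonical forms} lemma as the only auxiliary ingredient. So first I would establish canonical forms: if $ \emptyset   \vdash  \ottnt{v}  :^{ \ottnt{q} }  \ottnt{A} $ and $\ottnt{v}$ is a value, then the head constructor of $\ottnt{A}$ determines the shape of $\ottnt{v}$ --- a value of type $ {}^{ \ottnt{r} }\!  \ottnt{A}  \to  \ottnt{B} $ is some $ \lambda^{ \ottnt{r} }  \ottmv{x}  :  \ottnt{A}  .  \ottnt{b} $, a value of type $ {}^{ \ottnt{r} }\!  \ottnt{A_{{\mathrm{1}}}}  \: \times \:  \ottnt{A_{{\mathrm{2}}}} $ is some $ (  \ottnt{a_{{\mathrm{1}}}} ^{ \ottnt{r} } ,  \ottnt{a_{{\mathrm{2}}}}  ) $, a value of type $ \mathbf{Unit} $ is $ \mathbf{unit} $, and a value of type $ \ottnt{A_{{\mathrm{1}}}}  +  \ottnt{A_{{\mathrm{2}}}} $ is $ \mathbf{inj}_1 \:  \ottnt{a_{{\mathrm{1}}}} $ or $ \mathbf{inj}_2 \:  \ottnt{a_{{\mathrm{2}}}} $. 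This follows by inspecting which typing rules can conclude a judgment whose subject is a value: the only non-introduction rules in play are \rref{ST-SubL,ST-SubR}, and since subsumption alters only the context and the grade, never the type, it preserves the invariant. Note in particular that the grade annotation on a $\lambda$ (resp.\ on a pair) is forced by typing to coincide with the grade annotation on its function (resp.\ product) type.

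Then comes the induction on $ \emptyset   \vdash  \ottnt{a}  :^{ \ottnt{q} }  \ottnt{A} $. The \rref{ST-Var} case is vacuous, since the context is empty and hence $\ottnt{a}$ is closed. The introduction rules \rref{ST-Lam,ST-Unit,ST-Pair,ST-InjOne,ST-InjTwo} all build values, so there is nothing to show. For the elimination rules \rref{ST-App,ST-LetUnit,ST-LetPair,ST-Case}, I apply the induction hypothesis to the principal premise --- the function subterm in \rref{ST-App}, the scrutinee in the other three. If that subterm steps, the whole term steps by the corresponding congruence rule. If it is already a value, canonical forms tells us it has the matching head constructor, and since the grade on the eliminator and the grade on the introduced value are both pinned to the grade appearing in the type, they agree; hence the appropriate $\beta$-rule (\rref{Step-AppBeta,Step-LetPairBeta}, together with the analogous $\beta$-rules for $\mathbf{let}$-$\mathbf{unit}$ and $\mathbf{case}$) fires and the term steps. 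Finally, the subsumption cases \rref{ST-SubL,ST-SubR} are immediate: the subject term is unchanged, so the induction hypothesis applies verbatim. Observe that the argument is uniform in $\ottnt{q}$, including $ \ottnt{q}  = 0 $: whether a term can step depends only on its syntactic shape, not on the grade it is typed at.

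I expect the main obstacle to be bookkeeping rather than conceptual: getting the canonical forms lemma to correctly absorb the two subsumption rules, and confirming that the grade annotations carried by introduction forms and by elimination forms are forced by typing to match, so that the $\beta$-rules actually apply to well-typed redexes. If those grades could drift, a well-typed eliminator applied to a well-typed value of the right type could get stuck, breaking progress; what rules this out is precisely the way \rref{ST-Pair} and \rref{ST-LetPair} thread the same $\ottnt{r}$ through the product type (and similarly for \rref{ST-Lam} and \rref{ST-App}), with the side condition $ \ottnt{q_{{\mathrm{0}}}}  <:   1  $ on the elimination rules playing no role for progress itself. Everything else reduces to the routine ``congruence-or-$\beta$'' dichotomy.
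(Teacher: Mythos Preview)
Your proposal is correct and takes essentially the same approach as the paper: induction on the typing derivation, with introduction rules yielding values, elimination rules handled by the induction hypothesis on the principal subterm followed by either a congruence step or a $\beta$-step, and the subsumption rules passed through by the induction hypothesis. The only cosmetic difference is that you isolate canonical forms as an explicit lemma, whereas the paper invokes the same fact inline (``If $\ottnt{b}$ is a value, then $\ottnt{b} =  \lambda^{ \ottnt{r} }  \ottmv{x}  :  \ottnt{A}  .  \ottnt{b'} $'', etc.); your observation about grade annotations being forced to match is exactly what makes those inline inversions go through.
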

%Note that in Theorem \ref{SimplePreserve}, $\ottnt{q}$ copies of $\ottnt{a}$ reduce to $\ottnt{q}$ copies of $\ottnt{a'}$. Further, both $\ottnt{a}$ and $\ottnt{a'}$ are type-checked in the same context $\Gamma$. At first sight, it appears that the reduct $\ottnt{a'}$ should require less resources. However, observe that as $\ottnt{a}$ reduces to $\ottnt{a'}$, the resource requirements of the term do not change. 

Now, consider the fact that in standard small-step semantics, we don't substitute the free variables of the redex. So such a semantics cannot really model resource usage of programs. Environment-based semantics, where free variables of terms get substituted with values from the environment, are more amenable to modeling resource usage by programs. In Section \ref{heapsimple}, we present an environment-based semantics for this calculus and show that the type system accounts usage correctly. But for now, we move on to dependency analysis.

\subsection{Type System for Dependency Analysis}  

Let $ \mathcal{L}  = (L,\sqcap,\sqcup, \top, \bot, \sqsubseteq )$ be an arbitrary lattice. (Technically, a lattice does not necessarily have top and bottom elements, but we can always add them.) We use $\ell, \ottnt{m}$ to denote elements of $L$. Now, interpreting $+,\cdot,0,1$ and $ <: $ as $\sqcap,\sqcup,\top,\bot$ and $ \sqsubseteq $ respectively, and using $q$ and $r$ for the elements of $ \mathcal{L} $, we have a dependency calculus in the type system presented in Figure \ref{TypeSystemSimple}. In Figure \ref{TypeSystemSimpleD}, we present a few selected rules from this type system with just changed notation.
\begin{figure}[h]
\drules[ST]{$ \Gamma  \vdash  \ottnt{a}  :^{ \ell }  \ottnt{A} $}{Selected rules}{VarD,LamD,AppD,PairD,LetPairD,SubLD,SubRD}
\caption{Typing rules of LDC($ \mathcal{L} $)}
\label{TypeSystemSimpleD}
\end{figure} 

The type system is now parametrized by $ \mathcal{L} $ in lieu of $ \mathcal{Q}_{\mathbb{N} } $. We define $ \ell  \sqcup  \Gamma $, $ \Gamma_{{\mathrm{1}}}  \sqcap  \Gamma_{{\mathrm{2}}} $ and $ \Gamma   \sqsubseteq   \Gamma' $ in the same way as their counterparts $ \ottnt{q}  \cdot  \Gamma $, $ \Gamma_{{\mathrm{1}}}  +  \Gamma_{{\mathrm{2}}} $ and $ \Gamma  <:  \Gamma' $ respectively. The typing judgment $  \ottmv{x_{{\mathrm{1}}}}  :^{ \ell_{{\mathrm{1}}} }  \ottnt{A_{{\mathrm{1}}}}   ,   \ottmv{x_{{\mathrm{2}}}}  :^{ \ell_{{\mathrm{2}}} }  \ottnt{A_{{\mathrm{2}}}}   , \ldots , x_n :^{\ell_n} A_n \vdash b :^{\ell} B$ may be read as: $\ottnt{b}$ is observable at $\ell$, assuming $x_i$ is observable at $\ell_i$, for $i = 1,2,\ldots,n$. With this reading, let us consider some of the typing rules. 

The \rref{ST-VarD} is as expected. The \rref{ST-LamD} is interesting. The type $ {}^{ \ottnt{m} }\!  \ottnt{A}  \to  \ottnt{B} $ contains functions that take arguments which are $\ottnt{m}$-secure, i.e. observable only at $\ottnt{m}$ and higher levels. The body of such a function, therefore, needs to be checked setting the observability level of the argument at least as high as $\ottnt{m}$. 

Conversely in \rref{ST-AppD}, the argument $\ottnt{a}$ needs to be observable at $\ottnt{m}$ or higher levels only. Note that the context in the conclusion judgment is formed by taking pointwise meet of the contexts checking the function and the argument. This ensures that no subterm that is observable in the premises becomes unobservable in the conclusion. 

The \rref{ST-WPairD} shows how to embox a secure term in a potentially insecure world. If $\ottnt{a_{{\mathrm{1}}}}$ is $\ottnt{m}$-secure, then we can release it at level $\ell$, but only after putting it in a box that may be opened at $\ottnt{m}$ and higher levels. This is similar to how the modal type $ T_{ \ottnt{m} } \:  \ottnt{A} $ of Sealing Calculus, essentially $ {}^{ \ottnt{m} }\!  \ottnt{A}  \: \times \:   \mathbf{Unit}  $, protects information, as we saw in \rref{SC-Seal}.

Conversely, the \rref{ST-LetPairD} ensures that an $\ottnt{m}$-secure box may be opened only at $\ottnt{m}$ and higher levels. Note that in this rule, the side-condition $ \ell_{{\mathrm{0}}}  \sqsubseteq   \bot  $ forces $\ell_{{\mathrm{0}}}$ to be $\bot$ because $ \bot $ is the bottom element. So \rref{ST-LetPairD} may be simplified with $\ell_{{\mathrm{0}}}$ set to $\bot$. However, we present the rule in this way to emphasize the similarity between LDC($ \mathcal{Q}_{\mathbb{N} } $) and LDC($ \mathcal{L} $).

%The \rref{ST-SubLD} says that if $\ottnt{a}$ is visible at $\ell$ under the constraints of $\Gamma'$, then $\ottnt{a}$ remains visible at $\ell$ under the relaxed constraints of $\Gamma$. The \rref{ST-SubRD} says that if $\ottnt{a}$ is visible at $\ell$, then $\ottnt{a}$ remains visible at all levels that are at least as secure as $\ell$.
%So, the \rref{ST-LetPairD} may be simplified to:
%\begin{center}
%$\ottdruleSTXXLetPairDTwo{}$
%\end{center}
%In \rref{ST-LetPair}, the quantity $\ottnt{q_{{\mathrm{0}}}}$ increases flexibility; however, we don't need this flexibility here because $\_\sqcup\_$ is idempotent. 
 
Next, we consider a few examples of derivable and non-derivable terms in LDC($ \mathcal{L}_{\diamond} $), where $ \mathcal{L}_{\diamond} $ is the diamond lattice (see Section \ref{subsec:depsalient}). 

Let $ \mathbf{Bool}  :=   \mathbf{Unit}   +   \mathbf{Unit}  $ and let $ \mathbf{true}  :=  \mathbf{inj}_1 \:   \mathbf{unit}  $ and $ \mathbf{false}  :=  \mathbf{inj}_2 \:   \mathbf{unit}  $. Then, %\\ and $  \mathbf{if} \:  \ottnt{b}  \: \mathbf{then} \:   \ottnt{c_{{\mathrm{1}}}}  \{   \mathbf{unit}   /  \ottmv{x_{{\mathrm{1}}}}  \}   \: \mathbf{else} \:  \ottnt{c_{{\mathrm{2}}}}   \{   \mathbf{unit}   /  \ottmv{x_{{\mathrm{2}}}}  \}  :=  \mathbf{case} \:  \ottnt{b}  \: \mathbf{of} \:  \ottmv{x_{{\mathrm{1}}}}  .  \ottnt{c_{{\mathrm{1}}}}  \: ; \:  \ottmv{x_{{\mathrm{2}}}}  .  \ottnt{c_{{\mathrm{2}}}} $. Then,
\begin{align*}
&   \emptyset   \vdash   \lambda^{  \mathbf{L}  }  \ottmv{x}  .   \ottmv{x}    :^{  \mathbf{H}  }   {}^{  \mathbf{L}  }   \mathbf{Bool}   \to   \mathbf{Bool}    \text{ but }   \emptyset   \nvdash   \lambda^{  \mathbf{H}  }  \ottmv{x}  .   \ottmv{x}    :^{  \mathbf{L}  }   {}^{  \mathbf{H}  }   \mathbf{Bool}   \to   \mathbf{Bool}    \\
&   \emptyset   \vdash   \lambda^{  \mathbf{H}  }  \ottmv{x}  .   \eta_{  \mathbf{M_1}  } \:   \eta_{  \mathbf{M_2}  } \:   \ottmv{x}      :^{  \mathbf{L}  }   {}^{  \mathbf{H}  }   \mathbf{Bool}   \to   T_{  \mathbf{M_1}  } \:   T_{  \mathbf{M_2}  } \:   \mathbf{Bool}      \text{ but }   \ottmv{x}  :^{  \mathbf{H}  }   \mathbf{Bool}    \nvdash   \ottmv{x}   :^{  \mathbf{M_1}  }   \mathbf{Bool}   
\end{align*}

Here, $ T_{ \ottnt{m} } \:  \ottnt{A}  :=  {}^{ \ottnt{m} }\!  \ottnt{A}  \: \times \:   \mathbf{Unit}  $ and $ \eta_{ \ell } \:  \ottnt{a}  :=  (  \ottnt{a} ^{ \ell } ,   \mathbf{unit}   ) $. On a closer look, we find that the non-derivable terms violate the dependency constraints modeled by $ \mathcal{L}_{\diamond} $. The first term transfers information from $ \mathbf{H} $ to $ \mathbf{L} $ while the second one does so from $ \mathbf{H} $ to $ \mathbf{M_1} $. The derivable terms, on the other hand, respect the dependency constraints modeled by $ \mathcal{L}_{\diamond} $. The first term transfers information from $ \mathbf{L} $ to $ \mathbf{H} $ while the second one emboxes $ \mathbf{H} $ information in an $ \mathbf{M_2} $ box nested within an $ \mathbf{M_1} $ box. 

Below, we present the terms that witness the standard join and fork operations in LDC($ \mathcal{L} $). (Note that any erased annotation is assumed to be $ \bot $.) Recall that existing graded-context type systems cannot derive such a join operator. LDC($ \mathcal{L} $), though similar to these type systems, can do so owing to its flexible typing judgment, which allows one to vary the observer's level. % $  T_{ \ell_{{\mathrm{1}}} } \:   T_{ \ell_{{\mathrm{2}}} } \:  \ottnt{A}    \Rightarrow   T_{  \ell_{{\mathrm{1}}}  \: \sqcup \:  \ell_{{\mathrm{2}}}  } \:  \ottnt{A}  $ and $  T_{  \ell_{{\mathrm{1}}}  \: \sqcup \:  \ell_{{\mathrm{2}}}  } \:  \ottnt{A}   \Rightarrow   T_{ \ell_{{\mathrm{1}}} } \:   T_{ \ell_{{\mathrm{2}}} } \:  \ottnt{A}   $ for any $\ell_{{\mathrm{1}}}, \ell_{{\mathrm{2}}} \in  \mathcal{L} $. As we pointed out earlier, these implications correspond to effect and coeffect aspects of dependency analysis respectively. We have,
\begin{align*}
&   \emptyset   \vdash  \ottnt{c_{{\mathrm{1}}}}  : \:     T_{ \ell_{{\mathrm{1}}} } \:   T_{ \ell_{{\mathrm{2}}} } \:  \ottnt{A}     \to   T_{  \ell_{{\mathrm{1}}}  \: \sqcup \:  \ell_{{\mathrm{2}}}  } \:  \ottnt{A}    \text{ and }   \emptyset   \vdash  \ottnt{c_{{\mathrm{2}}}}  : \:     T_{  \ell_{{\mathrm{1}}}  \: \sqcup \:  \ell_{{\mathrm{2}}}  } \:  \ottnt{A}    \to   T_{ \ell_{{\mathrm{1}}} } \:   T_{ \ell_{{\mathrm{2}}} } \:  \ottnt{A}     \text{ where,}\\
& \ottnt{c_{{\mathrm{1}}}} :=  \lambda  \ottmv{x}  .   \eta_{  \ell_{{\mathrm{1}}}  \: \sqcup \:  \ell_{{\mathrm{2}}}  } \:   (   \mathbf{let} \: (  \ottmv{y} ^{ \ell_{{\mathrm{2}}} } , \_  ) \: \mathbf{be} \:   (   \mathbf{let} \: (  \ottmv{z} ^{ \ell_{{\mathrm{1}}} } , \_  ) \: \mathbf{be} \:   \ottmv{x}   \: \mathbf{in} \:   \ottmv{z}    )   \: \mathbf{in} \:   \ottmv{y}    )    \\
& \ottnt{c_{{\mathrm{2}}}} :=  \lambda  \ottmv{x}  .   \eta_{ \ell_{{\mathrm{1}}} } \:   \eta_{ \ell_{{\mathrm{2}}} } \:   (   \mathbf{let} \: (  \ottmv{y} ^{  \ell_{{\mathrm{1}}}  \: \sqcup \:  \ell_{{\mathrm{2}}}  } , \_  ) \: \mathbf{be} \:   \ottmv{x}   \: \mathbf{in} \:   \ottmv{y}    )    
\end{align*}
The derivations of these terms appear in Appendix \ref{ap:der}.

Next, we look at the metatheory of LDC($ \mathcal{L} $).

\subsection{Metatheory of Dependency Analysis} \label{MetaSimpleDependency}

We consider the dependency counterparts of the properties of LDC($ \mathcal{Q}_{\mathbb{N} } $), presented in Section \ref{MetaSimpleBounded}. Most of these properties are true of LDC($ \mathcal{L} $). LDC($ \mathcal{L} $) satisfies the multiplication lemma. The lemma says that we can always simultaneously upgrade the context and the level at which the derived term is observed. 
\begin{lemma}[Multiplication] \label{DSMult}
If $ \Gamma  \vdash  \ottnt{a}  :^{ \ell }  \ottnt{A} $, then $  \ottnt{m_{{\mathrm{0}}}}  \sqcup  \Gamma   \vdash  \ottnt{a}  :^{  \ottnt{m_{{\mathrm{0}}}}  \: \sqcup \:  \ell  }  \ottnt{A} $.
\end{lemma}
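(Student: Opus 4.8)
The plan is to proceed by induction on the typing derivation $\Gamma \vdash \ottnt{a} :^{\ell} \ottnt{A}$, mirroring the proof of the analogous Lemma~\ref{BLSMult} for LDC($\mathcal{Q}_{\mathbb{N}}$). Under the dictionary that sends $+,\cdot,0,1,{<:}$ to $\sqcap,\sqcup,\top,\bot,{\sqsubseteq}$, the statement ``$\ottnt{r_0} \cdot \Gamma \vdash \ottnt{a} :^{\ottnt{r_0}\cdot\ottnt{q}} \ottnt{A}$'' becomes exactly ``$\ottnt{m_0} \sqcup \Gamma \vdash \ottnt{a} :^{\ottnt{m_0}\sqcup\ell} \ottnt{A}$'', so in principle this is the same proof read through a different notation. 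Still, I would carry it out directly since a few lattice-specific checks differ from the semiring ones. For each rule I need the context-level operation $\ottnt{m_0}\sqcup(-)$ to commute with the context operations appearing in that rule's premises: concretely, $\ottnt{m_0} \sqcup (\Gamma_1 \sqcap \Gamma_2) = (\ottnt{m_0}\sqcup\Gamma_1) \sqcap (\ottnt{m_0}\sqcup\Gamma_2)$ and $\ottnt{m_0}\sqcup(\ell'\sqcup\Gamma) = (\ottnt{m_0}\sqcup\ell')\sqcup\Gamma$ and monotonicity $\Gamma \sqsubseteq \Gamma' \Rightarrow \ottnt{m_0}\sqcup\Gamma \sqsubseteq \ottnt{m_0}\sqcup\Gamma'$, all of which hold in any lattice.

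The key cases, in order: \textbf{(ST-VarD)} here $\Gamma$ has the form $\ottnt{m}\sqcup(\dots,\ottmv{x}:^{\bot}\ottnt{A},\dots)$ with $\ell$ the relevant level; I apply $\ottnt{m_0}\sqcup(-)$, use associativity/commutativity of $\sqcup$ to re-present the result as a well-formed variable judgment at level $\ottnt{m_0}\sqcup\ell$, absorbing $\ottnt{m_0}\sqcup\bot$ on the bound variable appropriately. \textbf{(ST-LamD)} apply the IH to the premise typing the body (which has an extra assumption $\ottmv{x}:^{\ottnt{m}'}\ottnt{A}$ for some $\ottnt{m}'\sqsupseteq\ottnt{m}$), noting the level at which the body is checked scales correctly. \textbf{(ST-AppD)} apply the IH to both premises with the same $\ottnt{m_0}$, then use distributivity of $\sqcup$ over the pointwise meet $\sqcap$ to rebuild the conclusion context $\ottnt{m_0}\sqcup(\Gamma_1\sqcap\Gamma_2)$; the argument's observability side-condition is preserved since $\sqcup$ is monotone. \textbf{(ST-WPairD)}, \textbf{(ST-LetPairD)} similar, with the $\eta_0 \sqsubseteq \bot$ side-condition unaffected since it does not mention $\ottnt{m_0}$. \textbf{(ST-SubLD)}, \textbf{(ST-SubRD)} apply the IH and then use monotonicity of $\ottnt{m_0}\sqcup(-)$ (resp.\ of $\sqcup$ in the second argument) to re-derive the subsumption inequality on contexts (resp.\ on the observer level).

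I do not expect a genuine obstacle here: unlike the \emph{splitting}/\emph{factorization} lemmas, whose semiring analogues fail over arbitrary structures (the Atkey counterexample), multiplication is robust because $\sqcup$ distributes over $\sqcap$ in \emph{every} lattice --- no distributivity hypothesis is needed, since the single operation $\sqcup$ automatically distributes over the meet used to combine contexts. The one mild subtlety is bookkeeping in \textbf{(ST-VarD)} and in any rule where a context is built by $\ell'\sqcup\Gamma$: I must make sure that after prefixing $\ottnt{m_0}\sqcup(-)$ the resulting context is literally of the shape demanded by the rule, which is just repeated use of idempotence, commutativity, and associativity of $\sqcup$ together with $\bot$ being its identity. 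So the proof is a routine structural induction, and the ``hard part'' is merely being careful that every context-level identity used is one that holds in an arbitrary (not necessarily distributive) lattice --- which, for the meet/join interplay actually invoked, it does.
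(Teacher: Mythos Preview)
Your overall plan (induction on the typing derivation, mirroring Lemma~\ref{BLSMult}) matches the paper's, but there is a genuine error in the key technical claim you rely on. You assert that $\ottnt{m_0} \sqcup (\Gamma_1 \sqcap \Gamma_2) = (\ottnt{m_0}\sqcup\Gamma_1) \sqcap (\ottnt{m_0}\sqcup\Gamma_2)$ and, more generally, that ``$\sqcup$ distributes over $\sqcap$ in \emph{every} lattice.'' This is false: that identity is exactly the distributive law, and the paper goes out of its way (Section~\ref{limitdep}, lattices $M_3$ and $N_5$) to emphasize that LDC($\mathcal{L}$) is parametrized over an \emph{arbitrary} lattice, possibly non-distributive. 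So the cases \textsc{ST-AppD}, \textsc{ST-PairD}, \textsc{ST-LetPairD}, \textsc{ST-LetUnitD}, \textsc{ST-CaseD} cannot be closed the way you describe.

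What does hold in every lattice is the one-sided \emph{inequality} $\ell_1 \sqcup (\ell_2 \sqcap \ell_3) \sqsubseteq (\ell_1 \sqcup \ell_2) \sqcap (\ell_1 \sqcup \ell_3)$, and that is what the paper uses. In, say, the \textsc{ST-AppD} case, applying the IH to both premises and then \textsc{ST-AppD} yields a derivation in context $(\ottnt{m_0}\sqcup\Gamma_1) \sqcap (\ottnt{m_0}\sqcup\Gamma_2)$; since $\ottnt{m_0}\sqcup(\Gamma_1\sqcap\Gamma_2) \sqsubseteq (\ottnt{m_0}\sqcup\Gamma_1)\sqcap(\ottnt{m_0}\sqcup\Gamma_2)$ pointwise, one finishes with \textsc{ST-SubLD}. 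The same pattern handles the other binary-context rules. So your induction is salvageable, but the repair is not cosmetic: you must replace the false distributivity equality with the valid inequality and explicitly invoke \textsc{ST-SubLD} to bridge the gap.
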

The splitting lemma is also true. Since $\sqcap$ is idempotent, it follows directly from \rref{ST-SubRD}.
\begin{lemma}[Splitting] \label{DSSplit}
If $ \Gamma  \vdash  \ottnt{a}  :^{  \ell_{{\mathrm{1}}}  \: \sqcap \:  \ell_{{\mathrm{2}}}  }  \ottnt{A} $, then there exists $\Gamma_{{\mathrm{1}}}$ and $\Gamma_{{\mathrm{2}}}$ such that $ \Gamma_{{\mathrm{1}}}  \vdash  \ottnt{a}  :^{ \ell_{{\mathrm{1}}} }  \ottnt{A} $ and $ \Gamma_{{\mathrm{2}}}  \vdash  \ottnt{a}  :^{ \ell_{{\mathrm{2}}} }  \ottnt{A} $ and $ \Gamma  =   \Gamma_{{\mathrm{1}}}  \sqcap  \Gamma_{{\mathrm{2}}}  $.
\end{lemma}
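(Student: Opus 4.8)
The plan is to prove this directly from \rref{ST-SubRD} with the diagonal choice $\Gamma_{{\mathrm{1}}} = \Gamma_{{\mathrm{2}}} = \Gamma$. The key observation is the one already flagged in the statement: in the dependency reading, the additive operation $+$ becomes the meet $\sqcap$, and $\sqcap$ is idempotent, so the required decomposition $\Gamma = \Gamma_{{\mathrm{1}}} \sqcap \Gamma_{{\mathrm{2}}}$ holds vacuously once we set $\Gamma_{{\mathrm{1}}} = \Gamma_{{\mathrm{2}}} = \Gamma$ (note $\lfloor \Gamma \rfloor = \lfloor \Gamma \rfloor$, so $\Gamma \sqcap \Gamma$ is well-defined, and $\overline{\Gamma} \sqcap \overline{\Gamma} = \overline{\Gamma}$ pointwise). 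Moreover the judgment order $<:$ becomes $\sqsubseteq$, so $\ell_{{\mathrm{1}}} \sqcap \ell_{{\mathrm{2}}} \sqsubseteq \ell_{{\mathrm{1}}}$ and $\ell_{{\mathrm{1}}} \sqcap \ell_{{\mathrm{2}}} \sqsubseteq \ell_{{\mathrm{2}}}$ are just the defining properties of the meet, which are exactly the side conditions \rref{ST-SubRD} needs in order to raise the observer level.

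Concretely, from the hypothesis $\Gamma \vdash \ottnt{a} :^{ \ell_{{\mathrm{1}}} \: \sqcap \: \ell_{{\mathrm{2}}} } \ottnt{A}$ I would apply \rref{ST-SubRD} once, using $\ell_{{\mathrm{1}}} \sqcap \ell_{{\mathrm{2}}} \sqsubseteq \ell_{{\mathrm{1}}}$, to obtain $\Gamma \vdash \ottnt{a} :^{ \ell_{{\mathrm{1}}} } \ottnt{A}$, and again, using $\ell_{{\mathrm{1}}} \sqcap \ell_{{\mathrm{2}}} \sqsubseteq \ell_{{\mathrm{2}}}$, to obtain $\Gamma \vdash \ottnt{a} :^{ \ell_{{\mathrm{2}}} } \ottnt{A}$. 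Taking $\Gamma_{{\mathrm{1}}} := \Gamma$ and $\Gamma_{{\mathrm{2}}} := \Gamma$ then discharges all three conjuncts of the conclusion, the last by idempotence of $\sqcap$ as noted above. No induction on the typing derivation is needed.

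The only point that requires any care is the orientation of \rref{ST-SubRD}: one must confirm it is the rule that moves the judgment label \emph{up} along $\sqsubseteq$ (the dependency image of the grade-weakening direction by which \rref{ST-SubR} discards resources in LDC($ \mathbb{N}_{\geq} $)), not down. Beyond that there is no real obstacle. It is worth remarking on the contrast with the quantitative Splitting lemma (Lemma \ref{BLSSplit}), whose proof genuinely invokes Multiplication and Factorization and which fails over arbitrary semirings; here idempotence of the meet collapses that whole argument, which is precisely why the lemma falls out of the subsumption rule alone.
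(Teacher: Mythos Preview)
Your proposal is correct and matches the paper's own proof essentially verbatim: the paper also applies \rref{ST-SubRD} twice (using $\ell_{1}\sqcap\ell_{2}\sqsubseteq\ell_{1}$ and $\ell_{1}\sqcap\ell_{2}\sqsubseteq\ell_{2}$) and then sets $\Gamma_{1}:=\Gamma$ and $\Gamma_{2}:=\Gamma$, relying on idempotence of $\sqcap$ for the decomposition. Your remark about the orientation of \rref{ST-SubRD} is also right.
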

The factorization lemma, however, is not true here because if true, it would allow secure information to leak. To see how, consider the following LDC($ \mathcal{L}_{\diamond} $) judgment: $  \emptyset   \vdash   \lambda  \ottmv{x}  .   \mathbf{let} \: (  \ottmv{y} ^{  \mathbf{H}  } , \_  ) \: \mathbf{be} \:   \ottmv{x}   \: \mathbf{in} \:   \ottmv{y}     :^{  \mathbf{H}  }    T_{  \mathbf{H}  } \:   \mathbf{Bool}    \to   \mathbf{Bool}   $. If the factorization lemma were true, from this judgment, we would have: $  \emptyset   \vdash   \lambda  \ottmv{x}  .   \mathbf{let} \: (  \ottmv{y} ^{  \mathbf{H}  } , \_  ) \: \mathbf{be} \:   \ottmv{x}   \: \mathbf{in} \:   \ottmv{y}     :^{  \mathbf{L}  }    T_{  \mathbf{H}  } \:   \mathbf{Bool}    \to   \mathbf{Bool}   $, a non-constant function from $ T_{  \mathbf{H}  } \:   \mathbf{Bool}  $ to $ \mathbf{Bool} $ in an $ \mathbf{L} $-secure world, representing a leak of secure  information. Note that in LDC($ \mathcal{Q}_{\mathbb{N} } $), the factorization lemma is required for proving the splitting lemma which, in turn, is necessary to show substitution. In LDC($ \mathcal{L} $), the splitting lemma holds trivially and does not need any factorization lemma.  

Next, we consider weakening and substitution. The weakening lemma adds an extra assumption at the highest security level to the context. The substitution lemma, on the other hand, substitutes an assumption held at $\ottnt{m_{{\mathrm{0}}}}$ with a term derived at $\ottnt{m_{{\mathrm{0}}}}$. %The weakening lemma says that we can always add an extra assumption at an arbitrary security level to the context.
\begin{lemma}[Weakening] \label{DSWeak}
If $  \Gamma_{{\mathrm{1}}}  ,  \Gamma_{{\mathrm{2}}}   \vdash  \ottnt{a}  :^{ \ell }  \ottnt{A} $, then $    \Gamma_{{\mathrm{1}}}  ,   \ottmv{z}  :^{  \top  }  \ottnt{C}     ,  \Gamma_{{\mathrm{2}}}   \vdash  \ottnt{a}  :^{ \ell }  \ottnt{A} $.
\end{lemma}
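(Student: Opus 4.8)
The plan is to proceed by induction on the derivation of $\Gamma_{1}, \Gamma_{2} \vdash a :^{\ell} A$, following exactly the structure of the proof of Lemma~\ref{BLSWeak} (the weakening lemma for LDC($ \mathcal{Q}_{\mathbb{N} } $)), with the grade $0$ uniformly replaced by $\top$. This substitution is legitimate because, under the correspondence $+,\cdot,0,1 \leftrightarrow \sqcap,\sqcup,\top,\bot$, the element $\top$ plays in $ \mathcal{L} $ precisely the structural role that $0$ plays in $ \mathcal{Q}_{\mathbb{N} } $: it is the identity for context meet ($\top \sqcap \ell = \ell$), it is absorbing for the scalar operation ($\ell \sqcup \top = \top$), and it is the top of the order ($\ell \sqsubseteq \top$ for all $\ell$). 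Concretely, the induction hypothesis is applied after inserting the fresh assumption $z :^{\top} C$ into every context occurring in the derivation tree, at the position between $\Gamma_{{\mathrm{1}}}$ and $\Gamma_{{\mathrm{2}}}$.

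For the composite rules --- the $\lambda$-rule \rref{ST-LamD}, application, pairing, and the $\mathbf{let}$/$\mathbf{case}$ eliminators --- the context in each conclusion is assembled from the premise contexts using only pointwise meet $\Gamma_{{\mathrm{1}}}' \sqcap \Gamma_{{\mathrm{2}}}'$ and the scalar operation $\ell' \sqcup \Gamma'$. Hence if each premise context carries $z$ at grade $\top$, the reassembled conclusion context carries $z$ at grade $\top \sqcap \top = \top$ or $\ell' \sqcup \top = \top$, as required; the stated side conditions (e.g.\ $ \ell_{{\mathrm{0}}}  \sqsubseteq   \bot  $) mention only pre-existing grades and are untouched. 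For binder rules, the bound variable is appended to the far right of the context, so it does not collide with the inserted $z$, and the induction hypothesis applies to the premise with $z :^{\top} C$ still sitting in the middle. For the subsumption rules \rref{ST-SubLD,ST-SubRD}, the required context ordering is preserved on the new component since $\top \sqsubseteq \top$.

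The only case with any content is the base case \rref{ST-VarD}: we must check that inserting a $\top$-graded, otherwise-unused variable into a variable-typing context yields another instance of the variable rule (possibly followed by subsumption). This holds because the shape of \rref{ST-VarD}, together with the subsumption rules, already permits arbitrarily many $\top$-graded assumptions alongside the one being consulted. I expect no real obstacle here: the entire argument is bookkeeping, and --- unlike the $ \mathbb{N}_{\geq} $ development --- no appeal to factorization or splitting is needed. The one place demanding a little care is simply tracking the position of $z$ relative to binders and confirming that each context former commutes with the insertion.
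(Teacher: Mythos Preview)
Your proposal is correct and matches the paper's approach: the paper's proof is simply ``By induction on $\Gamma_{1}, \Gamma_{2} \vdash a :^{\ell} A$,'' and your detailed case analysis spells out exactly that routine induction under the $0 \leftrightarrow \top$ correspondence.
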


\begin{lemma}[Substitution] \label{DSSubst}
If $    \Gamma_{{\mathrm{1}}}  ,   \ottmv{z}  :^{ \ottnt{m_{{\mathrm{0}}}} }  \ottnt{C}     ,  \Gamma_{{\mathrm{2}}}   \vdash  \ottnt{a}  :^{ \ell }  \ottnt{A} $ and $ \Gamma  \vdash  \ottnt{c}  :^{ \ottnt{m_{{\mathrm{0}}}} }  \ottnt{C} $ and $  \lfloor  \Gamma_{{\mathrm{1}}}  \rfloor   =   \lfloor  \Gamma  \rfloor  $, then $    \Gamma_{{\mathrm{1}}}  \sqcap  \Gamma    ,  \Gamma_{{\mathrm{2}}}   \vdash   \ottnt{a}  \{  \ottnt{c}  /  \ottmv{z}  \}   :^{ \ell }  \ottnt{A} $. 
\end{lemma}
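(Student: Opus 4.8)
The plan is to prove this by induction on the derivation of $\Gamma_{1}, z :^{m_0} C, \Gamma_{2} \vdash a :^{\ell} A$, in exact parallel with the proof of Lemma~\ref{BLSSubst} for LDC($\mathcal{Q}_{\mathbb{N}}$), but exploiting idempotence of $\sqcap$ to dispense with the factorization/splitting scaffolding the semiring case needs. Since $a\{c/z\}$ is defined structurally, each case inspects the last rule used, appeals to the induction hypothesis on the premises (after suitably pre-processing the hypothesis $\Gamma \vdash c :^{m_0} C$), and reassembles with the same rule, closing with a single application of \rref{ST-SubLD} to align the context.

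For the base case \rref{ST-VarD}: if $a = x \neq z$ then $x\{c/z\} = x$ and the conclusion follows by reapplying \rref{ST-VarD} over $\Gamma_{1} \sqcap \Gamma, \Gamma_{2}$ (its grade on $x$ only decreases under $\sqcap$, so the demand at level $\ell$ is still met). If $a = z$ then $A = C$, and inversion on \rref{ST-VarD} modulo \rref{ST-SubLD},\rref{ST-SubRD} gives $m_0 \sqsubseteq \ell$. Applying Lemma~\ref{DSMult} to $\Gamma \vdash c :^{m_0} C$ with $\ell$ yields $\ell \sqcup \Gamma \vdash c :^{\ell \sqcup m_0} C$, i.e. $\ell \sqcup \Gamma \vdash c :^{\ell} C$ since $\ell \sqcup m_0 = \ell$. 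Because $\Gamma_{1} \sqcap \Gamma \sqsubseteq \Gamma \sqsubseteq \ell \sqcup \Gamma$, one use of \rref{ST-SubLD} gives $\Gamma_{1} \sqcap \Gamma \vdash c :^{\ell} C$, and Lemma~\ref{DSWeak} followed by \rref{ST-SubLD} reinstalls the trailing context $\Gamma_{2}$ at its own grades (each of which is $\sqsubseteq \top$), giving $\Gamma_{1} \sqcap \Gamma, \Gamma_{2} \vdash c :^{\ell} C$.

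The congruence cases fall into two families. For a binding-free, context-splitting rule such as \rref{ST-AppD} (the product and let-pair rules are identical in shape), the conclusion context is a meet $\Gamma_b \sqcap \Gamma_{a'}$ of the premise contexts, both of which carry $z$, say at grades $m_0^b$ and $m_0^{a'}$ with $m_0 = m_0^b \sqcap m_0^{a'}$, hence $m_0 \sqsubseteq m_0^b$ and $m_0 \sqsubseteq m_0^{a'}$. Using Lemma~\ref{DSMult} I upgrade $\Gamma \vdash c :^{m_0} C$ to $m_0^b \sqcup \Gamma \vdash c :^{m_0^b} C$ and to $m_0^{a'} \sqcup \Gamma \vdash c :^{m_0^{a'}} C$, feed each to the induction hypothesis on the corresponding premise (the side condition $\lfloor \Gamma_b^{1} \rfloor = \lfloor m_0^b \sqcup \Gamma \rfloor$ holds since all operations preserve support), and reapply \rref{ST-AppD}. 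The resulting context is $\Gamma_{1} \sqcap (m_0^b \sqcup \Gamma) \sqcap (m_0^{a'} \sqcup \Gamma), \Gamma_{2}$, which dominates $\Gamma_{1} \sqcap \Gamma, \Gamma_{2}$ because $\Gamma \sqsubseteq m_0^b \sqcup \Gamma$ and $\Gamma \sqsubseteq m_0^{a'} \sqcup \Gamma$; so a final \rref{ST-SubLD} closes the case. For the binder-introducing rules --- \rref{ST-LamD}, and the body premises of \rref{ST-LetPairD} and the case rule --- I first $\alpha$-rename so the bound variables avoid $z$ and $\mathrm{fv}(c)$, after which the fresh binders merely enlarge $\Gamma_{2}$ and the induction hypothesis applies verbatim. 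The subsumption rules \rref{ST-SubLD},\rref{ST-SubRD} are discharged by the induction hypothesis together with monotonicity of $\sqcap$ under $\sqsubseteq$.

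The main obstacle is the context bookkeeping in the splitting family: tracking that $z$ appears in every premise context at (possibly different) grades whose meet is $m_0$, choosing for each premise the correct $\sqcup$-upgrade of $\Gamma$ supplied by Lemma~\ref{DSMult}, and verifying that the reassembled context is always $\sqsupseteq \Gamma_{1} \sqcap \Gamma, \Gamma_{2}$ so that \rref{ST-SubLD} can finish. No genuine algebraic difficulty arises, however: commutativity, associativity, and crucially idempotence of $\sqcap$ give $\Gamma \sqcap (\Gamma_A \sqcap \Gamma_B) = (\Gamma \sqcap \Gamma_A) \sqcap (\Gamma \sqcap \Gamma_B)$, precisely the identity that fails over a general semiring and that is what forced the factorization and splitting lemmas in Section~\ref{MetaSimpleBounded} --- this is why, as with Lemma~\ref{DSSplit}, the proof here is strictly easier than its LDC($\mathcal{Q}_{\mathbb{N}}$) counterpart.
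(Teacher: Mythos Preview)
Your proposal is correct and follows the same inductive structure as the paper's proof. The one tactical difference worth noting is in the two-premise rules: you reach each premise's grade on $z$ via Lemma~\ref{DSMult}, landing at contexts like $m_0^b \sqcup \Gamma$, and then need a final \rref{ST-SubLD} to collapse $(m_0^b \sqcup \Gamma) \sqcap (m_0^{a'} \sqcup \Gamma)$ down to $\Gamma$. The paper instead invokes the (trivial) Splitting lemma, Lemma~\ref{DSSplit} --- which is just two uses of \rref{ST-SubRD} --- obtaining $\Gamma \vdash c :^{m_{01}} C$ and $\Gamma \vdash c :^{m_{02}} C$ with the \emph{same} $\Gamma$; after the IH the reassembled context is then $(\Gamma_{11} \sqcap \Gamma) \sqcap (\Gamma_{12} \sqcap \Gamma) = \Gamma_1 \sqcap \Gamma$ on the nose, with no cleanup step. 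This keeps the lattice proof word-for-word parallel to the semiring one (Lemma~\ref{BLSSubst}), just swapping in the trivial lattice Splitting lemma. A minor further point: in a rule-by-rule induction, the case where the last rule is \rref{ST-VarD} and $a = z$ gives $m_0 = \ell$ exactly, not merely $m_0 \sqsubseteq \ell$; the subsumption rules are handled as their own inductive cases. So the detour through Lemma~\ref{DSMult} there is also avoidable --- weakening alone suffices, as in the paper.
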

Now, LDC($ \mathcal{L} $) can be given the same operational semantics as LDC($ \mathcal{Q}_{\mathbb{N} } $). With respect to this operational semantics, LDC($ \mathcal{L} $) enjoys the standard type soundness property. %The preservation theorem says that if a term is visible at $\ell$ under the assumptions in $\Gamma$, then it remains visible at $\ell$, under the same assumptions, as it reduces.
\begin{theorem}[Preservation]\label{DSPreserve}
If $ \Gamma  \vdash  \ottnt{a}  :^{ \ell }  \ottnt{A} $ and $ \vdash  \ottnt{a}  \leadsto  \ottnt{a'} $, then $ \Gamma  \vdash  \ottnt{a'}  :^{ \ell }  \ottnt{A} $.
\end{theorem}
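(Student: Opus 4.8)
The plan is to prove preservation by induction on the derivation of $\vdash a \leadsto a'$, following the same outline as the proof of Theorem~\ref{SimplePreserve} for LDC($ \mathcal{Q}_{\mathbb{N} } $) but with $+,\cdot,0,1,<:$ everywhere reinterpreted as $\sqcap,\sqcup,\top,\bot,\sqsubseteq$ and with the lemmas of Section~\ref{MetaSimpleDependency} --- in particular Lemma~\ref{DSMult}, Lemma~\ref{DSWeak}, and Lemma~\ref{DSSubst} --- used in place of their $ \mathcal{Q}_{\mathbb{N} } $-counterparts. Because the small-step rules of LDC($ \mathcal{L} $) are literally those of Figure~\ref{SmallStep}, there are two flavours of case: the congruence (``search'') rules, such as the one reducing the head of an application, and the $\beta$-rules \rref{Step-AppBeta} and \rref{Step-LetPairBeta}, together with the analogous $\beta$-rules for $\mathbf{let}\,\mathbf{unit}$ and $\mathbf{case}$.

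As a prerequisite I would establish the inversion (generation) lemmas: from $\Gamma \vdash \lambda^{ \ottnt{m} }  \ottmv{x} {:} \ottnt{A} .\, \ottnt{b}  :^{ \ell }  \ottnt{C} $ one can read off that $\ottnt{C} =  {}^{ \ottnt{m} }\!  \ottnt{A}  \to  \ottnt{B} $ and that the body is typeable as $\Gamma', \ottmv{x} :^{m_0} \ottnt{A}  \vdash  \ottnt{b}  :^{\ell_0}  \ottnt{B} $ with $ \Gamma  \sqsubseteq  \Gamma' $, $ \ell_0  \sqsubseteq  \ell $, and $m_0$ the grade prescribed by \rref{ST-LamD}; and similarly for $(a_1^r,a_2)$, $\mathbf{inj}_i\,a$, and $\mathbf{unit}$. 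These lemmas are mildly nontrivial only because the typing rules are not syntax-directed: \rref{ST-SubLD} and \rref{ST-SubRD} may be interleaved arbitrarily, so each inversion lemma is proved by induction on the typing derivation, with the subsumptions accumulated into the $\sqsubseteq$-slack on the context and the observer level; monotonicity of $\sqcap$ and $\sqcup$ with respect to $\sqsubseteq$ is what lets this slack be absorbed.

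For a congruence case, say $ \ottnt{b}  \:  \ottnt{a} ^{ \ottnt{r} }  \leadsto  \ottnt{b'}  \:  \ottnt{a} ^{ \ottnt{r} } $ with $ \ottnt{b}  \leadsto  \ottnt{b'} $, I would invert the typing of $ \ottnt{b}  \:  \ottnt{a} ^{ \ottnt{r} } $ through any trailing subsumptions to a use of \rref{ST-AppD}, apply the induction hypothesis to the premise typing $\ottnt{b}$, re-apply \rref{ST-AppD}, and re-apply the subsumptions, using $\sqsubseteq$-monotonicity of $\sqcap$ for the last step; the remaining congruence cases are identical in spirit. For the key case \rref{Step-AppBeta}, $ (   \lambda^{ \ottnt{m} }  \ottmv{x} {:} \ottnt{A} .\, \ottnt{b}   )   \:  \ottnt{a} ^{ \ottnt{m} }  \leadsto  \ottnt{b}  \{  \ottnt{a}  /  \ottmv{x}  \} $: inversion on the application gives $\Gamma_{{\mathrm{1}}} \vdash  \lambda^{ \ottnt{m} }  \ottmv{x} {:} \ottnt{A} .\, \ottnt{b}  :^{ \ell }  {}^{ \ottnt{m} }\!  \ottnt{A}  \to  \ottnt{B} $ and $\Gamma_{{\mathrm{2}}} \vdash \ottnt{a} :^{m_a} \ottnt{A}$ with $ \Gamma  \sqsubseteq   \Gamma_{{\mathrm{1}}}  \sqcap  \Gamma_{{\mathrm{2}}}  $; inversion on the function gives $\Gamma_{{\mathrm{1}}}', \ottmv{x} :^{m_0} \ottnt{A} \vdash \ottnt{b} :^{ \ell } \ottnt{B}$; the grades $m_0$ and $m_a$ are reconciled by the coherence between \rref{ST-LamD} and \rref{ST-AppD}, with Lemma~\ref{DSMult} used if one must be raised to the other; Lemma~\ref{DSWeak} makes the context footprints of $\Gamma_{{\mathrm{1}}}'$ and $\Gamma_{{\mathrm{2}}}$ agree; then Lemma~\ref{DSSubst} yields a derivation of $\ottnt{b} \{  \ottnt{a}  /  \ottmv{x}  \} :^{ \ell } \ottnt{B}$ over the appropriate combined context, and a closing \rref{ST-SubLD} recovers $\Gamma$. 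The case \rref{Step-LetPairBeta} follows the same pattern but invokes Lemma~\ref{DSSubst} twice (once for the bound pair-component, once for the tail) and unfolds the $\ell_0$-bookkeeping, which collapses because the side condition of \rref{ST-LetPairD} forces $\ell_0 = \bot$, so that, e.g., $ \ell  \sqcup  \ell_0  = \ell$.

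The step I expect to be the main obstacle is precisely this context-and-level bookkeeping in the $\beta$-cases: lining up the level $m_0$ at which the bound variable is recorded in the body's derivation with the level $m_a$ at which the argument is actually derivable, and checking that after substitution the resulting context is $\sqsubseteq \Gamma$ so that one final subsumption delivers $\Gamma$ exactly. All of this is lattice algebra --- idempotency of $\sqcap$, $\sqcap$/$\sqcup$ monotonicity, and $\bot,\top$ acting as identities --- and it never appeals to distributivity of $\sqcap$ over $\sqcup$ or vice versa, which is exactly why the argument that works over the semirings $ \mathbb{N}_{=} $ and $ \mathbb{N}_{\geq} $ carries over verbatim to an arbitrary lattice $ \mathcal{L} $.
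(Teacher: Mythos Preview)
Your proposal is correct and uses the same key ingredients as the paper: inversion on introduction forms, the substitution lemma (Lemma~\ref{DSSubst}), and Lemma~\ref{DSMult} to reconcile the bound variable's grade with the argument's level. The one organizational difference is that the paper inducts on the typing derivation (with inversion on the step), whereas you induct on the step derivation (with generation lemmas for typing). In the paper's arrangement, the subsumption rules \rref{ST-SubLD,ST-SubRD} become trivial IH cases rather than $\sqsubseteq$-slack to be accumulated inside inversion lemmas; your arrangement front-loads that bookkeeping into the generation lemmas instead. Both are standard and work equally well here. One small note: the step you describe as ``Lemma~\ref{DSMult} used if one must be raised to the other'' is packaged in the paper as an auxiliary Restricted Upgrading lemma (Lemma~\ref{ResUpP}, derived from Lemma~\ref{DSMult} plus \rref{ST-SubLD}) that upgrades the grade on a single bound variable without disturbing the rest of the context---exactly the device your $\beta$-case bookkeeping needs.
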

\begin{theorem}[Progress]\label{DSProgress}
If $  \emptyset   \vdash  \ottnt{a}  :^{ \ell }  \ottnt{A} $, then either $\ottnt{a}$ is a value or there exists $\ottnt{a'}$ such that $ \vdash  \ottnt{a}  \leadsto  \ottnt{a'} $.
\end{theorem}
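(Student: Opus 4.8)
The plan is to prove Progress by induction on the derivation of $ \emptyset  \vdash  \ottnt{a}  :^{ \ell }  \ottnt{A} $, mirroring the proof of Theorem~\ref{BLSProg} for LDC($ \mathcal{Q}_{\mathbb{N} } $): the two calculi share the same term syntax and the same call-by-name small-step semantics, and Progress does not depend in any essential way on whether the parametrizing structure is a preordered semiring or a lattice. The load-bearing auxiliary result is a \emph{canonical forms} lemma: if $ \emptyset  \vdash  \ottnt{v}  :^{ \ell }  \ottnt{A} $ and $\ottnt{v}$ is a value, then the head constructor of $\ottnt{v}$ is determined by the head connective of $\ottnt{A}$ — a value of type $ {}^{ \ottnt{m} }\!  \ottnt{A}  \to  \ottnt{B} $ is some $ \lambda^{ \ottnt{m} }  \ottmv{x}  :  \ottnt{A}  .  \ottnt{b} $, a value of type $ {}^{ \ottnt{m} }\!  \ottnt{A_{{\mathrm{1}}}}  \: \times \:  \ottnt{A_{{\mathrm{2}}}} $ is some $ (  \ottnt{a_{{\mathrm{1}}}} ^{ \ottnt{m} } ,  \ottnt{a_{{\mathrm{2}}}}  ) $, a value of type $ \mathbf{Unit} $ is $ \mathbf{unit} $, and a value of type $ \ottnt{A_{{\mathrm{1}}}}  +  \ottnt{A_{{\mathrm{2}}}} $ is some $ \mathbf{inj}_1 \:  \ottnt{a_{{\mathrm{1}}}} $ or $ \mathbf{inj}_2 \:  \ottnt{a_{{\mathrm{2}}}} $. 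I would prove this by induction on the typing derivation: the introduction rules give the result immediately, and the subsumption \rref{ST-SubLD,ST-SubRD} are handled because they alter only the context and the observer level, not the type $\ottnt{A}$, so the induction hypothesis applies verbatim. Crucially, the grade annotation carried inside the type (the $\ottnt{m}$ in $ {}^{ \ottnt{m} }\!  \ottnt{A}  \to  \ottnt{B} $ and in $ {}^{ \ottnt{m} }\!  \ottnt{A_{{\mathrm{1}}}}  \: \times \:  \ottnt{A_{{\mathrm{2}}}} $) is pinned down by the relevant introduction rule, so the value's own annotation necessarily agrees with it.

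With canonical forms in hand, I would run the induction on $ \emptyset  \vdash  \ottnt{a}  :^{ \ell }  \ottnt{A} $ case by case. \rref{ST-VarD} is vacuous over the empty context. The introduction rules (for $\lambda$, $ \mathbf{unit} $, pairs, and injections) produce values outright, since in call-by-name a constructor application is a value regardless of its components. For each elimination rule — application, $ \mathbf{let} $ on $ \mathbf{unit} $, $ \mathbf{let} $ on pairs, and $ \mathbf{case} $ — I apply the induction hypothesis to the scrutinee subterm; note that since the conclusion context is $ \emptyset $ and context formation is by pointwise meet (or addition), the premise contexts are forced to be $ \emptyset $ as well, so the induction hypothesis is applicable. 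If the scrutinee steps, the whole term steps by the corresponding congruence rule; if it is a value, canonical forms identifies its shape, and since the grade in its typing matches the grade expected by the elimination form (by inversion on the introduction and elimination rules), the appropriate $\beta$-rule from Figure~\ref{SmallStep} fires. Finally, \rref{ST-SubLD,ST-SubRD} leave the term unchanged, so the induction hypothesis yields the conclusion directly.

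I expect no serious obstacle: Progress is insensitive to the choice of parametrizing structure, because the side conditions on grades ($ \ottnt{q_{{\mathrm{0}}}}  <:   1  $ in the semiring case, $ \ell_{{\mathrm{0}}}  \sqsubseteq   \bot  $ in the lattice case) never block a $\beta$-step once the scrutinee is a value of the right shape — they only constrain typability, not reduction. The one point requiring a little care is the bookkeeping in the canonical forms and inversion step: ensuring that the grade annotation on a value of function or product type is exactly the one demanded by the type, and that this grade is precisely what the $\beta$-rule consumes. This, however, is a direct consequence of how the $\lambda$- and pair-introduction rules and the corresponding elimination rules are stated, and is routine; it is, in effect, the same argument already used for Theorem~\ref{BLSProg}.
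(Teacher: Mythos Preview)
Your proposal is correct and follows essentially the same approach as the paper: induction on the typing derivation, with the introduction rules yielding values, the elimination rules handled via the induction hypothesis on the scrutinee together with canonical forms, and the subsumption rules passed through by IH. The only presentational difference is that you isolate canonical forms as an explicit auxiliary lemma, whereas the paper folds the needed inversions (``if $\ottnt{b}$ is a value then $\ottnt{b} = \lambda^{\ottnt{m}} \ottmv{x} : \ottnt{A} . \ottnt{b'}$'', etc.) directly into each elimination case.
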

Type soundness is not strong enough to show that LDC($ \mathcal{L} $) analyses dependencies correctly. For that, we need to show the calculus passes the noninterference test, which ensures that variations in $\ell_{{\mathrm{1}}}$-inputs do not affect $\ell_{{\mathrm{2}}}$-outputs, whenever $\neg( \ell_{{\mathrm{1}}}  \sqsubseteq  \ell_{{\mathrm{2}}} )$. 
In Section \ref{heapsimple}, we shall prove that LDC($ \mathcal{L} $) passes the noninterference test. But before that, we discuss the relation between LDC($ \mathcal{L} $) and Sealing Calculus.

\subsection{Sealing Calculus and LDC} \label{DCCandBLDC}

%The simply-typed version of BLDC contains only terminating computations (the general PTS-version of the calculus includes nontermination). DCC, on the other hand, allows nontermination. While comparing BLDC to DCC here, we restrict ourselves to terminating computations. 

%We have already discussed that while DCC focuses more on effect aspect of dependency analysis, BLDC focuses on both effect and coeffect aspects. We may see BLDC as an extension of DCC with coeffect analysis. In fact, we can embed (the terminating fragment of) DCC into BLDC. 

The Sealing Calculus \cite{igarashi} embeds into LDC($ \mathcal{L} $). We don't provide details of the embedding here because LDC($ \mathcal{L} $) is the same as SDC of \citet{ddc}. \citet{ddc} show that the Sealing Calculus embeds into SDC. Note however that SDC can only be parametrized over lattices and not over semirings that help track linearity.

A technical difference between LDC($ \mathcal{L} $) and SDC is that SDC does not have security annotations on function and product types. However, SDC has modal types, $ T_{ \ell } \:  \ottnt{A} $. Annotated function and product types of LDC($ \mathcal{L} $), $ {}^{ \ottnt{m} }\!  \ottnt{A}  \to  \ottnt{B} $ and $ {}^{ \ottnt{m} }\!  \ottnt{A}  \: \times \:  \ottnt{B} $, correspond to types $   T_{ \ottnt{m} } \:  \ottnt{A}    \to  \ottnt{B} $ and $   T_{ \ottnt{m} } \:  \ottnt{A}    \: \times \:  \ottnt{B} $ respectively in SDC. Conversely, the modal type of SDC, $ T_{ \ottnt{m} } \:  \ottnt{A} $, corresponds to $ {}^{ \ottnt{m} }\!  \ottnt{A}  \: \times \:   \mathbf{Unit}  $ in LDC($ \mathcal{L} $). 

Given that Sealing Calculus is a general dependency calculus that embeds into LDC($ \mathcal{L} $), we conclude that LDC($ \mathcal{L} $) is also a general dependency calculus. Additionally, LDC($ \mathcal{Q}_{\mathbb{N} } $) can analyze resource usage. Thus, LDC can be used for both usage and dependency analyses by parametrizing the calculus over appropriate structures. In the next section, we prove correctness of these analyses in LDC through a heap semantics for the calculus.

\section{Heap Semantics for LDC} \label{heapsimple}

LDC models resource usage and information flow. Standard operational semantics cannot enforce constraints on usage and flow. However, an environment-based semantics, for example a heap semantics, can do so. As such, in this section, we present a weighted-heap-based semantics for LDC. We shall use the same heap semantics for analyzing both resource usage and information flow in LDC, just as we used the same standard small-step semantics for both the analyses. 

Heap-based semantics have been used for analyzing resource usage in literature \cite{turner,grad,uniqueness}. Our heap semantics follows \citet{grad}. The difference between our heap semantics and that of \citet{grad} is that we use it for both usage and flow analyses while they use it for usage analysis only. Other than this difference, both the semantics are essentially the same.

Heap semantics shows how a term reduces in a heap that assigns values to the free variables of the term. Heaps are ordered lists of variable-term pairs, where the terms may be seen as the definitions of the corresponding variables. To every variable-term pair in a heap, we assign a weight, which may be either a $q \in  \mathcal{Q}_{\mathbb{N} } $ or an $\ell \in  \mathcal{L} $. A heap where every variable-term pair has a weight associated with it is referred to as a weighted heap. We assume that our weighted heaps satisfy the following two properties: uniqueness, meaning, a variable is not defined twice; and acyclicity, meaning, definition of a variable does not refer to itself or to other variables appearing subsequently in the heap. Next, we model reductions in terms of interactions between terms and  weighted heaps that define the free variables of terms. 

\subsection{Reduction Relation}

The heap-based reduction rules appear in Figure \ref{HeapReduction}. There are a few things to note with regard to this reduction relation, $ [  \ottnt{H}  ]  \ottnt{a}  \Longrightarrow^{ \ottnt{q} }_{ \ottnt{S} } [  \ottnt{H'}  ]  \ottnt{a'} $:
\begin{itemize}
\item $\ottnt{S}$ here denotes a support set \cite{pitts} of variables that must be avoided while choosing fresh names.
\item From a resource usage perspective, the judgment above may be read as: $\ottnt{q}$ copies of $\ottnt{a}$ use resources from heap $\ottnt{H}$ to produce $\ottnt{q}$ copies of $\ottnt{a'}$, with $\ottnt{H'}$ being the left-over resources. Regarding a heap as a memory, usage of resources corresponds to memory look-up during reduction.
\item From an information flow perspective, the judgment may be read as: under the security constraints of $\ottnt{H}$, the term $\ottnt{a}$ steps to $\ottnt{a'}$ at security level $\ottnt{q}$ with $\ottnt{H'}$ being the updated security constraints. Regarding a heap as a memory, security labels on assignments correspond to access permissions on data while the label on the judgment corresponds to the security clearance of the user.  
\end{itemize} 

\begin{figure}
Heap, $H ::=  \emptyset  \: | \:  \ottnt{H}  ,   \ottmv{x}  \overset{ \ottnt{q} }{\mapsto}  \ottnt{a}  $ \\
\drules[HeapStep]{$ [  \ottnt{H}  ]  \ottnt{a}  \Longrightarrow^{ \ottnt{q} }_{ \ottnt{S} } [  \ottnt{H'}  ]  \ottnt{a'} $}{Selected Heap Step Rules}{Var,Discard,AppL,AppBeta,LetPairBeta,CaseOneBeta}
\caption{Heap Semantics for LDC (Excerpt)}
\label{HeapReduction}
\end{figure}

Now, we consider some of the reduction rules, presented in Figure \ref{HeapReduction}. The most interesting of the rules is \rref{HeapStep-Var}. From a resource usage perspective, this rule may be read as: to step $\ottnt{q}$ copies of $\ottmv{x}$, we need to look-up the value of $\ottmv{x}$ for $\ottnt{q}$ times, thereby using up $\ottnt{q}$ resources. From an information flow perspective, this rule may be read as: the data pointed to by $\ottmv{x}$, held at security level $ \ottnt{r}  \: \sqcap \:  \ottnt{q} $, is observable to $\ottnt{q}$ since $  \ottnt{r}  \: \sqcap \:  \ottnt{q}   \sqsubseteq  \ottnt{q} $. Note that since $  \ottnt{r}  \: \sqcap \:  \ottnt{q}   \sqsubseteq  \ottnt{r} $, the security level of the assignment cannot go down in the updated heap. However, it can always remain the same because $  (   \ottnt{r}  \: \sqcap \:  \ottnt{q}   )   \: \sqcap \:  \ottnt{q}  =  \ottnt{r}  \: \sqcap \:  \ottnt{q} $. The \rref{HeapStep-Var} includes the precondition $ \ottnt{q}  \neq   0  $ because at world $0$ or $\top$, usage and flow constraints are not meaningful.

The \rref{HeapStep-Discard} is also interesting. From a resource usage perspective, it enables discarding of resources, whenever permitted. From an information flow perspective, it corresponds to information remaining visible to an observer as the observer's security clearance goes up.

The \rref{HeapStep-AppL} shows how $ \ottnt{b}  \:  \ottnt{a} ^{ \ottnt{r} } $ steps, as $\ottnt{b}$ steps. Note that the support set in the premise contains the free variables of the argument to ensure that they are avoided while choosing fresh names. We omit the left rules corresponding to other elimination forms, which are as expected.

In the beta \rref{HeapStep-AppBeta,HeapStep-LetPairBeta,HeapStep-Case1Beta}, we add new assignments to the heap avoiding variable capture. The weight at which a new assignment is added is decided by the annotation on the term and the label on the judgment. For example, in \rref{HeapStep-AppBeta}, the new assumption is added at $ \ottnt{q}  \cdot  \ottnt{r} $ or $ \ottnt{q}  \: \sqcup \:  \ottnt{r} $, since the annotation on the application is $\ottnt{r}$ and the label on the judgment is $\ottnt{q}$.  

%We omit \rref{HeapStep-Proj2Beta,HeapStep-Case2Beta} which are similar to \rref{HeapStep-Proj1Beta,HeapStep-Case1Beta} respectively.

With these rules, let us look at some reductions that go through and some that don't.
\begin{align*}
 [   \ottmv{x}  \overset{  1  }{\mapsto}   \mathbf{true}    ]   \ottmv{x}   \Longrightarrow^{  1  }_{ \ottnt{S} } [   \ottmv{x}  \overset{  0  }{\mapsto}   \mathbf{true}    ]   \mathbf{true}   & \text{ but not }  [   \ottmv{x}  \overset{  0  }{\mapsto}   \mathbf{true}    ]   \ottmv{x}   \Longrightarrow^{  1  }_{ \ottnt{S} } [   \ottmv{x}  \overset{  0  }{\mapsto}   \mathbf{true}    ]   \mathbf{true}   \\
 [   \ottmv{x}  \overset{  2  }{\mapsto}   \mathbf{true}    ]   \ottmv{x}   \Longrightarrow^{  2  }_{ \ottnt{S} } [   \ottmv{x}  \overset{  0  }{\mapsto}   \mathbf{true}    ]   \mathbf{true}   & \text{ but not }  [   \ottmv{x}  \overset{  1  }{\mapsto}   \mathbf{true}    ]   \ottmv{x}   \Longrightarrow^{  2  }_{ \ottnt{S} } [   \ottmv{x}  \overset{  0  }{\mapsto}   \mathbf{true}    ]   \mathbf{true}   \\
 [   \ottmv{x}  \overset{  \mathbf{L}  }{\mapsto}   \mathbf{true}    ]   \ottmv{x}   \Longrightarrow^{  \mathbf{H}  }_{ \ottnt{S} } [   \ottmv{x}  \overset{  \mathbf{L}  }{\mapsto}   \mathbf{true}    ]   \mathbf{true}   & \text{ but not }  [   \ottmv{x}  \overset{  \mathbf{H}  }{\mapsto}   \mathbf{true}    ]   \ottmv{x}   \Longrightarrow^{  \mathbf{L}  }_{ \ottnt{S} } [   \ottmv{x}  \overset{  \mathbf{H}  }{\mapsto}   \mathbf{true}    ]   \mathbf{true}  
\end{align*}

\subsection{Ensuring Fair Usage and Secure Flow}

Looking at the rules in Figure \ref{HeapReduction}, we observe that they enforce fairness of resource usage. The only rule that allows usage of resources is \rref{HeapStep-Var}. This rule ensures that a look-up goes through only when the environment can provide adequate resources. It also takes away the necessary resources from the environment after a successful look-up. The rules in Figure \ref{HeapReduction} also ensure security of information flow. The only rule that allows information to flow from heap to program is again \rref{HeapStep-Var}. This rule ensures that information can flow through only when the user has the necessary permission.

The following two lemma formalize the arguments presented above. The first lemma says that a definition that is not available at some point during reduction does not become available at a later point. The second lemma says that an unavailable definition does not play any role in reduction. Note that in case of information flow, the constraint $\neg(\exists q_0, r = q + q_0)$ is equivalent to $\neg( \ottnt{r}  \sqsubseteq  \ottnt{q} )$. (Here, $\lvert \ottnt{H} \rvert$ denotes the length of $\ottnt{H}$.) 
\begin{lemma}[Unchanged] \label{HeapUnchanged}
If  $ [     \ottnt{H_{{\mathrm{1}}}}  ,   \ottmv{x}  \overset{ \ottnt{r} }{\mapsto}  \ottnt{a}     ,  \ottnt{H_{{\mathrm{2}}}}   ]  \ottnt{c}  \Longrightarrow^{ \ottnt{q} }_{ \ottnt{S} } [     \ottnt{H'_{{\mathrm{1}}}}  ,   \ottmv{x}  \overset{ \ottnt{r'} }{\mapsto}  \ottnt{a}     ,  \ottnt{H'_{{\mathrm{2}}}}   ]  \ottnt{c'} $ (where $\lvert H_1 \rvert = \lvert H'_1 \rvert$) and $\neg(\exists q_0, r = q + q_0)$, then $\ottnt{r'} = r$. 
\end{lemma}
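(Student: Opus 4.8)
The plan is to prove the lemma by induction on the derivation of the single-step reduction $[H_1, x \overset{r}{\mapsto} a, H_2]\, c \Longrightarrow^{q}_{S} [H'_1, x \overset{r'}{\mapsto} a, H'_2]\, c'$, which amounts to a case analysis on the last rule used, invoking the induction hypothesis only for the congruence (left) rules. For \rref{HeapStep-AppL} and the analogous congruence rules for the other elimination forms, the heap is threaded unchanged from the conclusion into the premise (only the support set changes), so the hypotheses $\lvert H_1 \rvert = \lvert H'_1 \rvert$ and $\neg(\exists q_0,\, r = q + q_0)$ carry over and the claim follows from the induction hypothesis. For the $\beta$-rules --- \rref{HeapStep-AppBeta}, \rref{HeapStep-LetPairBeta}, and the ones for the remaining elimination forms --- the heap is only extended by fresh bindings appended at its tail; freshness (the new names avoid $S$, and $x$ already occurs in $H$) forces these new bindings to differ from $x$, so the binding for $x$, and hence its weight, is untouched and $r' = r$, with neither side hypothesis needed. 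This isolates the only two rules that can genuinely alter the weight of a pre-existing binding: \rref{HeapStep-Var} and \rref{HeapStep-Discard}.

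For \rref{HeapStep-Var}, split on whether the dereferenced variable is $x$ or another variable $y$. If it is $y \neq x$, the rule rewrites only $y$'s weight while leaving the heap length and the binding for $x$ alone, so $r' = r$. If it is $x$ itself, the rule can fire only because $x$'s weight decomposes as $r = q + q_0$ for the leftover $q_0$ recorded in the reduct --- this is exactly the affordability premise licensing a $q$-fold lookup, and the reason the rule also demands $q \neq \mathbf{0}$ --- which contradicts $\neg(\exists q_0,\, r = q + q_0)$; hence this sub-case is vacuous, and this is the single point at which that hypothesis does its work. For \rref{HeapStep-Discard}, inspect which binding the rule acts on: if it drops (or rescales) the binding for $x$, then by uniqueness $x$ is either absent from, or has the wrong weight in, the reduct, contradicting its assumed shape --- or else the rule's side condition instantiated at $x$ again forces $r = q + q_0$; if it acts on a binding strictly preceding $x$, then the prefix before $x$ changes length and $\lvert H'_1 \rvert \neq \lvert H_1 \rvert$, contradicting $\lvert H_1 \rvert = \lvert H'_1 \rvert$; and if it acts on a binding following $x$, then $x$'s weight is visibly unchanged, so $r' = r$.

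I expect the main obstacle to be the discipline of the exhaustive case analysis rather than any deep argument: one must be certain that \rref{HeapStep-Var} and \rref{HeapStep-Discard} are the only rules capable of changing an existing entry's weight, and that in every branch the change is either localized to a binding other than $x$ or blocked by precisely the condition $\neg(\exists q_0,\, r = q + q_0)$. It is in the \rref{HeapStep-Discard} case that the length hypothesis $\lvert H_1 \rvert = \lvert H'_1 \rvert$ earns its keep (ruling out a discard before $x$), and in the \rref{HeapStep-Var} case that the affordability hypothesis earns its keep; overlooking either sub-case is the easiest way to get the statement wrong. Everything else is bookkeeping, and the argument is uniform over the parametrizing structure, using the stated equivalence of $\neg(\exists q_0,\, r = q + q_0)$ with $\neg(r \sqsubseteq q)$ in the dependency reading.
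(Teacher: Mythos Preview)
Your handling of \rref{HeapStep-Var}, the congruence rules, and the $\beta$-rules is fine and matches the paper. The gap is in your treatment of \rref{HeapStep-Discard}: you have misread the rule. It does not delete or rescale an individual heap binding. It is a subsumption rule on the \emph{reduction label}: from a premise $[H]\,c \Longrightarrow^{q}_{S} [H']\,c'$ it concludes $[H]\,c \Longrightarrow^{q'}_{S} [H']\,c'$ whenever $q <: q'$. The heap is passed through unchanged, and the only thing that varies between premise and conclusion is the grade on the arrow.

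Consequently your case analysis (``which binding does it act on?'') is beside the point, and the length hypothesis $\lvert H_1\rvert = \lvert H'_1\rvert$ plays no role here. The real content is that this rule has a premise and you must invoke the induction hypothesis on it --- but the premise is at label $q$ while your hypothesis is $\neg(\exists q_0,\; r = q' + q_0)$, stated for the conclusion's label $q'$. So you must first argue that $\neg(\exists q_0,\; r = q' + q_0)$ together with $q <: q'$ implies $\neg(\exists q_0,\; r = q + q_0)$. In $\mathbb{N}_{\geq}$ this is because $q <: q'$ means $q = q' + q_1$ for some $q_1$, so any decomposition $r = q + q_2$ would give $r = q' + (q_1 + q_2)$; in a lattice it is because $\neg(m \sqsubseteq \ell')$ and $\ell \sqsubseteq \ell'$ force $\neg(m \sqsubseteq \ell)$. (In $\mathbb{N}_{=}$ the rule is vacuous since $<:$ is discrete.) Once that implication is in hand, the induction hypothesis applies and gives $r' = r$.
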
    
\begin{lemma}[Irrelevant] \label{irrel}
If $ [     \ottnt{H_{{\mathrm{1}}}}  ,   \ottmv{x}  \overset{ \ottnt{r} }{\mapsto}  \ottnt{a}     ,  \ottnt{H_{{\mathrm{2}}}}   ]  \ottnt{c}  \Longrightarrow^{ \ottnt{q} }_{  \ottnt{S}  \, \cup \,   \textit{fv} \:  \ottnt{b}   } [     \ottnt{H'_{{\mathrm{1}}}}  ,   \ottmv{x}  \overset{ \ottnt{r'} }{\mapsto}  \ottnt{a}     ,  \ottnt{H'_{{\mathrm{2}}}}   ]  \ottnt{c'} $ (where $\lvert \ottnt{H_{{\mathrm{1}}}} \rvert = \lvert \ottnt{H'_{{\mathrm{1}}}} \rvert$) and $\neg(\exists q_0, r = q + q_0)$, then $ [     \ottnt{H_{{\mathrm{1}}}}  ,   \ottmv{x}  \overset{ \ottnt{r} }{\mapsto}  \ottnt{b}     ,  \ottnt{H_{{\mathrm{2}}}}   ]  \ottnt{c}  \Longrightarrow^{ \ottnt{q} }_{  \ottnt{S}  \, \cup \,   \textit{fv} \:  \ottnt{a}   } [     \ottnt{H'_{{\mathrm{1}}}}  ,   \ottmv{x}  \overset{ \ottnt{r'} }{\mapsto}  \ottnt{b}     ,  \ottnt{H'_{{\mathrm{2}}}}   ]  \ottnt{c'} $.
\end{lemma}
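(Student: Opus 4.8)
The plan is to prove the statement by induction on the derivation of the heap reduction $[H_1, x \overset{r}{\mapsto} a, H_2]\, c \Longrightarrow^{q}_{S \cup \mathit{fv}(b)} [H'_1, x \overset{r'}{\mapsto} a, H'_2]\, c'$. The guiding observation is that the side condition $\neg(\exists q_0,\ r = q + q_0)$ says precisely that the entry for $x$ can never be the subject of \rref{HeapStep-Var} \emph{at label $q$}: that rule changes the weight of the looked-up variable from $r$ to some $q_0$ with $r = q + q_0$, which the hypothesis forbids. Since the label stays equal to $q$ throughout a single reduction step — every congruence rule passes it down unchanged — no sub-derivation ever looks up $x$; in particular Lemma~\ref{HeapUnchanged} gives $r' = r$, and, more importantly, the defining term $a$ of $x$ is never copied out of the heap nor otherwise inspected by any rule. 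Hence every rule instance in the given derivation stays valid verbatim once that one occurrence of $a$ is replaced by $b$, and the work is just to check this rule by rule while keeping the bookkeeping of heap positions and support sets straight.

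Concretely, I would do a case split on the last rule. In \rref{HeapStep-Var} the looked-up variable is either $x$, which is impossible by the side condition, or some $y \neq x$ in $H_1$ or $H_2$; in the latter case the instance refers only to $y$'s entry and is oblivious to $a$, and since \rref{HeapStep-Var} imposes no freshness requirement its support set may simply be taken to be $S \cup \mathit{fv}(a)$. \rref{HeapStep-Discard} is handled the same way and is likewise support-set agnostic; here the hypothesis $\lvert H_1 \rvert = \lvert H'_1 \rvert$ rules out a discard inside $H_1$, and the shape of the conclusion, which still carries the $x$-entry, rules out discarding $x$ itself, so the discarded entry lies in $H_2$ and the step replays unchanged. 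For the congruence rules — \rref{HeapStep-AppL} and the omitted analogues for the other eliminators — the $x$-entry sits inside the sub-reduction, with the same label $q$ and the same prefix length, so I apply the induction hypothesis to it (the side condition and the length constraint transfer verbatim, and the extra $\mathit{fv}(b)$ in the premise's support set becomes $\mathit{fv}(a)$ after the hypothesis), then re-apply the congruence rule; by its support discipline (premise support $=$ conclusion support together with the free variables of the sub-term that is not reduced) the conclusion indeed carries support $S \cup \mathit{fv}(a)$. For the beta rules — \rref{HeapStep-AppBeta}, \rref{HeapStep-LetPairBeta}, \rref{HeapStep-Case1Beta}, and the remaining ones — there is no sub-reduction: the rule fires on the head value of $c$, appends new entries at the tail of the heap, hence inside $H_2$, leaving both the $x$-entry and $\lvert H_1 \rvert$ intact, and its behaviour does not consult $a$, so it fires identically with $b$ occupying the $x$-slot.

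The one genuinely delicate point, and the step I expect to take the most care, is reconciling the support sets in the beta cases. The fresh names introduced by the original beta step avoid $S \cup \mathit{fv}(b)$ by assumption — this is exactly why moving $b$ into the heap cannot cause capture — but they need not avoid $\mathit{fv}(a)$, whereas the conclusion we are after carries support $S \cup \mathit{fv}(a)$. I would discharge this using the nominal framework behind the relation \cite{pitts}: $\Longrightarrow$ is equivariant and closed under $\alpha$-renaming of the names it creates, so the given derivation may be assumed, without loss of generality, to introduce only names fresh for $S \cup \mathit{fv}(a) \cup \mathit{fv}(b) \cup \mathit{fv}(c) \cup \mathit{fv}(H_1, H_2)$, after which every case above goes through with the stated support set (the resulting $c'$ being the one from the hypothesis up to $\alpha$-equivalence). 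If preferred, this can be isolated as a preliminary support-weakening lemma for $\Longrightarrow$ and merely cited here; the rest of the argument is a mechanical replay of the derivation.
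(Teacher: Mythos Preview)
Your overall strategy---induction on the heap-reduction derivation---is exactly what the paper does (its entire proof is the single line ``By induction on \ldots''), and your case analysis for \rref{HeapStep-Var}, the congruence rules, and the $\beta$-rules is sound, including the care you take with support sets.

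There is, however, a concrete misunderstanding in your treatment of \rref{HeapStep-Discard}. That rule does \emph{not} remove an entry from the heap; it leaves the heap and term untouched and merely weakens the label: from a premise at label $q$ it concludes the same step at any $q'$ with $q <: q'$. Consequently your sentence ``the label stays equal to $q$ throughout a single reduction step'' is false, and the argument you give for this case (locating a discarded entry in $H_2$) does not apply. The correct handling mirrors the paper's proof of Lemma~\ref{HeapUnchanged}: given the conclusion at label $q'$ with hypothesis $\neg(\exists q_0.\ r = q' + q_0)$, one must show the premise at label $q$ still satisfies $\neg(\exists q_0.\ r = q + q_0)$ so that the induction hypothesis applies. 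This holds because $q <: q'$ gives $q = q' + q_1$ for some $q_1$ (in $\mathbb{N}_{\geq}$; in $\mathbb{N}_{=}$ the rule is superfluous; in the lattice case it is the immediate $\ell \sqsubseteq \ell'$ argument), so $r = q + q_0$ would yield $r = q' + (q_1 + q_0)$, a contradiction. With that fix, your induction goes through.
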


These lemmas, in conjunction with the soundness theorem we present next, shall guarantee fairness of resource usage and security of information flow in LDC.

\subsection{Soundness With Respect To Heap Semantics} \label{heapsimplesound}

The key idea behind usage analysis through heap semantics is that, if a heap contains the right amount of resources to evaluate some number of copies of a term, as judged by the type system, then the evaluation of that many number of copies of the term in that heap does not get stuck. Since the heap-based reduction rules enforce fairness of resource usage, this would mean that the type system accounts resource usage correctly. 

The key idea behind dependency analysis through heap semantics is similar. If a heap sets the right access permissions for a user, as judged by the type system, then the evaluation, in that heap, of any program observable to that user does not get stuck. Since the reduction rules enforce security of information flow, this would mean that the type system allows only secure flows. 

The compatibility relation, $ \ottnt{H}  \models  \Gamma $, between a heap $\ottnt{H}$ and a context $\Gamma$, formalizes the idea that the heap $\ottnt{H}$ contains the right amount of resources or has set the right access permissions for evaluating any term type-checked in context $\Gamma$. The compatibility relation \cite{grad} is defined below:

\drules[HeapCompat]{$ \ottnt{H}  \models  \Gamma $}{Compatibility}{Empty,Cons}

The soundness theorem stated next says that if a heap $\ottnt{H}$ is compatible with a context $\Gamma$, then the evaluation, starting with heap $\ottnt{H}$, of a term type-checked in context $\Gamma$ does not get stuck.

\begin{theorem}[Soundness]\label{heapsound}
If $ \ottnt{H}  \models  \Gamma $ and $ \Gamma  \vdash  \ottnt{a}  :^{ \ottnt{q} }  \ottnt{A} $ and $q \neq 0$, then either $\ottnt{a}$ is a value or there exists $\ottnt{H'}, \Gamma', \ottnt{a'}$ such that:
\begin{itemize}
\item $ [  \ottnt{H}  ]  \ottnt{a}  \Longrightarrow^{ \ottnt{q} }_{ \ottnt{S} } [  \ottnt{H'}  ]  \ottnt{a'} $
\item $ \ottnt{H'}  \models  \Gamma' $
\item $ \Gamma'  \vdash  \ottnt{a'}  :^{ \ottnt{q} }  \ottnt{A} $
\end{itemize}
\end{theorem}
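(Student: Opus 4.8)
The plan is to prove this by induction on the typing derivation $\Gamma \vdash \ottnt{a} :^{\ottnt{q}} \ottnt{A}$, following the standard progress/preservation-with-environment argument adapted to the weighted-heap setting. Since the statement is essentially a type-safety result for the heap semantics, the skeleton is familiar: for each typing rule, either the subject $\ottnt{a}$ is already a value (the introduction forms and $\lambda$-abstractions), or it is an elimination form (or a variable), in which case I produce a heap step and re-establish compatibility and typing for the residuals. I would strengthen or generalize the induction hypothesis as needed so that it threads through the context splittings performed by the multiplicative rules; in particular, I expect to invoke the \textbf{Splitting} lemma (Lemma~\ref{BLSSplit}, resp.\ Lemma~\ref{DSSplit}) to divide the resources of the heap between the two subterms of an application $\ottnt{b} \: \ottnt{a}^{\ottnt{r}}$, and the \textbf{Multiplication} lemma (Lemma~\ref{BLSMult}, resp.\ Lemma~\ref{DSMult}) when a subterm must be evaluated at grade $\ottnt{q} \cdot \ottnt{r}$.

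The two interesting base/step cases are the variable rule and the elimination rules. For \rref{ST-Var}: from $\ottnt{H} \models \Gamma$ and $\Gamma \vdash \ottmv{x} :^{\ottnt{q}} \ottnt{A}$ with $q \neq 0$, inverting \rref{HeapCompat-Cons} along the chain down to the binding for $\ottmv{x}$ shows that $\ottmv{x}$ is bound in $\ottnt{H}$ at a weight $\ottnt{r}$ with $\ottnt{q} <: \ottnt{r}$ (resp.\ $\ottnt{r} \sqsubseteq \ottnt{q}$), exactly the side condition needed to fire \rref{HeapStep-Var}; the residual heap has the binding downgraded to $\ottnt{r} - \ottnt{q}$ (resp.\ unchanged), and I check that compatibility is preserved for the downgraded context and that the looked-up definition is well-typed at grade $\ottnt{q}$ in the new context — this is where acyclicity of the heap is used, so that the definition typechecks in a strictly smaller prefix. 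For the $\beta$-rules (\rref{HeapStep-AppBeta}, \rref{HeapStep-LetPairBeta}, \rref{HeapStep-Case1Beta}), I invert the typing of the (value) eliminee to get the typing of the bound subterm, then use \rref{HeapCompat-Cons} to extend the heap with the new binding at weight $\ottnt{q} \cdot \ottnt{r}$ (resp.\ $\ottnt{q} \sqcup \ottnt{r}$), matching the context extension produced by the Substitution lemma's inverse; the Multiplication lemma supplies the typing of the freshly bound definition at the required multiplied grade. For the congruence rules (\rref{HeapStep-AppL}, etc.), I split $\Gamma = \Gamma_1 + \Gamma_2$ (resp.\ $\Gamma_1 \sqcap \Gamma_2$) via Splitting, show $\ottnt{H}$ restricts to a heap compatible with $\Gamma_1$, apply the IH to the head subterm, and reassemble compatibility for the stepped configuration — here the support set $\ottnt{S} \cup \textit{fv}\,\ottnt{a}$ in the premise of \rref{HeapStep-AppL} is exactly what keeps freshly chosen names from clashing with the free variables of the argument that is carried along unchanged. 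The \rref{HeapStep-Discard} rule is handled uniformly: it corresponds to weakening/subtyping on the context side and is available precisely when the grade permits ($\mathbb{N}_{\geq}$, resp.\ always in $\mathcal{L}$).

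The \rref{ST-SubL} and \rref{ST-SubR} cases (subsumption on the right, resp.\ on the context) are discharged by appealing to the IH at the smaller/larger grade and then applying the corresponding ordering property of compatibility; I would first record a small lemma that $\ottnt{H} \models \Gamma$ and $\Gamma' <: \Gamma$ (pointwise, with $\lfloor \Gamma' \rfloor = \lfloor \Gamma \rfloor$) imply $\ottnt{H} \models \Gamma'$ — or argue directly — so these administrative rules do not disrupt the induction. The \rref{ST-LetUnit} and \rref{ST-LetPair} side condition $\ottnt{q_0} <: 1$ (resp.\ $\ell_0 \sqsubseteq \bot$) is used to guarantee that evaluating the eliminee at grade $\ottnt{q} \cdot \ottnt{q_0} <: \ottnt{q}$ is still covered by the resources allotted, and that when $\ottnt{q} \neq 0$ we also have $\ottnt{q} \cdot \ottnt{q_0}$ meaningful where needed.

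The main obstacle I anticipate is \textbf{bookkeeping around the context/heap split in the congruence cases combined with fresh-name management}: making precise the claim that a heap compatible with $\Gamma_1 + \Gamma_2$ can be "split" (or rather, that the same heap is simultaneously compatible with a context obtained by leaving some bindings to be consumed later) requires a careful formulation, because unlike contexts, a single heap is shared — the resources are drawn down incrementally rather than partitioned up front. The cleanest route is likely to avoid literally splitting the heap and instead evaluate the head subterm against the \emph{whole} $\Gamma$ with the non-head part's grades still present, which forces a slightly generalized statement or an auxiliary lemma relating $\ottnt{H} \models \Gamma_1 + \Gamma_2$ to evaluability of a term typed only in $\Gamma_1$-like fashion. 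Getting that interface right, so that it composes with the Splitting lemma and survives the beta-steps (which add bindings), is the crux; everything else is routine inversion plus the three previously established lemmas (Multiplication, Splitting, Substitution) and their dependency-analysis analogues.
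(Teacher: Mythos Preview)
Your proposal matches the paper's approach: the generalized statement you anticipate as a fix for the congruence cases --- $H \models \Gamma_1 + \Gamma_2$ with the term typed only in $\Gamma_2$ and the slack $\Gamma_1$ held in reserve --- is exactly the helper lemma the paper proves first by induction on the typing derivation and then instantiates with $\Gamma_1 := 0 \cdot \Gamma$, $\Gamma_2 := \Gamma$. One small correction on where the Splitting lemma enters: it is not needed in the application-congruence case (the premises of \textsc{ST-App} already supply the decomposition $\Gamma_{21}$, $\Gamma_{22}$; the argument's resources $\Gamma_{22}$ are simply parked in the reserve component of the generalized hypothesis), but rather in the \textsc{Var} case, to split the typing $\Gamma_0 \vdash a :^{q_1+q_2} A$ of the heap-bound definition into a $q_2$-portion that types the looked-up term now and a remainder that re-establishes compatibility for the residual heap.
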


Note here that the typing context gets updated with every step of reduction, unlike standard type preservation theorem. This is so because the term, as it reduces, needs less resources from the context. However, the updated context is always compatible with the updated heap. Further, note that the labels on the typing judgments and the stepping judgment are the same. From a resource usage perspective, this match-up corresponds to an invariance of the number of copies of the term during reduction. From an information flow perspective, this match-up corresponds to an invariance of the level of the observer during reduction.

Below, we present some corollaries of this soundness theorem.

\begin{corollary}[No Usage] \label{nonint}
In LDC($ \mathcal{Q}_{\mathbb{N} } $): Let $  \emptyset   \vdash  \ottnt{f}  :^{  1  }   {}^{  0  }\!  \ottnt{A}  \to  \ottnt{A}  $. Then, for any $  \emptyset   \vdash  \ottnt{a_{{\mathrm{1}}}}  :^{  0  }  \ottnt{A} $ and $  \emptyset   \vdash  \ottnt{a_{{\mathrm{2}}}}  :^{  0  }  \ottnt{A} $, the terms $ \ottnt{f}  \:  \ottnt{a_{{\mathrm{1}}}} ^{  0  } $ and $ \ottnt{f}  \:  \ottnt{a_{{\mathrm{2}}}} ^{  0  } $ have the same operational behavior, i.e., either both the terms diverge or both reduce to the same value. 
\end{corollary}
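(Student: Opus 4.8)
The plan is to run $f\,a_1^{0}$ and $f\,a_2^{0}$ under the heap semantics of Section~\ref{heapsimple}, starting from the empty heap, and to argue that the two evaluations proceed in lockstep because the argument is supplied at grade $0$ and hence is never looked up.

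First I would set up typing and non-stuckness. From $\emptyset \vdash f :^{1} {}^{0}\!A \to A$ and $\emptyset \vdash a_i :^{0} A$ (note $1 \cdot 0 = 0$), rule \textsc{ST-App} gives $\emptyset \vdash f\,a_i^{0} :^{1} A$. Since $\emptyset \models \emptyset$ via \textsc{HeapCompat-Empty} and $1 \neq 0$, the Soundness theorem (Theorem~\ref{heapsound}) applies and, iterating it, evaluation of each $f\,a_i^{0}$ from the empty heap never gets stuck: it either diverges or reaches a value.

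The core of the argument is then as follows. As long as the head redex is an application whose function component is not yet a $\lambda$, the only applicable rule is \textsc{HeapStep-AppL}, which reduces the function component and never consults the argument; since $f$ is closed, this initial segment of the reduction is literally identical for $f\,a_1^{0}$ and $f\,a_2^{0}$, and no heap binding mentioning $a_i$ is ever created. There are two cases. If $f$ diverges, then both $f\,a_1^{0}$ and $f\,a_2^{0}$ diverge, since the reductions coincide. Otherwise, by Progress and Preservation (Theorems~\ref{BLSProg} and~\ref{SimplePreserve}) together with a canonical-forms argument on its closed arrow type, $f$ reduces to a value $\lambda^{0} y.\,b$, inside some extended heap $H_f$ whose bindings do not mention $a_i$; then \textsc{HeapStep-AppBeta} fires, adding $y \overset{0}{\mapsto} a_i$ (again $1 \cdot 0 = 0$) to the heap and continuing with $b$. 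So the two evaluations reach $[H_f,\, y\overset{0}{\mapsto}a_1]\,b$ and $[H_f,\, y\overset{0}{\mapsto}a_2]\,b$, which differ only in a grade-$0$ binding for $y$. From this common point I would iterate the Irrelevant lemma (Lemma~\ref{irrel}): in $\mathbb{N}$ there is no $q_0$ with $0 = 1 + q_0$, and $a_1, a_2$ are closed so $\textit{fv}\,a_1 = \textit{fv}\,a_2 = \emptyset$, hence the side conditions hold and the support sets agree at every step; moreover the Unchanged lemma (Lemma~\ref{HeapUnchanged}) keeps the $y$-binding at grade $0$, so Lemma~\ref{irrel} remains applicable throughout (and likewise for any further grade-$0$ bindings introduced by nested beta-steps). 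Therefore each step on one side is mirrored on the other with the same reduct and the same heap up to irrelevant grade-$0$ bindings, and, combined with non-stuckness, this yields that either both evaluations diverge or both terminate at the same value.

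The main obstacle I expect is the bookkeeping needed to upgrade the single-step Lemma~\ref{irrel} to a genuine multi-step simulation: precisely justifying that the reduction of $f\,a_i^{0}$ up to the outer beta-step is independent of $a_i$ and produces the same intermediate heap $H_f$, and maintaining the invariant (via Lemma~\ref{HeapUnchanged}) that the bindings feeding on $a_i$ never leave grade $0$ even as new nested beta-steps occur, so that Lemma~\ref{irrel} applies all the way down. A minor point is fixing which ambient notion of "operational behavior" is meant; taking it to be heap evaluation from the empty heap — the semantics this section is built on — makes everything align.
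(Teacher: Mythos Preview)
Your proposal is correct and takes essentially the same approach as the paper: reduce $f$ under \textsc{HeapStep-AppL} until it is a $\lambda$ (or diverges), fire \textsc{HeapStep-AppBeta} to load the argument at grade $0$, then iterate Lemma~\ref{irrel} to keep the two reductions in lockstep. Your write-up is in fact more careful than the paper's sketch, explicitly handling divergence of $f$ and invoking Lemma~\ref{HeapUnchanged} to maintain the grade-$0$ invariant across steps.
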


The above corollary also holds in LDC($ \mathcal{L} $) with $ 0 $ and $ 1 $ replaced by $ \mathbf{H} $ and $ \mathbf{L} $ respectively. In LDC($ \mathcal{L} $), this corollary shows \textit{non-interference} of high-security inputs in low-security outputs.

\begin{corollary}[Noninterference] \label{noninterference}
In LDC($ \mathcal{L} $): Let $  \emptyset   \vdash  \ottnt{f}  :^{  \mathbf{L}  }   {}^{  \mathbf{H}  }\!  \ottnt{A}  \to  \ottnt{A}  $. Then, for any $  \emptyset   \vdash  \ottnt{a_{{\mathrm{1}}}}  :^{  \mathbf{H}  }  \ottnt{A} $ and $  \emptyset   \vdash  \ottnt{a_{{\mathrm{2}}}}  :^{  \mathbf{H}  }  \ottnt{A} $, the terms $ \ottnt{f}  \:  \ottnt{a_{{\mathrm{1}}}} ^{  \mathbf{H}  } $ and $ \ottnt{f}  \:  \ottnt{a_{{\mathrm{2}}}} ^{  \mathbf{H}  } $ have the same operational behavior. 
\end{corollary}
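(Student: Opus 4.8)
The plan is to run the heap-semantics argument; indeed this statement is just the image of Corollary~\ref{nonint} under the reading of the preordered semiring as a lattice ($+,\cdot,0,1, <: $ becoming $\sqcap,\sqcup,\top,\bot,\sqsubseteq$), with $0$ specialised to $\mathbf{H}$ and $1$ to $\mathbf{L}$, and with the operative hypothesis $\neg(\mathbf{H}\sqsubseteq\mathbf{L})$. First I would package the two arguments as heap cells: set $c := f\, x^{\mathbf{H}}$ and $H_i := (x\overset{\mathbf{H}}{\mapsto} a_i)$ for $i\in\{1,2\}$, with $x$ fresh. From $\emptyset \vdash f :^{\mathbf{L}} {}^{\mathbf{H}}\!A \to A$ and \rref{ST-AppD} — where the $\mathbf{H}$-annotated argument of an application observed at $\mathbf{L}$ must be checkable at $\mathbf{L}\sqcup\mathbf{H}=\mathbf{H}$ — I get $x :^{\mathbf{H}} A \vdash c :^{\mathbf{L}} A$; and from $\emptyset \vdash a_i :^{\mathbf{H}} A$ together with \rref{HeapCompat-Empty} and \rref{HeapCompat-Cons} I get $H_i \models (x :^{\mathbf{H}} A)$. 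Since $\mathbf{L}\ne\mathbf{H}$, the observer level on $c$ is not the $0$-element of the lattice, so the side condition $q\ne 0$ of Theorem~\ref{heapsound} is met and the theorem applies repeatedly: each of the evaluations of $[H_1]\,c$ and $[H_2]\,c$ proceeds without getting stuck, staying well-typed and heap-compatible at observer level $\mathbf{L}$, and hence is either infinite or halts at a value.

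Next I would show that the two evaluations run in lockstep. The key invariant, proved by induction on the number of steps, is: whenever the evaluation of $[H_1]\,c$ reaches $[\,x\overset{r}{\mapsto}a_1,\,G\,]\,d$, then $r=\mathbf{H}$ and the evaluation of $[H_2]\,c$ reaches $[\,x\overset{\mathbf{H}}{\mapsto}a_2,\,G\,]\,d$ with the same residual term $d$ and the same tail heap $G$. That $r=\mathbf{H}$ throughout is Lemma~\ref{HeapUnchanged} applied with $\neg(\mathbf{H}\sqsubseteq\mathbf{L})$: the $\mathbf{H}$-weighted cell is never read, so its weight never drops. That each individual step of the $a_1$-run is reproduced, verbatim, by the $a_2$-run is Lemma~\ref{irrel} applied with the $\mathbf{H}$-cell as the distinguished binding — the reduction cannot depend on whether that never-forced cell holds $a_1$ or $a_2$; since $a_1,a_2$ are closed, the support-set bookkeeping in that lemma is vacuous here. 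Because call-by-name heap reduction is deterministic up to the choice of fresh names, running the invariant to the end gives: $[H_1]\,c$ and $[H_2]\,c$ either both diverge or both halt at configurations with identical term components (their heaps differing only in the $x$-cell). Finally, since $c\{H_i\} = f\,a_i^{\mathbf{H}}$ and heap reduction tracks standard reduction step-for-step (each heap step either leaves the induced substitution invariant or mirrors a $\leadsto$-step, by the similarity property of the heap semantics), this lockstep transfers back to the closed terms, so $f\,a_1^{\mathbf{H}}$ and $f\,a_2^{\mathbf{H}}$ have the same operational behavior.

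The crux — and the main obstacle — is the lockstep invariant, specifically the \emph{joint} deployment of Lemma~\ref{HeapUnchanged} and Lemma~\ref{irrel} as new cells are appended to the heap during reduction. One has to check that the only place the two runs could conceivably diverge is at a \emph{read} of a cell whose weight $w$ fails $\exists q_0,\ w=\mathbf{L}\sqcap q_0$ — this covers the original $x$-cell and every cell transitively holding an $\mathbf{H}$-annotated argument, e.g.\ the cell created by \rref{HeapStep-AppBeta} for $f$'s bound variable, which receives weight $\mathbf{L}\sqcup\mathbf{H}=\mathbf{H}$ — and that Lemma~\ref{irrel} is strong enough to swap the contents of that cell while regenerating the rest of the configuration unchanged. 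Everything else — compatibility of $H_i$ with $x:^{\mathbf{H}}A$, non-stuckness via Theorem~\ref{heapsound}, and the passage back from heap reduction to $\leadsto$ — is routine.
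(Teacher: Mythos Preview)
Your proposal is correct and follows essentially the same approach as the paper: run the heap semantics at observer level $\mathbf{L}$, invoke soundness (Theorem~\ref{heapsound}) for non-stuckness, and use Lemma~\ref{irrel} (together with Lemma~\ref{HeapUnchanged}) to show that the two runs proceed in lockstep since the $\mathbf{H}$-weighted cell is never read. The only cosmetic difference is that you front-load the argument into a heap cell from the start (working with the open term $f\,x^{\mathbf{H}}$ under $x\overset{\mathbf{H}}{\mapsto}a_i$) and then transfer back via similarity, whereas the paper starts both runs from the empty heap on the closed terms and lets the $\beta$-step create the distinguishing cell; both setups reach the same lockstep invariant.
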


The soundness theorem may also be employed to reason about other usages, like affine usage in LDC($ \mathbb{N}_{\geq} $).

\begin{corollary}[Affine Usage] \label{single}
In LDC($ \mathbb{N}_{\geq} $): Let $  \emptyset   \vdash  \ottnt{f}  :^{  1  }   {}^{  1  }\!  \ottnt{A}  \to  \ottnt{A}  $. Then, for any $  \emptyset   \vdash  \ottnt{a}  :^{  1  }  \ottnt{A} $, the term $ \ottnt{f}  \:  \ottnt{a} ^{  1  } $ uses $\ottnt{a}$ at most once during reduction.
\end{corollary}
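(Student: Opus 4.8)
The plan is to derive this from the Soundness theorem (Theorem~\ref{heapsound}) for LDC($\mathbb{N}_{\geq}$), reading ``$a$ is used'' in the only sense the paper gives it meaning --- a lookup of the heap location holding $a$ via \rref{HeapStep-Var}. First I would name the argument by a fresh variable: set $H_0 := \emptyset, x \overset{1}{\mapsto} a$ and observe that evaluating $f\,a^1$ in the heap semantics coincides, step for step, with evaluating $f\,x^1$ starting from $H_0$, with each consumption of the copy of $a$ fed to $f$ corresponding to exactly one \rref{HeapStep-Var} step on $x$. (The binding $x \overset{1}{\mapsto}a$ is a delayed substitution; pre-loading it is the environment-semantics analogue of plugging $a$ in for the argument, whether $f$ first reduces to a $\lambda$ or diverges.) So it suffices to bound the number of \rref{HeapStep-Var} steps on $x$ along any heap reduction sequence out of $[H_0]\,f\,x^1$.

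Next I would check the hypotheses of Theorem~\ref{heapsound} at the start. Compatibility $H_0 \models x :^1 A$ follows from \rref{HeapCompat-Cons} applied to $\emptyset \models \emptyset$ (\rref{HeapCompat-Empty}) together with $\emptyset \vdash a :^1 A$, the binding weight $1$ matching the context grade. Typing $x :^1 A \vdash f\,x^1 :^1 A$ is routine: weaken $\emptyset \vdash f :^1 {}^{1}\!A \to A$ by $x :^0 A$ (Lemma~\ref{BLSWeak}), apply \rref{ST-App} with $x :^1 A \vdash x :^1 A$ (\rref{ST-Var}), and the context addition yields grade $0+1 = 1$ on $x$. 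Since the observer grade $1 \neq 0$, Theorem~\ref{heapsound} applies and can be iterated: at every step we retain a heap compatible with a context in which the current term is typable at grade $1$, so the reduction never gets stuck --- it either reaches a value or runs forever --- and in either case I can track the weight of $x$ throughout.

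Then I would count. The only rule consulting and decrementing the binding for $x$ is \rref{HeapStep-Var}, which fires on $x$ at observer grade $1$ only when $x$'s current weight has the form $1 + q_0$, updating it to $q_0$. Since $x$ starts at weight $1$, the first such step forces $q_0 = 0$, leaving $x$ at weight $0$; and $0 \neq 1 + q_0$ for any $q_0 \in \mathbb{N}$, so \rref{HeapStep-Var} can never fire on $x$ again. Lemma~\ref{HeapUnchanged} makes this robust against \emph{all} other steps: once the binding for $x$ carries weight $r$ with $\neg(\exists q_0,\, r = q + q_0)$ (here $q=1$, $r=0$), no subsequent step alters it, so \rref{HeapStep-Discard} and the $\beta$-rules --- which only append new bindings after $x$ --- leave it at $0$. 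Hence $x$ is looked up at most once over the whole reduction, i.e.\ $f\,a^1$ uses $a$ at most once.

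The main obstacle I expect is the bookkeeping of the first paragraph: making precise that reducing $f\,a^1$ and reducing $f\,x^1$ from $H_0$ correspond use for use, and lining up the side conditions of Lemma~\ref{HeapUnchanged} along the (possibly infinite) sequence. The prefix-length condition $\lvert H_1\rvert = \lvert H_1'\rvert$ of that lemma holds for free here, since the prefix of the heap before $x$ stays empty --- no binding is ever removed, only appended after $x$ --- and the support sets $S$ can be fixed uniformly once we bound the finitely many fresh names introduced. None of the individual steps is hard; the care is entirely in threading the invariants of Theorem~\ref{heapsound}, Lemma~\ref{HeapUnchanged}, and the $\mathbb{N}_{\geq}$ arithmetic through the reduction.
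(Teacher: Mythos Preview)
Your proposal is correct and shares the paper's core idea: once $a$ sits in the heap at weight $1$, \rref{HeapStep-Var} can fire on it at most once (weight $1 \to 0$, and $0$ is never of the form $q+q_0$ with $q\neq 0$), so Theorem~\ref{heapsound} guarantees reduction proceeds without a second lookup. The difference is purely in setup. The paper reduces $f\,a^{1}$ directly from the empty heap, lets $f$ evaluate to a $\lambda$, and lets the \rref{HeapStep-AppBeta} step itself load $a$ into the heap at weight $1$; then it counts. You instead pre-load $a$ by hand as $x \overset{1}{\mapsto} a$ and reduce $f\,x^{1}$. Your route buys a uniform treatment of the diverging-$f$ case, but at the cost of the correspondence argument you flag as the main obstacle (and note that after beta your heap has $y \overset{1}{\mapsto} x$ rather than $y \overset{1}{\mapsto} a$, so each use of the argument is two lookups, $y$ then $x$, not one---still fine for the bound, but not literally ``step for step''). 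The paper's setup sidesteps that bookkeeping entirely: nothing needs to be said until beta fires, and if $f$ diverges there is simply never an $a$ in the heap to use.
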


%\subsubsection*{Example: No use or Noninterference} 
%\hfill $\square$

%\subsubsection*{Example: High input and Low output}

%The above example may be paraphrased in terms of information flow as: if $  \emptyset   \vdash  \ottnt{f}  :^{  \mathbf{L}  }   {}^{  \mathbf{H}  }\!  \ottnt{A}  \to  \ottnt{A}  $, then for any $  \emptyset   \vdash  \ottnt{a_{{\mathrm{1}}}}  :^{  \mathbf{H}  }  \ottnt{A} $ and $  \emptyset   \vdash  \ottnt{a_{{\mathrm{2}}}}  :^{  \mathbf{H}  }  \ottnt{A} $, the terms $ \ottnt{f}  \:  \ottnt{a_{{\mathrm{1}}}} ^{  \mathbf{H}  } $ and $ \ottnt{f}  \:  \ottnt{a_{{\mathrm{2}}}} ^{  \mathbf{H}  } $ have the same operational behavior. This follows by virtue of the same argument presented above.

%\subsubsection*{Example: Single use}

%\hfill $\square$

Now that we have seen the syntax and semantics of simply-typed version of LDC, we move on to its Pure Type System (PTS) version.

\section{Linearity and Dependency Analyses in Pure Type Systems}

A Pure Type System (PTS) is characterized by a tuple, $( \mathcal{S} ,  \mathcal{A} ,  \mathcal{R} )$, where $ \mathcal{S} $ is a set of sorts, $ \mathcal{A} $ is a set of axioms and $ \mathcal{R} $ is a ternary relation between sorts \cite{pts}. Many type systems like simply-typed $\lambda$-calculus, System F, System $\text{F}{\omega}$, Calculus of Constructions, Type-in-Type, etc. may be seen as PTSs. Note that a PTS need not be normalizing, for example, Type-in-Type allows nonterminating computations. We parametrize LDC over an abstract PTS so that it may be instantiated to particular PTSs as required. 

\subsection{Simple Version Vs PTS Version} \label{simpdepcomp}

The PTS version of LDC is similar to its simply-typed version. As far as types and terms are concerned, we just need to add to them the sorts in $ \mathcal{S} $ and generalize $ {}^{ \ottnt{r} }\!  \ottnt{A}  \to  \ottnt{B} $ and $ {}^{ \ottnt{r} }\!  \ottnt{A}  \: \times \:  \ottnt{B} $ to $ \Pi  \ottmv{x}  :^{ \ottnt{r} } \!  \ottnt{A}  .  \ottnt{B} $ and $ \Sigma  \ottmv{x}  :^{ \ottnt{r} } \!  \ottnt{A}  .  \ottnt{B} $ respectively. For resource usage and information flow analyses, we use the same parametrizing structures, i.e. $ \mathcal{Q}_{\mathbb{N} } $ and $ \mathcal{L} $ respectively. However, there is an important distinction between these two versions. In the PTS version, we need to extend our analyses from terms to both types and terms.

The key idea behind our extension is that usage and flow analyses for types and terms can be carried out separately. This idea is inspired by recent developments in graded-context dependent type systems \cite{mcbride,atkey,grad,moon}. \citet{mcbride} first noted that linearity and dependent types can be smoothly combined by distinguishing between `contemplative use' of resources in types and `consumptive use' of resources in terms. \citet{atkey} carried this work forward with the calculus \textsc{QTT}, where types live in a resource-agnostic world and terms live in a resource-aware world. \citet{grad} presented an alternative system, \textsc{GraD}, where both types and terms live in a resource-aware world but resources used by a type are zeroed-out while calculating resources used by terms of that type. \citet{moon} presented yet another alternative system, \textsc{Grtt}, where resources used by types are not zeroed-out but tracked simultaneously along with the resources used by terms. In its analysis, \textsc{GraD} is more uniform than \textsc{QTT} and much simpler than \textsc{Grtt}. So LDC analyzes usage in types \`{a} la \textsc{GraD}.

We now look at the type system of the calculus.

\subsection{Type System of LDC}

\begin{figure}
\drules[PTS]{$ \Gamma  \vdash  \ottnt{a}  :^{ \ottnt{q} }  \ottnt{A} $}{PTS version}{Axiom,Var,Weak,Pi,Lam,App,Conv,Sigma,Pair,LetPair,Sum,InjOne,Case,SubL,SubR}
\caption{Type System for LDC (Excerpt)}
\label{pts}
\end{figure}

\begin{figure} 
\drules[Eq]{$ \ottnt{A}  =_{\beta}  \ottnt{B} $}{Definitional Equality}{PiCong,LamCong,AppCong}%,AppBeta}
\caption{Equality Rules for LDC (Excerpt)}
\label{equality}
\end{figure}

The typing and equality rules appear in Figures \ref{pts} and \ref{equality} respectively. There are a few things to note:
\begin{itemize}
\item We use these rules for both resource usage analysis and information flow analysis (with appropriate interpretations of $+, \cdot, 0, 1$ and $ <: $).
\item The judgment $ \Delta  \vdash_{0}  \ottnt{a}  :  \ottnt{A} $ is shorthand for the judgment $ \Gamma  \vdash  \ottnt{a}  :^{  0  }  \ottnt{A} $ where $  \lfloor  \Gamma  \rfloor   =  \Delta $ and $ \overline{ \Gamma } $ is a $\mathbf{0}$ vector. Note that this judgment is essentially the standard typing judgment $ \Delta  \vdash  \ottnt{a}  : \:  \ottnt{A} $ because in world $0$, neither resource usage nor information flow constraints are meaningful.

\item We track usage and flow in terms and types separately. The \rref{PTS-Var} illustrates this principle nicely. The type $\ottnt{A}$ may use some resources or be observable at some low security level. But while type-checking a term of type $\ottnt{A}$, we zero-out the requirements of $\ottnt{A}$ or set $\ottnt{A}$ to the highest security level. This principle also applies to several other rules, for example, \rref{PTS-Lam}, \rref{PTS-Pair}, etc.

\item The \rref{PTS-Pi} shows how we track usage and flow in types. This rule brings out an important aspect of our analysis: not only do we separate the analysis in types and terms but also we allow a term and its type to treat the same bound variable differently. %The decoupling allows the bound variable to be used differently in the body of a type compared to the body of a term having that type. 
The annotation on the type, $\ottnt{r}$ in this case, shows how the bound variable is used in the body of a term having that type. This annotation is \textit{not} related to how the bound variable is used in the body of the type itself. 

Let us consider an example: the polymorphic identity type, $ \Pi  \ottmv{x}  :^{  0  } \!   \ottmv{s}   .   \Pi  \ottmv{y}  :^{  1  } \!   \ottmv{x}   .   \ottmv{x}   $, uses the bound variable $\ottmv{x}$ in its body but a function having this type (e.g. polymorphic identity function $ \lambda^{  0  }  \ottmv{x}  .   \lambda^{  1  }  \ottmv{y}  .   \ottmv{y}   $) can not use the bound variable $\ottmv{x}$ in its body.

\item We use $\beta$-equivalence for equality in \rref{PTS-Conv}. It is a congruent, equivalence relation closed under $\beta$-reduction of terms. Some of the equality rules appear in Figure \ref{equality}. They are mostly standard. However, the congruence \rref{Eq-PiCong,Eq-LamCong,Eq-AppCong} need to check that the grade annotations on the terms being equated match up. 
%\item We omit some rules, ex. \rref{PTS-WSigma,PTS-SSigma} which are similar to \rref{PTS-Pi}. 
\end{itemize}
Next, we look at the metatheory of the calculus.

\subsection{Metatheory of LDC}

The PTS version of LDC satisfies the PTS analogues of all the lemmas and theorems satisfied by the simply-typed version, presented in Sections \ref{MetaSimpleBounded} and \ref{MetaSimpleDependency}. The PTS version also enjoys the same standard call-by-name semantics as the simply-typed version. Further, the PTS version is type-sound with respect to this semantics. 

Next, we state the PTS analogues of some of the crucial lemmas and theorems presented in Sections \ref{MetaSimpleBounded} and \ref{MetaSimpleDependency}.

\iffalse
\begin{lemma}[Weakening]\label{DWeak}
If $  \Gamma_{{\mathrm{1}}}  ,  \Gamma_{{\mathrm{2}}}   \vdash  \ottnt{a}  :^{ \ottnt{q} }  \ottnt{A} $ and $ \Delta_{{\mathrm{1}}}  \vdash_{0}  \ottnt{C}  :   \ottmv{s}  $ and $  \lfloor  \Gamma_{{\mathrm{1}}}  \rfloor   =  \Delta_{{\mathrm{1}}} $, then $    \Gamma_{{\mathrm{1}}}  ,   \ottmv{z}  :^{  0  }  \ottnt{C}     ,  \Gamma_{{\mathrm{2}}}   \vdash  \ottnt{a}  :^{ \ottnt{q} }  \ottnt{A} $.
\end{lemma}
\fi

\begin{lemma}[Substitution]\label{DSubst}
If $    \Gamma_{{\mathrm{1}}}  ,   \ottmv{z}  :^{ \ottnt{r_{{\mathrm{0}}}} }  \ottnt{C}     ,  \Gamma_{{\mathrm{2}}}   \vdash  \ottnt{a}  :^{ \ottnt{q} }  \ottnt{A} $ and $ \Gamma  \vdash  \ottnt{c}  :^{ \ottnt{r_{{\mathrm{0}}}} }  \ottnt{C} $ and $  \lfloor  \Gamma_{{\mathrm{1}}}  \rfloor   =   \lfloor  \Gamma  \rfloor  $, then $     \Gamma_{{\mathrm{1}}}  +  \Gamma    ,  \Gamma_{{\mathrm{2}}}   \{  \ottnt{c}  /  \ottmv{z}  \}   \vdash   \ottnt{a}  \{  \ottnt{c}  /  \ottmv{z}  \}   :^{ \ottnt{q} }   \ottnt{A}  \{  \ottnt{c}  /  \ottmv{z}  \}  $. 
\end{lemma}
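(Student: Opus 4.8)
The plan is to prove the statement by induction on the derivation of $\Gamma_1, z :^{r_0} C, \Gamma_2 \vdash a :^q A$, establishing the claim uniformly for every grade $q$ --- in particular for $q = 0$, which is exactly the instance needed for the type-formation premises written $\vdash_0$ that occur inside rules such as \rref{PTS-Pi,PTS-Sigma} and \rref{PTS-Var}. Before the induction I would record three preliminary facts. First, since $z$ is bound after $\Gamma_1$ it does not occur free in the types of $\Gamma_1$ or of $\Gamma$ (both having underlying context $\lfloor\Gamma_1\rfloor$), so $(\Gamma_1 + \Gamma)\{c/z\} = \Gamma_1 + \Gamma$ and only $\Gamma_2$ and $A$ are genuinely touched by the substitution. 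Second, $\beta$-equality is closed under substitution, i.e. $A' =_\beta A$ implies $A'\{c/z\} =_\beta A\{c/z\}$; this is routine because the congruence rules of Figure~\ref{equality} only inspect grade annotations, which substitution leaves untouched, and because term substitutions commute, $(B\{a/x\})\{c/z\} = (B\{c/z\})\{a\{c/z\}/x\}$. Third, and most importantly, I would invoke the PTS analogues of the Multiplication and Splitting lemmas (Lemmas~\ref{BLSSMult}, \ref{BLSSplit}, \ref{DSSplit}) together with \rref{PTS-Weak}. The Splitting lemma is the load-bearing external input: it is precisely the point at which Atkey's obstruction is sidestepped, since it holds for $\mathbb{N}_{=}$, for $\mathbb{N}_{\geq}$, and for every lattice $\mathcal{L}$, which is why LDC admits substitution even though a system parametrized by an arbitrary semiring would not.

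The variable case has two subcases. If $a = y \neq z$, the minimal derivation of $y$ forces all grades except that of $y$ to be $0$, and $r_0$ (the grade of $z$) is either $0$ or, over $\mathbb{N}_{\geq}$, subsumed down from $0$; in either situation Multiplication gives $0\cdot\Gamma \vdash c :^0 C$ and one reassembles a derivation of $y$ in the substituted context, using \rref{PTS-Weak} to absorb $\Gamma_2\{c/z\}$ and $\Gamma$ at the appropriate grades and \rref{PTS-SubL,PTS-SubR} to discard whatever extra budget $\Gamma$ carries. If $a = z$, then $A = C$ and $A\{c/z\} = C$ (as $z \notin \mathrm{fv}(C)$), the grade machinery forces $q = r_0$ up to subsumption, and the goal $\Gamma_1 + \Gamma, \Gamma_2\{c/z\} \vdash c :^q C$ follows from the hypothesis $\Gamma \vdash c :^{r_0} C$ by weakening with the $0$-graded $\Gamma_1$ on the left and $\Gamma_2\{c/z\}$ on the right. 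For each structural rule (\rref{PTS-Pi,PTS-Lam,PTS-App,PTS-Sigma,PTS-Pair,PTS-LetPair,PTS-Sum,PTS-InjOne,PTS-Case}) the conclusion's context is a sum of the premises' contexts, sometimes with one summand pre-multiplied by a grade. One decomposes $r_0$ along the same sum, uses the Splitting lemma to cut $\Gamma$ into pieces that produce the corresponding numbers of copies of $c$ (and Multiplication for a scaled premise), applies the induction hypothesis to each premise --- the side condition $\lfloor\cdot\rfloor = \lfloor\cdot\rfloor$ being preserved because splitting does not change underlying contexts --- and recombines, finally pushing $\{c/z\}$ through the $B\{a/x\}$ that appears in the conclusion's type via the commutation identity above. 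The \rref{PTS-Conv} case applies the induction hypothesis to get $a\{c/z\} :^q A'\{c/z\}$, converts along $A'\{c/z\} =_\beta A\{c/z\}$, and re-applies \rref{PTS-Conv}, which also requires $A\{c/z\}$ to be well-formed at grade $0$ in the substituted context --- itself an instance of the induction hypothesis at $q = 0$ applied to the type-formation premise.

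The main obstacle I anticipate is the bookkeeping of the three substitutions that must happen at once --- into the subject $a$, into its type $A$ (through the dependent eliminations \rref{PTS-App} and \rref{PTS-LetPair}), and into the tail context $\Gamma_2$ --- while keeping the grade arithmetic coherent at every node. Because LDC follows \textsc{GraD} in typing types in the grade-$0$ fragment, the type-formation premises are not ``for free'': they must be re-derived in the substituted context, so the induction has to be stated for all grades simultaneously (the $q=0$ instances feeding the $q\neq 0$ ones), and one must be careful that these grade-$0$ reconstructions use $0\cdot\Gamma$ rather than a split piece of $\Gamma$, and that the required context well-formedness side conditions are themselves discharged by the induction hypothesis on the corresponding premises. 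Everything else --- $\beta$-closure under substitution, weakening, and the two subsumption cases (just re-apply the rule after invoking the induction hypothesis, using that $<:$ lifts through $+$ and $\cdot$) --- is routine.
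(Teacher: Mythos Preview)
Your proposal is correct and follows essentially the same route as the paper's proof (Lemma~\ref{BLDSubstP} in the appendix): induction on the typing derivation, with the Splitting lemma used to decompose $\Gamma$ whenever the conclusion's context is a sum of the premises' contexts, and closure of $\beta$-equality under substitution for \rref{PTS-Conv}. Your emphasis on stating the induction for all grades simultaneously so that the $\vdash_0$ type-formation premises are covered is a good observation that the paper leaves implicit; the only minor stylistic difference is that where you invoke Multiplication to obtain $0\cdot\Gamma \vdash c :^0 C$ in the variable case, the paper instead appeals directly to the fact that $\overline{\Gamma} = \overline{0}$ in $\mathbb{N}_{=}$ and to \rref{PTS-SubL} in $\mathbb{N}_{\geq}$.
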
 

\begin{theorem}[Preservation] \label{preservationDep}
If $ \Gamma  \vdash  \ottnt{a}  :^{ \ottnt{q} }  \ottnt{A} $ and $ \vdash  \ottnt{a}  \leadsto  \ottnt{a'} $, then $ \Gamma  \vdash  \ottnt{a'}  :^{ \ottnt{q} }  \ottnt{A} $.
\end{theorem}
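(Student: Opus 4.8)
The plan is to prove subject reduction by induction on the derivation of $ \Gamma  \vdash  \ottnt{a}  :^{ \ottnt{q} }  \ottnt{A} $, doing a case analysis on the last rule applied and invoking the Substitution Lemma (Lemma~\ref{DSubst}) in the $\beta$-cases. Before starting the induction I would assemble the usual supporting lemmas for a graded dependent calculus: a \emph{generation} (inversion) lemma that recovers, from a derivation of an introduction form such as $ \lambda^{ \ottnt{r} }  \ottmv{x}  :  \ottnt{A}  .  \ottnt{b} $ or $ (  \ottnt{a_{{\mathrm{1}}}} ^{ \ottnt{r} } ,  \ottnt{a_{{\mathrm{2}}}}  ) $, the typing of its immediate subterms up to $=_{\beta}$ and subsumption (needed because \rref{PTS-Conv,PTS-SubL,PTS-SubR} are not syntax-directed, so the derivation of a redex is not uniquely shaped); \emph{validity}, namely that $ \Gamma  \vdash  \ottnt{a}  :^{ \ottnt{q} }  \ottnt{A} $ implies $ \lfloor  \Gamma  \rfloor  \vdash_{0} \ottnt{A} : \ottmv{s}$ for some sort $\ottmv{s}$, so that freshly introduced uses of \rref{PTS-Conv} have a well-sorted target; confluence of $ \leadsto $, proved syntactically via parallel reduction since this PTS need not normalize; and injectivity of $\Pi$- and $\Sigma$-types for $=_{\beta}$ (if two $\Pi$-types are $=_{\beta}$ their grade annotations are equal and their domains and codomains are $=_{\beta}$), which follows from confluence and the shape of \rref{Eq-PiCong}. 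Weakening and the pointwise lifting of the grade operations $+,\cdot, <: $, already available from the simply-typed development, are reused freely.

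The induction then splits into three families of cases. In the \textbf{structural} cases $ \ottnt{a}  \leadsto  \ottnt{a'} $ happens inside a subterm: I apply the induction hypothesis to the affected premise and re-assemble the derivation with the same rule. The only subtlety is when the reduced subterm also occurs in the conclusion's type, as for the argument of an application (\rref{PTS-App} concludes at $ \ottnt{B}  \{  \ottnt{a}  /  \ottmv{x}  \} $) or the scrutinee of \rref{PTS-LetPair}: the rebuilt judgment then has type $ \ottnt{B}  \{  \ottnt{a'}  /  \ottmv{x}  \} $, and since $ \ottnt{a}  =_{\beta}  \ottnt{a'} $ and $=_{\beta}$ is a congruence (Figure~\ref{equality}) we get $ \ottnt{B}  \{  \ottnt{a}  /  \ottmv{x}  \}  =_{\beta}  \ottnt{B}  \{  \ottnt{a'}  /  \ottmv{x}  \} $, so one extra \rref{PTS-Conv} (using validity for its side condition) restores the stated type. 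The \textbf{administrative} cases \rref{PTS-Conv,PTS-SubL,PTS-SubR} merely apply the induction hypothesis to the subject premise and re-apply the same rule, the equality/subsumption side conditions being untouched by the reduction.

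The \textbf{$\beta$-cases} are the heart of the proof. For \rref{Step-AppBeta}, $  (   \lambda^{ \ottnt{r} }  \ottmv{x}  :  \ottnt{A_{{\mathrm{0}}}}  .  \ottnt{b}   )   \:  \ottnt{a_{{\mathrm{0}}}} ^{ \ottnt{r} }   \leadsto   \ottnt{b}  \{  \ottnt{a_{{\mathrm{0}}}}  /  \ottmv{x}  \} $, I first run generation on the typing of the redex: after stripping conversions and subsumptions it comes from \rref{PTS-App}, giving a typing of $ \lambda^{ \ottnt{r} }  \ottmv{x}  :  \ottnt{A_{{\mathrm{0}}}}  .  \ottnt{b} $ at some $\Pi$-type, a typing of $\ottnt{a_{{\mathrm{0}}}}$ at its domain with the grade that rule prescribes, and $\ottnt{A} =_{\beta}$ the instantiated codomain. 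Generation on the $\lambda$ then yields a typing of $\ottnt{b}$ in a context extended by $\ottmv{x}$, whose declared type and grade must, by $\Pi$-injectivity, agree up to $=_{\beta}$ with those read off at the application site. A few applications of \rref{PTS-Conv} and \rref{PTS-SubR} realign the grades and the domain type, after which Lemma~\ref{DSubst} substitutes $\ottnt{a_{{\mathrm{0}}}}$ for $\ottmv{x}$; the resulting context is exactly $\Gamma$ by the semiring/lattice laws (distributivity of $\cdot$ over $+$, or the corresponding lattice identities) together with subsumption, and a final \rref{PTS-Conv} along $ \ottnt{B}  \{  \ottnt{a_{{\mathrm{0}}}}  /  \ottmv{x}  \}  =_{\beta} \ottnt{A}$ closes the case. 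The remaining $\beta$-steps (\rref{Step-LetPairBeta}, the unit-let step, and the case-$\beta$ steps) are analogous but simpler, using generation on the pair/unit/injection form and one or two invocations of Lemma~\ref{DSubst}; the side conditions $ \ottnt{q_{{\mathrm{0}}}}  <:   1  $ (respectively $ \ell_{{\mathrm{0}}}  \sqsubseteq   \bot  $) carried by these eliminators are precisely what makes the grade arithmetic in the substituted context line up.

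The step I expect to be the main obstacle is the application $\beta$-case, for two intertwined reasons. First, because \rref{PTS-Conv} is applicable anywhere, the generation lemma must carefully thread $=_{\beta}$ and $\Pi$-injectivity — hence confluence, which has to be established syntactically in this non-normalizing setting — through the decomposition of the redex's derivation. Second, since LDC grades both the context and the turnstile and separates type-level from term-level usage, identifying the context produced by \rref{PTS-App} with the one produced by Lemma~\ref{DSubst} requires the algebraic laws of the parametrizing structure plus \rref{PTS-SubL,PTS-SubR}, and this is exactly where the restriction to the well-behaved preordered semirings and lattices — the same restriction under which Lemma~\ref{DSubst} itself holds — is genuinely used. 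Everything else reduces to routine rule-shuffling once generation, validity, and injectivity are in hand.
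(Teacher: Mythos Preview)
Your overall structure matches the paper's: induction on the typing derivation, inversion on the step, and the substitution lemma for the $\beta$-cases, with \rref{PTS-Conv,PTS-SubL,PTS-SubR,PTS-Weak} dispatched by the induction hypothesis. You are also right that the application $\beta$-case is the crux and that generation, $\Pi$-injectivity, and validity are needed there.

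There is, however, a real gap in that case. Inversion on $ \Gamma_{{\mathrm{1}}}  \vdash   \lambda^{ \ottnt{r} }  \ottmv{x}  :  \ottnt{A'}  .  \ottnt{b'}   :^{ \ottnt{q} }   \Pi  \ottmv{x}  :^{ \ottnt{r} } \!  \ottnt{A}  .  \ottnt{B}  $ only gives you $  \Gamma_{{\mathrm{1}}}  ,   \ottmv{x}  :^{  \ottnt{q_{{\mathrm{0}}}}  \cdot  \ottnt{r}  }  \ottnt{A''}    \vdash  \ottnt{b'}  :^{ \ottnt{q_{{\mathrm{0}}}} }  \ottnt{B} $ for some $ \ottnt{q_{{\mathrm{0}}}}  <:  \ottnt{q} $; $\Pi$-injectivity pins down the annotation $\ottnt{r}$ and the equality $ \ottnt{A''}  =_{\beta}  \ottnt{A} $, but it does \emph{not} force $\ottnt{q_{{\mathrm{0}}}} = \ottnt{q}$. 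The argument, however, is typed at grade $ \ottnt{q}  \cdot  \ottnt{r} $, not $ \ottnt{q_{{\mathrm{0}}}}  \cdot  \ottnt{r} $, so the grades do not match for Lemma~\ref{DSubst}. \Rref{PTS-SubR} only raises the grade on the turnstile (from $\ottnt{q_{{\mathrm{0}}}}$ to $\ottnt{q}$), leaving $\ottmv{x}$ at $ \ottnt{q_{{\mathrm{0}}}}  \cdot  \ottnt{r} $; and \rref{PTS-SubL} can only \emph{lower} the grade on $\ottmv{x}$, not raise it to $ \ottnt{q}  \cdot  \ottnt{r} $. So ``a few applications of \rref{PTS-Conv,PTS-SubR}'' does not close the gap. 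The paper's proof handles this by first applying the Factorization Lemma (when $\ottnt{q_{{\mathrm{0}}}} \neq 0$) to obtain a derivation of $\ottnt{b'}$ at grade $1$ in some $\Gamma'_{{\mathrm{1}}}$ with $\ottmv{x}$ at grade $\ottnt{r}$, and then the Multiplication Lemma to rescale to grade $\ottnt{q}$ with $\ottmv{x}$ at $ \ottnt{q}  \cdot  \ottnt{r} $; only then does \rref{PTS-SubL} bring the context back to $\Gamma_{{\mathrm{1}}}$ and Lemma~\ref{DSubst} applies. (In the lattice instance the same issue is resolved by a Restricted Upgrading lemma, which is again Multiplication followed by \rref{PTS-SubL}.) This Factorization\,+\,Multiplication manoeuvre, which you do not invoke, is precisely where the restriction to the specific semirings $ \mathcal{Q}_{\mathbb{N} } $ is used---Factorization fails over arbitrary semirings---so your intuition that ``this is where the restriction is genuinely used'' was pointing at the right spot, just not at the right lemma.
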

\begin{theorem}[Progress] \label{progressDep}
If $  \emptyset   \vdash  \ottnt{a}  :^{ \ottnt{q} }  \ottnt{A} $, then either $\ottnt{a}$ is a value or there exists $\ottnt{a'}$ such that $ \vdash  \ottnt{a}  \leadsto  \ottnt{a'} $.
\end{theorem}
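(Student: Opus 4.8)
The proof is by induction on the derivation of $\emptyset \vdash a :^{q} A$, following the standard recipe for progress in a Pure Type System equipped with a $\beta$-reduction relation; the one new wrinkle is that grade annotations must be carried through the canonical-forms reasoning. Recall that the values of LDC are the introduction forms ($\lambda$-abstractions, dependent pairs, injections) together with the type values (sorts, $\Pi$-types, $\Sigma$-types, sum types), and that reduction is call-by-name, so an introduction form counts as a value regardless of whether its immediate subterms do. Two auxiliary facts should be established first: (i) \emph{inversion} lemmas for $\vdash$, obtained by repeatedly peeling off \rref{PTS-Conv}, \rref{PTS-SubL}, \rref{PTS-SubR} and \rref{PTS-Weak}; and (ii) a \emph{canonical forms} lemma, stating that if $\emptyset \vdash v :^{q} A$ with $v$ a value and $A =_{\beta} \Pi x{:}^{r}B.\,C$, then $v$ is a $\lambda$-abstraction $\lambda^{r}x{:}B'.\,b$ carrying \emph{the same} grade $r$, and symmetrically that a value of $\Sigma$-type is a pair $(a_1^{\,r},a_2)$ and a value of sum type is an injection.

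With these in hand the induction is almost mechanical. \rref{PTS-Var} and \rref{PTS-Weak} cannot be the last rule of a derivation whose conclusion context is $\emptyset$, since both produce a non-empty context. \rref{PTS-Axiom}, \rref{PTS-Pi}, \rref{PTS-Sigma} and \rref{PTS-Sum} yield a value (a sort or a type former), and \rref{PTS-Lam}, \rref{PTS-Pair}, \rref{PTS-InjOne} and its symmetric companion yield an introduction form, hence a value. For \rref{PTS-Conv}, \rref{PTS-SubL} and \rref{PTS-SubR} the subject term is unchanged, so the induction hypothesis applies verbatim. The work is concentrated in the elimination rules \rref{PTS-App}, \rref{PTS-LetPair} and \rref{PTS-Case}: because every context operation ($+$, $\cdot$, $\sqcap$, $\sqcup$) maps the empty context to the empty context, the scrutinee of the elimination is itself typed in $\emptyset$, so the induction hypothesis says it either steps or is a value. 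If it steps, the whole term steps by the corresponding (standard) congruence rule; if it is a value, the canonical forms lemma pins down its shape and --- crucially --- tells us its grade annotation coincides with the one recorded in the elimination form, so the matching $\beta$-rule (\rref{Step-AppBeta}, \rref{Step-LetPairBeta}, and the analogous $\beta$-rules for pairs and sums) fires. The side conditions $q_0 <: 1$ on these eliminators constrain only typing and never impede reduction.

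The real obstacle is the canonical forms lemma, and inside it the interplay of \rref{PTS-Conv} with the grades. Inverting typing through \rref{PTS-Conv} requires that $=_{\beta}$ does not identify distinct type constructors --- sorts, $\Pi$-, $\Sigma$- and sum types must be pairwise inconvertible --- and that $\Pi$- and $\Sigma$-types are injective in all of their components, \emph{including the grade annotation}. Inconvertibility and injectivity of the bare shapes follow from confluence (Church--Rosser) of $\beta$-reduction on terms, which I would obtain by the standard parallel-reduction (Tait--Martin-L\"of) argument; confluence is untouched by the grades, since reduction never alters an annotation. Injectivity in the grade component is then immediate, because the congruence rules \rref{Eq-PiCong}, \rref{Eq-LamCong} and \rref{Eq-AppCong} only apply when the annotations already agree, so two $\beta$-equal $\Pi$-types must carry the same grade. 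This is exactly what guarantees that the grade on a well-typed $\lambda$ agrees with the grade on the application eliminating it, and hence that \rref{Step-AppBeta} is applicable --- the place a careless account of grades would break. Everything else is bookkeeping that mirrors the simply-typed development of Sections~\ref{MetaSimpleBounded} and~\ref{MetaSimpleDependency}; note in particular that, unlike Preservation (Theorem~\ref{preservationDep}), this argument never invokes the Substitution lemma (Lemma~\ref{DSubst}).
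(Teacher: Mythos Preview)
Your proposal is correct and follows essentially the same route as the paper's proof: induction on the typing derivation, with the elimination cases discharged by applying the induction hypothesis to the scrutinee (typed in the empty context) and then either a congruence step or a $\beta$-step via canonical forms. The paper's proof is considerably more terse---it simply writes ``by inversion, $b = \lambda^{r} x{:}A'.\,b'$'' where you spell out the canonical-forms lemma, confluence, and grade-injectivity of $\Pi$-types---but that inversion step does rest on exactly the machinery you describe, so your added detail is warranted rather than superfluous. Two small notes: you omit \rref{PTS-LetUnit}, which is handled just like the other eliminators; and the paper lumps \rref{PTS-Weak} in with the ``follows by IH'' cases rather than observing (as you do, correctly) that it cannot be the last rule over an empty context.
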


Next, we consider heap semantics for LDC.

\subsection{Heap Semantics for LDC}
 
The PTS version of LDC enjoys the same heap reduction relation as its simply-typed counterpart. However, the PTS version presents a challenge with regard to well-typedness of terms during reduction. In the simply-typed version, we could delay substitutions in a term by loading them into the heap without being concerned about how it might affect the type of that term. In the PTS version, delayed substitutions may cause the term to `lag behind' the type. We consider an example from \citet{grad} that illustrates this point. 

The polymorphic identity function, $ \lambda^{  0  }  \ottmv{x}  :   \ottmv{s}   .   \lambda^{  1  }  \ottmv{y}  :   \ottmv{x}   .   \ottmv{y}   $, has type $ \Pi  \ottmv{x}  :^{  0  } \!   \ottmv{s}   .   \Pi  \ottmv{y}  :^{  1  } \!   \ottmv{x}   .   \ottmv{x}   $. Instantiating the function at $ \mathbf{Unit} $, we get $  (   \lambda^{  0  }  \ottmv{x}  :   \ottmv{s}   .   \lambda^{  1  }  \ottmv{y}  :   \ottmv{x}   .   \ottmv{y}     )   \:   \mathbf{Unit}  ^{  0  } $ of type $ \Pi  \ottmv{y}  :^{  1  } \!   \mathbf{Unit}   .   \mathbf{Unit}  $. Now, $ [   \emptyset   ]    (   \lambda^{  0  }  \ottmv{x}  :   \ottmv{s}   .   \lambda^{  1  }  \ottmv{y}  :   \ottmv{x}   .   \ottmv{y}     )   \:   \mathbf{Unit}  ^{  0  }   \Longrightarrow^{  1  }_{ \ottnt{S} } [   \ottmv{x}  \overset{  0  }{\mapsto}   \mathbf{Unit}    ]   \lambda^{  1  }  \ottmv{y}  :   \ottmv{x}   .   \ottmv{y}   $. Unless we look at the definition in the heap, we have no reason to believe that $ \lambda^{  1  }  \ottmv{y}  :   \ottmv{x}   .   \ottmv{y}  $ has type $ \Pi  \ottmv{y}  :^{  1  } \!   \mathbf{Unit}   .   \mathbf{Unit}  $. The delayed substitution $ \ottmv{x}  \overset{  0  }{\mapsto}   \mathbf{Unit}  $ causes the term $ \lambda^{  1  }  \ottmv{y}  :   \ottmv{x}   .   \ottmv{y}  $ to lag behind the type $ \Pi  \ottmv{y}  :^{  1  } \!   \mathbf{Unit}   .   \mathbf{Unit}  $. Note that this challenge arises due to delayed substitution only and is independent of usage and flow analyses. 

To overcome this challenge, \citet{grad} use the following strategy. First, they extend their type system with contexts that allow definitions. Then, they show that the extended calculus is sound with respect to heap semantics. Thereafter, they prove the original calculus equivalent to the extended calculus. Using this equivalence, they conclude that the original calculus is also sound with respect to heap semantics. We use the same strategy for LDC. Owing to space constraints, we omit the details in the main body of the paper. For details, the interested reader may please refer to Appendix \ref{heapPTS}. 

LDC is sound with respect to heap semantics. Note the statement below is the same as its simply-typed counterpart.
\begin{theorem}[Soundness] \label{heapDep}
If $ \ottnt{H}  \models  \Gamma $ and $ \Gamma  \vdash  \ottnt{a}  :^{ \ottnt{q} }  \ottnt{A} $ and $q \neq 0$, then either $\ottnt{a}$ is a value or there exists $\ottnt{H'}, \Gamma', \ottnt{a'}$ such that $ [  \ottnt{H}  ]  \ottnt{a}  \Longrightarrow^{ \ottnt{q} }_{ \ottnt{S} } [  \ottnt{H'}  ]  \ottnt{a'} $ and $ \ottnt{H'}  \models  \Gamma' $ and $ \Gamma'  \vdash  \ottnt{a'}  :^{ \ottnt{q} }  \ottnt{A} $. 
\end{theorem}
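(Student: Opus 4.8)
The plan is to follow the strategy of \citet{grad} outlined just above the statement: prove a heap-soundness theorem for an auxiliary calculus in which typing contexts may carry \emph{definitions}, and then transfer the result back to LDC. Concretely, I would introduce an extended system $\text{LDC}^{\mathsf{d}}$ whose contexts are generated by $\Gamma ::= \emptyset \mid \Gamma, x :^{q} A \mid \Gamma, x \overset{q}{=} a : A$, where an entry $x \overset{q}{=} a : A$ records that $x$ is \emph{defined} to be $a$, is available with grade/label $q$, and has type $A$. The typing rules are those of Figure~\ref{pts}, augmented with a rule that looks up a defined variable (assigning it its recorded type) and a rule that introduces a definition in exactly the way the $\beta$-rules of Figure~\ref{HeapReduction} extend the heap. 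The purpose of the definitions is precisely to absorb the delayed substitutions sitting in the heap, so that the ``term lagging behind its type'' phenomenon of the $\Pi y\!:^{1}\!\mathbf{Unit}.\,\mathbf{Unit}$ example disappears: the pending heap entry $x \overset{0}{\mapsto} \mathbf{Unit}$ is matched by a context containing $x \overset{0}{=} \mathbf{Unit} : s$, at which point $\lambda^{1} y\!:\!x.\,y$ visibly has the right type.

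Next I would redevelop the basic metatheory of $\text{LDC}^{\mathsf{d}}$: regularity (well-typed terms have well-typed types), weakening at grade $0$, and — the central lemma — a substitution lemma that simultaneously instantiates a variable coming from a definition, substitutes it in the residual context, and substitutes it in the type, i.e.\ the analogue of Lemma~\ref{DSubst} for definition-contexts. The multiplication, factorization and splitting lemmas (Lemmas~\ref{BLSMult}, \ref{BLSFact}, \ref{BLSSplit}) over $\mathcal{Q}_{\mathbb{N}}$, and the splitting lemma (Lemma~\ref{DSSplit}) over $\mathcal{L}$, carry over unchanged; they are what let us repartition the grades/labels of a context when a look-up or a $\beta$-step consumes resources. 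I would also extend the compatibility relation $\models$ so that each heap entry $x \overset{q}{\mapsto} a$ is matched by a context definition $x \overset{q}{=} a : A$ whose body $a$ type-checks (at grade $q$, type $A$) in the preceding part of the context.

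The heart of the argument is the soundness theorem for $\text{LDC}^{\mathsf{d}}$: if $H \models \Gamma$ (extended relation) and $\Gamma \vdash a :^{q} A$ with $q \neq 0$, then $a$ is a value or it heap-steps to some $a'$ with an updated compatible heap $H'$ and context $\Gamma'$ typing $a'$ at the \emph{same} grade $q$. I would prove this by induction on the typing derivation, paralleling Theorem~\ref{progressDep} and the simply-typed Theorem~\ref{heapsound}. The variable case uses \rref{HeapStep-Var}: the recorded type in $\Gamma$ gives the type of the looked-up body, and factorization/splitting (resp.\ idempotence of $\sqcap$) decrements the weight on both the heap entry and the matching context entry while preserving compatibility. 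The left-rules for eliminators (\rref{HeapStep-AppL} and its omitted siblings) follow from the induction hypothesis on the evaluated subterm, threading the enlarged support set. The $\beta$-cases (\rref{HeapStep-AppBeta,HeapStep-LetPairBeta,HeapStep-Case1Beta}) are where definitions pay off: rather than substituting, we push a new entry $x \overset{q\cdot r}{=} a : A$ (resp.\ at $q \sqcup r$) onto both $H$ and $\Gamma$, which is exactly what the definition-introduction rule of $\text{LDC}^{\mathsf{d}}$ licenses, so the contractum type-checks with no appeal to substitution; compatibility is re-established by construction and freshness/acyclicity is handled by $S$. The \rref{PTS-Conv,PTS-SubL,PTS-SubR,PTS-Weak} cases are immediate from the induction hypothesis. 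Finally, inclusion $\text{LDC} \hookrightarrow \text{LDC}^{\mathsf{d}}$ is trivial, and conversely any $\text{LDC}^{\mathsf{d}}$ judgment can be \emph{flattened} by substituting out its definitions in reverse order using the $\text{LDC}^{\mathsf{d}}$ substitution lemma (and Lemma~\ref{DSubst}), landing back in LDC and turning the extended $\models$ into the original one; composing the inclusion, the $\text{LDC}^{\mathsf{d}}$ soundness theorem, and flattening yields the statement.

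I expect the main obstacle to be exactly this ``lagging behind'' issue in the dependent setting: arranging the definition-introduction rule, the extended compatibility relation, and the flattening translation so that a partially-evaluated term — whose type's instantiations are still pending in the heap — is provably well-typed at the \emph{unchanged} type $A$. The delicate points are (i) that flattening commutes with heap reduction and preserves the grade/label $q$ on both sides of the turnstile, so that the transferred judgment has the right annotation; and (ii) that in the $\beta$-cases the newly added definition is consistent with acyclicity and with the side conditions $q_0 <: 1$ (resp.\ $\ell_0 \sqsubseteq \bot$) on the eliminators — this is where Lemmas~\ref{BLSMult}, \ref{BLSFact}, \ref{BLSSplit} (resp.\ Lemma~\ref{DSSplit}) do the real work of re-dividing resources between the fresh heap slot and the continuation.
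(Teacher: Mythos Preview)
Your proposal is correct and follows essentially the same route as the paper: extend contexts with definitions, extend the compatibility relation to match heap entries against those definitions, prove heap soundness for the extended system by induction on typing, and transfer the result back via a multi-substitution (``flattening'') lemma together with the evident inclusion. One point you leave implicit but which the paper makes explicit (Lemma~\ref{heaphelperDep}, and already Lemma~\ref{heaphelper} in the simple case you cite) is that the induction must be carried out on a \emph{frame-strengthened} statement of the form ``$H \models \Gamma_{1} + \Gamma_{2}$ and $\Gamma_{2} \vdash a :^{q} A$'': without the slack context $\Gamma_{1}$ the inductive hypothesis cannot be applied to a subterm typed in only part of the ambient context (e.g.\ the function position in \rref{PTS-App}), so be sure to state your $\text{LDC}^{\mathsf{d}}$ soundness in that form rather than the unframed one you wrote.
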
  

The corollaries of the soundness theorem presented in Section \ref{heapsimplesound} also hold for the PTS version. Below, we present some corollaries that are related to polymorphic types.

\begin{corollary} \label{Null}
In LDC($ \mathcal{Q}_{\mathbb{N} } $): If $  \emptyset   \vdash  \ottnt{f}  :^{  1  }   \Pi  \ottmv{x}  :^{  0  } \!   \ottmv{s}   .   \ottmv{x}   $ and $  \emptyset   \vdash_{0}  \ottnt{A}  :   \ottmv{s}  $, then $ \ottnt{f}  \:  \ottnt{A} ^{  0  } $ must diverge.
\end{corollary}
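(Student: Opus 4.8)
My plan is to derive this from the Soundness theorem for the heap semantics (Theorem~\ref{heapDep}), in the style of the no-usage Corollary~\ref{nonint}, using the \textsc{Unchanged} and \textsc{Irrelevant} lemmas (Lemmas~\ref{HeapUnchanged} and~\ref{irrel}) to argue that the reduction of $ \ottnt{f}  \:  \ottnt{A} ^{  0  } $ cannot depend on the type argument $\ottnt{A}$, and then contradicting the fact that, through the return type, any value it could produce \emph{does} depend on $\ottnt{A}$. First I would record, via \rref{PTS-App} (instantiating the codomain $\ottmv{x}$ at $\ottnt{A}$, with $ \emptyset   +    0   \cdot  \emptyset   =  \emptyset $), that $  \emptyset   \vdash   \ottnt{f}  \:  \ottnt{A} ^{  0  }   :^{  1  }  \ottnt{A} $. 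Assume for contradiction that $ \ottnt{f}  \:  \ottnt{A} ^{  0  } $ does not diverge. Since $  \emptyset   \models  \emptyset $, iterating Theorem~\ref{heapDep} produces a finite reduction $ [   \emptyset   ]   \ottnt{f}  \:  \ottnt{A} ^{  0  }   \Longrightarrow^{  1  }_{ \ottnt{S} }  \cdots  \Longrightarrow^{  1  }_{ \ottnt{S} } [  \ottnt{H'}  ]  \ottnt{v} $ with $\ottnt{v}$ a value, $ \ottnt{H'}  \models  \Gamma' $, and $ \Gamma'  \vdash  \ottnt{v}  :^{  1  }  \ottnt{A} $.

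Next I would dissect this reduction. By \rref{HeapStep-AppL} it begins by evaluating $\ottnt{f}$; if $\ottnt{f}$ itself diverges then so does $ \ottnt{f}  \:  \ottnt{A} ^{  0  } $ (call-by-name offers no other rule for an application whose head is not yet an abstraction), contradicting the assumption, so $\ottnt{f}$ reduces to a value, which at type $ \Pi  \ottmv{x}  :^{  0  } \!   \ottmv{s}   .   \ottmv{x}  $ must be $ \lambda^{  0  }  \ottmv{x}  .  \ottnt{b} $. Then \rref{HeapStep-AppBeta} fires, adding the binding $ \ottmv{x}  \overset{  0  }{\mapsto}  \ottnt{A} $ of weight $  1   \cdot   0   =   0  $, after which reduction acts on $\ottnt{b}$ at observer level $ 1 $. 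Since there is no $\ottnt{q_{{\mathrm{0}}}}$ with $  1   +  \ottnt{q_{{\mathrm{0}}}}  =   0  $ in either $ \mathbb{N}_{=} $ or $ \mathbb{N}_{\geq} $, Lemma~\ref{HeapUnchanged} keeps that binding at weight $ 0 $ throughout, and Lemma~\ref{irrel}, applied step by step, shows both that $\ottmv{x}$ is never looked up and that replacing $\ottnt{A}$ by any closed $\ottnt{B}$ with $  \emptyset   \vdash_{0}  \ottnt{B}  :   \ottmv{s}  $ yields the identical reduction to the same value $\ottnt{v}$ (the support sets are unaffected because $\ottnt{A}$ and $\ottnt{B}$ are closed). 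Re-running Theorem~\ref{heapDep} along this reduction for $  \emptyset   \vdash   \ottnt{f}  \:  \ottnt{B} ^{  0  }   :^{  1  }  \ottnt{B} $ then yields $ \Gamma''  \vdash  \ottnt{v}  :^{  1  }  \ottnt{B} $ for some context $\Gamma''$; hence the head constructor of $\ottnt{v}$ is forced, up to $\beta\delta$-conversion, to agree with $\ottnt{A}$ \emph{and} with $\ottnt{B}$.

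Finally I would close with canonical forms. The value $\ottnt{v}$ cannot be a bare variable: such a variable would be bound at weight $ 0 $, hence unavailable, and deriving it at observer level $ 1 $ through \rref{PTS-Var} and \rref{PTS-SubL} would require $  1   <:   0  $, which fails in both $ \mathbb{N}_{=} $ and $ \mathbb{N}_{\geq} $. Thus $\ottnt{v}$ is a genuine canonical form whose head ($\lambda$, $ \mathbf{unit} $, a pair, or $ \mathbf{inj}_1 / \mathbf{inj}_2 $) is pinned by its type up to conversion. Instantiating $\ottnt{B}$ at the two closed types $ \mathbf{Unit} $ and $  \mathbf{Unit}   +   \mathbf{Unit}  $, both of sort $\ottmv{s}$, would force $\ottnt{v}$ to be simultaneously $ \mathbf{unit} $ and an injection, which is absurd since these closed types are not inter-convertible. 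So no terminating reduction exists, i.e., $ \ottnt{f}  \:  \ottnt{A} ^{  0  } $ diverges.

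The hard part will be making the ``vary the type argument'' step fully rigorous in the dependently typed heap setting: one must cope with the definitional equality introduced by the delayed substitution $ \ottmv{x}  \overset{  0  }{\mapsto}  \ottnt{A} $ --- the ``lag behind'' phenomenon of Appendix~\ref{heapPTS} --- so that in the running context $\ottmv{x}$ is genuinely $\delta$-convertible to the type argument, and one must appeal to determinacy of call-by-name heap reduction (up to choice of fresh names) to identify the reduction handed back by Theorem~\ref{heapDep} for $ \ottnt{f}  \:  \ottnt{B} ^{  0  } $ with the one obtained from Lemma~\ref{irrel}. A minor loose end is checking that $ \mathbf{Unit} $ and sum types inhabit the sort $\ottmv{s}$ in the PTS being instantiated, which holds for the systems of interest; otherwise any two closed types of sort $\ottmv{s}$ with distinct head constructors serve equally well.
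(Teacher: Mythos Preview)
Your proposal is correct and follows essentially the same approach as the paper's proof: assume termination for contradiction, use the \textsc{Irrelevant} lemma to show that the heap reduction after the $\beta$-step is independent of the $0$-weighted type argument, instantiate at two concrete incompatible closed types (the paper uses $\mathbf{Unit}$ and $\mathbf{Bool}$, which is your $\mathbf{Unit} + \mathbf{Unit}$), and conclude that the resulting value would have to inhabit both types simultaneously. You are more explicit than the paper about the determinacy issue and the ``lag behind'' phenomenon with definitions, but the skeleton is the same.
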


\begin{corollary} \label{Identity}
In LDC($ \mathcal{Q}_{\mathbb{N} } $): In a strongly normalizing PTS, if $  \emptyset   \vdash  \ottnt{f}  :^{  1  }   \Pi  \ottmv{x}  :^{  0  } \!   \ottmv{s}   .   \Pi  \ottmv{y}  :^{  1  } \!   \ottmv{x}   .   \ottmv{x}    $ and $  \emptyset   \vdash_{0}  \ottnt{A}  :   \ottmv{s}  $ and $  \emptyset   \vdash  \ottnt{a}  :^{  1  }  \ottnt{A} $, then $    \ottnt{f}  \:  \ottnt{A} ^{  0  }    \:  \ottnt{a} ^{  1  }   =_{\beta}  \ottnt{a} $.
\end{corollary}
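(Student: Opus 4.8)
The plan is to evaluate $( f \; A^{0} ) \; a^{1}$ inside a heap that records the type argument $A$ at weight $0$ and the term argument $a$ at weight $1$, invoke the heap soundness theorem together with strong normalization to drive the evaluation to a value, and then use the weight-$0$ discipline on the type slot to force that value to be $a$. Concretely: since $f$ is closed, two applications of \rref{PTS-App} to the weakened $f$ give $\Gamma \vdash ( f \; x^{0} ) \; y^{1} :^{1} x$ for $\Gamma := x :^{0} s , y :^{1} x$. Let $H := ( x \overset{0}{\mapsto} A , y \overset{1}{\mapsto} a )$; then $H \models \Gamma$ by the compatibility rules (\rref{HeapCompat-Cons}), the only subtlety being that $y \overset{1}{\mapsto} a$ is checked against $x$ after converting along the definition $x := A$ --- precisely the ``lag behind'' phenomenon of the PTS heap semantics, absorbed by the definition-extended calculus. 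Since $( ( f \; x^{0} ) \; y^{1} ) \{ H \} = ( f \; A^{0} ) \; a^{1}$, the expected lemma relating heap reduction to the call-by-name reduction $\leadsto$ reduces the goal to following the heap reduction of $[H]\,( f \; x^{0} ) \; y^{1}$.

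Because the underlying PTS is strongly normalizing, so is LDC($\mathcal{Q}_{\mathbb{N}}$) on well-typed terms (routinely, by erasing grades to the PTS), so by Soundness (Theorem \ref{heapDep}) this heap reduction never gets stuck, preserves compatibility, and halts at some $[H']\,v'$ with $v'$ a value, $H' \models \Gamma'$ and $\Gamma' \vdash v' :^{1} x$. The binding $x \overset{0}{\mapsto} A$ has weight $0$, so \rref{HeapStep-Var} can never consume it; it stays untouched, its weight remaining $0$ by Lemma \ref{HeapUnchanged}, and hence $x$ behaves as an abstract type for the whole reduction. Consequently $v'$ can be neither a $\lambda$, nor a pair, nor an injection, nor $\mathbf{unit}$ --- each would force its type, up to $=_{\beta}$, to be a specific type former incompatible with the variable $x$ --- so by canonical forms (from Progress, Theorem \ref{progressDep}) $v'$ is a variable. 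Following its chain of definitions in $H'$: every binding created during the reduction has the type of its definiens, which is a subterm supplied by $f$ or by $a$ and typed in a context where $x$ is abstract (subterms of the closed $a$ all carry concrete types), and by an argument in the spirit of the proof of Corollary \ref{Null} such a subterm can have type $x$ only if it is, definitionally, the assumption $y$ --- otherwise one could read off from it a non-diverging closed inhabitant of $\Pi x :^{0}\! s . x$, contradicting \ref{Null} in a strongly normalizing PTS. Thus $v'$ resolves to $y$; the unique permitted lookup of $y \overset{1}{\mapsto} a$ then yields $a$, which in turn reduces heap-independently to its value. Pulling this back through $\{ H \}$ gives $( f \; A^{0} ) \; a^{1} =_{\beta} a$.

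The main obstacle is the parametricity step in the second paragraph: proving rigorously that, once the weight-$0$ discipline keeps $x$ abstract, no reduct of $( f \; x^{0} ) \; y^{1}$ can produce --- even after resolving heap definitions and $\beta$-reduction --- a value of type $x$ other than $y$, i.e. the free theorem for the identity type. Doing this honestly means threading an invariant through the heap bindings introduced by reduction so as to rule out the fabrication of a fresh variable of type $x$ ``from nothing'', which is exactly the content that already underlies Corollary \ref{Null}. Granting that, the remaining pieces --- strong normalization for LDC, canonical forms, the heap-reduction/$\leadsto$ correspondence, and the definition-extended compatibility bookkeeping --- are routine.
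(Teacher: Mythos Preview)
Your argument breaks at the claim that $v'$ is a variable. Two things go wrong. First, variables are not values in the heap semantics: whenever the current term is a variable bound in the heap with sufficient weight, \textsc{HeapStep-Var} fires, so the value at which Soundness halts is a genuine introduction form (a $\lambda$, a pair, $\mathbf{unit}$, an injection, a sort, or a type former), never a bare variable. Second, and more fundamentally, the canonical-forms step you invoke to exclude those shapes does not go through. The context $\Gamma'$ that Soundness hands back lives in the definition-extended calculus (as you yourself note when justifying $H \models \Gamma$), so it carries the \emph{definition} $x = A$, and by \textsc{PTS-DefConv} the type $x$ of $v'$ is convertible to $A$. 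Taking $A = \mathbf{Unit}$, the value $v' = \mathbf{unit}$ is perfectly well-typed at $x$. The operational fact that the weight-$0$ binding $x \overset{0}{\mapsto} A$ is never consumed makes $x$ abstract \emph{to the reduction}, but not to the typing judgement that Soundness supplies.

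The paper avoids this trap by never inspecting the final value. It instead instantiates $(A,a)$ twice, with $(\mathbf{Unit},\mathbf{unit})$ and $(\mathbf{Bool},\mathbf{true})$, and runs the two heap reductions in lockstep: since the weight-$0$ slot for $x$ cannot be read and the slot for $y$ is untouched until its first lookup, the two runs agree step for step up to that moment. The term $c'$ at which $y$ is first looked up is, by the shape of the left rules, either $y$ itself or a proper path of nested eliminations headed by $y$; after the lookup the path head becomes $\mathbf{unit}$ in one world and $\mathbf{true}$ in the other, and neither base type supports any elimination, so soundness forces $c' = y$. One more step yields $a_i$, and since the prefix of the reduction is independent of the instantiation, the conclusion transfers to arbitrary $(A,a)$. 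This two-world comparison is exactly the concrete substitute for the parametricity invariant you flag as the main obstacle; your route would require supplying that invariant in full, whereas the paper gets it for free from the specific choice of $\mathbf{Unit}$ and $\mathbf{Bool}$.
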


\section{Adding Unrestricted Usage} \label{sec:unrestricted}

Till now, we used LDC for analyzing exact usage, bounded usage and dependency. In this section, we use the calculus for analyzing unrestricted usage as well. Unrestricted usage, referred to by $\omega$, is different from exact and bounded usage, referred to by $n \in \mathbb{N}$, in two ways:
\begin{itemize}
\item $\omega$ is an additive annihilator, meaning $\omega + q = q + \omega =  \omega$, for all $q \in \mathbb{N} \cup \{ \omega \}$. \\ No $n \in \mathbb{N}$ is an additive annihilator.
\item $\omega$ is a multiplicative annihilator (almost) as well, meaning $  \omega   \cdot  \ottnt{q}  =  \ottnt{q}  \cdot   \omega   =   \omega $ for $q \in (\mathbb{N} - \{ 0 \}) \cup \{ \omega \}$. No $n \in \mathbb{N} - \{ 0 \}$ is a multiplicative annihilator.
\end{itemize}
To accommodate this behavior of $ \omega $, we need to make a change to our type system. But before we make this change, let us fix our preordered semirings: %We use the following preordered semirings:
\begin{itemize}
\item $ \mathbb{N}_{=}^{\omega} $, that contains $ \omega $ and the preordered semiring $ \mathbb{N}_{=} $, with $  \omega   <:  \ottnt{q} $ for all $q$
\item $ \mathbb{N}_{\geq}^{\omega} $, that contains $ \omega $ and the preordered semiring $ \mathbb{N}_{\geq} $, with $  \omega   <:  \ottnt{q} $ for all $q$
\item $ \mathcal{Q}_{\text{Lin} } $ and $ \mathcal{Q}_{\text{Aff} } $, described in Section \ref{gradedsalient}
\end{itemize}
 %\Bw{} uses the semiring $ \mathbb{N}^{\omega} = (\mathbb{N} \cup \{  \omega  \}, + , 0 , \cdot , 1 ,  <:  )$ to track resource usage. Here, $+$ and $\cdot$ on $\mathbb{N}$ are defined as before; $+$ and $\times$ on $ \omega $ are defined as above; $ <: $ is defined as $ \omega   <:  \ldots  <:    4   <:   3    <:    2   <:   1    <:  0$. Note that if one wishes to track only affine and unbounded reuse, one can use the subsemiring $(\{ 0 , 1 ,  \omega  \}, + , 0 , \cdot , 1 ,  <:  )$ and the same analysis would carry over.

We use $ \mathcal{Q}_{\mathbb{N} }^{\omega} $ to denote an arbitrary member of the above set of semirings. Next, we discuss the change necessary as we move from LDC($ \mathcal{Q}_{\mathbb{N} } $) to LDC($ \mathcal{Q}_{\mathbb{N} }^{\omega} $). 

\subsection{The Problem and its Solution}
   
When unrestricted usage is allowed, the type systems in Figures \ref{TypeSystemSimple} and \ref{pts} cannot enforce fairness of resource usage. Consider the following `unfair' derivation allowed by the simple type system:
$$\infer[\textsc{ST-SubR}]{  \emptyset   \vdash   \lambda^{  1  }  \ottmv{x}  :  \ottnt{A}  .   (   \ottmv{x}  ^{  1  } ,   \ottmv{x}   )    :^{  1  }   {}^{  1  }\!  \ottnt{A}  \to   {}^{  1  }\!  \ottnt{A}  \: \times \:  \ottnt{A}   }
                      {\infer[\textsc{ST-Lam}]{  \emptyset   \vdash   \lambda^{  1  }  \ottmv{x}  :  \ottnt{A}  .   (   \ottmv{x}  ^{  1  } ,   \ottmv{x}   )    :^{  \omega  }   {}^{  1  }\!  \ottnt{A}  \to   {}^{  1  }\!  \ottnt{A}  \: \times \:  \ottnt{A}   }
                             {\infer[\textsc{ST-Pair}]{  \ottmv{x}  :^{  \omega  }  \ottnt{A}   \vdash   (   \ottmv{x}  ^{  1  } ,   \ottmv{x}   )   :^{  \omega  }   {}^{  1  }\!  \ottnt{A}  \: \times \:  \ottnt{A}  }
                                        {\infer[\textsc{ST-Var}]{  \ottmv{x}  :^{  \omega  }  \ottnt{A}   \vdash   \ottmv{x}   :^{  \omega  }  \ottnt{A} }{} 
                                       & \infer[\textsc{ST-Var}]{  \ottmv{x}  :^{  \omega  }  \ottnt{A}   \vdash   \ottmv{x}   :^{  \omega  }  \ottnt{A} }{} }}}
$$
The judgment $  \emptyset   \vdash   \lambda^{  1  }  \ottmv{x}  :  \ottnt{A}  .   (   \ottmv{x}  ^{  1  } ,   \ottmv{x}   )    :^{  1  }   {}^{  1  }\!  \ottnt{A}  \to   {}^{  1  }\!  \ottnt{A}  \: \times \:  \ottnt{A}   $ is unfair because it allows copying of resources. Carefully observing the derivation, we find that the unfairness arises when $ \omega $ `tricks' the \textsc{ST-Lam} rule into believing that the term uses $\ottmv{x}$ once. 

This unfairness leads to a failure in type soundness. To see how, consider the term: $  \ottmv{y}  :^{  1  }  \ottnt{A}   \vdash    (   \lambda^{  1  }  \ottmv{x}  :  \ottnt{A}  .   (   \ottmv{x}  ^{  1  } ,   \ottmv{x}   )    )   \:   \ottmv{y}  ^{  1  }   :^{  1  }   {}^{  1  }\!  \ottnt{A}  \: \times \:  \ottnt{A}  $ that type-checks via the above derivation and \rref{ST-App}. This term steps to:
$ \vdash    (   \lambda^{  1  }  \ottmv{x}  :  \ottnt{A}  .   (   \ottmv{x}  ^{  1  } ,   \ottmv{x}   )    )   \:   \ottmv{y}  ^{  1  }   \leadsto   (   \ottmv{y}  ^{  1  } ,   \ottmv{y}   )  $. But then, we have unsoundness because: $  \ottmv{y}  :^{  1  }  \ottnt{A}   \nvdash   (   \ottmv{y}  ^{  1  } ,   \ottmv{y}   )   :^{  1  }   {}^{  1  }\!  \ottnt{A}  \: \times \:  \ottnt{A}  $. Therefore, to ensure type soundness, we need to modify \rref{ST-Lam} and \rref{PTS-Lam}.

We modify these rules as follows:
\begin{center}
$\ottdruleSTXXLamOmega{}$ \hspace*{15pt} $\ottdrulePTSXXLamOmega{}$
\end{center}
There are several points to note regarding the above rules:
\begin{itemize}
\item \Rref{ST-LamOmega,PTS-LamOmega} are generalizations of \rref{ST-Lam,PTS-Lam} respectively, meaning, when the grades are restricted to natural numbers, replacing \rref{ST-Lam,PTS-Lam} with \rref{ST-LamOmega,PTS-LamOmega} has no effect on the type system.
\item These rules impose the constraint: $  \ottnt{q}  =   \omega    \Rightarrow   \ottnt{r}  =   \omega   $. This way $ \omega $ won't be able to `trick' the Lambda-rule into believing that functions use their arguments less than what they actually do. In particular, with these modified rules, the above unfair derivation won't go through.
\item The constraint $  \ottnt{q}  =   \omega    \Rightarrow   \ottnt{r}  =   \omega   $, while required for blocking unfair derivations, also blocks some fair ones like the one below:
$$ \infer[]{  \emptyset   \vdash   \lambda^{  1  }  \ottmv{x}  :  \ottnt{A}  .   \ottmv{x}    :^{  \omega  }   {}^{  1  }\!  \ottnt{A}  \to  \ottnt{A}  }{  \ottmv{x}  :^{  \omega  }  \ottnt{A}   \vdash   \ottmv{x}   :^{  \omega  }  \ottnt{A} } $$
To allow such derivations, while still imposing this constraint, \rref{ST-LamOmega,PTS-LamOmega} multiply the conclusion judgment by $\ottnt{q_{{\mathrm{0}}}}$. This multiplication helps these rules allow the above derivation as:
$$ \infer[]{  \emptyset   \vdash   \lambda^{  1  }  \ottmv{x}  :  \ottnt{A}  .   \ottmv{x}    :^{  \omega  }   {}^{  1  }\!  \ottnt{A}  \to  \ottnt{A}  }{  \ottmv{x}  :^{  1  }  \ottnt{A}   \vdash   \ottmv{x}   :^{  1  }  \ottnt{A} } $$
\item The side condition $ \ottnt{q_{{\mathrm{0}}}}  \neq   0  $ makes sure that a meaningful judgment is not turned into a meaningless one. Recall that judgments in $ 0 $ world are meaningless, as far as linearity and dependency analyses are concerned.
\iffalse
\item The \rref{ST-LamOmega,PTS-LamOmega} may seem specific to usage analysis because the constraint $  \ottnt{q}  =   \omega    \Rightarrow   \ottnt{r}  =   \omega   $ does not have an analogue in dependency analysis. This, however, is a minor issue because we can rephrase this constraint as: $   (    \ottnt{q}  +  \ottnt{q}   =  \ottnt{q}   )   \wedge   (    \ottnt{q}  \cdot  \ottnt{r}   =  \ottnt{q}   )    \Rightarrow   \ottnt{r}  <:  \ottnt{q}  $.\\
With respect to usage analysis, the above two constraints are equivalent. \\
Wrt dependency analysis, the latter constraint is a tautology because $   \ottnt{q}  \: \sqcup \:  \ottnt{r}   =  \ottnt{q}   \Rightarrow   \ottnt{r}  \sqsubseteq  \ottnt{q}  $. So \rref{ST-LamOmega,PTS-LamOmega} make sense from the perspective of dependency analysis as well. Further, when viewed as rules analysing dependency, \rref{ST-LamOmega,PTS-LamOmega} are equivalent to \rref{ST-Lam,PTS-Lam} respectively.
\fi
\item \Rref{ST-LamOmega} can also be equivalently replaced with the following two simpler rules (and similarly for \rref{PTS-LamOmega}):\\
\begin{center}
$\ottdruleSTXXLamOmegaZero{} \hspace*{10pt} \ottdruleSTXXOmega{}$ 
\end{center}
\end{itemize}

Replacing \rref{ST-Lam,PTS-Lam} with \rref{ST-LamOmega,PTS-LamOmega} is the only modification that we need to make to the type systems presented in Figures \ref{TypeSystemSimple} and \ref{pts} in order to enable them track unrestricted usage.

With this modification to the type system, LDC($ \mathcal{Q}_{\mathbb{N} }^{\omega} $) satisfies all the lemmas and theorems satisfied by LDC($ \mathcal{Q}_{\mathbb{N} } $). In particular, LDC($ \mathcal{Q}_{\mathbb{N} }^{\omega} $) satisfies type soundness (Theorems \ref{preservationDep} and \ref{progressDep}) and heap soundness (Theorem \ref{heapDep}). We state this property as a theorem below.

\begin{theorem} \label{Bwsound}
LDC($ \mathcal{Q}_{\mathbb{N} }^{\omega} $) satisfies type soundness and heap soundness.
\end{theorem}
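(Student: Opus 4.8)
The plan is to verify, rule by rule, that the entire development of the preceding sections survives the replacement of \rref{ST-Lam}/\rref{PTS-Lam} by \rref{ST-LamOmega}/\rref{PTS-LamOmega}, which is the only change to the type system. Since \rref{ST-LamOmega} (resp. \rref{PTS-LamOmega}) is a conservative generalization of \rref{ST-Lam} (resp. \rref{PTS-Lam}) -- the two coincide when grades range over $\mathbb{N}$ -- all the results stated for LDC($\mathcal{Q}_{\mathbb{N}}$) transfer to the $\mathbb{N}$-fragment for free, and the real obligations concern the new grade $\omega$. Concretely I would: (i) re-derive the syntactic lemmas (Multiplication, Factorization, Splitting, Weakening, Substitution) over the four semirings $\mathbb{N}_{=}^{\omega}$, $\mathbb{N}_{\geq}^{\omega}$, $\mathcal{Q}_{\text{Lin}}$, $\mathcal{Q}_{\text{Aff}}$, inspecting the new $\lambda$-case of each induction; (ii) re-run Preservation and Progress, where the only genuinely new reduction case is $\beta$-reduction of a lambda typed by \rref{ST-LamOmega}/\rref{PTS-LamOmega}; (iii) re-run the heap-soundness argument (Theorem~\ref{heapDep}), where the only new case is \rref{HeapStep-AppBeta} applied to such a lambda.

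First I would handle the syntactic lemmas. Multiplication (Lemma~\ref{BLSMult}) is immediate: scaling the derived grade and context by $r_0$ is absorbed into the outer multiplier $q_0$ of the final \rref{ST-LamOmega}. Weakening (Lemma~\ref{BLSWeak}) is untouched. The delicate steps are Factorization (Lemma~\ref{BLSFact}) and Splitting (Lemma~\ref{BLSSplit}): these must be re-established in the presence of $\omega$, whose additive and multiplicative annihilator behaviour makes grade decompositions non-unique (e.g.\ $\omega = q_1 + q_2$ forces one summand to be $\omega$ but leaves the other free). I would prove them by induction on the typing derivation, using \rref{ST-SubL}/\rref{ST-SubR} to normalize splits: an $\omega$-graded judgement can be factored as $\omega = \omega \cdot 1$ with an all-$\omega$ context, which is precisely the shape that \rref{ST-LamOmega} guarantees for $\omega$-graded lambdas. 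This is exactly what the rule's constraint ``$q = \omega \Rightarrow r = \omega$'' (together with the conclusion-multiplication by $q_0$) buys us, and it is precisely the point at which the unmodified \rref{ST-Lam} let the ``unfair'' derivation through. Substitution (Lemmas~\ref{BLSSubst} and~\ref{DSubst}) then follows from Splitting, as in the $\mathbb{N}$ case; this is where Atkey's non-substitutivity result for arbitrary semirings is relevant, so the argument must be carried out semiring by semiring, checking that each of $\mathbb{N}_{=}^{\omega}$, $\mathbb{N}_{\geq}^{\omega}$, $\mathcal{Q}_{\text{Lin}}$, $\mathcal{Q}_{\text{Aff}}$ has the required decomposition properties and that $+$, $\cdot$ and $<:$ remain compatible.

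With substitution in hand, Preservation (Theorems~\ref{SimplePreserve} and~\ref{preservationDep}) goes through by the usual induction on $ \vdash  \ottnt{a}  \leadsto  \ottnt{a'} $: the $\beta$-cases invoke Substitution, and the new subcase -- where the redex's function was typed via \rref{ST-LamOmega}/\rref{PTS-LamOmega} with outer multiplier $q_0$ -- is discharged by first applying Multiplication/Factorization to bring the premise into the form consumed by Substitution, then substituting. Progress (Theorems~\ref{BLSProg} and~\ref{progressDep}) is essentially verbatim, since \rref{ST-LamOmega}/\rref{PTS-LamOmega} do not change the canonical forms: a closed value at a $\Pi$/arrow type is still a $\lambda$, so the case analysis on the shape of closed well-typed values is unaffected. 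Finally, heap soundness (Theorem~\ref{heapDep}) follows by re-running the proof of Theorem~\ref{heapsound}: the compatibility relation $ \ottnt{H}  \models  \Gamma $ and the heap-step rules are unchanged, so only the binding introduced by \rref{HeapStep-AppBeta} for an application whose function came from \rref{ST-LamOmega}/\rref{PTS-LamOmega} needs attention, and there the weight $ \ottnt{q}  \cdot  \ottnt{r} $ placed in the heap -- together with the $\omega$-shape of the context from the rule's constraint -- keeps $ \ottnt{H'}  \models  \Gamma' $ and $ \Gamma'  \vdash  \ottnt{a'}  :^{ \ottnt{q} }  \ottnt{A} $ intact. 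I expect the main obstacle to be step (i), specifically Factorization/Splitting and hence Substitution over the $\omega$-extended semirings: the interaction of $\omega$'s annihilator behaviour with the requirement that each copy of a term be typeable by an independent sub-context is exactly what breaks for arbitrary semirings, and showing that the redesigned $\lambda$-rule repairs the proofs -- not merely blocks the bad example -- is where the real work lies. Everything downstream, including the corollaries, should then be a mechanical re-run of the $\mathbb{N}$-case arguments.
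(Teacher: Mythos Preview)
Your proposal is correct and follows essentially the same route as the paper: re-establish the syntactic lemmas (Multiplication, Factorization, Splitting, Weakening, Substitution) for the $\omega$-extended semirings with the modified $\lambda$-rule, then re-run Preservation, Progress, and the heap-soundness argument. The paper packages the key step in Factorization slightly differently than you sketch: rather than an ``all-$\omega$'' context, it introduces an Independence lemma (if $\neg(r_0 \ll: q)$ then the $r_0$-graded assumption can be zeroed) and a projection $\Gamma^{\{0,\omega\}}$ that sets every non-$\{0,\omega\}$ grade to $0$, so that an $\omega$-graded judgment factors through a $\{0,\omega\}$-valued context fixed by $\omega\cdot(-)$; this is exactly the device that makes the $q=\omega$ case of Factorization and the $\beta$-case of Preservation go through cleanly.
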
 

Thus, LDC is a general linear dependency calculus. For tracking linearity, we can use any of the $ \mathcal{Q}_{\mathbb{N} }^{\omega} $s. For tracking dependency, we can use any lattice. For tracking linearity and dependency simultaneously, we can use the cartesian product of these structures. Below, we take up some examples that illustrate combined linearity and dependency analysis in LDC. 

\subsection{Combined Linearity and Dependency Analysis}

First, note that even though preordered semirings and lattices have irreconcilably different sets of axioms, as described in Section \ref{limitdep}, they are still algebraic structures with two binary operators and a binary order relation. Therefore, one can define the cartesian product of these two structures. Concretely, given a preordered semiring $ \mathcal{Q}  = (Q,+,\cdot,0,1, <: )$ and a lattice $ \mathcal{L}  = (L,\sqcap,\sqcup,\top,\bot, \sqsubseteq )$, one can define $ \mathcal{Q}  \times  \mathcal{L} $, the cartesian product of $ \mathcal{Q} $ and $ \mathcal{L} $, as the set $Q \times L$, together with: 
\begin{itemize}
\item two constants, $0$ and $1$, defined as: $0 \triangleq (0,\top)$ and $1 \triangleq (1,\bot)$;
\item a binary operator, $+$, defined as: $(q_1, \ell_1) \, \mathbf{+} \, (q_2, \ell_2) \triangleq ( \ottnt{q_{{\mathrm{1}}}}  +  \ottnt{q_{{\mathrm{2}}}} ,  \ell_{{\mathrm{1}}}  \: \sqcap \:  \ell_{{\mathrm{2}}} )$;
\item another binary operator, $\mathbf{\cdot}$, defined as: $(q_1, \ell_1) \, \mathbf{\cdot} \, (q_2, \ell_2) \triangleq ( \ottnt{q_{{\mathrm{1}}}}  \cdot  \ottnt{q_{{\mathrm{2}}}} ,  \ell_{{\mathrm{1}}}  \: \sqcup \:  \ell_{{\mathrm{2}}} )$; and
\item a binary order relation, $\mathbf{ <: }$, defined as: $(q_1, \ell_1) \, \mathbf{ <: } \, (q_2 , \ell_2) \triangleq  \ottnt{q_{{\mathrm{1}}}}  <:  \ottnt{q_{{\mathrm{2}}}}  \text{ and }  \ell_{{\mathrm{1}}}  \sqsubseteq  \ell_{{\mathrm{2}}} $.
\end{itemize}

Now, given any $ \mathcal{Q}_{\mathbb{N} }^{\omega} $ and $ \mathcal{L} $, we can parametrize LDC over $ \mathcal{Q}_{\mathbb{N} }^{\omega}  \times  \mathcal{L} $, using the above interpretation of $+, \cdot, 0, 1$ and $ <: $. Such parametrization helps us combine linearity and dependency analyses. To illustrate, let $ \mathcal{Q}  =  \mathcal{Q}_{\text{Lin} } $ (or $ \mathcal{Q}_{\text{Aff} } $) and $ \mathcal{L}  =   \mathbf{L}   \sqsubseteq   \mathbf{M}    \sqsubseteq   \mathbf{H} $. Then, in LDC($ \mathcal{Q}  \times  \mathcal{L} $) (extended with $\mathbf{Int}$ type), we have:
\begin{align*}
&   \ottmv{x}  :^{  (   \omega   ,   \mathbf{L}   )  }   \mathbf{Int}    \vdash    \ottmv{x}   +   \ottmv{x}    :^{  (   1   ,   \mathbf{L}   )  }   \mathbf{Int}   \\ 
&    \ottmv{x}  :^{  (   \omega   ,   \mathbf{L}   )  }   \mathbf{Int}    ,   \ottmv{y}  :^{  (   1   ,   \mathbf{M}   )  }   \mathbf{Bool}     \vdash   \mathbf{if} \:   \ottmv{y}   \: \mathbf{then} \:    \ottmv{x}   +   \ottmv{x}    \: \mathbf{else} \:   \ottmv{x}    :^{  (   1   ,   \mathbf{M}   )  }   \mathbf{Int}  .
\end{align*}
Observe that if we change the assumption $ \ottmv{x}  :^{  (   \omega   ,   \mathbf{L}   )  }   \mathbf{Int}  $ to $ \ottmv{x}  :^{  (   1   ,   \mathbf{L}   )  }   \mathbf{Int}  $ or to $ \ottmv{x}  :^{  (   \omega   ,   \mathbf{H}   )  }   \mathbf{Int}  $ in either judgment, then that judgment would no longer hold. Similarly, if we change the assumption $ \ottmv{y}  :^{  (   1   ,   \mathbf{M}   )  }   \mathbf{Bool}  $ to $ \ottmv{y}  :^{  (   1   ,   \mathbf{H}   )  }   \mathbf{Bool}  $ in the second judgment, then too it would no longer hold.

\section{LDC vs. Standard Linear and Dependency Calculi}

\subsection{Comparison with Simply-Typed Calculi}

In section \ref{DCCandBLDC}, we discussed how Sealing Calculus, a standard dependency calculus, embeds into LDC($ \mathcal{L} $). Now, we compare LDC with a standard linear calculus, the Linear Nonlinear (LNL) $\lambda$-calculus of \citet{benton}. The LNL $\lambda$-calculus tracks just linear and unrestricted usage and is simply-typed. So we compare it with simply-typed LDC($ \mathcal{Q}_{\text{Lin} } $). Below, we present a meaning-preserving translation from LNL $\lambda$-calculus to LDC($ \mathcal{Q}_{\text{Lin} } $). 

The LNL calculus employs two forms of contexts and two forms of typing judgments. The two forms of contexts, linear and nonlinear, correspond to assumptions at grades $1$ and $ \omega $ respectively in LDC($ \mathcal{Q}_{\text{Lin} } $). The two forms of judgments, linear and nonlinear, correspond to derivations in worlds $ 1 $ and $ \omega $ respectively in LDC($ \mathcal{Q}_{\text{Lin} } $). In LNL $\lambda$-calculus, linear and nonlinear contexts are denoted by $\Gamma$ and $\Theta$ respectively while linear and nonlinear judgments are written as $ \Theta  ;  \Gamma  \vdash_{\mathcal{L} }   \ottmv{e}   :  \ottnt{A} $ and $ \Theta  \vdash_{\mathcal{C} }   \ottmv{t}   :  \ottnt{X} $ respectively. The calculus contains standard intuitionistic types and linear types; it also contains two type constructors, $F$ and $G$, via which the linear and the nonlinear worlds interact. The calculus uses $\ottnt{A}, \ottnt{B}$ for linear types; $\ottnt{X}, \ottnt{Y}$ for nonlinear types; $\ottnt{a}, \ottnt{b}$ for linear variables; $\ottmv{x}, \ottmv{y}$ for nonlinear variables; $\ottmv{e}, \ottnt{f}$ for linear terms; and $\ottmv{s}, \ottmv{t}$ for nonlinear terms.

%with a standard dependency calculus, DCC. However, we could not compare BLDC with standard linear calculi \cite[etc.]{ill,benton,dill} because BLDC does not allow unbounded reuse. Now that we allow unbounded reuse in \Bw{}, we can compare it with standard linear calculi. For comparison, we choose the Linear Nonlinear (LNL) $\lambda$-calculus of \cite{benton} because its presentation is similar to that of \Bw{}. Since the LNL calculus is simply-typed, we shall compare it with the simply-typed version of \Bw{}. Next, we shall show that the LNL calculus can be translated into \Bw{} while preserving meaning.    

%\subsection{LNL $\lambda$-calculus and \Bw{}}

We present the translation function from LNL $\lambda$-calculus \cite{benton} to LDC($ \mathcal{Q}_{\text{Lin} } $) in Figure \ref{termT}. This translation preserves typing and meaning.
\begin{figure}
\begin{align*}
&  \overline{  1  } =   \mathbf{Unit}   & &  \overline{  \ottnt{X}  \: \times \:  \ottnt{Y}  } =   {}^{  \omega  }\!    \overline{ \ottnt{X} }    \: \times \:   \overline{ \ottnt{Y} }    & &  \overline{  I  } =   \mathbf{Unit}   & &  \overline{  \ottnt{A}  \otimes  \ottnt{B}  } =   {}^{  1  }\!    \overline{ \ottnt{A} }    \: \times \:   \overline{ \ottnt{B} }   \\
&  \overline{  \ottnt{X}  \to  \ottnt{Y}  } =   {}^{  \omega  }\!    \overline{ \ottnt{X} }    \to   \overline{ \ottnt{Y} }    & &  \overline{  G \:  \ottnt{A}  } =   \overline{ \ottnt{A} }   & &  \overline{  \ottnt{A}  \multimap  \ottnt{B}  } =   {}^{  1  }\!    \overline{ \ottnt{A} }    \to   \overline{ \ottnt{B} }    & &  \overline{  F \:  \ottnt{X}  } =   {}^{  \omega  }\!    \overline{ \ottnt{X} }    \: \times \:   \mathbf{Unit}   
\end{align*}
\begin{align*}
&  \overline{  \ottmv{x}  } =   \ottmv{x}   & &  \overline{ \ottnt{a} } =  \ottnt{a}  \\
&  \overline{  ()  } =   \mathbf{unit}   & &  \overline{  \ast  } =   \mathbf{unit}   \\
&  \overline{  (   \ottmv{s}   ,   \ottmv{t}   )  } =   (   \overline{  \ottmv{s}  }  ^{  \omega  } ,   \overline{  \ottmv{t}  }   )   & &  \overline{   \ottmv{e}   \otimes  \ottnt{f}  } =   (   \overline{  \ottmv{e}  }  ^{  1  } ,   \overline{ \ottnt{f} }   )   \\
&  \overline{  \mathbf{fst}(  \ottmv{s}  )  } =   \mathbf{let}_{  1  } \: (  \ottmv{x} ^{  \omega  } ,  \ottmv{y}  ) \: \mathbf{be} \:   \overline{  \ottmv{s}  }   \: \mathbf{in} \:   \ottmv{x}    \hspace*{4pt} (x, y \; \text{fresh}) & &  \overline{  \mathbf{let} \:  \ottnt{a}  \otimes  \ottnt{b}  \: \mathbf{be} \:   \ottmv{e}   \: \mathbf{in} \:  \ottnt{f}  } =   \mathbf{let}_1 \: (  \ottnt{a} ^{  1  } ,  \ottnt{b}  ) \: \mathbf{be} \:    \overline{  \ottmv{e}  }    \: \mathbf{in} \:   \overline{ \ottnt{f} }    \\
&  \overline{  \mathbf{snd}(  \ottmv{s}  )  } =   \mathbf{let}_{  1  } \: (  \ottmv{x} ^{  \omega  } ,  \ottmv{y}  ) \: \mathbf{be} \:   \overline{  \ottmv{s}  }   \: \mathbf{in} \:   \ottmv{y}    \hspace*{4pt} (x, y \; \text{fresh}) & &  \overline{  \mathbf{let} \: \ast \: \mathbf{be} \:   \ottmv{e}   \: \mathbf{in} \:  \ottnt{f}  } =   \mathbf{let}_{  1  } \: \mathbf{unit} \: \mathbf{be} \:   \overline{  \ottmv{e}  }   \: \mathbf{in} \:   \overline{ \ottnt{f} }    \\
&  \overline{  \lambda  \ottmv{x}  :  \ottnt{X}  .   \ottmv{s}   } =   \lambda^{  \omega  }  \ottmv{x}  :   \overline{ \ottnt{X} }   .   \overline{  \ottmv{s}  }    & &  \overline{  \lambda  \ottnt{a}  :  \ottnt{A}  .   \ottmv{e}   } =   \lambda^{1}  \ottnt{a}  :   \overline{ \ottnt{A} }   .   \overline{  \ottmv{e}  }    \\
&  \overline{   \ottmv{s}   \:   \ottmv{t}   } =     \overline{  \ottmv{s}  }    \:    \overline{  \ottmv{t}  }   ^{  \omega  }   & &  \overline{   \ottmv{e}   \:  \ottnt{f}  } =     \overline{  \ottmv{e}  }    \:    \overline{ \ottnt{f} }   ^{  1  }   \\
&  \overline{  G \:   \ottmv{e}   } =   \overline{  \ottmv{e}  }   & &  \overline{  \mathbf{derelict} \:   \ottmv{s}   } =   \overline{  \ottmv{s}  }   \\
&  \overline{  F \:   \ottmv{s}   } =   (   \overline{  \ottmv{s}  }  ^{  \omega  } ,   \mathbf{unit}   )   & &  \overline{  \mathbf{let} \: F  \ottmv{x}  \: \mathbf{be} \:   \ottmv{e}   \: \mathbf{in} \:  \ottnt{f}  } =   \mathbf{let}_{  1  } \: (  \ottmv{x} ^{  \omega  } ,  \ottmv{y}  ) \: \mathbf{be} \:   \overline{  \ottmv{e}  }   \: \mathbf{in} \:   \mathbf{let}_{  1  } \: \mathbf{unit} \: \mathbf{be} \:   \ottmv{y}   \: \mathbf{in} \:   \overline{ \ottnt{f} }     \\
& & & \hspace*{180pt} (y \; \text{fresh})
\end{align*}
\caption{Type and term translation from LNL $\lambda$-calculus to LDC($ \mathcal{Q}_{\text{Lin} } $)}
\label{termT}
\end{figure}

%. And, for a context $\Delta$, we define $ \Delta ^{ \ottnt{q} } $ as the graded context that has $\Delta$ as the underlying context and whose assumptions are all graded at $\ottnt{q}$. %The next theorem shows that this translation preserves typing.
\begin{theorem} \label{LNLTyping}
The translation from LNL $\lambda$-calculus to LDC($ \mathcal{Q}_{\text{Lin} } $), shown in Figure \ref{termT}, is sound:
\begin{itemize}
\item If $ \Theta  ;  \Gamma  \vdash_{\mathcal{L} }   \ottmv{e}   :  \ottnt{A} $, then $     \overline{ \Theta }  ^{  \omega  }   ,   \overline{ \Gamma }   ^{  1  }   \vdash   \overline{  \ottmv{e}  }   :^{  1  }   \overline{ \ottnt{A} }  $. 
\item If $ \Theta  \vdash_{\mathcal{C} }   \ottmv{t}   :  \ottnt{X} $, then $   \overline{ \Theta }  ^{  \omega  }   \vdash   \overline{  \ottmv{t}  }   :^{  \omega  }   \overline{ \ottnt{X} }  $.
\item If $  \ottmv{e}   =_{\beta}  \ottnt{f} $, then $  \overline{  \ottmv{e}  }   =_{\beta}   \overline{ \ottnt{f} }  $. If $  \ottmv{s}   =_{\beta}   \ottmv{t}  $ then $  \overline{  \ottmv{s}  }   =_{\beta}   \overline{  \ottmv{t}  }  $.
\end{itemize}
\end{theorem}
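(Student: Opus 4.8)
The plan is to prove the three claims of Theorem~\ref{LNLTyping} by mutual induction on the typing (respectively, equivalence) derivations of the LNL $\lambda$-calculus. The two typing statements must be proved simultaneously, since the LNL rules for $F$ and $G$ move between the two judgment forms: the rule introducing $F\:s$ from a nonlinear derivation $\Theta \vdash_{\mathcal{C}} s : X$ produces a linear conclusion, and the rule for $G\:e$ does the converse. I would first record two basic observations about the translation on contexts: that $\overline{\Theta}$, weighted uniformly at $\omega$, and $\overline{\Gamma}$, weighted uniformly at $1$, behave well under the LNL structural operations — in particular, that splitting a linear context $\Gamma = \Gamma_1, \Gamma_2$ in LNL corresponds to context addition in LDC($\mathcal{Q}_{\text{Lin}}$) (using $1 + 1 = \omega$ there), and that contraction and weakening on the nonlinear context $\Theta$ are realized because $\omega + \omega = \omega$ and $0 <: \omega$, so that $\overline{\Theta}^{\omega}$ can be freely duplicated and discarded using \rref{ST-Var}, \rref{ST-SubR} and the multiplication/splitting lemmas (Lemmas~\ref{BLSMult} and~\ref{BLSSplit}) for $\mathcal{Q}_{\text{Lin}}$.

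The heart of the typing argument is then a case analysis on the last LNL rule. For the purely linear fragment ($I$, $\otimes$, $\multimap$, linear variables and application) the translation sends each connective to its annotated LDC counterpart at grade $1$, and the cases go through using \rref{ST-Unit}, \rref{ST-Pair}/\rref{ST-LetPair}, \rref{ST-Lam}/\rref{ST-App}, matching the grade annotations on terms with those on types. For the nonlinear fragment ($1$, $\times$, $\to$, nonlinear variables and application) the same connectives appear but at grade $\omega$, and here the outer judgment is derived in world $\omega$; I would use \rref{ST-LamOmega} (the revised lambda rule, needed precisely because $\omega$ is an annihilator) where the function body is a nonlinear term, and appeal to the multiplication lemma to move between worlds $1$ and $\omega$ as the translation demands. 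The two genuinely modal cases are $F$ and $G$. For $G\:e$: from $\Theta \vdash_{\mathcal{C}} G\:e : X$ the LNL rule has premise $\Theta; \emptyset \vdash_{\mathcal{L}} e : A$ with $X = G\:A$; by the inductive hypothesis $\overline{\Theta}^{\omega} \vdash \overline{e} :^{1} \overline{A}$ — wait, more carefully, one has the linear-judgment IH giving a derivation at world $1$, and since $\overline{G\:A} = \overline{A}$ one must promote this to world $\omega$, which is exactly Lemma~\ref{BLSMult} with $r_0 = \omega$ together with the fact that $\omega \cdot 1 = \omega$ and that the context $\overline{\Theta}^{\omega}$ is fixed under scaling by $\omega$. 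For $F\:s$, with $\overline{F\:X} = {}^{\omega}\!\overline{X} \times \mathbf{Unit}$ and $\overline{F\:s} = (\overline{s}^{\omega}, \mathbf{unit})$, one uses \rref{ST-Pair} with the left component at grade $\omega$, obtaining $\overline{s}$ from the nonlinear IH; the elimination form $\mathbf{let}\:Fx\:\mathbf{be}\:e\:\mathbf{in}\:f$ translates to nested $\mathbf{let}$s and is handled by \rref{ST-LetPair} followed by \rref{ST-LetUnit}, checking that the $q_0 <: 1$ side conditions are satisfied (taking $q_0 = 1$).

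For the third claim, about $\beta$-equivalence, I would induct on the derivation of $e =_\beta f$ (and $s =_\beta t$). The congruence cases are immediate from the inductive hypotheses since $\overline{\cdot}$ is a compositional, context-preserving translation. The interesting cases are the LNL $\beta$-rules: each one must be shown to translate to a chain of LDC $\beta$-reductions (or their reflexive-transitive-symmetric closure). For the function and tensor $\beta$-rules this is a direct computation using \rref{Step-AppBeta} and \rref{Step-LetPairBeta}, once one checks the grade annotations line up. The $F$/$G$ $\beta$-rule — reducing $\mathbf{let}\:Fx\:\mathbf{be}\:F\:s\:\mathbf{in}\:f$ to $f[s/x]$ — requires unfolding the translation, which is $\mathbf{let}_1 (x^{\omega}, y)\:\mathbf{be}\:(\overline{s}^{\omega}, \mathbf{unit})\:\mathbf{in}\:\mathbf{let}_1\:\mathbf{unit}\:\mathbf{be}\:y\:\mathbf{in}\:\overline{f}$, and showing it reduces via two $\beta$-steps to $\overline{f}[\overline{s}/x]$, which equals $\overline{f[s/x]}$ by a substitution-commutation lemma for the translation. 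I expect the main obstacle to be bookkeeping rather than conceptual: getting the context-weighting discipline exactly right across the $F$/$G$ boundary — ensuring that every duplication of $\overline{\Theta}$ is justified by $\omega + \omega = \omega$ and every world-shift by the multiplication lemma — and verifying that the revised \rref{ST-LamOmega} constraint $q = \omega \Rightarrow r = \omega$ is met in every nonlinear-function case (it is, because nonlinear function types translate with domain grade $\omega$). A secondary subtlety is confirming that $\mathcal{Q}_{\text{Lin}}$ genuinely admits the splitting lemma used for contraction on $\overline{\Theta}$, but this is already asserted in the excerpt for $\mathcal{Q}_{\text{Lin}}$.
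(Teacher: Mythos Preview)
Your approach matches the paper's: mutual induction on the two LNL typing judgments for the first two claims, and case analysis on the $\beta$-equivalence derivations for the third. The paper's own proof is a two-line sketch saying exactly this, and your case analysis (handling the linear and nonlinear fragments at worlds $1$ and $\omega$ respectively, with the $F$/$G$ cases mediated by the multiplication lemma and \rref{ST-SubR}) is the expected elaboration.

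One small slip in your exposition: you write that splitting a linear LNL context $\Gamma = \Gamma_1, \Gamma_2$ uses ``$1 + 1 = \omega$'' in LDC. In LNL the linear context is split \emph{disjointly}, so after weakening each linear variable appears at grade $1$ in one summand and $0$ in the other, contributing $1 + 0 = 1$ to the conclusion --- there is no $1+1$ here. The identity $\omega + \omega = \omega$ is what justifies sharing the nonlinear context $\overline{\Theta}^{\omega}$ across premises, as you correctly say next; this confusion does not affect the argument.
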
 

%Now, Then, we can show that this relation is preserved under translation.
Here, $  \overline{ \Gamma }  ^{  1  } $ and $  \overline{ \Theta }  ^{  \omega  } $ denote $\Gamma$ and $\Theta$, with the types translated, and assumptions held at grades $1$ and $ \omega $ respectively. Further, $\_=_{\beta}\_$ denotes the beta equivalence relation on the terms of LNL $\lambda$-calculus \cite{benton}. 

The soundness theorem of LDC, in conjunction with the above meaning-preserving translation, shows that LDC($ \mathcal{Q}_{\text{Lin} } $) is no less expressive than the LNL $\lambda$-calculus. In fact, LDC($ \mathcal{Q}_{\text{Lin} } $) is more expressive than the LNL $\lambda$-calculus because the latter does not model $0$-usage. Owing to this reason, a translation in the other direction from LDC($ \mathcal{Q}_{\text{Lin} } $) to LNL $\lambda$-calculus fails! 

\subsection{Comparison with Dependently-Typed Calculi}

Next, we compare LDC with standard dependently-typed linear and dependency calculi. As discussed in Section \ref{simpdepcomp}, there are several calculi for linearity analysis in dependent type systems. Among these calculi, LDC is closest to \textsc{GraD} of \citet{grad}. So we compare LDC with \textsc{GraD}. 

\textsc{GraD} is a general coeffect calculus parametrized by an arbitrary partially-ordered semiring. LDC($ \mathcal{Q}_{\mathbb{N} }^{\omega} $), on the other hand, is a linearity calculus parametrized by specific preordered semirings, i.e. $ \mathcal{Q}_{\mathbb{N} }^{\omega} $s. When compared over these semirings, we can show that LDC subsumes \textsc{GraD}.

\begin{theorem}\label{GraD}
With $ \mathcal{Q}_{\mathbb{N} }^{\omega} $ as the parametrizing structure, if $ \Gamma  \vdash  \ottnt{a}  : \:  \ottnt{A} $ in \textsc{GraD}, then $ \Gamma  \vdash  \ottnt{a}  :^{  1  }  \ottnt{A} $ in LDC. Further, if $ \vdash  \ottnt{a}  \leadsto  \ottnt{a'} $ in \textsc{GraD}, then $ \vdash  \ottnt{a}  \leadsto  \ottnt{a'} $ in LDC.
\end{theorem}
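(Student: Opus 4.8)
The plan is to prove the typing statement by induction on the derivation of $\Gamma \vdash a : A$ in \textsc{GraD}, showing that every \textsc{GraD} typing rule is admissible in LDC once the observer level is pinned to $1$, and to prove the reduction statement by a short induction on the derivation of $\vdash a \leadsto a'$. Over $\mathcal{Q}_{\mathbb{N}}^{\omega}$ the two calculi grade their contexts over the same structure, treat type-level resources the same way (LDC tracks usage in types \`{a} la \textsc{GraD}), and use $\beta$-equivalence for conversion, so the intended embedding is the identity on terms, types, and contexts, sending a \textsc{GraD} judgment $\Gamma \vdash a : A$ to the LDC judgment $\Gamma \vdash a :^{1} A$; the real content is a rule-by-rule matching. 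If a minor syntactic mismatch between the two term languages turns up it would be absorbed by a trivial compositional translation, but I expect none.

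First I would dispatch the structural rules --- variable, weakening, subsumption, conversion --- together with the introduction and elimination rules for $\Pi$, $\Sigma$, sums, and $\mathbf{Unit}$ that do not involve function introduction. These lift directly: each \textsc{GraD} premise, read at observer level $1$, is precisely the corresponding LDC premise delivered by the induction hypothesis; context addition and scalar multiplication are defined identically in the two systems; and the $q_0 <: 1$ side conditions on the let/case eliminators are met by taking $q_0 = 1$. For conversion I would note that the two definitional equalities coincide on the common syntax, since both are the congruent, $\beta$-closed equivalence whose congruence rules additionally demand that matched grade annotations agree (Figure~\ref{equality}), so a \textsc{GraD} equality step transports verbatim.

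The case I expect to be the main obstacle is function introduction, because LDC over $\mathcal{Q}_{\mathbb{N}}^{\omega}$ uses \rref{PTS-LamOmega} in place of \rref{PTS-Lam} (and \rref{ST-LamOmega} in place of \rref{ST-Lam} in the simple fragment), carrying the side condition $q = \omega \Rightarrow r = \omega$ and scaling the conclusion by a nonzero factor. The point I would argue is that this refinement is invisible to a level-$1$ embedding: from a \textsc{GraD} derivation ending in $\Gamma, x :^{r}\! A \vdash b : B$, hence $\Gamma \vdash \lambda^{r} x. b : {}^{r}\!A \to B$, the induction hypothesis yields $\Gamma, x :^{r}\! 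A \vdash b :^{1} B$, and applying \rref{PTS-LamOmega} with multiplier $1$ (so the body is taken at grade $1$) gives $\Gamma \vdash \lambda^{r} x. b :^{1} {}^{r}\!A \to B$ --- the side condition is vacuous because the grade in question is $1 \neq \omega$, and the nonzero-factor requirement is met by $1$. The substantive observation behind this is that \rref{PTS-LamOmega} rules out only the ``unfair'' lambda derivations, those arising at observer level $\omega$, which a sound \textsc{GraD} cannot derive anyway; I would make this precise by comparing the exact form of \textsc{GraD}'s lambda rule with \rref{PTS-LamOmega} and checking that no \textsc{GraD} lambda typing is lost.

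For the operational statement I would induct on the derivation of $\vdash a \leadsto a'$ in \textsc{GraD}. The reduction relations of the two calculi are generated by the same congruence rules and the same grade-annotated $\beta$-rules, which match the grade on each introduction form against the grade on its eliminator (Figure~\ref{SmallStep}); since the embedding is the identity on terms, every \textsc{GraD} step rule is literally an LDC step rule, and the conclusion follows with no additional machinery. The only bookkeeping is to confirm that the two calculi share the same $\beta$-rules and the same notion of value in the congruence closure, which is consistent with the paper's remark that its semantics ``follows \citet{grad}''.
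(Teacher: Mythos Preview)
Your proposal is correct and takes essentially the same approach as the paper, which proves the theorem by induction on the \textsc{GraD} typing judgment; you have simply fleshed out the rule-by-rule details that the paper leaves implicit. Your identification of the $\lambda$-case as the only nontrivial one, and your resolution of it by instantiating \rref{PTS-LamOmega} with $q = q_0 = 1$ so that the $\omega$-side-condition is vacuous, is exactly the right observation.
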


The above theorem is not surprising because we followed \textsc{GraD} while designing LDC. 

Next, we compare LDC with standard dependent dependency calculi. In literature, there are only a few calculi \cite{prost,color,ddc} on this topic. Among them, the calculus \Dt{} \citep{ddc} inspired the design of LDC, as discussed in Section \ref{subsec:res}. Owing to this reason, LDC is similar to \Dt{}, as far as dependency analysis is concerned. Formally, we can show that when parametrized over arbitrary lattices, LDC subsumes $\text{DDC}^{\top}$.

\begin{theorem} \label{DDCT}
With $\mathcal{L}$ as the parametrizing structure, if $ \Gamma  \vdash  \ottnt{a}  :^{ \ell }  \ottnt{A} $ in $\text{DDC}^{\top}$, then $ \Gamma  \vdash  \ottnt{a}  :^{ \ell }  \ottnt{A} $ in LDC. Further, if $ \vdash  \ottnt{a}  \leadsto  \ottnt{a'} $ in $\text{DDC}^{\top}$, then $ \vdash  \ottnt{a}  \leadsto  \ottnt{a'} $ in LDC.
\end{theorem}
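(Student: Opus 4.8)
The plan is to prove both clauses by rule induction, exploiting the fact that, once LDC is read over a lattice via the interpretation $+ \mapsto \sqcap$, $\cdot \mapsto \sqcup$, $0 \mapsto \top$, $1 \mapsto \bot$, $<: \,\mapsto\, {\sqsubseteq}$ fixed in the dependency subsection, \Dt{} and LDC($\mathcal{L}$) share the very same grammar of terms, types, and graded contexts, and the same observer-labelled judgment form $\Gamma \vdash a :^{\ell} A$ (both grade $\Pi$/$\Sigma$ types directly and both label the judgment with an observer, so the two judgment shapes of equations (\ref{ddctyp}) and (\ref{typform}) coincide here). Thus no translation on syntax is needed: I only have to show that each \Dt{} rule instance is \emph{derivable} in LDC on the identical data, and that each \Dt{} single step $\vdash a \leadsto a'$ is itself an LDC single step. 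I would do the typing clause by induction on the \Dt{} derivation and the reduction clause by induction on the \Dt{} step derivation, the two inductions being essentially independent except that the conversion case of the former invokes the latter.

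For the typing clause I would go through the \Dt{} rules one at a time. The axiom, variable, $\Pi$/$\Sigma$-formation, $\lambda$, pairing, injection, and case rules map directly onto their LDC counterparts \textsc{PTS-Axiom}, \textsc{PTS-Var}, \textsc{PTS-Pi}, \textsc{PTS-Lam}, \textsc{PTS-Sigma}, \textsc{PTS-Pair}, \textsc{PTS-Sum}/\textsc{PTS-InjOne}, and \textsc{PTS-Case}: under the lattice reading the grade arithmetic in the side conditions matches symbol for symbol, and the zeroing of a type's contribution (LDC's $\vdash_0$ shorthand, i.e. setting the observer to $\top$) realizes exactly \Dt{}'s truncation of the typing subderivation of $A$. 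For structural slack that \Dt{} distributes differently from LDC, I would appeal to the lattice metatheory already established: Multiplication (Lemma~\ref{DSMult}), Splitting (which holds by idempotence of $\sqcap$, Lemma~\ref{DSSplit}), Weakening (the PTS analogue of Lemma~\ref{DSWeak}), and Substitution (Lemma~\ref{DSubst}), together with the subsumption rules \textsc{PTS-SubL}/\textsc{PTS-SubR} to push a context or an observer up or down as a given \Dt{} rule demands.

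The conversion case and the reduction clause I would treat together. Because the two systems are over the same annotated syntax, each \Dt{} $\beta$-rule matches an LDC $\beta$-rule verbatim, including the side conditions forcing the grade on an introduction form to agree with that on the matching elimination form; the congruence steps of \Dt{} reduction transfer likewise. This discharges the reduction clause. Lifting single steps to the equivalence closure through the congruence rules of Figure~\ref{equality} (applicable since the grade annotations being equated are identical on both sides) shows that \Dt{}'s definitional equality is contained in LDC's $=_\beta$, which is precisely what the \textsc{PTS-Conv} case requires.

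The main obstacle I anticipate is the application rule, and any rule where \Dt{} forms the conclusion context by a meet of the contexts typing the function and the argument while checking the argument at an observer raised by the codomain/annotation join. There I must verify that the needed identities --- associativity, idempotence, and absorption of $\sqcap$ and $\sqcup$, and monotonicity of both under $\sqsubseteq$ --- hold in an \emph{arbitrary}, possibly non-distributive, lattice, and, more delicately, that \Dt{}'s rule never admits a judgment strictly beyond LDC's \textsc{PTS-App}. This second point is where care is needed, because the factorization-style reasoning available for usage (and recorded to fail for LDC($\mathcal{L}$) in the discussion following Lemma~\ref{DSSplit}) is unavailable over lattices; I therefore expect to reconcile the two application rules directly, composing the base LDC rule with a single use of \textsc{PTS-SubR} and checking that the meet produced by \Dt{} is $\sqsubseteq$ the meet demanded by LDC, rather than routing through any distributive identity.
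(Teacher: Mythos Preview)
Your approach is correct and matches the paper's: the paper proves this by a single line, ``By induction on $\text{DDC}^{\top}$ typing judgment,'' which is exactly the rule induction you outline. Your proposal simply fills in more detail than the paper provides; in particular, the anticipated difficulties around the application rule and the need for \textsc{PTS-SubR} massaging are likely more caution than the actual case analysis requires, given how closely the two systems are aligned (the paper notes that LDC($\mathcal{L}$) is essentially SDC from the same source as $\text{DDC}^{\top}$), but none of this detail is wrong.
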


%The theorems in this section show that LDC is a general linear dependency calculus.

\subsection{Comparison with Other Related Calculi}

In this paper, we presented a unified perspective on usage and dependency analyses. There is a precedent to our presentation: \citet{linearmonad} observed that models of linear logic also provide models of Moggi's computational metalanguage. Now, broadly speaking, usage analysis may be seen as fine-grained linear logic in action and dependency analysis as fine-grained computational metalanguage in action. Therefore, our unified perspective on usage and dependency analyses may be seen as a generalization of the observation made by \citet{linearmonad}.

Next, linearity and dependency have been traditionally analyzed as a coeffect and an effect respectively. There is existing work \cite{effcoeff} in literature on combining effects and coeffects. \citet{effcoeff} present a calculus that employs distinct graded modalities for analyzing coeffects and effects, which are then combined by allowing the modalities to interact via graded distributive laws. In contrast, LDC employs the same graded modality for analyzing both linearity and dependency but draws the grades from different algebraic structures during the two analyses. Further, the graded modality employed in LDC is both monadic and comonadic whereas the graded modalities employed in \citet{effcoeff} are either monadic or comonadic but not necessarily both. The main reason behind these differences is that while the motivation of \citet{effcoeff} is a general calculus for combining coeffects and effects, our motivation is a specialized calculus for combining linearity and dependency. Owing to this specialized nature, our calculus does not need multiple modalities or graded distributive laws. Finally, note that the calculus presented in \citet{effcoeff} is simply-typed whereas LDC allows dependent types. 

%However, the key difference between existing work and our approach is that while generally uses different modalities for representing effects and coeffects and then combines them either via distributive laws as in \cite{effcoeff} or via potentials as in \cite{rajani}. We, on the other hand, use the same modality for representing effects and coeffects. While such a treatment of effects and coeffects may not be possible in general, we show that it is possible for interactions like resource usage and information flow.

\section{Conclusion}

We have shown that linearity and dependency analyses can be systematically unified and combined into a single calculus. We presented, LDC, a general calculus for combined linearity and dependency analysis in pure type systems. We showed that linearity and dependency analyses in LDC are sound using a heap semantics. We also showed that LDC subsumes standard calculi for linearity and dependency analyses. In this paper, we focused on the syntactic properties of LDC. In a future work, we plan to explore the semantic properties of the calculus. In particular, we want to find out how semantic models of LDC compare with the categorical models of linear and dependency type systems.

\newpage

\bibliography{Biblio}

\newpage
\appendix

%----------------------------------------------------------------------------------
\section{Join Not Derivable in Graded-Context Type Systems} \label{app2}
%----------------------------------------------------------------------------------

\begin{proposition} \label{nojoin}
Graded-context type systems \cite{ghica,brunel,petricek,atkey,orchard,abel20,grad} cannot derive a monadic join operator.
\end{proposition}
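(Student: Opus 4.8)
The plan is to exhibit a concrete denotational model of graded-context type systems in which some type $ \mspace{2mu} !_{ \ottnt{q_{{\mathrm{1}}}} } \mspace{1mu}   \mspace{2mu} !_{ \ottnt{q_{{\mathrm{2}}}} } \mspace{1mu}  \ottnt{A} $ is inhabited but $ \mspace{2mu} !_{  \ottnt{q_{{\mathrm{1}}}}  \cdot  \ottnt{q_{{\mathrm{2}}}}  } \mspace{1mu}  \ottnt{A} $ is not, so that no closed term of type $   \mspace{2mu} !_{ \ottnt{q_{{\mathrm{1}}}} } \mspace{1mu}   \mspace{2mu} !_{ \ottnt{q_{{\mathrm{2}}}} } \mspace{1mu}  \ottnt{A}     \to   \mspace{2mu} !_{  \ottnt{q_{{\mathrm{1}}}}  \cdot  \ottnt{q_{{\mathrm{2}}}}  } \mspace{1mu}  \ottnt{A}  $ can exist. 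The natural instantiation is to take the parametrizing preordered semiring to be $ \mathcal{Q}_{\text{Lin} } $ (the linearity semiring $\{0,1, \omega \}$), choose $\ottnt{q_{{\mathrm{1}}}} = \ottnt{q_{{\mathrm{2}}}} = 1$, so that $ \ottnt{q_{{\mathrm{1}}}}  \cdot  \ottnt{q_{{\mathrm{2}}}}  = 1$ as well — and then the claim becomes: there is no general term of type $ \mspace{2mu} !_{  1  } \mspace{1mu}   \mspace{2mu} !_{  1  } \mspace{1mu}  \ottnt{A}  \to  \mspace{2mu} !_{  1  } \mspace{1mu}  \ottnt{A} $. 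Actually this specific choice is too easy to refute only by a model where double-boxing carries strictly less information; so the more robust route is to pick $\ottnt{q_{{\mathrm{1}}}}, \ottnt{q_{{\mathrm{2}}}}$ so that $\ottnt{q_{{\mathrm{1}}}} \cdot \ottnt{q_{{\mathrm{2}}}}$ is \emph{larger} (in the usage preorder) than either factor, e.g. take the semiring of natural numbers under multiplication, $\ottnt{q_{{\mathrm{1}}}} = \ottnt{q_{{\mathrm{2}}}} = 2$, $\ottnt{q_{{\mathrm{1}}}} \cdot \ottnt{q_{{\mathrm{2}}}} = 4$. The intuition is that $ \mspace{2mu} !_{  2  } \mspace{1mu}   \mspace{2mu} !_{  2  } \mspace{1mu}  \ottnt{A} $ gives you two copies of a thing which itself yields two copies, and a \emph{join} would have to repackage this as a single source of four copies of $\ottnt{A}$ — which demands the ability to duplicate and interleave in a way the comonadic structure alone never provides.

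The key steps, in order: (1) Define the model. Interpret types as objects in a suitably resource-tracking category — the standard choice is to use a category of preordered-semiring-graded comonadic modalities, e.g. the coKleisli category of the graded exponential over $\mathbf{Set}$ (or $\mathbf{Rel}$), where $ \mspace{2mu} !_{ \ottnt{q} } \mspace{1mu}  \ottnt{A} $ is interpreted as a "multiset of size exactly $\ottnt{q}$" functor, or more simply interpret $ \mspace{2mu} !_{ \ottnt{q} } \mspace{1mu}  \ottnt{A} $ as $\ottnt{A}^{\ottnt{q}}$ (the $\ottnt{q}$-fold product) in a cartesian-closed-like setting but where morphisms are required to use each coordinate the prescribed number of times. (2) Verify soundness: show all the typing and equality rules of graded-context type systems (the rules \rref{GC-ExpIntro}, \rref{GC-ExpElim}, function introduction/elimination, subsumption along $ <: $) are validated by this interpretation — this is where the comonadic/fork structure is used and must be checked to hold, while the monadic join is conspicuously absent. (3) Pick the counterexample type $\ottnt{A}$ — a two-element type like $ \mathbf{Bool} $ — and count: compute $|\llbracket \mspace{2mu} !_{  2  } \mspace{1mu}   \mspace{2mu} !_{  2  } \mspace{1mu}   \mathbf{Bool}  \rrbracket|$ and $|\llbracket \mspace{2mu} !_{  4  } \mspace{1mu}   \mathbf{Bool}  \rrbracket|$ and, more to the point, show there is no morphism $\llbracket \mspace{2mu} !_{  2  } \mspace{1mu}   \mspace{2mu} !_{  2  } \mspace{1mu}   \mathbf{Bool}  \rrbracket \to \llbracket \mspace{2mu} !_{  4  } \mspace{1mu}   \mathbf{Bool}  \rrbracket$ in the model that is natural/uniform in $\ottnt{A}$ — concretely, a polymorphic join must act on the underlying sets, and a size-counting or information-flow argument shows that the image of a "doubly-boxed" element cannot realize all four required copies with the correct correlation structure. (4) Conclude: since the model is sound and contains no such morphism, no graded-context type system can derive a closed term of the join type; hence no monadic join operator is derivable.

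The main obstacle is step (2)–(3): choosing a model that is simultaneously (a) rich enough to be sound for all the structural, subsumption, and comonadic rules that these systems actually have — it is easy to pick a model so degenerate that it also kills something the systems \emph{do} derive, which would invalidate the argument — and (b) still fine-grained enough to separate $ \mspace{2mu} !_{ \ottnt{q_{{\mathrm{1}}}} } \mspace{1mu}   \mspace{2mu} !_{ \ottnt{q_{{\mathrm{2}}}} } \mspace{1mu}  \ottnt{A} $ from $ \mspace{2mu} !_{  \ottnt{q_{{\mathrm{1}}}}  \cdot  \ottnt{q_{{\mathrm{2}}}}  } \mspace{1mu}  \ottnt{A} $. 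The delicate point is that the fork function $   \mspace{2mu} !_{  \ottnt{q_{{\mathrm{1}}}}  \cdot  \ottnt{q_{{\mathrm{2}}}}  } \mspace{1mu}  \ottnt{A}    \to    \mspace{2mu} !_{ \ottnt{q_{{\mathrm{1}}}} } \mspace{1mu}   \mspace{2mu} !_{ \ottnt{q_{{\mathrm{2}}}} } \mspace{1mu}  \ottnt{A}   $ \emph{must} exist in the model (it is derivable in the syntax), so the model must make $ \mspace{2mu} !_{  \ottnt{q_{{\mathrm{1}}}}  \cdot  \ottnt{q_{{\mathrm{2}}}}  } \mspace{1mu}  \ottnt{A} $ "at least as informative as" $ \mspace{2mu} !_{ \ottnt{q_{{\mathrm{1}}}} } \mspace{1mu}   \mspace{2mu} !_{ \ottnt{q_{{\mathrm{2}}}} } \mspace{1mu}  \ottnt{A} $ while strictly \emph{not} conversely — an asymmetry that must be engineered carefully, most naturally via a relational or partial-equivalence-relation model where a $ \mspace{2mu} !_{ \ottnt{q} } \mspace{1mu} $-box at grade $\ottnt{q}$ records a tuple plus a constraint, and the doubly-boxed version imposes strictly weaker cross-copy correlations than the once-boxed version at the product grade. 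I would develop the relational model (interpreting $ \mspace{2mu} !_{ \ottnt{q} } \mspace{1mu}  \ottnt{A} $ via $\ottnt{q}$-indexed families with a partial-equivalence structure) and carry out the counting argument there.
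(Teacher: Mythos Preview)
Your overall strategy---build a sound denotational model and show that a would-be join has no denotation---is exactly what the paper does. But your instantiation is working much harder than necessary, and the relational/PER model you gesture at in the final paragraph is never made concrete enough to be sure it validates all the comonadic structure (in particular the derivable fork) while still separating $!_{q_1}!_{q_2}A$ from $!_{q_1\cdot q_2}A$. As you yourself note, in the obvious $\mathbf{Set}$ model with $!_q A = A^q$ over $\mathbb{N}$, fork is an isomorphism and join exists; escaping this requires the delicate correlation-structure argument you only sketch.

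The paper's key move, which you are missing, is to \emph{engineer the semiring itself} rather than work with a standard one like $\mathbb{N}$ or $\mathcal{Q}_{\text{Lin}}$. It takes a four-element semiring $\{0,1,k_1,k_2\}$ with discrete order in which $k_1 \cdot k_1 = k_2$, and interprets the graded exponential in $\mathbf{Set}$ by $\mathbf{!}(0) = \mathbf{!}(k_1) = {\ast}$ (the constant-terminal functor) and $\mathbf{!}(1) = \mathbf{!}(k_2) = \mathbf{Id}$. This is easily checked to be a $\mathcal{Q}$-graded linear exponential comonad, so the model is sound. Now $\llbracket !_{k_1} !_{k_1}\,\mathbf{Bool} \rrbracket$ is the one-point set while $\llbracket !_{k_2}\,\mathbf{Bool} \rrbracket$ is the two-point set, so every map from the former to the latter is constant---yet a derivable join would have to send $!_{k_1}!_{k_1}\,\mathbf{true}$ and $!_{k_1}!_{k_1}\,\mathbf{false}$ to distinct values. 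Contradiction.

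The contrast: you try to find asymmetry between $!_2!_2 A$ and $!_4 A$ over a ``natural'' semiring, which forces you into exactly the obstacle you flag. The paper sidesteps it by making one grade \emph{destroy} information (terminal functor) while its own square \emph{preserves} it (identity functor), so the asymmetry is built in at the modality level and the ambient category is just $\mathbf{Set}$---no relational overlay needed. Your approach might be completable (e.g.\ via symmetric-power or PER models), but the paper's is a one-paragraph argument.
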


\begin{proof}
Graded-context type systems mentioned in the proposition vary slightly in their design. However, all of them contain a core graded calculus. Below, we first present this Core Graded Calculus, \Gc{}, and thereafter show that \Gc{} cannot derive a monadic join operator.

\Gc{} is parametrized by an arbitrary preordered semiring $ \mathcal{Q}  = (Q, +, \cdot, 0, 1  <: )$. If parametrized by $ \mathcal{Q} $, the calculus is referred to as \Gc($ \mathcal{Q} $). The grammar and typing rules of \Gc($ \mathcal{Q} $) appear in Figures \ref{GCGrammar} and \ref{GCTyping} respectively. The operations on contexts are defined as in Section \ref{subsec:linsimple}. 

\begin{figure}
\begin{align*}
\text{grades}, q \in Q & ::= 0 \: | \: 1 \: | \:  \ottnt{q_{{\mathrm{1}}}}  +  \ottnt{q_{{\mathrm{2}}}}  \: | \:  \ottnt{q_{{\mathrm{1}}}}  \cdot  \ottnt{q_{{\mathrm{2}}}}  \: | \: \ldots \\
\text{types}, A, B & ::=  \mathbf{Bool}  \: | \:  \ottnt{A}  \multimap  \ottnt{B}  \: | \:  \mspace{2mu} !_{ \ottnt{q} } \mspace{1mu}  \ottnt{A}  \\
\text{terms}, a, b & ::= x \: | \:  \lambda  \ottmv{x}  :  \ottnt{A}  .  \ottnt{b}  \: | \:  \ottnt{b}  \:  \ottnt{a}  \: | \:  \mspace{2mu} !_{ \ottnt{q} } \mspace{1mu}  \ottnt{a}  \: | \:  \mathbf{let} \: !_{ \ottnt{q} } \:  \ottmv{x}  \: \mathbf{be} \:  \ottnt{a}  \: \mathbf{in} \:  \ottnt{b}  \\
                   & | \:  \mathbf{true}  \: | \:  \mathbf{false}  \: | \:  \mathbf{if} \:  \ottnt{b}  \: \mathbf{then} \:  \ottnt{a_{{\mathrm{1}}}}  \: \mathbf{else} \:  \ottnt{a_{{\mathrm{2}}}}  \\
\text{contexts}, \Gamma & ::=  \emptyset  \: | \:  \Gamma  ,   \ottmv{x}  :^{ \ottnt{q} }  \ottnt{A}  
\end{align*}
\caption{Grammar of GC($ \mathcal{Q} $)}
\label{GCGrammar}
\end{figure} 

\begin{figure}[h]
\drules[GC]{$ \Gamma  \vdash  \ottnt{a}  :  \ottnt{A} $}{Typing Rules}{Var,Lam,App,ExpIntro,ExpElim,True,False,If,Sub}
\caption{Typing rules for GC($ \mathcal{Q} $)}
\label{GCTyping}
\end{figure}

Now, towards contradiction, assume that \Gc($ \mathcal{Q} $) can derive a monadic join operator. Then, for any $\ottnt{A}$, there exists a closed non-constant function of type $   \mspace{2mu} !_{ \ottnt{q_{{\mathrm{1}}}} } \mspace{1mu}   \mspace{2mu} !_{ \ottnt{q_{{\mathrm{2}}}} } \mspace{1mu}  \ottnt{A}     \to   \mspace{2mu} !_{  \ottnt{q_{{\mathrm{1}}}}  \cdot  \ottnt{q_{{\mathrm{2}}}}  } \mspace{1mu}  \ottnt{A}  $ for all $\ottnt{q_{{\mathrm{1}}}} , \ottnt{q_{{\mathrm{2}}}} \in  \mathcal{Q} $. By model-theoretic arguments, we shall show that for some $ \mathcal{Q} $, $\ottnt{A}$, $\ottnt{q_{{\mathrm{1}}}}$ and $\ottnt{q_{{\mathrm{2}}}}$, any function having such a type must be constant.

Fix $ \mathcal{Q} $ to be the following preordered semiring. The underlying set is $\{ 0, 1, k_1, k_2 \}$ and the semiring operations and preorder are defined as follows:
\begin{align*}
 \ottnt{q_{{\mathrm{1}}}}  +  \ottnt{q_{{\mathrm{2}}}}  & \triangleq k_2 \text{ if } \ottnt{q_{{\mathrm{1}}}}, \ottnt{q_{{\mathrm{2}}}} \notin \{ 0 \}\\
 \ottnt{q_{{\mathrm{1}}}}  \cdot  \ottnt{q_{{\mathrm{2}}}}  & \triangleq k_2 \text{ if } \ottnt{q_{{\mathrm{1}}}}, \ottnt{q_{{\mathrm{2}}}} \notin \{ 0 , 1 \}\\
 \ottnt{q_{{\mathrm{1}}}}  <:  \ottnt{q_{{\mathrm{2}}}}  & \triangleq \ottnt{q_{{\mathrm{1}}}} = \ottnt{q_{{\mathrm{2}}}}  
\end{align*}
Check that the above definitions satisfy the axioms for preordered semirings. Now, note that $ \ottnt{k_{{\mathrm{1}}}}  \cdot  \ottnt{k_{{\mathrm{1}}}}  = \ottnt{k_{{\mathrm{2}}}}$. Below, we shall show that there exists no non-constant function of type $   \mspace{2mu} !_{ \ottnt{k_{{\mathrm{1}}}} } \mspace{1mu}   \mspace{2mu} !_{ \ottnt{k_{{\mathrm{1}}}} } \mspace{1mu}   \mathbf{Bool}      \to   \mspace{2mu} !_{ \ottnt{k_{{\mathrm{2}}}} } \mspace{1mu}   \mathbf{Bool}   $ in \Gc($ \mathcal{Q} $).

Towards contradiction, suppose such a function, $  \emptyset   \vdash  \ottnt{f}  : \:     \mspace{2mu} !_{ \ottnt{k_{{\mathrm{1}}}} } \mspace{1mu}   \mspace{2mu} !_{ \ottnt{k_{{\mathrm{1}}}} } \mspace{1mu}   \mathbf{Bool}      \to   \mspace{2mu} !_{ \ottnt{k_{{\mathrm{2}}}} } \mspace{1mu}   \mathbf{Bool}    $ exists. Then, $ \ottnt{f}  \:   (   \mspace{2mu} !_{ \ottnt{k_{{\mathrm{1}}}} } \mspace{1mu}   \mspace{2mu} !_{ \ottnt{k_{{\mathrm{1}}}} } \mspace{1mu}   \mathbf{true}     )  $ and $ \ottnt{f}  \:   (   \mspace{2mu} !_{ \ottnt{k_{{\mathrm{1}}}} } \mspace{1mu}   \mspace{2mu} !_{ \ottnt{k_{{\mathrm{1}}}} } \mspace{1mu}   \mathbf{false}     )  $ would reduce to different values (assume a call-by-value reduction). Without loss of generality, say $  \ottnt{f}  \:   (   \mspace{2mu} !_{ \ottnt{k_{{\mathrm{1}}}} } \mspace{1mu}   \mspace{2mu} !_{ \ottnt{k_{{\mathrm{1}}}} } \mspace{1mu}   \mathbf{true}     )    \leadsto^{\ast} \:    \mspace{2mu} !_{ \ottnt{k_{{\mathrm{2}}}} } \mspace{1mu}   \mathbf{true}   $ and $  \ottnt{f}  \:   (   \mspace{2mu} !_{ \ottnt{k_{{\mathrm{1}}}} } \mspace{1mu}   \mspace{2mu} !_{ \ottnt{k_{{\mathrm{1}}}} } \mspace{1mu}   \mathbf{false}     )    \leadsto^{\ast} \:    \mspace{2mu} !_{ \ottnt{k_{{\mathrm{2}}}} } \mspace{1mu}   \mathbf{false}   $, where $\leadsto^{\ast}$ is the multistep call-by-value reduction relation. Now, for any sound interpretation, $\llbracket \_ \rrbracket_{\mathcal{M}}$, of \Gc($ \mathcal{Q} $) in a model $\mathcal{M}$, we should have: \[ \llbracket    \ottnt{f}  \:   (   \mspace{2mu} !_{ \ottnt{k_{{\mathrm{1}}}} } \mspace{1mu}   \mspace{2mu} !_{ \ottnt{k_{{\mathrm{1}}}} } \mspace{1mu}   \mathbf{true}     )     \rrbracket_{\mathcal{M} }  =  \llbracket    \mspace{2mu} !_{ \ottnt{k_{{\mathrm{2}}}} } \mspace{1mu}   \mathbf{true}     \rrbracket_{\mathcal{M} } \text{ and } \llbracket    \ottnt{f}  \:   (   \mspace{2mu} !_{ \ottnt{k_{{\mathrm{1}}}} } \mspace{1mu}   \mspace{2mu} !_{ \ottnt{k_{{\mathrm{1}}}} } \mspace{1mu}   \mathbf{false}     )     \rrbracket_{\mathcal{M} }  =  \llbracket    \mspace{2mu} !_{ \ottnt{k_{{\mathrm{2}}}} } \mspace{1mu}   \mathbf{false}     \rrbracket_{\mathcal{M} } \] We construct a sound model of \Gc($ \mathcal{Q} $) where these equalities lead to a contradiction.

Let $\mathbf{Set}$ be the category of sets and functions. Note that $\mathbf{Set}$ is a symmetric monoidal category with the monoidal product given by cartesian product. Now, any $ \mathcal{Q} $-graded linear exponential comonad on $\mathbf{Set}$ provides a sound interpretation of \Gc($ \mathcal{Q} $) \cite{cokatsumata}. We define $\mathbf{!}$, a $ \mathcal{Q} $-graded linear exponential comonad on $\mathbf{Set}$, as follows:
\begin{align*}
\mathbf{!}(0) = \mathbf{!}(k_1) & = \mathbf{\ast} \\
\mathbf{!}(1) = \mathbf{!}(k_2) & = \mathbf{Id}
\end{align*}
Here, $\mathbf{Id}$ is the identity functor and $\mathbf{\ast}$ is the functor that maps every object to the terminal object. The morphisms associated with $\mathbf{!}$ are as expected.

Now, interpreting using $\mathbf{!}$, we have: $$ \llbracket    \ottnt{f}  \:   (   \mspace{2mu} !_{ \ottnt{k_{{\mathrm{1}}}} } \mspace{1mu}   \mspace{2mu} !_{ \ottnt{k_{{\mathrm{1}}}} } \mspace{1mu}   \mathbf{true}     )     \rrbracket_{(\mathbf{Set},\mathbf{!}) }  =  \llbracket    \ottnt{f}  \:   (   \mspace{2mu} !_{ \ottnt{k_{{\mathrm{1}}}} } \mspace{1mu}   \mspace{2mu} !_{ \ottnt{k_{{\mathrm{1}}}} } \mspace{1mu}   \mathbf{false}     )     \rrbracket_{(\mathbf{Set},\mathbf{!}) }  = \text{app} \: \circ \langle  \llbracket  \ottnt{f}  \rrbracket_{(\mathbf{Set},\mathbf{!}) }  , \langle \rangle \rangle $$
But $ \llbracket    \mspace{2mu} !_{ \ottnt{k_{{\mathrm{2}}}} } \mspace{1mu}   \mathbf{true}     \rrbracket_{(\mathbf{Set},\mathbf{!}) }  \neq  \llbracket    \mspace{2mu} !_{ \ottnt{k_{{\mathrm{2}}}} } \mspace{1mu}   \mathbf{false}     \rrbracket_{(\mathbf{Set},\mathbf{!}) } $. A contradiction.

\end{proof}

%---------------------------------------------------------------------------------
\section{Linearity Analysis in Simply-Typed LDC}
%---------------------------------------------------------------------------------

\begin{lemma}[Multiplication (Lemma \ref{BLSMult})] \label{BLSMultP}
If $ \Gamma  \vdash  \ottnt{a}  :^{ \ottnt{q} }  \ottnt{A} $, then $  \ottnt{r_{{\mathrm{0}}}}  \cdot  \Gamma   \vdash  \ottnt{a}  :^{  \ottnt{r_{{\mathrm{0}}}}  \cdot  \ottnt{q}  }  \ottnt{A} $.
\end{lemma}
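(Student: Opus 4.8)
The plan is to prove the statement by structural induction on the derivation of $ \Gamma \vdash a :^q A $, rebuilding rule by rule a derivation of $ r_0 \cdot \Gamma \vdash a :^{ r_0 \cdot q } A $. Before starting the induction it pays to isolate the three facts about the parametrizing preordered semiring $ \mathcal{Q}_{\mathbb{N}} $ that do all the work, each lifted pointwise from grades to graded contexts: (i) associativity of multiplication, so that nested scalings re-associate, e.g.\ $ r_0 \cdot ( q \cdot r ) = ( r_0 \cdot q ) \cdot r $; (ii) left-distributivity of multiplication over addition, so that $ r_0 \cdot ( \Gamma_{{\mathrm{1}}} + \Gamma_{{\mathrm{2}}} ) = ( r_0 \cdot \Gamma_{{\mathrm{1}}} ) + ( r_0 \cdot \Gamma_{{\mathrm{2}}} ) $; and (iii) monotonicity of multiplication with respect to $ <: $, hence $ \Gamma <: \Gamma' $ implies $ r_0 \cdot \Gamma <: r_0 \cdot \Gamma' $. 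Two bookkeeping observations round this out: $ \lfloor r_0 \cdot \Gamma \rfloor = \lfloor \Gamma \rfloor $, so every side condition that speaks only of underlying ungraded contexts survives scaling; and the grade annotations sitting inside terms (the $ r $ on a $ \lambda $, the $ q_0 $ on an elimination form) are never touched, so conditions such as $ q_0 <: 1 $ carry over verbatim.

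For the base case \rref{ST-Var}, scaling a judgment $ x :^q A \vdash x :^q A $ (with the remaining assumptions at grade $ 0 $) yields $ x :^{ r_0 \cdot q } A \vdash x :^{ r_0 \cdot q } A $, again an instance of \rref{ST-Var}; \rref{ST-Unit}, \rref{ST-InjOne} and \rref{ST-InjTwo} are similarly immediate. For the structural rules one applies the induction hypothesis to the premises and re-applies the rule. For rules that combine contexts additively --- \rref{ST-App}, \rref{ST-Pair}, \rref{ST-LetPair}, \rref{ST-LetUnit}, \rref{ST-Case} --- the conclusion's context is reassembled with fact~(ii). For rules whose premise already carries a scaled observer grade (in \rref{ST-Lam} the binder is held at $ q \cdot r $; in the elimination rules the scrutinee is typed at a grade built from $ q $ and $ q_0 $), one invokes the induction hypothesis at multiplier $ r_0 $ and then re-associates via fact~(i) so that the rebuilt premise has exactly the shape the rule demands, i.e.\ observer grade $ r_0 \cdot q $ and binder grade $ ( r_0 \cdot q ) \cdot r $, keeping the auxiliary grade $ q_0 $ fixed so that $ q_0 <: 1 $ still holds. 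For \rref{ST-SubL} and \rref{ST-SubR} the induction hypothesis together with fact~(iii) supplies the scaled subsumption premise, and the same rule closes the case.

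I do not expect a genuine obstacle: this is the first lemma of the development and is proved purely syntactically. The one place that deserves a little care is the cluster of elimination rules \rref{ST-LetPair}, \rref{ST-LetUnit} and \rref{ST-Case}, where the premises mix the observer grade $ q $, a binder grade, and the auxiliary grade $ q_0 <: 1 $; one must be disciplined about which multiplier the induction hypothesis is applied with and then push the scaling through with associativity and distributivity until the premises match the rule. The degenerate case $ r_0 = 0 $ is handled uniformly by the same induction: every grade collapses to $ 0 $, and the rebuilt derivation witnesses $ 0 \cdot \Gamma \vdash a :^{ 0 } A $, in line with the remark that a $ 0 $-world judgment carries no more than ordinary simple typing. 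Finally, essentially the same argument, reading $ \sqcap, \sqcup, \top, \bot, \sqsubseteq $ for $ +, \cdot, 0, 1, <: $, proves the dependency counterpart (Lemma~\ref{DSMult}); there, since join need not distribute over meet in an arbitrary lattice, the step that reassembles the context in the application case instead relies on the distributive inequality $ m_0 \sqcup ( \Gamma_{{\mathrm{1}}} \sqcap \Gamma_{{\mathrm{2}}} ) \sqsubseteq ( m_0 \sqcup \Gamma_{{\mathrm{1}}} ) \sqcap ( m_0 \sqcup \Gamma_{{\mathrm{2}}} ) $, which holds in every lattice, followed by a context subsumption step.
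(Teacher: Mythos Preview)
Your proposal is correct and matches the paper's proof essentially step for step: induction on the typing derivation, using associativity of $\cdot$ for the binder-grade cases, distributivity for the additive context-combining rules, and monotonicity for \rref{ST-SubL}/\rref{ST-SubR}. Your aside about the lattice version needing the distributive inequality followed by \rref{ST-SubLD} is also exactly what the paper does in its proof of Lemma~\ref{DSMult}.
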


\begin{proof}

By induction on $ \Gamma  \vdash  \ottnt{a}  :^{ \ottnt{q} }  \ottnt{A} $.

\begin{itemize}

\item \Rref{ST-Var}. Have: $       0   \cdot  \Gamma_{{\mathrm{1}}}    ,   \ottmv{x}  :^{ \ottnt{q} }  \ottnt{A}     ,    0   \cdot  \Gamma_{{\mathrm{2}}}    \vdash   \ottmv{x}   :^{ \ottnt{q} }  \ottnt{A} $.  \\
Need to show: $       0   \cdot  \Gamma_{{\mathrm{1}}}    ,   \ottmv{x}  :^{  \ottnt{r_{{\mathrm{0}}}}  \cdot  \ottnt{q}  }  \ottnt{A}     ,    0   \cdot  \Gamma_{{\mathrm{2}}}    \vdash   \ottmv{x}   :^{  \ottnt{r_{{\mathrm{0}}}}  \cdot  \ottnt{q}  }  \ottnt{A} $. \\
This case  follows by \rref{ST-Var}.
 
\item \Rref{ST-Lam}. Have: $ \Gamma  \vdash   \lambda^{ \ottnt{r} }  \ottmv{x}  :  \ottnt{A}  .  \ottnt{b}   :^{ \ottnt{q} }   {}^{ \ottnt{r} }\!  \ottnt{A}  \to  \ottnt{B}  $ where $  \Gamma  ,   \ottmv{x}  :^{  \ottnt{q}  \cdot  \ottnt{r}  }  \ottnt{A}    \vdash  \ottnt{b}  :^{ \ottnt{q} }  \ottnt{B} $. \\
Need to show: $  \ottnt{r_{{\mathrm{0}}}}  \cdot  \Gamma   \vdash   \lambda^{ \ottnt{r} }  \ottmv{x}  :  \ottnt{A}  .  \ottnt{b}   :^{   \ottnt{r_{{\mathrm{0}}}}  \cdot  \ottnt{q}   }   {}^{ \ottnt{r} }\!  \ottnt{A}  \to  \ottnt{B}  $. \\
By IH, $   \ottnt{r_{{\mathrm{0}}}}  \cdot  \Gamma   ,   \ottmv{x}  :^{  \ottnt{r_{{\mathrm{0}}}}  \cdot   (   \ottnt{q}  \cdot  \ottnt{r}   )   }  \ottnt{A}    \vdash  \ottnt{b}  :^{  \ottnt{r_{{\mathrm{0}}}}  \cdot  \ottnt{q}  }  \ottnt{B} $. \\
By associativity of multiplication, $ \ottnt{r_{{\mathrm{0}}}}  \cdot   (   \ottnt{q}  \cdot  \ottnt{r}   )   =   (   \ottnt{r_{{\mathrm{0}}}}  \cdot  \ottnt{q}   )   \cdot  \ottnt{r} $.\\ This case, then, follows by \rref{ST-Lam}.

\item \Rref{ST-App}. Have: $  \Gamma_{{\mathrm{1}}}  +  \Gamma_{{\mathrm{2}}}   \vdash   \ottnt{b}  \:  \ottnt{a} ^{ \ottnt{r} }   :^{ \ottnt{q} }  \ottnt{B} $ where $ \Gamma_{{\mathrm{1}}}  \vdash  \ottnt{b}  :^{ \ottnt{q} }   {}^{ \ottnt{r} }\!  \ottnt{A}  \to  \ottnt{B}  $ and $ \Gamma_{{\mathrm{2}}}  \vdash  \ottnt{a}  :^{  \ottnt{q}  \cdot  \ottnt{r}  }  \ottnt{A} $.\\
Need to show: $  \ottnt{r_{{\mathrm{0}}}}  \cdot   (   \Gamma_{{\mathrm{1}}}  +  \Gamma_{{\mathrm{2}}}   )    \vdash   \ottnt{b}  \:  \ottnt{a} ^{ \ottnt{r} }   :^{  \ottnt{r_{{\mathrm{0}}}}  \cdot  \ottnt{q}  }  \ottnt{B} $.\\
By IH, $  \ottnt{r_{{\mathrm{0}}}}  \cdot  \Gamma_{{\mathrm{1}}}   \vdash  \ottnt{b}  :^{  \ottnt{r_{{\mathrm{0}}}}  \cdot  \ottnt{q}  }   {}^{ \ottnt{r} }\!  \ottnt{A}  \to  \ottnt{B}  $ and $  \ottnt{r_{{\mathrm{0}}}}  \cdot  \Gamma_{{\mathrm{2}}}   \vdash  \ottnt{a}  :^{  \ottnt{r_{{\mathrm{0}}}}  \cdot   (   \ottnt{q}  \cdot  \ottnt{r}   )   }  \ottnt{A} $. \\
This case, then, follows by \rref{ST-App}, using associativity and distributivity properties.

\item \Rref{ST-Pair}. Have: $  \Gamma_{{\mathrm{1}}}  +  \Gamma_{{\mathrm{2}}}   \vdash   (  \ottnt{a_{{\mathrm{1}}}} ^{ \ottnt{r} } ,  \ottnt{a_{{\mathrm{2}}}}  )   :^{ \ottnt{q} }   {}^{ \ottnt{r} }\!  \ottnt{A_{{\mathrm{1}}}}  \: \times \:  \ottnt{A_{{\mathrm{2}}}}  $ where $ \Gamma_{{\mathrm{1}}}  \vdash  \ottnt{a_{{\mathrm{1}}}}  :^{  \ottnt{q}  \cdot  \ottnt{r}  }  \ottnt{A_{{\mathrm{1}}}} $ and $ \Gamma_{{\mathrm{2}}}  \vdash  \ottnt{a_{{\mathrm{2}}}}  :^{ \ottnt{q} }  \ottnt{A_{{\mathrm{2}}}} $.\\
Need to show: $  \ottnt{r_{{\mathrm{0}}}}  \cdot   (   \Gamma_{{\mathrm{1}}}  +  \Gamma_{{\mathrm{2}}}   )    \vdash   (  \ottnt{a_{{\mathrm{1}}}} ^{ \ottnt{r} } ,  \ottnt{a_{{\mathrm{2}}}}  )   :^{  \ottnt{r_{{\mathrm{0}}}}  \cdot  \ottnt{q}  }   {}^{ \ottnt{r} }\!  \ottnt{A_{{\mathrm{1}}}}  \: \times \:  \ottnt{A_{{\mathrm{2}}}}  $.\\
By IH, $  \ottnt{r_{{\mathrm{0}}}}  \cdot  \Gamma_{{\mathrm{1}}}   \vdash  \ottnt{a_{{\mathrm{1}}}}  :^{  \ottnt{r_{{\mathrm{0}}}}  \cdot   (   \ottnt{q}  \cdot  \ottnt{r}   )   }  \ottnt{A_{{\mathrm{1}}}} $ and $  \ottnt{r_{{\mathrm{0}}}}  \cdot  \Gamma_{{\mathrm{2}}}   \vdash  \ottnt{a_{{\mathrm{2}}}}  :^{  \ottnt{r_{{\mathrm{0}}}}  \cdot  \ottnt{q}  }  \ottnt{A_{{\mathrm{2}}}} $.\\ 
This case, then, follows by \rref{ST-Pair}, using associativity and distributivity properties.

\item \Rref{ST-LetPair}. Have: $  \Gamma_{{\mathrm{1}}}  +  \Gamma_{{\mathrm{2}}}   \vdash   \mathbf{let}_{ \ottnt{q_{{\mathrm{0}}}} } \: (  \ottmv{x} ^{ \ottnt{r} } ,  \ottmv{y}  ) \: \mathbf{be} \:  \ottnt{a}  \: \mathbf{in} \:  \ottnt{b}   :^{ \ottnt{q} }  \ottnt{B} $ where $ \Gamma_{{\mathrm{1}}}  \vdash  \ottnt{a}  :^{  \ottnt{q}  \cdot  \ottnt{q_{{\mathrm{0}}}}  }   {}^{ \ottnt{r} }\!  \ottnt{A_{{\mathrm{1}}}}  \: \times \:  \ottnt{A_{{\mathrm{2}}}}  $ and $    \Gamma_{{\mathrm{2}}}  ,   \ottmv{x}  :^{  \ottnt{q}  \cdot    \ottnt{q_{{\mathrm{0}}}}  \cdot  \ottnt{r}    }  \ottnt{A_{{\mathrm{1}}}}     ,   \ottmv{y}  :^{  \ottnt{q}  \cdot  \ottnt{q_{{\mathrm{0}}}}  }  \ottnt{A_{{\mathrm{2}}}}    \vdash  \ottnt{b}  :^{ \ottnt{q} }  \ottnt{B} $.\\
Need to show: $  \ottnt{r_{{\mathrm{0}}}}  \cdot   (   \Gamma_{{\mathrm{1}}}  +  \Gamma_{{\mathrm{2}}}   )    \vdash   \mathbf{let}_{ \ottnt{q_{{\mathrm{0}}}} } \: (  \ottmv{x} ^{ \ottnt{r} } ,  \ottmv{y}  ) \: \mathbf{be} \:  \ottnt{a}  \: \mathbf{in} \:  \ottnt{b}   :^{  \ottnt{r_{{\mathrm{0}}}}  \cdot  \ottnt{q}  }  \ottnt{B} $.\\
By IH, $  \ottnt{r_{{\mathrm{0}}}}  \cdot  \Gamma_{{\mathrm{1}}}   \vdash  \ottnt{a}  :^{  \ottnt{r_{{\mathrm{0}}}}  \cdot   (   \ottnt{q}  \cdot  \ottnt{q_{{\mathrm{0}}}}   )   }   {}^{ \ottnt{r} }\!  \ottnt{A_{{\mathrm{1}}}}  \: \times \:  \ottnt{A_{{\mathrm{2}}}}  $ and $   \ottnt{r_{{\mathrm{0}}}}  \cdot  \Gamma_{{\mathrm{2}}}   ,     \ottmv{x}  :^{  \ottnt{r_{{\mathrm{0}}}}  \cdot   (   \ottnt{q}  \cdot    \ottnt{q_{{\mathrm{0}}}}  \cdot  \ottnt{r}     )   }  \ottnt{A_{{\mathrm{1}}}}   ,   \ottmv{y}  :^{  \ottnt{r_{{\mathrm{0}}}}  \cdot   (   \ottnt{q}  \cdot  \ottnt{q_{{\mathrm{0}}}}   )   }  \ottnt{A_{{\mathrm{2}}}}      \vdash  \ottnt{b}  :^{  \ottnt{r_{{\mathrm{0}}}}  \cdot  \ottnt{q}  }  \ottnt{B} $.\\
This case, then, follows by \rref{ST-LetPair}, using associativity and distributivity properties.

\item \Rref{ST-Unit}. Have: $   0   \cdot  \Gamma   \vdash   \mathbf{unit}   :^{ \ottnt{q} }   \mathbf{Unit}  $.\\
Need to show: $  \ottnt{r_{{\mathrm{0}}}}  \cdot   (    0   \cdot  \Gamma   )    \vdash   \mathbf{unit}   :^{  \ottnt{r_{{\mathrm{0}}}}  \cdot  \ottnt{q}  }   \mathbf{Unit}  $.\\
This case follows by \rref{ST-Unit}, using associativity and identity properties.

\item \Rref{ST-LetUnit}. Have: $  \Gamma_{{\mathrm{1}}}  +  \Gamma_{{\mathrm{2}}}   \vdash   \mathbf{let}_{ \ottnt{q_{{\mathrm{0}}}} } \: \mathbf{unit} \: \mathbf{be} \:  \ottnt{a}  \: \mathbf{in} \:  \ottnt{b}   :^{ \ottnt{q} }  \ottnt{B} $ where $ \Gamma_{{\mathrm{1}}}  \vdash  \ottnt{a}  :^{  \ottnt{q}  \cdot  \ottnt{q_{{\mathrm{0}}}}  }   \mathbf{Unit}  $ and $ \Gamma_{{\mathrm{2}}}  \vdash  \ottnt{b}  :^{ \ottnt{q} }  \ottnt{B} $.\\
Need to show: $  \ottnt{r_{{\mathrm{0}}}}  \cdot   (   \Gamma_{{\mathrm{1}}}  +  \Gamma_{{\mathrm{2}}}   )    \vdash   \mathbf{let}_{ \ottnt{q_{{\mathrm{0}}}} } \: \mathbf{unit} \: \mathbf{be} \:  \ottnt{a}  \: \mathbf{in} \:  \ottnt{b}   :^{  \ottnt{r_{{\mathrm{0}}}}  \cdot  \ottnt{q}  }  \ottnt{B} $.\\
By IH, $  \ottnt{r_{{\mathrm{0}}}}  \cdot  \Gamma_{{\mathrm{1}}}   \vdash  \ottnt{a}  :^{  \ottnt{r_{{\mathrm{0}}}}  \cdot   (   \ottnt{q}  \cdot  \ottnt{q_{{\mathrm{0}}}}   )   }   \mathbf{Unit}  $ and $  \ottnt{r_{{\mathrm{0}}}}  \cdot  \Gamma_{{\mathrm{2}}}   \vdash  \ottnt{b}  :^{  \ottnt{r_{{\mathrm{0}}}}  \cdot  \ottnt{q}  }  \ottnt{B} $.\\
This case, then, follows by \rref{ST-LetUnit}, using associativity and distributivity properties.

\item \Rref{ST-Inj1}. Have $ \Gamma  \vdash   \mathbf{inj}_1 \:  \ottnt{a_{{\mathrm{1}}}}   :^{ \ottnt{q} }   \ottnt{A_{{\mathrm{1}}}}  +  \ottnt{A_{{\mathrm{2}}}}  $ where $ \Gamma  \vdash  \ottnt{a_{{\mathrm{1}}}}  :^{ \ottnt{q} }  \ottnt{A_{{\mathrm{1}}}} $.\\
Need to show: $  \ottnt{r_{{\mathrm{0}}}}  \cdot  \Gamma   \vdash   \mathbf{inj}_1 \:  \ottnt{a_{{\mathrm{1}}}}   :^{  \ottnt{r_{{\mathrm{0}}}}  \cdot  \ottnt{q}  }   \ottnt{A_{{\mathrm{1}}}}  +  \ottnt{A_{{\mathrm{2}}}}  $.\\
By IH, $  \ottnt{r_{{\mathrm{0}}}}  \cdot  \Gamma   \vdash  \ottnt{a_{{\mathrm{1}}}}  :^{  \ottnt{r_{{\mathrm{0}}}}  \cdot  \ottnt{q}  }  \ottnt{A_{{\mathrm{1}}}} $.\\
This case, then, follows by \rref{ST-Inj1}.

\item \Rref{ST-Inj2}. Similar to \rref{ST-Inj1}.

\item \Rref{ST-Case}. Have: $  \Gamma_{{\mathrm{1}}}  +  \Gamma_{{\mathrm{2}}}   \vdash   \mathbf{case}_{ \ottnt{q_{{\mathrm{0}}}} } \:  \ottnt{a}  \: \mathbf{of} \:  \ottmv{x_{{\mathrm{1}}}}  .  \ottnt{b_{{\mathrm{1}}}}  \: ; \:  \ottmv{x_{{\mathrm{2}}}}  .  \ottnt{b_{{\mathrm{2}}}}   :^{ \ottnt{q} }  \ottnt{B} $ where $ \Gamma_{{\mathrm{1}}}  \vdash  \ottnt{a}  :^{  \ottnt{q}  \cdot  \ottnt{q_{{\mathrm{0}}}}  }   \ottnt{A_{{\mathrm{1}}}}  +  \ottnt{A_{{\mathrm{2}}}}  $ and $  \Gamma_{{\mathrm{2}}}  ,   \ottmv{x_{{\mathrm{1}}}}  :^{  \ottnt{q}  \cdot  \ottnt{q_{{\mathrm{0}}}}  }  \ottnt{A_{{\mathrm{1}}}}    \vdash  \ottnt{b_{{\mathrm{1}}}}  :^{ \ottnt{q} }  \ottnt{B} $ and $  \Gamma_{{\mathrm{2}}}  ,   \ottmv{x_{{\mathrm{2}}}}  :^{  \ottnt{q}  \cdot  \ottnt{q_{{\mathrm{0}}}}  }  \ottnt{A_{{\mathrm{2}}}}    \vdash  \ottnt{b_{{\mathrm{2}}}}  :^{ \ottnt{q} }  \ottnt{B} $.\\
Need to show: $  \ottnt{r_{{\mathrm{0}}}}  \cdot   (   \Gamma_{{\mathrm{1}}}  +  \Gamma_{{\mathrm{2}}}   )    \vdash   \mathbf{case}_{ \ottnt{q_{{\mathrm{0}}}} } \:  \ottnt{a}  \: \mathbf{of} \:  \ottmv{x_{{\mathrm{1}}}}  .  \ottnt{b_{{\mathrm{1}}}}  \: ; \:  \ottmv{x_{{\mathrm{2}}}}  .  \ottnt{b_{{\mathrm{2}}}}   :^{  \ottnt{r_{{\mathrm{0}}}}  \cdot  \ottnt{q}  }  \ottnt{B} $.\\
By IH, $  \ottnt{r_{{\mathrm{0}}}}  \cdot  \Gamma_{{\mathrm{1}}}   \vdash  \ottnt{a}  :^{  \ottnt{r_{{\mathrm{0}}}}  \cdot   (   \ottnt{q}  \cdot  \ottnt{q_{{\mathrm{0}}}}   )   }   \ottnt{A_{{\mathrm{1}}}}  +  \ottnt{A_{{\mathrm{2}}}}  $ and $   \ottnt{r_{{\mathrm{0}}}}  \cdot  \Gamma_{{\mathrm{2}}}   ,   \ottmv{x_{{\mathrm{1}}}}  :^{  \ottnt{r_{{\mathrm{0}}}}  \cdot   (   \ottnt{q}  \cdot  \ottnt{q_{{\mathrm{0}}}}   )   }  \ottnt{A_{{\mathrm{1}}}}    \vdash  \ottnt{b_{{\mathrm{1}}}}  :^{  \ottnt{r_{{\mathrm{0}}}}  \cdot  \ottnt{q}  }  \ottnt{B} $ and $   \ottnt{r_{{\mathrm{0}}}}  \cdot  \Gamma_{{\mathrm{2}}}   ,   \ottmv{x_{{\mathrm{2}}}}  :^{  \ottnt{r_{{\mathrm{0}}}}  \cdot   (   \ottnt{q}  \cdot  \ottnt{q_{{\mathrm{0}}}}   )   }  \ottnt{A_{{\mathrm{2}}}}    \vdash  \ottnt{b_{{\mathrm{2}}}}  :^{  \ottnt{r_{{\mathrm{0}}}}  \cdot  \ottnt{q}  }  \ottnt{B} $.\\
This case, then, follows by \rref{ST-Case}, using associativity and distributivity properties.

\item \Rref{ST-SubL}. Have: $ \Gamma  \vdash  \ottnt{a}  :^{ \ottnt{q} }  \ottnt{A} $ where $ \Gamma'  \vdash  \ottnt{a}  :^{ \ottnt{q} }  \ottnt{A} $ and $ \Gamma  <:  \Gamma' $. \\
Need to show: $  \ottnt{r_{{\mathrm{0}}}}  \cdot  \Gamma   \vdash  \ottnt{a}  :^{  \ottnt{r_{{\mathrm{0}}}}  \cdot  \ottnt{q}  }  \ottnt{A} $.\\
By IH, $  \ottnt{r_{{\mathrm{0}}}}  \cdot  \Gamma'   \vdash  \ottnt{a}  :^{  \ottnt{r_{{\mathrm{0}}}}  \cdot  \ottnt{q}  }  \ottnt{A} $.\\
Next, $  \ottnt{r_{{\mathrm{0}}}}  \cdot  \Gamma   <:   \ottnt{r_{{\mathrm{0}}}}  \cdot  \Gamma'  $, since $ \Gamma  <:  \Gamma' $.\\
This case, then, follows by \rref{ST-SubL}.

\item \Rref{ST-SubR}. Have: $ \Gamma  \vdash  \ottnt{a}  :^{ \ottnt{q'} }  \ottnt{A} $ where $ \Gamma  \vdash  \ottnt{a}  :^{ \ottnt{q} }  \ottnt{A} $ and $ \ottnt{q}  <:  \ottnt{q'} $.\\
Need to show: $  \ottnt{r_{{\mathrm{0}}}}  \cdot  \Gamma   \vdash  \ottnt{a}  :^{  \ottnt{r_{{\mathrm{0}}}}  \cdot  \ottnt{q'}  }  \ottnt{A} $.\\
By IH, $  \ottnt{r_{{\mathrm{0}}}}  \cdot  \Gamma   \vdash  \ottnt{a}  :^{  \ottnt{r_{{\mathrm{0}}}}  \cdot  \ottnt{q}  }  \ottnt{A} $.\\
Next, $  \ottnt{r_{{\mathrm{0}}}}  \cdot  \ottnt{q}   <:   \ottnt{r_{{\mathrm{0}}}}  \cdot  \ottnt{q'}  $, since $ \ottnt{q}  <:  \ottnt{q'} $.\\
This case, then, follows by \rref{ST-SubR}.

\end{itemize}
\end{proof}

%------------------------------------------------------------------------------------------

\begin{lemma}[Factorization (Lemma \ref{BLSFact})] \label{BLSFactP}
If $ \Gamma  \vdash  \ottnt{a}  :^{ \ottnt{q} }  \ottnt{A} $ and $q \neq 0$, then there exists $\Gamma'$ such that $ \Gamma'  \vdash  \ottnt{a}  :^{  1  }  \ottnt{A} $ and $ \Gamma  <:   \ottnt{q}  \cdot  \Gamma'  $.
\end{lemma}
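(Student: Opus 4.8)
The plan is to argue by induction on the derivation of $\Gamma \vdash a :^q A$, using that in both $\mathbb{N}_{=}$ and $\mathbb{N}_{\geq}$ the hypothesis $q\neq 0$ forces $q$ to be a natural number $\geq 1$, and that multiplication in $\mathbb{N}$ has no zero divisors. Throughout I exploit that $<:$, $+$ and $\cdot$ on contexts are pointwise liftings of the operations of the preordered semiring, so that $+$ and $\cdot$ are monotone for $<:$, that $q\cdot(\Gamma_1+\Gamma_2)=q\cdot\Gamma_1+q\cdot\Gamma_2$ and $q\cdot(r\cdot\Gamma)=(q\cdot r)\cdot\Gamma$, and that $<:$ is a preorder. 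The Multiplication Lemma (Lemma~\ref{BLSMult}) is the other main tool.

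For \rref{ST-Var} I take $\Gamma'$ to be the conclusion context with the distinguished variable regraded from $q$ to $1$; then $q\cdot\Gamma'=\Gamma$ exactly (as $q\cdot 1 = q$ and $q\cdot 0 = 0$), so $\Gamma <: q\cdot\Gamma'$ by reflexivity. For the context-splitting rules (\rref{ST-App,ST-Pair,ST-LetPair,ST-LetUnit,ST-Case}, and the injection rules) I apply the induction hypothesis to each premise, obtaining contexts $\Gamma_i'$, set $\Gamma'$ to the corresponding combination of the $\Gamma_i'$ (a $+$ of them, possibly with extra regraded variable bindings for the eliminations), and conclude $\Gamma <: q\cdot\Gamma'$ by monotonicity of $+$ together with distributivity. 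Several premises are typed not at grade $q$ but at $q\cdot q_0$ or $q\cdot r$: when that grade is nonzero the induction hypothesis gives some $\Delta$ with $\Delta\vdash a:^1 A$ and $\Gamma_i <: (q\cdot r)\cdot\Delta$, and I invoke Lemma~\ref{BLSMult} to obtain $r\cdot\Delta\vdash a:^{r}A$ — exactly the premise shape required by the grade-$1$ instance of the rule — noting $(q\cdot r)\cdot\Delta = q\cdot(r\cdot\Delta)$. For the elimination rules the side condition $q_0 <: 1$ already forces $q_0\neq 0$ in both parametrizing semirings, so those premises have nonzero grade automatically. For \rref{ST-SubL} the induction hypothesis on the premise $\Gamma_0\vdash a:^q A$ gives $\Gamma'$ with $\Gamma_0 <: q\cdot\Gamma'$, and since $\Gamma <: \Gamma_0$, transitivity gives $\Gamma <: q\cdot\Gamma'$. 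For \rref{ST-SubR} the premise grade $q_1$ satisfies $q_1 <: q$, which together with $q\neq 0$ forces $q_1\neq 0$ in both semirings; the induction hypothesis then yields $\Gamma <: q_1\cdot\Gamma' <: q\cdot\Gamma'$ by monotonicity of $\cdot$ and transitivity.

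The delicate point — and the only place where the two parametrizing semirings need slightly different reasoning — is the degenerate case $r=0$, which can occur in \rref{ST-App} and \rref{ST-Pair} even when $q\neq 0$: then the relevant premise has the form $\Theta\vdash a:^0 A$ and the induction hypothesis is unavailable. Here I take $\Gamma_i':=0\cdot\Theta$; Lemma~\ref{BLSMult} gives $0\cdot\Theta\vdash a:^0 A$, which is precisely the premise shape needed (since $1\cdot 0 = 0$), and it remains to check $\Theta <: 0\cdot\Theta$. In $\mathbb{N}_{\geq}$ this is immediate because $0$ is the top element, so every grade is $<: 0$. In $\mathbb{N}_{=}$ it follows from an easy auxiliary observation, proved by a routine induction on derivations: any judgment $\Theta\vdash a:^0 A$ has an all-zero grade vector $\overline{\Theta}$, whence $\Theta = 0\cdot\Theta$. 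Modulo this observation everything else is standard semiring bookkeeping, so I expect the main effort to be careful case analysis and arithmetic rather than any conceptual obstacle.
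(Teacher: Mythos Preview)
Your overall strategy matches the paper's, and most cases go through as you sketch. But there is a real gap in \rref{ST-Case}. That rule has \emph{two} branch premises sharing a single context $\Gamma_2$; applying the induction hypothesis to each branch yields contexts $\Gamma'_{21}$ and $\Gamma'_{22}$ that need not coincide. To rebuild \rref{ST-Case} at grade $1$ you need a \emph{single} $\Gamma'_2$ that types both branches, and ``a $+$ of them'' does not work: from $\Gamma_2 <: q\cdot\Gamma'_{21}$ and $\Gamma_2 <: q\cdot\Gamma'_{22}$ you do \emph{not} get $\Gamma_2 <: q\cdot(\Gamma'_{21}+\Gamma'_{22})$. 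The paper handles this by a case split on the parametrizing semiring: in $\mathbb{N}_{=}$ discreteness of $<:$ forces $\Gamma'_{21}=\Gamma'_{22}$ outright; in $\mathbb{N}_{\geq}$ one defines a pointwise-maximum operator $q_1\,{;}\,q_2 := q_1$ if $q_1 <: q_2$ and $q_2$ otherwise, sets $\Gamma'_2:=\Gamma'_{21}\,{;}\,\Gamma'_{22}$, and checks that $\Gamma'_2 <: \Gamma'_{2i}$ (so \rref{ST-SubL} repairs both branches) while $\Gamma_2 <: q\cdot\Gamma'_2$ follows pointwise.

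A smaller omission: you never discuss \rref{ST-Lam}, which is not in your list of ``context-splitting rules''. There (and likewise for the bound variables in \rref{ST-LetPair} and \rref{ST-Case}) the induction hypothesis returns the bound variable at some grade $r'$ satisfying $q\cdot r <: q\cdot r'$, and you must cancel the nonzero $q$ to obtain $r <: r'$ before \rref{ST-SubL} lets you restore the annotation the rule expects at grade $1$. Cancellation holds in both $\mathbb{N}_{=}$ and $\mathbb{N}_{\geq}$, but it is a genuine ingredient rather than mere bookkeeping, and the paper invokes it explicitly.
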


\begin{proof}

By induction on $ \Gamma  \vdash  \ottnt{a}  :^{ \ottnt{q} }  \ottnt{A} $.

\begin{itemize}

\item \Rref{ST-Var}. Have: $       0   \cdot  \Gamma_{{\mathrm{1}}}    ,   \ottmv{x}  :^{ \ottnt{q} }  \ottnt{A}     ,    0   \cdot  \Gamma_{{\mathrm{2}}}    \vdash   \ottmv{x}   :^{ \ottnt{q} }  \ottnt{A} $.\\
Need to show: $\exists \Gamma'$ such that $ \Gamma'  \vdash   \ottmv{x}   :^{  1  }  \ottnt{A} $ and $       0   \cdot  \Gamma_{{\mathrm{1}}}    ,   \ottmv{x}  :^{ \ottnt{q} }  \ottnt{A}     ,    0   \cdot  \Gamma_{{\mathrm{2}}}    <:   \ottnt{q}  \cdot  \Gamma'  $.\\
This case follows by setting $\Gamma' :=       0   \cdot  \Gamma_{{\mathrm{1}}}    ,   \ottmv{x}  :^{  1  }  \ottnt{A}     ,    0   \cdot  \Gamma_{{\mathrm{2}}}  $.

\item \Rref{ST-Lam}. Have: $ \Gamma  \vdash   \lambda^{ \ottnt{r} }  \ottmv{x}  :  \ottnt{A}  .  \ottnt{b}   :^{ \ottnt{q} }   {}^{ \ottnt{r} }\!  \ottnt{A}  \to  \ottnt{B}  $ where $  \Gamma  ,   \ottmv{x}  :^{  \ottnt{q}  \cdot  \ottnt{r}  }  \ottnt{A}    \vdash  \ottnt{b}  :^{ \ottnt{q} }  \ottnt{B} $.\\
Need to show: $\exists \Gamma'$ such that $ \Gamma'  \vdash   \lambda^{ \ottnt{r} }  \ottmv{x}  :  \ottnt{A}  .  \ottnt{b}   :^{  1  }   {}^{ \ottnt{r} }\!  \ottnt{A}  \to  \ottnt{B}  $ and $ \Gamma  <:   \ottnt{q}  \cdot  \Gamma'  $. \\
By IH, $\exists \Gamma'_{{\mathrm{1}}}$ and $\ottnt{r'_{{\mathrm{1}}}}$ such that $  \Gamma'_{{\mathrm{1}}}  ,   \ottmv{x}  :^{ \ottnt{r'_{{\mathrm{1}}}} }  \ottnt{A}    \vdash  \ottnt{b}  :^{  1  }  \ottnt{B} $ and $ \Gamma  <:   \ottnt{q}  \cdot  \Gamma'_{{\mathrm{1}}}  $ and $  \ottnt{q}  \cdot  \ottnt{r}   <:   \ottnt{q}  \cdot  \ottnt{r'_{{\mathrm{1}}}}  $.\\
Since $  \ottnt{q}  \cdot  \ottnt{r}   <:   \ottnt{q}  \cdot  \ottnt{r'_{{\mathrm{1}}}}  $ and $q \neq 0$, therefore $ \ottnt{r}  <:  \ottnt{r'_{{\mathrm{1}}}} $.\\
By \rref{ST-SubL}, $  \Gamma'_{{\mathrm{1}}}  ,   \ottmv{x}  :^{ \ottnt{r} }  \ottnt{A}    \vdash  \ottnt{b}  :^{  1  }  \ottnt{B} $.\\
This case, then, follows by \rref{ST-Lam} by setting $\Gamma' := \Gamma'_{{\mathrm{1}}}$.

\item \Rref{ST-App}. Have: $  \Gamma_{{\mathrm{1}}}  +  \Gamma_{{\mathrm{2}}}   \vdash   \ottnt{b}  \:  \ottnt{a} ^{ \ottnt{r} }   :^{ \ottnt{q} }  \ottnt{B} $ where $ \Gamma_{{\mathrm{1}}}  \vdash  \ottnt{b}  :^{ \ottnt{q} }   {}^{ \ottnt{r} }\!  \ottnt{A}  \to  \ottnt{B}  $ and $ \Gamma_{{\mathrm{2}}}  \vdash  \ottnt{a}  :^{  \ottnt{q}  \cdot  \ottnt{r}  }  \ottnt{A} $.\\
Need to show: $\exists \Gamma'$ such that $ \Gamma'  \vdash   \ottnt{b}  \:  \ottnt{a} ^{ \ottnt{r} }   :^{  1  }  \ottnt{B} $ and $  \Gamma_{{\mathrm{1}}}  +  \Gamma_{{\mathrm{2}}}   <:   \ottnt{q}  \cdot  \Gamma'  $.\\
There are two cases to consider.

\begin{itemize}
\item $r \neq 0$. Since $q \neq 0$, therefore $ \ottnt{q}  \cdot  \ottnt{r}  \neq 0$.\\
By IH, $\exists \Gamma'_{{\mathrm{1}}}, \Gamma'_{{\mathrm{2}}}$ such that $ \Gamma'_{{\mathrm{1}}}  \vdash  \ottnt{b}  :^{  1  }   {}^{ \ottnt{r} }\!  \ottnt{A}  \to  \ottnt{B}  $ and $ \Gamma'_{{\mathrm{2}}}  \vdash  \ottnt{a}  :^{  1  }  \ottnt{A} $ and $ \Gamma_{{\mathrm{1}}}  <:   \ottnt{q}  \cdot  \Gamma'_{{\mathrm{1}}}  $ and $ \Gamma_{{\mathrm{2}}}  <:    (   \ottnt{q}  \cdot  \ottnt{r}   )   \cdot  \Gamma'_{{\mathrm{2}}}  $.\\
By Lemma \ref{BLSMultP}, $  \ottnt{r}  \cdot  \Gamma'_{{\mathrm{2}}}   \vdash  \ottnt{a}  :^{ \ottnt{r} }  \ottnt{A} $. \\
By \rref{ST-App}, $  \Gamma'_{{\mathrm{1}}}  +   \ottnt{r}  \cdot  \Gamma'_{{\mathrm{2}}}    \vdash   \ottnt{b}  \:  \ottnt{a} ^{ \ottnt{r} }   :^{  1  }  \ottnt{B} $.\\
This case, then, follows by setting $\Gamma' :=  \Gamma'_{{\mathrm{1}}}  +   \ottnt{r}  \cdot  \Gamma'_{{\mathrm{2}}}  $.

\item $r = 0$. Then, $ \Gamma_{{\mathrm{2}}}  \vdash  \ottnt{a}  :^{  0  }  \ottnt{A} $. By Lemma \ref{BLSMultP}, $   0   \cdot  \Gamma_{{\mathrm{2}}}   \vdash  \ottnt{a}  :^{  0  }  \ottnt{A} $.\\
By IH, $\exists \Gamma'_{{\mathrm{1}}}$ such that $ \Gamma'_{{\mathrm{1}}}  \vdash  \ottnt{b}  :^{  1  }   {}^{  0  }\!  \ottnt{A}  \to  \ottnt{B}  $ and $ \Gamma_{{\mathrm{1}}}  <:   \ottnt{q}  \cdot  \Gamma'_{{\mathrm{1}}}  $.\\
By \rref{ST-App}, $ \Gamma'_{{\mathrm{1}}}  \vdash   \ottnt{b}  \:  \ottnt{a} ^{  0  }   :^{  1  }  \ottnt{B} $.\\
Further, $  \Gamma_{{\mathrm{1}}}  +  \Gamma_{{\mathrm{2}}}   <:  \Gamma_{{\mathrm{1}}} $ ($\because$ $  \Gamma_{{\mathrm{1}}}  +  \Gamma_{{\mathrm{0}}}   <:  \Gamma_{{\mathrm{1}}} $ for any $\Gamma_{{\mathrm{0}}}$ in $ \mathbb{N}_{\geq} $ and $ \overline{ \Gamma_{{\mathrm{2}}} }  = \overline{0}$ in $ \mathbb{N}_{=} $).\\
This case, then, follows by setting $\Gamma' := \Gamma'_{{\mathrm{1}}}$.
\end{itemize}

\item \Rref{ST-Pair}. Have: $  \Gamma_{{\mathrm{1}}}  +  \Gamma_{{\mathrm{2}}}   \vdash   (  \ottnt{a_{{\mathrm{1}}}} ^{ \ottnt{r} } ,  \ottnt{a_{{\mathrm{2}}}}  )   :^{ \ottnt{q} }   {}^{ \ottnt{r} }\!  \ottnt{A_{{\mathrm{1}}}}  \: \times \:  \ottnt{A_{{\mathrm{2}}}}  $ where $ \Gamma_{{\mathrm{1}}}  \vdash  \ottnt{a_{{\mathrm{1}}}}  :^{  \ottnt{q}  \cdot  \ottnt{r}  }  \ottnt{A_{{\mathrm{1}}}} $ and $ \Gamma_{{\mathrm{2}}}  \vdash  \ottnt{a_{{\mathrm{2}}}}  :^{ \ottnt{q} }  \ottnt{A_{{\mathrm{2}}}} $.\\
Need to show: $\exists \Gamma'$ such that $ \Gamma'  \vdash   (  \ottnt{a_{{\mathrm{1}}}} ^{ \ottnt{r} } ,  \ottnt{a_{{\mathrm{2}}}}  )   :^{  1  }   {}^{ \ottnt{r} }\!  \ottnt{A_{{\mathrm{1}}}}  \: \times \:  \ottnt{A_{{\mathrm{2}}}}  $ and $  \Gamma_{{\mathrm{1}}}  +  \Gamma_{{\mathrm{2}}}   <:   \ottnt{q}  \cdot  \Gamma'  $.\\
There are two cases to consider.

\begin{itemize}
\item $r \neq 0$. Since $q \neq 0$, therefore $ \ottnt{q}  \cdot  \ottnt{r}  \neq 0$.\\
By IH, $\exists \Gamma'_{{\mathrm{1}}}, \Gamma'_{{\mathrm{2}}}$ such that $ \Gamma'_{{\mathrm{1}}}  \vdash  \ottnt{a_{{\mathrm{1}}}}  :^{  1  }  \ottnt{A_{{\mathrm{1}}}} $ and $ \Gamma'_{{\mathrm{2}}}  \vdash  \ottnt{a_{{\mathrm{2}}}}  :^{  1  }  \ottnt{A_{{\mathrm{2}}}} $ and $ \Gamma_{{\mathrm{1}}}  <:    (   \ottnt{q}  \cdot  \ottnt{r}   )   \cdot  \Gamma'_{{\mathrm{1}}}  $ and $ \Gamma_{{\mathrm{2}}}  <:   \ottnt{q}  \cdot  \Gamma'_{{\mathrm{2}}}  $.\\
By Lemma \ref{BLSMultP}, $  \ottnt{r}  \cdot  \Gamma'_{{\mathrm{1}}}   \vdash  \ottnt{a_{{\mathrm{1}}}}  :^{ \ottnt{r} }  \ottnt{A_{{\mathrm{1}}}} $.\\
By \rref{ST-Pair}, $   \ottnt{r}  \cdot  \Gamma'_{{\mathrm{1}}}   +  \Gamma'_{{\mathrm{2}}}   \vdash   (  \ottnt{a_{{\mathrm{1}}}} ^{ \ottnt{r} } ,  \ottnt{a_{{\mathrm{2}}}}  )   :^{  1  }   {}^{ \ottnt{r} }\!  \ottnt{A_{{\mathrm{1}}}}  \: \times \:  \ottnt{A_{{\mathrm{2}}}}  $.\\
This case, then, follows by setting $\Gamma' :=   \ottnt{r}  \cdot  \Gamma'_{{\mathrm{1}}}   +  \Gamma'_{{\mathrm{2}}} $.

\item $r = 0$. Then, $ \Gamma_{{\mathrm{1}}}  \vdash  \ottnt{a_{{\mathrm{1}}}}  :^{  0  }  \ottnt{A_{{\mathrm{1}}}} $. By Lemma \ref{BLSMultP}, $   0   \cdot  \Gamma_{{\mathrm{1}}}   \vdash  \ottnt{a_{{\mathrm{1}}}}  :^{  0  }  \ottnt{A_{{\mathrm{1}}}} $.\\
By IH, $\exists \Gamma'_{{\mathrm{2}}}$ such that $ \Gamma'_{{\mathrm{2}}}  \vdash  \ottnt{a_{{\mathrm{2}}}}  :^{  1  }  \ottnt{A_{{\mathrm{2}}}} $ and $ \Gamma_{{\mathrm{2}}}  <:   \ottnt{q}  \cdot  \Gamma'_{{\mathrm{2}}}  $.\\
By \rref{ST-Pair}, $ \Gamma'_{{\mathrm{2}}}  \vdash   (  \ottnt{a_{{\mathrm{1}}}} ^{  0  } ,  \ottnt{a_{{\mathrm{2}}}}  )   :^{  1  }   {}^{  0  }\!  \ottnt{A_{{\mathrm{1}}}}  \: \times \:  \ottnt{A_{{\mathrm{2}}}}  $.\\
Further, $  \Gamma_{{\mathrm{1}}}  +  \Gamma_{{\mathrm{2}}}   <:  \Gamma_{{\mathrm{2}}} $ ($\because$ $  \Gamma_{{\mathrm{0}}}  +  \Gamma_{{\mathrm{2}}}   <:  \Gamma_{{\mathrm{2}}} $ for any $\Gamma_{{\mathrm{0}}}$ in $ \mathbb{N}_{\geq} $ and $ \overline{ \Gamma_{{\mathrm{1}}} }  = \overline{0}$ in $ \mathbb{N}_{=} $).\\
This case, then, follows by setting $\Gamma' := \Gamma'_{{\mathrm{2}}}$.
\end{itemize} 

\item \Rref{ST-LetPair}. Have: $  \Gamma_{{\mathrm{1}}}  +  \Gamma_{{\mathrm{2}}}   \vdash   \mathbf{let}_{ \ottnt{q_{{\mathrm{0}}}} } \: (  \ottmv{x} ^{ \ottnt{r} } ,  \ottmv{y}  ) \: \mathbf{be} \:  \ottnt{a}  \: \mathbf{in} \:  \ottnt{b}   :^{ \ottnt{q} }  \ottnt{B} $ where $ \Gamma_{{\mathrm{1}}}  \vdash  \ottnt{a}  :^{  \ottnt{q}  \cdot  \ottnt{q_{{\mathrm{0}}}}  }   {}^{ \ottnt{r} }\!  \ottnt{A_{{\mathrm{1}}}}  \: \times \:  \ottnt{A_{{\mathrm{2}}}}  $ and $    \Gamma_{{\mathrm{2}}}  ,   \ottmv{x}  :^{  \ottnt{q}  \cdot    \ottnt{q_{{\mathrm{0}}}}  \cdot  \ottnt{r}    }  \ottnt{A_{{\mathrm{1}}}}     ,   \ottmv{y}  :^{  \ottnt{q}  \cdot  \ottnt{q_{{\mathrm{0}}}}  }  \ottnt{A_{{\mathrm{2}}}}    \vdash  \ottnt{b}  :^{ \ottnt{q} }  \ottnt{B} $ and $ \ottnt{q_{{\mathrm{0}}}}  <:   1  $.\\
Need to show: $\exists \Gamma'$ such that $ \Gamma'  \vdash   \mathbf{let}_{ \ottnt{q_{{\mathrm{0}}}} } \: (  \ottmv{x} ^{ \ottnt{r} } ,  \ottmv{y}  ) \: \mathbf{be} \:  \ottnt{a}  \: \mathbf{in} \:  \ottnt{b}   :^{  1  }  \ottnt{B} $ and $  \Gamma_{{\mathrm{1}}}  +  \Gamma_{{\mathrm{2}}}   <:   \ottnt{q}  \cdot  \Gamma'  $.\\ 
Since $q \neq 0$ and $ \ottnt{q_{{\mathrm{0}}}}  <:   1  $, therefore $ \ottnt{q}  \cdot  \ottnt{q_{{\mathrm{0}}}}  \neq 0$.\\
By IH, $\exists \Gamma'_{{\mathrm{1}}},\Gamma'_{{\mathrm{2}}}$ and $\ottnt{r'},\ottnt{q'}$ such that $ \Gamma'_{{\mathrm{1}}}  \vdash  \ottnt{a}  :^{  1  }   {}^{ \ottnt{r} }\!  \ottnt{A_{{\mathrm{1}}}}  \: \times \:  \ottnt{A_{{\mathrm{2}}}}  $ and $    \Gamma'_{{\mathrm{2}}}  ,   \ottmv{x}  :^{ \ottnt{r'} }  \ottnt{A_{{\mathrm{1}}}}     ,   \ottmv{y}  :^{ \ottnt{q'} }  \ottnt{A_{{\mathrm{2}}}}    \vdash  \ottnt{b}  :^{  1  }  \ottnt{B} $ and $ \Gamma_{{\mathrm{1}}}  <:    (   \ottnt{q}  \cdot  \ottnt{q_{{\mathrm{0}}}}   )   \cdot  \Gamma'_{{\mathrm{1}}}  $ and $ \Gamma_{{\mathrm{2}}}  <:   \ottnt{q}  \cdot  \Gamma'_{{\mathrm{2}}}  $ and $  \ottnt{q}  \cdot    \ottnt{q_{{\mathrm{0}}}}  \cdot  \ottnt{r}     <:   \ottnt{q}  \cdot  \ottnt{r'}  $ and $  \ottnt{q}  \cdot  \ottnt{q_{{\mathrm{0}}}}   <:   \ottnt{q}  \cdot  \ottnt{q'}  $.\\ 
Since $q \neq 0$, therefore $  \ottnt{q_{{\mathrm{0}}}}  \cdot  \ottnt{r}   <:  \ottnt{r'} $ and $ \ottnt{q_{{\mathrm{0}}}}  <:  \ottnt{q'} $.\\
By \rref{ST-SubL}, $    \Gamma'_{{\mathrm{2}}}  ,   \ottmv{x}  :^{  \ottnt{q_{{\mathrm{0}}}}  \cdot  \ottnt{r}  }  \ottnt{A_{{\mathrm{1}}}}     ,   \ottmv{y}  :^{ \ottnt{q_{{\mathrm{0}}}} }  \ottnt{A_{{\mathrm{2}}}}    \vdash  \ottnt{b}  :^{  1  }  \ottnt{B} $.\\
Again, by Lemma \ref{BLSMultP}, $  \ottnt{q_{{\mathrm{0}}}}  \cdot  \Gamma'_{{\mathrm{1}}}   \vdash  \ottnt{a}  :^{ \ottnt{q_{{\mathrm{0}}}} }   {}^{ \ottnt{r} }\!  \ottnt{A_{{\mathrm{1}}}}  \: \times \:  \ottnt{A_{{\mathrm{2}}}}  $.\\
By \rref{ST-LetPair}, $   \ottnt{q_{{\mathrm{0}}}}  \cdot  \Gamma'_{{\mathrm{1}}}   +  \Gamma'_{{\mathrm{2}}}   \vdash   \mathbf{let}_{ \ottnt{q_{{\mathrm{0}}}} } \: (  \ottmv{x} ^{ \ottnt{r} } ,  \ottmv{y}  ) \: \mathbf{be} \:  \ottnt{a}  \: \mathbf{in} \:  \ottnt{b}   :^{  1  }  \ottnt{B} $.\\
This case, then, follows by setting $\Gamma' :=   \ottnt{q_{{\mathrm{0}}}}  \cdot  \Gamma'_{{\mathrm{1}}}   +  \Gamma'_{{\mathrm{2}}} $.

\item \Rref{ST-Unit}. Have: $   0   \cdot  \Gamma   \vdash   \mathbf{unit}   :^{ \ottnt{q} }   \mathbf{Unit}  $.\\
Need to show: $\exists \Gamma'$ such that $ \Gamma'  \vdash   \mathbf{unit}   :^{  1  }   \mathbf{Unit}  $ and $   0   \cdot  \Gamma   <:   \ottnt{q}  \cdot  \Gamma'  $.\\
This case follows by setting $\Gamma' :=   0   \cdot  \Gamma $.

\item \Rref{ST-LetUnit}. Have: $  \Gamma_{{\mathrm{1}}}  +  \Gamma_{{\mathrm{2}}}   \vdash   \mathbf{let}_{ \ottnt{q_{{\mathrm{0}}}} } \: \mathbf{unit} \: \mathbf{be} \:  \ottnt{a}  \: \mathbf{in} \:  \ottnt{b}   :^{ \ottnt{q} }  \ottnt{B} $ where $ \Gamma_{{\mathrm{1}}}  \vdash  \ottnt{a}  :^{  \ottnt{q}  \cdot  \ottnt{q_{{\mathrm{0}}}}  }   \mathbf{Unit}  $ and $ \Gamma_{{\mathrm{2}}}  \vdash  \ottnt{b}  :^{ \ottnt{q} }  \ottnt{B} $ and $ \ottnt{q_{{\mathrm{0}}}}  <:   1  $.\\
Need to show: $\exists \Gamma'$ such that $ \Gamma'  \vdash   \mathbf{let}_{ \ottnt{q_{{\mathrm{0}}}} } \: \mathbf{unit} \: \mathbf{be} \:  \ottnt{a}  \: \mathbf{in} \:  \ottnt{b}   :^{  1  }  \ottnt{B} $ and $  \Gamma_{{\mathrm{1}}}  +  \Gamma_{{\mathrm{2}}}   <:   \ottnt{q}  \cdot  \Gamma'  $.\\
By IH, $\exists \Gamma'_{{\mathrm{1}}}, \Gamma'_{{\mathrm{2}}}$ such that $ \Gamma'_{{\mathrm{1}}}  \vdash  \ottnt{a}  :^{  1  }   \mathbf{Unit}  $ and $ \Gamma'_{{\mathrm{2}}}  \vdash  \ottnt{b}  :^{  1  }  \ottnt{B} $ and $ \Gamma_{{\mathrm{1}}}  <:    (   \ottnt{q}  \cdot  \ottnt{q_{{\mathrm{0}}}}   )   \cdot  \Gamma'_{{\mathrm{1}}}  $ and $ \Gamma_{{\mathrm{2}}}  <:   \ottnt{q}  \cdot  \Gamma'_{{\mathrm{2}}}  $.\\
By Lemma \ref{BLSMultP}, $  \ottnt{q_{{\mathrm{0}}}}  \cdot  \Gamma'_{{\mathrm{1}}}   \vdash  \ottnt{a}  :^{ \ottnt{q_{{\mathrm{0}}}} }   \mathbf{Unit}  $. \\
By \rref{ST-LetUnit}, $    \ottnt{q_{{\mathrm{0}}}}  \cdot  \Gamma'_{{\mathrm{1}}}    +  \Gamma'_{{\mathrm{2}}}   \vdash   \mathbf{let}_{ \ottnt{q_{{\mathrm{0}}}} } \: \mathbf{unit} \: \mathbf{be} \:  \ottnt{a}  \: \mathbf{in} \:  \ottnt{b}   :^{  1  }  \ottnt{B} $.\\
This case follows by setting $\Gamma' :=   \ottnt{q_{{\mathrm{0}}}}  \cdot  \Gamma'_{{\mathrm{1}}}   +  \Gamma'_{{\mathrm{2}}} $.

\item \Rref{ST-Inj1}. Have $ \Gamma  \vdash   \mathbf{inj}_1 \:  \ottnt{a_{{\mathrm{1}}}}   :^{ \ottnt{q} }   \ottnt{A_{{\mathrm{1}}}}  +  \ottnt{A_{{\mathrm{2}}}}  $ where $ \Gamma  \vdash  \ottnt{a_{{\mathrm{1}}}}  :^{ \ottnt{q} }  \ottnt{A_{{\mathrm{1}}}} $.\\
Need to show: $\exists \Gamma'$ such that $ \Gamma'  \vdash   \mathbf{inj}_1 \:  \ottnt{a_{{\mathrm{1}}}}   :^{  1  }   \ottnt{A_{{\mathrm{1}}}}  +  \ottnt{A_{{\mathrm{2}}}}  $ and $ \Gamma  <:   \ottnt{q}  \cdot  \Gamma'  $.\\
By IH, $\exists \Gamma'_{{\mathrm{1}}}$ such that $ \Gamma'_{{\mathrm{1}}}  \vdash  \ottnt{a_{{\mathrm{1}}}}  :^{  1  }  \ottnt{A_{{\mathrm{1}}}} $ and $ \Gamma  <:   \ottnt{q}  \cdot  \Gamma'_{{\mathrm{1}}}  $.\\
This case follows by setting $\Gamma' := \Gamma'_{{\mathrm{1}}}$.

\item \Rref{ST-Inj2}. Similar to \rref{ST-Inj1}.

\item \Rref{ST-Case}. Have: $  \Gamma_{{\mathrm{1}}}  +  \Gamma_{{\mathrm{2}}}   \vdash   \mathbf{case}_{ \ottnt{q_{{\mathrm{0}}}} } \:  \ottnt{a}  \: \mathbf{of} \:  \ottmv{x_{{\mathrm{1}}}}  .  \ottnt{b_{{\mathrm{1}}}}  \: ; \:  \ottmv{x_{{\mathrm{2}}}}  .  \ottnt{b_{{\mathrm{2}}}}   :^{ \ottnt{q} }  \ottnt{B} $ where $ \Gamma_{{\mathrm{1}}}  \vdash  \ottnt{a}  :^{  \ottnt{q}  \cdot  \ottnt{q_{{\mathrm{0}}}}  }   \ottnt{A_{{\mathrm{1}}}}  +  \ottnt{A_{{\mathrm{2}}}}  $ and $  \Gamma_{{\mathrm{2}}}  ,   \ottmv{x_{{\mathrm{1}}}}  :^{  \ottnt{q}  \cdot  \ottnt{q_{{\mathrm{0}}}}  }  \ottnt{A_{{\mathrm{1}}}}    \vdash  \ottnt{b_{{\mathrm{1}}}}  :^{ \ottnt{q} }  \ottnt{B} $ and $  \Gamma_{{\mathrm{2}}}  ,   \ottmv{x_{{\mathrm{2}}}}  :^{  \ottnt{q}  \cdot  \ottnt{q_{{\mathrm{0}}}}  }  \ottnt{A_{{\mathrm{2}}}}    \vdash  \ottnt{b_{{\mathrm{2}}}}  :^{ \ottnt{q} }  \ottnt{B} $ and $ \ottnt{q_{{\mathrm{0}}}}  <:   1  $.\\
Need to show: $\exists \Gamma'$ such that $ \Gamma'  \vdash   \mathbf{case}_{ \ottnt{q_{{\mathrm{0}}}} } \:  \ottnt{a}  \: \mathbf{of} \:  \ottmv{x_{{\mathrm{1}}}}  .  \ottnt{b_{{\mathrm{1}}}}  \: ; \:  \ottmv{x_{{\mathrm{2}}}}  .  \ottnt{b_{{\mathrm{2}}}}   :^{  1  }  \ottnt{B} $ and $  \Gamma_{{\mathrm{1}}}  +  \Gamma_{{\mathrm{2}}}   <:   \ottnt{q}  \cdot  \Gamma'  $.\\
By IH, $\exists \Gamma'_{{\mathrm{1}}}, \Gamma'_{{\mathrm{21}}}, \Gamma'_{{\mathrm{22}}}$ and $\ottnt{q'_{{\mathrm{1}}}}, \ottnt{q'_{{\mathrm{2}}}}$ such that $ \Gamma'_{{\mathrm{1}}}  \vdash  \ottnt{a}  :^{  1  }   \ottnt{A_{{\mathrm{1}}}}  +  \ottnt{A_{{\mathrm{2}}}}  $ and $  \Gamma'_{{\mathrm{21}}}  ,   \ottmv{x_{{\mathrm{1}}}}  :^{ \ottnt{q'_{{\mathrm{1}}}} }  \ottnt{A_{{\mathrm{1}}}}    \vdash  \ottnt{b_{{\mathrm{1}}}}  :^{  1  }  \ottnt{B} $ and $  \Gamma'_{{\mathrm{22}}}  ,   \ottmv{x_{{\mathrm{2}}}}  :^{ \ottnt{q'_{{\mathrm{2}}}} }  \ottnt{A_{{\mathrm{2}}}}    \vdash  \ottnt{b_{{\mathrm{2}}}}  :^{  1  }  \ottnt{B} $ and $ \Gamma_{{\mathrm{1}}}  <:    (   \ottnt{q}  \cdot  \ottnt{q_{{\mathrm{0}}}}   )   \cdot  \Gamma'_{{\mathrm{1}}}  $ and $ \Gamma_{{\mathrm{2}}}  <:   \ottnt{q}  \cdot  \Gamma'_{{\mathrm{21}}}  $ and $ \Gamma_{{\mathrm{2}}}  <:   \ottnt{q}  \cdot  \Gamma'_{{\mathrm{22}}}  $ and $  \ottnt{q}  \cdot  \ottnt{q_{{\mathrm{0}}}}   <:   \ottnt{q}  \cdot  \ottnt{q'_{{\mathrm{1}}}}  $ and $  \ottnt{q}  \cdot  \ottnt{q_{{\mathrm{0}}}}   <:   \ottnt{q}  \cdot  \ottnt{q'_{{\mathrm{2}}}}  $.\\
Since $q \neq 0$, therefore $ \ottnt{q_{{\mathrm{0}}}}  <:  \ottnt{q'_{{\mathrm{1}}}} $ and $ \ottnt{q_{{\mathrm{0}}}}  <:  \ottnt{q'_{{\mathrm{2}}}} $.\\
The remainder of the proof for this case is different for $ \mathbb{N}_{=} $ and $ \mathbb{N}_{\geq} $.
\begin{itemize}
\item In case of $ \mathbb{N}_{=} $, since $ <: $ is discrete, $\Gamma'_{{\mathrm{21}}} = \Gamma'_{{\mathrm{22}}} = \Gamma'_{{\mathrm{2}}}$ (say).\\
Then, $ \Gamma_{{\mathrm{2}}}  <:   \ottnt{q}  \cdot  \Gamma'_{{\mathrm{2}}}  $. Further, by \rref{ST-SubL}, $  \Gamma'_{{\mathrm{2}}}  ,   \ottmv{x_{{\mathrm{1}}}}  :^{ \ottnt{q_{{\mathrm{0}}}} }  \ottnt{A_{{\mathrm{1}}}}    \vdash  \ottnt{b_{{\mathrm{1}}}}  :^{  1  }  \ottnt{B} $ and $  \Gamma'_{{\mathrm{2}}}  ,   \ottmv{x_{{\mathrm{2}}}}  :^{ \ottnt{q_{{\mathrm{0}}}} }  \ottnt{A_{{\mathrm{2}}}}    \vdash  \ottnt{b_{{\mathrm{2}}}}  :^{  1  }  \ottnt{B} $.\\
By Lemma \ref{BLSMultP}, $  \ottnt{q_{{\mathrm{0}}}}  \cdot  \Gamma'_{{\mathrm{1}}}   \vdash  \ottnt{a}  :^{ \ottnt{q_{{\mathrm{0}}}} }   \ottnt{A_{{\mathrm{1}}}}  +  \ottnt{A_{{\mathrm{2}}}}  $. \\
This case, then, follows by setting $\Gamma' :=   \ottnt{q_{{\mathrm{0}}}}  \cdot  \Gamma'_{{\mathrm{1}}}   +  \Gamma'_{{\mathrm{2}}} $.
\item In case of $ \mathbb{N}_{\geq} $, $\Gamma'_{{\mathrm{21}}}$ may not be equal to $\Gamma'_{{\mathrm{22}}}$. So we need to employ the following construction. For quantities $\ottnt{q_{{\mathrm{1}}}}, \ottnt{q_{{\mathrm{2}}}}$ define an operator: $ \ottnt{q_{{\mathrm{1}}}}  ;  \ottnt{q_{{\mathrm{2}}}}  = \ottnt{q_{{\mathrm{1}}}}$ if $ \ottnt{q_{{\mathrm{1}}}}  <:  \ottnt{q_{{\mathrm{2}}}} $ and $\ottnt{q_{{\mathrm{2}}}}$ otherwise. 
Extending the operator to contexts $\Gamma_{{\mathrm{1}}}$ and $\Gamma_{{\mathrm{2}}}$ where $ \lfloor  \Gamma_{{\mathrm{1}}}  \rfloor  =  \lfloor  \Gamma_{{\mathrm{2}}}  \rfloor $, define $\Gamma_{{\mathrm{0}}} := \Gamma_{{\mathrm{1}}} ; \Gamma_{{\mathrm{2}}}$ as $ \lfloor  \Gamma_{{\mathrm{0}}}  \rfloor  =  \lfloor  \Gamma_{{\mathrm{1}}}  \rfloor $ and $ \overline{ \Gamma_{{\mathrm{0}}} }  =  \overline{ \Gamma_{{\mathrm{1}}} }  ;  \overline{ \Gamma_{{\mathrm{2}}} } $ (defined pointwise).\\
Now, let $\Gamma'_{{\mathrm{2}}} := \Gamma'_{{\mathrm{21}}} ; \Gamma'_{{\mathrm{22}}}$. Then, $ \Gamma_{{\mathrm{2}}}  <:   \ottnt{q}  \cdot  \Gamma'_{{\mathrm{2}}}  $.\\ 
Again, by \rref{ST-SubL}, $  \Gamma'_{{\mathrm{2}}}  ,   \ottmv{x_{{\mathrm{1}}}}  :^{ \ottnt{q_{{\mathrm{0}}}} }  \ottnt{A_{{\mathrm{1}}}}    \vdash  \ottnt{b_{{\mathrm{1}}}}  :^{  1  }  \ottnt{B} $ and $  \Gamma'_{{\mathrm{2}}}  ,   \ottmv{x_{{\mathrm{2}}}}  :^{ \ottnt{q_{{\mathrm{0}}}} }  \ottnt{A_{{\mathrm{2}}}}    \vdash  \ottnt{b_{{\mathrm{2}}}}  :^{  1  }  \ottnt{B} $.\\
By Lemma \ref{BLSMultP}, $  \ottnt{q_{{\mathrm{0}}}}  \cdot  \Gamma'_{{\mathrm{1}}}   \vdash  \ottnt{a}  :^{ \ottnt{q_{{\mathrm{0}}}} }   \ottnt{A_{{\mathrm{1}}}}  +  \ottnt{A_{{\mathrm{2}}}}  $. \\
This case, then, follows by setting $\Gamma' :=   \ottnt{q_{{\mathrm{0}}}}  \cdot  \Gamma'_{{\mathrm{1}}}   +  \Gamma'_{{\mathrm{2}}} $.   
\end{itemize} 

\item \Rref{ST-SubL}. Have: $ \Gamma  \vdash  \ottnt{a}  :^{ \ottnt{q} }  \ottnt{A} $ where $ \Gamma_{{\mathrm{1}}}  \vdash  \ottnt{a}  :^{ \ottnt{q} }  \ottnt{A} $ and $ \Gamma  <:  \Gamma_{{\mathrm{1}}} $. \\
Need to show: $\exists \Gamma'$ such that $ \Gamma'  \vdash  \ottnt{a}  :^{  1  }  \ottnt{A} $ and $ \Gamma  <:   \ottnt{q}  \cdot  \Gamma'  $.\\
By IH, $\exists \Gamma'_{{\mathrm{1}}}$ such that $ \Gamma'_{{\mathrm{1}}}  \vdash  \ottnt{a}  :^{  1  }  \ottnt{A} $ and $ \Gamma_{{\mathrm{1}}}  <:   \ottnt{q}  \cdot  \Gamma'_{{\mathrm{1}}}  $.\\
This case, then, follows by setting $\Gamma' := \Gamma'_{{\mathrm{1}}}$.

\item \Rref{ST-SubR}. Have: $ \Gamma  \vdash  \ottnt{a}  :^{ \ottnt{q'} }  \ottnt{A} $ where $ \Gamma  \vdash  \ottnt{a}  :^{ \ottnt{q} }  \ottnt{A} $ and $ \ottnt{q}  <:  \ottnt{q'} $.\\
Need to show: $\exists \Gamma'$ such that $ \Gamma'  \vdash  \ottnt{a}  :^{  1  }  \ottnt{A} $ and $ \Gamma  <:   \ottnt{q'}  \cdot  \Gamma'  $.\\
Since $\ottnt{q'} \neq 0$, therefore $\ottnt{q} \neq 0$.\\
By IH, $\exists \Gamma'_{{\mathrm{1}}}$ such that $ \Gamma'_{{\mathrm{1}}}  \vdash  \ottnt{a}  :^{  1  }  \ottnt{A} $ and $ \Gamma  <:   \ottnt{q}  \cdot  \Gamma'_{{\mathrm{1}}}  $.\\
Now, since $ \ottnt{q}  <:  \ottnt{q'} $, therefore $  \ottnt{q}  \cdot  \Gamma'_{{\mathrm{1}}}   <:   \ottnt{q'}  \cdot  \Gamma'_{{\mathrm{1}}}  $.\\
This case, then, follows by setting $\Gamma' := \Gamma'_{{\mathrm{1}}}$.

\end{itemize}
\end{proof}

%--------------------------------------------------------------------------------------------

\begin{lemma}[Splitting (Lemma \ref{BLSSplit})] \label{BLSSplitP}
If $ \Gamma  \vdash  \ottnt{a}  :^{  \ottnt{q_{{\mathrm{1}}}}  +  \ottnt{q_{{\mathrm{2}}}}  }  \ottnt{A} $, then there exists $\Gamma_{{\mathrm{1}}}$ and $\Gamma_{{\mathrm{2}}}$ such that $ \Gamma_{{\mathrm{1}}}  \vdash  \ottnt{a}  :^{ \ottnt{q_{{\mathrm{1}}}} }  \ottnt{A} $ and $ \Gamma_{{\mathrm{2}}}  \vdash  \ottnt{a}  :^{ \ottnt{q_{{\mathrm{2}}}} }  \ottnt{A} $ and $ \Gamma  =   \Gamma_{{\mathrm{1}}}  +  \Gamma_{{\mathrm{2}}}  $. 
\end{lemma}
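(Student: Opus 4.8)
The plan is to obtain both split derivations from a single ``unit'' derivation produced by the factorization lemma, and then to reconcile any discrepancy in the order between the given context and the factored one. First I would dispatch the degenerate case $q_1 + q_2 = 0$: since the parametrizing structure is $\mathbb{N}_{=}$ or $\mathbb{N}_{\geq}$, the only way two grades sum to $0$ is $q_1 = q_2 = 0$, so the hypothesis is $\Gamma \vdash a :^{0} A$; I would then take $\Gamma_1 := \Gamma$ and $\Gamma_2 := 0 \cdot \Gamma$, noting $\Gamma_1 + \Gamma_2 = \Gamma$ because $0$ is the additive identity of the semiring (lifted pointwise), and derive $\Gamma_2 \vdash a :^{0} A$ from $\Gamma \vdash a :^{0} A$ by the multiplication lemma (Lemma \ref{BLSMult}), since $0 \cdot 0 = 0$.

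For the main case $q_1 + q_2 \neq 0$, I would apply the factorization lemma (Lemma \ref{BLSFact}) to $\Gamma \vdash a :^{q_1 + q_2} A$ to get a context $\Gamma'$ with $\Gamma' \vdash a :^{1} A$ and $\Gamma <: (q_1 + q_2) \cdot \Gamma'$. Two applications of Lemma \ref{BLSMult} then give $q_1 \cdot \Gamma' \vdash a :^{q_1} A$ and $q_2 \cdot \Gamma' \vdash a :^{q_2} A$, and distributivity of context addition over scalar multiplication (inherited from the semiring) yields $(q_1 \cdot \Gamma') + (q_2 \cdot \Gamma') = (q_1 + q_2) \cdot \Gamma'$. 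For $\mathbb{N}_{=}$ the order $<:$ is discrete, so $\Gamma = (q_1 + q_2) \cdot \Gamma'$ exactly and we may simply take $\Gamma_1 := q_1 \cdot \Gamma'$, $\Gamma_2 := q_2 \cdot \Gamma'$. For $\mathbb{N}_{\geq}$ the sum $(q_1 + q_2) \cdot \Gamma'$ may sit strictly below $\Gamma$, so I would let $\delta$ be the pointwise difference $\Gamma - (q_1 + q_2) \cdot \Gamma'$, which is a well-formed context with non-negative grades precisely because $\Gamma <: (q_1 + q_2) \cdot \Gamma'$ means each grade of $\Gamma$ numerically dominates the corresponding grade of $(q_1 + q_2) \cdot \Gamma'$. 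Setting $\Gamma_1 := (q_1 \cdot \Gamma') + \delta$ and $\Gamma_2 := q_2 \cdot \Gamma'$, I get $\Gamma_1 + \Gamma_2 = (q_1 + q_2) \cdot \Gamma' + \delta = \Gamma$; since the grades of $\Gamma_1$ dominate those of $q_1 \cdot \Gamma'$, we have $\Gamma_1 <: q_1 \cdot \Gamma'$, so \rref{ST-SubL} upgrades $q_1 \cdot \Gamma' \vdash a :^{q_1} A$ to $\Gamma_1 \vdash a :^{q_1} A$, while $\Gamma_2 \vdash a :^{q_2} A$ is already in hand. The sub-cases $q_1 = 0$ or $q_2 = 0$ are subsumed here, reading $0 \cdot \Gamma'$ as the zero-graded context over $\lfloor \Gamma \rfloor$ and using \rref{ST-SubL} to move from $0 \cdot \Gamma' \vdash a :^{0} A$ to $\delta \vdash a :^{0} A$.

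I expect the one genuinely non-routine point to be this reconciliation step for $\mathbb{N}_{\geq}$: one must check that the ``slack'' $\delta$ between $\Gamma$ and $(q_1 + q_2) \cdot \Gamma'$ can be attached to a single summand without destroying its derivability, which is exactly what the discarding rule \rref{ST-SubL} provides, and that the direction of the preorder $<:$ makes $\delta$ a legitimate context rather than a formal expression with negative grades. Everything else is bookkeeping with the associativity, commutativity, and distributivity laws of context addition and scalar multiplication, which are immediate from the corresponding semiring axioms applied pointwise.
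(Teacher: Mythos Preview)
Your proposal is correct and follows essentially the same route as the paper: handle the degenerate case $q_1+q_2=0$ directly, and in the main case apply factorization to get a unit derivation, multiply back up to $q_1$ and $q_2$, and absorb the slack between $\Gamma$ and $(q_1+q_2)\cdot\Gamma'$ into one summand via \rref{ST-SubL}. The only cosmetic differences are that the paper attaches the slack $\Gamma_0$ to $\Gamma_2$ rather than $\Gamma_1$, and treats $\mathbb{N}_{=}$ and $\mathbb{N}_{\geq}$ uniformly by taking $\overline{\Gamma_0}=\overline{0}$ in the discrete case instead of branching explicitly.
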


\begin{proof}
If $ \ottnt{q_{{\mathrm{1}}}}  +  \ottnt{q_{{\mathrm{2}}}}  = 0$, then $\Gamma_{{\mathrm{1}}} :=   0   \cdot  \Gamma $ and $\Gamma_{{\mathrm{2}}} := \Gamma$.\\
Otherwise, by Lemma \ref{BLSFactP}, $\exists \Gamma'$ such that $ \Gamma'  \vdash  \ottnt{a}  :^{  1  }  \ottnt{A} $ and $ \Gamma  <:    (   \ottnt{q_{{\mathrm{1}}}}  +  \ottnt{q_{{\mathrm{2}}}}   )   \cdot  \Gamma'  $.\\
Then, $\Gamma =    (   \ottnt{q_{{\mathrm{1}}}}  +  \ottnt{q_{{\mathrm{2}}}}   )   \cdot  \Gamma'   +  \Gamma_{{\mathrm{0}}} $ for some $\Gamma_{{\mathrm{0}}}$ (in case of $ \mathbb{N}_{=} $, we have, $ \overline{ \Gamma_{{\mathrm{0}}} }  = \overline{0}$).\\
Now, by Lemma \ref{BLSMultP}, $  \ottnt{q_{{\mathrm{1}}}}  \cdot  \Gamma'   \vdash  \ottnt{a}  :^{ \ottnt{q_{{\mathrm{1}}}} }  \ottnt{A} $ and $  \ottnt{q_{{\mathrm{2}}}}  \cdot  \Gamma'   \vdash  \ottnt{a}  :^{ \ottnt{q_{{\mathrm{2}}}} }  \ottnt{A} $.\\
By \rref{ST-SubL}, $   \ottnt{q_{{\mathrm{2}}}}  \cdot  \Gamma'   +  \Gamma_{{\mathrm{0}}}   \vdash  \ottnt{a}  :^{ \ottnt{q_{{\mathrm{2}}}} }  \ottnt{A} $.\\
The lemma follows by setting $\Gamma_{{\mathrm{1}}} :=  \ottnt{q_{{\mathrm{1}}}}  \cdot  \Gamma' $ and $\Gamma_{{\mathrm{2}}} :=   \ottnt{q_{{\mathrm{2}}}}  \cdot  \Gamma'   +  \Gamma_{{\mathrm{0}}} $.
\end{proof}

%-------------------------------------------------------------------------------------------

\begin{lemma}[Weakening (Lemma \ref{BLSWeak})] \label{BLSWeakP}
If $  \Gamma_{{\mathrm{1}}}  ,  \Gamma_{{\mathrm{2}}}   \vdash  \ottnt{a}  :^{ \ottnt{q} }  \ottnt{A} $, then $    \Gamma_{{\mathrm{1}}}  ,   \ottmv{z}  :^{  0  }  \ottnt{C}     ,  \Gamma_{{\mathrm{2}}}   \vdash  \ottnt{a}  :^{ \ottnt{q} }  \ottnt{A} $.
\end{lemma}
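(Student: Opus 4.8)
The plan is to prove this by structural induction on the derivation of $\Gamma_1, \Gamma_2 \vdash a :^{q} A$. Throughout, I rely on the semiring facts $0 + 0 = 0$, $q \cdot 0 = 0$, and $0 <: 0$, which together guarantee that inserting the assumption $z :^{0} C$ at a fixed position in a context commutes with all three context operations: context addition (used in \rref{ST-App}, \rref{ST-Pair}, \rref{ST-LetPair}, \rref{ST-LetUnit}, \rref{ST-Case}), scalar multiplication (used via the $0 \cdot \Gamma$ off-variable contexts in \rref{ST-Var} and \rref{ST-Unit}), and the subsumption order (used in \rref{ST-SubL}, \rref{ST-SubR}). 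Concretely: if $\Gamma = \Gamma' + \Gamma''$, then inserting $z :^{0} C$ into $\Gamma$ at position $i$ equals $\Gamma'_i + \Gamma''_i$ where $\Gamma'_i, \Gamma''_i$ are $\Gamma', \Gamma''$ with a $z :^{0} C$ slot added at $i$; similarly a $0$-slot in $\Gamma$ stays a $0$-slot in $q \cdot \Gamma$; and $\Gamma <: \Gamma'$ is preserved when matching $0$-slots are added to both sides. I would establish these three commutation facts once at the outset.

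For each typing rule, I would decompose the ambient context as $\Gamma_1, \Gamma_2$, push $z :^{0} C$ into the appropriate sub-context(s) named in the premise(s) using the facts above, invoke the induction hypothesis, and re-apply the same rule. The \rref{ST-Var} case is immediate: since all variables in a context are assumed distinct, the looked-up variable is some $x \neq z$, so $z$ simply receives grade $0$ — exactly the grade that non-target variables carry in the \rref{ST-Var} conclusion. For the binder rules — \rref{ST-Lam}, \rref{ST-LetPair}, \rref{ST-LetUnit}, \rref{ST-Case} — the bound variable is appended after $\Gamma_2$, so the insertion point of $z :^{0} C$ (between $\Gamma_1$ and $\Gamma_2$) is unaffected and the induction hypothesis applies directly to each premise, whose context is $(\Gamma_1, \Gamma_2)$ extended with the binders. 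The context-splitting rules require splitting $z :^{0} C$ on both branches, which is sound precisely because $0 + 0 = 0$; the subsumption rules go through because $z :^{0} C <: z :^{0} C$.

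Since the statement concerns the simply-typed calculus of Figure \ref{TypeSystemSimple}, where types are mere syntax, no well-formedness premise on $C$ is needed — this is why the statement is cleaner than its PTS analogue, where weakening additionally requires $\Delta_1 \vdash_{0} C : s$. I do not expect a genuine obstacle: the argument is pure bookkeeping. The only point demanding care is verifying that the "insert a $0$-slot" operation interacts correctly with the context combinators in every splitting rule, i.e.\ the three commutation facts above; once these are in hand, each inductive case is mechanical. The same proof, verbatim up to notation, also yields the dependency-calculus weakening lemma (Lemma \ref{DSWeak}) with $0$ replaced by $\top$, using $\top \sqcap \top = \top$, $\ell \sqcup \top = \top$, and $\top \sqsubseteq \top$.
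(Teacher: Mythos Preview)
Your proposal is correct and takes exactly the same approach as the paper: the paper's proof reads, in its entirety, ``By induction on $\Gamma_{1}, \Gamma_{2} \vdash a :^{q} A$.'' You have simply spelled out the bookkeeping that this one-liner leaves implicit.
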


\begin{proof}
By induction on $  \Gamma_{{\mathrm{1}}}  ,  \Gamma_{{\mathrm{2}}}   \vdash  \ottnt{a}  :^{ \ottnt{q} }  \ottnt{A} $.
\end{proof}

%--------------------------------------------------------------------------------------------

\begin{lemma}[Substitution (Lemma \ref{BLSSubst})] \label{BLSSubstP}
If $    \Gamma_{{\mathrm{1}}}  ,   \ottmv{z}  :^{ \ottnt{r_{{\mathrm{0}}}} }  \ottnt{C}     ,  \Gamma_{{\mathrm{2}}}   \vdash  \ottnt{a}  :^{ \ottnt{q} }  \ottnt{A} $ and $ \Gamma  \vdash  \ottnt{c}  :^{ \ottnt{r_{{\mathrm{0}}}} }  \ottnt{C} $ and $  \lfloor  \Gamma_{{\mathrm{1}}}  \rfloor   =   \lfloor  \Gamma  \rfloor  $, then $    \Gamma_{{\mathrm{1}}}  +  \Gamma    ,  \Gamma_{{\mathrm{2}}}   \vdash   \ottnt{a}  \{  \ottnt{c}  /  \ottmv{z}  \}   :^{ \ottnt{q} }  \ottnt{A} $. 
\end{lemma}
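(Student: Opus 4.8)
The plan is to prove Lemma~\ref{BLSSubst} by induction on the derivation of $ \Gamma_{{\mathrm{1}}}  ,   \ottmv{z}  :^{ \ottnt{r_{{\mathrm{0}}}} }  \ottnt{C}   ,  \Gamma_{{\mathrm{2}}}  \vdash  \ottnt{a}  :^{ \ottnt{q} }  \ottnt{A} $, with a case split on the last rule. Since (after fresh-renaming every binder so that bound variables avoid $\ottmv{z}$ and $ \textit{fv} \:  \ottnt{c} $) substitution commutes with each term constructor, the real content of every case is to reconstruct the context of the conclusion, and the hypothesis $ \lfloor  \Gamma_{{\mathrm{1}}}  \rfloor   =   \lfloor  \Gamma  \rfloor  $ is exactly what makes $ \Gamma_{{\mathrm{1}}}  +  \Gamma $ well-defined throughout.

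For the congruence-style cases whose premise uses the same context, possibly extended by a fresh binder (\rref{ST-Lam,ST-InjOne,ST-InjTwo}), I would apply the induction hypothesis directly and reapply the rule, the underlying-context side condition propagating unchanged. For the rules that split the context additively (\rref{ST-App,ST-Pair,ST-LetPair,ST-LetUnit,ST-Case}) the grade $\ottnt{r_{{\mathrm{0}}}}$ on $\ottmv{z}$ is split across the premise contexts as $ \ottnt{r_{{\mathrm{0}}}}  =   \ottnt{s_{{\mathrm{1}}}}  +  \ottnt{s_{{\mathrm{2}}}}  $ (the two branches of \rref{ST-Case} carry $\ottmv{z}$ at the same grade, so a single split serves there too); I would then invoke the Splitting lemma (Lemma~\ref{BLSSplit}) on $ \Gamma  \vdash  \ottnt{c}  :^{ \ottnt{r_{{\mathrm{0}}}} }  \ottnt{C} $ to obtain $\Theta_{{\mathrm{1}}}, \Theta_{{\mathrm{2}}}$ with $ \Theta_{{\mathrm{1}}}  \vdash  \ottnt{c}  :^{ \ottnt{s_{{\mathrm{1}}}} }  \ottnt{C} $, $ \Theta_{{\mathrm{2}}}  \vdash  \ottnt{c}  :^{ \ottnt{s_{{\mathrm{2}}}} }  \ottnt{C} $ and $ \Gamma  =   \Theta_{{\mathrm{1}}}  +  \Theta_{{\mathrm{2}}}  $, feed $\Theta_\ottmv{i}$ into the $\ottmv{i}$-th appeal to the induction hypothesis (checking each time that $ \lfloor  \Theta_{\ottmv{i}}  \rfloor  = \lfloor  \Gamma  \rfloor  = \lfloor  \Gamma_{{\mathrm{1}}}  \rfloor $ matches the relevant prefix of the premise's context around $\ottmv{z}$), and reassemble using commutativity and associativity of context addition. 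This is where the restriction to $ \mathbb{N}_{=} $ and $ \mathbb{N}_{\geq} $ is genuinely exploited, namely through Lemma~\ref{BLSSplit}.

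For the subsumption cases: in \rref{ST-SubL} the premise carries $\ottmv{z}$ at a grade $\ottnt{r'_{{\mathrm{0}}}}$ with $ \ottnt{r_{{\mathrm{0}}}}  <:  \ottnt{r'_{{\mathrm{0}}}} $, so I would first lift $ \Gamma  \vdash  \ottnt{c}  :^{ \ottnt{r_{{\mathrm{0}}}} }  \ottnt{C} $ to $ \Gamma  \vdash  \ottnt{c}  :^{ \ottnt{r'_{{\mathrm{0}}}} }  \ottnt{C} $ by \rref{ST-SubR}, apply the induction hypothesis to the premise, and reapply \rref{ST-SubL}, using that $ <: $ respects $+$ so that $ \Gamma_{{\mathrm{1}}}  <:  \Gamma'_{{\mathrm{1}}} $ gives $  \Gamma_{{\mathrm{1}}}  +  \Gamma   <:   \Gamma'_{{\mathrm{1}}}  +  \Gamma  $; the \rref{ST-SubR} case is dual and immediate. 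For \rref{ST-Var} (and the analogous \rref{ST-Unit}): if $\ottnt{a} = \ottmv{z}$ then $\ottnt{r_{{\mathrm{0}}}} = \ottnt{q}$, $\ottnt{C} = \ottnt{A}$, the remaining grade-vector is $\mathbf{0}$, $ \ottnt{a}  \{  \ottnt{c}  /  \ottmv{z}  \}  = \ottnt{c}$, and the goal follows from $ \Gamma  \vdash  \ottnt{c}  :^{ \ottnt{q} }  \ottnt{A} $ by iterating Weakening (Lemma~\ref{BLSWeak}) to append $\Gamma_{{\mathrm{2}}}$ at grade $0$; if $\ottnt{a} = \ottmv{x} \neq \ottmv{z}$ then $\ottnt{r_{{\mathrm{0}}}} = 0$, and I would finish by noting that in $ \mathbb{N}_{\geq} $ every grade is $<: 0$, so \rref{ST-SubL} collapses $ \Gamma_{{\mathrm{1}}}  +  \Gamma , \Gamma_{{\mathrm{2}}}$ down to the all-zero-but-$\ottmv{x}$ context demanded by \rref{ST-Var}, whereas in $ \mathbb{N}_{=} $ a grade-$0$ derivation necessarily has the zero grade-vector (a one-line induction on the derivation of $\ottnt{c}$), so $ \Gamma_{{\mathrm{1}}}  +  \Gamma , \Gamma_{{\mathrm{2}}}$ already has the required shape.

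I expect the additive-splitting cases to be the crux: everything else is bookkeeping, but there one must track precisely how the underlying ungraded context partitions around $\ottmv{z}$, verify the $ \lfloor  \cdot  \rfloor $ side conditions at each recursive call, and rely on the Splitting lemma, which is exactly the ingredient that fails over arbitrary preordered semirings and holds here only because we have fixed $ \mathbb{N}_{=} $ and $ \mathbb{N}_{\geq} $.
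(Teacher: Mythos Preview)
Your proposal is correct and matches the paper's proof essentially point for point: induction on the typing derivation, the Splitting lemma (Lemma~\ref{BLSSplit}) for the additive-context rules, the \rref{ST-SubR} lift in the \rref{ST-SubL} case, and the two-way case split at \rref{ST-Var} with the $\mathbb{N}_{=}$ sub-case handled by the observation that grade-$0$ derivations have zero grade vector. The paper additionally splits the $\ottnt{a}=\ottmv{x}\neq\ottmv{z}$ case into two depending on whether $\ottmv{z}$ precedes or follows $\ottmv{x}$ in the context, but this is pure bookkeeping and your description already covers both.
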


\begin{proof}
By induction on $    \Gamma_{{\mathrm{1}}}  ,   \ottmv{z}  :^{ \ottnt{r_{{\mathrm{0}}}} }  \ottnt{C}     ,  \Gamma_{{\mathrm{2}}}   \vdash  \ottnt{a}  :^{ \ottnt{q} }  \ottnt{A} $.
\begin{itemize}

\item \Rref{ST-Var}. There are three cases to consider.
\begin{itemize}
\item $          0   \cdot  \Gamma_{{\mathrm{11}}}   ,   \ottmv{z}  :^{  0  }  \ottnt{C}     ,    0   \cdot  \Gamma_{{\mathrm{12}}}     ,   \ottmv{x}  :^{ \ottnt{q} }  \ottnt{A}     ,    0   \cdot  \Gamma_{{\mathrm{2}}}    \vdash   \ottmv{x}   :^{ \ottnt{q} }  \ottnt{A} $. Also, $ \Gamma  \vdash  \ottnt{c}  :^{  0  }  \ottnt{C} $ where $  \lfloor  \Gamma_{{\mathrm{11}}}  \rfloor   =   \lfloor  \Gamma  \rfloor  $.\\
Need to show: $      \Gamma  ,    0   \cdot  \Gamma_{{\mathrm{12}}}     ,   \ottmv{x}  :^{ \ottnt{q} }  \ottnt{A}     ,    0   \cdot  \Gamma_{{\mathrm{2}}}    \vdash   \ottmv{x}   :^{ \ottnt{q} }  \ottnt{A} $.\\
In case of $ \mathbb{N}_{=} $, we have, $ \overline{ \Gamma }  = \overline{0}$. This case, then, follows by \rref{ST-Var}.\\ In case of $ \mathbb{N}_{\geq} $, this case follows by \rref{ST-Var} and \rref{ST-SubL}.

\item $      0   \cdot  \Gamma_{{\mathrm{1}}}   ,   \ottmv{x}  :^{ \ottnt{q} }  \ottnt{A}     ,    0   \cdot  \Gamma_{{\mathrm{2}}}    \vdash   \ottmv{x}   :^{ \ottnt{q} }  \ottnt{A} $. Also, $ \Gamma  \vdash  \ottnt{a}  :^{ \ottnt{q} }  \ottnt{A} $ where $  \lfloor  \Gamma_{{\mathrm{1}}}  \rfloor   =   \lfloor  \Gamma  \rfloor  $.\\
Need to show: $  \Gamma  ,    0   \cdot  \Gamma_{{\mathrm{2}}}    \vdash  \ottnt{a}  :^{ \ottnt{q} }  \ottnt{A} $.\\
Follows by lemma \ref{BLSWeakP}.

\item $      0   \cdot  \Gamma_{{\mathrm{1}}}   ,   \ottmv{x}  :^{ \ottnt{q} }  \ottnt{A}     ,        0   \cdot  \Gamma_{{\mathrm{21}}}   ,   \ottmv{z}  :^{  0  }  \ottnt{C}     ,    0   \cdot  \Gamma_{{\mathrm{22}}}      \vdash   \ottmv{x}   :^{ \ottnt{q} }  \ottnt{A} $. Also, $    \Gamma_{{\mathrm{31}}}  ,   \ottmv{x}  :^{ \ottnt{r} }  \ottnt{A}     ,  \Gamma_{{\mathrm{32}}}   \vdash  \ottnt{c}  :^{  0  }  \ottnt{C} $ where $  \lfloor  \Gamma_{{\mathrm{31}}}  \rfloor   =   \lfloor  \Gamma_{{\mathrm{1}}}  \rfloor  $ and $  \lfloor  \Gamma_{{\mathrm{32}}}  \rfloor   =   \lfloor  \Gamma_{{\mathrm{21}}}  \rfloor  $.\\
Need to show: $    \Gamma_{{\mathrm{31}}}  ,   \ottmv{x}  :^{  (   \ottnt{q}  +  \ottnt{r}   )  }  \ottnt{A}     ,    \Gamma_{{\mathrm{32}}}  ,    0   \cdot  \Gamma_{{\mathrm{22}}}      \vdash   \ottmv{x}   :^{ \ottnt{q} }  \ottnt{A} $.\\
In case of $ \mathbb{N}_{=} $, we have, $ \overline{ \Gamma_{{\mathrm{31}}} }  = \overline{0}$ and $ \overline{ \Gamma_{{\mathrm{32}}} }  = \overline{0}$ and $r = 0$. This case, then, follows by \rref{ST-Var}.\\
In case of $ \mathbb{N}_{\geq} $, this case follows by \rref{ST-Var} and \rref{ST-SubL}. 

\end{itemize}

\item \Rref{ST-Lam}. Have: $    \Gamma_{{\mathrm{1}}}  ,   \ottmv{z}  :^{ \ottnt{r_{{\mathrm{0}}}} }  \ottnt{C}     ,  \Gamma_{{\mathrm{2}}}   \vdash   \lambda^{ \ottnt{r} }  \ottmv{x}  :  \ottnt{A}  .  \ottnt{b}   :^{ \ottnt{q} }   {}^{ \ottnt{r} }\!  \ottnt{A}  \to  \ottnt{B}  $ where $    \Gamma_{{\mathrm{1}}}  ,   \ottmv{z}  :^{ \ottnt{r_{{\mathrm{0}}}} }  \ottnt{C}     ,    \Gamma_{{\mathrm{2}}}  ,   \ottmv{x}  :^{  \ottnt{q}  \cdot  \ottnt{r}  }  \ottnt{A}      \vdash  \ottnt{b}  :^{ \ottnt{q} }  \ottnt{B} $. Also, $ \Gamma  \vdash  \ottnt{c}  :^{ \ottnt{r_{{\mathrm{0}}}} }  \ottnt{C} $ where $  \lfloor  \Gamma  \rfloor   =   \lfloor  \Gamma_{{\mathrm{1}}}  \rfloor  $.\\
Need to show: $    \Gamma_{{\mathrm{1}}}  +  \Gamma    ,  \Gamma_{{\mathrm{2}}}   \vdash    \lambda^{ \ottnt{r} }  \ottmv{x}  :  \ottnt{A}  .  \ottnt{b}   \{  \ottnt{c}  /  \ottmv{z}  \}   :^{ \ottnt{q} }   {}^{ \ottnt{r} }\!  \ottnt{A}  \to  \ottnt{B}  $.\\
Follows by IH and \rref{ST-Lam}. 

\item \Rref{ST-App}. Have: $      \Gamma_{{\mathrm{11}}}  +  \Gamma_{{\mathrm{12}}}    ,   \ottmv{z}  :^{  \ottnt{r_{{\mathrm{01}}}}  +  \ottnt{r_{{\mathrm{02}}}}  }  \ottnt{C}     ,    \Gamma_{{\mathrm{21}}}  +  \Gamma_{{\mathrm{22}}}     \vdash   \ottnt{b}  \:  \ottnt{a} ^{ \ottnt{r} }   :^{ \ottnt{q} }  \ottnt{B} $ where $    \Gamma_{{\mathrm{11}}}  ,   \ottmv{z}  :^{ \ottnt{r_{{\mathrm{01}}}} }  \ottnt{C}     ,  \Gamma_{{\mathrm{21}}}   \vdash  \ottnt{b}  :^{ \ottnt{q} }   {}^{ \ottnt{r} }\!  \ottnt{A}  \to  \ottnt{B}  $ and $    \Gamma_{{\mathrm{12}}}  ,   \ottmv{z}  :^{ \ottnt{r_{{\mathrm{02}}}} }  \ottnt{C}     ,  \Gamma_{{\mathrm{22}}}   \vdash  \ottnt{a}  :^{  \ottnt{q}  \cdot  \ottnt{r}  }  \ottnt{A} $. Also, $ \Gamma  \vdash  \ottnt{c}  :^{  \ottnt{r_{{\mathrm{01}}}}  +  \ottnt{r_{{\mathrm{02}}}}  }  \ottnt{C} $ where $  \lfloor  \Gamma  \rfloor   =   \lfloor  \Gamma_{{\mathrm{11}}}  \rfloor  $.\\
Need to show: $      \Gamma_{{\mathrm{11}}}  +  \Gamma_{{\mathrm{12}}}    +  \Gamma    ,    \Gamma_{{\mathrm{21}}}  +  \Gamma_{{\mathrm{22}}}     \vdash    \ottnt{b}  \{  \ottnt{c}  /  \ottmv{z}  \}   \:   \ottnt{a}  \{  \ottnt{c}  /  \ottmv{z}  \}  ^{ \ottnt{r} }   :^{ \ottnt{q} }  \ottnt{B} $.\\
By lemma \ref{BLSSplitP}, $\exists \Gamma_{{\mathrm{31}}}, \Gamma_{{\mathrm{32}}}$ such that $ \Gamma_{{\mathrm{31}}}  \vdash  \ottnt{c}  :^{ \ottnt{r_{{\mathrm{01}}}} }  \ottnt{C} $ and $ \Gamma_{{\mathrm{32}}}  \vdash  \ottnt{c}  :^{ \ottnt{r_{{\mathrm{02}}}} }  \ottnt{C} $ and $\Gamma =  \Gamma_{{\mathrm{31}}}  +  \Gamma_{{\mathrm{32}}} $.\\
By IH, $    \Gamma_{{\mathrm{11}}}  +  \Gamma_{{\mathrm{31}}}    ,  \Gamma_{{\mathrm{21}}}   \vdash   \ottnt{b}  \{  \ottnt{c}  /  \ottmv{z}  \}   :^{ \ottnt{q} }   {}^{ \ottnt{r} }\!  \ottnt{A}  \to  \ottnt{B}  $ and $    \Gamma_{{\mathrm{12}}}  +  \Gamma_{{\mathrm{32}}}    ,  \Gamma_{{\mathrm{22}}}   \vdash   \ottnt{a}  \{  \ottnt{c}  /  \ottmv{z}  \}   :^{  \ottnt{q}  \cdot  \ottnt{r}  }  \ottnt{A} $.\\
This case, then, follows by \rref{ST-App}.

\item \Rref{ST-Pair}. Have: $      \Gamma_{{\mathrm{11}}}  +  \Gamma_{{\mathrm{12}}}    ,   \ottmv{z}  :^{  \ottnt{r_{{\mathrm{01}}}}  +  \ottnt{r_{{\mathrm{02}}}}  }  \ottnt{C}     ,    \Gamma_{{\mathrm{21}}}  +  \Gamma_{{\mathrm{22}}}     \vdash   (  \ottnt{a_{{\mathrm{1}}}} ^{ \ottnt{r} } ,  \ottnt{a_{{\mathrm{2}}}}  )   :^{ \ottnt{q} }   {}^{ \ottnt{r} }\!  \ottnt{A_{{\mathrm{1}}}}  \: \times \:  \ottnt{A_{{\mathrm{2}}}}  $ where $    \Gamma_{{\mathrm{11}}}  ,   \ottmv{z}  :^{ \ottnt{r_{{\mathrm{01}}}} }  \ottnt{C}     ,  \Gamma_{{\mathrm{21}}}   \vdash  \ottnt{a_{{\mathrm{1}}}}  :^{  \ottnt{q}  \cdot  \ottnt{r}  }  \ottnt{A_{{\mathrm{1}}}} $ and $    \Gamma_{{\mathrm{12}}}  ,   \ottmv{z}  :^{ \ottnt{r_{{\mathrm{02}}}} }  \ottnt{C}     ,  \Gamma_{{\mathrm{22}}}   \vdash  \ottnt{a_{{\mathrm{2}}}}  :^{ \ottnt{q} }  \ottnt{A_{{\mathrm{2}}}} $. Also, $ \Gamma  \vdash  \ottnt{c}  :^{  \ottnt{r_{{\mathrm{01}}}}  +  \ottnt{r_{{\mathrm{02}}}}  }  \ottnt{C} $ where $  \lfloor  \Gamma  \rfloor   =   \lfloor  \Gamma_{{\mathrm{11}}}  \rfloor  $.\\
Need to show: $      \Gamma_{{\mathrm{11}}}  +  \Gamma_{{\mathrm{12}}}    +  \Gamma    ,    \Gamma_{{\mathrm{21}}}  +  \Gamma_{{\mathrm{22}}}     \vdash   (   \ottnt{a_{{\mathrm{1}}}}  \{  \ottnt{c}  /  \ottmv{z}  \}  ^{ \ottnt{r} } ,   \ottnt{a_{{\mathrm{2}}}}  \{  \ottnt{c}  /  \ottmv{z}  \}   )   :^{ \ottnt{q} }   {}^{ \ottnt{r} }\!  \ottnt{A_{{\mathrm{1}}}}  \: \times \:  \ottnt{A_{{\mathrm{2}}}}  $.\\
By lemma \ref{BLSSplitP}, $\exists \Gamma_{{\mathrm{31}}}, \Gamma_{{\mathrm{32}}}$ such that $ \Gamma_{{\mathrm{31}}}  \vdash  \ottnt{c}  :^{ \ottnt{r_{{\mathrm{01}}}} }  \ottnt{C} $ and $ \Gamma_{{\mathrm{32}}}  \vdash  \ottnt{c}  :^{ \ottnt{r_{{\mathrm{02}}}} }  \ottnt{C} $ and $\Gamma =  \Gamma_{{\mathrm{31}}}  +  \Gamma_{{\mathrm{32}}} $.\\
By IH, $    \Gamma_{{\mathrm{11}}}  +  \Gamma_{{\mathrm{31}}}    ,  \Gamma_{{\mathrm{21}}}   \vdash   \ottnt{a_{{\mathrm{1}}}}  \{  \ottnt{c}  /  \ottmv{z}  \}   :^{  \ottnt{q}  \cdot  \ottnt{r}  }  \ottnt{A_{{\mathrm{1}}}} $ and $    \Gamma_{{\mathrm{12}}}  +  \Gamma_{{\mathrm{32}}}    ,  \Gamma_{{\mathrm{22}}}   \vdash   \ottnt{a_{{\mathrm{2}}}}  \{  \ottnt{c}  /  \ottmv{z}  \}   :^{ \ottnt{q} }  \ottnt{A_{{\mathrm{2}}}} $.\\
This case, then, follows by \rref{ST-Pair}.

\item \Rref{ST-LetPair}. Have: $      \Gamma_{{\mathrm{11}}}  +  \Gamma_{{\mathrm{12}}}    ,   \ottmv{z}  :^{  \ottnt{r_{{\mathrm{01}}}}  +  \ottnt{r_{{\mathrm{02}}}}  }  \ottnt{C}     ,    \Gamma_{{\mathrm{21}}}  +  \Gamma_{{\mathrm{22}}}     \vdash   \mathbf{let}_{ \ottnt{q_{{\mathrm{0}}}} } \: (  \ottmv{x} ^{ \ottnt{r} } ,  \ottmv{y}  ) \: \mathbf{be} \:  \ottnt{a}  \: \mathbf{in} \:  \ottnt{b}   :^{ \ottnt{q} }  \ottnt{B} $ where $    \Gamma_{{\mathrm{11}}}  ,   \ottmv{z}  :^{ \ottnt{r_{{\mathrm{01}}}} }  \ottnt{C}     ,  \Gamma_{{\mathrm{21}}}   \vdash  \ottnt{a}  :^{  \ottnt{q}  \cdot  \ottnt{q_{{\mathrm{0}}}}  }   {}^{ \ottnt{r} }\!  \ottnt{A_{{\mathrm{1}}}}  \: \times \:  \ottnt{A_{{\mathrm{2}}}}  $ and $    \Gamma_{{\mathrm{12}}}  ,   \ottmv{z}  :^{ \ottnt{r_{{\mathrm{02}}}} }  \ottnt{C}     ,    \Gamma_{{\mathrm{22}}}  ,     \ottmv{x}  :^{  \ottnt{q}  \cdot   \ottnt{q_{{\mathrm{0}}}}  \cdot  \ottnt{r}   }  \ottnt{A_{{\mathrm{1}}}}   ,   \ottmv{y}  :^{  \ottnt{q}  \cdot  \ottnt{q_{{\mathrm{0}}}}  }  \ottnt{A_{{\mathrm{2}}}}        \vdash  \ottnt{b}  :^{ \ottnt{q} }  \ottnt{B} $. Also, $ \Gamma  \vdash  \ottnt{c}  :^{  \ottnt{r_{{\mathrm{01}}}}  +  \ottnt{r_{{\mathrm{02}}}}  }  \ottnt{C} $ where $  \lfloor  \Gamma  \rfloor   =   \lfloor  \Gamma_{{\mathrm{11}}}  \rfloor  $.\\
Need to show: $      \Gamma_{{\mathrm{11}}}  +  \Gamma_{{\mathrm{12}}}    +  \Gamma    ,    \Gamma_{{\mathrm{21}}}  +  \Gamma_{{\mathrm{22}}}     \vdash    \mathbf{let}_{ \ottnt{q_{{\mathrm{0}}}} } \: (  \ottmv{x} ^{ \ottnt{r} } ,  \ottmv{y}  ) \: \mathbf{be} \:   \ottnt{a}  \{  \ottnt{c}  /  \ottmv{z}  \}   \: \mathbf{in} \:  \ottnt{b}   \{  \ottnt{c}  /  \ottmv{z}  \}   :^{ \ottnt{q} }  \ottnt{B} $.\\
By lemma \ref{BLSSplitP}, $\exists \Gamma_{{\mathrm{31}}}, \Gamma_{{\mathrm{32}}}$ such that $ \Gamma_{{\mathrm{31}}}  \vdash  \ottnt{c}  :^{ \ottnt{r_{{\mathrm{01}}}} }  \ottnt{C} $ and $ \Gamma_{{\mathrm{32}}}  \vdash  \ottnt{c}  :^{ \ottnt{r_{{\mathrm{02}}}} }  \ottnt{C} $ and $\Gamma =  \Gamma_{{\mathrm{31}}}  +  \Gamma_{{\mathrm{32}}} $.\\
By IH, $    \Gamma_{{\mathrm{11}}}  +  \Gamma_{{\mathrm{31}}}    ,  \Gamma_{{\mathrm{21}}}   \vdash   \ottnt{a}  \{  \ottnt{c}  /  \ottmv{z}  \}   :^{  \ottnt{q}  \cdot  \ottnt{q_{{\mathrm{0}}}}  }   {}^{ \ottnt{r} }\!  \ottnt{A_{{\mathrm{1}}}}  \: \times \:  \ottnt{A_{{\mathrm{2}}}}  $ and $    \Gamma_{{\mathrm{12}}}  +  \Gamma_{{\mathrm{32}}}    ,    \Gamma_{{\mathrm{22}}}  ,     \ottmv{x}  :^{    \ottnt{q}  \cdot  \ottnt{q_{{\mathrm{0}}}}    \cdot  \ottnt{r}  }  \ottnt{A_{{\mathrm{1}}}}   ,   \ottmv{y}  :^{  \ottnt{q}  \cdot  \ottnt{q_{{\mathrm{0}}}}  }  \ottnt{A_{{\mathrm{2}}}}        \vdash   \ottnt{b}  \{  \ottnt{c}  /  \ottmv{z}  \}   :^{ \ottnt{q} }  \ottnt{B} $.\\
This case, then, follows by \rref{ST-LetPair}.

\item \Rref{ST-Unit}. Have: $      0   \cdot  \Gamma_{{\mathrm{1}}}   ,   \ottmv{z}  :^{  0  }  \ottnt{C}     ,    0   \cdot  \Gamma_{{\mathrm{2}}}    \vdash   \mathbf{unit}   :^{ \ottnt{q} }   \mathbf{Unit}  $. Also $ \Gamma  \vdash  \ottnt{c}  :^{  0  }  \ottnt{C} $ where $ \lfloor  \Gamma  \rfloor  =  \lfloor  \Gamma_{{\mathrm{1}}}  \rfloor $.\\
Need to show: $  \Gamma  ,    0   \cdot  \Gamma_{{\mathrm{2}}}    \vdash   \mathbf{unit}   :^{ \ottnt{q} }   \mathbf{Unit}  $.\\
In case of $ \mathbb{N}_{=} $, we have, $ \overline{ \Gamma }  = \overline{0}$. This case, then, follows by \rref{ST-Unit}.\\ 
In case of $ \mathbb{N}_{\geq} $, this case follows by \rref{ST-Unit} and \rref{ST-SubL}.

\item \Rref{ST-LetUnit}. Have: $      \Gamma_{{\mathrm{11}}}  +  \Gamma_{{\mathrm{12}}}    ,   \ottmv{z}  :^{  \ottnt{r_{{\mathrm{01}}}}  +  \ottnt{r_{{\mathrm{02}}}}  }  \ottnt{C}     ,    \Gamma_{{\mathrm{21}}}  +  \Gamma_{{\mathrm{22}}}     \vdash   \mathbf{let}_{ \ottnt{q_{{\mathrm{0}}}} } \: \mathbf{unit} \: \mathbf{be} \:  \ottnt{a}  \: \mathbf{in} \:  \ottnt{b}   :^{ \ottnt{q} }  \ottnt{B} $ where $    \Gamma_{{\mathrm{11}}}  ,   \ottmv{z}  :^{ \ottnt{r_{{\mathrm{01}}}} }  \ottnt{C}     ,  \Gamma_{{\mathrm{21}}}   \vdash  \ottnt{a}  :^{  \ottnt{q}  \cdot  \ottnt{q_{{\mathrm{0}}}}  }   \mathbf{Unit}  $ and $    \Gamma_{{\mathrm{12}}}  ,   \ottmv{z}  :^{ \ottnt{r_{{\mathrm{02}}}} }  \ottnt{C}     ,  \Gamma_{{\mathrm{22}}}   \vdash  \ottnt{b}  :^{ \ottnt{q} }  \ottnt{B} $. Also, $ \Gamma  \vdash  \ottnt{c}  :^{  \ottnt{r_{{\mathrm{01}}}}  +  \ottnt{r_{{\mathrm{02}}}}  }  \ottnt{C} $ where $  \lfloor  \Gamma  \rfloor   =   \lfloor  \Gamma_{{\mathrm{11}}}  \rfloor  $.\\
Need to show: $      \Gamma_{{\mathrm{11}}}  +  \Gamma_{{\mathrm{12}}}    +  \Gamma    ,    \Gamma_{{\mathrm{21}}}  +  \Gamma_{{\mathrm{22}}}     \vdash    \mathbf{let}_{ \ottnt{q_{{\mathrm{0}}}} } \: \mathbf{unit} \: \mathbf{be} \:   \ottnt{a}  \{  \ottnt{c}  /  \ottmv{z}  \}   \: \mathbf{in} \:  \ottnt{b}   \{  \ottnt{c}  /  \ottmv{z}  \}   :^{ \ottnt{q} }  \ottnt{B} $.\\
By lemma \ref{BLSSplitP}, $\exists \Gamma_{{\mathrm{31}}}, \Gamma_{{\mathrm{32}}}$ such that $ \Gamma_{{\mathrm{31}}}  \vdash  \ottnt{c}  :^{ \ottnt{r_{{\mathrm{01}}}} }  \ottnt{C} $ and $ \Gamma_{{\mathrm{32}}}  \vdash  \ottnt{c}  :^{ \ottnt{r_{{\mathrm{02}}}} }  \ottnt{C} $ and $\Gamma =  \Gamma_{{\mathrm{31}}}  +  \Gamma_{{\mathrm{32}}} $.\\
By IH, $    \Gamma_{{\mathrm{11}}}  +  \Gamma_{{\mathrm{31}}}    ,  \Gamma_{{\mathrm{21}}}   \vdash   \ottnt{a}  \{  \ottnt{c}  /  \ottmv{z}  \}   :^{  \ottnt{q}  \cdot  \ottnt{q_{{\mathrm{0}}}}  }   \mathbf{Unit}  $ and $    \Gamma_{{\mathrm{12}}}  +  \Gamma_{{\mathrm{32}}}    ,  \Gamma_{{\mathrm{22}}}   \vdash   \ottnt{b}  \{  \ottnt{c}  /  \ottmv{z}  \}   :^{ \ottnt{q} }  \ottnt{B} $.\\
This case, then, follows by \rref{ST-LetUnit}. 

\item \Rref{ST-Inj1}. Have: $    \Gamma_{{\mathrm{1}}}  ,   \ottmv{z}  :^{ \ottnt{r_{{\mathrm{0}}}} }  \ottnt{C}     ,  \Gamma_{{\mathrm{2}}}   \vdash   \mathbf{inj}_1 \:  \ottnt{a_{{\mathrm{1}}}}   :^{ \ottnt{q} }   \ottnt{A_{{\mathrm{1}}}}  +  \ottnt{A_{{\mathrm{2}}}}  $ where $    \Gamma_{{\mathrm{1}}}  ,   \ottmv{z}  :^{ \ottnt{r_{{\mathrm{0}}}} }  \ottnt{C}     ,  \Gamma_{{\mathrm{2}}}   \vdash  \ottnt{a_{{\mathrm{1}}}}  :^{ \ottnt{q} }  \ottnt{A_{{\mathrm{1}}}} $. Also, $ \Gamma  \vdash  \ottnt{c}  :^{ \ottnt{r_{{\mathrm{0}}}} }  \ottnt{C} $ where $  \lfloor  \Gamma  \rfloor   =   \lfloor  \Gamma_{{\mathrm{1}}}  \rfloor  $.\\
Need to show: $    \Gamma_{{\mathrm{1}}}  +  \Gamma    ,  \Gamma_{{\mathrm{2}}}   \vdash    \mathbf{inj}_1 \:  \ottnt{a_{{\mathrm{1}}}}   \{  \ottnt{c}  /  \ottmv{z}  \}   :^{ \ottnt{q} }   \ottnt{A_{{\mathrm{1}}}}  +  \ottnt{A_{{\mathrm{2}}}}  $.\\
By IH, $    \Gamma_{{\mathrm{1}}}  +  \Gamma    ,  \Gamma_{{\mathrm{2}}}   \vdash   \ottnt{a_{{\mathrm{1}}}}  \{  \ottnt{c}  /  \ottmv{z}  \}   :^{ \ottnt{q} }  \ottnt{A_{{\mathrm{1}}}} $.\\
This case, then, follows by \rref{ST-Inj1}.

\item \Rref{ST-Inj2}. Similar to \rref{ST-Inj1}.

\item \Rref{ST-Case}. Have: $      \Gamma_{{\mathrm{11}}}  +  \Gamma_{{\mathrm{12}}}    ,   \ottmv{z}  :^{  \ottnt{r_{{\mathrm{01}}}}  +  \ottnt{r_{{\mathrm{02}}}}  }  \ottnt{C}     ,    \Gamma_{{\mathrm{21}}}  +  \Gamma_{{\mathrm{22}}}     \vdash   \mathbf{case}_{ \ottnt{q_{{\mathrm{0}}}} } \:  \ottnt{a}  \: \mathbf{of} \:  \ottmv{x_{{\mathrm{1}}}}  .  \ottnt{b_{{\mathrm{1}}}}  \: ; \:  \ottmv{x_{{\mathrm{2}}}}  .  \ottnt{b_{{\mathrm{2}}}}   :^{ \ottnt{q} }  \ottnt{B} $ where $    \Gamma_{{\mathrm{11}}}  ,   \ottmv{z}  :^{ \ottnt{r_{{\mathrm{01}}}} }  \ottnt{C}     ,  \Gamma_{{\mathrm{21}}}   \vdash  \ottnt{a}  :^{  \ottnt{q}  \cdot  \ottnt{q_{{\mathrm{0}}}}  }   \ottnt{A_{{\mathrm{1}}}}  +  \ottnt{A_{{\mathrm{2}}}}  $ and $    \Gamma_{{\mathrm{12}}}  ,   \ottmv{z}  :^{ \ottnt{r_{{\mathrm{02}}}} }  \ottnt{C}     ,    \Gamma_{{\mathrm{22}}}  ,   \ottmv{x_{{\mathrm{1}}}}  :^{  \ottnt{q}  \cdot  \ottnt{q_{{\mathrm{0}}}}  }  \ottnt{A_{{\mathrm{1}}}}      \vdash  \ottnt{b_{{\mathrm{1}}}}  :^{ \ottnt{q} }  \ottnt{B} $ and $    \Gamma_{{\mathrm{12}}}  ,   \ottmv{z}  :^{ \ottnt{r_{{\mathrm{02}}}} }  \ottnt{C}     ,    \Gamma_{{\mathrm{22}}}  ,   \ottmv{x_{{\mathrm{2}}}}  :^{  \ottnt{q}  \cdot  \ottnt{q_{{\mathrm{0}}}}  }  \ottnt{A_{{\mathrm{2}}}}      \vdash  \ottnt{b_{{\mathrm{2}}}}  :^{ \ottnt{q} }  \ottnt{B} $. Also, $ \Gamma  \vdash  \ottnt{c}  :^{  \ottnt{r_{{\mathrm{01}}}}  +  \ottnt{r_{{\mathrm{02}}}}  }  \ottnt{C} $ where $  \lfloor  \Gamma  \rfloor   =   \lfloor  \Gamma_{{\mathrm{11}}}  \rfloor  $.\\
Need to show: $      \Gamma_{{\mathrm{11}}}  +  \Gamma_{{\mathrm{12}}}    +  \Gamma    ,    \Gamma_{{\mathrm{21}}}  +  \Gamma_{{\mathrm{22}}}     \vdash    \mathbf{case}_{ \ottnt{q_{{\mathrm{0}}}} } \:   \ottnt{a}  \{  \ottnt{c}  /  \ottmv{z}  \}   \: \mathbf{of} \:  \ottmv{x_{{\mathrm{1}}}}  .   \ottnt{b_{{\mathrm{1}}}}  \{  \ottnt{c}  /  \ottmv{z}  \}   \: ; \:  \ottmv{x_{{\mathrm{2}}}}  .  \ottnt{b_{{\mathrm{2}}}}   \{  \ottnt{c}  /  \ottmv{z}  \}   :^{ \ottnt{q} }  \ottnt{B} $.\\
By lemma \ref{BLSSplitP}, $\exists \Gamma_{{\mathrm{31}}}, \Gamma_{{\mathrm{32}}}$ such that $ \Gamma_{{\mathrm{31}}}  \vdash  \ottnt{c}  :^{ \ottnt{r_{{\mathrm{01}}}} }  \ottnt{C} $ and $ \Gamma_{{\mathrm{32}}}  \vdash  \ottnt{c}  :^{ \ottnt{r_{{\mathrm{02}}}} }  \ottnt{C} $ and $\Gamma =  \Gamma_{{\mathrm{31}}}  +  \Gamma_{{\mathrm{32}}} $.\\
By IH, $    \Gamma_{{\mathrm{11}}}  +  \Gamma_{{\mathrm{31}}}    ,  \Gamma_{{\mathrm{21}}}   \vdash   \ottnt{a}  \{  \ottnt{c}  /  \ottmv{z}  \}   :^{  \ottnt{q}  \cdot  \ottnt{q_{{\mathrm{0}}}}  }   \ottnt{A_{{\mathrm{1}}}}  +  \ottnt{A_{{\mathrm{2}}}}  $ and $    \Gamma_{{\mathrm{12}}}  +  \Gamma_{{\mathrm{32}}}    ,    \Gamma_{{\mathrm{22}}}  ,   \ottmv{x_{{\mathrm{1}}}}  :^{  \ottnt{q}  \cdot  \ottnt{q_{{\mathrm{0}}}}  }  \ottnt{A_{{\mathrm{1}}}}      \vdash   \ottnt{b_{{\mathrm{1}}}}  \{  \ottnt{c}  /  \ottmv{z}  \}   :^{ \ottnt{q} }  \ottnt{B} $ and $    \Gamma_{{\mathrm{12}}}  +  \Gamma_{{\mathrm{32}}}    ,    \Gamma_{{\mathrm{22}}}  ,   \ottmv{x_{{\mathrm{2}}}}  :^{  \ottnt{q}  \cdot  \ottnt{q_{{\mathrm{0}}}}  }  \ottnt{A_{{\mathrm{2}}}}      \vdash   \ottnt{b_{{\mathrm{2}}}}  \{  \ottnt{c}  /  \ottmv{z}  \}   :^{ \ottnt{q} }  \ottnt{B} $.\\
This case, then, follows by \rref{ST-Case}.

\item \Rref{ST-SubL}. Have: $    \Gamma_{{\mathrm{1}}}  ,   \ottmv{z}  :^{ \ottnt{r_{{\mathrm{0}}}} }  \ottnt{C}     ,  \Gamma_{{\mathrm{2}}}   \vdash  \ottnt{a}  :^{ \ottnt{q} }  \ottnt{A} $ where $    \Gamma'_{{\mathrm{1}}}  ,   \ottmv{z}  :^{ \ottnt{r'_{{\mathrm{0}}}} }  \ottnt{C}     ,  \Gamma'_{{\mathrm{2}}}   \vdash  \ottnt{a}  :^{ \ottnt{q} }  \ottnt{A} $ where $ \Gamma_{{\mathrm{1}}}  <:  \Gamma'_{{\mathrm{1}}} $ and $ \ottnt{r_{{\mathrm{0}}}}  <:  \ottnt{r'_{{\mathrm{0}}}} $ and $ \Gamma_{{\mathrm{2}}}  <:  \Gamma'_{{\mathrm{2}}} $. Also, $ \Gamma  \vdash  \ottnt{c}  :^{ \ottnt{r_{{\mathrm{0}}}} }  \ottnt{C} $ where $  \lfloor  \Gamma  \rfloor   =   \lfloor  \Gamma_{{\mathrm{1}}}  \rfloor  $. \\
Need to show: $    \Gamma_{{\mathrm{1}}}  +  \Gamma    ,  \Gamma_{{\mathrm{2}}}   \vdash   \ottnt{a}  \{  \ottnt{c}  /  \ottmv{z}  \}   :^{ \ottnt{q} }  \ottnt{A} $.\\
Since $ \ottnt{r_{{\mathrm{0}}}}  <:  \ottnt{r'_{{\mathrm{0}}}} $, by \rref{ST-SubR}, $ \Gamma  \vdash  \ottnt{c}  :^{ \ottnt{r'_{{\mathrm{0}}}} }  \ottnt{C} $.\\
By IH, $    \Gamma'_{{\mathrm{1}}}  +  \Gamma    ,  \Gamma'_{{\mathrm{2}}}   \vdash   \ottnt{a}  \{  \ottnt{c}  /  \ottmv{z}  \}   :^{ \ottnt{q} }  \ottnt{A} $.\\
This case, then, follows by \rref{ST-SubL}. 

\item \Rref{ST-SubR}. Have: $    \Gamma_{{\mathrm{1}}}  ,   \ottmv{z}  :^{ \ottnt{r_{{\mathrm{0}}}} }  \ottnt{C}     ,  \Gamma_{{\mathrm{2}}}   \vdash  \ottnt{a}  :^{ \ottnt{q'} }  \ottnt{A} $ where $    \Gamma_{{\mathrm{1}}}  ,   \ottmv{z}  :^{ \ottnt{r_{{\mathrm{0}}}} }  \ottnt{C}     ,  \Gamma_{{\mathrm{2}}}   \vdash  \ottnt{a}  :^{ \ottnt{q} }  \ottnt{A} $ and $ \ottnt{q}  <:  \ottnt{q'} $. Also, $ \Gamma  \vdash  \ottnt{c}  :^{ \ottnt{r_{{\mathrm{0}}}} }  \ottnt{C} $ where $  \lfloor  \Gamma  \rfloor   =   \lfloor  \Gamma_{{\mathrm{1}}}  \rfloor  $.\\
Need to show: $    \Gamma_{{\mathrm{1}}}  +  \Gamma    ,  \Gamma_{{\mathrm{2}}}   \vdash   \ottnt{a}  \{  \ottnt{c}  /  \ottmv{z}  \}   :^{ \ottnt{q'} }  \ottnt{A} $.\\
By IH, $    \Gamma_{{\mathrm{1}}}  +  \Gamma    ,  \Gamma_{{\mathrm{2}}}   \vdash   \ottnt{a}  \{  \ottnt{c}  /  \ottmv{z}  \}   :^{ \ottnt{q} }  \ottnt{A} $.\\
This case, then, follows by \rref{ST-SubR}.

\end{itemize}
\end{proof}

%------------------------------------------------------------------------------------------

\begin{theorem}[Preservation (Theorem \ref{SimplePreserve})] \label{SimplePreserveP}
If $ \Gamma  \vdash  \ottnt{a}  :^{ \ottnt{q} }  \ottnt{A} $ and $ \vdash  \ottnt{a}  \leadsto  \ottnt{a'} $, then $ \Gamma  \vdash  \ottnt{a'}  :^{ \ottnt{q} }  \ottnt{A} $.
\end{theorem}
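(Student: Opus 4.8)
The plan is to prove Theorem~\ref{SimplePreserveP} by induction on the derivation of $\Gamma \vdash a :^q A$, case-analysing the last typing rule, after first establishing the standard generation (inversion) lemmas for the introduction forms. Since \rref{ST-SubL} and \rref{ST-SubR} are ordinary rules that may occur anywhere, these generation lemmas must ``see through'' subsumption. For pairs this is painless: from $\Gamma \vdash (a_1^r, a_2) :^q {}^r\!A_1 \times A_2$ one gets $\Gamma_a,\Gamma_b$ with $\lfloor\Gamma_a\rfloor = \lfloor\Gamma_b\rfloor = \lfloor\Gamma\rfloor$, $\Gamma <: \Gamma_a + \Gamma_b$, $\Gamma_a \vdash a_1 :^{q \cdot r} A_1$ and $\Gamma_b \vdash a_2 :^q A_2$, because the components sit in a covariant grade position and can be raised to the required grade by \rref{ST-SubR} (and likewise for $\mathbf{unit}$, $\mathbf{inj}_1$, $\mathbf{inj}_2$). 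The delicate one is the generation lemma for $\lambda$: I want $\Gamma \vdash \lambda^r x{:}A.\,b :^q {}^r\!A \to B$ to yield a derivation of $\Gamma, x :^{q \cdot r} A \vdash b :^q B$ at exactly the grade $q$. Raw inversion only gives the body at some $q_\ast <: q$ with binder grade $q_\ast \cdot r$, and since the binder occupies a context (negative) position its grade cannot simply be raised; I recover the $q$-graded form by applying Lemma~\ref{BLSFactP} (Factorization) to the body when $q \neq 0$ (the $q = 0$ subcase is immediate), then Lemma~\ref{BLSMultP} (Multiplication) to re-scale it to grade $q$ with binder $q \cdot r$, and finally \rref{ST-SubL} together with transitivity of $<:$ to land back over $\Gamma$. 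I expect this to be the main obstacle, and it is exactly where the earlier metatheory — in particular that Factorization holds for $\mathbb{N}_{=}$ and $\mathbb{N}_{\geq}$ — is indispensable.

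Given the generation lemmas, the induction is routine. The cases \rref{ST-Var}, \rref{ST-Lam}, and the introduction rules for $\mathbf{Unit}$, products, and sums are vacuous under the call-by-name semantics, since those terms do not reduce. For \rref{ST-SubL} and \rref{ST-SubR} the term is unchanged, so $\vdash a \leadsto a'$ applies verbatim to the premise; I apply the IH and re-apply the same subsumption rule, using that $\Gamma <: \Gamma'$ (resp.\ $q <: q'$) is unaffected. For the elimination rules \rref{ST-App}, \rref{ST-LetPair}, \rref{ST-LetUnit}, \rref{ST-Case} I split on the shape of the step: when it is a congruence rule firing in the scrutinee position, I apply the IH to the scrutinee's sub-derivation and re-assemble the conclusion with the same rule and the same context split; when it is a $\beta$-rule, I proceed as below.

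For \rref{Step-AppBeta}, where $a = (\lambda^r x{:}A.\,b_1)\, a_1^r$ reduces to $b_1\{a_1/x\}$, the last rule is \rref{ST-App}, giving $\Gamma = \Gamma_1 + \Gamma_2$ with $\Gamma_1 \vdash \lambda^r x{:}A.\,b_1 :^q {}^r\!A \to B$ and $\Gamma_2 \vdash a_1 :^{q \cdot r} A$. The $\lambda$-generation lemma turns the first premise into $\Gamma_1, x :^{q \cdot r} A \vdash b_1 :^q B$, whose binder grade now matches the grade at which $a_1$ is typed, so Lemma~\ref{BLSSubstP} (Substitution) with substitute grade $q \cdot r$ gives $\Gamma_1 + \Gamma_2 \vdash b_1\{a_1/x\} :^q B$, which is the goal. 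For \rref{Step-LetPairBeta}, where $\mathbf{let}_{q_0}(x^r, y)\,\mathbf{be}\,(a_1^r, a_2)\,\mathbf{in}\,b$ reduces to $b\{a_1/x\}\{a_2/y\}$, \rref{ST-LetPair} gives a body premise $\Gamma_2, x :^{q \cdot q_0 \cdot r} A_1, y :^{q \cdot q_0} A_2 \vdash b :^q B$ and a scrutinee premise $\Gamma_1 \vdash (a_1^r, a_2) :^{q \cdot q_0} {}^r\!A_1 \times A_2$; pair-generation delivers $a_1$ at grade $q \cdot q_0 \cdot r$ and $a_2$ at grade $q \cdot q_0$ over contexts $\Gamma_a,\Gamma_b$ with $\Gamma_1 <: \Gamma_a + \Gamma_b$, again matching the binder grades, so two applications of Lemma~\ref{BLSSubstP} (substituting $a_1$ for $x$, then $a_2$ for $y$) followed by \rref{ST-SubL} using $\Gamma = \Gamma_1 + \Gamma_2 <: \Gamma_a + \Gamma_b + \Gamma_2$ close the case. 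The $\beta$-rules for $\mathbf{let}\,\mathbf{unit}$ and $\mathbf{case}$ are handled identically, using generation for $\mathbf{unit}$ and for $\mathbf{inj}_i$ respectively, Lemma~\ref{BLSSubstP}, and \rref{ST-SubL}; the side conditions $q_0 <: 1$ play no role in preservation but are carried along harmlessly.

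Note that the subtle point in the whole argument is the asymmetry between the application and the $\mathbf{let}$-style $\beta$-cases: in a $\mathbf{let}$-redex the scrutinee and the body are typed at a common grade by a single rule, so their grades scale together and can be pushed up to meet, whereas in a $\beta$-redex the $\lambda$ sits inside the function subterm and may have been subsumed independently of the enclosing application, stranding its binder grade below $q \cdot r$. Reconciling that mismatch is precisely what Factorization plus Multiplication achieve inside the $\lambda$-generation lemma, and verifying that those two tools apply for the semirings used to parametrize LDC — already done in Lemmas~\ref{BLSMultP} and~\ref{BLSFactP} — is the crux of the proof.
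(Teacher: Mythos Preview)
Your proposal is correct and follows essentially the same approach as the paper. The paper also proceeds by induction on the typing derivation with inversion on the step, and in the \rref{ST-App} $\beta$-case it performs exactly the manoeuvre you package into a $\lambda$-generation lemma: invert to get the body at some $q_0 <: q$, case-split on whether $q_0 = 0$ (which forces $q = 0$), and otherwise apply Factorization (Lemma~\ref{BLSFactP}) followed by Multiplication (Lemma~\ref{BLSMultP}) and \rref{ST-SubL} to realign the binder grade to $q \cdot r$; the remaining elimination cases use straightforward inversion and Substitution just as you describe. The only difference is organizational---you factor the inversion reasoning into upfront generation lemmas while the paper does it inline---but the technical content and the identification of the $\lambda$-case as the crux are identical.
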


\begin{proof}
By induction on $ \Gamma  \vdash  \ottnt{a}  :^{ \ottnt{q} }  \ottnt{A} $ and inversion on $ \vdash  \ottnt{a}  \leadsto  \ottnt{a'} $.

\begin{itemize}
\item \Rref{ST-App}. Have: $  \Gamma_{{\mathrm{1}}}  +  \Gamma_{{\mathrm{2}}}   \vdash   \ottnt{b}  \:  \ottnt{a} ^{ \ottnt{r} }   :^{ \ottnt{q} }  \ottnt{B} $ where $ \Gamma_{{\mathrm{1}}}  \vdash  \ottnt{b}  :^{ \ottnt{q} }   {}^{ \ottnt{r} }\!  \ottnt{A}  \to  \ottnt{B}  $ and $ \Gamma_{{\mathrm{2}}}  \vdash  \ottnt{a}  :^{  \ottnt{q}  \cdot  \ottnt{r}  }  \ottnt{A} $. \\ Let $ \vdash   \ottnt{b}  \:  \ottnt{a} ^{ \ottnt{r} }   \leadsto  \ottnt{c} $. By inversion:

\begin{itemize}
\item $ \vdash   \ottnt{b}  \:  \ottnt{a} ^{ \ottnt{r} }   \leadsto   \ottnt{b'}  \:  \ottnt{a} ^{ \ottnt{r} }  $, when $ \vdash  \ottnt{b}  \leadsto  \ottnt{b'} $. \\
Need to show: $  \Gamma_{{\mathrm{1}}}  +  \Gamma_{{\mathrm{2}}}   \vdash   \ottnt{b'}  \:  \ottnt{a} ^{ \ottnt{r} }   :^{ \ottnt{q} }  \ottnt{B} $.\\
Follows by IH and \rref{ST-App}.

\item $\ottnt{b} =  \lambda^{ \ottnt{r} }  \ottmv{x}  :  \ottnt{A'}  .  \ottnt{b'} $ and $ \vdash   \ottnt{b}  \:  \ottnt{a} ^{ \ottnt{r} }   \leadsto   \ottnt{b'}  \{  \ottnt{a}  /  \ottmv{x}  \}  $.\\
Need to show: $  \Gamma_{{\mathrm{1}}}  +  \Gamma_{{\mathrm{2}}}   \vdash   \ottnt{b'}  \{  \ottnt{a}  /  \ottmv{x}  \}   :^{ \ottnt{q} }  \ottnt{B} $.\\
By inversion on $ \Gamma_{{\mathrm{1}}}  \vdash   \lambda^{ \ottnt{r} }  \ottmv{x}  :  \ottnt{A'}  .  \ottnt{b'}   :^{ \ottnt{q} }   {}^{ \ottnt{r} }\!  \ottnt{A}  \to  \ottnt{B}  $, we get $\ottnt{A'} = \ottnt{A}$ and $  \Gamma_{{\mathrm{1}}}  ,   \ottmv{x}  :^{  \ottnt{q_{{\mathrm{0}}}}  \cdot  \ottnt{r}  }  \ottnt{A}    \vdash  \ottnt{b'}  :^{ \ottnt{q_{{\mathrm{0}}}} }  \ottnt{B} $ for some $ \ottnt{q_{{\mathrm{0}}}}  <:  \ottnt{q} $.\\
Now, there are two cases to consider.
\begin{itemize}
\item $\ottnt{q_{{\mathrm{0}}}} = 0$. Since $ \ottnt{q_{{\mathrm{0}}}}  <:  \ottnt{q} $, so $\ottnt{q} = 0$.\\
Then, by the substitution lemma, $  \Gamma_{{\mathrm{1}}}  +  \Gamma_{{\mathrm{2}}}   \vdash   \ottnt{b'}  \{  \ottnt{a}  /  \ottmv{x}  \}   :^{  0  }  \ottnt{B} $. 
\item $\ottnt{q_{{\mathrm{0}}}} \neq 0$. By lemma \ref{BLSFactP}, $\exists \Gamma'_{{\mathrm{1}}}$ and $\ottnt{r'}$ such that $  \Gamma'_{{\mathrm{1}}}  ,   \ottmv{x}  :^{ \ottnt{r'} }  \ottnt{A}    \vdash  \ottnt{b'}  :^{  1  }  \ottnt{B} $ and $ \Gamma_{{\mathrm{1}}}  <:   \ottnt{q_{{\mathrm{0}}}}  \cdot  \Gamma'_{{\mathrm{1}}}  $ and $  \ottnt{q_{{\mathrm{0}}}}  \cdot  \ottnt{r}   <:   \ottnt{q_{{\mathrm{0}}}}  \cdot  \ottnt{r'}  $.\\
Since $\ottnt{q_{{\mathrm{0}}}} \neq 0$, therefore $ \ottnt{r}  <:  \ottnt{r'} $. Hence, by \rref{ST-SubL}, $  \Gamma'_{{\mathrm{1}}}  ,   \ottmv{x}  :^{ \ottnt{r} }  \ottnt{A}    \vdash  \ottnt{b'}  :^{  1  }  \ottnt{B} $.\\
Now, by lemma \ref{BLSMultP}, $   \ottnt{q}  \cdot  \Gamma'_{{\mathrm{1}}}   ,   \ottmv{x}  :^{  \ottnt{q}  \cdot  \ottnt{r}  }  \ottnt{A}    \vdash  \ottnt{b'}  :^{ \ottnt{q} }  \ottnt{B} $. By \rref{ST-SubL}, $  \Gamma_{{\mathrm{1}}}  ,   \ottmv{x}  :^{  \ottnt{q}  \cdot  \ottnt{r}  }  \ottnt{A}    \vdash  \ottnt{b'}  :^{ \ottnt{q} }  \ottnt{B} $.\\
This case, then, follows by the substitution lemma.
\end{itemize}

\end{itemize} 

\item \Rref{ST-LetPair}. Have: $  \Gamma_{{\mathrm{1}}}  +  \Gamma_{{\mathrm{2}}}   \vdash   \mathbf{let}_{ \ottnt{q_{{\mathrm{0}}}} } \: (  \ottmv{x} ^{ \ottnt{r} } ,  \ottmv{y}  ) \: \mathbf{be} \:  \ottnt{a}  \: \mathbf{in} \:  \ottnt{b}   :^{ \ottnt{q} }  \ottnt{B} $ where $ \Gamma_{{\mathrm{1}}}  \vdash  \ottnt{a}  :^{  \ottnt{q}  \cdot  \ottnt{q_{{\mathrm{0}}}}  }   {}^{ \ottnt{r} }\!  \ottnt{A_{{\mathrm{1}}}}  \: \times \:  \ottnt{A_{{\mathrm{2}}}}  $ and $    \Gamma_{{\mathrm{2}}}  ,   \ottmv{x}  :^{    \ottnt{q}  \cdot  \ottnt{q_{{\mathrm{0}}}}    \cdot  \ottnt{r}  }  \ottnt{A_{{\mathrm{1}}}}     ,   \ottmv{y}  :^{  \ottnt{q}  \cdot  \ottnt{q_{{\mathrm{0}}}}  }  \ottnt{A_{{\mathrm{2}}}}    \vdash  \ottnt{b}  :^{ \ottnt{q} }  \ottnt{B} $. \\ Let $ \vdash   \mathbf{let}_{ \ottnt{q_{{\mathrm{0}}}} } \: (  \ottmv{x} ^{ \ottnt{r} } ,  \ottmv{y}  ) \: \mathbf{be} \:  \ottnt{a}  \: \mathbf{in} \:  \ottnt{b}   \leadsto  \ottnt{c} $. By inversion:

\begin{itemize}
\item $ \vdash   \mathbf{let}_{ \ottnt{q_{{\mathrm{0}}}} } \: (  \ottmv{x} ^{ \ottnt{r} } ,  \ottmv{y}  ) \: \mathbf{be} \:  \ottnt{a}  \: \mathbf{in} \:  \ottnt{b}   \leadsto   \mathbf{let}_{ \ottnt{q_{{\mathrm{0}}}} } \: (  \ottmv{x} ^{ \ottnt{r} } ,  \ottmv{y}  ) \: \mathbf{be} \:  \ottnt{a'}  \: \mathbf{in} \:  \ottnt{b}  $, when $ \vdash  \ottnt{a}  \leadsto  \ottnt{a'} $.\\
Need to show: $  \Gamma_{{\mathrm{1}}}  +  \Gamma_{{\mathrm{2}}}   \vdash   \mathbf{let}_{ \ottnt{q_{{\mathrm{0}}}} } \: (  \ottmv{x} ^{ \ottnt{r} } ,  \ottmv{y}  ) \: \mathbf{be} \:  \ottnt{a'}  \: \mathbf{in} \:  \ottnt{b}   :^{ \ottnt{q} }  \ottnt{B} $.\\
Follows by IH and \rref{ST-LetPair}.

\item $ \vdash   \mathbf{let}_{ \ottnt{q_{{\mathrm{0}}}} } \: (  \ottmv{x} ^{ \ottnt{r} } ,  \ottmv{y}  ) \: \mathbf{be} \:   (  \ottnt{a_{{\mathrm{1}}}} ^{ \ottnt{r} } ,  \ottnt{a_{{\mathrm{2}}}}  )   \: \mathbf{in} \:  \ottnt{b}   \leadsto    \ottnt{b}  \{  \ottnt{a_{{\mathrm{1}}}}  /  \ottmv{x}  \}   \{  \ottnt{a_{{\mathrm{2}}}}  /  \ottmv{y}  \}  $.\\
Need to show: $  \Gamma_{{\mathrm{1}}}  +  \Gamma_{{\mathrm{2}}}   \vdash    \ottnt{b}  \{  \ottnt{a_{{\mathrm{1}}}}  /  \ottmv{x}  \}   \{  \ottnt{a_{{\mathrm{2}}}}  /  \ottmv{y}  \}   :^{ \ottnt{q} }  \ottnt{B} $.\\
By inversion on $ \Gamma_{{\mathrm{1}}}  \vdash   (  \ottnt{a_{{\mathrm{1}}}} ^{ \ottnt{r} } ,  \ottnt{a_{{\mathrm{2}}}}  )   :^{  \ottnt{q}  \cdot  \ottnt{q_{{\mathrm{0}}}}  }   {}^{ \ottnt{r} }\!  \ottnt{A_{{\mathrm{1}}}}  \: \times \:  \ottnt{A_{{\mathrm{2}}}}  $, we have:\\
$\exists \Gamma_{{\mathrm{11}}}, \Gamma_{{\mathrm{12}}}$ such that $ \Gamma_{{\mathrm{11}}}  \vdash  \ottnt{a_{{\mathrm{1}}}}  :^{    \ottnt{q}  \cdot  \ottnt{q_{{\mathrm{0}}}}    \cdot  \ottnt{r}  }  \ottnt{A_{{\mathrm{1}}}} $ and $ \Gamma_{{\mathrm{12}}}  \vdash  \ottnt{a_{{\mathrm{2}}}}  :^{  \ottnt{q}  \cdot  \ottnt{q_{{\mathrm{0}}}}  }  \ottnt{A_{{\mathrm{2}}}} $ and $\Gamma_{{\mathrm{1}}} =  \Gamma_{{\mathrm{11}}}  +  \Gamma_{{\mathrm{12}}} $.\\
This case, then, follows by applying the substitution lemma twice.
\end{itemize}

\item \Rref{ST-LetUnit}. Have: $  \Gamma_{{\mathrm{1}}}  +  \Gamma_{{\mathrm{2}}}   \vdash   \mathbf{let}_{ \ottnt{q_{{\mathrm{0}}}} } \: \mathbf{unit} \: \mathbf{be} \:  \ottnt{a}  \: \mathbf{in} \:  \ottnt{b}   :^{ \ottnt{q} }  \ottnt{B} $ where $ \Gamma_{{\mathrm{1}}}  \vdash  \ottnt{a}  :^{  \ottnt{q}  \cdot  \ottnt{q_{{\mathrm{0}}}}  }   \mathbf{Unit}  $ and $ \Gamma_{{\mathrm{2}}}  \vdash  \ottnt{b}  :^{ \ottnt{q} }  \ottnt{B} $.\\
Let $ \vdash   \mathbf{let}_{ \ottnt{q_{{\mathrm{0}}}} } \: \mathbf{unit} \: \mathbf{be} \:  \ottnt{a}  \: \mathbf{in} \:  \ottnt{b}   \leadsto  \ottnt{c} $. By inversion:

\begin{itemize}
\item $ \vdash   \mathbf{let}_{ \ottnt{q_{{\mathrm{0}}}} } \: \mathbf{unit} \: \mathbf{be} \:  \ottnt{a}  \: \mathbf{in} \:  \ottnt{b}   \leadsto   \mathbf{let}_{ \ottnt{q_{{\mathrm{0}}}} } \: \mathbf{unit} \: \mathbf{be} \:  \ottnt{a'}  \: \mathbf{in} \:  \ottnt{b}  $, when $ \vdash  \ottnt{a}  \leadsto  \ottnt{a'} $.\\
Need to show: $  \Gamma_{{\mathrm{1}}}  +  \Gamma_{{\mathrm{2}}}   \vdash   \mathbf{let}_{ \ottnt{q_{{\mathrm{0}}}} } \: \mathbf{unit} \: \mathbf{be} \:  \ottnt{a'}  \: \mathbf{in} \:  \ottnt{b}   :^{ \ottnt{q} }  \ottnt{B} $.\\
Follows by IH and \rref{ST-LetUnit}.

\item $ \vdash   \mathbf{let}_{ \ottnt{q_{{\mathrm{0}}}} } \: \mathbf{unit} \: \mathbf{be} \:   \mathbf{unit}   \: \mathbf{in} \:  \ottnt{b}   \leadsto  \ottnt{b} $.\\
Need to show: $  \Gamma_{{\mathrm{1}}}  +  \Gamma_{{\mathrm{2}}}   \vdash  \ottnt{b}  :^{ \ottnt{q} }  \ottnt{B} $.\\
In case of $ \mathbb{N}_{=} $, we have, $ \overline{ \Gamma_{{\mathrm{1}}} }  = \overline{0}$. This case, then, follows directly from the premise, $ \Gamma_{{\mathrm{2}}}  \vdash  \ottnt{b}  :^{ \ottnt{q} }  \ottnt{B} $.\\
In case of $ \mathbb{N}_{\geq} $, this case follows by \rref{ST-SubL}.
\end{itemize}

\item \Rref{ST-Case}. Have: $  \Gamma_{{\mathrm{1}}}  +  \Gamma_{{\mathrm{2}}}   \vdash   \mathbf{case}_{ \ottnt{q_{{\mathrm{0}}}} } \:  \ottnt{a}  \: \mathbf{of} \:  \ottmv{x_{{\mathrm{1}}}}  .  \ottnt{b_{{\mathrm{1}}}}  \: ; \:  \ottmv{x_{{\mathrm{2}}}}  .  \ottnt{b_{{\mathrm{2}}}}   :^{ \ottnt{q} }  \ottnt{B} $ where $ \Gamma_{{\mathrm{1}}}  \vdash  \ottnt{a}  :^{  \ottnt{q}  \cdot  \ottnt{q_{{\mathrm{0}}}}  }   \ottnt{A_{{\mathrm{1}}}}  +  \ottnt{A_{{\mathrm{2}}}}  $ and $  \Gamma_{{\mathrm{2}}}  ,   \ottmv{x_{{\mathrm{1}}}}  :^{  \ottnt{q}  \cdot  \ottnt{q_{{\mathrm{0}}}}  }  \ottnt{A_{{\mathrm{1}}}}    \vdash  \ottnt{b_{{\mathrm{1}}}}  :^{ \ottnt{q} }  \ottnt{B} $ and $  \Gamma_{{\mathrm{2}}}  ,   \ottmv{x_{{\mathrm{2}}}}  :^{  \ottnt{q}  \cdot  \ottnt{q_{{\mathrm{0}}}}  }  \ottnt{A_{{\mathrm{2}}}}    \vdash  \ottnt{b_{{\mathrm{2}}}}  :^{ \ottnt{q} }  \ottnt{B} $.\\ Let $ \vdash   \mathbf{case}_{ \ottnt{q_{{\mathrm{0}}}} } \:  \ottnt{a}  \: \mathbf{of} \:  \ottmv{x_{{\mathrm{1}}}}  .  \ottnt{b_{{\mathrm{1}}}}  \: ; \:  \ottmv{x_{{\mathrm{2}}}}  .  \ottnt{b_{{\mathrm{2}}}}   \leadsto  \ottnt{c} $. By inversion:

\begin{itemize}
\item $ \vdash   \mathbf{case}_{ \ottnt{q_{{\mathrm{0}}}} } \:  \ottnt{a}  \: \mathbf{of} \:  \ottmv{x_{{\mathrm{1}}}}  .  \ottnt{b_{{\mathrm{1}}}}  \: ; \:  \ottmv{x_{{\mathrm{2}}}}  .  \ottnt{b_{{\mathrm{2}}}}   \leadsto   \mathbf{case}_{ \ottnt{q_{{\mathrm{0}}}} } \:  \ottnt{a'}  \: \mathbf{of} \:  \ottmv{x_{{\mathrm{1}}}}  .  \ottnt{b_{{\mathrm{1}}}}  \: ; \:  \ottmv{x_{{\mathrm{2}}}}  .  \ottnt{b_{{\mathrm{2}}}}  $, when $ \vdash  \ottnt{a}  \leadsto  \ottnt{a'} $.\\
Need to show: $  \Gamma_{{\mathrm{1}}}  +  \Gamma_{{\mathrm{2}}}   \vdash   \mathbf{case}_{ \ottnt{q_{{\mathrm{0}}}} } \:  \ottnt{a'}  \: \mathbf{of} \:  \ottmv{x_{{\mathrm{1}}}}  .  \ottnt{b_{{\mathrm{1}}}}  \: ; \:  \ottmv{x_{{\mathrm{2}}}}  .  \ottnt{b_{{\mathrm{2}}}}   :^{ \ottnt{q} }  \ottnt{B} $.\\
Follows by IH and \rref{ST-Case}.

\item $ \vdash   \mathbf{case}_{ \ottnt{q_{{\mathrm{0}}}} } \:   (   \mathbf{inj}_1 \:  \ottnt{a_{{\mathrm{1}}}}   )   \: \mathbf{of} \:  \ottmv{x_{{\mathrm{1}}}}  .  \ottnt{b_{{\mathrm{1}}}}  \: ; \:  \ottmv{x_{{\mathrm{2}}}}  .  \ottnt{b_{{\mathrm{2}}}}   \leadsto   \ottnt{b_{{\mathrm{1}}}}  \{  \ottnt{a_{{\mathrm{1}}}}  /  \ottmv{x_{{\mathrm{1}}}}  \}  $.\\
Need to show $  \Gamma_{{\mathrm{1}}}  +  \Gamma_{{\mathrm{2}}}   \vdash   \ottnt{b_{{\mathrm{1}}}}  \{  \ottnt{a_{{\mathrm{1}}}}  /  \ottmv{x_{{\mathrm{1}}}}  \}   :^{ \ottnt{q} }  \ottnt{B} $.\\
By inversion on $ \Gamma_{{\mathrm{1}}}  \vdash   \mathbf{inj}_1 \:  \ottnt{a_{{\mathrm{1}}}}   :^{  \ottnt{q}  \cdot  \ottnt{q_{{\mathrm{0}}}}  }   \ottnt{A_{{\mathrm{1}}}}  +  \ottnt{A_{{\mathrm{2}}}}  $, we have, $ \Gamma_{{\mathrm{1}}}  \vdash  \ottnt{a_{{\mathrm{1}}}}  :^{  \ottnt{q}  \cdot  \ottnt{q_{{\mathrm{0}}}}  }  \ottnt{A_{{\mathrm{1}}}} $.\\
This case, then, follows by applying the substitution lemma.

\item $ \vdash   \mathbf{case}_{ \ottnt{q_{{\mathrm{0}}}} } \:   (   \mathbf{inj}_2 \:  \ottnt{a_{{\mathrm{2}}}}   )   \: \mathbf{of} \:  \ottmv{x_{{\mathrm{1}}}}  .  \ottnt{b_{{\mathrm{1}}}}  \: ; \:  \ottmv{x_{{\mathrm{2}}}}  .  \ottnt{b_{{\mathrm{2}}}}   \leadsto   \ottnt{b_{{\mathrm{2}}}}  \{  \ottnt{a_{{\mathrm{2}}}}  /  \ottmv{x_{{\mathrm{2}}}}  \}  $.\\
Similar to the previous case.

\end{itemize}
\item \Rref{ST-SubL,ST-SubR}. Follows by IH.
\end{itemize}
\end{proof}

%--------------------------------------------------------------------------------------------

\begin{theorem}[Progress (Theorem \ref{BLSProg})] \label{BLSProgP}
If $  \emptyset   \vdash  \ottnt{a}  :^{ \ottnt{q} }  \ottnt{A} $, then either $\ottnt{a}$ is a value or there exists $\ottnt{a'}$ such that $ \vdash  \ottnt{a}  \leadsto  \ottnt{a'} $.
\end{theorem}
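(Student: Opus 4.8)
The plan is to prove Progress by induction on the typing derivation $\emptyset \vdash a :^q A$, the only real content being a canonical forms lemma that I would establish first. Since the subsumption rules \rref{ST-SubL,ST-SubR} leave the subject term and the type untouched (changing only the context and the grade $q$ respectively), a routine induction that threads through these two rules shows: if $\emptyset \vdash v :^q A$ and $v$ is a value, then $v$ has the shape dictated by the head constructor of $A$ --- a $\lambda$-abstraction when $A = {}^r\! A_1 \to B$, a pair $(a_1^r , a_2)$ when $A = {}^r\! A_1 \times A_2$, $\mathbf{unit}$ when $A = \mathbf{Unit}$, and an injection $\mathbf{inj}_i\, a_i$ when $A = A_1 + A_2$. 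In the call-by-name calculus the introduction forms are values regardless of whether their immediate subterms are, so the only thing inversion must rule out at each type is a different constructor or a variable, and the latter is impossible in the empty context.

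With canonical forms in hand I would carry out the main induction. The introduction rules \rref{ST-Lam,ST-Pair,ST-Unit,ST-InjOne,ST-InjTwo} are immediate, since in each case $a$ is already a value, and \rref{ST-Var} is vacuous in the empty context. For each elimination rule --- \rref{ST-App,ST-LetPair,ST-LetUnit,ST-Case} --- I apply the induction hypothesis to the subterm in the eliminated position (the function in an application, the scrutinee in a $\mathbf{let}$ or $\mathbf{case}$), noting that its subderivation is again in the empty context. If that subterm steps, the whole term steps by the corresponding congruence rule (e.g.\ \rref{Step-AppL} for application). If it is a value, canonical forms forces it into the matching introduction form, and then the appropriate $\beta$-rule fires --- \rref{Step-AppBeta}, \rref{Step-LetPairBeta}, the analogous rule for $\mathbf{let}\,\mathbf{unit}$, or the two case-$\beta$ rules --- where the grade annotations line up automatically because the introduction and the elimination form were typed against the same annotation in the premises. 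Finally, \rref{ST-SubL,ST-SubR} are discharged directly by the induction hypothesis, as the subject term is unchanged.

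I expect the canonical forms lemma to be the only delicate point, and within it the handling of \rref{ST-SubR} over $\mathbb{N}_{\geq}$: one must check that weakening the grade (including to or from $0$) never alters the type, so that inversion on the head constructor still applies after stripping off any trailing subsumption steps. Since the conclusion type of every syntax-directed rule is literally the constructor determined by the term, this goes through, and the remainder is bookkeeping. The argument is uniform in the parametrizing semiring, so a single proof covers LDC($\mathcal{Q}_{\mathbb{N}}$) for both $\mathbb{N}_{=}$ and $\mathbb{N}_{\geq}$.
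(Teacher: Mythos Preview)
Your proposal is correct and follows essentially the same route as the paper: induction on the typing derivation, with the introduction rules immediate (values), \rref{ST-Var} vacuous in the empty context, elimination rules handled by applying the induction hypothesis to the scrutinee and then either a congruence step or a $\beta$-step, and the subsumption rules discharged by the induction hypothesis. The only presentational difference is that you factor out canonical forms as a separate lemma, whereas the paper invokes it inline via inversion (e.g.\ ``If $b$ is a value, then $b = \lambda^{r} x{:}A.\,b'$'').
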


\begin{proof}
By induction on $  \emptyset   \vdash  \ottnt{a}  :^{ \ottnt{q} }  \ottnt{A} $.
\begin{itemize}

\item \Rref{ST-Var}. Does not apply since the context here is empty.

\item \Rref{ST-App}. Have: $  \emptyset   \vdash   \ottnt{b}  \:  \ottnt{a} ^{ \ottnt{r} }   :^{ \ottnt{q} }  \ottnt{B} $ where $  \emptyset   \vdash  \ottnt{b}  :^{ \ottnt{q} }   {}^{ \ottnt{r} }\!  \ottnt{A}  \to  \ottnt{B}  $ and $  \emptyset   \vdash  \ottnt{a}  :^{  \ottnt{q}  \cdot  \ottnt{r}  }  \ottnt{A} $. \\
Need to show: $\exists c,  \vdash   \ottnt{b}  \:  \ottnt{a} ^{ \ottnt{r} }   \leadsto  \ottnt{c} $.\\
By IH, $\ottnt{b}$ is either a value or $ \vdash  \ottnt{b}  \leadsto  \ottnt{b'} $.\\
If $\ottnt{b}$ is a value, then $\ottnt{b} =  \lambda^{ \ottnt{r} }  \ottmv{x}  :  \ottnt{A}  .  \ottnt{b'} $ for some $\ottnt{b'}$. Therefore, $ \vdash   \ottnt{b}  \:  \ottnt{a} ^{ \ottnt{r} }   \leadsto   \ottnt{b'}  \{  \ottnt{a}  /  \ottmv{x}  \}  $.\\
Otherwise, $ \vdash   \ottnt{b}  \:  \ottnt{a} ^{ \ottnt{r} }   \leadsto   \ottnt{b'}  \:  \ottnt{a} ^{ \ottnt{r} }  $.

\item \Rref{ST-LetPair}. Have: $  \emptyset   \vdash   \mathbf{let}_{ \ottnt{q_{{\mathrm{0}}}} } \: (  \ottmv{x} ^{ \ottnt{r} } ,  \ottmv{y}  ) \: \mathbf{be} \:  \ottnt{a}  \: \mathbf{in} \:  \ottnt{b}   :^{ \ottnt{q} }  \ottnt{B} $ where $  \emptyset   \vdash  \ottnt{a}  :^{  \ottnt{q}  \cdot  \ottnt{q_{{\mathrm{0}}}}  }   {}^{ \ottnt{r} }\!  \ottnt{A_{{\mathrm{1}}}}  \: \times \:  \ottnt{A_{{\mathrm{2}}}}  $ and $   \ottmv{x}  :^{    \ottnt{q}  \cdot  \ottnt{q_{{\mathrm{0}}}}    \cdot  \ottnt{r}  }  \ottnt{A_{{\mathrm{1}}}}   ,   \ottmv{y}  :^{  \ottnt{q}  \cdot  \ottnt{q_{{\mathrm{0}}}}  }  \ottnt{A_{{\mathrm{2}}}}    \vdash  \ottnt{b}  :^{ \ottnt{q} }  \ottnt{B} $.\\
Need to show: $\exists c,  \vdash   \mathbf{let}_{ \ottnt{q_{{\mathrm{0}}}} } \: (  \ottmv{x} ^{ \ottnt{r} } ,  \ottmv{y}  ) \: \mathbf{be} \:  \ottnt{a}  \: \mathbf{in} \:  \ottnt{b}   \leadsto  \ottnt{c} $.\\
By IH, $\ottnt{a}$ is either a value or $ \vdash  \ottnt{a}  \leadsto  \ottnt{a'} $.\\
If $\ottnt{a}$ is a value, then $\ottnt{a} =  (  \ottnt{a_{{\mathrm{1}}}} ^{ \ottnt{r} } ,  \ottnt{a_{{\mathrm{2}}}}  ) $. Therefore,
$ \vdash   \mathbf{let}_{ \ottnt{q_{{\mathrm{0}}}} } \: (  \ottmv{x} ^{ \ottnt{r} } ,  \ottmv{y}  ) \: \mathbf{be} \:  \ottnt{a}  \: \mathbf{in} \:  \ottnt{b}   \leadsto    \ottnt{b}  \{  \ottnt{a_{{\mathrm{1}}}}  /  \ottmv{x}  \}   \{  \ottnt{a_{{\mathrm{2}}}}  /  \ottmv{y}  \}  $.\\
Otherwise, $ \vdash   \mathbf{let}_{ \ottnt{q_{{\mathrm{0}}}} } \: (  \ottmv{x} ^{ \ottnt{r} } ,  \ottmv{y}  ) \: \mathbf{be} \:  \ottnt{a}  \: \mathbf{in} \:  \ottnt{b}   \leadsto   \mathbf{let}_{ \ottnt{q_{{\mathrm{0}}}} } \: (  \ottmv{x} ^{ \ottnt{r} } ,  \ottmv{y}  ) \: \mathbf{be} \:  \ottnt{a'}  \: \mathbf{in} \:  \ottnt{b}  $.

\item \Rref{ST-LetUnit}. Have: $  \emptyset   \vdash   \mathbf{let}_{ \ottnt{q_{{\mathrm{0}}}} } \: \mathbf{unit} \: \mathbf{be} \:  \ottnt{a}  \: \mathbf{in} \:  \ottnt{b}   :^{ \ottnt{q} }  \ottnt{B} $ where $  \emptyset   \vdash  \ottnt{a}  :^{  \ottnt{q}  \cdot  \ottnt{q_{{\mathrm{0}}}}  }   \mathbf{Unit}  $ and $  \emptyset   \vdash  \ottnt{b}  :^{ \ottnt{q} }  \ottnt{B} $.\\
Need to show: $\exists c,  \vdash   \mathbf{let}_{ \ottnt{q_{{\mathrm{0}}}} } \: \mathbf{unit} \: \mathbf{be} \:  \ottnt{a}  \: \mathbf{in} \:  \ottnt{b}   \leadsto  \ottnt{c} $.\\
By IH, $\ottnt{a}$ is either a value or $ \vdash  \ottnt{a}  \leadsto  \ottnt{a'} $.\\
If $\ottnt{a}$ is a value, then $\ottnt{a} =  \mathbf{unit} $. Therefore,
$ \vdash   \mathbf{let}_{ \ottnt{q_{{\mathrm{0}}}} } \: \mathbf{unit} \: \mathbf{be} \:  \ottnt{a}  \: \mathbf{in} \:  \ottnt{b}   \leadsto  \ottnt{b} $.\\
Otherwise, $ \vdash   \mathbf{let}_{ \ottnt{q_{{\mathrm{0}}}} } \: \mathbf{unit} \: \mathbf{be} \:  \ottnt{a}  \: \mathbf{in} \:  \ottnt{b}   \leadsto   \mathbf{let}_{ \ottnt{q_{{\mathrm{0}}}} } \: \mathbf{unit} \: \mathbf{be} \:  \ottnt{a'}  \: \mathbf{in} \:  \ottnt{b}  $.

\item \Rref{ST-Case}. Have: $  \emptyset   \vdash   \mathbf{case}_{ \ottnt{q_{{\mathrm{0}}}} } \:  \ottnt{a}  \: \mathbf{of} \:  \ottmv{x_{{\mathrm{1}}}}  .  \ottnt{b_{{\mathrm{1}}}}  \: ; \:  \ottmv{x_{{\mathrm{2}}}}  .  \ottnt{b_{{\mathrm{2}}}}   :^{ \ottnt{q} }  \ottnt{B} $ where $  \emptyset   \vdash  \ottnt{a}  :^{  \ottnt{q}  \cdot  \ottnt{q_{{\mathrm{0}}}}  }   \ottnt{A_{{\mathrm{1}}}}  +  \ottnt{A_{{\mathrm{2}}}}  $ and $  \ottmv{x_{{\mathrm{1}}}}  :^{  \ottnt{q}  \cdot  \ottnt{q_{{\mathrm{0}}}}  }  \ottnt{A_{{\mathrm{1}}}}   \vdash  \ottnt{b_{{\mathrm{1}}}}  :^{ \ottnt{q} }  \ottnt{B} $ and $  \ottmv{x_{{\mathrm{2}}}}  :^{  \ottnt{q}  \cdot  \ottnt{q_{{\mathrm{0}}}}  }  \ottnt{A_{{\mathrm{2}}}}   \vdash  \ottnt{b_{{\mathrm{2}}}}  :^{ \ottnt{q} }  \ottnt{B} $.\\
Need to show: $\exists c,  \vdash   \mathbf{case}_{ \ottnt{q_{{\mathrm{0}}}} } \:  \ottnt{a}  \: \mathbf{of} \:  \ottmv{x_{{\mathrm{1}}}}  .  \ottnt{b_{{\mathrm{1}}}}  \: ; \:  \ottmv{x_{{\mathrm{2}}}}  .  \ottnt{b_{{\mathrm{2}}}}   \leadsto  \ottnt{c} $.\\
By IH, $\ottnt{a}$ is either a value or $ \vdash  \ottnt{a}  \leadsto  \ottnt{a'} $.\\
If $\ottnt{a}$ is a value, then $\ottnt{a} =  \mathbf{inj}_1 \:  \ottnt{a_{{\mathrm{1}}}} $ or $\ottnt{a} =  \mathbf{inj}_2 \:  \ottnt{a_{{\mathrm{2}}}} $. \\
Then, $ \vdash   \mathbf{case}_{ \ottnt{q_{{\mathrm{0}}}} } \:  \ottnt{a}  \: \mathbf{of} \:  \ottmv{x_{{\mathrm{1}}}}  .  \ottnt{b_{{\mathrm{1}}}}  \: ; \:  \ottmv{x_{{\mathrm{2}}}}  .  \ottnt{b_{{\mathrm{2}}}}   \leadsto   \ottnt{b_{{\mathrm{1}}}}  \{  \ottnt{a_{{\mathrm{1}}}}  /  \ottmv{x_{{\mathrm{1}}}}  \}  $ or $ \vdash   \mathbf{case}_{ \ottnt{q_{{\mathrm{0}}}} } \:  \ottnt{a}  \: \mathbf{of} \:  \ottmv{x_{{\mathrm{1}}}}  .  \ottnt{b_{{\mathrm{1}}}}  \: ; \:  \ottmv{x_{{\mathrm{2}}}}  .  \ottnt{b_{{\mathrm{2}}}}   \leadsto   \ottnt{b_{{\mathrm{2}}}}  \{  \ottnt{a_{{\mathrm{2}}}}  /  \ottmv{x_{{\mathrm{2}}}}  \}  $.\\
Otherwise, $ \vdash   \mathbf{case}_{ \ottnt{q_{{\mathrm{0}}}} } \:  \ottnt{a}  \: \mathbf{of} \:  \ottmv{x_{{\mathrm{1}}}}  .  \ottnt{b_{{\mathrm{1}}}}  \: ; \:  \ottmv{x_{{\mathrm{2}}}}  .  \ottnt{b_{{\mathrm{2}}}}   \leadsto   \mathbf{case}_{ \ottnt{q_{{\mathrm{0}}}} } \:  \ottnt{a'}  \: \mathbf{of} \:  \ottmv{x_{{\mathrm{1}}}}  .  \ottnt{b_{{\mathrm{1}}}}  \: ; \:  \ottmv{x_{{\mathrm{2}}}}  .  \ottnt{b_{{\mathrm{2}}}}  $.

\item \Rref{ST-SubL, ST-SubR}. Follows by IH.

\item \Rref{ST-Lam,ST-Pair,ST-Unit}. The terms typed by these rules are values.

\end{itemize}
\end{proof}

%------------------------------------------------------------------------------------------------

%---------------------------------------------------------------------------------
\section{Dependency Analysis in Simply-Typed LDC}
%---------------------------------------------------------------------------------

\begin{lemma}[Multiplication (Lemma \ref{DSMult})] \label{DSMultP}
If $ \Gamma  \vdash  \ottnt{a}  :^{ \ell }  \ottnt{A} $, then $  \ottnt{m_{{\mathrm{0}}}}  \sqcup  \Gamma   \vdash  \ottnt{a}  :^{  \ottnt{m_{{\mathrm{0}}}}  \: \sqcup \:  \ell  }  \ottnt{A} $.
\end{lemma}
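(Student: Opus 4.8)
The plan is to prove the Multiplication lemma for the dependency-analysis version of LDC (Lemma \ref{DSMult}) by straightforward induction on the derivation of $ \Gamma  \vdash  \ottnt{a}  :^{ \ell }  \ottnt{A} $, following closely the structure of the proof of Lemma \ref{BLSMult} (the linearity counterpart, established in detail in Lemma \ref{BLSMultP}). The crucial observation is that under the lattice interpretation, multiplication $\cdot$ is read as join $\sqcup$, addition $+$ as meet $\sqcap$, and $ <: $ as $ \sqsubseteq $, so the claim ``we can simultaneously upgrade the context and the observation level by $\ottnt{m_{{\mathrm{0}}}}$'' is exactly the image, under this dictionary, of the claim ``we can premultiply the context and the derived quantity by $\ottnt{r_{{\mathrm{0}}}}$.'' Since the typing rules of LDC($ \mathcal{L} $) are (modulo notation) the same rules as those of LDC($ \mathcal{Q}_{\mathbb{N} } $), the same induction goes through, provided the algebraic facts invoked at each step are valid in an arbitrary lattice.

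First I would set up the induction and handle \rref{ST-VarD}: here $\Gamma$ has the form $      \top  \sqcup  \Gamma_{{\mathrm{1}}}    ,   \ottmv{x}  :^{ \ell }  \ottnt{A}     ,    \top  \sqcup  \Gamma_{{\mathrm{2}}}   $, and joining the whole context with $\ottnt{m_{{\mathrm{0}}}}$ and raising the level to $ \ottnt{m_{{\mathrm{0}}}}  \: \sqcup \:  \ell $ is again an instance of \rref{ST-VarD}, using that $\ottnt{m_{{\mathrm{0}}}} \sqcup (\top \sqcup \Gamma_i) = \top \sqcup (\ottnt{m_{{\mathrm{0}}}} \sqcup \Gamma_i)$. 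The structural cases \rref{ST-LamD}, \rref{ST-AppD}, \rref{ST-PairD}, \rref{ST-LetPairD}, the injection and case rules, and the unit rules each follow by applying the induction hypothesis to the premises and then reapplying the same rule; the only algebraic facts needed are associativity of $\sqcup$ (to match $\ottnt{m_{{\mathrm{0}}}} \sqcup (\ell \sqcup \ottnt{m}) = (\ottnt{m_{{\mathrm{0}}}} \sqcup \ell) \sqcup \ottnt{m}$ for the annotation on $\Pi$/function types), and distributivity of $\sqcup$ over $\sqcap$ (to push $\ottnt{m_{{\mathrm{0}}}} \sqcup (-)$ through a context meet $\Gamma_{{\mathrm{1}}} \sqcap \Gamma_{{\mathrm{2}}}$ appearing in \rref{ST-AppD}, \rref{ST-PairD}, etc.). For \rref{ST-SubLD} and \rref{ST-SubRD}, I would use that $\sqcup$ is monotone in each argument, so $ \ottnt{m_{{\mathrm{0}}}}  \sqcup  \Gamma   \sqsubseteq   \ottnt{m_{{\mathrm{0}}}}  \sqcup  \Gamma'  $ whenever $ \Gamma  \sqsubseteq  \Gamma' $, and similarly on the level label; the lemma then follows from the induction hypothesis by reapplying the subsumption rule. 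For the side condition $ \ell_{{\mathrm{0}}}  \sqsubseteq   \bot  $ in \rref{ST-LetPairD} (and its siblings), note it simply forces $\ell_{{\mathrm{0}}} = \bot$, and $\ottnt{m_{{\mathrm{0}}}} \sqcup \bot = \bot$ is false in general --- so here I would be careful: the premise of \rref{ST-LetPairD} uses $ \ell  \: \sqcap \:  \ell_{{\mathrm{0}}} $ with $\ell_{{\mathrm{0}}} = \bot$, and upgrading $\ell$ to $\ottnt{m_{{\mathrm{0}}}} \sqcup \ell$ keeps the meet with $\bot$ equal to $\bot$, so the side condition is preserved and the induction hypothesis applies to the body.

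The main obstacle I anticipate is the distributivity step. Unlike in the linearity case where distributivity of $\cdot$ over $+$ is a semiring axiom, here I am working with an \emph{arbitrary} lattice, which need not be distributive (indeed the paper stresses $M_3$ and $N_5$ as non-distributive examples in Section \ref{limitdep}). So any case where the rule combines premises via context meet $\sqcap$ and the conclusion must absorb a $\ottnt{m_{{\mathrm{0}}}} \sqcup (-)$ requires scrutiny: I would check whether the proof actually needs $\ottnt{m_{{\mathrm{0}}}} \sqcup (\Gamma_{{\mathrm{1}}} \sqcap \Gamma_{{\mathrm{2}}}) = (\ottnt{m_{{\mathrm{0}}}} \sqcup \Gamma_{{\mathrm{1}}}) \sqcap (\ottnt{m_{{\mathrm{0}}}} \sqcup \Gamma_{{\mathrm{2}}})$, or whether it can be rearranged to avoid it. In fact it can: by \rref{ST-SubRD} one can first weaken both premises' contexts down to the meet and only then apply the induction hypothesis with $\sqcup$-monotonicity, since the meet $\Gamma_{{\mathrm{1}}} \sqcap \Gamma_{{\mathrm{2}}}$ is below each $\Gamma_i$ and $\sqcup$ is monotone; alternatively one observes that in \rref{ST-AppD} the ``multiply conclusion by $q$'' pattern does not literally appear, so no genuine distributivity is invoked. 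Thus I expect the lemma to hold for every lattice, with the proof being essentially a notational transport of the $ \mathbb{N}_{=} / \mathbb{N}_{\geq} $ argument; I would just make sure to flag explicitly wherever the original linearity proof used a semiring-specific identity and to supply a lattice-valid substitute (monotonicity of $\sqcup$, associativity/commutativity/idempotence of $\sqcup$ and $\sqcap$, and absorption) in its place.
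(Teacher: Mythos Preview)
Your overall plan---induction on the typing derivation, mirroring Lemma~\ref{BLSMultP}---is exactly what the paper does, and you correctly locate the only non-trivial point: pushing $m_0 \sqcup (-)$ through the context meet $\Gamma_1 \sqcap \Gamma_2$ in the binary rules without assuming a distributive lattice.

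The paper's resolution is more direct than your proposed workarounds. In \emph{any} lattice one has the inequality $m_0 \sqcup (\Gamma_1 \sqcap \Gamma_2) \sqsubseteq (m_0 \sqcup \Gamma_1) \sqcap (m_0 \sqcup \Gamma_2)$ (an immediate consequence of the monotonicity of $\sqcup$ that you yourself invoke), and this inequality---not the equality you worry about---is all that is needed. In the \rref{ST-AppD} case and its siblings, the paper applies the IH to each premise, reapplies the rule to obtain the conclusion with context $(m_0 \sqcup \Gamma_1) \sqcap (m_0 \sqcup \Gamma_2)$, and then uses \rref{ST-SubLD} with this inequality to reach the required context $m_0 \sqcup (\Gamma_1 \sqcap \Gamma_2)$. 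Your monotonicity-based alternative is the same argument rearranged, but watch two slips: you write \rref{ST-SubRD} where you need \rref{ST-SubLD} (the former adjusts the level, the latter the context), and the phrasing ``first weaken both premises' contexts \ldots\ and only then apply the induction hypothesis'' has the order reversed---the IH applies to the given subderivations, and subsumption comes afterward. Your treatment of the $\ell_0 \sqsubseteq \bot$ side condition is also tangled: the condition pins $\ell_0 = \bot$ independently of $\ell$, so it is preserved trivially when $\ell$ is raised to $m_0 \sqcup \ell$; there is no ``meet with $\bot$'' to reason about.
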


\begin{proof}
By induction on $ \Gamma  \vdash  \ottnt{a}  :^{ \ell }  \ottnt{A} $.

\begin{itemize}
\item \Rref{ST-VarD}. Have: $      \top   \sqcup  \Gamma_{{\mathrm{1}}}   ,   \ottmv{x}  :^{ \ell }  \ottnt{A}     ,    \top   \sqcup  \Gamma_{{\mathrm{2}}}    \vdash   \ottmv{x}   :^{ \ell }  \ottnt{A} $.\\
Need to show: $      \top   \sqcup  \Gamma_{{\mathrm{1}}}   ,   \ottmv{x}  :^{  \ottnt{m_{{\mathrm{0}}}}  \: \sqcup \:  \ell  }  \ottnt{A}     ,    \top   \sqcup  \Gamma_{{\mathrm{2}}}    \vdash   \ottmv{x}   :^{  \ottnt{m_{{\mathrm{0}}}}  \: \sqcup \:  \ell  }  \ottnt{A} $.\\
This case follows by \rref{ST-VarD}.

\item \Rref{ST-LamD}. Have: $ \Gamma  \vdash   \lambda^{ \ottnt{m} }  \ottmv{x}  :  \ottnt{A}  .  \ottnt{b}   :^{ \ell }   {}^{ \ottnt{m} }\!  \ottnt{A}  \to  \ottnt{B}  $ where $  \Gamma  ,   \ottmv{x}  :^{  \ell  \: \sqcup \:  \ottnt{m}  }  \ottnt{A}    \vdash  \ottnt{b}  :^{ \ell }  \ottnt{B} $.\\
Need to show: $  \ottnt{m_{{\mathrm{0}}}}  \sqcup  \Gamma   \vdash   \lambda^{ \ottnt{m} }  \ottmv{x}  :  \ottnt{A}  .  \ottnt{b}   :^{  \ottnt{m_{{\mathrm{0}}}}  \: \sqcup \:  \ell  }   {}^{ \ottnt{m} }\!  \ottnt{A}  \to  \ottnt{B}  $.\\
By IH, $   \ottnt{m_{{\mathrm{0}}}}  \sqcup  \Gamma   ,   \ottmv{x}  :^{  \ottnt{m_{{\mathrm{0}}}}  \: \sqcup \:   (   \ell  \: \sqcup \:  \ottnt{m}   )   }  \ottnt{A}    \vdash  \ottnt{b}  :^{  \ottnt{m_{{\mathrm{0}}}}  \: \sqcup \:  \ell  }  \ottnt{B} $.\\
This case, then, follows by \rref{ST-LamD} using associativity of $\sqcup$.

\item \Rref{ST-AppD}. Have: $  \Gamma_{{\mathrm{1}}}  \sqcap  \Gamma_{{\mathrm{2}}}   \vdash   \ottnt{b}  \:  \ottnt{a} ^{ \ottnt{m} }   :^{ \ell }  \ottnt{B} $ where $ \Gamma_{{\mathrm{1}}}  \vdash  \ottnt{b}  :^{ \ell }   {}^{ \ottnt{m} }\!  \ottnt{A}  \to  \ottnt{B}  $ and $ \Gamma_{{\mathrm{2}}}  \vdash  \ottnt{a}  :^{  \ell  \: \sqcup \:  \ottnt{m}  }  \ottnt{A} $.\\
Need to show: $  \ottnt{m_{{\mathrm{0}}}}  \sqcup   (   \Gamma_{{\mathrm{1}}}  \sqcap  \Gamma_{{\mathrm{2}}}   )    \vdash   \ottnt{b}  \:  \ottnt{a} ^{ \ottnt{m} }   :^{  \ottnt{m_{{\mathrm{0}}}}  \: \sqcup \:  \ell  }  \ottnt{B} $.\\
By IH, $  \ottnt{m_{{\mathrm{0}}}}  \sqcup  \Gamma_{{\mathrm{1}}}   \vdash  \ottnt{b}  :^{  \ottnt{m_{{\mathrm{0}}}}  \: \sqcup \:  \ell  }   {}^{ \ottnt{m} }\!  \ottnt{A}  \to  \ottnt{B}  $ and $  \ottnt{m_{{\mathrm{0}}}}  \sqcup  \Gamma_{{\mathrm{2}}}   \vdash  \ottnt{a}  :^{  \ottnt{m_{{\mathrm{0}}}}  \: \sqcup \:   (   \ell  \: \sqcup \:  \ottnt{m}   )   }  \ottnt{A} $.\\
By \rref{ST-AppD}, using associativity of $\sqcup$, $   (   \ottnt{m_{{\mathrm{0}}}}  \sqcup  \Gamma_{{\mathrm{1}}}   )   \sqcap   (   \ottnt{m_{{\mathrm{0}}}}  \sqcup  \Gamma_{{\mathrm{2}}}   )    \vdash   \ottnt{b}  \:  \ottnt{a} ^{ \ottnt{m} }   :^{  \ottnt{m_{{\mathrm{0}}}}  \: \sqcup \:  \ell  }  \ottnt{B} $.\\
Now, for elements $\ell_{{\mathrm{1}}}, \ell_{{\mathrm{2}}}$ and $\ell_{{\mathrm{3}}}$ of any lattice, $  \ell_{{\mathrm{1}}}  \: \sqcup \:   (   \ell_{{\mathrm{2}}}  \: \sqcap \:  \ell_{{\mathrm{3}}}   )    \sqsubseteq    (   \ell_{{\mathrm{1}}}  \: \sqcup \:  \ell_{{\mathrm{2}}}   )   \: \sqcap \:   (   \ell_{{\mathrm{1}}}  \: \sqcup \:  \ell_{{\mathrm{3}}}   )   $.\\
This case, then, follows by \rref{ST-SubLD}, using the above relation.

\item \Rref{ST-PairD}. Have: $  \Gamma_{{\mathrm{1}}}  \sqcap  \Gamma_{{\mathrm{2}}}   \vdash   (  \ottnt{a_{{\mathrm{1}}}} ^{ \ottnt{m} } ,  \ottnt{a_{{\mathrm{2}}}}  )   :^{ \ell }   {}^{ \ottnt{m} }\!  \ottnt{A_{{\mathrm{1}}}}  \: \times \:  \ottnt{A_{{\mathrm{2}}}}  $ where $ \Gamma_{{\mathrm{1}}}  \vdash  \ottnt{a_{{\mathrm{1}}}}  :^{  \ell  \: \sqcup \:  \ottnt{m}  }  \ottnt{A_{{\mathrm{1}}}} $ and $ \Gamma_{{\mathrm{2}}}  \vdash  \ottnt{a_{{\mathrm{2}}}}  :^{ \ell }  \ottnt{A_{{\mathrm{2}}}} $.\\
Need to show: $  \ottnt{m_{{\mathrm{0}}}}  \sqcup   (   \Gamma_{{\mathrm{1}}}  \sqcap  \Gamma_{{\mathrm{2}}}   )    \vdash   (  \ottnt{a_{{\mathrm{1}}}} ^{ \ottnt{m} } ,  \ottnt{a_{{\mathrm{2}}}}  )   :^{  \ottnt{m_{{\mathrm{0}}}}  \: \sqcup \:  \ell  }   {}^{ \ottnt{m} }\!  \ottnt{A_{{\mathrm{1}}}}  \: \times \:  \ottnt{A_{{\mathrm{2}}}}  $.\\
By IH, $  \ottnt{m_{{\mathrm{0}}}}  \sqcup  \Gamma_{{\mathrm{1}}}   \vdash  \ottnt{a_{{\mathrm{1}}}}  :^{  \ottnt{m_{{\mathrm{0}}}}  \: \sqcup \:   (   \ell  \: \sqcup \:  \ottnt{m}   )   }  \ottnt{A_{{\mathrm{1}}}} $ and $  \ottnt{m_{{\mathrm{0}}}}  \sqcup  \Gamma_{{\mathrm{2}}}   \vdash  \ottnt{a_{{\mathrm{2}}}}  :^{  \ottnt{m_{{\mathrm{0}}}}  \: \sqcup \:  \ell  }  \ottnt{A_{{\mathrm{2}}}} $.\\
By \rref{ST-PairD}, using associativity of $\sqcup$, $   (   \ottnt{m_{{\mathrm{0}}}}  \sqcup  \Gamma_{{\mathrm{1}}}   )   \sqcap   (   \ottnt{m_{{\mathrm{0}}}}  \sqcup  \Gamma_{{\mathrm{2}}}   )    \vdash   (  \ottnt{a_{{\mathrm{1}}}} ^{ \ottnt{m} } ,  \ottnt{a_{{\mathrm{2}}}}  )   :^{  \ottnt{m_{{\mathrm{0}}}}  \: \sqcup \:  \ell  }   {}^{ \ottnt{m} }\!  \ottnt{A_{{\mathrm{1}}}}  \: \times \:  \ottnt{A_{{\mathrm{2}}}}  $.\\
For $\ell_{{\mathrm{1}}}, \ell_{{\mathrm{2}}}, \ell_{{\mathrm{3}}}$, we have, $  \ell_{{\mathrm{1}}}  \: \sqcup \:   (   \ell_{{\mathrm{2}}}  \: \sqcap \:  \ell_{{\mathrm{3}}}   )    \sqsubseteq    (   \ell_{{\mathrm{1}}}  \: \sqcup \:  \ell_{{\mathrm{2}}}   )   \: \sqcap \:   (   \ell_{{\mathrm{1}}}  \: \sqcup \:  \ell_{{\mathrm{3}}}   )   $.\\
This case follows by \rref{ST-SubLD}, using the above relation.

\item \Rref{ST-LetPairD}. Have: $  \Gamma_{{\mathrm{1}}}  \sqcap  \Gamma_{{\mathrm{2}}}   \vdash   \mathbf{let} \: (  \ottmv{x} ^{ \ottnt{m} } ,  \ottmv{y}  ) \: \mathbf{be} \:  \ottnt{a}  \: \mathbf{in} \:  \ottnt{b}   :^{ \ell }  \ottnt{B} $ where $ \Gamma_{{\mathrm{1}}}  \vdash  \ottnt{a}  :^{ \ell }   {}^{ \ottnt{m} }\!  \ottnt{A_{{\mathrm{1}}}}  \: \times \:  \ottnt{A_{{\mathrm{2}}}}  $ and $    \Gamma_{{\mathrm{2}}}  ,   \ottmv{x}  :^{  \ell  \: \sqcup \:  \ottnt{m}  }  \ottnt{A_{{\mathrm{1}}}}     ,   \ottmv{y}  :^{ \ell }  \ottnt{A_{{\mathrm{2}}}}    \vdash  \ottnt{b}  :^{ \ell }  \ottnt{B} $.\\
Need to show: $  \ottnt{m_{{\mathrm{0}}}}  \sqcup   (   \Gamma_{{\mathrm{1}}}  \sqcap  \Gamma_{{\mathrm{2}}}   )    \vdash   \mathbf{let} \: (  \ottmv{x} ^{ \ottnt{m} } ,  \ottmv{y}  ) \: \mathbf{be} \:  \ottnt{a}  \: \mathbf{in} \:  \ottnt{b}   :^{  \ottnt{m_{{\mathrm{0}}}}  \: \sqcup \:  \ell  }  \ottnt{B} $.\\
By IH, $  \ottnt{m_{{\mathrm{0}}}}  \sqcup  \Gamma_{{\mathrm{1}}}   \vdash  \ottnt{a}  :^{  \ottnt{m_{{\mathrm{0}}}}  \: \sqcup \:  \ell  }   {}^{ \ottnt{m} }\!  \ottnt{A_{{\mathrm{1}}}}  \: \times \:  \ottnt{A_{{\mathrm{2}}}}  $ and $     \ottnt{m_{{\mathrm{0}}}}  \sqcup  \Gamma_{{\mathrm{2}}}   ,   \ottmv{x}  :^{  \ottnt{m_{{\mathrm{0}}}}  \: \sqcup \:   (   \ell  \: \sqcup \:  \ottnt{m}   )   }  \ottnt{A_{{\mathrm{1}}}}     ,   \ottmv{y}  :^{  \ottnt{m_{{\mathrm{0}}}}  \: \sqcup \:  \ell  }  \ottnt{A_{{\mathrm{2}}}}    \vdash  \ottnt{b}  :^{  \ottnt{m_{{\mathrm{0}}}}  \: \sqcup \:  \ell  }  \ottnt{B} $. \\
This case follows by \rref{ST-LetPairD,ST-SubLD}, using associativity of $\sqcup$ and the distributive inequality.

\item \Rref{ST-UnitD}. Have: $   \top   \sqcup  \Gamma   \vdash   \mathbf{unit}   :^{ \ell }   \mathbf{Unit}  $.\\
Need to show: $   \top   \sqcup  \Gamma   \vdash   \mathbf{unit}   :^{  \ottnt{m_{{\mathrm{0}}}}  \: \sqcup \:  \ell  }   \mathbf{Unit}  $.\\
Follows by \rref{ST-UnitD}.

\item \Rref{ST-LetUnitD}. Have: $  \Gamma_{{\mathrm{1}}}  \sqcap  \Gamma_{{\mathrm{2}}}   \vdash   \mathbf{let} \: \mathbf{unit} \: \mathbf{be} \:  \ottnt{a}  \: \mathbf{in} \:  \ottnt{b}   :^{ \ell }  \ottnt{B} $ where $ \Gamma_{{\mathrm{1}}}  \vdash  \ottnt{a}  :^{ \ell }   \mathbf{Unit}  $ and $ \Gamma_{{\mathrm{2}}}  \vdash  \ottnt{b}  :^{ \ell }  \ottnt{B} $.\\
Need to show: $  \ottnt{m_{{\mathrm{0}}}}  \sqcup   (   \Gamma_{{\mathrm{1}}}  \sqcap  \Gamma_{{\mathrm{2}}}   )    \vdash   \mathbf{let} \: \mathbf{unit} \: \mathbf{be} \:  \ottnt{a}  \: \mathbf{in} \:  \ottnt{b}   :^{  \ottnt{m_{{\mathrm{0}}}}  \: \sqcup \:  \ell  }  \ottnt{B} $.\\
By IH, $  \ottnt{m_{{\mathrm{0}}}}  \sqcup  \Gamma_{{\mathrm{1}}}   \vdash  \ottnt{a}  :^{  \ottnt{m_{{\mathrm{0}}}}  \: \sqcup \:  \ell  }   \mathbf{Unit}  $ and $  \ottnt{m_{{\mathrm{0}}}}  \sqcup  \Gamma_{{\mathrm{2}}}   \vdash  \ottnt{b}  :^{  \ottnt{m_{{\mathrm{0}}}}  \: \sqcup \:  \ell  }  \ottnt{B} $.\\
This case follows by \rref{ST-LetUnitD,ST-SubLD}, using the distributive inequality.

\item \Rref{ST-Inj1D}. Have: $ \Gamma  \vdash   \mathbf{inj}_1 \:  \ottnt{a_{{\mathrm{1}}}}   :^{ \ell }   \ottnt{A_{{\mathrm{1}}}}  +  \ottnt{A_{{\mathrm{2}}}}  $ where $ \Gamma  \vdash  \ottnt{a_{{\mathrm{1}}}}  :^{ \ell }  \ottnt{A_{{\mathrm{1}}}} $.\\
Need to show: $  \ottnt{m_{{\mathrm{0}}}}  \sqcup  \Gamma   \vdash   \mathbf{inj}_1 \:  \ottnt{a_{{\mathrm{1}}}}   :^{  \ottnt{m_{{\mathrm{0}}}}  \: \sqcup \:  \ell  }   \ottnt{A_{{\mathrm{1}}}}  +  \ottnt{A_{{\mathrm{2}}}}  $.\\
By IH, $  \ottnt{m_{{\mathrm{0}}}}  \sqcup  \Gamma   \vdash  \ottnt{a_{{\mathrm{1}}}}  :^{  \ottnt{m_{{\mathrm{0}}}}  \: \sqcup \:  \ell  }  \ottnt{A_{{\mathrm{1}}}} $.\\
This case, then, follows by \rref{ST-Inj1D}.

\item \Rref{ST-Inj2D}. Similar to \rref{ST-Inj1D}.

\item \Rref{ST-CaseD}. Have: $  \Gamma_{{\mathrm{1}}}  \sqcap  \Gamma_{{\mathrm{2}}}   \vdash   \mathbf{case} \:  \ottnt{a}  \: \mathbf{of} \:  \ottmv{x_{{\mathrm{1}}}}  .  \ottnt{b_{{\mathrm{1}}}}  \: ; \:  \ottmv{x_{{\mathrm{2}}}}  .  \ottnt{b_{{\mathrm{2}}}}   :^{ \ell }  \ottnt{B} $ where $ \Gamma_{{\mathrm{1}}}  \vdash  \ottnt{a}  :^{ \ell }   \ottnt{A_{{\mathrm{1}}}}  +  \ottnt{A_{{\mathrm{2}}}}  $ and $  \Gamma_{{\mathrm{2}}}  ,   \ottmv{x_{{\mathrm{1}}}}  :^{ \ell }  \ottnt{A_{{\mathrm{1}}}}    \vdash  \ottnt{b_{{\mathrm{1}}}}  :^{ \ell }  \ottnt{B} $ and $  \Gamma_{{\mathrm{2}}}  ,   \ottmv{x_{{\mathrm{2}}}}  :^{ \ell }  \ottnt{A_{{\mathrm{2}}}}    \vdash  \ottnt{b_{{\mathrm{2}}}}  :^{ \ell }  \ottnt{B} $.\\
Need to show: $  \ottnt{m_{{\mathrm{0}}}}  \sqcup   (   \Gamma_{{\mathrm{1}}}  \sqcap  \Gamma_{{\mathrm{2}}}   )    \vdash   \mathbf{case} \:  \ottnt{a}  \: \mathbf{of} \:  \ottmv{x_{{\mathrm{1}}}}  .  \ottnt{b_{{\mathrm{1}}}}  \: ; \:  \ottmv{x_{{\mathrm{2}}}}  .  \ottnt{b_{{\mathrm{2}}}}   :^{  \ottnt{m_{{\mathrm{0}}}}  \: \sqcup \:  \ell  }  \ottnt{B} $.\\
By IH, $  \ottnt{m_{{\mathrm{0}}}}  \sqcup  \Gamma_{{\mathrm{1}}}   \vdash  \ottnt{a_{{\mathrm{1}}}}  :^{  \ottnt{m_{{\mathrm{0}}}}  \: \sqcup \:  \ell  }   \ottnt{A_{{\mathrm{1}}}}  +  \ottnt{A_{{\mathrm{2}}}}  $ and $   \ottnt{m_{{\mathrm{0}}}}  \sqcup  \Gamma_{{\mathrm{2}}}   ,   \ottmv{x_{{\mathrm{1}}}}  :^{  \ottnt{m_{{\mathrm{0}}}}  \: \sqcup \:  \ell  }  \ottnt{A_{{\mathrm{1}}}}    \vdash  \ottnt{b_{{\mathrm{1}}}}  :^{  \ottnt{m_{{\mathrm{0}}}}  \: \sqcup \:  \ell  }  \ottnt{B} $ and $   \ottnt{m_{{\mathrm{0}}}}  \sqcup  \Gamma_{{\mathrm{2}}}   ,   \ottmv{x_{{\mathrm{2}}}}  :^{  \ottnt{m_{{\mathrm{0}}}}  \: \sqcup \:  \ell  }  \ottnt{A_{{\mathrm{2}}}}    \vdash  \ottnt{b_{{\mathrm{2}}}}  :^{  \ottnt{m_{{\mathrm{0}}}}  \: \sqcup \:  \ell  }  \ottnt{B} $.\\
This case follows by \rref{ST-CaseD,ST-SubLD}, using the distributive inequality.

\item \Rref{ST-SubLD}. Have: $ \Gamma  \vdash  \ottnt{a}  :^{ \ell }  \ottnt{A} $ where $ \Gamma'  \vdash  \ottnt{a}  :^{ \ell }  \ottnt{A} $ and $ \Gamma   \sqsubseteq   \Gamma' $.\\
Need to show: $  \ottnt{m_{{\mathrm{0}}}}  \sqcup  \Gamma   \vdash  \ottnt{a}  :^{  \ottnt{m_{{\mathrm{0}}}}  \: \sqcup \:  \ell  }  \ottnt{A} $.\\
By IH, $  \ottnt{m_{{\mathrm{0}}}}  \sqcup  \Gamma'   \vdash  \ottnt{a}  :^{  \ottnt{m_{{\mathrm{0}}}}  \: \sqcup \:  \ell  }  \ottnt{A} $.\\
Since $ \Gamma   \sqsubseteq   \Gamma' $, so $  \ottnt{m_{{\mathrm{0}}}}  \sqcup  \Gamma    \sqsubseteq    \ottnt{m_{{\mathrm{0}}}}  \sqcup  \Gamma'  $.\\
This case follows by \rref{ST-SubLD}.

\item \Rref{ST-SubRD}. Have: $ \Gamma  \vdash  \ottnt{a}  :^{ \ell' }  \ottnt{A} $ where $ \Gamma  \vdash  \ottnt{a}  :^{ \ell }  \ottnt{A} $ and $ \ell  \sqsubseteq  \ell' $.\\
Need to show: $  \ottnt{m_{{\mathrm{0}}}}  \sqcup  \Gamma   \vdash  \ottnt{a}  :^{  \ottnt{m_{{\mathrm{0}}}}  \: \sqcup \:  \ell'  }  \ottnt{A} $.\\
By IH, $  \ottnt{m_{{\mathrm{0}}}}  \sqcup  \Gamma   \vdash  \ottnt{a}  :^{  \ottnt{m_{{\mathrm{0}}}}  \: \sqcup \:  \ell  }  \ottnt{A} $.\\\
Since $ \ell  \sqsubseteq  \ell' $, so $  \ottnt{m_{{\mathrm{0}}}}  \: \sqcup \:  \ell   \sqsubseteq   \ottnt{m_{{\mathrm{0}}}}  \: \sqcup \:  \ell'  $.\\
This case follows by \rref{ST-SubRD}.

\end{itemize}
\end{proof}

%-----------------------------------------------------------------------------------------------

\begin{lemma}[Splitting (Lemma \ref{DSSplit})] \label{DSSplitP}
If $ \Gamma  \vdash  \ottnt{a}  :^{  \ell_{{\mathrm{1}}}  \: \sqcap \:  \ell_{{\mathrm{2}}}  }  \ottnt{A} $, then there exists $\Gamma_{{\mathrm{1}}}$ and $\Gamma_{{\mathrm{2}}}$ such that $ \Gamma_{{\mathrm{1}}}  \vdash  \ottnt{a}  :^{ \ell_{{\mathrm{1}}} }  \ottnt{A} $ and $ \Gamma_{{\mathrm{2}}}  \vdash  \ottnt{a}  :^{ \ell_{{\mathrm{2}}} }  \ottnt{A} $ and $ \Gamma  =   \Gamma_{{\mathrm{1}}}  \sqcap  \Gamma_{{\mathrm{2}}}  $.
\end{lemma}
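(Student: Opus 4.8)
The Splitting lemma for the dependency calculus is, as the paper already hints, far simpler than its $\mathcal{Q}_{\mathbb{N}}$ counterpart, because meet is idempotent: $\ell = \ell \sqcap \ell$ and, more to the point, $\Gamma = \Gamma \sqcap \Gamma$. So the plan is to take $\Gamma_{{\mathrm{1}}} = \Gamma_{{\mathrm{2}}} = \Gamma$ and exhibit derivations of $\ottnt{a}$ at levels $\ell_{{\mathrm{1}}}$ and $\ell_{{\mathrm{2}}}$ from the given derivation at level $\ell_{{\mathrm{1}}} \sqcap \ell_{{\mathrm{2}}}$. Since $\ell_{{\mathrm{1}}} \sqcap \ell_{{\mathrm{2}}} \sqsubseteq \ell_{{\mathrm{1}}}$ and $\ell_{{\mathrm{1}}} \sqcap \ell_{{\mathrm{2}}} \sqsubseteq \ell_{{\mathrm{2}}}$, one application of \rref{ST-SubRD} to $ \Gamma  \vdash  \ottnt{a}  :^{  \ell_{{\mathrm{1}}}  \: \sqcap \:  \ell_{{\mathrm{2}}}  }  \ottnt{A} $ gives $ \Gamma  \vdash  \ottnt{a}  :^{ \ell_{{\mathrm{1}}} }  \ottnt{A} $, and a second gives $ \Gamma  \vdash  \ottnt{a}  :^{ \ell_{{\mathrm{2}}} }  \ottnt{A} $. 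Finally $\Gamma = \Gamma \sqcap \Gamma$ holds because, writing $\Delta = \lfloor \Gamma \rfloor$, we have $\overline{\Gamma \sqcap \Gamma} = \overline{\Gamma} \sqcap \overline{\Gamma} = \overline{\Gamma}$ by pointwise idempotence of $\sqcap$ in the lattice $ \mathcal{L} $.

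First I would recall that $ \sqsubseteq $ is the order of $ \mathcal{L} $ and that $\ell_{{\mathrm{1}}}\sqcap\ell_{{\mathrm{2}}}$ is by definition a lower bound of both $\ell_{{\mathrm{1}}}$ and $\ell_{{\mathrm{2}}}$, which is exactly the side condition $\ell \sqsubseteq \ell'$ required by \rref{ST-SubRD}. Then I would apply \rref{ST-SubRD} twice to the hypothesis, once with $\ell' = \ell_{{\mathrm{1}}}$ and once with $\ell' = \ell_{{\mathrm{2}}}$, obtaining the two desired judgments with the \emph{same} context $\Gamma$. Last I would verify the context equation $\Gamma = \Gamma \sqcap \Gamma$ by unfolding the definition of $\sqcap$ on contexts (given just before Figure~\ref{TypeSystemSimpleD} as the analogue of $+$ on graded contexts): it is defined componentwise via the lattice meet, which is idempotent, so $\Gamma \sqcap \Gamma$ has the same underlying context and the same grade vector as $\Gamma$.

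There is essentially no obstacle here; the only thing to be careful about is the bookkeeping that $\sqcap$ on contexts really is idempotent, i.e. that the definition of the context operation $ \Gamma_{{\mathrm{1}}}  \sqcap  \Gamma_{{\mathrm{2}}} $ reduces to pointwise meet on grades (with $\lfloor \Gamma_{{\mathrm{1}}} \rfloor = \lfloor \Gamma_{{\mathrm{2}}} \rfloor$), which is immediate from its stated definition. By contrast with LDC($ \mathcal{Q}_{\mathbb{N} } $), where Splitting needed both the Multiplication and the Factorization lemmas, here no auxiliary lemma is needed at all — in particular, as the surrounding text explains, Factorization deliberately fails in LDC($ \mathcal{L} $) and is not invoked. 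So the proof is a two-line argument: apply \rref{ST-SubRD} twice and note idempotence of meet.
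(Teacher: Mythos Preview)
Your proposal is correct and matches the paper's own proof essentially line for line: the paper also sets $\Gamma_{{\mathrm{1}}} := \Gamma$ and $\Gamma_{{\mathrm{2}}} := \Gamma$, applies \rref{ST-SubRD} twice using $\ell_{{\mathrm{1}}} \sqcap \ell_{{\mathrm{2}}} \sqsubseteq \ell_i$, and relies on idempotence of $\sqcap$ for the context equation. Your write-up is in fact more detailed than the paper's two-line proof, but the argument is identical.
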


\begin{proof}
Have: $ \Gamma  \vdash  \ottnt{a}  :^{  \ell_{{\mathrm{1}}}  \: \sqcap \:  \ell_{{\mathrm{2}}}  }  \ottnt{A} $. By \rref{ST-SubRD}, $ \Gamma  \vdash  \ottnt{a}  :^{ \ell_{{\mathrm{1}}} }  \ottnt{A} $ and $ \Gamma  \vdash  \ottnt{a}  :^{ \ell_{{\mathrm{2}}} }  \ottnt{A} $. The lemma follows by setting $\Gamma_{{\mathrm{1}}} := \Gamma$ and $\Gamma_{{\mathrm{2}}} := \Gamma$.
\end{proof}

%-----------------------------------------------------------------------------------------------

\begin{lemma}[Weakening (Lemma \ref{DSWeak})] \label{DSWeakP}
If $  \Gamma_{{\mathrm{1}}}  ,  \Gamma_{{\mathrm{2}}}   \vdash  \ottnt{a}  :^{ \ell }  \ottnt{A} $, then $    \Gamma_{{\mathrm{1}}}  ,   \ottmv{z}  :^{  \top  }  \ottnt{C}     ,  \Gamma_{{\mathrm{2}}}   \vdash  \ottnt{a}  :^{ \ell }  \ottnt{A} $.
\end{lemma}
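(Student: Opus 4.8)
The proof of the Weakening lemma for $\text{LDC}(\mathcal{L})$ follows the same pattern as its counterpart for $\text{LDC}(\mathcal{Q}_{\mathbb{N}})$ (Lemma~\ref{BLSWeak}), namely a straightforward induction on the derivation of $ \Gamma_{{\mathrm{1}}}  ,  \Gamma_{{\mathrm{2}}}  \vdash  \ottnt{a}  :^{ \ell }  \ottnt{A} $. The plan is to insert the fresh assumption $ \ottmv{z}  :^{  \top  }  \ottnt{C} $ at the designated position and observe that, since $\top$ is the top element of the lattice, the new grade never obstructs any side condition or any context operation. Concretely, in each rule the context is built from the sub-derivations' contexts by pointwise meet ($\sqcap$), by a join with a scalar ($\ottnt{m}\sqcup\Gamma$), or by the subtyping relation ($\sqsubseteq$); one checks that adding the $\top$-graded slot is compatible with all of these, because $ \ottnt{m}  \: \sqcup \:   \top   =  \top $, $  \top   \: \sqcap \:   \top   =  \top $, and $  \top   \sqsubseteq   \top  $.

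First I would handle the base case, \rref{ST-VarD}: here the context has shape $ \overline{ \top } \sqcup \Gamma'_1 ,  \ottmv{x}  :^{ \ell }  \ottnt{A}  , \overline{ \top } \sqcup \Gamma'_2$ (the non-principal variables are graded at $\top$), and inserting $ \ottmv{z}  :^{  \top  }  \ottnt{C} $ among them simply extends the $\top$-graded part, so the conclusion follows by reapplying \rref{ST-VarD}. Then I would proceed through the structural rules \rref{ST-LamD,ST-AppD,ST-PairD,ST-LetPairD,ST-UnitD,ST-LetUnitD,ST-Inj1D,ST-Inj2D,ST-CaseD}: in each case I split the ambient context according to how the rule partitions it (the inserted $ \ottmv{z}  :^{  \top  }  \ottnt{C} $ lands in one of the pieces, or gets duplicated into each conjunct with grade $\top$, which is fine since $ \top  =  \top  \sqcap  \top $), invoke the induction hypothesis on the premises, and reassemble with the same rule. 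For the two subsumption rules \rref{ST-SubLD,ST-SubRD} the induction hypothesis applies directly and the order relation on the extended contexts still holds because $ \top  \sqsubseteq  \top $. A minor bookkeeping point is the implicit freshness convention: $\ottmv{z}$ must not already occur in $\Gamma_1,\Gamma_2$ or in $\ottnt{a}$, which we assume throughout; for binder-introducing rules we may need to $\alpha$-rename the bound variable away from $\ottmv{z}$, a standard move.

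I do not anticipate a genuine obstacle: the statement is essentially a sanity check that the top element of the lattice behaves as a neutral, invisible assumption, mirroring the role of grade $0$ in the usage setting. The only place that requires a moment's care is \rref{ST-LetPairD} (and the analogous elimination forms), where the context is shared across the scrutinee and the branch and the branch additionally binds pattern variables graded using $\ell$ and $\ell\sqcup\ottnt{m}$; one must be sure the $\top$-graded slot is threaded consistently through both premises, but again $  \top   \: \sqcap \:   \top   =  \top $ makes this routine. Hence the proof is a direct induction with no new ideas beyond those already used for Lemma~\ref{BLSWeak}.
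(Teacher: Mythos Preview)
Your proposal is correct and takes essentially the same approach as the paper: a straightforward induction on the typing derivation. The paper's proof is in fact the single line ``By induction on $\Gamma_{1},\Gamma_{2}\vdash a :^{\ell} A$'', so your write-up simply spells out the routine case analysis that the paper leaves implicit.
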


\begin{proof}
By induction on $  \Gamma_{{\mathrm{1}}}  ,  \Gamma_{{\mathrm{2}}}   \vdash  \ottnt{a}  :^{ \ell }  \ottnt{A} $.
\end{proof}

%-------------------------------------------------------------------------------------------------

\begin{lemma}[Substitution (Lemma \ref{DSSubst})] \label{DSSubstP}
If $    \Gamma_{{\mathrm{1}}}  ,   \ottmv{z}  :^{ \ottnt{m_{{\mathrm{0}}}} }  \ottnt{C}     ,  \Gamma_{{\mathrm{2}}}   \vdash  \ottnt{a}  :^{ \ell }  \ottnt{A} $ and $ \Gamma  \vdash  \ottnt{c}  :^{ \ottnt{m_{{\mathrm{0}}}} }  \ottnt{C} $ and $  \lfloor  \Gamma_{{\mathrm{1}}}  \rfloor   =   \lfloor  \Gamma  \rfloor  $, then $    \Gamma_{{\mathrm{1}}}  \sqcap  \Gamma    ,  \Gamma_{{\mathrm{2}}}   \vdash   \ottnt{a}  \{  \ottnt{c}  /  \ottmv{z}  \}   :^{ \ell }  \ottnt{A} $.
\end{lemma}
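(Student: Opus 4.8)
The plan is to prove the Substitution Lemma for LDC($\mathcal{L}$) by induction on the derivation of $ \Gamma_{{\mathrm{1}}} ,  \ottmv{z} :^{ \ottnt{m_{{\mathrm{0}}}} } \ottnt{C}  , \Gamma_{{\mathrm{2}}} \vdash \ottnt{a} :^{ \ell } \ottnt{A}$, following the same structure as the proof of Lemma~\ref{BLSSubst} for LDC($\mathcal{Q}_{\mathbb{N}}$) but with the semiring operations $+, \cdot, 0, 1, <:$ reinterpreted as $\sqcap, \sqcup, \top, \bot, \sqsubseteq$. The key enabling fact is that the Splitting Lemma (Lemma~\ref{DSSplit}) holds trivially in LDC($\mathcal{L}$) because $\sqcap$ is idempotent: whenever a premise needs the context $\Gamma$ split into $\Gamma_{{\mathrm{1}}} \sqcap \Gamma_{{\mathrm{2}}}$ with the two halves typing $\ottnt{c}$ at the levels demanded by the two subderivations, I can simply take both halves to be $\Gamma$ itself (after applying \rref{ST-SubRD} to move $\ottnt{c}$ to any level above $\ottnt{m_{{\mathrm{0}}}}$, exactly as needed). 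So the proof avoids the Factorization Lemma entirely, which is the point flagged in Section~\ref{MetaSimpleDependency}.

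First I would handle \rref{ST-VarD}, which splits into the usual three subcases depending on whether the occurrence being typed is $\ottmv{z}$ itself or some other variable lying before or after $\ottmv{z}$ in the context. When the typed variable is $\ottmv{z}$, the substituted term is $\ottnt{c}$ and we need $ \Gamma_{{\mathrm{1}}} \sqcap \Gamma , \Gamma_{{\mathrm{2}}} \vdash \ottnt{c} :^{ \ell } \ottnt{C}$; since \rref{ST-VarD} forces the surrounding grades to be $\top$ and also forces $ \ell = \ottnt{m_{{\mathrm{0}}}}$ (up to the subsumption already absorbed into the rule shape), this follows from the hypothesis $ \Gamma \vdash \ottnt{c} :^{ \ottnt{m_{{\mathrm{0}}}} } \ottnt{C}$ together with \rref{ST-WeakD} (Lemma~\ref{DSWeak}) to pad in the zeroed-out $\Gamma_{{\mathrm{1}}}, \Gamma_{{\mathrm{2}}}$ positions, and \rref{ST-SubLD} to absorb any meet. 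When the typed variable is not $\ottmv{z}$, the substitution does not touch it and the case follows from Lemma~\ref{DSWeak} (to insert the now-irrelevant $\Gamma$) plus \rref{ST-VarD} and \rref{ST-SubLD}. For the introduction and elimination rules — \rref{ST-LamD}, \rref{ST-AppD}, \rref{ST-PairD}, \rref{ST-LetPairD}, \rref{ST-UnitD}, \rref{ST-LetUnitD}, \rref{ST-Inj1D}, \rref{ST-Inj2D}, \rref{ST-CaseD} — each case proceeds by decomposing the incoming context and the grade $\ottnt{m_{{\mathrm{0}}}}$ along the shape of the rule, invoking Lemma~\ref{DSSplit} to split $\Gamma$ to match each premise (which is free, by idempotence of $\sqcap$), applying the induction hypothesis to each subderivation, and reassembling with the same rule. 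The substitution propagates into subterms in the evident way (noting that in the simply-typed version substitution does not affect types, so no substitution into $\ottnt{A}$, $\ottnt{B}$ is needed). The \rref{ST-SubLD} and \rref{ST-SubRD} cases are routine: push the induction hypothesis through, and re-apply the subsumption rule, using that $\sqcap$ and $\sqsubseteq$ are monotone and that $ \ottnt{m_{{\mathrm{0}}}} \sqsubseteq \ottnt{m'_{{\mathrm{0}}}}$ lets us lift $\ottnt{c}$ via \rref{ST-SubRD}.

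The one point that requires genuine care — though it is still routine rather than deep — is the bookkeeping in the binding rules (\rref{ST-LamD}, \rref{ST-LetPairD}, \rref{ST-CaseD}), where the premise context carries extra assumptions $\ottmv{x} :^{ \ell \sqcup \ottnt{m} } \ottnt{A_{{\mathrm{1}}}}$ etc.\ that must be threaded past the freshly substituted $\Gamma$ without disturbing the bound-variable grades; here one uses that the bound variable $\ottmv{x}$ is distinct from $\ottmv{z}$ and from the variables of $\Gamma$, so the context manipulation $\Gamma_{{\mathrm{1}}} \sqcap \Gamma$ commutes with appending the bound assumption, and the induction hypothesis applies to the premise with $\ottmv{z}$ still in scope. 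I do not expect any obstruction analogous to Atkey's failure of substitution over arbitrary semirings, precisely because $\mathcal{L}$ is a lattice: the problematic step in the general semiring case is exactly the splitting step, and idempotence of meet makes that step unconditional here. So the main ``obstacle'' is simply the length of the case analysis, not any conceptual difficulty; the proof mirrors the appendix proof of Lemma~\ref{BLSSubstP} with every occurrence of Lemma~\ref{BLSSplitP} replaced by the trivial Lemma~\ref{DSSplit} and every use of Lemma~\ref{BLSMultP} replaced by Lemma~\ref{DSMult}.
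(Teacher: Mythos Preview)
Your proposal is correct and follows essentially the same approach as the paper: induction on the typing derivation, with the Splitting Lemma (trivial by idempotence of $\sqcap$) used to distribute $\Gamma$ across multi-premise rules, the three-way split in the variable case, and routine reassembly under each rule. The only minor inaccuracy is the closing remark about replacing uses of Lemma~\ref{BLSMultP} with Lemma~\ref{DSMult}: neither the paper's proof of Lemma~\ref{BLSSubstP} nor the lattice version invokes multiplication directly (it is only used internally to establish splitting in the semiring case), so this replacement simply does not arise.
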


\begin{proof}
By induction on $    \Gamma_{{\mathrm{1}}}  ,   \ottmv{z}  :^{ \ottnt{m_{{\mathrm{0}}}} }  \ottnt{C}     ,  \Gamma_{{\mathrm{2}}}   \vdash  \ottnt{a}  :^{ \ell }  \ottnt{A} $.

\begin{itemize}

\item \Rref{ST-VarD}. There are three cases to consider.

\begin{itemize}
\item $          \top   \sqcup  \Gamma_{{\mathrm{11}}}   ,   \ottmv{z}  :^{  \top  }  \ottnt{C}     ,    \top   \sqcup  \Gamma_{{\mathrm{12}}}     ,   \ottmv{x}  :^{ \ell }  \ottnt{A}     ,    \top   \sqcup  \Gamma_{{\mathrm{2}}}    \vdash   \ottmv{x}   :^{ \ell }  \ottnt{A} $. Also, $ \Gamma  \vdash  \ottnt{c}  :^{  \top  }  \ottnt{C} $ where $  \lfloor  \Gamma_{{\mathrm{11}}}  \rfloor   =   \lfloor  \Gamma  \rfloor  $.\\
Need to show: $      \Gamma  ,    \top   \sqcup  \Gamma_{{\mathrm{12}}}     ,   \ottmv{x}  :^{ \ell }  \ottnt{A}     ,    \top   \sqcup  \Gamma_{{\mathrm{2}}}    \vdash   \ottmv{x}   :^{ \ell }  \ottnt{A} $.\\
Follows by \rref{ST-VarD} and \rref{ST-SubLD}.

\item $      \top   \sqcup  \Gamma_{{\mathrm{1}}}   ,   \ottmv{x}  :^{ \ell }  \ottnt{A}     ,    \top   \sqcup  \Gamma_{{\mathrm{2}}}    \vdash   \ottmv{x}   :^{ \ell }  \ottnt{A} $. Also, $ \Gamma  \vdash  \ottnt{a}  :^{ \ell }  \ottnt{A} $ where $  \lfloor  \Gamma_{{\mathrm{1}}}  \rfloor   =   \lfloor  \Gamma  \rfloor  $.\\
Need to show: $  \Gamma  ,    \top   \sqcup  \Gamma_{{\mathrm{2}}}    \vdash  \ottnt{a}  :^{ \ell }  \ottnt{A} $.\\
Follows by lemma \ref{DSWeakP}.

\item $      \top   \sqcup  \Gamma_{{\mathrm{1}}}   ,   \ottmv{x}  :^{ \ell }  \ottnt{A}     ,        \top   \sqcup  \Gamma_{{\mathrm{21}}}   ,   \ottmv{z}  :^{  \top  }  \ottnt{C}     ,    \top   \sqcup  \Gamma_{{\mathrm{22}}}      \vdash   \ottmv{x}   :^{ \ell }  \ottnt{A} $. Also, $    \Gamma_{{\mathrm{31}}}  ,   \ottmv{x}  :^{ \ottnt{m} }  \ottnt{A}     ,  \Gamma_{{\mathrm{32}}}   \vdash  \ottnt{c}  :^{  \top  }  \ottnt{C} $ where $  \lfloor  \Gamma_{{\mathrm{31}}}  \rfloor   =   \lfloor  \Gamma_{{\mathrm{1}}}  \rfloor  $ and $  \lfloor  \Gamma_{{\mathrm{32}}}  \rfloor   =   \lfloor  \Gamma_{{\mathrm{21}}}  \rfloor  $.\\
Need to show: $    \Gamma_{{\mathrm{31}}}  ,   \ottmv{x}  :^{  (   \ell  \: \sqcap \:  \ottnt{m}   )  }  \ottnt{A}     ,    \Gamma_{{\mathrm{32}}}  ,    \top   \sqcup  \Gamma_{{\mathrm{22}}}      \vdash   \ottmv{x}   :^{ \ell }  \ottnt{A} $.\\
Follows by \rref{ST-VarD} and \rref{ST-SubLD}.
\end{itemize}

\item \Rref{ST-LamD}. Have: $    \Gamma_{{\mathrm{1}}}  ,   \ottmv{z}  :^{ \ottnt{m_{{\mathrm{0}}}} }  \ottnt{C}     ,  \Gamma_{{\mathrm{2}}}   \vdash   \lambda^{ \ottnt{m} }  \ottmv{x}  :  \ottnt{A}  .  \ottnt{b}   :^{ \ell }   {}^{ \ottnt{m} }\!  \ottnt{A}  \to  \ottnt{B}  $ where $    \Gamma_{{\mathrm{1}}}  ,   \ottmv{z}  :^{ \ottnt{m_{{\mathrm{0}}}} }  \ottnt{C}     ,    \Gamma_{{\mathrm{2}}}  ,   \ottmv{x}  :^{  \ell  \: \sqcup \:  \ottnt{m}  }  \ottnt{A}      \vdash  \ottnt{b}  :^{ \ell }  \ottnt{B} $. Also, $ \Gamma  \vdash  \ottnt{c}  :^{ \ottnt{m_{{\mathrm{0}}}} }  \ottnt{C} $ where $  \lfloor  \Gamma  \rfloor   =   \lfloor  \Gamma_{{\mathrm{1}}}  \rfloor  $.\\
Need to show: $    \Gamma_{{\mathrm{1}}}  \sqcap  \Gamma    ,  \Gamma_{{\mathrm{2}}}   \vdash    \lambda^{ \ottnt{m} }  \ottmv{x}  :  \ottnt{A}  .  \ottnt{b}   \{  \ottnt{c}  /  \ottmv{z}  \}   :^{ \ell }   {}^{ \ottnt{m} }\!  \ottnt{A}  \to  \ottnt{B}  $.\\
Follows by IH and \rref{ST-LamD}. 

\item \Rref{ST-AppD}. Have: $      \Gamma_{{\mathrm{11}}}  \sqcap  \Gamma_{{\mathrm{12}}}    ,   \ottmv{z}  :^{  \ottnt{m_{{\mathrm{01}}}}  \: \sqcap \:  \ottnt{m_{{\mathrm{02}}}}  }  \ottnt{C}     ,    \Gamma_{{\mathrm{21}}}  \sqcap  \Gamma_{{\mathrm{22}}}     \vdash   \ottnt{b}  \:  \ottnt{a} ^{ \ottnt{m} }   :^{ \ell }  \ottnt{B} $ where $    \Gamma_{{\mathrm{11}}}  ,   \ottmv{z}  :^{ \ottnt{m_{{\mathrm{01}}}} }  \ottnt{C}     ,  \Gamma_{{\mathrm{21}}}   \vdash  \ottnt{b}  :^{ \ell }   {}^{ \ottnt{m} }\!  \ottnt{A}  \to  \ottnt{B}  $ and $    \Gamma_{{\mathrm{12}}}  ,   \ottmv{z}  :^{ \ottnt{m_{{\mathrm{02}}}} }  \ottnt{C}     ,  \Gamma_{{\mathrm{22}}}   \vdash  \ottnt{a}  :^{  \ell  \: \sqcup \:  \ottnt{m}  }  \ottnt{A} $. Also, $ \Gamma  \vdash  \ottnt{c}  :^{  \ottnt{m_{{\mathrm{01}}}}  \: \sqcap \:  \ottnt{m_{{\mathrm{02}}}}  }  \ottnt{C} $ where $  \lfloor  \Gamma  \rfloor   =   \lfloor  \Gamma_{{\mathrm{11}}}  \rfloor  $.\\
Need to show: $      \Gamma_{{\mathrm{11}}}  \sqcap  \Gamma_{{\mathrm{12}}}    \sqcap  \Gamma    ,    \Gamma_{{\mathrm{21}}}  \sqcap  \Gamma_{{\mathrm{22}}}     \vdash    \ottnt{b}  \{  \ottnt{c}  /  \ottmv{z}  \}   \:   \ottnt{a}  \{  \ottnt{c}  /  \ottmv{z}  \}  ^{ \ottnt{m} }   :^{ \ell }  \ottnt{B} $.\\
By lemma \ref{DSSplitP}, $\exists \Gamma_{{\mathrm{31}}}, \Gamma_{{\mathrm{32}}}$ such that $ \Gamma_{{\mathrm{31}}}  \vdash  \ottnt{c}  :^{ \ottnt{m_{{\mathrm{01}}}} }  \ottnt{C} $ and $ \Gamma_{{\mathrm{32}}}  \vdash  \ottnt{c}  :^{ \ottnt{m_{{\mathrm{02}}}} }  \ottnt{C} $ and $\Gamma =  \Gamma_{{\mathrm{31}}}  \sqcap  \Gamma_{{\mathrm{32}}} $.\\
By IH, $    \Gamma_{{\mathrm{11}}}  \sqcap  \Gamma_{{\mathrm{31}}}    ,  \Gamma_{{\mathrm{21}}}   \vdash   \ottnt{b}  \{  \ottnt{c}  /  \ottmv{z}  \}   :^{ \ell }   {}^{ \ottnt{m} }\!  \ottnt{A}  \to  \ottnt{B}  $ and $    \Gamma_{{\mathrm{12}}}  \sqcap  \Gamma_{{\mathrm{32}}}    ,  \Gamma_{{\mathrm{22}}}   \vdash   \ottnt{a}  \{  \ottnt{c}  /  \ottmv{z}  \}   :^{  \ell  \: \sqcup \:  \ottnt{m}  }  \ottnt{A} $.\\
This case, then, follows by \rref{ST-AppD}.

\item \Rref{ST-PairD}. Have: $      \Gamma_{{\mathrm{11}}}  \sqcap  \Gamma_{{\mathrm{12}}}    ,   \ottmv{z}  :^{  \ottnt{m_{{\mathrm{01}}}}  \: \sqcap \:  \ottnt{m_{{\mathrm{02}}}}  }  \ottnt{C}     ,    \Gamma_{{\mathrm{21}}}  \sqcap  \Gamma_{{\mathrm{22}}}     \vdash   (  \ottnt{a_{{\mathrm{1}}}} ^{ \ottnt{m} } ,  \ottnt{a_{{\mathrm{2}}}}  )   :^{ \ell }   {}^{ \ottnt{m} }\!  \ottnt{A_{{\mathrm{1}}}}  \: \times \:  \ottnt{A_{{\mathrm{2}}}}  $ where $    \Gamma_{{\mathrm{11}}}  ,   \ottmv{z}  :^{ \ottnt{m_{{\mathrm{01}}}} }  \ottnt{C}     ,  \Gamma_{{\mathrm{21}}}   \vdash  \ottnt{a_{{\mathrm{1}}}}  :^{  \ell  \: \sqcup \:  \ottnt{m}  }  \ottnt{A_{{\mathrm{1}}}} $ and $    \Gamma_{{\mathrm{12}}}  ,   \ottmv{z}  :^{ \ottnt{m_{{\mathrm{02}}}} }  \ottnt{C}     ,  \Gamma_{{\mathrm{22}}}   \vdash  \ottnt{a_{{\mathrm{2}}}}  :^{ \ell }  \ottnt{A_{{\mathrm{2}}}} $. Also, $ \Gamma  \vdash  \ottnt{c}  :^{  \ottnt{m_{{\mathrm{01}}}}  \: \sqcap \:  \ottnt{m_{{\mathrm{02}}}}  }  \ottnt{C} $ where $  \lfloor  \Gamma  \rfloor   =   \lfloor  \Gamma_{{\mathrm{11}}}  \rfloor  $.\\
Need to show: $      \Gamma_{{\mathrm{11}}}  \sqcap  \Gamma_{{\mathrm{12}}}    \sqcap  \Gamma    ,    \Gamma_{{\mathrm{21}}}  \sqcap  \Gamma_{{\mathrm{22}}}     \vdash   (   \ottnt{a_{{\mathrm{1}}}}  \{  \ottnt{c}  /  \ottmv{z}  \}  ^{ \ottnt{r} } ,   \ottnt{a_{{\mathrm{2}}}}  \{  \ottnt{c}  /  \ottmv{z}  \}   )   :^{ \ell }   {}^{ \ottnt{m} }\!  \ottnt{A_{{\mathrm{1}}}}  \: \times \:  \ottnt{A_{{\mathrm{2}}}}  $.\\
By lemma \ref{DSSplitP}, $\exists \Gamma_{{\mathrm{31}}}, \Gamma_{{\mathrm{32}}}$ such that $ \Gamma_{{\mathrm{31}}}  \vdash  \ottnt{c}  :^{ \ottnt{m_{{\mathrm{01}}}} }  \ottnt{C} $ and $ \Gamma_{{\mathrm{32}}}  \vdash  \ottnt{c}  :^{ \ottnt{m_{{\mathrm{02}}}} }  \ottnt{C} $ and $\Gamma =  \Gamma_{{\mathrm{31}}}  \sqcap  \Gamma_{{\mathrm{32}}} $.\\
By IH, $    \Gamma_{{\mathrm{11}}}  \sqcap  \Gamma_{{\mathrm{31}}}    ,  \Gamma_{{\mathrm{21}}}   \vdash   \ottnt{a_{{\mathrm{1}}}}  \{  \ottnt{c}  /  \ottmv{z}  \}   :^{  \ell  \: \sqcup \:  \ottnt{m}  }  \ottnt{A_{{\mathrm{1}}}} $ and $    \Gamma_{{\mathrm{12}}}  \sqcap  \Gamma_{{\mathrm{32}}}    ,  \Gamma_{{\mathrm{22}}}   \vdash   \ottnt{a_{{\mathrm{2}}}}  \{  \ottnt{c}  /  \ottmv{z}  \}   :^{ \ell }  \ottnt{A_{{\mathrm{2}}}} $.\\
This case, then, follows by \rref{ST-PairD}.

\item \Rref{ST-LetPairD}. Have: $      \Gamma_{{\mathrm{11}}}  \sqcap  \Gamma_{{\mathrm{12}}}    ,   \ottmv{z}  :^{  \ottnt{m_{{\mathrm{01}}}}  \: \sqcap \:  \ottnt{m_{{\mathrm{02}}}}  }  \ottnt{C}     ,    \Gamma_{{\mathrm{21}}}  \sqcap  \Gamma_{{\mathrm{22}}}     \vdash   \mathbf{let} \: (  \ottmv{x} ^{ \ottnt{m} } ,  \ottmv{y}  ) \: \mathbf{be} \:  \ottnt{a}  \: \mathbf{in} \:  \ottnt{b}   :^{ \ell }  \ottnt{B} $ where $    \Gamma_{{\mathrm{11}}}  ,   \ottmv{z}  :^{ \ottnt{m_{{\mathrm{01}}}} }  \ottnt{C}     ,  \Gamma_{{\mathrm{21}}}   \vdash  \ottnt{a}  :^{ \ell }   {}^{ \ottnt{m} }\!  \ottnt{A_{{\mathrm{1}}}}  \: \times \:  \ottnt{A_{{\mathrm{2}}}}  $ and $    \Gamma_{{\mathrm{12}}}  ,   \ottmv{z}  :^{ \ottnt{m_{{\mathrm{02}}}} }  \ottnt{C}     ,    \Gamma_{{\mathrm{22}}}  ,     \ottmv{x}  :^{  \ell  \: \sqcup \:  \ottnt{m}  }  \ottnt{A_{{\mathrm{1}}}}   ,   \ottmv{y}  :^{ \ell }  \ottnt{A_{{\mathrm{2}}}}        \vdash  \ottnt{b}  :^{ \ell }  \ottnt{B} $. Also, $ \Gamma  \vdash  \ottnt{c}  :^{  \ottnt{m_{{\mathrm{01}}}}  \: \sqcap \:  \ottnt{m_{{\mathrm{02}}}}  }  \ottnt{C} $ where $  \lfloor  \Gamma  \rfloor   =   \lfloor  \Gamma_{{\mathrm{11}}}  \rfloor  $.\\
Need to show: $      \Gamma_{{\mathrm{11}}}  \sqcap  \Gamma_{{\mathrm{12}}}    \sqcap  \Gamma    ,    \Gamma_{{\mathrm{21}}}  \sqcap  \Gamma_{{\mathrm{22}}}     \vdash    \mathbf{let} \: (  \ottmv{x} ^{ \ottnt{m} } ,  \ottmv{y}  ) \: \mathbf{be} \:   \ottnt{a}  \{  \ottnt{c}  /  \ottmv{z}  \}   \: \mathbf{in} \:  \ottnt{b}   \{  \ottnt{c}  /  \ottmv{z}  \}   :^{ \ell }  \ottnt{B} $.\\
By lemma \ref{DSSplitP}, $\exists \Gamma_{{\mathrm{31}}}, \Gamma_{{\mathrm{32}}}$ such that $ \Gamma_{{\mathrm{31}}}  \vdash  \ottnt{c}  :^{ \ottnt{m_{{\mathrm{01}}}} }  \ottnt{C} $ and $ \Gamma_{{\mathrm{32}}}  \vdash  \ottnt{c}  :^{ \ottnt{m_{{\mathrm{02}}}} }  \ottnt{C} $ and $\Gamma =  \Gamma_{{\mathrm{31}}}  \sqcap  \Gamma_{{\mathrm{32}}} $.\\
By IH, $    \Gamma_{{\mathrm{11}}}  \sqcap  \Gamma_{{\mathrm{31}}}    ,  \Gamma_{{\mathrm{21}}}   \vdash   \ottnt{a}  \{  \ottnt{c}  /  \ottmv{z}  \}   :^{ \ell }   {}^{ \ottnt{m} }\!  \ottnt{A_{{\mathrm{1}}}}  \: \times \:  \ottnt{A_{{\mathrm{2}}}}  $ and $    \Gamma_{{\mathrm{12}}}  \sqcap  \Gamma_{{\mathrm{32}}}    ,    \Gamma_{{\mathrm{22}}}  ,     \ottmv{x}  :^{  \ell  \: \sqcup \:  \ottnt{m}  }  \ottnt{A_{{\mathrm{1}}}}   ,   \ottmv{y}  :^{ \ell }  \ottnt{A_{{\mathrm{2}}}}        \vdash   \ottnt{b}  \{  \ottnt{c}  /  \ottmv{z}  \}   :^{ \ell }  \ottnt{B} $.\\
This case, then, follows by \rref{ST-LetPairD}.

\item \Rref{ST-UnitD}. Have: $      \top   \sqcup  \Gamma_{{\mathrm{1}}}   ,   \ottmv{z}  :^{  \top  }  \ottnt{C}     ,    \top   \sqcup  \Gamma_{{\mathrm{2}}}    \vdash   \mathbf{unit}   :^{ \ell }   \mathbf{Unit}  $. Also $ \Gamma  \vdash  \ottnt{c}  :^{  \top  }  \ottnt{C} $ where $ \lfloor  \Gamma  \rfloor  =  \lfloor  \Gamma_{{\mathrm{1}}}  \rfloor $.\\
Need to show: $  \Gamma  ,    \top   \sqcup  \Gamma_{{\mathrm{2}}}    \vdash   \mathbf{unit}   :^{ \ell }   \mathbf{Unit}  $.\\
Follows by \rref{ST-UnitD} and \rref{ST-SubLD}.

\item \Rref{ST-LetUnitD}. Have: $      \Gamma_{{\mathrm{11}}}  \sqcap  \Gamma_{{\mathrm{12}}}    ,   \ottmv{z}  :^{  \ottnt{m_{{\mathrm{01}}}}  \: \sqcap \:  \ottnt{m_{{\mathrm{02}}}}  }  \ottnt{C}     ,    \Gamma_{{\mathrm{21}}}  \sqcap  \Gamma_{{\mathrm{22}}}     \vdash   \mathbf{let} \: \mathbf{unit} \: \mathbf{be} \:  \ottnt{a}  \: \mathbf{in} \:  \ottnt{b}   :^{ \ell }  \ottnt{B} $ where $    \Gamma_{{\mathrm{11}}}  ,   \ottmv{z}  :^{ \ottnt{m_{{\mathrm{01}}}} }  \ottnt{C}     ,  \Gamma_{{\mathrm{21}}}   \vdash  \ottnt{a}  :^{ \ell }   \mathbf{Unit}  $ and $    \Gamma_{{\mathrm{12}}}  ,   \ottmv{z}  :^{ \ottnt{m_{{\mathrm{02}}}} }  \ottnt{C}     ,  \Gamma_{{\mathrm{22}}}   \vdash  \ottnt{b}  :^{ \ell }  \ottnt{B} $. Also, $ \Gamma  \vdash  \ottnt{c}  :^{  \ottnt{m_{{\mathrm{01}}}}  \: \sqcap \:  \ottnt{m_{{\mathrm{02}}}}  }  \ottnt{C} $ where $  \lfloor  \Gamma  \rfloor   =   \lfloor  \Gamma_{{\mathrm{11}}}  \rfloor  $.\\
Need to show: $      \Gamma_{{\mathrm{11}}}  \sqcap  \Gamma_{{\mathrm{12}}}    \sqcap  \Gamma    ,    \Gamma_{{\mathrm{21}}}  \sqcap  \Gamma_{{\mathrm{22}}}     \vdash    \mathbf{let} \: \mathbf{unit} \: \mathbf{be} \:   \ottnt{a}  \{  \ottnt{c}  /  \ottmv{z}  \}   \: \mathbf{in} \:  \ottnt{b}   \{  \ottnt{c}  /  \ottmv{z}  \}   :^{ \ell }  \ottnt{B} $.\\
By lemma \ref{DSSplitP}, $\exists \Gamma_{{\mathrm{31}}}, \Gamma_{{\mathrm{32}}}$ such that $ \Gamma_{{\mathrm{31}}}  \vdash  \ottnt{c}  :^{ \ottnt{m_{{\mathrm{01}}}} }  \ottnt{C} $ and $ \Gamma_{{\mathrm{32}}}  \vdash  \ottnt{c}  :^{ \ottnt{m_{{\mathrm{02}}}} }  \ottnt{C} $ and $\Gamma =  \Gamma_{{\mathrm{31}}}  \sqcap  \Gamma_{{\mathrm{32}}} $.\\
By IH, $    \Gamma_{{\mathrm{11}}}  \sqcap  \Gamma_{{\mathrm{31}}}    ,  \Gamma_{{\mathrm{21}}}   \vdash   \ottnt{a}  \{  \ottnt{c}  /  \ottmv{z}  \}   :^{ \ell }   \mathbf{Unit}  $ and $    \Gamma_{{\mathrm{12}}}  \sqcap  \Gamma_{{\mathrm{32}}}    ,  \Gamma_{{\mathrm{22}}}   \vdash   \ottnt{b}  \{  \ottnt{c}  /  \ottmv{z}  \}   :^{ \ell }  \ottnt{B} $.\\
This case, then, follows by \rref{ST-LetUnitD}. 

\item \Rref{ST-Inj1D}. Have: $    \Gamma_{{\mathrm{1}}}  ,   \ottmv{z}  :^{ \ottnt{m_{{\mathrm{0}}}} }  \ottnt{C}     ,  \Gamma_{{\mathrm{2}}}   \vdash   \mathbf{inj}_1 \:  \ottnt{a_{{\mathrm{1}}}}   :^{ \ell }   \ottnt{A_{{\mathrm{1}}}}  +  \ottnt{A_{{\mathrm{2}}}}  $ where $    \Gamma_{{\mathrm{1}}}  ,   \ottmv{z}  :^{ \ottnt{m_{{\mathrm{0}}}} }  \ottnt{C}     ,  \Gamma_{{\mathrm{2}}}   \vdash  \ottnt{a_{{\mathrm{1}}}}  :^{ \ell }  \ottnt{A_{{\mathrm{1}}}} $. Also, $ \Gamma  \vdash  \ottnt{c}  :^{ \ottnt{m_{{\mathrm{0}}}} }  \ottnt{C} $ where $  \lfloor  \Gamma  \rfloor   =   \lfloor  \Gamma_{{\mathrm{1}}}  \rfloor  $.\\
Need to show: $    \Gamma_{{\mathrm{1}}}  \sqcap  \Gamma    ,  \Gamma_{{\mathrm{2}}}   \vdash    \mathbf{inj}_1 \:  \ottnt{a_{{\mathrm{1}}}}   \{  \ottnt{c}  /  \ottmv{z}  \}   :^{ \ell }   \ottnt{A_{{\mathrm{1}}}}  +  \ottnt{A_{{\mathrm{2}}}}  $.\\
By IH, $    \Gamma_{{\mathrm{1}}}  \sqcap  \Gamma    ,  \Gamma_{{\mathrm{2}}}   \vdash   \ottnt{a_{{\mathrm{1}}}}  \{  \ottnt{c}  /  \ottmv{z}  \}   :^{ \ell }  \ottnt{A_{{\mathrm{1}}}} $.\\
This case, then, follows by \rref{ST-Inj1D}.

\item \Rref{ST-Inj2D}. Similar to \rref{ST-Inj1D}.

\item \Rref{ST-CaseD}. Have: $      \Gamma_{{\mathrm{11}}}  \sqcap  \Gamma_{{\mathrm{12}}}    ,   \ottmv{z}  :^{  \ottnt{m_{{\mathrm{01}}}}  \: \sqcap \:  \ottnt{m_{{\mathrm{02}}}}  }  \ottnt{C}     ,    \Gamma_{{\mathrm{21}}}  \sqcap  \Gamma_{{\mathrm{22}}}     \vdash   \mathbf{case} \:  \ottnt{a}  \: \mathbf{of} \:  \ottmv{x_{{\mathrm{1}}}}  .  \ottnt{b_{{\mathrm{1}}}}  \: ; \:  \ottmv{x_{{\mathrm{2}}}}  .  \ottnt{b_{{\mathrm{2}}}}   :^{ \ell }  \ottnt{B} $ where $    \Gamma_{{\mathrm{11}}}  ,   \ottmv{z}  :^{ \ottnt{m_{{\mathrm{01}}}} }  \ottnt{C}     ,  \Gamma_{{\mathrm{21}}}   \vdash  \ottnt{a}  :^{ \ell }   \ottnt{A_{{\mathrm{1}}}}  +  \ottnt{A_{{\mathrm{2}}}}  $ and $    \Gamma_{{\mathrm{12}}}  ,   \ottmv{z}  :^{ \ottnt{m_{{\mathrm{02}}}} }  \ottnt{C}     ,    \Gamma_{{\mathrm{22}}}  ,   \ottmv{x_{{\mathrm{1}}}}  :^{ \ell }  \ottnt{A_{{\mathrm{1}}}}      \vdash  \ottnt{b_{{\mathrm{1}}}}  :^{ \ell }  \ottnt{B} $ and $    \Gamma_{{\mathrm{12}}}  ,   \ottmv{z}  :^{ \ottnt{m_{{\mathrm{02}}}} }  \ottnt{C}     ,    \Gamma_{{\mathrm{22}}}  ,   \ottmv{x_{{\mathrm{2}}}}  :^{ \ell }  \ottnt{A_{{\mathrm{2}}}}      \vdash  \ottnt{b_{{\mathrm{2}}}}  :^{ \ell }  \ottnt{B} $. Also, $ \Gamma  \vdash  \ottnt{c}  :^{  \ottnt{m_{{\mathrm{01}}}}  \: \sqcap \:  \ottnt{m_{{\mathrm{02}}}}  }  \ottnt{C} $ where $  \lfloor  \Gamma  \rfloor   =   \lfloor  \Gamma_{{\mathrm{11}}}  \rfloor  $.\\
Need to show: $      \Gamma_{{\mathrm{11}}}  \sqcap  \Gamma_{{\mathrm{12}}}    \sqcap  \Gamma    ,    \Gamma_{{\mathrm{21}}}  \sqcap  \Gamma_{{\mathrm{22}}}     \vdash    \mathbf{case} \:   \ottnt{a}  \{  \ottnt{c}  /  \ottmv{z}  \}   \: \mathbf{of} \:  \ottmv{x_{{\mathrm{1}}}}  .   \ottnt{b_{{\mathrm{1}}}}  \{  \ottnt{c}  /  \ottmv{z}  \}   \: ; \:  \ottmv{x_{{\mathrm{2}}}}  .  \ottnt{b_{{\mathrm{2}}}}   \{  \ottnt{c}  /  \ottmv{z}  \}   :^{ \ell }  \ottnt{B} $.\\
By lemma \ref{DSSplitP}, $\exists \Gamma_{{\mathrm{31}}}, \Gamma_{{\mathrm{32}}}$ such that $ \Gamma_{{\mathrm{31}}}  \vdash  \ottnt{c}  :^{ \ottnt{m_{{\mathrm{01}}}} }  \ottnt{C} $ and $ \Gamma_{{\mathrm{32}}}  \vdash  \ottnt{c}  :^{ \ottnt{m_{{\mathrm{02}}}} }  \ottnt{C} $ and $\Gamma =  \Gamma_{{\mathrm{31}}}  \sqcap  \Gamma_{{\mathrm{32}}} $.\\
By IH, $    \Gamma_{{\mathrm{11}}}  \sqcap  \Gamma_{{\mathrm{31}}}    ,  \Gamma_{{\mathrm{21}}}   \vdash   \ottnt{a}  \{  \ottnt{c}  /  \ottmv{z}  \}   :^{ \ell }   \ottnt{A_{{\mathrm{1}}}}  +  \ottnt{A_{{\mathrm{2}}}}  $ and $    \Gamma_{{\mathrm{12}}}  \sqcap  \Gamma_{{\mathrm{32}}}    ,    \Gamma_{{\mathrm{22}}}  ,   \ottmv{x_{{\mathrm{1}}}}  :^{ \ell }  \ottnt{A_{{\mathrm{1}}}}      \vdash   \ottnt{b_{{\mathrm{1}}}}  \{  \ottnt{c}  /  \ottmv{z}  \}   :^{ \ell }  \ottnt{B} $ and $    \Gamma_{{\mathrm{12}}}  \sqcap  \Gamma_{{\mathrm{32}}}    ,    \Gamma_{{\mathrm{22}}}  ,   \ottmv{x_{{\mathrm{2}}}}  :^{ \ell }  \ottnt{A_{{\mathrm{2}}}}      \vdash   \ottnt{b_{{\mathrm{2}}}}  \{  \ottnt{c}  /  \ottmv{z}  \}   :^{ \ell }  \ottnt{B} $.\\
This case, then, follows by \rref{ST-CaseD}.

\item \Rref{ST-SubLD}. Have: $    \Gamma_{{\mathrm{1}}}  ,   \ottmv{z}  :^{ \ottnt{m_{{\mathrm{0}}}} }  \ottnt{C}     ,  \Gamma_{{\mathrm{2}}}   \vdash  \ottnt{a}  :^{ \ell }  \ottnt{A} $ where $    \Gamma'_{{\mathrm{1}}}  ,   \ottmv{z}  :^{ \ottnt{m'_{{\mathrm{0}}}} }  \ottnt{C}     ,  \Gamma'_{{\mathrm{2}}}   \vdash  \ottnt{a}  :^{ \ell }  \ottnt{A} $ where $ \Gamma_{{\mathrm{1}}}   \sqsubseteq   \Gamma'_{{\mathrm{1}}} $ and $ \ottnt{m_{{\mathrm{0}}}}  \sqsubseteq  \ottnt{m'_{{\mathrm{0}}}} $ and $ \Gamma_{{\mathrm{2}}}   \sqsubseteq   \Gamma'_{{\mathrm{2}}} $. Also, $ \Gamma  \vdash  \ottnt{c}  :^{ \ottnt{m_{{\mathrm{0}}}} }  \ottnt{C} $ where $  \lfloor  \Gamma  \rfloor   =   \lfloor  \Gamma_{{\mathrm{1}}}  \rfloor  $. \\
Need to show: $    \Gamma_{{\mathrm{1}}}  \sqcap  \Gamma    ,  \Gamma_{{\mathrm{2}}}   \vdash   \ottnt{a}  \{  \ottnt{c}  /  \ottmv{z}  \}   :^{ \ell }  \ottnt{A} $.\\
Since $ \ottnt{m_{{\mathrm{0}}}}  \sqsubseteq  \ottnt{m'_{{\mathrm{0}}}} $, by \rref{ST-SubRD}, $ \Gamma  \vdash  \ottnt{c}  :^{ \ottnt{m'_{{\mathrm{0}}}} }  \ottnt{C} $.\\
By IH, $    \Gamma'_{{\mathrm{1}}}  \sqcap  \Gamma    ,  \Gamma'_{{\mathrm{2}}}   \vdash   \ottnt{a}  \{  \ottnt{c}  /  \ottmv{z}  \}   :^{ \ell }  \ottnt{A} $.\\
This case, then, follows by \rref{ST-SubLD}. 

\item \Rref{ST-SubRD}. Have: $    \Gamma_{{\mathrm{1}}}  ,   \ottmv{z}  :^{ \ottnt{m_{{\mathrm{0}}}} }  \ottnt{C}     ,  \Gamma_{{\mathrm{2}}}   \vdash  \ottnt{a}  :^{ \ell' }  \ottnt{A} $ where $    \Gamma_{{\mathrm{1}}}  ,   \ottmv{z}  :^{ \ottnt{m_{{\mathrm{0}}}} }  \ottnt{C}     ,  \Gamma_{{\mathrm{2}}}   \vdash  \ottnt{a}  :^{ \ell }  \ottnt{A} $ and $ \ell  \sqsubseteq  \ell' $. Also, $ \Gamma  \vdash  \ottnt{c}  :^{ \ottnt{m_{{\mathrm{0}}}} }  \ottnt{C} $ where $  \lfloor  \Gamma  \rfloor   =   \lfloor  \Gamma_{{\mathrm{1}}}  \rfloor  $.\\
Need to show: $    \Gamma_{{\mathrm{1}}}  \sqcap  \Gamma    ,  \Gamma_{{\mathrm{2}}}   \vdash   \ottnt{a}  \{  \ottnt{c}  /  \ottmv{z}  \}   :^{ \ell' }  \ottnt{A} $.\\
By IH, $    \Gamma_{{\mathrm{1}}}  \sqcap  \Gamma    ,  \Gamma_{{\mathrm{2}}}   \vdash   \ottnt{a}  \{  \ottnt{c}  /  \ottmv{z}  \}   :^{ \ell }  \ottnt{A} $.\\
This case, then, follows by \rref{ST-SubRD}.

\end{itemize}

\end{proof}

%--------------------------------------------------------------------------------------------------

\begin{lemma}[Restricted Upgrading] \label{ResUpP}
If $    \Gamma_{{\mathrm{1}}}  ,   \ottmv{x}  :^{ \ottnt{m} }  \ottnt{A}     ,  \Gamma_{{\mathrm{2}}}   \vdash  \ottnt{b}  :^{ \ell }  \ottnt{B} $ and $ \ell_{{\mathrm{0}}}  \sqsubseteq  \ell $, then $    \Gamma_{{\mathrm{1}}}  ,   \ottmv{x}  :^{  \ell_{{\mathrm{0}}}  \: \sqcup \:  \ottnt{m}  }  \ottnt{A}     ,  \Gamma_{{\mathrm{2}}}   \vdash  \ottnt{b}  :^{ \ell }  \ottnt{B} $. 
\end{lemma}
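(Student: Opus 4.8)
The plan is to avoid inducting on the typing derivation and instead obtain the statement from the Multiplication lemma for LDC($\mathcal{L}$) (Lemma~\ref{DSMult}) together with \rref{ST-SubLD}. The key observation is that joining $\ell_0$ into the grade of the single assumption $x$ is exactly the operation the Multiplication lemma performs, except that it joins $\ell_0$ into \emph{every} grade of the context and simultaneously raises the observer level to $\ell_0 \sqcup \ell$. The hypothesis $\ell_0 \sqsubseteq \ell$ makes the observer level collapse back to $\ell$, and the ``over-upgraded'' grades of the other assumptions can be pushed back down by context subsumption.

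Concretely I would argue as follows. First, apply Lemma~\ref{DSMult} to $\Gamma_1, x :^{m} A, \Gamma_2 \vdash b :^{\ell} B$ with $\ell_0$ as the multiplier, yielding $\ell_0 \sqcup (\Gamma_1, x :^{m} A, \Gamma_2) \vdash b :^{\ell_0 \sqcup \ell} B$. Unfolding the definition of $\ell_0 \sqcup \Gamma$, this context is $(\ell_0 \sqcup \Gamma_1), x :^{\ell_0 \sqcup m} A, (\ell_0 \sqcup \Gamma_2)$; and since $\ell_0 \sqsubseteq \ell$ we have $\ell_0 \sqcup \ell = \ell$, so $(\ell_0 \sqcup \Gamma_1), x :^{\ell_0 \sqcup m} A, (\ell_0 \sqcup \Gamma_2) \vdash b :^{\ell} B$. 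Second, note that $\Gamma_1, x :^{\ell_0 \sqcup m} A, \Gamma_2 \sqsubseteq (\ell_0 \sqcup \Gamma_1), x :^{\ell_0 \sqcup m} A, (\ell_0 \sqcup \Gamma_2)$ holds pointwise: every grade $q$ occurring in $\Gamma_1$ or $\Gamma_2$ satisfies $q \sqsubseteq \ell_0 \sqcup q$, and the grade assigned to $x$ is literally the same on both sides. Hence \rref{ST-SubLD} applies and gives $\Gamma_1, x :^{\ell_0 \sqcup m} A, \Gamma_2 \vdash b :^{\ell} B$, which is the goal.

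The step I expect to be the crux is the decision not to induct. A direct induction runs into two obstacles. In the rules that split the context with $\sqcap$ (\rref{ST-AppD,ST-PairD,ST-LetPairD,ST-LetUnitD,ST-CaseD}) one would need $\ell_0 \sqcup (m_1 \sqcap m_2) = (\ell_0 \sqcup m_1) \sqcap (\ell_0 \sqcup m_2)$, which fails for non-distributive $\mathcal{L}$ (only $\sqsubseteq$ holds); and in \rref{ST-SubRD} the premise is typed at an observer level $\ell_1 \sqsubseteq \ell$ for which the side condition $\ell_0 \sqsubseteq \ell_1$ need not hold, so the induction hypothesis is not available. Routing everything through Lemma~\ref{DSMult} sidesteps both points, and the only thing requiring care is the \emph{direction} of the subsumption in the second step, namely that the target context is below (not above) the context produced by Multiplication; this is immediate from $q \sqsubseteq \ell_0 \sqcup q$. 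Finally, the hypothesis $\ell_0 \sqsubseteq \ell$ is used essentially to turn $\ell_0 \sqcup \ell$ back into $\ell$, and it is genuinely needed, since \rref{ST-SubRD} can only raise the observer level, never lower it.
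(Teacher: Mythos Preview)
Your proposal is correct and follows essentially the same approach as the paper: apply the Multiplication lemma (Lemma~\ref{DSMult}) with $\ell_0$, use $\ell_0 \sqsubseteq \ell$ to collapse the observer level back to $\ell$, and then apply \rref{ST-SubLD} with the pointwise inequality $\Gamma_i \sqsubseteq \ell_0 \sqcup \Gamma_i$. Your additional discussion of why a direct induction would be awkward is accurate but not needed for the proof itself.
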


\begin{proof}
By Lemma \ref{DSMultP}, $     \ell_{{\mathrm{0}}}  \sqcup  \Gamma_{{\mathrm{1}}}   ,   \ottmv{x}  :^{  \ell_{{\mathrm{0}}}  \: \sqcup \:  \ottnt{m}  }  \ottnt{A}     ,   \ell_{{\mathrm{0}}}  \sqcup  \Gamma_{{\mathrm{2}}}    \vdash  \ottnt{b}  :^{  \ell_{{\mathrm{0}}}  \: \sqcup \:  \ell  }  \ottnt{B} $. \\
Since $ \ell_{{\mathrm{0}}}  \sqsubseteq  \ell $, $     \ell_{{\mathrm{0}}}  \sqcup  \Gamma_{{\mathrm{1}}}   ,   \ottmv{x}  :^{  \ell_{{\mathrm{0}}}  \: \sqcup \:  \ottnt{m}  }  \ottnt{A}     ,   \ell_{{\mathrm{0}}}  \sqcup  \Gamma_{{\mathrm{2}}}    \vdash  \ottnt{b}  :^{ \ell }  \ottnt{B} $. \\
By \rref{ST-SubLD}, $    \Gamma_{{\mathrm{1}}}  ,   \ottmv{x}  :^{  \ell_{{\mathrm{0}}}  \: \sqcup \:  \ottnt{m}  }  \ottnt{A}     ,  \Gamma_{{\mathrm{2}}}   \vdash  \ottnt{b}  :^{ \ell }  \ottnt{B} $ because $ \Gamma_{{\mathrm{1}}}   \sqsubseteq    \ell_{{\mathrm{0}}}  \sqcup  \Gamma_{{\mathrm{1}}}  $ and $ \Gamma_{{\mathrm{2}}}   \sqsubseteq    \ell_{{\mathrm{0}}}  \sqcup  \Gamma_{{\mathrm{2}}}  $.
\end{proof}

%---------------------------------------------------------------------------------------------------

\begin{theorem}[Preservation (Theorem \ref{DSPreserve})]
If $ \Gamma  \vdash  \ottnt{a}  :^{ \ell }  \ottnt{A} $ and $ \vdash  \ottnt{a}  \leadsto  \ottnt{a'} $, then $ \Gamma  \vdash  \ottnt{a'}  :^{ \ell }  \ottnt{A} $.
\end{theorem}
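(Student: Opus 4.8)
The plan is to prove preservation for LDC($\mathcal{L}$) by a structural induction on the derivation $\Gamma \vdash \ottnt{a} :^{\ell} \ottnt{A}$, with an inner case analysis (inversion) on the step $\vdash \ottnt{a} \leadsto \ottnt{a'}$. This is the dependency-analysis mirror of the already-proved Theorem~\ref{SimplePreserveP}, so the skeleton of the argument is fixed: the only rules whose conclusion terms can themselves take a step are \rref{ST-AppD}, \rref{ST-LetPairD}, \rref{ST-LetUnitD}, and \rref{ST-CaseD} (plus the subsumption rules \rref{ST-SubLD}, \rref{ST-SubRD}, which follow immediately from the IH). The congruence sub-cases (where the step happens in a subterm) are routine: apply the IH to the relevant premise and re-apply the same typing rule. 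So the real content is the $\beta$-step sub-cases, and there I would split each exactly as in the $\mathbb{N}$-version, but using the multiplicative/additive reinterpretation $(+,\cdot,0,1,{<:}) \mapsto (\sqcap,\sqcup,\top,\bot,\sqsubseteq)$ and the dependency substitution lemma (Lemma~\ref{DSSubstP}) in place of Lemma~\ref{BLSSubstP}.

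Concretely, for \rref{ST-AppD}: from $\Gamma_{{\mathrm{1}}} \sqcap \Gamma_{{\mathrm{2}}} \vdash \ottnt{b} \: \ottnt{a}^{\ottnt{m}} :^{\ell} \ottnt{B}$ with $\Gamma_{{\mathrm{1}}} \vdash \ottnt{b} :^{\ell} {}^{\ottnt{m}}\!\ottnt{A} \to \ottnt{B}$ and $\Gamma_{{\mathrm{2}}} \vdash \ottnt{a} :^{\ell \sqcup \ottnt{m}} \ottnt{A}$, and the $\beta$-step where $\ottnt{b} = \lambda^{\ottnt{m}}\ottmv{x} : \ottnt{A'}.\ottnt{b'}$, invert \rref{ST-LamD} to get $\ottnt{A'} = \ottnt{A}$ and $\Gamma_{{\mathrm{1}}}, \ottmv{x} :^{\ell \sqcup \ottnt{m}} \ottnt{A} \vdash \ottnt{b'} :^{\ell} \ottnt{B}$; then Lemma~\ref{DSSubstP} (with the hypothesis variable at level $\ell \sqcup \ottnt{m}$, substituting $\ottnt{a}$ typed at $\ell \sqcup \ottnt{m}$) yields $(\Gamma_{{\mathrm{1}}} \sqcap \Gamma_{{\mathrm{2}}}), \ldots \vdash \ottnt{b'}\{\ottnt{a}/\ottmv{x}\} :^{\ell} \ottnt{B}$, which is what we need. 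For \rref{ST-LetPairD}, invert \rref{ST-WPairD} on the scrutinee $(\ottnt{a_{{\mathrm{1}}}}^{\ottnt{m}}, \ottnt{a_{{\mathrm{2}}}})$ to obtain a splitting $\Gamma_{{\mathrm{1}}} = \Gamma_{{\mathrm{11}}} \sqcap \Gamma_{{\mathrm{12}}}$ with $\Gamma_{{\mathrm{11}}} \vdash \ottnt{a_{{\mathrm{1}}}} :^{\ell \sqcup \ottnt{m}} \ottnt{A_{{\mathrm{1}}}}$ and $\Gamma_{{\mathrm{12}}} \vdash \ottnt{a_{{\mathrm{2}}}} :^{\ell} \ottnt{A_{{\mathrm{2}}}}$, and apply Lemma~\ref{DSSubstP} twice, each time matching the level on the bound variable with the level at which the corresponding component is typed. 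The \rref{ST-LetUnitD} case reduces $\mathbf{let}\:\mathbf{unit}\:\mathbf{be}\:\mathbf{unit}\:\mathbf{in}\:\ottnt{b} \leadsto \ottnt{b}$ and closes using \rref{ST-SubLD} since $\Gamma_{{\mathrm{1}}} \sqcap \Gamma_{{\mathrm{2}}} \sqsubseteq \Gamma_{{\mathrm{2}}}$; the \rref{ST-CaseD} case inverts \rref{ST-Inj1D}/\rref{ST-Inj2D} and applies Lemma~\ref{DSSubstP} once, noting that in the dependency setting no factorization/splitting subtlety arises because the scrutinee and both branches are all typed at the same level $\ell$ (the side condition forces $\ell_{{\mathrm{0}}} = \bot$).

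I do not expect a genuine obstacle here; the main point of care is bookkeeping. In the $\mathbb{N}$-version of App, the tricky sub-case was $\ottnt{q_{{\mathrm{0}}}} {<:} \ottnt{q}$ with $\ottnt{q_{{\mathrm{0}}}} \neq 0$, which forced an appeal to the factorization lemma (Lemma~\ref{BLSFactP}) to realign the bound-variable annotation. In LDC($\mathcal{L}$) the factorization lemma fails (as the excerpt explains, it would leak secure information), but it is also \emph{not needed}: inversion on \rref{ST-LamD} gives the body typed at the same observer level $\ell$ directly (there is no analogue of the ``produce fewer copies'' subsumption on the turnstile that the $\mathbb{N}$ proof exploited — or if \rref{ST-SubRD} has been applied, one simply re-derives at $\ell$ using the appropriate monotonicity from Lemma~\ref{DSMultP}/\rref{ST-SubRD} and Lemma~\ref{ResUpP}). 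Thus the only thing to watch is that every use of Lemma~\ref{DSSubstP} has its $\lfloor \Gamma_{{\mathrm{1}}} \rfloor = \lfloor \Gamma \rfloor$ side condition met, which holds because all the contexts in a given derivation share the same underlying ungraded context. Finally, for the subsumption rules, the step can only come from the premise, so the IH plus re-application of \rref{ST-SubLD} or \rref{ST-SubRD} closes those cases immediately.
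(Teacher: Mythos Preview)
Your proposal is correct and follows essentially the same route as the paper's proof: induction on the typing derivation with inversion on the step, handling the $\beta$-cases via the dependency substitution lemma (Lemma~\ref{DSSubstP}), and closing \rref{ST-LetUnitD} with \rref{ST-SubLD}. The one place where your initial description is slightly off is the \rref{ST-AppD} $\beta$-case: inversion on the $\lambda$ (through possible \rref{ST-SubRD}/\rref{ST-SubLD} steps) only yields $\Gamma_{{\mathrm{1}}}, \ottmv{x} :^{\ell_{{\mathrm{0}}} \sqcup \ottnt{m}} \ottnt{A} \vdash \ottnt{b'} :^{\ell_{{\mathrm{0}}}} \ottnt{B}$ for some $\ell_{{\mathrm{0}}} \sqsubseteq \ell$, not directly at $\ell$; but you correctly identify this in your final paragraph and name exactly the tools the paper uses---\rref{ST-SubRD} to raise the observer level to $\ell$, then Lemma~\ref{ResUpP} to upgrade the bound-variable annotation from $\ell_{{\mathrm{0}}} \sqcup \ottnt{m}$ to $\ell \sqcup \ottnt{m}$---before invoking substitution.
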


\begin{proof}
By induction on $ \Gamma  \vdash  \ottnt{a}  :^{ \ell }  \ottnt{A} $ and inversion on $ \vdash  \ottnt{a}  \leadsto  \ottnt{a'} $.

\begin{itemize}
\item \Rref{ST-AppD}. Have: $  \Gamma_{{\mathrm{1}}}  \sqcap  \Gamma_{{\mathrm{2}}}   \vdash   \ottnt{b}  \:  \ottnt{a} ^{ \ottnt{m} }   :^{ \ell }  \ottnt{B} $ where $ \Gamma_{{\mathrm{1}}}  \vdash  \ottnt{b}  :^{ \ell }   {}^{ \ottnt{m} }\!  \ottnt{A}  \to  \ottnt{B}  $ and $ \Gamma_{{\mathrm{2}}}  \vdash  \ottnt{a}  :^{  \ell  \: \sqcup \:  \ottnt{m}  }  \ottnt{A} $. \\ Let $ \vdash   \ottnt{b}  \:  \ottnt{a} ^{ \ottnt{m} }   \leadsto  \ottnt{c} $. By inversion:

\begin{itemize}
\item $ \vdash   \ottnt{b}  \:  \ottnt{a} ^{ \ottnt{m} }   \leadsto   \ottnt{b'}  \:  \ottnt{a} ^{ \ottnt{m} }  $, when $ \vdash  \ottnt{b}  \leadsto  \ottnt{b'} $. \\
Need to show: $  \Gamma_{{\mathrm{1}}}  \sqcap  \Gamma_{{\mathrm{2}}}   \vdash   \ottnt{b'}  \:  \ottnt{a} ^{ \ottnt{m} }   :^{ \ell }  \ottnt{B} $.\\
Follows by IH and \rref{ST-AppD}.

\item $\ottnt{b} =  \lambda^{ \ottnt{m} }  \ottmv{x}  :  \ottnt{A'}  .  \ottnt{b'} $ and $ \vdash   \ottnt{b}  \:  \ottnt{a} ^{ \ottnt{m} }   \leadsto   \ottnt{b'}  \{  \ottnt{a}  /  \ottmv{x}  \}  $.\\
Need to show: $  \Gamma_{{\mathrm{1}}}  \sqcap  \Gamma_{{\mathrm{2}}}   \vdash   \ottnt{b'}  \{  \ottnt{a}  /  \ottmv{x}  \}   :^{ \ell }  \ottnt{B} $.\\
By inversion on $ \Gamma_{{\mathrm{1}}}  \vdash   \lambda^{ \ottnt{m} }  \ottmv{x}  :  \ottnt{A'}  .  \ottnt{b'}   :^{ \ell }   {}^{ \ottnt{m} }\!  \ottnt{A}  \to  \ottnt{B}  $, we get $\ottnt{A'} = \ottnt{A}$ and $  \Gamma_{{\mathrm{1}}}  ,   \ottmv{x}  :^{  \ell_{{\mathrm{0}}}  \: \sqcup \:  \ottnt{m}  }  \ottnt{A}    \vdash  \ottnt{b'}  :^{ \ell_{{\mathrm{0}}} }  \ottnt{B} $ for some $ \ell_{{\mathrm{0}}}  \sqsubseteq  \ell $.\\
By \rref{ST-SubRD}, $  \Gamma_{{\mathrm{1}}}  ,   \ottmv{x}  :^{  \ell_{{\mathrm{0}}}  \: \sqcup \:  \ottnt{m}  }  \ottnt{A}    \vdash  \ottnt{b'}  :^{ \ell }  \ottnt{B} $.\\
By lemma \ref{ResUpP}, $  \Gamma_{{\mathrm{1}}}  ,   \ottmv{x}  :^{  \ell  \: \sqcup \:  \ottnt{m}  }  \ottnt{A}    \vdash  \ottnt{b'}  :^{ \ell }  \ottnt{B} $. \\
This case, then, follows by the substitution lemma.

\end{itemize} 

\item \Rref{ST-LetPairD}. Have: $  \Gamma_{{\mathrm{1}}}  \sqcap  \Gamma_{{\mathrm{2}}}   \vdash   \mathbf{let} \: (  \ottmv{x} ^{ \ottnt{m} } ,  \ottmv{y}  ) \: \mathbf{be} \:  \ottnt{a}  \: \mathbf{in} \:  \ottnt{b}   :^{ \ell }  \ottnt{B} $ where $ \Gamma_{{\mathrm{1}}}  \vdash  \ottnt{a}  :^{ \ell }   {}^{ \ottnt{m} }\!  \ottnt{A_{{\mathrm{1}}}}  \: \times \:  \ottnt{A_{{\mathrm{2}}}}  $ and $    \Gamma_{{\mathrm{2}}}  ,   \ottmv{x}  :^{  \ell  \: \sqcup \:  \ottnt{m}  }  \ottnt{A_{{\mathrm{1}}}}     ,   \ottmv{y}  :^{ \ell }  \ottnt{A_{{\mathrm{2}}}}    \vdash  \ottnt{b}  :^{ \ell }  \ottnt{B} $. \\ Let $ \vdash   \mathbf{let} \: (  \ottmv{x} ^{ \ottnt{m} } ,  \ottmv{y}  ) \: \mathbf{be} \:  \ottnt{a}  \: \mathbf{in} \:  \ottnt{b}   \leadsto  \ottnt{c} $. By inversion:

\begin{itemize}
\item $ \vdash   \mathbf{let} \: (  \ottmv{x} ^{ \ottnt{m} } ,  \ottmv{y}  ) \: \mathbf{be} \:  \ottnt{a}  \: \mathbf{in} \:  \ottnt{b}   \leadsto   \mathbf{let} \: (  \ottmv{x} ^{ \ottnt{m} } ,  \ottmv{y}  ) \: \mathbf{be} \:  \ottnt{a'}  \: \mathbf{in} \:  \ottnt{b}  $, when $ \vdash  \ottnt{a}  \leadsto  \ottnt{a'} $.\\
Need to show: $  \Gamma_{{\mathrm{1}}}  \sqcap  \Gamma_{{\mathrm{2}}}   \vdash   \mathbf{let} \: (  \ottmv{x} ^{ \ottnt{m} } ,  \ottmv{y}  ) \: \mathbf{be} \:  \ottnt{a'}  \: \mathbf{in} \:  \ottnt{b}   :^{ \ell }  \ottnt{B} $.\\
Follows by IH and \rref{ST-LetPairD}.

\item $ \vdash   \mathbf{let} \: (  \ottmv{x} ^{ \ottnt{m} } ,  \ottmv{y}  ) \: \mathbf{be} \:   (  \ottnt{a_{{\mathrm{1}}}} ^{ \ottnt{m} } ,  \ottnt{a_{{\mathrm{2}}}}  )   \: \mathbf{in} \:  \ottnt{b}   \leadsto    \ottnt{b}  \{  \ottnt{a_{{\mathrm{1}}}}  /  \ottmv{x}  \}   \{  \ottnt{a_{{\mathrm{2}}}}  /  \ottmv{y}  \}  $.\\
Need to show: $  \Gamma_{{\mathrm{1}}}  \sqcap  \Gamma_{{\mathrm{2}}}   \vdash    \ottnt{b}  \{  \ottnt{a_{{\mathrm{1}}}}  /  \ottmv{x}  \}   \{  \ottnt{a_{{\mathrm{2}}}}  /  \ottmv{y}  \}   :^{ \ell }  \ottnt{B} $.\\
By inversion on $ \Gamma_{{\mathrm{1}}}  \vdash   (  \ottnt{a_{{\mathrm{1}}}} ^{ \ottnt{m} } ,  \ottnt{a_{{\mathrm{2}}}}  )   :^{ \ell }   {}^{ \ottnt{m} }\!  \ottnt{A_{{\mathrm{1}}}}  \: \times \:  \ottnt{A_{{\mathrm{2}}}}  $, we have:\\
$\exists \Gamma_{{\mathrm{11}}}, \Gamma_{{\mathrm{12}}}$ such that $ \Gamma_{{\mathrm{11}}}  \vdash  \ottnt{a_{{\mathrm{1}}}}  :^{  \ell  \: \sqcup \:  \ottnt{m}  }  \ottnt{A_{{\mathrm{1}}}} $ and $ \Gamma_{{\mathrm{12}}}  \vdash  \ottnt{a_{{\mathrm{2}}}}  :^{ \ell }  \ottnt{A_{{\mathrm{2}}}} $ and $\Gamma_{{\mathrm{1}}} =  \Gamma_{{\mathrm{11}}}  \sqcap  \Gamma_{{\mathrm{12}}} $.\\
This case, then, follows by applying the substitution lemma twice.
\end{itemize}

\item \Rref{ST-LetUnitD}. Have: $  \Gamma_{{\mathrm{1}}}  \sqcap  \Gamma_{{\mathrm{2}}}   \vdash   \mathbf{let} \: \mathbf{unit} \: \mathbf{be} \:  \ottnt{a}  \: \mathbf{in} \:  \ottnt{b}   :^{ \ell }  \ottnt{B} $ where $ \Gamma_{{\mathrm{1}}}  \vdash  \ottnt{a}  :^{ \ell }   \mathbf{Unit}  $ and $ \Gamma_{{\mathrm{2}}}  \vdash  \ottnt{b}  :^{ \ell }  \ottnt{B} $.\\
Let $ \vdash   \mathbf{let} \: \mathbf{unit} \: \mathbf{be} \:  \ottnt{a}  \: \mathbf{in} \:  \ottnt{b}   \leadsto  \ottnt{c} $. By inversion:

\begin{itemize}
\item $ \vdash   \mathbf{let} \: \mathbf{unit} \: \mathbf{be} \:  \ottnt{a}  \: \mathbf{in} \:  \ottnt{b}   \leadsto   \mathbf{let} \: \mathbf{unit} \: \mathbf{be} \:  \ottnt{a'}  \: \mathbf{in} \:  \ottnt{b}  $, when $ \vdash  \ottnt{a}  \leadsto  \ottnt{a'} $.\\
Need to show: $  \Gamma_{{\mathrm{1}}}  \sqcap  \Gamma_{{\mathrm{2}}}   \vdash   \mathbf{let} \: \mathbf{unit} \: \mathbf{be} \:  \ottnt{a'}  \: \mathbf{in} \:  \ottnt{b}   :^{ \ell }  \ottnt{B} $.\\
Follows by IH and \rref{ST-LetUnitD}.

\item $ \vdash   \mathbf{let} \: \mathbf{unit} \: \mathbf{be} \:   \mathbf{unit}   \: \mathbf{in} \:  \ottnt{b}   \leadsto  \ottnt{b} $.\\
Need to show: $  \Gamma_{{\mathrm{1}}}  \sqcap  \Gamma_{{\mathrm{2}}}   \vdash  \ottnt{b}  :^{ \ell }  \ottnt{B} $.\\
Follows by \rref{ST-SubLD}.
\end{itemize}

\item \Rref{ST-CaseD}. Have: $  \Gamma_{{\mathrm{1}}}  \sqcap  \Gamma_{{\mathrm{2}}}   \vdash   \mathbf{case} \:  \ottnt{a}  \: \mathbf{of} \:  \ottmv{x_{{\mathrm{1}}}}  .  \ottnt{b_{{\mathrm{1}}}}  \: ; \:  \ottmv{x_{{\mathrm{2}}}}  .  \ottnt{b_{{\mathrm{2}}}}   :^{ \ell }  \ottnt{B} $ where $ \Gamma_{{\mathrm{1}}}  \vdash  \ottnt{a}  :^{ \ell }   \ottnt{A_{{\mathrm{1}}}}  +  \ottnt{A_{{\mathrm{2}}}}  $ and $  \Gamma_{{\mathrm{2}}}  ,   \ottmv{x_{{\mathrm{1}}}}  :^{ \ell }  \ottnt{A_{{\mathrm{1}}}}    \vdash  \ottnt{b_{{\mathrm{1}}}}  :^{ \ell }  \ottnt{B} $ and $  \Gamma_{{\mathrm{2}}}  ,   \ottmv{x_{{\mathrm{2}}}}  :^{ \ell }  \ottnt{A_{{\mathrm{2}}}}    \vdash  \ottnt{b_{{\mathrm{2}}}}  :^{ \ell }  \ottnt{B} $.\\ Let $ \vdash   \mathbf{case} \:  \ottnt{a}  \: \mathbf{of} \:  \ottmv{x_{{\mathrm{1}}}}  .  \ottnt{b_{{\mathrm{1}}}}  \: ; \:  \ottmv{x_{{\mathrm{2}}}}  .  \ottnt{b_{{\mathrm{2}}}}   \leadsto  \ottnt{c} $. By inversion:

\begin{itemize}
\item $ \vdash   \mathbf{case} \:  \ottnt{a}  \: \mathbf{of} \:  \ottmv{x_{{\mathrm{1}}}}  .  \ottnt{b_{{\mathrm{1}}}}  \: ; \:  \ottmv{x_{{\mathrm{2}}}}  .  \ottnt{b_{{\mathrm{2}}}}   \leadsto   \mathbf{case} \:  \ottnt{a'}  \: \mathbf{of} \:  \ottmv{x_{{\mathrm{1}}}}  .  \ottnt{b_{{\mathrm{1}}}}  \: ; \:  \ottmv{x_{{\mathrm{2}}}}  .  \ottnt{b_{{\mathrm{2}}}}  $, when $ \vdash  \ottnt{a}  \leadsto  \ottnt{a'} $.\\
Need to show: $  \Gamma_{{\mathrm{1}}}  \sqcap  \Gamma_{{\mathrm{2}}}   \vdash   \mathbf{case} \:  \ottnt{a'}  \: \mathbf{of} \:  \ottmv{x_{{\mathrm{1}}}}  .  \ottnt{b_{{\mathrm{1}}}}  \: ; \:  \ottmv{x_{{\mathrm{2}}}}  .  \ottnt{b_{{\mathrm{2}}}}   :^{ \ell }  \ottnt{B} $.\\
Follows by IH and \rref{ST-CaseD}.

\item $ \vdash   \mathbf{case} \:   (   \mathbf{inj}_1 \:  \ottnt{a_{{\mathrm{1}}}}   )   \: \mathbf{of} \:  \ottmv{x_{{\mathrm{1}}}}  .  \ottnt{b_{{\mathrm{1}}}}  \: ; \:  \ottmv{x_{{\mathrm{2}}}}  .  \ottnt{b_{{\mathrm{2}}}}   \leadsto   \ottnt{b_{{\mathrm{1}}}}  \{  \ottnt{a_{{\mathrm{1}}}}  /  \ottmv{x_{{\mathrm{1}}}}  \}  $.\\
Need to show $  \Gamma_{{\mathrm{1}}}  \sqcap  \Gamma_{{\mathrm{2}}}   \vdash   \ottnt{b_{{\mathrm{1}}}}  \{  \ottnt{a_{{\mathrm{1}}}}  /  \ottmv{x_{{\mathrm{1}}}}  \}   :^{ \ell }  \ottnt{B} $.\\
By inversion on $ \Gamma_{{\mathrm{1}}}  \vdash   \mathbf{inj}_1 \:  \ottnt{a_{{\mathrm{1}}}}   :^{ \ell }   \ottnt{A_{{\mathrm{1}}}}  +  \ottnt{A_{{\mathrm{2}}}}  $, we have: $ \Gamma_{{\mathrm{1}}}  \vdash  \ottnt{a_{{\mathrm{1}}}}  :^{ \ell }  \ottnt{A_{{\mathrm{1}}}} $.\\
This case, then, follows by applying the substitution lemma.

\item $ \vdash   \mathbf{case} \:   (   \mathbf{inj}_2 \:  \ottnt{a_{{\mathrm{2}}}}   )   \: \mathbf{of} \:  \ottmv{x_{{\mathrm{1}}}}  .  \ottnt{b_{{\mathrm{1}}}}  \: ; \:  \ottmv{x_{{\mathrm{2}}}}  .  \ottnt{b_{{\mathrm{2}}}}   \leadsto   \ottnt{b_{{\mathrm{2}}}}  \{  \ottnt{a_{{\mathrm{2}}}}  /  \ottmv{x_{{\mathrm{2}}}}  \}  $.\\
Similar to the previous case.

\end{itemize}
\item \Rref{ST-SubLD,ST-SubRD}. Follows by IH.
\end{itemize}
\end{proof}

%--------------------------------------------------------------------------------------------------

\begin{theorem}[Progress (Theorem \ref{DSProgress})]
If $  \emptyset   \vdash  \ottnt{a}  :^{ \ell }  \ottnt{A} $, then either $\ottnt{a}$ is a value or there exists $\ottnt{a'}$ such that $ \vdash  \ottnt{a}  \leadsto  \ottnt{a'} $.
\end{theorem}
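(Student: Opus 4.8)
The plan is to prove this by structural induction on the derivation of $\emptyset \vdash a :^\ell A$, mirroring exactly the proof of the usage-analysis counterpart (Theorem \ref{BLSProgP}); the dependency case is in fact slightly easier because the lattice operations are idempotent, so none of the side-condition subtleties that arise with unrestricted usage are relevant here. First I would dispatch \rref{ST-VarD}: it cannot apply, since the context in the conclusion of \rref{ST-VarD} always contains the variable being typed, whereas here the context is $\emptyset$. Next, the introduction forms --- terms built by \rref{ST-LamD}, \rref{ST-PairD}, \rref{ST-UnitD}, \rref{ST-Inj1D}, \rref{ST-Inj2D} --- are already values, so there is nothing to do. Finally, \rref{ST-SubLD} and \rref{ST-SubRD} follow immediately from the induction hypothesis applied to the (unchanged) subject term.

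The substance is in the four elimination forms. For \rref{ST-AppD}, suppose $\emptyset \vdash b\,a^m :^\ell B$ with $\emptyset \vdash b :^\ell {}^m\!A \to B$. By the induction hypothesis, either $\vdash b \leadsto b'$, in which case $\vdash b\,a^m \leadsto b'\,a^m$, or $b$ is a value; a canonical-forms argument --- inversion on $\emptyset \vdash b :^\ell {}^m\!A\to B$, threading through any uses of \rref{ST-SubLD}/\rref{ST-SubRD} --- then shows $b = \lambda^m x{:}A.\,b'$, so $\vdash b\,a^m \leadsto b'\{a/x\}$. The cases \rref{ST-LetPairD}, \rref{ST-LetUnitD}, \rref{ST-CaseD} are handled identically: run the induction hypothesis on the scrutinee $a$; if it steps, use the corresponding congruence step rule; if it is a value, inversion on its typing forces it to be of the shape $(a_1^m, a_2)$, $\mathbf{unit}$, or $\mathbf{inj}_i\,a_i$ respectively, and the matching $\beta$-rule fires.

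I do not expect any real obstacle. The one point requiring a little care is the canonical-forms reasoning: since \rref{ST-SubLD} and \rref{ST-SubRD} can be interleaved arbitrarily in a derivation, the inversion lemmas need to be stated so as to absorb them --- e.g., if $\emptyset \vdash v :^\ell {}^m\!A\to B$ and $v$ is a value, then $v$ is a $\lambda$-abstraction $\lambda^m x{:}A.\,b$ with $\emptyset, x{:}^{\ell_0\sqcup m}A \vdash b :^{\ell_0} B$ for some $\ell_0 \sqsubseteq \ell$. In the simply-typed setting there is no type-level reduction to complicate these inversions, so they are routine; this is precisely the same bookkeeping already carried out in the proof of Theorem \ref{SimplePreserveP}. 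Hence the proof is a direct transcription of the structure used for Theorem \ref{BLSProgP}, with $+,\cdot$ replaced by $\sqcap,\sqcup$ throughout and the grade-nonzero caveats dropped.
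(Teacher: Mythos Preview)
Your proposal is correct and follows essentially the same approach as the paper: induction on the typing derivation, with \rref{ST-VarD} vacuous in the empty context, the introduction forms already values, \rref{ST-SubLD}/\rref{ST-SubRD} handled by the induction hypothesis, and the four elimination forms dispatched by applying the induction hypothesis to the principal subterm and then either a congruence step or a canonical-forms-plus-$\beta$ step. Your remark about threading the subsumption rules through the inversion lemmas is exactly the right observation, and the paper handles it the same way (implicitly, by appealing to inversion).
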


\begin{proof}
By induction on $  \emptyset   \vdash  \ottnt{a}  :^{ \ell }  \ottnt{A} $.
\begin{itemize}

\item \Rref{ST-VarD}. Does not apply since the context here is empty.

\item \Rref{ST-AppD}. Have: $  \emptyset   \vdash   \ottnt{b}  \:  \ottnt{a} ^{ \ottnt{m} }   :^{ \ell }  \ottnt{B} $ where $  \emptyset   \vdash  \ottnt{b}  :^{ \ell }   {}^{ \ottnt{m} }\!  \ottnt{A}  \to  \ottnt{B}  $ and $  \emptyset   \vdash  \ottnt{a}  :^{  \ell  \: \sqcup \:  \ottnt{m}  }  \ottnt{A} $. \\
Need to show: $\exists c,  \vdash   \ottnt{b}  \:  \ottnt{a} ^{ \ottnt{m} }   \leadsto  \ottnt{c} $.\\
By IH, $\ottnt{b}$ is either a value or $ \vdash  \ottnt{b}  \leadsto  \ottnt{b'} $.\\
If $\ottnt{b}$ is a value, then $\ottnt{b} =  \lambda^{ \ottnt{m} }  \ottmv{x}  :  \ottnt{A}  .  \ottnt{b'} $ for some $\ottnt{b'}$. Therefore, $ \vdash   \ottnt{b}  \:  \ottnt{a} ^{ \ottnt{m} }   \leadsto   \ottnt{b'}  \{  \ottnt{a}  /  \ottmv{x}  \}  $.\\
Otherwise, $ \vdash   \ottnt{b}  \:  \ottnt{a} ^{ \ottnt{m} }   \leadsto   \ottnt{b'}  \:  \ottnt{a} ^{ \ottnt{m} }  $.

\item \Rref{ST-LetPairD}. Have: $  \emptyset   \vdash   \mathbf{let} \: (  \ottmv{x} ^{ \ottnt{m} } ,  \ottmv{y}  ) \: \mathbf{be} \:  \ottnt{a}  \: \mathbf{in} \:  \ottnt{b}   :^{ \ell }  \ottnt{B} $ where $  \emptyset   \vdash  \ottnt{a}  :^{ \ell }   {}^{ \ottnt{m} }\!  \ottnt{A_{{\mathrm{1}}}}  \: \times \:  \ottnt{A_{{\mathrm{2}}}}  $ and $   \ottmv{x}  :^{  \ell  \: \sqcup \:  \ottnt{m}  }  \ottnt{A_{{\mathrm{1}}}}   ,   \ottmv{y}  :^{ \ell }  \ottnt{A_{{\mathrm{2}}}}    \vdash  \ottnt{b}  :^{ \ell }  \ottnt{B} $.\\
Need to show: $\exists c,  \vdash   \mathbf{let} \: (  \ottmv{x} ^{ \ottnt{m} } ,  \ottmv{y}  ) \: \mathbf{be} \:  \ottnt{a}  \: \mathbf{in} \:  \ottnt{b}   \leadsto  \ottnt{c} $.\\
By IH, $\ottnt{a}$ is either a value or $ \vdash  \ottnt{a}  \leadsto  \ottnt{a'} $.\\
If $\ottnt{a}$ is a value, then $\ottnt{a} =  (  \ottnt{a_{{\mathrm{1}}}} ^{ \ottnt{m} } ,  \ottnt{a_{{\mathrm{2}}}}  ) $. Therefore,
$ \vdash   \mathbf{let} \: (  \ottmv{x} ^{ \ottnt{m} } ,  \ottmv{y}  ) \: \mathbf{be} \:  \ottnt{a}  \: \mathbf{in} \:  \ottnt{b}   \leadsto    \ottnt{b}  \{  \ottnt{a_{{\mathrm{1}}}}  /  \ottmv{x}  \}   \{  \ottnt{a_{{\mathrm{2}}}}  /  \ottmv{y}  \}  $.\\
Otherwise, $ \vdash   \mathbf{let} \: (  \ottmv{x} ^{ \ottnt{m} } ,  \ottmv{y}  ) \: \mathbf{be} \:  \ottnt{a}  \: \mathbf{in} \:  \ottnt{b}   \leadsto   \mathbf{let} \: (  \ottmv{x} ^{ \ottnt{m} } ,  \ottmv{y}  ) \: \mathbf{be} \:  \ottnt{a'}  \: \mathbf{in} \:  \ottnt{b}  $.

\item \Rref{ST-LetUnitD}. Have: $  \emptyset   \vdash   \mathbf{let} \: \mathbf{unit} \: \mathbf{be} \:  \ottnt{a}  \: \mathbf{in} \:  \ottnt{b}   :^{ \ell }  \ottnt{B} $ where $  \emptyset   \vdash  \ottnt{a}  :^{ \ell }   \mathbf{Unit}  $ and $  \emptyset   \vdash  \ottnt{b}  :^{ \ell }  \ottnt{B} $.\\
Need to show: $\exists c,  \vdash   \mathbf{let} \: \mathbf{unit} \: \mathbf{be} \:  \ottnt{a}  \: \mathbf{in} \:  \ottnt{b}   \leadsto  \ottnt{c} $.\\
By IH, $\ottnt{a}$ is either a value or $ \vdash  \ottnt{a}  \leadsto  \ottnt{a'} $.\\
If $\ottnt{a}$ is a value, then $\ottnt{a} =  \mathbf{unit} $. Therefore,
$ \vdash   \mathbf{let} \: \mathbf{unit} \: \mathbf{be} \:  \ottnt{a}  \: \mathbf{in} \:  \ottnt{b}   \leadsto  \ottnt{b} $.\\
Otherwise, $ \vdash   \mathbf{let} \: \mathbf{unit} \: \mathbf{be} \:  \ottnt{a}  \: \mathbf{in} \:  \ottnt{b}   \leadsto   \mathbf{let} \: \mathbf{unit} \: \mathbf{be} \:  \ottnt{a'}  \: \mathbf{in} \:  \ottnt{b}  $.

\item \Rref{ST-CaseD}. Have: $  \emptyset   \vdash   \mathbf{case} \:  \ottnt{a}  \: \mathbf{of} \:  \ottmv{x_{{\mathrm{1}}}}  .  \ottnt{b_{{\mathrm{1}}}}  \: ; \:  \ottmv{x_{{\mathrm{2}}}}  .  \ottnt{b_{{\mathrm{2}}}}   :^{ \ell }  \ottnt{B} $ where $  \emptyset   \vdash  \ottnt{a}  :^{ \ell }   \ottnt{A_{{\mathrm{1}}}}  +  \ottnt{A_{{\mathrm{2}}}}  $ and $  \ottmv{x_{{\mathrm{1}}}}  :^{ \ell }  \ottnt{A_{{\mathrm{1}}}}   \vdash  \ottnt{b_{{\mathrm{1}}}}  :^{ \ell }  \ottnt{B} $ and $  \ottmv{x_{{\mathrm{2}}}}  :^{ \ell }  \ottnt{A_{{\mathrm{2}}}}   \vdash  \ottnt{b_{{\mathrm{2}}}}  :^{ \ell }  \ottnt{B} $.\\
Need to show: $\exists c,  \vdash   \mathbf{case} \:  \ottnt{a}  \: \mathbf{of} \:  \ottmv{x_{{\mathrm{1}}}}  .  \ottnt{b_{{\mathrm{1}}}}  \: ; \:  \ottmv{x_{{\mathrm{2}}}}  .  \ottnt{b_{{\mathrm{2}}}}   \leadsto  \ottnt{c} $.\\
By IH, $\ottnt{a}$ is either a value or $ \vdash  \ottnt{a}  \leadsto  \ottnt{a'} $.\\
If $\ottnt{a}$ is a value, then $\ottnt{a} =  \mathbf{inj}_1 \:  \ottnt{a_{{\mathrm{1}}}} $ or $\ottnt{a} =  \mathbf{inj}_2 \:  \ottnt{a_{{\mathrm{2}}}} $. \\
Then, $ \vdash   \mathbf{case} \:  \ottnt{a}  \: \mathbf{of} \:  \ottmv{x_{{\mathrm{1}}}}  .  \ottnt{b_{{\mathrm{1}}}}  \: ; \:  \ottmv{x_{{\mathrm{2}}}}  .  \ottnt{b_{{\mathrm{2}}}}   \leadsto   \ottnt{b_{{\mathrm{1}}}}  \{  \ottnt{a_{{\mathrm{1}}}}  /  \ottmv{x_{{\mathrm{1}}}}  \}  $ or $ \vdash   \mathbf{case} \:  \ottnt{a}  \: \mathbf{of} \:  \ottmv{x_{{\mathrm{1}}}}  .  \ottnt{b_{{\mathrm{1}}}}  \: ; \:  \ottmv{x_{{\mathrm{2}}}}  .  \ottnt{b_{{\mathrm{2}}}}   \leadsto   \ottnt{b_{{\mathrm{2}}}}  \{  \ottnt{a_{{\mathrm{2}}}}  /  \ottmv{x_{{\mathrm{2}}}}  \}  $.\\
Otherwise, $ \vdash   \mathbf{case} \:  \ottnt{a}  \: \mathbf{of} \:  \ottmv{x_{{\mathrm{1}}}}  .  \ottnt{b_{{\mathrm{1}}}}  \: ; \:  \ottmv{x_{{\mathrm{2}}}}  .  \ottnt{b_{{\mathrm{2}}}}   \leadsto   \mathbf{case} \:  \ottnt{a'}  \: \mathbf{of} \:  \ottmv{x_{{\mathrm{1}}}}  .  \ottnt{b_{{\mathrm{1}}}}  \: ; \:  \ottmv{x_{{\mathrm{2}}}}  .  \ottnt{b_{{\mathrm{2}}}}  $.

\item \Rref{ST-SubLD, ST-SubRD}. Follows by IH.

\item \Rref{ST-LamD,ST-PairD,ST-UnitD}. The terms typed by these rules are values.

\end{itemize}
\end{proof}

%---------------------------------------------------------------------------------------------------
\section{Heap Semantics For Simply-Typed LDC}
%---------------------------------------------------------------------------------------------------

\begin{lemma}[Similarity] \label{HeapSim}
If $ [  \ottnt{H}  ]  \ottnt{a}  \Longrightarrow^{ \ottnt{q} }_{ \ottnt{S} } [  \ottnt{H'}  ]  \ottnt{a'} $, then $  \ottnt{a}  \{  \ottnt{H}  \}   =   \ottnt{a'}  \{  \ottnt{H'}  \}  $ or $ \vdash   \ottnt{a}  \{  \ottnt{H}  \}   \leadsto   \ottnt{a'}  \{  \ottnt{H'}  \}  $. Here, $ \ottnt{a}  \{  \ottnt{H}  \} $ denotes the term obtained by substituting in $\ottnt{a}$ the definitions in $\ottnt{H}$, in reverse order.
\end{lemma}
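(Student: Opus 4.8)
The statement relates heap-based reduction to the standard small-step reduction via the ``flattening'' operation $\ottnt{a} \{ \ottnt{H} \}$, which iteratively substitutes the heap bindings into the term in reverse order. The plan is to proceed by induction on the derivation of $[ \ottnt{H} ] \ottnt{a} \Longrightarrow^{ \ottnt{q} }_{ \ottnt{S} } [ \ottnt{H'} ] \ottnt{a'}$, with a case analysis on the last heap-step rule applied. For each rule, I would compute $\ottnt{a} \{ \ottnt{H} \}$ and $\ottnt{a'} \{ \ottnt{H'} \}$ explicitly and check that they are either syntactically equal or related by a single step of $\leadsto$.

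\textbf{Key steps.} First I would dispatch the ``administrative'' rules that merely rearrange the heap without performing a real computation. For \rref{HeapStep-Var}, the term $\ottmv{x}$ is replaced by its definition $\ottnt{a}$ from the heap and the weight is decremented; since flattening substitutes that same definition $\ottnt{a}$ for $\ottmv{x}$ anyway, we get $\ottmv{x} \{ \ottnt{H} \} = \ottnt{a} \{ \ottnt{H} \}$ exactly (the weight change is invisible to flattening), so this is an equality case — here I would need acyclicity of the heap to ensure the substitution is well-defined and that $\ottnt{a}$ does not mention $\ottmv{x}$ or later-bound variables. For \rref{HeapStep-Discard}, the heap loses a binding but only one whose variable does not occur in the term being reduced (this is exactly what the discard rule enforces), so flattening is unaffected and we again get equality. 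For the congruence rule \rref{HeapStep-AppL} (and the analogous omitted left-rules for other elimination forms), the inductive hypothesis gives that $\ottnt{b} \{ \ottnt{H} \}$ equals or steps to $\ottnt{b'} \{ \ottnt{H'} \}$; I would then push flattening through the application head and argue that $(\ottnt{b} \: \ottnt{a}^{\ottnt{r}}) \{ \ottnt{H} \} = (\ottnt{b} \{ \ottnt{H} \}) \: (\ottnt{a} \{ \ottnt{H} \})^{\ottnt{r}}$ up to the fact that the heap only grows, so in the ``equal'' subcase we get equality and in the ``steps'' subcase we get a step by the congruence rule for $\leadsto$. Finally, for the genuine beta rules \rref{HeapStep-AppBeta}, \rref{HeapStep-LetPairBeta}, \rref{HeapStep-Case1Beta}, the heap-step adds a new binding $\ottmv{x} \overset{\ottnt{w}}{\mapsto} \ottnt{a}$ and strips a constructor; flattening the result reinstates exactly the substitution $\{ \ottnt{a} / \ottmv{x} \}$ that the corresponding $\leadsto$ beta rule performs, so $\ottnt{a} \{ \ottnt{H} \}$ takes one genuine $\leadsto$ step to $\ottnt{a'} \{ \ottnt{H'} \}$ — modulo commuting the new innermost substitution with the pre-existing ones from $\ottnt{H}$, which is sound because fresh-name choice (governed by the support set $\ottnt{S}$) prevents capture.

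\textbf{Main obstacle.} The delicate part is the bookkeeping around substitution order and freshness: $\ottnt{a} \{ \ottnt{H} \}$ applies the bindings of $\ottnt{H}$ in a fixed (reverse) order, and in the beta cases the new binding is prepended, so I must verify that prepending a binding and then flattening agrees with flattening first and then performing the corresponding substitution last — i.e. a substitution-commutation lemma that relies on the acyclicity and uniqueness invariants on heaps together with the fact that the newly chosen variable avoids $\ottnt{S}$, hence avoids all free variables of the relevant subterms and heap entries. Once this commutation fact is in hand, every case is a short calculation. I would therefore isolate that commutation property as a small auxiliary observation up front, and then the induction itself is routine.
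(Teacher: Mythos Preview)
Your approach is correct and matches the paper's, which simply states ``by induction on $[H]\,a \Longrightarrow^q_S [H']\,a'$'' without further elaboration; you have supplied exactly the right case analysis and correctly isolated the substitution-commutation fact needed for the $\beta$-cases. One small correction: \rref{HeapStep-Discard} does not drop a binding from the heap --- it is the rule that weakens the grade on the step (from $q$ to any $q'$ with $q <: q'$) while leaving $H$, $H'$, $a$, $a'$ unchanged, so that case follows immediately from the inductive hypothesis with no side argument about variable occurrences.
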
 

\begin{proof}
By induction on $ [  \ottnt{H}  ]  \ottnt{a}  \Longrightarrow^{ \ottnt{q} }_{ \ottnt{S} } [  \ottnt{H'}  ]  \ottnt{a'} $.
\end{proof}

%---------------------------------------------------------------------------------------------------

\begin{lemma}[Unchanged (Lemma \ref{HeapUnchanged})]
If  $ [     \ottnt{H_{{\mathrm{1}}}}  ,   \ottmv{x}  \overset{ \ottnt{r} }{\mapsto}  \ottnt{a}     ,  \ottnt{H_{{\mathrm{2}}}}   ]  \ottnt{c}  \Longrightarrow^{ \ottnt{q} }_{ \ottnt{S} } [     \ottnt{H'_{{\mathrm{1}}}}  ,   \ottmv{x}  \overset{ \ottnt{r'} }{\mapsto}  \ottnt{a}     ,  \ottnt{H'_{{\mathrm{2}}}}   ]  \ottnt{c'} $ (where $\lvert H_1 \rvert = \lvert H'_1 \rvert$) and $\neg(\exists q_0, r = q + q_0)$, then $\ottnt{r'} = r$. 
\end{lemma}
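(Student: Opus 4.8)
The plan is to prove the Unchanged lemma by induction on the derivation of the heap-step judgment $[\ottnt{H_{{\mathrm{1}}}}, \ottmv{x} \overset{\ottnt{r}}{\mapsto} \ottnt{a}, \ottnt{H_{{\mathrm{2}}}}]\, \ottnt{c} \Longrightarrow^{\ottnt{q}}_{\ottnt{S}} [\ottnt{H'_{{\mathrm{1}}}}, \ottmv{x} \overset{\ottnt{r'}}{\mapsto} \ottnt{a}, \ottnt{H'_{{\mathrm{2}}}}]\, \ottnt{c'}$. The intuition is simple: the only rule that actually decrements a weight in the heap is \rref{HeapStep-Var} (a variable look-up), which replaces a weight $\ottnt{s}$ by $\ottnt{s} \sqcap \ottnt{q}$ (equivalently, consumes $\ottnt{q}$ from it). The hypothesis $\neg(\exists q_0, r = q + q_0)$ says precisely that $\ottnt{r}$ cannot have $\ottnt{q}$ "subtracted" from it — in the usage reading $\ottnt{r}$ is too small, and in the flow reading $\neg(\ottnt{r} \sqsubseteq \ottnt{q})$ — so the assignment $\ottmv{x} \overset{\ottnt{r}}{\mapsto} \ottnt{a}$ simply cannot be the one looked up at level $\ottnt{q}$, and hence its weight is untouched.

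The key steps, in order: First I would set up the induction and dispatch the structural/congruence rules (\rref{HeapStep-AppL} and the analogous left-rules for the other elimination forms, together with \rref{HeapStep-Discard}) — in each of these the relevant target assignment in the heap is carried through unchanged or, in the congruence case, the induction hypothesis applies directly to a strictly smaller subderivation over a heap that still contains $\ottmv{x} \overset{\ottnt{r}}{\mapsto} \ottnt{a}$ in the appropriate position (this is where the length bookkeeping $\lvert H_1 \rvert = \lvert H'_1 \rvert$ is needed, to identify which occurrence of $\ottmv{x}$ we are tracking). Second, I would handle \rref{HeapStep-Var}, the crucial case: here $\ottnt{c}$ is a variable $\ottmv{y}$, and either $\ottmv{y} \neq \ottmv{x}$, in which case the weight of $\ottmv{x}$ is literally unchanged and we are done, or $\ottmv{y} = \ottmv{x}$, in which case the rule's side condition $\ottnt{q} \neq \mathbf{0}$ (resp. $\ottnt{q} \neq \top$) together with the rule's effect — the new weight is $\ottnt{r} \sqcap \ottnt{q}$, and the rule requires that $\ottnt{q}$ be "available from" $\ottnt{r}$, i.e.\ $\exists q_0.\ r = q + q_0$ — directly contradicts the hypothesis $\neg(\exists q_0, r = q + q_0)$; so this subcase is vacuous. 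Third, the beta-rules (\rref{HeapStep-AppBeta}, \rref{HeapStep-LetPairBeta}, \rref{HeapStep-Case1Beta}, etc.) only \emph{add} new assignments to the heap and never alter the weights of pre-existing ones, so in each the occurrence $\ottmv{x} \overset{\ottnt{r}}{\mapsto} \ottnt{a}$ we are tracking is preserved verbatim; again the length condition tells us the tracked assignment is not one of the freshly added ones.

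The main obstacle I expect is purely the bookkeeping around \emph{which} assignment the statement is about: since a heap can in principle define a variable only once (by the uniqueness assumption on weighted heaps), the splitting $H = \ottnt{H_{{\mathrm{1}}}}, \ottmv{x} \overset{\ottnt{r}}{\mapsto} \ottnt{a}, \ottnt{H_{{\mathrm{2}}}}$ pins down a unique occurrence, but in the congruence cases the step may operate on a sub-heap (e.g.\ in \rref{HeapStep-AppL} the premise steps $\ottnt{b}$ under the same heap), and I must check the induction hypothesis can be invoked with the splitting point still inside $\ottnt{H_{{\mathrm{1}}}}$ or $\ottnt{H_{{\mathrm{2}}}}$ as appropriate and with the length invariant $\lvert H_1 \rvert = \lvert H'_1 \rvert$ maintained; the support-set annotations $\ottnt{S}$ play no role in weights and can be ignored. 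Once the congruence cases are correctly threaded, the genuine content is confined to \rref{HeapStep-Var}, where the contradiction with the hypothesis is immediate, so I do not anticipate any deep difficulty — only care in the case analysis over the reduction rules of Figure~\ref{HeapReduction}.
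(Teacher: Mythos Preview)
Your overall structure---induction on the heap-step derivation, with \rref{HeapStep-Var} as the heart of the argument and the $\beta$-rules being trivial because they only extend the heap---matches the paper. Your treatment of \rref{HeapStep-Var} is correct: if the looked-up variable is $\ottmv{x}$, the rule's very shape (it rewrites a weight $\ottnt{r}+\ottnt{q}$ to $\ottnt{r}$) forces $\exists q_0.\ r = q + q_0$, contradicting the hypothesis.

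There is, however, a genuine gap in your handling of \rref{HeapStep-Discard}. You lump it with the congruence rules and say the induction hypothesis ``applies directly to a strictly smaller subderivation,'' but this is not so: \rref{HeapStep-Discard} derives a step at level $\ottnt{q'}$ from a premise at a \emph{different} level $\ottnt{q}$ with $\ottnt{q} <: \ottnt{q'}$. The lemma's hypothesis gives you $\neg(\exists q_0.\ r = q' + q_0)$, but to invoke the IH on the premise you need $\neg(\exists q_0.\ r = q + q_0)$. These are not the same condition, and passing from one to the other requires an argument that depends on the parametrizing structure. In $\mathbb{N}_{\geq}$ (and the other $\omega$-semirings), $\ottnt{q} <: \ottnt{q'}$ means $\ottnt{q} = \ottnt{q'} + \ottnt{q_1}$ for some $\ottnt{q_1}$, so if $r = q + q_2$ then $r = q' + (q_1 + q_2)$, contradicting the outer hypothesis. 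In the lattice case the condition reads $\neg(\ottnt{r} \sqsubseteq \ottnt{q'})$, and one must observe that $\ottnt{r} \sqsubseteq \ottnt{q}$ together with $\ottnt{q} \sqsubseteq \ottnt{q'}$ would yield $\ottnt{r} \sqsubseteq \ottnt{q'}$. The paper singles out exactly this case as one of the two ``interesting'' ones; without it the induction does not close.
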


\begin{proof}
By induction on $ [     \ottnt{H_{{\mathrm{1}}}}  ,   \ottmv{x}  \overset{ \ottnt{r} }{\mapsto}  \ottnt{a}     ,  \ottnt{H_{{\mathrm{2}}}}   ]  \ottnt{c}  \Longrightarrow^{ \ottnt{q} }_{ \ottnt{S} } [     \ottnt{H'_{{\mathrm{1}}}}  ,   \ottmv{x}  \overset{ \ottnt{r'} }{\mapsto}  \ottnt{a}     ,  \ottnt{H'_{{\mathrm{2}}}}   ]  \ottnt{c'} $. There are two interesting cases:
\begin{itemize}
\item \Rref{HeapStep-Var}. Have: $ [     \ottnt{H_{{\mathrm{1}}}}  ,   \ottmv{x}  \overset{ \ottnt{r} }{\mapsto}  \ottnt{a}     ,  \ottnt{H_{{\mathrm{2}}}}   ]  \ottnt{c}  \Longrightarrow^{ \ottnt{q} }_{ \ottnt{S} } [     \ottnt{H'_{{\mathrm{1}}}}  ,   \ottmv{x}  \overset{ \ottnt{r'} }{\mapsto}  \ottnt{a}     ,  \ottnt{H'_{{\mathrm{2}}}}   ]  \ottnt{c'} $, where $c$ is a variable that is defined either in $\ottnt{H_{{\mathrm{1}}}}$ or in $\ottnt{H_{{\mathrm{2}}}}$. In either case, $\ottnt{r'} = r$. Observe that $\ottnt{c}$ cannot be $\ottmv{x}$ because then this rule would not apply, owing to the condition, $\neg(\exists q_0, r = q + q_0)$.  
\item \Rref{HeapStep-Discard}. We case split based on the parametrizing structure: 
\begin{itemize}
\item When the parametrizing structure is $ \mathbb{N}_{=} $, this rule is superfluous.
\item When the parametrizing structure is $ \mathbb{N}_{\geq} $, we have:\\
$ [     \ottnt{H_{{\mathrm{1}}}}  ,   \ottmv{x}  \overset{ \ottnt{r} }{\mapsto}  \ottnt{a}     ,  \ottnt{H_{{\mathrm{2}}}}   ]  \ottnt{c}  \Longrightarrow^{ \ottnt{q'} }_{ \ottnt{S} } [     \ottnt{H'_{{\mathrm{1}}}}  ,   \ottmv{x}  \overset{ \ottnt{r'} }{\mapsto}  \ottnt{a}     ,  \ottnt{H'_{{\mathrm{2}}}}   ]  \ottnt{c'} $ where $ [     \ottnt{H_{{\mathrm{1}}}}  ,   \ottmv{x}  \overset{ \ottnt{r} }{\mapsto}  \ottnt{a}     ,  \ottnt{H_{{\mathrm{2}}}}   ]  \ottnt{c}  \Longrightarrow^{ \ottnt{q} }_{ \ottnt{S} } [     \ottnt{H'_{{\mathrm{1}}}}  ,   \ottmv{x}  \overset{ \ottnt{r'} }{\mapsto}  \ottnt{a}     ,  \ottnt{H'_{{\mathrm{2}}}}   ]  \ottnt{c'} $ and $ \ottnt{q}  <:  \ottnt{q'} $. 
Further, $\neg(\exists q_0, r = q' + q_0)$. \\
Now, since $ \ottnt{q}  <:  \ottnt{q'} $, we know, $\exists \ottnt{q_{{\mathrm{1}}}}, \ottnt{q} =  \ottnt{q'}  +  \ottnt{q_{{\mathrm{1}}}} $. Then, if for some $\ottnt{q_{{\mathrm{2}}}}$, the equation $\ottnt{r} =  \ottnt{q}  +  \ottnt{q_{{\mathrm{2}}}} $ holds, we would reach a contradiction because in that case, the equation, $\ottnt{r} =  \ottnt{q'}  +   (   \ottnt{q_{{\mathrm{1}}}}  +  \ottnt{q_{{\mathrm{2}}}}   )  $, would also hold.\\
Therefore, $\neg(\exists q_0, r = q + q_0)$.\\
This case, then, follows by IH.
\item When the parametrizing structure is a lattice $ \mathcal{L} $, we have:\\
$ [     \ottnt{H_{{\mathrm{1}}}}  ,   \ottmv{x}  \overset{ \ottnt{m} }{\mapsto}  \ottnt{a}     ,  \ottnt{H_{{\mathrm{2}}}}   ]  \ottnt{c}  \Longrightarrow^{ \ell' }_{ \ottnt{S} } [     \ottnt{H'_{{\mathrm{1}}}}  ,   \ottmv{x}  \overset{ \ottnt{m'} }{\mapsto}  \ottnt{a}     ,  \ottnt{H'_{{\mathrm{2}}}}   ]  \ottnt{c'} $ where $ [     \ottnt{H_{{\mathrm{1}}}}  ,   \ottmv{x}  \overset{ \ottnt{m} }{\mapsto}  \ottnt{a}     ,  \ottnt{H_{{\mathrm{2}}}}   ]  \ottnt{c}  \Longrightarrow^{ \ell }_{ \ottnt{S} } [     \ottnt{H'_{{\mathrm{1}}}}  ,   \ottmv{x}  \overset{ \ottnt{m'} }{\mapsto}  \ottnt{a}     ,  \ottnt{H'_{{\mathrm{2}}}}   ]  \ottnt{c'} $ and $ \ell  \sqsubseteq  \ell' $. 
Further, $\neg( \ottnt{m}  \sqsubseteq  \ell' )$. \\
Then, $\neg( \ottnt{m}  \sqsubseteq  \ell )$ because otherwise, $ \ottnt{m}  \sqsubseteq  \ell   \sqsubseteq  \ell'$.\\
This case, then, follows by IH.
\end{itemize}
\end{itemize}
\end{proof}

%--------------------------------------------------------------------------------------------------

\begin{lemma}[Irrelevant (Lemma \ref{irrel})] \label{irrelP}
If $ [     \ottnt{H_{{\mathrm{1}}}}  ,   \ottmv{x}  \overset{ \ottnt{r} }{\mapsto}  \ottnt{a}     ,  \ottnt{H_{{\mathrm{2}}}}   ]  \ottnt{c}  \Longrightarrow^{ \ottnt{q} }_{  \ottnt{S}  \, \cup \,   \textit{fv} \:  \ottnt{b}   } [     \ottnt{H'_{{\mathrm{1}}}}  ,   \ottmv{x}  \overset{ \ottnt{r'} }{\mapsto}  \ottnt{a}     ,  \ottnt{H'_{{\mathrm{2}}}}   ]  \ottnt{c'} $ (where $\lvert \ottnt{H_{{\mathrm{1}}}} \rvert = \lvert \ottnt{H'_{{\mathrm{1}}}} \rvert$) and $\neg(\exists q_0, r = q + q_0)$, then $ [     \ottnt{H_{{\mathrm{1}}}}  ,   \ottmv{x}  \overset{ \ottnt{r} }{\mapsto}  \ottnt{b}     ,  \ottnt{H_{{\mathrm{2}}}}   ]  \ottnt{c}  \Longrightarrow^{ \ottnt{q} }_{  \ottnt{S}  \, \cup \,   \textit{fv} \:  \ottnt{a}   } [     \ottnt{H'_{{\mathrm{1}}}}  ,   \ottmv{x}  \overset{ \ottnt{r'} }{\mapsto}  \ottnt{b}     ,  \ottnt{H'_{{\mathrm{2}}}}   ]  \ottnt{c'} $.
\end{lemma}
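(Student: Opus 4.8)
The plan is to prove the Irrelevant lemma by induction on the derivation of the heap reduction $ [     \ottnt{H_{{\mathrm{1}}}}  ,   \ottmv{x}  \overset{ \ottnt{r} }{\mapsto}  \ottnt{a}     ,  \ottnt{H_{{\mathrm{2}}}}   ]  \ottnt{c}  \Longrightarrow^{ \ottnt{q} }_{  \ottnt{S}  \, \cup \,   \textit{fv} \:  \ottnt{b}   } [     \ottnt{H'_{{\mathrm{1}}}}  ,   \ottmv{x}  \overset{ \ottnt{r'} }{\mapsto}  \ottnt{a}     ,  \ottnt{H'_{{\mathrm{2}}}}   ]  \ottnt{c'} $. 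The intuition is that, since the definition $\ottmv{x}$ is held at a weight $\ottnt{r}$ that is \emph{not} available to an observer at $\ottnt{q}$ (the hypothesis $\neg(\exists q_0, r = q+q_0)$, equivalently $\neg(\ottnt{r} \sqsubseteq \ottnt{q})$ in the dependency case), the reduction never performs a look-up of $\ottmv{x}$, so replacing its right-hand side $\ottnt{a}$ with an arbitrary $\ottnt{b}$ leaves the entire reduction derivation intact. The support set is changed from $\ottnt{S} \cup \textit{fv}\,\ottnt{a}$ to $\ottnt{S} \cup \textit{fv}\,\ottnt{b}$ precisely because the only place $\ottnt{a}$'s free variables matter is in the freshness side-conditions of the $\beta$-rules, and for the reduction on $\ottnt{b}$ we must instead avoid $\textit{fv}\,\ottnt{b}$.

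First I would set up the invariant carefully: at every node of the derivation the distinguished assignment $\ottmv{x} \overset{\ottnt{r}}{\mapsto} (\cdot)$ sits at a fixed position (guaranteed by the assumption $\lvert \ottnt{H_{{\mathrm{1}}}} \rvert = \lvert \ottnt{H'_{{\mathrm{1}}}} \rvert$ together with Lemma~\ref{HeapUnchanged}, which tells us $\ottnt{r'} = \ottnt{r}$ throughout), and its weight stays at $\ottnt{r}$ with $\neg(\ottnt{r} \sqsubseteq \ottnt{q})$ preserved down every sub-derivation. The crucial case is \rref{HeapStep-Var}: here the redex $\ottnt{c}$ is a variable, and we must observe that it cannot be $\ottmv{x}$ itself — if it were, \rref{HeapStep-Var} would require $\ottnt{r} = (\ottnt{r} \sqcap \ottnt{q}) $ with $\ottnt{r} \sqcap \ottnt{q} \sqsubseteq \ottnt{q}$, i.e. $\ottnt{r} \sqsubseteq \ottnt{q}$, contradicting the hypothesis — so the variable looked up is some other variable whose definition lies in $\ottnt{H_{{\mathrm{1}}}}$ or $\ottnt{H_{{\mathrm{2}}}}$, which is untouched by the $\ottnt{a} \leadsto \ottnt{b}$ swap, and the same instance of \rref{HeapStep-Var} applies verbatim. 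For \rref{HeapStep-Discard} I would invoke the same reasoning as in the proof of Lemma~\ref{HeapUnchanged} to see that $\neg(\ottnt{r} \sqsubseteq \ottnt{q})$ descends to the premise at the smaller weight/level. For the congruence rules such as \rref{HeapStep-AppL} the induction hypothesis applies directly to the premise, noting the support set in the premise is enlarged by the free variables of the frame's argument, which is unaffected by the swap. For the $\beta$-rules \rref{HeapStep-AppBeta}, \rref{HeapStep-LetPairBeta}, \rref{HeapStep-Case1Beta}, the new assignment added to the heap involves a subterm of $\ottnt{c}$ (not of $\ottnt{a}$ or $\ottnt{b}$), and the freshness condition only needs to avoid $\ottnt{S} \cup \textit{fv}\,(\text{the subterm})$; here the support set of the conclusion switches from $\ottnt{S} \cup \textit{fv}\,\ottnt{a}$ to $\ottnt{S} \cup \textit{fv}\,\ottnt{b}$, and any fresh name previously chosen can still be chosen (or re-chosen by $\alpha$-equivalence) avoiding $\textit{fv}\,\ottnt{b}$, after which the induction hypothesis handles the premise.

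The main obstacle I anticipate is the bookkeeping around the support set and variable freshness in the $\beta$-cases: one must be careful that the fresh name introduced when reducing over $\ottnt{b}$ can be taken to coincide with — or be $\alpha$-renamed to agree with — the fresh name used in the given derivation over $\ottnt{a}$, so that the resulting heaps $\ottnt{H'_{{\mathrm{1}}}}, \ottnt{H'_{{\mathrm{2}}}}$ and reduct $\ottnt{c'}$ are literally the same. This is routine but is the only place where the statement's asymmetric treatment of $\ottnt{S} \cup \textit{fv}\,\ottnt{a}$ versus $\ottnt{S} \cup \textit{fv}\,\ottnt{b}$ does real work, so it deserves a careful (if short) argument invoking the Barendregt convention / the fact that support sets only constrain freshness and never force a particular choice. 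Everything else is a direct structural recursion with the invariant $\neg(\ottnt{r} \sqsubseteq \ottnt{q})$ carried along, using Lemma~\ref{HeapUnchanged} to keep the distinguished assignment pinned in place.
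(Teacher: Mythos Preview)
Your proposal is correct and follows the same approach as the paper: induction on the heap-reduction derivation. The paper's own proof is a single line (``By induction on \ldots''), so your expanded case analysis is more than sufficient. One small point: you have the direction of the support-set swap reversed in your narrative---the hypothesis carries $S \cup \textit{fv}\,\ottnt{b}$ (with $\ottnt{a}$ in the heap) and the conclusion carries $S \cup \textit{fv}\,\ottnt{a}$ (with $\ottnt{b}$ in the heap), not the other way around---and with the sets crossed like this the \emph{same} fresh name works in both derivations without any $\alpha$-renaming, since in each case the name already avoids the free variables of whichever term sits in the heap.
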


\begin{proof}
By induction on  $ [     \ottnt{H_{{\mathrm{1}}}}  ,   \ottmv{x}  \overset{ \ottnt{r} }{\mapsto}  \ottnt{a}     ,  \ottnt{H_{{\mathrm{2}}}}   ]  \ottnt{c}  \Longrightarrow^{ \ottnt{q} }_{  \ottnt{S}  \, \cup \,   \textit{fv} \:  \ottnt{b}   } [     \ottnt{H'_{{\mathrm{1}}}}  ,   \ottmv{x}  \overset{ \ottnt{r'} }{\mapsto}  \ottnt{a}     ,  \ottnt{H'_{{\mathrm{2}}}}   ]  \ottnt{c'} $.
\end{proof}

%---------------------------------------------------------------------------------------------------

\begin{lemma} \label{heaphelper}
If $ \ottnt{H}  \models   \Gamma_{{\mathrm{1}}}  +  \Gamma_{{\mathrm{2}}}  $ and $ \Gamma_{{\mathrm{2}}}  \vdash  \ottnt{a}  :^{ \ottnt{q} }  \ottnt{A} $ and $q \neq 0$, then either $\ottnt{a}$ is a value or there exists $\ottnt{H'}, \Gamma'_{{\mathrm{2}}}, \ottnt{a'}$ such that:
\begin{itemize}
\item $ [  \ottnt{H}  ]  \ottnt{a}  \Longrightarrow^{ \ottnt{q} }_{ \ottnt{S} } [  \ottnt{H'}  ]  \ottnt{a'} $
\item $ \ottnt{H'}  \models   \Gamma_{{\mathrm{1}}}  +  \Gamma'_{{\mathrm{2}}}  $
\item $ \Gamma'_{{\mathrm{2}}}  \vdash  \ottnt{a'}  :^{ \ottnt{q} }  \ottnt{A} $
\end{itemize} 
(Note that $+$ is overloaded here: $ \Gamma_{{\mathrm{1}}}  +  \Gamma_{{\mathrm{2}}} $ denotes addition of contexts $\Gamma_{{\mathrm{1}}}$ and $\Gamma_{{\mathrm{2}}}$ after padding them as necessary.)
\end{lemma}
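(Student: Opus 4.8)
The plan is to prove Lemma~\ref{heaphelper} by induction on the typing derivation $\Gamma_2 \vdash a :^q A$, case-splitting on the last rule applied. The auxiliary context $\Gamma_1$ is there precisely to make the induction survive the elimination rules: whenever $a$ decomposes into a head subterm $b$ (checked against some $\Gamma_{21}$) while the remaining context $\Gamma_{22}$ is spent by $a$'s other subterms, the induction hypothesis is applied to $b$ with $\Gamma_1 + \Gamma_{22}$ in the role of $\Gamma_1$ and $\Gamma_{21}$ in the role of $\Gamma_2$; since $\Gamma_1 + \Gamma_2 = (\Gamma_1 + \Gamma_{22}) + \Gamma_{21}$, the compatibility hypothesis $H \models \Gamma_1 + \Gamma_2$ transfers unchanged. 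The Soundness theorem (Theorem~\ref{heapsound}) then falls out by instantiating $\Gamma_1$ with the zeroed context, since padding it out and adding it to $\Gamma$ gives back $\Gamma$ under each interpretation.

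The cases \rref{ST-Lam,ST-Pair,ST-Unit} and the two injection rules are immediate, since $a$ is already a value. For the elimination rules \rref{ST-App,ST-LetPair,ST-LetUnit,ST-Case} I would first apply the induction hypothesis to the scrutinee subterm; this is legitimate because the grade at which the scrutinee is checked is nonzero whenever $q$ is (in the usage reading this uses the side condition $ \ottnt{q_{{\mathrm{0}}}}  <:   1  $, which forces $\ottnt{q_{{\mathrm{0}}}} \neq 0$, hence $ \ottnt{q}  \cdot  \ottnt{q_{{\mathrm{0}}}}  \neq 0$; in the dependency reading the scrutinee is checked at the same grade). If the scrutinee is not a value, it steps by the induction hypothesis, the whole term steps by the matching congruence rule (\rref{HeapStep-AppL} and its analogues), remembering to enlarge the support set with the free variables of the remaining subterms, and the required typing and compatibility of the result are rebuilt from the induction hypothesis together with the same typing rule. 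If the scrutinee is a value, canonical forms fixes its shape and the term steps by the corresponding $\beta$-rule (\rref{HeapStep-AppBeta}, \rref{HeapStep-LetPairBeta}, and the analogue for case-expressions), which appends one or more new bindings to the heap at the weights dictated by the annotations; re-establishing compatibility for the grown heap uses \rref{HeapCompat-Cons} fed with the typing of those subterms, and re-typing the residual term reuses the inversion argument from the proof of Preservation (Theorem~\ref{SimplePreserve} in the usage reading, drawing on factorization (Lemma~\ref{BLSFact}) and multiplication (Lemma~\ref{BLSMult}); Theorem~\ref{DSPreserve} in the dependency reading, drawing on restricted upgrading (Lemma~\ref{ResUpP})).

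For \rref{ST-SubL}, with $ \Gamma_{{\mathrm{2}}}  <:  \Gamma_{{\mathrm{2}}}' $ and $ \Gamma_{{\mathrm{2}}}'  \vdash  \ottnt{a}  :^{ \ottnt{q} }  \ottnt{A} $, write $\Gamma_{{\mathrm{2}}}$ as $ \Gamma_{{\mathrm{2}}}'  +  \Gamma_{{\mathrm{0}}} $ for a suitable $\Gamma_{{\mathrm{0}}}$ (trivially, with $\Gamma_{{\mathrm{0}}}$ zeroed, when the preorder is discrete), apply the induction hypothesis with $ \Gamma_{{\mathrm{1}}}  +  \Gamma_{{\mathrm{0}}} $ and $\Gamma_{{\mathrm{2}}}'$, then fold $\Gamma_{{\mathrm{0}}}$ back in and re-apply \rref{ST-SubL}. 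For \rref{ST-SubR}, with $ \ottnt{q}  <:  \ottnt{q'} $ and $\ottnt{q'}$ nonzero, observe that $\ottnt{q}$ is also nonzero, apply the induction hypothesis at grade $\ottnt{q}$, then raise the step label from $\ottnt{q}$ to $\ottnt{q'}$ using \rref{HeapStep-Discard} and raise the derived typing using \rref{ST-SubR}.

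The decisive case is \rref{ST-Var}, where $\ottnt{a} = \ottmv{x}$: here $\Gamma_{{\mathrm{2}}}$ carries $\ottmv{x}$ at grade $\ottnt{q}$ and everything else at $0$, so $ \Gamma_{{\mathrm{1}}}  +  \Gamma_{{\mathrm{2}}} $ carries $\ottmv{x}$ at a grade that dominates $\ottnt{q}$, and inverting $ \ottnt{H}  \models   \Gamma_{{\mathrm{1}}}  +  \Gamma_{{\mathrm{2}}}  $ along \rref{HeapCompat-Cons} extracts a definition of $\ottmv{x}$ in $\ottnt{H}$ whose weight dominates $\ottnt{q}$, together with a derivation typing that definition at the residual grade. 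Then \rref{HeapStep-Var} fires --- its precondition is exactly the hypothesis $ \ottnt{q}  \neq   0  $ --- producing $\ottnt{H'}$ with the weight of $\ottmv{x}$ decremented by $\ottnt{q}$; one rebuilds $ \ottnt{H'}  \models   \Gamma_{{\mathrm{1}}}  +  \Gamma_{{\mathrm{2}}}'  $ (with $\Gamma_{{\mathrm{2}}}'$ obtained from $\Gamma_{{\mathrm{2}}}$ by resetting $\ottmv{x}$ to $0$) and $ \Gamma_{{\mathrm{2}}}'  \vdash  \ottnt{a'}  :^{ \ottnt{q} }  \ottnt{A} $ from the inverted data. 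I expect this compatibility bookkeeping to be the main obstacle: it means threading the context splits through \rref{HeapCompat-Cons}, handling the implicit context padding flagged in the statement of the lemma, and appealing to acyclicity to rule out a forward reference in the definition of $\ottmv{x}$; re-establishing compatibility after heap growth in the $\beta$-cases is a close second.
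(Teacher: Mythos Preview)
Your overall strategy matches the paper's proof exactly: induction on the typing derivation with $\Gamma_1$ held as slack, induction hypothesis on the scrutinee in each elimination form, and inversion plus the $\beta$ heap step when the scrutinee is a value.

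There is one genuine gap in your \rref{ST-Var} case. Taking $\Gamma'_2$ to be ``$\Gamma_2$ with $x$ reset to $0$'' gives a context that is entirely zero on the assumptions in scope for the looked-up definition $a$, so $\Gamma'_2 \vdash a :^q A$ cannot hold for $q \neq 0$ unless $a$ happens to use no resources. What inversion of compatibility actually yields is a derivation $\Gamma_0 \vdash a :^{q_{\text{heap}}} A$ at the \emph{full} heap weight, and the paper then invokes the Splitting lemma (Lemma~\ref{BLSSplit}) to break this into $\Gamma_{01} \vdash a :^{q_{\text{residual}}} A$ and $\Gamma_{02} \vdash a :^{q} A$ with $\Gamma_0 = \Gamma_{01} + \Gamma_{02}$. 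The first piece feeds back into \rref{HeapCompat-Cons} to rebuild compatibility for the decremented heap; the second piece, padded with $x :^0 A$ and the zeroed tail, is the correct $\Gamma'_2$. So Splitting, not just the inverted data, is the missing ingredient here.

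One smaller omission: in the congruence sub-case of \rref{ST-LetPair}, \rref{ST-LetUnit}, and \rref{ST-Case}, the induction hypothesis steps the scrutinee at grade $q \cdot q_0$, but the enclosing term must step at grade $q$. The paper closes this with an explicit appeal to \rref{HeapStep-Discard}, using $q_0 <: 1$ to get $q \cdot q_0 <: q$.
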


\begin{proof}
By induction on $ \Gamma_{{\mathrm{2}}}  \vdash  \ottnt{a}  :^{ \ottnt{q} }  \ottnt{A} $.

\begin{itemize}

\item \Rref{ST-Var}. Have $      0   \cdot  \Gamma_{{\mathrm{21}}}   ,   \ottmv{x}  :^{ \ottnt{q_{{\mathrm{2}}}} }  \ottnt{A}     ,    0   \cdot  \Gamma_{{\mathrm{22}}}    \vdash   \ottmv{x}   :^{ \ottnt{q_{{\mathrm{2}}}} }  \ottnt{A} $.\\
Further, $ \ottnt{H}  \models    (     \Gamma_{{\mathrm{11}}}  ,   \ottmv{x}  :^{ \ottnt{q_{{\mathrm{1}}}} }  \ottnt{A}     ,  \Gamma_{{\mathrm{12}}}   )   +   (       0   \cdot  \Gamma_{{\mathrm{21}}}   ,   \ottmv{x}  :^{ \ottnt{q_{{\mathrm{2}}}} }  \ottnt{A}     ,    0   \cdot  \Gamma_{{\mathrm{22}}}    )   $.\\
By inversion, $\exists \ottnt{H_{{\mathrm{1}}}}, \ottnt{q}, \ottnt{a}, \ottnt{H_{{\mathrm{2}}}}, \Gamma_{{\mathrm{0}}}$ and $\ottnt{q_{{\mathrm{0}}}}$ such that $H =    \ottnt{H_{{\mathrm{1}}}}  ,   \ottmv{x}  \overset{ \ottnt{q} }{\mapsto}  \ottnt{a}     ,  \ottnt{H_{{\mathrm{2}}}} $ and $ \Gamma_{{\mathrm{0}}}  \vdash  \ottnt{a}  :^{ \ottnt{q} }  \ottnt{A} $ and $q =  \ottnt{q_{{\mathrm{1}}}}  +  \ottnt{q_{{\mathrm{2}}}}  + \ottnt{q_{{\mathrm{0}}}}$.\\
By lemma \ref{BLSSplitP}, $\exists \Gamma_{{\mathrm{01}}}, \Gamma_{{\mathrm{02}}}$ such that $ \Gamma_{{\mathrm{01}}}  \vdash  \ottnt{a}  :^{  \ottnt{q_{{\mathrm{0}}}}  +  \ottnt{q_{{\mathrm{1}}}}  }  \ottnt{A} $ and $ \Gamma_{{\mathrm{02}}}  \vdash  \ottnt{a}  :^{ \ottnt{q_{{\mathrm{2}}}} }  \ottnt{A} $ and $ \Gamma_{{\mathrm{0}}}  =   \Gamma_{{\mathrm{01}}}  +  \Gamma_{{\mathrm{02}}}  $.\\
Then, we have,
\begin{itemize}
\item $ [     \ottnt{H_{{\mathrm{1}}}}  ,   \ottmv{x}  \overset{     \ottnt{q_{{\mathrm{0}}}}  +  \ottnt{q_{{\mathrm{1}}}}    +  \ottnt{q_{{\mathrm{2}}}}   }{\mapsto}  \ottnt{a}     ,  \ottnt{H_{{\mathrm{2}}}}   ]   \ottmv{x}   \Longrightarrow^{ \ottnt{q_{{\mathrm{2}}}} }_{ \ottnt{S} } [     \ottnt{H_{{\mathrm{1}}}}  ,   \ottmv{x}  \overset{   \ottnt{q_{{\mathrm{0}}}}  +  \ottnt{q_{{\mathrm{1}}}}   }{\mapsto}  \ottnt{a}     ,  \ottnt{H_{{\mathrm{2}}}}   ]  \ottnt{a} $.
\item $    \ottnt{H_{{\mathrm{1}}}}  ,   \ottmv{x}  \overset{   \ottnt{q_{{\mathrm{0}}}}  +  \ottnt{q_{{\mathrm{1}}}}   }{\mapsto}  \ottnt{a}     ,  \ottnt{H_{{\mathrm{2}}}}   \models    (     \Gamma_{{\mathrm{11}}}  ,   \ottmv{x}  :^{ \ottnt{q_{{\mathrm{1}}}} }  \ottnt{A}     ,  \Gamma_{{\mathrm{12}}}   )   +   (     \Gamma_{{\mathrm{02}}}  ,   \ottmv{x}  :^{  0  }  \ottnt{A}     ,    0   \cdot  \Gamma_{{\mathrm{22}}}    )   $.
\item $    \Gamma_{{\mathrm{02}}}  ,   \ottmv{x}  :^{  0  }  \ottnt{A}     ,    0   \cdot  \Gamma_{{\mathrm{22}}}    \vdash  \ottnt{a}  :^{ \ottnt{q_{{\mathrm{2}}}} }  \ottnt{A} $.
\end{itemize}

\item \Rref{ST-App}. Have: $  \Gamma_{{\mathrm{21}}}  +  \Gamma_{{\mathrm{22}}}   \vdash   \ottnt{b}  \:  \ottnt{a} ^{ \ottnt{r} }   :^{ \ottnt{q} }  \ottnt{B} $ where $ \Gamma_{{\mathrm{21}}}  \vdash  \ottnt{b}  :^{ \ottnt{q} }   {}^{ \ottnt{r} }\!  \ottnt{A}  \to  \ottnt{B}  $ and $ \Gamma_{{\mathrm{22}}}  \vdash  \ottnt{a}  :^{  \ottnt{q}  \cdot  \ottnt{r}  }  \ottnt{A} $.\\
Further, $ \ottnt{H}  \models   \Gamma_{{\mathrm{1}}}  +   (   \Gamma_{{\mathrm{21}}}  +  \Gamma_{{\mathrm{22}}}   )   $.\\
By IH, if $\ottnt{b}$ steps, then $ [  \ottnt{H}  ]  \ottnt{b}  \Longrightarrow^{ \ottnt{q} }_{  \ottnt{S}  \, \cup \,   \textit{fv} \:  \ottnt{a}   } [  \ottnt{H'}  ]  \ottnt{b'} $ and $ \Gamma'_{{\mathrm{21}}}  \vdash  \ottnt{b'}  :^{ \ottnt{q} }   {}^{ \ottnt{r} }\!  \ottnt{A}  \to  \ottnt{B}  $ and $ \ottnt{H'}  \models   \Gamma_{{\mathrm{1}}}  +   (   \Gamma'_{{\mathrm{21}}}  +  \Gamma_{{\mathrm{22}}}   )   $.\\
Then, $ [  \ottnt{H}  ]   \ottnt{b}  \:  \ottnt{a} ^{ \ottnt{r} }   \Longrightarrow^{ \ottnt{q} }_{ \ottnt{S} } [  \ottnt{H'}  ]   \ottnt{b'}  \:  \ottnt{a} ^{ \ottnt{r} }  $ and $  \Gamma'_{{\mathrm{21}}}  +  \Gamma_{{\mathrm{22}}}   \vdash   \ottnt{b'}  \:  \ottnt{a} ^{ \ottnt{r} }   :^{ \ottnt{q} }  \ottnt{B} $ and $ \ottnt{H'}  \models   \Gamma_{{\mathrm{1}}}  +   (   \Gamma'_{{\mathrm{21}}}  +  \Gamma_{{\mathrm{22}}}   )   $.\\

Otherwise, $\ottnt{b}$ is a value. By inversion $\ottnt{b} =  \lambda^{ \ottnt{r} }  \ottmv{x}  :  \ottnt{A}  .  \ottnt{b'} $.\\
Also, by inversion, $  \Gamma_{{\mathrm{21}}}  ,   \ottmv{x}  :^{  \ottnt{q}  \cdot  \ottnt{r}  }  \ottnt{A}    \vdash  \ottnt{b'}  :^{ \ottnt{q} }  \ottnt{B} $.\\
Then, $ [  \ottnt{H}  ]    (   \lambda^{ \ottnt{r} }  \ottmv{x}  :  \ottnt{A}  .  \ottnt{b'}   )   \:  \ottnt{a}   \Longrightarrow^{ \ottnt{q} }_{ \ottnt{S} } [   \ottnt{H}  ,   \ottmv{x}  \overset{  \ottnt{q}  \cdot  \ottnt{r}  }{\mapsto}  \ottnt{a}    ]  \ottnt{b'} $ (assuming $ \ottmv{x}  \: \textit{fresh} $).\\
Further, $  \ottnt{H}  ,   \ottmv{x}  \overset{  \ottnt{q}  \cdot  \ottnt{r}  }{\mapsto}  \ottnt{a}    \models    (   \Gamma_{{\mathrm{1}}}  +  \Gamma_{{\mathrm{21}}}   )   ,   \ottmv{x}  :^{  \ottnt{q}  \cdot  \ottnt{r}  }  \ottnt{A}   $. 

\item \Rref{ST-LetUnit}. Have: $  \Gamma_{{\mathrm{21}}}  +  \Gamma_{{\mathrm{22}}}   \vdash   \mathbf{let}_{ \ottnt{q_{{\mathrm{0}}}} } \: \mathbf{unit} \: \mathbf{be} \:  \ottnt{a}  \: \mathbf{in} \:  \ottnt{b}   :^{ \ottnt{q} }  \ottnt{B} $ where $ \Gamma_{{\mathrm{21}}}  \vdash  \ottnt{a}  :^{  \ottnt{q}  \cdot  \ottnt{q_{{\mathrm{0}}}}  }   \mathbf{Unit}  $ and $ \Gamma_{{\mathrm{22}}}  \vdash  \ottnt{b}  :^{ \ottnt{q} }  \ottnt{B} $. \\
Further, $ \ottnt{H}  \models   \Gamma_{{\mathrm{1}}}  +   (   \Gamma_{{\mathrm{21}}}  +  \Gamma_{{\mathrm{22}}}   )   $.\\
By IH, if $\ottnt{a}$ steps, then $ [  \ottnt{H}  ]  \ottnt{a}  \Longrightarrow^{  \ottnt{q}  \cdot  \ottnt{q_{{\mathrm{0}}}}  }_{  \ottnt{S}  \, \cup \,   \textit{fv} \:  \ottnt{b}   } [  \ottnt{H'}  ]  \ottnt{a'} $ and $ \ottnt{H'}  \models   \Gamma_{{\mathrm{1}}}  +   (   \Gamma'_{{\mathrm{21}}}  +  \Gamma_{{\mathrm{22}}}   )   $ and $ \Gamma'_{{\mathrm{21}}}  \vdash  \ottnt{a'}  :^{  \ottnt{q}  \cdot  \ottnt{q_{{\mathrm{0}}}}  }   \mathbf{Unit}  $.\\
Then, $ [  \ottnt{H}  ]   \mathbf{let}_{ \ottnt{q_{{\mathrm{0}}}} } \: \mathbf{unit} \: \mathbf{be} \:  \ottnt{a}  \: \mathbf{in} \:  \ottnt{b}   \Longrightarrow^{  \ottnt{q}  \cdot  \ottnt{q_{{\mathrm{0}}}}  }_{ \ottnt{S} } [  \ottnt{H'}  ]   \mathbf{let}_{ \ottnt{q_{{\mathrm{0}}}} } \: \mathbf{unit} \: \mathbf{be} \:  \ottnt{a'}  \: \mathbf{in} \:  \ottnt{b}  $.\\
By \rref{HeapStep-Discard}, $ [  \ottnt{H}  ]   \mathbf{let}_{ \ottnt{q_{{\mathrm{0}}}} } \: \mathbf{unit} \: \mathbf{be} \:  \ottnt{a}  \: \mathbf{in} \:  \ottnt{b}   \Longrightarrow^{ \ottnt{q} }_{ \ottnt{S} } [  \ottnt{H'}  ]   \mathbf{let}_{ \ottnt{q_{{\mathrm{0}}}} } \: \mathbf{unit} \: \mathbf{be} \:  \ottnt{a'}  \: \mathbf{in} \:  \ottnt{b}  $. ($\because  \ottnt{q_{{\mathrm{0}}}}  <:   1  $)\\
And by \rref{ST-LetUnit}, $  \Gamma'_{{\mathrm{21}}}  +  \Gamma_{{\mathrm{22}}}   \vdash   \mathbf{let}_{ \ottnt{q_{{\mathrm{0}}}} } \: \mathbf{unit} \: \mathbf{be} \:  \ottnt{a}  \: \mathbf{in} \:  \ottnt{b}   :^{ \ottnt{q} }  \ottnt{B} $.\\

Otherwise, $\ottnt{a}$ is a value. By inversion, $\ottnt{a} =  \mathbf{unit} $.\\
Then, $ [  \ottnt{H}  ]   \mathbf{let}_{ \ottnt{q_{{\mathrm{0}}}} } \: \mathbf{unit} \: \mathbf{be} \:   \mathbf{unit}   \: \mathbf{in} \:  \ottnt{b}   \Longrightarrow^{ \ottnt{q} }_{ \ottnt{S} } [  \ottnt{H}  ]  \ottnt{b} $. \\
Further, $  \Gamma_{{\mathrm{21}}}  +  \Gamma_{{\mathrm{22}}}   \vdash  \ottnt{b}  :^{ \ottnt{q} }  \ottnt{B} $. ($\because$ In case of $ \mathbb{N}_{=} $, $ \overline{ \Gamma_{{\mathrm{21}}} }  = \overline{0}$ and in case of $ \mathbb{N}_{\geq} $, $  \Gamma_{{\mathrm{21}}}  +  \Gamma_{{\mathrm{22}}}   <:  \Gamma_{{\mathrm{22}}} $.)

\item \Rref{ST-LetPair}. Have: $  \Gamma_{{\mathrm{21}}}  +  \Gamma_{{\mathrm{22}}}   \vdash   \mathbf{let}_{ \ottnt{q_{{\mathrm{0}}}} } \: (  \ottmv{x} ^{ \ottnt{r} } ,  \ottmv{y}  ) \: \mathbf{be} \:  \ottnt{a}  \: \mathbf{in} \:  \ottnt{b}   :^{ \ottnt{q} }  \ottnt{B} $ where $ \Gamma_{{\mathrm{21}}}  \vdash  \ottnt{a}  :^{  \ottnt{q}  \cdot  \ottnt{q_{{\mathrm{0}}}}  }   {}^{ \ottnt{r} }\!  \ottnt{A_{{\mathrm{1}}}}  \: \times \:  \ottnt{A_{{\mathrm{2}}}}  $ and $  \Gamma_{{\mathrm{22}}}  ,     \ottmv{x}  :^{  \ottnt{q}  \cdot    \ottnt{q_{{\mathrm{0}}}}  \cdot  \ottnt{r}    }  \ottnt{A_{{\mathrm{1}}}}   ,   \ottmv{y}  :^{  \ottnt{q}  \cdot  \ottnt{q_{{\mathrm{0}}}}  }  \ottnt{A_{{\mathrm{2}}}}      \vdash  \ottnt{b}  :^{ \ottnt{q} }  \ottnt{B} $.\\
Further, $ \ottnt{H}  \models   \Gamma_{{\mathrm{1}}}  +   (   \Gamma_{{\mathrm{21}}}  +  \Gamma_{{\mathrm{22}}}   )   $.\\
By IH, if $\ottnt{a}$ steps, then $ [  \ottnt{H}  ]  \ottnt{a}  \Longrightarrow^{  \ottnt{q}  \cdot  \ottnt{q_{{\mathrm{0}}}}  }_{  \ottnt{S}  \, \cup \,   \textit{fv} \:  \ottnt{b}   } [  \ottnt{H'}  ]  \ottnt{a'} $ and $ \ottnt{H'}  \models   \Gamma_{{\mathrm{1}}}  +   (   \Gamma'_{{\mathrm{21}}}  +  \Gamma_{{\mathrm{22}}}   )   $ and $ \Gamma'_{{\mathrm{21}}}  \vdash  \ottnt{a'}  :^{  \ottnt{q}  \cdot  \ottnt{q_{{\mathrm{0}}}}  }   {}^{ \ottnt{r} }\!  \ottnt{A_{{\mathrm{1}}}}  \: \times \:  \ottnt{A_{{\mathrm{2}}}}  $.\\
Then, $ [  \ottnt{H}  ]   \mathbf{let}_{ \ottnt{q_{{\mathrm{0}}}} } \: (  \ottmv{x} ^{ \ottnt{r} } ,  \ottmv{y}  ) \: \mathbf{be} \:  \ottnt{a}  \: \mathbf{in} \:  \ottnt{b}   \Longrightarrow^{  \ottnt{q}  \cdot  \ottnt{q_{{\mathrm{0}}}}  }_{ \ottnt{S} } [  \ottnt{H'}  ]   \mathbf{let}_{ \ottnt{q_{{\mathrm{0}}}} } \: (  \ottmv{x} ^{ \ottnt{r} } ,  \ottmv{y}  ) \: \mathbf{be} \:  \ottnt{a'}  \: \mathbf{in} \:  \ottnt{b}  $.\\
By \rref{HeapStep-Discard}, $ [  \ottnt{H}  ]   \mathbf{let}_{ \ottnt{q_{{\mathrm{0}}}} } \: (  \ottmv{x} ^{ \ottnt{r} } ,  \ottmv{y}  ) \: \mathbf{be} \:  \ottnt{a}  \: \mathbf{in} \:  \ottnt{b}   \Longrightarrow^{ \ottnt{q} }_{ \ottnt{S} } [  \ottnt{H'}  ]   \mathbf{let}_{ \ottnt{q_{{\mathrm{0}}}} } \: (  \ottmv{x} ^{ \ottnt{r} } ,  \ottmv{y}  ) \: \mathbf{be} \:  \ottnt{a'}  \: \mathbf{in} \:  \ottnt{b}  $. ($\because  \ottnt{q_{{\mathrm{0}}}}  <:   1  $)\\ 
And by \rref{ST-LetPair}, $  \Gamma'_{{\mathrm{21}}}  +  \Gamma_{{\mathrm{22}}}   \vdash   \mathbf{let}_{ \ottnt{q_{{\mathrm{0}}}} } \: (  \ottmv{x} ^{ \ottnt{r} } ,  \ottmv{y}  ) \: \mathbf{be} \:  \ottnt{a'}  \: \mathbf{in} \:  \ottnt{b}   : \:  \ottnt{B} $.\\

Otherwise, $\ottnt{a}$ is a value. By inversion, $\ottnt{a} =  (  \ottnt{a_{{\mathrm{1}}}} ^{ \ottnt{r} } ,  \ottnt{a_{{\mathrm{2}}}}  ) $.\\
Further, $\exists \Gamma_{{\mathrm{211}}}, \Gamma_{{\mathrm{212}}}$ such that $ \Gamma_{{\mathrm{211}}}  \vdash  \ottnt{a_{{\mathrm{1}}}}  :^{  \ottnt{q}  \cdot    \ottnt{q_{{\mathrm{0}}}}  \cdot  \ottnt{r}    }  \ottnt{A_{{\mathrm{1}}}} $ and $ \Gamma_{{\mathrm{212}}}  \vdash  \ottnt{a_{{\mathrm{2}}}}  :^{  \ottnt{q}  \cdot  \ottnt{q_{{\mathrm{0}}}}  }  \ottnt{A_{{\mathrm{2}}}} $ and $\Gamma_{{\mathrm{21}}} =  \Gamma_{{\mathrm{211}}}  +  \Gamma_{{\mathrm{212}}} $.\\
Now, $ [  \ottnt{H}  ]   \mathbf{let}_{ \ottnt{q_{{\mathrm{0}}}} } \: (  \ottmv{x} ^{ \ottnt{r} } ,  \ottmv{y}  ) \: \mathbf{be} \:   (  \ottnt{a_{{\mathrm{1}}}} ^{ \ottnt{r} } ,  \ottnt{a_{{\mathrm{2}}}}  )   \: \mathbf{in} \:  \ottnt{b}   \Longrightarrow^{ \ottnt{q} }_{ \ottnt{S} } [     \ottnt{H}  ,   \ottmv{x}  \overset{  \ottnt{q}  \cdot    \ottnt{q_{{\mathrm{0}}}}  \cdot  \ottnt{r}    }{\mapsto}  \ottnt{a_{{\mathrm{1}}}}     ,   \ottmv{y}  \overset{  \ottnt{q}  \cdot  \ottnt{q_{{\mathrm{0}}}}  }{\mapsto}  \ottnt{a_{{\mathrm{2}}}}    ]  \ottnt{b} $ (assuming $x, \ottmv{y}  \: \textit{fresh} $).\\
And, $    \ottnt{H}  ,   \ottmv{x}  \overset{  \ottnt{q}  \cdot    \ottnt{q_{{\mathrm{0}}}}  \cdot  \ottnt{r}    }{\mapsto}  \ottnt{a_{{\mathrm{1}}}}     ,   \ottmv{y}  \overset{  \ottnt{q}  \cdot  \ottnt{q_{{\mathrm{0}}}}  }{\mapsto}  \ottnt{a_{{\mathrm{2}}}}    \models    (   \Gamma_{{\mathrm{1}}}  +  \Gamma_{{\mathrm{22}}}   )   ,     \ottmv{x}  :^{  \ottnt{q}  \cdot    \ottnt{q_{{\mathrm{0}}}}  \cdot  \ottnt{r}    }  \ottnt{A_{{\mathrm{1}}}}   ,   \ottmv{y}  :^{  \ottnt{q}  \cdot  \ottnt{q_{{\mathrm{0}}}}  }  \ottnt{A_{{\mathrm{2}}}}     $.

\item \Rref{ST-Case}. Have: $  \Gamma_{{\mathrm{21}}}  +  \Gamma_{{\mathrm{22}}}   \vdash   \mathbf{case}_{ \ottnt{q_{{\mathrm{0}}}} } \:  \ottnt{a}  \: \mathbf{of} \:  \ottmv{x_{{\mathrm{1}}}}  .  \ottnt{b_{{\mathrm{1}}}}  \: ; \:  \ottmv{x_{{\mathrm{2}}}}  .  \ottnt{b_{{\mathrm{2}}}}   :^{ \ottnt{q} }  \ottnt{B} $ where $ \Gamma_{{\mathrm{21}}}  \vdash  \ottnt{a}  :^{  \ottnt{q}  \cdot  \ottnt{q_{{\mathrm{0}}}}  }   \ottnt{A_{{\mathrm{1}}}}  +  \ottnt{A_{{\mathrm{2}}}}  $ and $  \Gamma_{{\mathrm{22}}}  ,   \ottmv{x_{{\mathrm{1}}}}  :^{  \ottnt{q}  \cdot  \ottnt{q_{{\mathrm{0}}}}  }  \ottnt{A_{{\mathrm{1}}}}    \vdash  \ottnt{b_{{\mathrm{1}}}}  :^{ \ottnt{q} }  \ottnt{B} $ and $  \Gamma_{{\mathrm{22}}}  ,   \ottmv{x_{{\mathrm{2}}}}  :^{  \ottnt{q}  \cdot  \ottnt{q_{{\mathrm{0}}}}  }  \ottnt{A_{{\mathrm{2}}}}    \vdash  \ottnt{b_{{\mathrm{2}}}}  :^{ \ottnt{q} }  \ottnt{B} $.\\
Further, $ \ottnt{H}  \models   \Gamma_{{\mathrm{1}}}  +   (   \Gamma_{{\mathrm{21}}}  +  \Gamma_{{\mathrm{22}}}   )   $.\\
By IH, if $\ottnt{a}$ steps, then $ [  \ottnt{H}  ]  \ottnt{a}  \Longrightarrow^{  \ottnt{q}  \cdot  \ottnt{q_{{\mathrm{0}}}}  }_{  \ottnt{S}  \, \cup \,     \textit{fv} \:  \ottnt{b_{{\mathrm{1}}}}   \, \cup \,   \textit{fv} \:  \ottnt{b_{{\mathrm{2}}}}     } [  \ottnt{H'}  ]  \ottnt{a'} $ and $ \ottnt{H'}  \models   \Gamma_{{\mathrm{1}}}  +   (   \Gamma'_{{\mathrm{21}}}  +  \Gamma_{{\mathrm{22}}}   )   $ and $ \Gamma'_{{\mathrm{21}}}  \vdash  \ottnt{a'}  :^{  \ottnt{q}  \cdot  \ottnt{q_{{\mathrm{0}}}}  }   {}^{ \ottnt{r} }\!  \ottnt{A_{{\mathrm{1}}}}  \: \times \:  \ottnt{A_{{\mathrm{2}}}}  $.\\
Then, $ [  \ottnt{H}  ]   \mathbf{case}_{ \ottnt{q_{{\mathrm{0}}}} } \:  \ottnt{a}  \: \mathbf{of} \:  \ottmv{x_{{\mathrm{1}}}}  .  \ottnt{b_{{\mathrm{1}}}}  \: ; \:  \ottmv{x_{{\mathrm{2}}}}  .  \ottnt{b_{{\mathrm{2}}}}   \Longrightarrow^{  \ottnt{q}  \cdot  \ottnt{q_{{\mathrm{0}}}}  }_{ \ottnt{S} } [  \ottnt{H'}  ]   \mathbf{case}_{ \ottnt{q_{{\mathrm{0}}}} } \:  \ottnt{a'}  \: \mathbf{of} \:  \ottmv{x_{{\mathrm{1}}}}  .  \ottnt{b_{{\mathrm{1}}}}  \: ; \:  \ottmv{x_{{\mathrm{2}}}}  .  \ottnt{b_{{\mathrm{2}}}}  $.\\
By \rref{HeapStep-Discard}, $ [  \ottnt{H}  ]   \mathbf{case}_{ \ottnt{q_{{\mathrm{0}}}} } \:  \ottnt{a}  \: \mathbf{of} \:  \ottmv{x_{{\mathrm{1}}}}  .  \ottnt{b_{{\mathrm{1}}}}  \: ; \:  \ottmv{x_{{\mathrm{2}}}}  .  \ottnt{b_{{\mathrm{2}}}}   \Longrightarrow^{ \ottnt{q} }_{ \ottnt{S} } [  \ottnt{H'}  ]   \mathbf{case}_{ \ottnt{q_{{\mathrm{0}}}} } \:  \ottnt{a'}  \: \mathbf{of} \:  \ottmv{x_{{\mathrm{1}}}}  .  \ottnt{b_{{\mathrm{1}}}}  \: ; \:  \ottmv{x_{{\mathrm{2}}}}  .  \ottnt{b_{{\mathrm{2}}}}  $. ($\because  \ottnt{q_{{\mathrm{0}}}}  <:   1  $)\\
And by \rref{ST-Case}, $  \Gamma'_{{\mathrm{21}}}  +  \Gamma_{{\mathrm{22}}}   \vdash   \mathbf{case}_{ \ottnt{q_{{\mathrm{0}}}} } \:  \ottnt{a'}  \: \mathbf{of} \:  \ottmv{x_{{\mathrm{1}}}}  .  \ottnt{b_{{\mathrm{1}}}}  \: ; \:  \ottmv{x_{{\mathrm{2}}}}  .  \ottnt{b_{{\mathrm{2}}}}   :^{ \ottnt{q} }  \ottnt{B} $.\\

Otherwise, $\ottnt{a}$ is a value. By inversion, $\ottnt{a} =  \mathbf{inj}_1 \:  \ottnt{a_{{\mathrm{1}}}} $ or $\ottnt{a} =  \mathbf{inj}_2 \:  \ottnt{a_{{\mathrm{2}}}} $.\\
Say $\ottnt{a} =  \mathbf{inj}_1 \:  \ottnt{a_{{\mathrm{1}}}} $. Then, $ \Gamma_{{\mathrm{21}}}  \vdash  \ottnt{a_{{\mathrm{1}}}}  :^{  \ottnt{q}  \cdot  \ottnt{q_{{\mathrm{0}}}}  }  \ottnt{A_{{\mathrm{1}}}} $.\\
Now, $ [  \ottnt{H}  ]   \mathbf{case}_{ \ottnt{q_{{\mathrm{0}}}} } \:   (   \mathbf{inj}_1 \:  \ottnt{a_{{\mathrm{1}}}}   )   \: \mathbf{of} \:  \ottmv{x_{{\mathrm{1}}}}  .  \ottnt{b_{{\mathrm{1}}}}  \: ; \:  \ottmv{x_{{\mathrm{2}}}}  .  \ottnt{b_{{\mathrm{2}}}}   \Longrightarrow^{ \ottnt{q} }_{ \ottnt{S} } [   \ottnt{H}  ,   \ottmv{x_{{\mathrm{1}}}}  \overset{  \ottnt{q}  \cdot  \ottnt{q_{{\mathrm{0}}}}  }{\mapsto}  \ottnt{a_{{\mathrm{1}}}}    ]  \ottnt{b_{{\mathrm{1}}}} $ (assuming $ \ottmv{x_{{\mathrm{1}}}}  \: \textit{fresh} $).\\
And, $  \ottnt{H}  ,   \ottmv{x_{{\mathrm{1}}}}  \overset{  \ottnt{q}  \cdot  \ottnt{q_{{\mathrm{0}}}}  }{\mapsto}  \ottnt{a_{{\mathrm{1}}}}    \models    (   \Gamma_{{\mathrm{1}}}  +  \Gamma_{{\mathrm{22}}}   )   ,   \ottmv{x_{{\mathrm{1}}}}  :^{  \ottnt{q}  \cdot  \ottnt{q_{{\mathrm{0}}}}  }  \ottnt{A_{{\mathrm{1}}}}   $.\\
The case when $\ottnt{a} =  \mathbf{inj}_2 \:  \ottnt{a_{{\mathrm{2}}}} $ is similar.

\item \Rref{ST-SubL}. Have: $ \Gamma_{{\mathrm{2}}}  \vdash  \ottnt{a}  :^{ \ottnt{q} }  \ottnt{A} $ where $ \Gamma'_{{\mathrm{2}}}  \vdash  \ottnt{a}  :^{ \ottnt{q} }  \ottnt{A} $ and $ \Gamma_{{\mathrm{2}}}  <:  \Gamma'_{{\mathrm{2}}} $.\\
Further, $ \ottnt{H}  \models   \Gamma_{{\mathrm{1}}}  +  \Gamma_{{\mathrm{2}}}  $.\\
Since $ \Gamma_{{\mathrm{2}}}  <:  \Gamma'_{{\mathrm{2}}} $, there exists $\Gamma_{{\mathrm{0}}}$ such that $\Gamma_{{\mathrm{2}}} =  \Gamma'_{{\mathrm{2}}}  +  \Gamma_{{\mathrm{0}}} $. (In case of $ \mathbb{N}_{=} ,  \overline{ \Gamma_{{\mathrm{0}}} }  = \overline{0}$.)\\
By IH, if $\ottnt{a}$ is not value, then $ [  \ottnt{H}  ]  \ottnt{a}  \Longrightarrow^{ \ottnt{q} }_{ \ottnt{S} } [  \ottnt{H'}  ]  \ottnt{a'} $ and $ \ottnt{H'}  \models    (   \Gamma_{{\mathrm{1}}}  +  \Gamma_{{\mathrm{0}}}   )   +  \Gamma''_{{\mathrm{2}}}  $ and $ \Gamma''_{{\mathrm{2}}}  \vdash  \ottnt{a'}  :^{ \ottnt{q} }  \ottnt{A} $.\\
Then, by \rref{ST-SubL}, $  \Gamma_{{\mathrm{0}}}  +  \Gamma''_{{\mathrm{2}}}   \vdash  \ottnt{a'}  :^{ \ottnt{q} }  \ottnt{A} $, since $  \Gamma_{{\mathrm{0}}}  +  \Gamma''_{{\mathrm{2}}}   <:  \Gamma''_{{\mathrm{2}}} $.  

\item \Rref{ST-SubR}. Have $ \Gamma_{{\mathrm{2}}}  \vdash  \ottnt{a}  :^{ \ottnt{q'} }  \ottnt{A} $ where $ \Gamma_{{\mathrm{2}}}  \vdash  \ottnt{a}  :^{ \ottnt{q} }  \ottnt{A} $ and $ \ottnt{q}  <:  \ottnt{q'} $.\\
Further, $ \ottnt{H}  \models   \Gamma_{{\mathrm{1}}}  +  \Gamma_{{\mathrm{2}}}  $.\\
By IH, if $\ottnt{a}$ is not a value, then $ [  \ottnt{H}  ]  \ottnt{a}  \Longrightarrow^{ \ottnt{q} }_{ \ottnt{S} } [  \ottnt{H'}  ]  \ottnt{a'} $ and $ \ottnt{H'}  \models   \Gamma_{{\mathrm{1}}}  +  \Gamma'_{{\mathrm{2}}}  $ and $ \Gamma'_{{\mathrm{2}}}  \vdash  \ottnt{a'}  :^{ \ottnt{q} }  \ottnt{A} $.\\
Then, since $ \ottnt{q}  <:  \ottnt{q'} $, by \rref{HeapStep-Discard}, $ [  \ottnt{H}  ]  \ottnt{a}  \Longrightarrow^{ \ottnt{q'} }_{ \ottnt{S} } [  \ottnt{H'}  ]  \ottnt{a'} $. And by \rref{ST-SubR}, $ \Gamma'_{{\mathrm{2}}}  \vdash  \ottnt{a'}  :^{ \ottnt{q'} }  \ottnt{A} $. 

\end{itemize}
\end{proof}

%----------------------------------------------------------------------------------------------------

\begin{theorem}[Soundness (Theorem \ref{heapsound})]
If $ \ottnt{H}  \models  \Gamma $ and $ \Gamma  \vdash  \ottnt{a}  :^{ \ottnt{q} }  \ottnt{A} $ and $q \neq 0$, then either $\ottnt{a}$ is a value or there exists $\ottnt{H'}, \Gamma', \ottnt{a'}$ such that $ [  \ottnt{H}  ]  \ottnt{a}  \Longrightarrow^{ \ottnt{q} }_{ \ottnt{S} } [  \ottnt{H'}  ]  \ottnt{a'} $ and $ \ottnt{H'}  \models  \Gamma' $ and $ \Gamma'  \vdash  \ottnt{a'}  :^{ \ottnt{q} }  \ottnt{A} $.
\end{theorem}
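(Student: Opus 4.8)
The plan is to derive Theorem~\ref{heapsound} directly from the strengthened Lemma~\ref{heaphelper}, which is already established by induction on the typing derivation. The only remaining work is to line up the two statements: I would instantiate Lemma~\ref{heaphelper} with the left context component $\Gamma_1$ set to the empty context $\emptyset$. After padding, $\emptyset + \Gamma$ is just $\Gamma$, so the hypothesis $H \models \Gamma$ is literally $H \models \Gamma_1 + \Gamma_2$ with $\Gamma_2 := \Gamma$, and the hypotheses $\Gamma_2 \vdash a :^q A$ and $q \neq 0$ carry over verbatim. Lemma~\ref{heaphelper} then gives either that $a$ is a value, or data $H', \Gamma'_2, a'$ with $[H]\,a \Longrightarrow^q_S [H']\,a'$, $H' \models \Gamma_1 + \Gamma'_2$, and $\Gamma'_2 \vdash a' :^q A$; since $\Gamma_1 + \Gamma'_2 = \Gamma'_2$, taking $\Gamma' := \Gamma'_2$ is exactly the conclusion sought.

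It is worth recording where the content actually sits, namely in Lemma~\ref{heaphelper}, and why the generalization over a separate $\Gamma_1$ is what makes the induction go through. In the inductive cases for the elimination forms (\rref{ST-App}, \rref{ST-LetPair}, \rref{ST-LetUnit}, \rref{ST-Case}) one must step a strict subterm --- the head of an application, or the scrutinee of a pair- or sum-elimination --- while the resources allotted to the remaining subterms are not yet consumed and must be threaded through unchanged. The induction hypothesis is applied with those not-yet-consumed resources folded into the $\Gamma_1$ slot; an un-generalized induction on $\Gamma \vdash a :^q A$ would have nowhere to park them, and the $\Gamma_1/\Gamma_2$ split in the conclusion is then pushed back through the relevant typing rule to retype the reduct.

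The step I expect to be the main obstacle is keeping the combined invariant --- heap compatible with $\Gamma_1 + \Gamma_2$, current term typed in $\Gamma_2$, and every weight matching the stepping label --- consistent across both kinds of transition. For the variable rule \rref{HeapStep-Var} the grade $q$ on the cell $x \overset{q}{\mapsto} a$ must be divided into the portion consumed by this look-up and the portion that remains, and the typing context of the stored term $a$ must be divided correspondingly; this is exactly where the Splitting lemma (Lemma~\ref{BLSSplit}) is invoked, and exactly why the construction would fail for arbitrary semirings in the sense of \citet{atkey}. For the beta-rules, fresh heap cells are allocated at weight $q \cdot r$ (or $q \sqcup r$ in the dependency instance), and one must check that extending $H$ with these cells preserves compatibility with the context extended by the matching assumptions, appended on the correct side of the split. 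Finally the side condition $q_0 <: 1$ on the elimination rules together with \rref{HeapStep-Discard} is used to reconcile the weight $q \cdot q_0$ at which a subterm steps with the weight $q$ at which the whole term steps, and the precondition $q \neq 0$ keeps all of this meaningful. Individually none of these steps is deep; the difficulty is carrying the invariant uniformly through every case.
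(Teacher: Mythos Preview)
Your proposal is correct and follows essentially the same approach as the paper: derive the theorem as an immediate corollary of the strengthened Lemma~\ref{heaphelper}, whose inductive proof does all the real work. The only cosmetic difference is that the paper instantiates with $\Gamma_1 := 0 \cdot \Gamma$ rather than $\Gamma_1 := \emptyset$; under the padding convention for context addition these coincide, and your discussion of why the $\Gamma_1/\Gamma_2$ split is needed and where Splitting (Lemma~\ref{BLSSplit}) enters matches the paper's development.
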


\begin{proof}
Use lemma \ref{heaphelper} with $\Gamma_{{\mathrm{1}}} :=   0   \cdot  \Gamma $ and $\Gamma_{{\mathrm{2}}} := \Gamma$.
\end{proof}

%---------------------------------------------------------------------------------------------------

\begin{lemma} \label{heaphelperD}
If $ \ottnt{H}  \models   \Gamma_{{\mathrm{1}}}  \sqcap  \Gamma_{{\mathrm{2}}}  $ and $ \Gamma_{{\mathrm{2}}}  \vdash  \ottnt{a}  :^{ \ell }  \ottnt{A} $ and $l \neq \top$, then either $\ottnt{a}$ is a value or there exists $\ottnt{H'}, \Gamma'_{{\mathrm{2}}}, \ottnt{a'}$ such that:
\begin{itemize}
\item $ [  \ottnt{H}  ]  \ottnt{a}  \Longrightarrow^{ \ell }_{ \ottnt{S} } [  \ottnt{H'}  ]  \ottnt{a'} $
\item $ \ottnt{H'}  \models   \Gamma_{{\mathrm{1}}}  \sqcap  \Gamma'_{{\mathrm{2}}}  $
\item $ \Gamma'_{{\mathrm{2}}}  \vdash  \ottnt{a'}  :^{ \ell }  \ottnt{A} $
\end{itemize} 
\end{lemma}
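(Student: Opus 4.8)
The statement is the exact dependency-calculus counterpart of Lemma~\ref{heaphelper}, obtained by the translation $(+,\cdot,0,1,<:)\rightsquigarrow(\sqcap,\sqcup,\top,\bot,\sqsubseteq)$, so the proof mirrors that of Lemma~\ref{heaphelper} almost line for line, now appealing to the dependency-side results instead of their usage-side analogues (Lemma~\ref{DSSplitP} for Lemma~\ref{BLSSplitP}, Lemma~\ref{DSMultP} for Lemma~\ref{BLSMultP}, etc.). I would proceed by induction on the typing derivation $\Gamma_{{\mathrm{2}}}\vdash\ottnt{a}:^{\ell}\ottnt{A}$, case-splitting on its last rule (one of the rules in Figure~\ref{TypeSystemSimpleD}, together with \rref{ST-UnitD,ST-LetUnitD,ST-Inj1D,ST-Inj2D,ST-CaseD,ST-SubRD}). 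The introduction forms \rref{ST-LamD,ST-PairD,ST-UnitD,ST-Inj1D,ST-Inj2D} are immediate, since the terms they type are values.

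The heart of the argument is the \rref{ST-VarD} case, exactly as in Lemma~\ref{heaphelper}. Here $\ottnt{a}$ is a variable $\ottmv{x}$; I would invert $\ottnt{H}\models\Gamma_{{\mathrm{1}}}\sqcap\Gamma_{{\mathrm{2}}}$ (using \rref{HeapCompat-Cons}) to recover that $\ottnt{H}$ contains a binding $\ottmv{x}\overset{\ottnt{m}}{\mapsto}\ottnt{a}$ whose label $\ottnt{m}$ is the meet, with some residual, of the levels at which $\ottmv{x}$ is recorded in $\Gamma_{{\mathrm{1}}}$ and $\Gamma_{{\mathrm{2}}}$, and whose defining term is typed by a compatible context. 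I would then split that typing context with Lemma~\ref{DSSplitP} to isolate the portion needed to re-typecheck $\ottnt{a}$ at the residual level, fire \rref{HeapStep-Var}, and rebuild the compatibility and typing facts for the updated heap (the heap label only drops to $\ottnt{m}\sqcap\ell$, which is $\sqsubseteq\ottnt{m}$, so \rref{HeapCompat-Cons} is restored). The hypothesis $\ell\neq\top$ is exactly what licenses \rref{HeapStep-Var}, whose side condition $\ottnt{q}\neq\top$ forbids a look-up at the top level.

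For the elimination forms \rref{ST-AppD,ST-LetUnitD,ST-LetPairD,ST-CaseD} I would distinguish whether the principal subterm is a value. If it steps, I apply the IH — enlarging the support set by the free variables of the bodies/argument, as in \rref{HeapStep-AppL} — and then close under the corresponding congruence step rule. If it is a value, I invert its typing to expose its shape, use \rref{HeapCompat-Cons} to extend $\ottnt{H}$ with the appropriate binding(s) at weight $\ell$ (no \rref{HeapStep-Discard} bookkeeping is needed here, because in LDC($\mathcal{L}$) the principal subterm is already checked at level $\ell$ rather than at $\ell\sqcup\ell_{{\mathrm{0}}}$, unlike the $\ottnt{q}\cdot\ottnt{q_{{\mathrm{0}}}}$ massaging in the usage case), and re-typecheck the result, possibly after one application of \rref{ST-SubLD}. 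The two subsumption cases are routine: for \rref{ST-SubLD}, since $\Gamma_{{\mathrm{2}}}\sqsubseteq\Gamma'_{{\mathrm{2}}}$ gives $\Gamma_{{\mathrm{1}}}\sqcap\Gamma_{{\mathrm{2}}}=\Gamma_{{\mathrm{1}}}\sqcap\Gamma_{{\mathrm{2}}}\sqcap\Gamma'_{{\mathrm{2}}}$ by idempotence of $\sqcap$, I reapply the IH with first context $\Gamma_{{\mathrm{1}}}\sqcap\Gamma_{{\mathrm{2}}}$ and second context $\Gamma'_{{\mathrm{2}}}$, then recover the claim with \rref{ST-SubLD}; for \rref{ST-SubRD}, I apply the IH at $\ell$ and raise the level of the step to $\ell'$ with \rref{HeapStep-Discard} (and the typing with \rref{ST-SubRD}).

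\textbf{Main obstacle.} The only genuinely delicate step is the \rref{ST-VarD} case: getting the inversion of the compatibility relation, the splitting of the defining term's context, and the \rref{HeapStep-Var} side conditions to line up so that the updated heap is again compatible with the residual context. Everything else is a mechanical transcription of the proof of Lemma~\ref{heaphelper} with lattice operations in place of semiring operations, and in fact the let-/case-elimination cases are slightly simpler here because no discard-step is needed to reconcile the level at which a scrutinee is reduced with the level at which it is typed.
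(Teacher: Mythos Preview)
Your proposal is correct and takes exactly the same approach as the paper: the paper's proof consists of the single line ``By induction on $\Gamma_{2}\vdash a:^{\ell}A$. Follow lemma~\ref{heaphelper},'' and your plan is precisely a careful unfolding of that, correctly noting the simplifications that arise in the lattice setting (in particular that $\ell_{0}\sqsubseteq\bot$ forces $\ell_{0}=\bot$, so the \rref{HeapStep-Discard} maneuver used in the elimination-left cases of Lemma~\ref{heaphelper} becomes vacuous).
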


\begin{proof}
By induction on $ \Gamma_{{\mathrm{2}}}  \vdash  \ottnt{a}  :^{ \ell }  \ottnt{A} $. Follow lemma \ref{heaphelper}.
\end{proof}

%---------------------------------------------------------------------------------------------------

\begin{theorem}[Soundness (Theorem \ref{heapsound})]
If $ \ottnt{H}  \models  \Gamma $ and $ \Gamma  \vdash  \ottnt{a}  :^{ \ell }  \ottnt{A} $ and $l \neq \top$, then either $\ottnt{a}$ is a value or there exists $\ottnt{H'}, \Gamma', \ottnt{a'}$ such that $ [  \ottnt{H}  ]  \ottnt{a}  \Longrightarrow^{ \ell }_{ \ottnt{S} } [  \ottnt{H'}  ]  \ottnt{a'} $ and $ \ottnt{H'}  \models  \Gamma' $ and $ \Gamma'  \vdash  \ottnt{a'}  :^{ \ell }  \ottnt{A} $.
\end{theorem}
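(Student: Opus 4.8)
The plan is to reduce the theorem, exactly as in the usage-analysis case, to the strengthened helper Lemma~\ref{heaphelperD}, and then to prove that helper by induction on typing. The helper says that if $H \models \Gamma_1 \sqcap \Gamma_2$ and $\Gamma_2 \vdash a :^\ell A$ with $\ell \neq \top$, then either $a$ is a value or there are $H', \Gamma_2', a'$ with $[H]\,a \Longrightarrow^\ell_S [H']\,a'$, $H' \models \Gamma_1 \sqcap \Gamma_2'$, and $\Gamma_2' \vdash a' :^\ell A$; the spare component $\Gamma_1$ is carried through untouched and records the access permissions that the current redex does not consume itself but that its residuals will need. Granting the helper, the theorem follows by instantiating it at $\Gamma_1 := \top \sqcup \Gamma$ (the context obtained from $\Gamma$ by raising every grade to $\top$, i.e.\ the qualitative analogue of $0 \cdot \Gamma$) and $\Gamma_2 := \Gamma$: since $\top \sqcap \ell = \ell$ pointwise, $\Gamma_1 \sqcap \Gamma_2 = \Gamma$, so the hypothesis $H \models \Gamma$ is precisely $H \models \Gamma_1 \sqcap \Gamma_2$, and taking $\Gamma' := \Gamma_1 \sqcap \Gamma_2'$ we get $\Gamma' \vdash a' :^\ell A$ from $\Gamma_2' \vdash a' :^\ell A$ by weakening (Lemma~\ref{DSWeak}), because the columns of $\Gamma_1$ absent from $\Gamma_2'$ carry the grade $\top$.

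I would prove the helper by induction on the derivation of $\Gamma_2 \vdash a :^\ell A$, transcribing the proof of its usage counterpart (Lemma~\ref{heaphelper}) with addition replaced by meet, multiplication by join, $0$ and $1$ by $\top$ and $\bot$, and the semiring preorder by the lattice order. In the variable case (\rref{ST-VarD}) I would invert $H \models \Gamma_1 \sqcap \Gamma_2$ to locate the binding $x \overset{m}{\mapsto} a_0$ of $H$ together with a typing $\Gamma_0 \vdash a_0 :^m A$, split $\Gamma_0$ across the $\Gamma_1/\Gamma_2$ decomposition using the splitting lemma (Lemma~\ref{DSSplit}, here immediate from idempotence of $\sqcap$), and fire \rref{HeapStep-Var} — legal because $\ell \neq \top$ — re-establishing compatibility of the updated heap against the updated context; the qualitative reading makes this step lighter than its usage analogue, since a look-up removes nothing from the heap. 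In the elimination cases (\rref{ST-AppD,ST-LetPairD,ST-LetUnitD,ST-CaseD}) the function or scrutinee is typed at the very label $\ell$, so the induction hypothesis still applies at a label $\neq \top$; the congruence subcase enlarges the support set $S$ by the free variables of the remaining subterms (so later $\beta$-steps cannot capture them) and reassembles the term under the same rule, while the $\beta$ subcase inverts the value's typing to recover the needed premises, extends $H$ with fresh bindings at the weights dictated by the heap $\beta$-rules ($\ell \sqcup m$ for $\Pi$/$\Sigma$ eliminations, $\ell$ otherwise), and re-establishes compatibility via \rref{HeapCompat-Cons}. The subsumption cases (\rref{ST-SubLD,ST-SubRD}) are handled as before: for \rref{ST-SubLD} the slack between $\Gamma_2$ and the stronger context is absorbed into $\Gamma_1$ using $\Gamma_2 = \Gamma_2 \sqcap \Gamma_2'$ when $\Gamma_2 \sqsubseteq \Gamma_2'$, and for \rref{ST-SubRD} the step label is raised with \rref{HeapStep-Discard} before re-typing the residual.

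The main obstacle I anticipate is the variable case, the sole point where heap look-up, the inductive structure of the compatibility relation, and the stepping judgment must all be reconciled; every other case is routine bookkeeping. Fortunately this case is genuinely simpler here than in the quantitative development, since multiplication and splitting collapse under idempotence of $\sqcap$ and no permission is ever subtracted, so re-establishing $H' \models \Gamma_1 \sqcap \Gamma_2'$ amounts to observing that a look-up leaves the binding's weight in a state still bounded by the updated context. The only other point requiring attention is the support-set threading in the congruence subcases, which is exactly what the freshness side conditions of the $\beta$-rules and Lemma~\ref{irrel} are designed to support, so no idea beyond those already deployed for the usage version is needed.
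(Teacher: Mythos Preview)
Your proposal is correct and matches the paper's approach exactly: the paper reduces to Lemma~\ref{heaphelperD} with $\Gamma_1 := \top \sqcup \Gamma$ and $\Gamma_2 := \Gamma$, and proves that helper by induction on typing, ``following Lemma~\ref{heaphelper}'' (i.e.\ transcribing the usage version under the lattice dictionary, just as you describe). One small simplification: you don't need weakening at the end, since $\Gamma_1$ has all grades $\top$ and hence $\Gamma_1 \sqcap \Gamma_2' = \Gamma_2'$ (after padding), so $\Gamma' := \Gamma_2'$ works directly.
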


\begin{proof}
Use lemma \ref{heaphelperD} with $\Gamma_{{\mathrm{1}}} :=   \top   \sqcup  \Gamma $ and $\Gamma_{{\mathrm{2}}} := \Gamma$.
\end{proof}

%---------------------------------------------------------------------------------------------------

\begin{corollary}[No Usage (Corollary \ref{nonint})] \label{nonintprf}
In LDC($ \mathcal{Q}_{\mathbb{N} } $): Let $  \emptyset   \vdash  \ottnt{f}  :^{  1  }   {}^{  0  }\!  \ottnt{A}  \to  \ottnt{A}  $. Then, for any $  \emptyset   \vdash  \ottnt{a_{{\mathrm{1}}}}  :^{  0  }  \ottnt{A} $ and $  \emptyset   \vdash  \ottnt{a_{{\mathrm{2}}}}  :^{  0  }  \ottnt{A} $, the terms $ \ottnt{f}  \:  \ottnt{a_{{\mathrm{1}}}} ^{  0  } $ and $ \ottnt{f}  \:  \ottnt{a_{{\mathrm{2}}}} ^{  0  } $ have the same operational behavior. 
\end{corollary}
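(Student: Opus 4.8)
The plan is to run the weighted-heap semantics of Section~\ref{heapsimple} on the two applications and show the reductions proceed in lockstep. Since $f$, $a_1$, $a_2$ are closed, we have $ \emptyset  \vdash  \ottnt{f}  \:  \ottnt{a_i} ^{  0  }  :^{  1  }  \ottnt{A} $ by \rref{ST-App} (with $q=1$, $r=0$, so $q\cdot r = 0$), and $ \emptyset  \models   \emptyset $, so Theorem~\ref{heapsound} applies and can be iterated: starting from $[  \emptyset  ]\, \ottnt{f}  \:  \ottnt{a_i} ^{  0  }$ the grade-$1$ heap reduction never gets stuck, hence either diverges or reaches a value.

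First I would isolate the function position. While $f$ is not a value, the only applicable rule is \rref{HeapStep-AppL}, whose premise reduces $f$ and whose action does not depend on the frozen argument $a_i$ (which only enters the support set, affecting the choice of fresh names). So either the reduction of $f$ diverges --- in which case $ \ottnt{f}  \:  \ottnt{a_{{\mathrm{1}}}} ^{  0  } $ and $ \ottnt{f}  \:  \ottnt{a_{{\mathrm{2}}}} ^{  0  } $ both diverge and we are done --- or, by progress and inversion on the canonical form of type $ {}^{  0  }\!  \ottnt{A}  \to  \ottnt{A} $, after equally many steps both configurations reach $[H_{0}]\, (\lambda^{0} x.\, b)\, a_i^{0}$ with the \emph{same} $H_{0}$ and $b$ (up to $\alpha$-renaming of fresh names). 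Applying \rref{HeapStep-AppBeta} then gives $[H_{0}]\, (\lambda^{0} x.\, b)\, a_i^{0} \Longrightarrow^{1} [\,H_{0},\, x \overset{1\cdot 0}{\mapsto} a_i\,]\, b$, so the two runs now agree on everything except the definition bound to $x$, whose weight is $1\cdot 0 = 0$.

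Now I would push the lockstep argument through. Because $0 = 1 + q_0$ has no solution in either $ \mathbb{N}_{=} $ or $ \mathbb{N}_{\geq} $, the side condition $\neg(\exists q_0,\, r = q + q_0)$ of Lemmas~\ref{HeapUnchanged} and~\ref{irrel} holds with $r = 0$, $q = 1$. By the Unchanged lemma the weight of $x$ stays $0$ along any reduction, and by the Irrelevant lemma every grade-$1$ step of $[\,H_{0}, x \overset{0}{\mapsto} a_1\,]\, b$ is matched, step for step, by a grade-$1$ step of $[\,H_{0}, x \overset{0}{\mapsto} a_2\,]\, b$ landing in a configuration that again differs only in the $0$-weighted definition of $x$ (and symmetrically). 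An induction on the length of the reduction shows the two sequences have the same length: both diverge, or both terminate after equally many steps at $[\,H', x \overset{0}{\mapsto} a_1\,]\, v$ and $[\,H', x \overset{0}{\mapsto} a_2\,]\, v$ with the same residual $v$ and $H'$ (up to the contents of $0$-weighted bindings). This is the claimed sameness of operational behaviour; to phrase the conclusion for the plain relation $\leadsto$, the Similarity lemma (Lemma~\ref{HeapSim}) transports each heap step, and the fact that the $0$-weighted binding of $x$ is never looked up keeps the only possible discrepancy confined to $0$-graded subterms.

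The main obstacle I expect is the bookkeeping in the second and third paragraphs: pinning down that the function-position reduction is literally identical across the two runs (so $H_0$ and $b$ coincide, not merely up to $\alpha$-equivalence, which needs care with support sets), and stating cleanly the sense in which the two final values are "the same" --- equal modulo the contents of $0$-weighted heap bindings --- so that the induction actually closes and, if desired, transfers to $\leadsto$ via Lemma~\ref{HeapSim}. The soundness iteration, the canonical-forms step, and checking the arithmetic side condition are all routine.
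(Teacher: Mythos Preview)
Your proposal is correct and follows essentially the same approach as the paper's proof: reduce the function position first (independently of the argument), apply \rref{HeapStep-AppBeta} to load the argument at weight $0$, then invoke the Irrelevant lemma (Lemma~\ref{irrel}) to show the two heap reductions proceed in lockstep. You are in fact more careful than the paper's own argument in justifying why the function-position reductions coincide and in checking the side condition $\neg(\exists q_0,\,0 = 1 + q_0)$; the paper simply asserts the existence of the common $H$ and $b$.
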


\begin{proof}
To see why, we consider the reduction of $ \ottnt{f}  \:  \ottnt{a_{{\mathrm{1}}}} ^{  0  } $ and $ \ottnt{f}  \:  \ottnt{a_{{\mathrm{2}}}} ^{  0  } $. 

Let $ [  \ottnt{H}  ]  \ottnt{a}  \Longrightarrow^{ \ottnt{q} }_{ \ottnt{j} } [  \ottnt{H'}  ]  \ottnt{a'} $ denote a reduction of $\ottnt{j}$ steps where $\ottnt{H}$ and $\ottnt{a}$ are the initial heap and term, $\ottnt{H'}$ and $\ottnt{a'}$ are the final heap and term, and $\ottnt{q}$ is the label at which the reduction takes place. 

For some $\ottnt{j}$, $\ottnt{H}$ and $\ottnt{b}$, we have, $ [   \emptyset   ]   \ottnt{f}  \:  \ottnt{a_{{\mathrm{1}}}} ^{  0  }   \Longrightarrow^{  1  }_{ \ottnt{j} } [  \ottnt{H}  ]    (   \lambda^{  0  }  \ottmv{x}  .  \ottnt{b}   )   \:  \ottnt{a_{{\mathrm{1}}}} ^{  0  }  $ and  $ [   \emptyset   ]   \ottnt{f}  \:  \ottnt{a_{{\mathrm{2}}}} ^{  0  }   \Longrightarrow^{  1  }_{ \ottnt{j} } [  \ottnt{H}  ]    (   \lambda^{  0  }  \ottmv{x}  .  \ottnt{b}   )   \:  \ottnt{a_{{\mathrm{2}}}} ^{  0  }  $. 

Then, $ [  \ottnt{H}  ]    (   \lambda^{  0  }  \ottmv{x}  .  \ottnt{b}   )   \:  \ottnt{a_{{\mathrm{1}}}} ^{  0  }   \Longrightarrow^{  1  }_{ \ottnt{S} } [   \ottnt{H}  ,   \ottmv{x}  \overset{  0  }{\mapsto}  \ottnt{a_{{\mathrm{1}}}}    ]  \ottnt{b} $ and $ [  \ottnt{H}  ]    (   \lambda^{  0  }  \ottmv{x}  .  \ottnt{b}   )   \:  \ottnt{a_{{\mathrm{2}}}} ^{  0  }   \Longrightarrow^{  1  }_{ \ottnt{S} } [   \ottnt{H}  ,   \ottmv{x}  \overset{  0  }{\mapsto}  \ottnt{a_{{\mathrm{2}}}}    ]  \ottnt{b} $ (for $ \ottmv{x}  \: \textit{fresh} $).

But, if $ [   \ottnt{H}  ,   \ottmv{x}  \overset{  0  }{\mapsto}  \ottnt{a_{{\mathrm{1}}}}    ]  \ottnt{b}  \Longrightarrow^{  1  }_{ \ottnt{k} } [     \ottnt{H'}  ,   \ottmv{x}  \overset{  0  }{\mapsto}  \ottnt{a_{{\mathrm{1}}}}     ,  \ottnt{H''}   ]  \ottnt{b'} $, then $ [   \ottnt{H}  ,   \ottmv{x}  \overset{  0  }{\mapsto}  \ottnt{a_{{\mathrm{2}}}}    ]  \ottnt{b}  \Longrightarrow^{  1  }_{ \ottnt{k} } [     \ottnt{H'}  ,   \ottmv{x}  \overset{  0  }{\mapsto}  \ottnt{a_{{\mathrm{2}}}}     ,  \ottnt{H''}   ]  \ottnt{b'} $, for any $\ottnt{k}$ (By Lemma \ref{irrel}).

Therefore, $ \ottnt{f}  \:  \ottnt{a_{{\mathrm{1}}}} ^{  0  } $ and $ \ottnt{f}  \:  \ottnt{a_{{\mathrm{2}}}} ^{  0  } $ have the same operational behavior.

\end{proof}

%---------------------------------------------------------------------------------------------------

\begin{corollary}[Noninterference (Corollary \ref{noninterference})]
In LDC($ \mathcal{L} $): Let $  \emptyset   \vdash  \ottnt{f}  :^{  \mathbf{L}  }   {}^{  \mathbf{H}  }\!  \ottnt{A}  \to  \ottnt{A}  $. Then, for any $  \emptyset   \vdash  \ottnt{a_{{\mathrm{1}}}}  :^{  \mathbf{H}  }  \ottnt{A} $ and $  \emptyset   \vdash  \ottnt{a_{{\mathrm{2}}}}  :^{  \mathbf{H}  }  \ottnt{A} $, the terms $ \ottnt{f}  \:  \ottnt{a_{{\mathrm{1}}}} ^{  \mathbf{H}  } $ and $ \ottnt{f}  \:  \ottnt{a_{{\mathrm{2}}}} ^{  \mathbf{H}  } $ have the same operational behavior. 
\end{corollary}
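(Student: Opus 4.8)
The plan is to mirror the proof of the ``No Usage'' corollary (Corollary~\ref{nonint}), exploiting the fact that LDC($\mathcal{L}$) reuses the heap semantics of LDC($ \mathcal{Q}_{\mathbb{N} } $) with $0$ and $1$ reinterpreted as the top element $\mathbf{H}$ and the bottom element $\mathbf{L}$ (we may assume $\mathbf{L} \sqsubsetneq \mathbf{H}$, so $\mathbf{L} \neq \top$). First I would observe that $ \emptyset  \vdash   \ottnt{f}  \:  \ottnt{a_{\ottmv{i}}} ^{  \mathbf{H}  }   :^{  \mathbf{L}  }  \ottnt{A} $ for $i = 1,2$ by \rref{ST-AppD}, since the argument annotation is $\mathbf{H}$ and $ \mathbf{L}  \: \sqcup \:  \mathbf{H}  = \mathbf{H} $; hence both are closed terms well-typed at level $\mathbf{L}$, and $ \emptyset  \models   \emptyset  $ by \rref{HeapCompat-Empty}. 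I would then run each term in the heap semantics from the empty heap, using the Soundness theorem (Theorem~\ref{heapsound}, in its $\mathcal{L}$-form) to guarantee reduction does not get stuck and the Similarity lemma (Lemma~\ref{HeapSim}) to transfer heap reductions back to the standard operational behavior of the closed terms.

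The core observation is that the reduction of $ \ottnt{f}  \:  \ottnt{a_{\ottmv{i}}} ^{  \mathbf{H}  } $ splits into two phases, and in the first phase $\ottnt{a_{\ottmv{i}}}$ plays no role. Since $ \ottnt{f}  \:  \ottnt{a_{\ottmv{i}}} ^{  \mathbf{H}  } $ is an application headed by $\ottnt{f}$, \rref{HeapStep-AppL} forces $\ottnt{f}$ to reduce first; if $\ottnt{f}$ diverges then both terms diverge identically, and otherwise $\ottnt{f}$ reaches a value $ \lambda^{  \mathbf{H}  }  \ottmv{x}  .  \ottnt{b} $, giving, for a common $\ottnt{j}$, $\ottnt{H}$, $\ottnt{b}$ (the heap and reduced function are the same for $i=1,2$ because $\ottnt{a_{\ottmv{i}}}$ is closed and untouched),
\[
[   \emptyset   ]   \ottnt{f}  \:  \ottnt{a_{\ottmv{i}}} ^{  \mathbf{H}  }   \Longrightarrow^{  \mathbf{L}  }_{ \ottnt{j} } [  \ottnt{H}  ]    (   \lambda^{  \mathbf{H}  }  \ottmv{x}  .  \ottnt{b}   )   \:  \ottnt{a_{\ottmv{i}}} ^{  \mathbf{H}  }  .
\]
Applying \rref{HeapStep-AppBeta} next loads the argument into the heap at weight $ \mathbf{L}  \: \sqcup \:  \mathbf{H}  = \mathbf{H} $, yielding $ [   \ottnt{H}  ,   \ottmv{x}  \overset{  \mathbf{H}  }{\mapsto}  \ottnt{a_{\ottmv{i}}}    ]  \ottnt{b} $. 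From here I would invoke the Irrelevant lemma (Lemma~\ref{irrel}): since $\neg( \mathbf{H}  \sqsubseteq  \mathbf{L} )$, equivalently $\neg(\exists \ottnt{q_{{\mathrm{0}}}},\, \mathbf{H} = \mathbf{L} + \ottnt{q_{{\mathrm{0}}}})$, the definition of $\ottmv{x}$ is never consulted while reducing $\ottnt{b}$ at level $\mathbf{L}$, so the two runs stay in lockstep forever (if diverging) or reach the same term (if terminating), the heaps differing only in the value stored at $\ottmv{x}$; the Unchanged lemma (Lemma~\ref{HeapUnchanged}) shows the weight of $\ottmv{x}$ stays $\mathbf{H}$, so the lemma can be reapplied at every step. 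Combined with Lemma~\ref{HeapSim}, this gives that $ \ottnt{f}  \:  \ottnt{a_{{\mathrm{1}}}} ^{  \mathbf{H}  } $ and $ \ottnt{f}  \:  \ottnt{a_{{\mathrm{2}}}} ^{  \mathbf{H}  } $ have the same operational behavior.

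I expect the only fiddly point to be the bookkeeping of fresh names: Lemma~\ref{irrel} relates reductions whose support sets differ by the free variables of the swapped definitions ($ \ottnt{S}  \, \cup \,   \textit{fv} \:  \ottnt{a_{{\mathrm{1}}}}  $ versus $ \ottnt{S}  \, \cup \,   \textit{fv} \:  \ottnt{a_{{\mathrm{2}}}}  $), so the argument must thread the appropriate support set through each step rather than fixing one globally; this is routine but requires care. The genuinely essential content is already present in the proof of Corollary~\ref{nonint}: the present statement is simply its $\mathbf{L}/\mathbf{H}$ instance, so no fresh reasoning about the type system or the reduction relation is needed, which is why it is recorded as a corollary.
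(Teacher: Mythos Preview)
Your proposal is correct and follows essentially the same approach as the paper, which simply says ``Same as the proof of Corollary~\ref{nonint} with $0$ and $1$ replaced by $\mathbf{H}$ and $\mathbf{L}$ respectively.'' You have spelled out more of the bookkeeping (the typing via \rref{ST-AppD}, the roles of Soundness, Similarity, and Unchanged, and the support-set threading), but the skeleton---reduce $f$ first via \rref{HeapStep-AppL}, load the argument at weight $\mathbf{H}$ via \rref{HeapStep-AppBeta}, then invoke Lemma~\ref{irrel} to keep the two runs in lockstep---is exactly the paper's argument.
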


\begin{proof}
Same as the proof of Corollary \ref{nonintprf} with $ 0 $ and $ 1 $ replaced by $ \mathbf{H} $ and $ \mathbf{L} $ respectively.
\end{proof}

%----------------------------------------------------------------------------------------------------

\begin{corollary}[Affine Usage (Corollary \ref{single})]
In LDC($ \mathbb{N}_{\geq} $): Let $  \emptyset   \vdash  \ottnt{f}  :^{  1  }   {}^{  1  }\!  \ottnt{A}  \to  \ottnt{A}  $. Then, for any $  \emptyset   \vdash  \ottnt{a}  :^{  1  }  \ottnt{A} $, the term $ \ottnt{f}  \:  \ottnt{a} ^{  1  } $ uses $\ottnt{a}$ at most once during reduction.
\end{corollary}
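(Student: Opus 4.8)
The plan is to run the heap-semantics evaluation of $f \, a^{1}$ starting from the empty heap and track the residual weight of the cell that comes to hold $a$. First I would invoke the Soundness theorem (Theorem~\ref{heapsound}): since $\emptyset \models \emptyset$ by \rref{HeapCompat-Empty} and $\emptyset \vdash f \, a^{1} :^{1} A$ with $1 \neq 0$, iterating soundness yields a reduction sequence $[\emptyset] \, f \, a^{1} \Longrightarrow^{1}_{S} \cdots$ that is either infinite or ends at a value, with the running term always well-typed at grade $1$ in a context compatible with the running heap. If the evaluation of the head $f$ diverges, then $a$ is never loaded into the heap and $f \, a^{1}$ uses $a$ zero times; otherwise $f$ reduces to a value, which by canonical forms for the type ${}^{1}\! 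A \to A$ must have the shape $\lambda^{1} x.\, b$. So the evaluation passes through $[H] \, (\lambda^{1} x.\, b) \, a^{1}$ and then fires \rref{HeapStep-AppBeta}, which appends the fresh cell $x \overset{1 \cdot 1}{\mapsto} a$, i.e.\ it binds $a$ at weight $1$ (since $1 \cdot 1 = 1$ in $\mathbb{N}_{\geq}$). As $a$ is closed, $\textit{fv}(a) = \emptyset$, so no copy of $a$ is touched before this step; ``$f \, a^{1}$ uses $a$'' therefore means precisely ``\rref{HeapStep-Var} fires on the cell for $x$''.

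The technical core is a bookkeeping lemma about LDC($\mathbb{N}_{\geq}$): along any reduction $[H_{1}, x \overset{r}{\mapsto} a, H_{2}] \, c \Longrightarrow^{q}_{S} [H'_{1}, x \overset{r'}{\mapsto} a, H'_{2}] \, c'$ with $\lvert H_{1} \rvert = \lvert H'_{1} \rvert$, the residual weight of the cell for $x$ can only change through an application of \rref{HeapStep-Var} to that cell, and when it does the step label satisfies $q \neq 0$ and $r = q + r'$. I would prove this by a routine induction on the derivation of $\Longrightarrow$, in the spirit of Lemma~\ref{HeapUnchanged}: the $\beta$-rules only append fresh cells, \rref{HeapStep-Discard} merely re-labels the step, the congruence rules recurse and apply the induction hypothesis, and \rref{HeapStep-Var} has exactly the stated form. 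In $\mathbb{N}_{\geq}$ the side condition $q \neq 0$ forces $q$ to be a positive natural number, so every dereference of $x$ strips at least one unit from its weight, and no step ever raises it.

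Putting the pieces together: after the single \rref{HeapStep-AppBeta} step the cell for $x$ has weight $1$, so a first dereference brings it to $0$; thereafter the equation $0 = q + r'$ has no solution with $q \neq 0$ in $\mathbb{N}$, so \rref{HeapStep-Var} can never fire on the cell for $x$ again, and by the lemma nothing can restore its weight. Hence the cell for $x$ is dereferenced at most once over the whole evaluation, which is exactly the assertion that $f \, a^{1}$ uses $a$ at most once during reduction. This mirrors the structure of the proof of Corollary~\ref{nonint}, where the argument is loaded at weight $0$ and so is used zero times.

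I expect the main obstacle to be the bookkeeping lemma --- in particular, checking that none of the reduction rules (including the left-congruence and $\beta$-rules elided from the excerpt, and \rref{HeapStep-Discard}) ever increases the residual weight of a pre-existing heap cell or lets a dereference at label $q$ consume less than $q$ --- together with fixing the precise operational reading of ``uses $a$'' and the inversion that isolates the one \rref{HeapStep-AppBeta} step that loads $a$. Once those are settled, the arithmetic over $\mathbb{N}_{\geq}$ is immediate.
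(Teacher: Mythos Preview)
Your proposal is correct and follows essentially the same line as the paper's proof, which is quite terse: reduce $f$ to a $\lambda$, load $a$ into the heap at weight $1$ via \rref{HeapStep-AppBeta}, and observe that a single lookup drops the weight to $0$, after which \rref{HeapStep-Var} cannot fire again. Your bookkeeping lemma makes rigorous what the paper merely asserts; the one refinement you will discover when you work through the \rref{HeapStep-Discard} case is that the outer step label $q$ need not equal the label at which \rref{HeapStep-Var} actually fires (the latter may be larger), so the precise conclusion is $r - r' \geq 1$ rather than $r = q + r'$, but this is exactly the detail you flagged and it does not affect your argument.
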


\begin{proof}
To see why, consider the reduction of $ \ottnt{f}  \:  \ottnt{a} ^{  1  } $.

For some $\ottnt{j}$, $\ottnt{H}$ and $\ottnt{b}$, we have, $ [   \emptyset   ]   \ottnt{f}  \:  \ottnt{a} ^{  1  }   \Longrightarrow^{  1  }_{ \ottnt{j} } [  \ottnt{H}  ]    (   \lambda^{  1  }  \ottmv{x}  .  \ottnt{b}   )   \:  \ottnt{a} ^{  1  }  $.

Then, $ [  \ottnt{H}  ]    (   \lambda^{  1  }  \ottmv{x}  .  \ottnt{b}   )   \:  \ottnt{a} ^{  1  }   \Longrightarrow^{  1  }_{ \ottnt{S} } [   \ottnt{H}  ,   \ottmv{x}  \overset{  1  }{\mapsto}  \ottnt{a}    ]  \ottnt{b} $ (for $ \ottmv{x}  \: \textit{fresh} $).

Now, $\ottnt{b}$ may reduce to a value without ever looking-up for $\ottmv{x}$ or $\ottnt{b}$ may look-up the value of $\ottmv{x}$ exactly once. A single use of $\ottmv{x}$ will change the allowed usage from $ 1 $ to $ 0 $, making it essentially unusable thereafter. Hence, $\ottnt{a}$ cannot be used more than once.
\end{proof}

%---------------------------------------------------------------------------------------------------

%---------------------------------------------------------------------------------------------------
\section{Linearity Analysis in PTS Version of LDC}
%---------------------------------------------------------------------------------------------------

\begin{lemma}[Multiplication] \label{BLDMultP}
If $ \Gamma  \vdash  \ottnt{a}  :^{ \ottnt{q} }  \ottnt{A} $, then $  \ottnt{r_{{\mathrm{0}}}}  \cdot  \Gamma   \vdash  \ottnt{a}  :^{  \ottnt{r_{{\mathrm{0}}}}  \cdot  \ottnt{q}  }  \ottnt{A} $.
\end{lemma}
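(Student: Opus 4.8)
The plan is to prove the Multiplication lemma for the PTS version of LDC by induction on the typing derivation $\Gamma \vdash \ottnt{a} :^{\ottnt{q}} \ottnt{A}$, exactly paralleling the proof already carried out for the simply-typed version (Lemma~\ref{BLSMultP}). The statement to establish is: if $\Gamma \vdash \ottnt{a} :^{\ottnt{q}} \ottnt{A}$, then $\ottnt{r_{{\mathrm{0}}}} \cdot \Gamma \vdash \ottnt{a} :^{\ottnt{r_{{\mathrm{0}}}} \cdot \ottnt{q}} \ottnt{A}$. The only algebraic facts needed are associativity of semiring multiplication, distributivity of multiplication over addition, the identity/annihilator laws for $1$ and $0$, and the fact that the preorder $<:$ respects multiplication on the left; all of these hold for the parametrizing preordered semirings $\mathcal{Q}_{\mathbb{N}}$ (and, by the same token, for $\mathcal{Q}_{\mathbb{N}}^{\omega}$, since scaling by $\ottnt{r_{{\mathrm{0}}}}$ never invokes the $\omega$-restriction in \rref{PTS-LamOmega}).

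First I would dispatch the leaf rules: \rref{PTS-Axiom} and \rref{PTS-Var}, where the context is of the shape $0 \cdot \Gamma_1, \ottmv{x} :^{\ottnt{q}} \ottnt{A}, 0 \cdot \Gamma_2$ (or all-zero in the axiom case); since $\ottnt{r_{{\mathrm{0}}}} \cdot (0 \cdot \Gamma_i) = (\ottnt{r_{{\mathrm{0}}}} \cdot 0) \cdot \Gamma_i = 0 \cdot \Gamma_i$ by associativity and annihilation, and the single occurrence's grade simply becomes $\ottnt{r_{{\mathrm{0}}}} \cdot \ottnt{q}$, these follow by reapplying the same rule. For \rref{PTS-Weak}, the weakened variable sits at grade $0$, which is preserved under scaling, so the IH plus reapplication suffices. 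For the structural elimination and introduction rules — \rref{PTS-Pi}, \rref{PTS-Lam}, \rref{PTS-App}, \rref{PTS-Sigma}, \rref{PTS-Pair}, \rref{PTS-LetPair}, \rref{PTS-Sum}, \rref{PTS-InjOne}, \rref{PTS-Case} — the recipe is uniform: apply the IH to each premise, rewrite the resulting grades using associativity ($\ottnt{r_{{\mathrm{0}}}} \cdot (\ottnt{q} \cdot \ottnt{r}) = (\ottnt{r_{{\mathrm{0}}}} \cdot \ottnt{q}) \cdot \ottnt{r}$, and likewise for the $\ottnt{q_{{\mathrm{0}}}}$-nested products in the $\mathbf{let}$/$\mathbf{case}$ rules) and distributivity ($\ottnt{r_{{\mathrm{0}}}} \cdot (\Gamma_1 + \Gamma_2) = \ottnt{r_{{\mathrm{0}}}} \cdot \Gamma_1 + \ottnt{r_{{\mathrm{0}}}} \cdot \Gamma_2$), and then reapply the rule. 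The side conditions $\ottnt{q_{{\mathrm{0}}}} <: 1$ on the elimination rules are unaffected since they do not mention $\ottnt{q}$. The two subsumption rules \rref{PTS-SubL} and \rref{PTS-SubR} go through by the IH together with the monotonicity of $<:$ under left-multiplication by $\ottnt{r_{{\mathrm{0}}}}$ (for \rref{PTS-SubL} one needs $\Gamma <: \Gamma'$ implies $\ottnt{r_{{\mathrm{0}}}} \cdot \Gamma <: \ottnt{r_{{\mathrm{0}}}} \cdot \Gamma'$, which is the pointwise lift of the semiring axiom).

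The only genuinely new rule relative to the simply-typed proof is \rref{PTS-Conv}, which changes the type $\ottnt{A}$ to a $\beta$-equal type $\ottnt{B}$; here scaling the context and the grade does not touch $\ottnt{A} =_{\beta} \ottnt{B}$ at all, so the case is immediate from the IH and reapplication of \rref{PTS-Conv}. I therefore do not expect any case to be a real obstacle; the mild bookkeeping nuisance is the $\mathbf{let}$/$\mathbf{case}$ rules, where grades of the form $\ottnt{q} \cdot \ottnt{q_{{\mathrm{0}}}}$ and $\ottnt{q} \cdot \ottnt{q_{{\mathrm{0}}}} \cdot \ottnt{r}$ must be re-associated after applying the IH so that the scaled premise grades match the pattern $(\ottnt{r_{{\mathrm{0}}}} \cdot \ottnt{q}) \cdot \ottnt{q_{{\mathrm{0}}}}$ etc. demanded by the conclusion; this is routine, following verbatim the corresponding steps in the proof of Lemma~\ref{BLSMultP}. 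If anything is subtle, it is only making sure that when $\mathcal{Q}_{\mathbb{N}}^{\omega}$ is the parametrizing semiring, the constraint $\ottnt{q} = \omega \Rightarrow \ottnt{r} = \omega$ in \rref{PTS-LamOmega} is preserved — but scaling by $\ottnt{r_{{\mathrm{0}}}}$ either leaves $\ottnt{q}$ non-$\omega$ or sends it to $\omega$ only when $\ottnt{q}$ was already $\omega$ (in which case $\ottnt{r}$ was already $\omega$), so the implication still holds, and the rule reapplies with the conclusion-multiplied-by-$\ottnt{q_{{\mathrm{0}}}}$ adjustment unchanged.
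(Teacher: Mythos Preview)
Your proposal is correct and takes the same approach as the paper, whose proof is literally ``By induction on $\Gamma \vdash a :^q A$. Follow the proof of lemma~\ref{BLSMultP}.'' One caveat: your closing aside about \rref{PTS-LamOmega} is both unnecessary for this lemma (Lemma~\ref{BLDMultP} is stated for LDC($\mathcal{Q}_{\mathbb{N}}$), which uses \rref{PTS-Lam}, not the $\omega$-variant) and factually wrong---scaling by $r_0 = \omega$ \emph{can} send a non-$\omega$ grade $q$ to $\omega$ (e.g.\ $\omega \cdot 1 = \omega$), so the implication $q = \omega \Rightarrow r = \omega$ is not preserved under your proposed ``apply IH then reapply the rule'' scheme; the paper handles that case separately in Lemma~\ref{BLwDMultP} by adjusting the outer multiplier $q_0$ rather than touching the premise grade.
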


\begin{proof}
By induction on $ \Gamma  \vdash  \ottnt{a}  :^{ \ottnt{q} }  \ottnt{A} $. Follow the proof of lemma \ref{BLSMultP}.
\end{proof}

%--------------------------------------------------------------------------------------------------

\begin{lemma}[Factorization] \label{BLDFactP}
If $ \Gamma  \vdash  \ottnt{a}  :^{ \ottnt{q} }  \ottnt{A} $ and $q \neq 0$, then there exists $\Gamma'$ such that $ \Gamma'  \vdash  \ottnt{a}  :^{  1  }  \ottnt{A} $ and $ \Gamma  <:   \ottnt{q}  \cdot  \Gamma'  $.
\end{lemma}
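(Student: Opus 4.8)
The plan is to prove the Factorization Lemma for the PTS version of LDC by induction on the typing derivation $\Gamma \vdash \ottnt{a} :^{\ottnt{q}} \ottnt{A}$, following closely the structure of the proof of its simply-typed counterpart, Lemma~\ref{BLSFactP}. The statement to establish is: if $\Gamma \vdash \ottnt{a} :^{\ottnt{q}} \ottnt{A}$ and $q \neq 0$, then there exists $\Gamma'$ with $\Gamma' \vdash \ottnt{a} :^{1} \ottnt{A}$ and $\Gamma <: \ottnt{q} \cdot \Gamma'$. In each case I would invoke the induction hypothesis on the premises, reassemble a ``unit-world'' derivation using the corresponding PTS typing rule, and verify the context inequality using associativity and distributivity of the semiring together with the fact that $q \neq 0$ lets one cancel $\ottnt{q}$ from inequalities of the form $\ottnt{q}\cdot\ottnt{r} <: \ottnt{q}\cdot\ottnt{r}'$ (true in both $\mathbb{N}_{=}$ and $\mathbb{N}_{\geq}$). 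I would also use the Multiplication Lemma (Lemma~\ref{BLDMultP}, just proved) to rescale unit-world subderivations back up to the grades demanded by the elimination/introduction rules, and \rref{PTS-SubL} to weaken contexts where the cancellation only gives an inequality on a bound variable's grade.

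The routine cases mirror those already handled in Lemma~\ref{BLSFactP}: \rref{PTS-Var} is immediate by taking $\Gamma'$ to be the same context with the focused grade replaced by $1$; \rref{PTS-Lam} uses the IH on the body, cancels $\ottnt{q}$ from $\ottnt{q}\cdot\ottnt{r} <: \ottnt{q}\cdot\ottnt{r}'$, applies \rref{PTS-SubL}, and closes with \rref{PTS-Lam}; \rref{PTS-App}, \rref{PTS-Pair}, \rref{PTS-LetPair}, \rref{PTS-Case} each split on whether the argument grade $\ottnt{r}$ (resp.\ $\ottnt{q_0}$) is zero, and in the nonzero subcase rescale the argument's unit-world derivation by $\ottnt{r}$ via Lemma~\ref{BLDMultP} before reassembling, while in the zero subcase the argument is typed at $0$ and contributes nothing to the grade vector (using that $\Gamma_1 + \Gamma_2 <: \Gamma_1$ in $\mathbb{N}_{\geq}$ and that the zeroed part has $\overline{0}$ grades in $\mathbb{N}_{=}$). \rref{PTS-Sum}, \rref{PTS-InjOne}, \rref{PTS-Axiom} are direct, and \rref{PTS-SubL}, \rref{PTS-SubR} follow immediately from the IH (with the observation that for \rref{PTS-SubR}, $\ottnt{q}' \neq 0$ forces $\ottnt{q} \neq 0$, and $\ottnt{q} <: \ottnt{q}'$ gives $\ottnt{q}\cdot\Gamma' <: \ottnt{q}'\cdot\Gamma'$).

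The genuinely new wrinkles relative to the simply-typed proof come from the dependent structure, and these are the steps I expect to be the main obstacle. First, \rref{PTS-Weak} must be handled: the IH gives a unit-world derivation for the smaller context, and I would re-apply \rref{PTS-Weak} (which requires a $\vdash_0$ well-formedness premise for the added type, inherited unchanged from the original derivation since it is a zero-world judgment insensitive to grading) and check the context inequality, which holds because the new variable sits at grade $0$ on both sides. Second, and more delicate, is \rref{PTS-Conv}: here the type $\ottnt{A}$ is replaced by a $\beta$-equal type $\ottnt{B}$, and the conversion premise $\ottnt{A} =_{\beta} \ottnt{B}$ is a zero-world/definitional-equality judgment that does not interact with the grade $\ottnt{q}$ at all; so the IH on the typing premise yields $\Gamma' \vdash \ottnt{a} :^{1} \ottnt{A}$, and re-applying \rref{PTS-Conv} with the same (unchanged) equality derivation gives $\Gamma' \vdash \ottnt{a} :^{1} \ottnt{B}$, with the context inequality untouched. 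The subtlety to get right is that all the well-formedness side-conditions attached to the PTS rules (the $\vdash_0$ premises for domains, codomains, and sorts) are themselves zero-world and therefore survive verbatim when we drop to the unit world; I would state this once as an observation and reuse it, rather than re-deriving it in each case. No step requires new machinery beyond the Multiplication Lemma, \rref{PTS-SubL}/\rref{PTS-SubR}, and the cancellation property of $\cdot$ over $<:$ when the left factor is nonzero.
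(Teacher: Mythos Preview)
Your proposal is correct and follows the same approach as the paper, which simply says to proceed by induction on the typing derivation and follow the proof of Lemma~\ref{BLSFactP}. The only rules you omit from your case analysis are the type-formers \rref{PTS-Pi} and \rref{PTS-Sigma}, but these are routine: apply the induction hypothesis to both premises and reassemble, noting that the grade on the bound variable in the body premise is not constrained by the rule, so no cancellation argument is needed there.
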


\begin{proof}
By induction on $ \Gamma  \vdash  \ottnt{a}  :^{ \ottnt{q} }  \ottnt{A} $. Follow the proof of lemma \ref{BLSFactP}.
\end{proof}

%---------------------------------------------------------------------------------------------------

\begin{lemma}[Splitting] \label{BLDSplitP}
If $ \Gamma  \vdash  \ottnt{a}  :^{  \ottnt{q_{{\mathrm{1}}}}  +  \ottnt{q_{{\mathrm{2}}}}  }  \ottnt{A} $, then there exists $\Gamma_{{\mathrm{1}}}$ and $\Gamma_{{\mathrm{2}}}$ such that $ \Gamma_{{\mathrm{1}}}  \vdash  \ottnt{a}  :^{ \ottnt{q_{{\mathrm{1}}}} }  \ottnt{A} $ and $ \Gamma_{{\mathrm{2}}}  \vdash  \ottnt{a}  :^{ \ottnt{q_{{\mathrm{2}}}} }  \ottnt{A} $ and $ \Gamma  =   \Gamma_{{\mathrm{1}}}  +  \Gamma_{{\mathrm{2}}}  $. 
\end{lemma}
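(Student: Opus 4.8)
The plan is to follow the proof of the simply-typed Splitting lemma (Lemma~\ref{BLSSplit}) essentially verbatim: the statement here is identical except that $\Gamma$, $\ottnt{a}$, $\ottnt{A}$ now range over the PTS syntax, and Splitting concerns only the grade vector of the context — it never inspects $\ottnt{a}$ or $\ottnt{A}$ — so none of the type-dependency machinery of the PTS version is involved. The only changes are to invoke the PTS analogues of the auxiliary results, namely Multiplication (Lemma~\ref{BLDMultP}) and Factorization (Lemma~\ref{BLDFactP}), and to use \rref{PTS-SubL} in place of \rref{ST-SubL}.

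First I would dispose of the degenerate case $q_1 + q_2 = 0$. Over every parametrizing semiring $\mathcal{Q}_{\mathbb{N}}$ (and $\mathcal{Q}_{\mathbb{N}}^{\omega}$) a sum is $0$ only when both summands are, so $q_1 = q_2 = 0$; I then take $\Gamma_1 := 0 \cdot \Gamma$ and $\Gamma_2 := \Gamma$. Multiplication gives $0 \cdot \Gamma \vdash \ottnt{a} :^{0} \ottnt{A}$, the hypothesis is $\Gamma \vdash \ottnt{a} :^{0} \ottnt{A}$, and $0 \cdot \Gamma + \Gamma = \Gamma$, so all three conjuncts hold.

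For the main case $q_1 + q_2 \neq 0$, I would apply Factorization to the hypothesis, obtaining a context $\Gamma'$ with $\Gamma' \vdash \ottnt{a} :^{1} \ottnt{A}$ and $\Gamma <: (q_1 + q_2) \cdot \Gamma'$. Because the preorder $<:$ is either discrete ($\mathbb{N}_{=}$) or the descending order on the naturals ($\mathbb{N}_{\geq}$), this inequality lets me write $\Gamma = (q_1 + q_2)\cdot\Gamma' + \Gamma_0$ for a residual context $\Gamma_0$ on the skeleton $\lfloor\Gamma\rfloor$ — the zero-grade context in the $\mathbb{N}_{=}$ case. Applying Multiplication twice to $\Gamma' \vdash \ottnt{a} :^{1} \ottnt{A}$ gives $q_1 \cdot \Gamma' \vdash \ottnt{a} :^{q_1} \ottnt{A}$ and $q_2 \cdot \Gamma' \vdash \ottnt{a} :^{q_2} \ottnt{A}$; absorbing $\Gamma_0$ into the second derivation via \rref{PTS-SubL} — legitimate since $q_2 \cdot \Gamma' + \Gamma_0 <: q_2 \cdot \Gamma'$ (reflexivity in $\mathbb{N}_{=}$, discarding in $\mathbb{N}_{\geq}$) — yields $q_2 \cdot \Gamma' + \Gamma_0 \vdash \ottnt{a} :^{q_2} \ottnt{A}$. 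Setting $\Gamma_1 := q_1 \cdot \Gamma'$ and $\Gamma_2 := q_2 \cdot \Gamma' + \Gamma_0$, distributivity of context multiplication over addition closes the argument: $\Gamma_1 + \Gamma_2 = (q_1 + q_2)\cdot\Gamma' + \Gamma_0 = \Gamma$.

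The step needing the most attention — though it is the same mild subtlety already handled in the simply-typed proof rather than anything new — is the passage from the \emph{inequality} $\Gamma <: (q_1+q_2)\cdot\Gamma'$ delivered by Factorization to the required \emph{equality} $\Gamma = \Gamma_1 + \Gamma_2$. This turns on two properties of each parametrizing semiring: that the gap in $\Gamma <: (q_1+q_2)\cdot\Gamma'$ can be named by an explicit $\Gamma_0$ with $\Gamma = (q_1+q_2)\cdot\Gamma' + \Gamma_0$, and that $q_2 \cdot \Gamma' + \Gamma_0 <: q_2 \cdot \Gamma'$ so that \rref{PTS-SubL} applies. Both are immediate for $\mathbb{N}_{=}$ (the order is equality, so $\Gamma_0$ is forced to be the zero-grade context) and for $\mathbb{N}_{\geq}$ (the descending order both admits such a residual and permits discarding it). Everything else is routine bookkeeping, copied from the $\lambda$-calculus case.
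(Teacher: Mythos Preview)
Your proposal is correct and matches the paper's proof essentially step for step: both handle the $q_1+q_2=0$ case by taking $\Gamma_1 := 0\cdot\Gamma$, $\Gamma_2 := \Gamma$, and otherwise invoke Factorization (Lemma~\ref{BLDFactP}) to get $\Gamma'$, name the residual $\Gamma_0$ with $\Gamma = (q_1+q_2)\cdot\Gamma' + \Gamma_0$, apply Multiplication (Lemma~\ref{BLDMultP}) twice, and absorb $\Gamma_0$ via \rref{PTS-SubL}. Your discussion of the inequality-to-equality passage is a bit more explicit than the paper's, but the argument is the same.
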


\begin{proof}
If $ \ottnt{q_{{\mathrm{1}}}}  +  \ottnt{q_{{\mathrm{2}}}}  = 0$, then $\Gamma_{{\mathrm{1}}} :=   0   \cdot  \Gamma $ and $\Gamma_{{\mathrm{2}}} := \Gamma$.\\
Otherwise, by Lemma \ref{BLDFactP}, $\exists \Gamma'$ such that $ \Gamma'  \vdash  \ottnt{a}  :^{  1  }  \ottnt{A} $ and $ \Gamma  <:    (   \ottnt{q_{{\mathrm{1}}}}  +  \ottnt{q_{{\mathrm{2}}}}   )   \cdot  \Gamma'  $.\\
Then, $\Gamma =    (   \ottnt{q_{{\mathrm{1}}}}  +  \ottnt{q_{{\mathrm{2}}}}   )   \cdot  \Gamma'   +  \Gamma_{{\mathrm{0}}} $ for some $\Gamma_{{\mathrm{0}}}$ (in case of $ \mathbb{N}_{=} $, $ \overline{ \Gamma_{{\mathrm{0}}} }  = \overline{0}$).\\
Now, by Lemma \ref{BLDMultP}, $  \ottnt{q_{{\mathrm{1}}}}  \cdot  \Gamma'   \vdash  \ottnt{a}  :^{ \ottnt{q_{{\mathrm{1}}}} }  \ottnt{A} $ and $  \ottnt{q_{{\mathrm{2}}}}  \cdot  \Gamma'   \vdash  \ottnt{a}  :^{ \ottnt{q_{{\mathrm{2}}}} }  \ottnt{A} $.\\
By \rref{PTS-SubL}, $   \ottnt{q_{{\mathrm{2}}}}  \cdot  \Gamma'   +  \Gamma_{{\mathrm{0}}}   \vdash  \ottnt{a}  :^{ \ottnt{q_{{\mathrm{2}}}} }  \ottnt{A} $.\\
The lemma follows by setting $\Gamma_{{\mathrm{1}}} :=  \ottnt{q_{{\mathrm{1}}}}  \cdot  \Gamma' $ and $\Gamma_{{\mathrm{2}}} :=   \ottnt{q_{{\mathrm{2}}}}  \cdot  \Gamma'   +  \Gamma_{{\mathrm{0}}} $.
\end{proof}

%--------------------------------------------------------------------------------------------------

\begin{lemma}[Weakening] \label{BLDWeakP}
If $  \Gamma_{{\mathrm{1}}}  ,  \Gamma_{{\mathrm{2}}}   \vdash  \ottnt{a}  :^{ \ottnt{q} }  \ottnt{A} $ and $ \Delta_{{\mathrm{1}}}  \vdash_{0}  \ottnt{C}  :   \ottmv{s}  $ and $  \lfloor  \Gamma_{{\mathrm{1}}}  \rfloor   =  \Delta_{{\mathrm{1}}} $, then $    \Gamma_{{\mathrm{1}}}  ,   \ottmv{z}  :^{  0  }  \ottnt{C}     ,  \Gamma_{{\mathrm{2}}}   \vdash  \ottnt{a}  :^{ \ottnt{q} }  \ottnt{A} $.
\end{lemma}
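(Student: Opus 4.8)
The plan is to prove this by induction on the derivation of $\Gamma_1, \Gamma_2 \vdash a :^q A$, exactly as in the simply-typed case (Lemma \ref{BLSWeakP}), the only new ingredient being that in a PTS the inserted assumption $z :^0 C$ must be kept well-formed, which is precisely what the extra hypothesis $\Delta_1 \vdash_{0} C : s$ (together with $\lfloor\Gamma_1\rfloor = \Delta_1$) supplies. At each step I insert $z :^0 C$ at the boundary between $\Gamma_1$ and $\Gamma_2$ in every context occurring in the derivation and re-apply the same rule. The hypothesis $\Delta_1 \vdash_{0} C : s$ and the skeleton equation $\lfloor\Gamma_1\rfloor = \Delta_1$ are threaded through the induction unchanged; note that whenever a rule splits the context additively, the summands share the underlying skeleton $\Delta_1$, so the induction hypothesis applies to each premise with the same $C$, and the inserted grades recombine because $0 + 0 = 0$ (and $0 <: 0$ for the subsumption rules).

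First I would dispatch the leaf rules. For \rref{PTS-Axiom} the context is a zero context, so we simply pad it with $z :^0 C$, invoking $\Delta_1 \vdash_{0} C : s$ for well-formedness. For \rref{PTS-Var} I case on whether the looked-up variable lies in $\Gamma_1$ or in $\Gamma_2$: in either case the zero-padding on the non-binding part of the context is preserved after inserting $z :^0 C$, the type-formation premise of the rule is weakened by the induction hypothesis (applied at grade $0$), and the new assumption is justified by $\Delta_1 \vdash_{0} C : s$. \rref{PTS-Weak} is immediate, since two weakenings commute.

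Next come the structural rules. For the binder rules — \rref{PTS-Pi}, \rref{PTS-Lam}, \rref{PTS-Sigma}, \rref{PTS-Sum}, \rref{PTS-LetPair}, \rref{PTS-InjOne}, \rref{PTS-Case} — I choose $z$ fresh (avoiding $a$, $A$, and both contexts), regard the bound variable(s) of each premise as part of the suffix $\Gamma_2$, apply the induction hypothesis to each premise (the prefix $\Gamma_1$ and hence $\Delta_1$ is unchanged), and re-apply the rule. The multiplicative rules \rref{PTS-App}, \rref{PTS-Pair} and the subsumption rules \rref{PTS-SubL}, \rref{PTS-SubR} go through the same way, using $0 + 0 = 0$ and $0 <: 0$. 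For \rref{PTS-Conv}, the $\beta$-equality premise $A =_{\beta} B$ is context-free, so only the type-checking premises require the induction hypothesis.

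The main obstacle I anticipate is bookkeeping rather than a genuine difficulty: one must verify that inserting $z :^0 C$ never disturbs the implicit context-validity side conditions that the PTS rules carry, which forces weakening to propagate "all the way down" to the level-$0$ type-formation judgments appearing as premises (notably in \rref{PTS-Var} and \rref{PTS-Conv}). Because the statement quantifies over all grades $q$, including $q = 0$, these auxiliary applications are already covered by the induction hypothesis, so no separate lemma or mutual induction is needed — but care is needed to state the freshness condition on $z$ precisely and to keep $\lfloor\Gamma_1\rfloor = \Delta_1$ (and the matching-skeleton conditions for additively split contexts) correctly aligned at every recursive call.
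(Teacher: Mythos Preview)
Your proposal is correct and takes essentially the same approach as the paper: induction on the typing derivation $\Gamma_1,\Gamma_2 \vdash a :^{q} A$. The paper's own proof is simply the one-line ``By induction on $\Gamma_1,\Gamma_2 \vdash a :^{q} A$'', so you have in fact supplied a detailed expansion of exactly what the paper leaves implicit.
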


\begin{proof}
By induction on $  \Gamma_{{\mathrm{1}}}  ,  \Gamma_{{\mathrm{2}}}   \vdash  \ottnt{a}  :^{ \ottnt{q} }  \ottnt{A} $.
\end{proof}

%---------------------------------------------------------------------------------------------------

\begin{lemma}[Substitution (Lemma \ref{DSubst})] \label{BLDSubstP}
If $    \Gamma_{{\mathrm{1}}}  ,   \ottmv{z}  :^{ \ottnt{r_{{\mathrm{0}}}} }  \ottnt{C}     ,  \Gamma_{{\mathrm{2}}}   \vdash  \ottnt{a}  :^{ \ottnt{q} }  \ottnt{A} $ and $ \Gamma  \vdash  \ottnt{c}  :^{ \ottnt{r_{{\mathrm{0}}}} }  \ottnt{C} $ and $  \lfloor  \Gamma_{{\mathrm{1}}}  \rfloor   =   \lfloor  \Gamma  \rfloor  $, then $     \Gamma_{{\mathrm{1}}}  +  \Gamma    ,  \Gamma_{{\mathrm{2}}}   \{  \ottnt{c}  /  \ottmv{z}  \}   \vdash   \ottnt{a}  \{  \ottnt{c}  /  \ottmv{z}  \}   :^{ \ottnt{q} }   \ottnt{A}  \{  \ottnt{c}  /  \ottmv{z}  \}  $. 
\end{lemma}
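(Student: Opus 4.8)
The plan is to prove the statement by induction on the derivation of $\Gamma_1, z :^{r_0} C, \Gamma_2 \vdash a :^q A$, closely following the proof of the simply-typed Substitution lemma (Lemma~\ref{BLSSubst}), but accounting for the two new features of the PTS version: (i) substitution now also propagates into the classifying type and into the types recorded in $\Gamma_2$, which is why the conclusion is stated with $\Gamma_2\{c/z\}$ and $A\{c/z\}$; and (ii) there are new typing rules — the type-formation rules \rref{PTS-Pi,PTS-Sigma,PTS-Sum}, the conversion rule \rref{PTS-Conv}, the weakening rule \rref{PTS-Weak}, and the axiom/variable rules \rref{PTS-Axiom,PTS-Var} — whose type-level premises are checked in the ``$0$-world'' judgement $\vdash_0$. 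The auxiliary facts I would use throughout are the Multiplication lemma (Lemma~\ref{BLDMultP}), the Splitting lemma (Lemma~\ref{BLDSplitP}) and the Weakening lemma (Lemma~\ref{BLDWeakP}) for the PTS calculus, together with one small standard fact established separately by induction on $=_\beta$: if $A =_\beta B$ then $A\{c/z\} =_\beta B\{c/z\}$, i.e.\ $\beta$-equality is closed under substitution.

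For the structural rules — \rref{PTS-Pi,PTS-Lam,PTS-App,PTS-Sigma,PTS-Pair,PTS-LetPair,PTS-Sum,PTS-InjOne,PTS-Case} — the argument is the familiar one: the assumption $z :^{r_0} C$ is split across the premises according to how the conclusion's context decomposes, so I would first invoke Splitting to obtain contexts $\Gamma^{(i)}$ and grades with $\Gamma = \Gamma^{(1)} + \cdots$, apply the induction hypothesis to each premise, and reassemble with the same rule. The new wrinkle is that every premise that type-checks a subexpression at grade $0$ in the erased context $\lfloor\Gamma\rfloor$ (the $\vdash_0$ premises classifying bound-variable types, sorts, etc.) also needs $c$ substituted into it; for those I would apply the induction hypothesis with $r_0 := 0$, supplying the required $0$-graded derivation via Multiplication ($0\cdot\Gamma \vdash c :^0 C$), and use that $\Gamma_1 + 0\cdot\Gamma = \Gamma_1$ (since $\lfloor\Gamma_1\rfloor = \lfloor\Gamma\rfloor$), so the $\vdash_0$ premise is preserved on the nose. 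For \rref{PTS-Var} there are the same three sub-cases as in the simply-typed proof according to whether $z$ lies to the left of, equal to, or to the right of the variable being projected; these go through using Weakening, the semiring identities ($\overline\Gamma = \mathbf{0}$ in $\mathbb{N}_=$, respectively $q + q' <: q'$ in $\mathbb{N}_\geq$) exactly as before, and the type-level induction hypothesis to substitute into the well-formedness premise for the projected type. \Rref{PTS-Axiom} is immediate, and \rref{PTS-Weak} is handled by peeling off the extra $0$-graded assumption, applying the IH, and re-adding it (using Weakening to substitute into its classifier).

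The \rref{PTS-Conv} case is where the PTS-specific reasoning concentrates: from a derivation ending in this rule with premises $\Gamma_1, z :^{r_0} C, \Gamma_2 \vdash a :^q B$, $A =_\beta B$, and a $\vdash_0$ premise well-typing the relevant sorts, I would apply the induction hypothesis to the typing premise to obtain $a\{c/z\}$ at type $B\{c/z\}$, apply the type-level induction hypothesis to the $\vdash_0$ premise, and close with \rref{PTS-Conv} using $A\{c/z\} =_\beta B\{c/z\}$ from the closure-under-substitution fact. Finally \rref{PTS-SubL,PTS-SubR} follow from the induction hypothesis as in the simply-typed case — for \rref{PTS-SubL}, which lowers both the context and the grade $r_0$ of $z$, I would first apply \rref{PTS-SubR} to the derivation of $c$ to raise it from $r_0$ to $r_0'$ before invoking the IH.

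I expect the main obstacle to be the bookkeeping of the interaction between propagated substitution and the separated type/term grading: one must verify that every ``$0$-world'' premise about a type remains well-formed after substitution and that the resulting contexts line up, which hinges on the invariant $\Gamma_1 + 0\cdot\Gamma = \Gamma_1$ and on the fact that $z$ may occur inside the types recorded in $\Gamma_2$. A secondary point requiring its own (easy) lemma is stability of $=_\beta$ under substitution, needed only for the \rref{PTS-Conv} case. Everything else is a routine, if lengthy, replay of the simply-typed Substitution lemma together with the PTS-level Multiplication, Splitting and Weakening lemmas.
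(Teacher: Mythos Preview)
Your proposal is correct and follows essentially the same approach as the paper: induction on the typing derivation, using Splitting to distribute $\Gamma$ across premises, and closing \rref{PTS-Conv} via stability of $=_\beta$ under substitution. One small structural point: in the PTS version, \rref{PTS-Var} places the projected variable at the \emph{end} of the context (with \rref{PTS-Weak} used to extend further), so there are only two sub-cases for \rref{PTS-Var}---the ``$z$ to the right'' case you anticipate instead arises under \rref{PTS-Weak}.
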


\begin{proof}
By induction on $    \Gamma_{{\mathrm{1}}}  ,   \ottmv{z}  :^{ \ottnt{r_{{\mathrm{0}}}} }  \ottnt{C}     ,  \Gamma_{{\mathrm{2}}}   \vdash  \ottnt{a}  :^{ \ottnt{q} }  \ottnt{A} $.

\begin{itemize}

\item \Rref{PTS-Var}. Have: $     0   \cdot  \Gamma_{{\mathrm{1}}}    ,   \ottmv{x}  :^{ \ottnt{q} }  \ottnt{A}    \vdash   \ottmv{x}   :^{ \ottnt{q} }  \ottnt{A} $ where $ \Delta  \vdash_{0}  \ottnt{A}  :   \ottmv{s}  $ and $  \lfloor  \Gamma_{{\mathrm{1}}}  \rfloor   =  \Delta $. There are two cases to consider.

\begin{itemize}
\item Have: $     0   \cdot  \Gamma_{{\mathrm{1}}}    ,   \ottmv{x}  :^{ \ottnt{q} }  \ottnt{A}    \vdash   \ottmv{x}   :^{ \ottnt{q} }  \ottnt{A} $ and $ \Gamma  \vdash  \ottnt{a}  :^{ \ottnt{q} }  \ottnt{A} $ where $  \lfloor  \Gamma_{{\mathrm{1}}}  \rfloor   =   \lfloor  \Gamma  \rfloor  $.\\
Need to show: $ \Gamma  \vdash  \ottnt{a}  :^{ \ottnt{q} }  \ottnt{A} $. Follows from what's given.
\item Have: $         0   \cdot  \Gamma_{{\mathrm{11}}}    ,   \ottmv{z}  :^{  0  }  \ottnt{C}     ,    0   \cdot  \Gamma_{{\mathrm{12}}}     ,   \ottmv{x}  :^{ \ottnt{q} }  \ottnt{A}    \vdash   \ottmv{x}   :^{ \ottnt{q} }  \ottnt{A} $ and $ \Gamma  \vdash  \ottnt{c}  :^{  0  }  \ottnt{C} $ where $  \lfloor  \Gamma_{{\mathrm{11}}}  \rfloor   =   \lfloor  \Gamma  \rfloor  $.\\
Need to show: $      \Gamma  ,    0   \cdot  \Gamma_{{\mathrm{12}}}    \{  \ottnt{c}  /  \ottmv{z}  \}    ,   \ottmv{x}  :^{ \ottnt{q} }  \ottnt{A}    \{  \ottnt{c}  /  \ottmv{z}  \}   \vdash   \ottmv{x}   :^{ \ottnt{q} }   \ottnt{A}  \{  \ottnt{c}  /  \ottmv{z}  \}  $.\\
Follows by \rref{PTS-Var,PTS-SubL} (note that in case of $ \mathbb{N}_{=} $, $ \overline{ \Gamma }  = \overline{0}$).
\end{itemize}  

\item \Rref{PTS-Weak}. Have: $  \Gamma_{{\mathrm{1}}}  ,   \ottmv{y}  :^{  0  }  \ottnt{B}    \vdash  \ottnt{a}  :^{ \ottnt{q} }  \ottnt{A} $ where $ \Gamma_{{\mathrm{1}}}  \vdash  \ottnt{a}  :^{ \ottnt{q} }  \ottnt{A} $ and $ \Delta  \vdash_{0}  \ottnt{B}  :   \ottmv{s}  $ and $  \lfloor  \Gamma_{{\mathrm{1}}}  \rfloor   =  \Delta $. There are two cases to consider.
\begin{itemize}
\item Have: $  \Gamma_{{\mathrm{1}}}  ,   \ottmv{y}  :^{  0  }  \ottnt{B}    \vdash  \ottnt{a}  :^{ \ottnt{q} }  \ottnt{A} $ and $ \Gamma  \vdash  \ottnt{b}  :^{  0  }  \ottnt{B} $ where $  \lfloor  \Gamma_{{\mathrm{1}}}  \rfloor   =   \lfloor  \Gamma  \rfloor  $.\\
Need to show: $  \Gamma_{{\mathrm{1}}}  +  \Gamma   \vdash   \ottnt{a}  \{  \ottnt{b}  /  \ottmv{y}  \}   :^{ \ottnt{q} }   \ottnt{A}  \{  \ottnt{b}  /  \ottmv{y}  \}  $.\\
Since $y \notin \: \text{fv } a$ and $y \notin \: \text{fv } A$, need to show: $  \Gamma_{{\mathrm{1}}}  +  \Gamma   \vdash  \ottnt{a}  :^{ \ottnt{q} }  \ottnt{A} $.\\
This case follows by \rref{PTS-SubL} (note that in case of $ \mathbb{N}_{=} $, $ \overline{ \Gamma }  = \overline{0}$).
\item Have: $    \Gamma_{{\mathrm{11}}}  ,     \ottmv{z}  :^{ \ottnt{r_{{\mathrm{0}}}} }  \ottnt{C}   ,  \Gamma_{{\mathrm{12}}}      ,   \ottmv{y}  :^{  0  }  \ottnt{B}    \vdash  \ottnt{a}  :^{ \ottnt{q} }  \ottnt{A} $ and $ \Gamma  \vdash  \ottnt{c}  :^{ \ottnt{r_{{\mathrm{0}}}} }  \ottnt{C} $ where $  \lfloor  \Gamma_{{\mathrm{11}}}  \rfloor   =   \lfloor  \Gamma  \rfloor  $.\\
Need to show: $        \Gamma_{{\mathrm{11}}}  +  \Gamma    ,  \Gamma_{{\mathrm{12}}}   \{  \ottnt{c}  /  \ottmv{z}  \}    ,   \ottmv{y}  :^{  0  }  \ottnt{B}    \{  \ottnt{c}  /  \ottmv{z}  \}   \vdash   \ottnt{a}  \{  \ottnt{c}  /  \ottmv{z}  \}   :^{ \ottnt{q} }   \ottnt{A}  \{  \ottnt{c}  /  \ottmv{z}  \}  $.\\
Follows by IH and \rref{PTS-Weak}.
\end{itemize}

\item \Rref{PTS-Pi}. Have: $      \Gamma_{{\mathrm{11}}}  +  \Gamma_{{\mathrm{21}}}    ,   \ottmv{z}  :^{  \ottnt{r_{{\mathrm{01}}}}  +  \ottnt{r_{{\mathrm{02}}}}  }  \ottnt{C}     ,    \Gamma_{{\mathrm{12}}}  +  \Gamma_{{\mathrm{22}}}     \vdash   \Pi  \ottmv{x}  :^{ \ottnt{r} } \!  \ottnt{A}  .  \ottnt{B}   :^{ \ottnt{q} }   \ottmv{s_{{\mathrm{3}}}}  $ where $    \Gamma_{{\mathrm{11}}}  ,   \ottmv{z}  :^{ \ottnt{r_{{\mathrm{01}}}} }  \ottnt{C}     ,  \Gamma_{{\mathrm{12}}}   \vdash  \ottnt{A}  :^{ \ottnt{q} }   \ottmv{s_{{\mathrm{1}}}}  $ and $    \Gamma_{{\mathrm{21}}}  ,   \ottmv{z}  :^{ \ottnt{r_{{\mathrm{02}}}} }  \ottnt{C}     ,    \Gamma_{{\mathrm{22}}}  ,   \ottmv{x}  :^{ \ottnt{q_{{\mathrm{0}}}} }  \ottnt{A}      \vdash  \ottnt{B}  :^{ \ottnt{q} }   \ottmv{s_{{\mathrm{2}}}}  $ and $ \mathcal{R} ( \ottmv{s_{{\mathrm{1}}}}  ,  \ottmv{s_{{\mathrm{2}}}}  ,  \ottmv{s_{{\mathrm{3}}}} ) $.\\
Further, $ \Gamma  \vdash  \ottnt{c}  :^{  \ottnt{r_{{\mathrm{01}}}}  +  \ottnt{r_{{\mathrm{02}}}}  }  \ottnt{C} $ where $  \lfloor  \Gamma  \rfloor   =   \lfloor  \Gamma_{{\mathrm{11}}}  \rfloor  $.\\
Need to show: $      \Gamma_{{\mathrm{11}}}  +  \Gamma_{{\mathrm{21}}}    +  \Gamma    ,      \Gamma_{{\mathrm{12}}}  \{  \ottnt{c}  /  \ottmv{z}  \}   +  \Gamma_{{\mathrm{22}}}   \{  \ottnt{c}  /  \ottmv{z}  \}     \vdash    \Pi  \ottmv{x}  :^{ \ottnt{r} } \!   \ottnt{A}  \{  \ottnt{c}  /  \ottmv{z}  \}   .  \ottnt{B}   \{  \ottnt{c}  /  \ottmv{z}  \}   :^{ \ottnt{q} }   \ottmv{s_{{\mathrm{3}}}}  $.\\
By lemma \ref{BLDSplitP}, $\exists \Gamma_{{\mathrm{31}}}, \Gamma_{{\mathrm{32}}}$ such that $ \Gamma_{{\mathrm{31}}}  \vdash  \ottnt{c}  :^{ \ottnt{r_{{\mathrm{01}}}} }  \ottnt{C} $ and $ \Gamma_{{\mathrm{32}}}  \vdash  \ottnt{c}  :^{ \ottnt{r_{{\mathrm{02}}}} }  \ottnt{C} $ and $  \Gamma_{{\mathrm{31}}}  +  \Gamma_{{\mathrm{32}}}   =  \Gamma $.\\
By IH, $     \Gamma_{{\mathrm{11}}}  +  \Gamma_{{\mathrm{31}}}    ,  \Gamma_{{\mathrm{12}}}   \{  \ottnt{c}  /  \ottmv{z}  \}   \vdash   \ottnt{A}  \{  \ottnt{c}  /  \ottmv{z}  \}   :^{ \ottnt{q} }   \ottmv{s_{{\mathrm{1}}}}  $ and $    \Gamma_{{\mathrm{21}}}  +  \Gamma_{{\mathrm{32}}}    ,      \Gamma_{{\mathrm{22}}}  \{  \ottnt{c}  /  \ottmv{z}  \}   ,   \ottmv{x}  :^{ \ottnt{q_{{\mathrm{0}}}} }  \ottnt{A}    \{  \ottnt{c}  /  \ottmv{z}  \}     \vdash   \ottnt{B}  \{  \ottnt{c}  /  \ottmv{z}  \}   :^{ \ottnt{q} }   \ottmv{s_{{\mathrm{2}}}}  $.\\
This case, then, follows by \rref{PTS-Pi}.

\item \Rref{PTS-Lam}. Have: $    \Gamma_{{\mathrm{1}}}  ,   \ottmv{z}  :^{ \ottnt{r_{{\mathrm{0}}}} }  \ottnt{C}     ,  \Gamma_{{\mathrm{2}}}   \vdash   \lambda^{ \ottnt{r} }  \ottmv{x}  :  \ottnt{A}  .  \ottnt{b}   :^{ \ottnt{q} }   \Pi  \ottmv{x}  :^{ \ottnt{r} } \!  \ottnt{A}  .  \ottnt{B}  $ where\\ $    \Gamma_{{\mathrm{1}}}  ,   \ottmv{z}  :^{ \ottnt{r_{{\mathrm{0}}}} }  \ottnt{C}     ,    \Gamma_{{\mathrm{2}}}  ,   \ottmv{x}  :^{  \ottnt{q}  \cdot  \ottnt{r}  }  \ottnt{A}      \vdash  \ottnt{b}  :^{ \ottnt{q} }  \ottnt{B} $.\\
Further, $ \Gamma  \vdash  \ottnt{c}  :^{ \ottnt{r_{{\mathrm{0}}}} }  \ottnt{C} $ where $  \lfloor  \Gamma  \rfloor   =   \lfloor  \Gamma_{{\mathrm{1}}}  \rfloor  $.\\
Need to show: $     \Gamma_{{\mathrm{1}}}  +  \Gamma    ,  \Gamma_{{\mathrm{2}}}   \{  \ottnt{c}  /  \ottmv{z}  \}   \vdash    \lambda^{ \ottnt{r} }  \ottmv{x}  :   \ottnt{A}  \{  \ottnt{c}  /  \ottmv{z}  \}   .  \ottnt{b}   \{  \ottnt{c}  /  \ottmv{z}  \}   :^{ \ottnt{q} }    \Pi  \ottmv{x}  :^{ \ottnt{r} } \!   \ottnt{A}  \{  \ottnt{c}  /  \ottmv{z}  \}   .  \ottnt{B}   \{  \ottnt{c}  /  \ottmv{z}  \}  $.\\
By IH, $    \Gamma_{{\mathrm{1}}}  +  \Gamma    ,      \Gamma_{{\mathrm{2}}}  \{  \ottnt{c}  /  \ottmv{z}  \}   ,   \ottmv{x}  :^{  \ottnt{q}  \cdot  \ottnt{r}  }  \ottnt{A}    \{  \ottnt{c}  /  \ottmv{z}  \}     \vdash   \ottnt{b}  \{  \ottnt{c}  /  \ottmv{z}  \}   :^{ \ottnt{q} }   \ottnt{B}  \{  \ottnt{c}  /  \ottmv{z}  \}  $.\\
This case, then, follows by \rref{PTS-Lam}.

\item \Rref{PTS-App}. Have: $      \Gamma_{{\mathrm{11}}}  +  \Gamma_{{\mathrm{21}}}    ,   \ottmv{z}  :^{  \ottnt{r_{{\mathrm{01}}}}  +  \ottnt{r_{{\mathrm{02}}}}  }  \ottnt{C}     ,    \Gamma_{{\mathrm{12}}}  +  \Gamma_{{\mathrm{22}}}     \vdash   \ottnt{b}  \:  \ottnt{a} ^{ \ottnt{r} }   :^{ \ottnt{q} }   \ottnt{B}  \{  \ottnt{a}  /  \ottmv{x}  \}  $ where\\ $    \Gamma_{{\mathrm{11}}}  ,   \ottmv{z}  :^{ \ottnt{r_{{\mathrm{01}}}} }  \ottnt{C}     ,  \Gamma_{{\mathrm{12}}}   \vdash  \ottnt{b}  :^{ \ottnt{q} }   \Pi  \ottmv{x}  :^{ \ottnt{r} } \!  \ottnt{A}  .  \ottnt{B}  $ and $    \Gamma_{{\mathrm{21}}}  ,   \ottmv{z}  :^{ \ottnt{r_{{\mathrm{02}}}} }  \ottnt{C}     ,  \Gamma_{{\mathrm{22}}}   \vdash  \ottnt{a}  :^{  \ottnt{q}  \cdot  \ottnt{r}  }  \ottnt{A} $.\\
Further, $ \Gamma  \vdash  \ottnt{c}  :^{  \ottnt{r_{{\mathrm{01}}}}  +  \ottnt{r_{{\mathrm{02}}}}  }  \ottnt{C} $ where $  \lfloor  \Gamma  \rfloor   =   \lfloor  \Gamma_{{\mathrm{11}}}  \rfloor  $.\\
Need to show: $      \Gamma_{{\mathrm{11}}}  +  \Gamma_{{\mathrm{21}}}    +  \Gamma    ,      \Gamma_{{\mathrm{12}}}  \{  \ottnt{c}  /  \ottmv{z}  \}   +  \Gamma_{{\mathrm{22}}}   \{  \ottnt{c}  /  \ottmv{z}  \}     \vdash    \ottnt{b}  \{  \ottnt{c}  /  \ottmv{z}  \}   \:   \ottnt{a}  \{  \ottnt{c}  /  \ottmv{z}  \}  ^{ \ottnt{r} }   :^{ \ottnt{q} }    \ottnt{B}  \{  \ottnt{c}  /  \ottmv{z}  \}   \{   \ottnt{a}  \{  \ottnt{c}  /  \ottmv{z}  \}   /  \ottmv{x}  \}  $.\\
By lemma \ref{BLDSplitP}, $\exists \Gamma_{{\mathrm{31}}}, \Gamma_{{\mathrm{32}}}$ such that $ \Gamma_{{\mathrm{31}}}  \vdash  \ottnt{c}  :^{ \ottnt{r_{{\mathrm{01}}}} }  \ottnt{C} $ and $ \Gamma_{{\mathrm{32}}}  \vdash  \ottnt{c}  :^{ \ottnt{r_{{\mathrm{02}}}} }  \ottnt{C} $ and $  \Gamma_{{\mathrm{31}}}  +  \Gamma_{{\mathrm{32}}}   =  \Gamma $.\\
By IH, $     \Gamma_{{\mathrm{11}}}  +  \Gamma_{{\mathrm{31}}}    ,  \Gamma_{{\mathrm{12}}}   \{  \ottnt{c}  /  \ottmv{z}  \}   \vdash   \ottnt{b}  \{  \ottnt{c}  /  \ottmv{z}  \}   :^{ \ottnt{q} }    \Pi  \ottmv{x}  :^{ \ottnt{r} } \!   \ottnt{A}  \{  \ottnt{c}  /  \ottmv{z}  \}   .  \ottnt{B}   \{  \ottnt{c}  /  \ottmv{z}  \}  $\\ and $     \Gamma_{{\mathrm{21}}}  +  \Gamma_{{\mathrm{32}}}    ,  \Gamma_{{\mathrm{22}}}   \{  \ottnt{c}  /  \ottmv{z}  \}   \vdash   \ottnt{a}  \{  \ottnt{c}  /  \ottmv{z}  \}   :^{  \ottnt{q}  \cdot  \ottnt{r}  }   \ottnt{A}  \{  \ottnt{c}  /  \ottmv{z}  \}  $.\\
This case, then, follows by \rref{PTS-App}.

\item \Rref{PTS-Conv}. Have: $    \Gamma_{{\mathrm{1}}}  ,   \ottmv{z}  :^{ \ottnt{r_{{\mathrm{0}}}} }  \ottnt{C}     ,  \Gamma_{{\mathrm{2}}}   \vdash  \ottnt{a}  :^{ \ottnt{q} }  \ottnt{B} $ where $    \Gamma_{{\mathrm{1}}}  ,   \ottmv{z}  :^{ \ottnt{r_{{\mathrm{0}}}} }  \ottnt{C}     ,  \Gamma_{{\mathrm{2}}}   \vdash  \ottnt{a}  :^{ \ottnt{q} }  \ottnt{A} $ and  $ \ottnt{A}  =_{\beta}  \ottnt{B} $.\\
Further, $ \Gamma  \vdash  \ottnt{c}  :^{ \ottnt{r_{{\mathrm{0}}}} }  \ottnt{C} $ where $  \lfloor  \Gamma  \rfloor   =   \lfloor  \Gamma_{{\mathrm{1}}}  \rfloor  $.\\
Need to show: $     \Gamma_{{\mathrm{11}}}  +  \Gamma    ,  \Gamma_{{\mathrm{2}}}   \{  \ottnt{c}  /  \ottmv{z}  \}   \vdash   \ottnt{a}  \{  \ottnt{c}  /  \ottmv{z}  \}   :^{ \ottnt{q} }   \ottnt{B}  \{  \ottnt{c}  /  \ottmv{z}  \}  $.\\
By IH, $     \Gamma_{{\mathrm{11}}}  +  \Gamma    ,  \Gamma_{{\mathrm{2}}}   \{  \ottnt{c}  /  \ottmv{z}  \}   \vdash   \ottnt{a}  \{  \ottnt{c}  /  \ottmv{z}  \}   :^{ \ottnt{q} }   \ottnt{A}  \{  \ottnt{c}  /  \ottmv{z}  \}  $.\\
Also, since $ \ottnt{A}  =_{\beta}  \ottnt{B} $, so $  \ottnt{A}  \{  \ottnt{c}  /  \ottmv{z}  \}   =_{\beta}   \ottnt{B}  \{  \ottnt{c}  /  \ottmv{z}  \}  $.\\
This case, then, follows by \rref{PTS-Conv}. 

\item \Rref{PTS-Pair}. Have: $      \Gamma_{{\mathrm{11}}}  +  \Gamma_{{\mathrm{21}}}    ,   \ottmv{z}  :^{  \ottnt{r_{{\mathrm{01}}}}  +  \ottnt{r_{{\mathrm{02}}}}  }  \ottnt{C}     ,    \Gamma_{{\mathrm{12}}}  +  \Gamma_{{\mathrm{22}}}     \vdash   (  \ottnt{a_{{\mathrm{1}}}} ^{ \ottnt{r} } ,  \ottnt{a_{{\mathrm{2}}}}  )   :^{ \ottnt{q} }   \Sigma  \ottmv{x}  :^{ \ottnt{r} } \!  \ottnt{A_{{\mathrm{1}}}}  .  \ottnt{A_{{\mathrm{2}}}}  $ where $    \Gamma_{{\mathrm{11}}}  ,   \ottmv{z}  :^{ \ottnt{r_{{\mathrm{01}}}} }  \ottnt{C}     ,  \Gamma_{{\mathrm{12}}}   \vdash  \ottnt{a_{{\mathrm{1}}}}  :^{  \ottnt{q}  \cdot  \ottnt{r}  }  \ottnt{A_{{\mathrm{1}}}} $ and $    \Gamma_{{\mathrm{21}}}  ,   \ottmv{z}  :^{ \ottnt{r_{{\mathrm{02}}}} }  \ottnt{C}     ,  \Gamma_{{\mathrm{22}}}   \vdash  \ottnt{a_{{\mathrm{2}}}}  :^{ \ottnt{q} }   \ottnt{A_{{\mathrm{2}}}}  \{  \ottnt{a_{{\mathrm{1}}}}  /  \ottmv{x}  \}  $.\\
Further, $ \Gamma  \vdash  \ottnt{c}  :^{  \ottnt{r_{{\mathrm{01}}}}  +  \ottnt{r_{{\mathrm{02}}}}  }  \ottnt{C} $ where $  \lfloor  \Gamma  \rfloor   =   \lfloor  \Gamma_{{\mathrm{11}}}  \rfloor  $.\\
Need to show: $      \Gamma_{{\mathrm{11}}}  +  \Gamma_{{\mathrm{21}}}    +  \Gamma    ,      \Gamma_{{\mathrm{12}}}  \{  \ottnt{c}  /  \ottmv{z}  \}   +  \Gamma_{{\mathrm{22}}}   \{  \ottnt{c}  /  \ottmv{z}  \}     \vdash   (   \ottnt{a_{{\mathrm{1}}}}  \{  \ottnt{c}  /  \ottmv{z}  \}  ^{ \ottnt{r} } ,   \ottnt{a_{{\mathrm{2}}}}  \{  \ottnt{c}  /  \ottmv{z}  \}   )   :^{ \ottnt{q} }    \Sigma  \ottmv{x}  :^{ \ottnt{r} } \!   \ottnt{A_{{\mathrm{1}}}}  \{  \ottnt{c}  /  \ottmv{z}  \}   .  \ottnt{A_{{\mathrm{2}}}}   \{  \ottnt{c}  /  \ottmv{z}  \}  $.\\
By lemma \ref{BLDSplitP}, $\exists \Gamma_{{\mathrm{31}}}, \Gamma_{{\mathrm{32}}}$ such that $ \Gamma_{{\mathrm{31}}}  \vdash  \ottnt{c}  :^{ \ottnt{r_{{\mathrm{01}}}} }  \ottnt{C} $ and $ \Gamma_{{\mathrm{32}}}  \vdash  \ottnt{c}  :^{ \ottnt{r_{{\mathrm{02}}}} }  \ottnt{C} $ and $ \Gamma  =   \Gamma_{{\mathrm{31}}}  +  \Gamma_{{\mathrm{32}}}  $.\\
By IH, $     \Gamma_{{\mathrm{11}}}  +  \Gamma_{{\mathrm{31}}}    ,  \Gamma_{{\mathrm{12}}}   \{  \ottnt{c}  /  \ottmv{z}  \}   \vdash   \ottnt{a_{{\mathrm{1}}}}  \{  \ottnt{c}  /  \ottmv{z}  \}   :^{  \ottnt{q}  \cdot  \ottnt{r}  }   \ottnt{A_{{\mathrm{1}}}}  \{  \ottnt{c}  /  \ottmv{z}  \}  $\\ and $     \Gamma_{{\mathrm{21}}}  +  \Gamma_{{\mathrm{32}}}    ,  \Gamma_{{\mathrm{22}}}   \{  \ottnt{c}  /  \ottmv{z}  \}   \vdash   \ottnt{a_{{\mathrm{2}}}}  \{  \ottnt{c}  /  \ottmv{z}  \}   :^{ \ottnt{q} }    \ottnt{A_{{\mathrm{2}}}}  \{  \ottnt{a_{{\mathrm{1}}}}  /  \ottmv{x}  \}   \{  \ottnt{c}  /  \ottmv{z}  \}  $.\\
Now, $  \ottnt{A_{{\mathrm{2}}}}  \{  \ottnt{a_{{\mathrm{1}}}}  /  \ottmv{x}  \}   \{  \ottnt{c}  /  \ottmv{z}  \}  =   \ottnt{A_{{\mathrm{2}}}}  \{  \ottnt{c}  /  \ottmv{z}  \}   \{   \ottnt{a_{{\mathrm{1}}}}  \{  \ottnt{c}  /  \ottmv{z}  \}   /  \ottmv{x}  \} $.\\
This case, then, follows by \rref{PTS-Pair}.

\item \Rref{PTS-LetPair}. Have: $      \Gamma_{{\mathrm{11}}}  +  \Gamma_{{\mathrm{21}}}    ,   \ottmv{z}  :^{  \ottnt{r_{{\mathrm{01}}}}  +  \ottnt{r_{{\mathrm{02}}}}  }  \ottnt{C}     ,    \Gamma_{{\mathrm{12}}}  +  \Gamma_{{\mathrm{22}}}     \vdash   \mathbf{let}_{ \ottnt{q_{{\mathrm{0}}}} } \: (  \ottmv{x} ^{ \ottnt{r} } ,  \ottmv{y}  ) \: \mathbf{be} \:  \ottnt{a}  \: \mathbf{in} \:  \ottnt{b}   :^{ \ottnt{q} }   \ottnt{B}  \{  \ottnt{a}  /  w  \}  $ where \\
$  \Delta  ,   w  :   \Sigma  \ottmv{x}  :^{ \ottnt{r} } \!  \ottnt{A_{{\mathrm{1}}}}  .  \ottnt{A_{{\mathrm{2}}}}     \vdash_{0}  \ottnt{B}  :   \ottmv{s}  $ and $    \Gamma_{{\mathrm{11}}}  ,   \ottmv{z}  :^{ \ottnt{r_{{\mathrm{01}}}} }  \ottnt{C}     ,  \Gamma_{{\mathrm{12}}}   \vdash  \ottnt{a}  :^{  \ottnt{q}  \cdot  \ottnt{q_{{\mathrm{0}}}}  }   \Sigma  \ottmv{x}  :^{ \ottnt{r} } \!  \ottnt{A_{{\mathrm{1}}}}  .  \ottnt{A_{{\mathrm{2}}}}  $ and $ \Delta  =      \lfloor  \Gamma_{{\mathrm{11}}}  \rfloor   ,   \ottmv{z}  :  \ottnt{C}     ,   \lfloor  \Gamma_{{\mathrm{12}}}  \rfloor   $ and $      \Gamma_{{\mathrm{21}}}  ,   \ottmv{z}  :^{ \ottnt{r_{{\mathrm{02}}}} }  \ottnt{C}     ,  \Gamma_{{\mathrm{22}}}    ,     \ottmv{x}  :^{  \ottnt{q}  \cdot    \ottnt{q_{{\mathrm{0}}}}  \cdot  \ottnt{r}    }  \ottnt{A_{{\mathrm{1}}}}   ,   \ottmv{y}  :^{  \ottnt{q}  \cdot  \ottnt{q_{{\mathrm{0}}}}  }  \ottnt{A_{{\mathrm{2}}}}      \vdash  \ottnt{b}  :^{ \ottnt{q} }   \ottnt{B}  \{   (   \ottmv{x}  ^{ \ottnt{r} } ,   \ottmv{y}   )   /  w  \}  $.\\
Further, $ \Gamma  \vdash  \ottnt{c}  :^{  \ottnt{r_{{\mathrm{01}}}}  +  \ottnt{r_{{\mathrm{02}}}}  }  \ottnt{C} $ where $  \lfloor  \Gamma  \rfloor   =   \lfloor  \Gamma_{{\mathrm{11}}}  \rfloor  $.\\
Need to show: $      \Gamma_{{\mathrm{11}}}  +  \Gamma_{{\mathrm{21}}}    +  \Gamma    ,      \Gamma_{{\mathrm{12}}}  \{  \ottnt{c}  /  \ottmv{z}  \}   +  \Gamma_{{\mathrm{22}}}   \{  \ottnt{c}  /  \ottmv{z}  \}     \vdash    \mathbf{let}_{ \ottnt{q_{{\mathrm{0}}}} } \: (  \ottmv{x} ^{ \ottnt{r} } ,  \ottmv{y}  ) \: \mathbf{be} \:   \ottnt{a}  \{  \ottnt{c}  /  \ottmv{z}  \}   \: \mathbf{in} \:  \ottnt{b}   \{  \ottnt{c}  /  \ottmv{z}  \}   :^{ \ottnt{q} }    \ottnt{B}  \{  \ottnt{c}  /  \ottmv{z}  \}   \{   \ottnt{a}  \{  \ottnt{c}  /  \ottmv{z}  \}   /  w  \}  $.\\
By lemma \ref{BLDSplitP}, $\exists \Gamma_{{\mathrm{31}}}, \Gamma_{{\mathrm{32}}}$ such that $ \Gamma_{{\mathrm{31}}}  \vdash  \ottnt{c}  :^{ \ottnt{r_{{\mathrm{01}}}} }  \ottnt{C} $ and $ \Gamma_{{\mathrm{32}}}  \vdash  \ottnt{c}  :^{ \ottnt{r_{{\mathrm{02}}}} }  \ottnt{C} $ and $ \Gamma  =   \Gamma_{{\mathrm{31}}}  +  \Gamma_{{\mathrm{32}}}  $.\\
By IH, $     \Gamma_{{\mathrm{11}}}  +  \Gamma_{{\mathrm{31}}}    ,  \Gamma_{{\mathrm{12}}}   \{  \ottnt{c}  /  \ottmv{z}  \}   \vdash   \ottnt{a}  \{  \ottnt{c}  /  \ottmv{z}  \}   :^{  \ottnt{q}  \cdot  \ottnt{q_{{\mathrm{0}}}}  }    \Sigma  \ottmv{x}  :^{ \ottnt{r} } \!   \ottnt{A_{{\mathrm{1}}}}  \{  \ottnt{c}  /  \ottmv{z}  \}   .  \ottnt{A_{{\mathrm{2}}}}   \{  \ottnt{c}  /  \ottmv{z}  \}  $ and\\
$       \Gamma_{{\mathrm{21}}}  +  \Gamma_{{\mathrm{32}}}    ,  \Gamma_{{\mathrm{22}}}   \{  \ottnt{c}  /  \ottmv{z}  \}    ,      \ottmv{x}  :^{  \ottnt{q}  \cdot    \ottnt{q_{{\mathrm{0}}}}  \cdot  \ottnt{r}    }   \ottnt{A_{{\mathrm{1}}}}  \{  \ottnt{c}  /  \ottmv{z}  \}    ,   \ottmv{y}  :^{  \ottnt{q}  \cdot  \ottnt{q_{{\mathrm{0}}}}  }  \ottnt{A_{{\mathrm{2}}}}    \{  \ottnt{c}  /  \ottmv{z}  \}     \vdash   \ottnt{b}  \{  \ottnt{c}  /  \ottmv{z}  \}   :^{ \ottnt{q} }    \ottnt{B}  \{   (   \ottmv{x}  ^{ \ottnt{r} } ,   \ottmv{y}   )   /  w  \}   \{  \ottnt{c}  /  \ottmv{z}  \}  $.\\  
This case, then, follows by \rref{PTS-LetPair}.

\item \Rref{PTS-Sum, PTS-Inj1, PTS-Inj2}. By IH.

\item \Rref{PTS-Case}. Have: $      \Gamma_{{\mathrm{11}}}  +  \Gamma_{{\mathrm{21}}}    ,   \ottmv{z}  :^{  \ottnt{r_{{\mathrm{01}}}}  +  \ottnt{r_{{\mathrm{02}}}}  }  \ottnt{C}     ,    \Gamma_{{\mathrm{12}}}  +  \Gamma_{{\mathrm{22}}}     \vdash   \mathbf{case}_{ \ottnt{q_{{\mathrm{0}}}} } \:  \ottnt{a}  \: \mathbf{of} \:  \ottmv{x_{{\mathrm{1}}}}  .  \ottnt{b_{{\mathrm{1}}}}  \: ; \:  \ottmv{x_{{\mathrm{2}}}}  .  \ottnt{b_{{\mathrm{2}}}}   :^{ \ottnt{q} }   \ottnt{B}  \{  \ottnt{a}  /  w  \}  $ where \\
$  \Delta  ,   w  :   \ottnt{A_{{\mathrm{1}}}}  +  \ottnt{A_{{\mathrm{2}}}}     \vdash_{0}  \ottnt{B}  :   \ottmv{s}  $ and $    \Gamma_{{\mathrm{11}}}  ,   \ottmv{z}  :^{ \ottnt{r_{{\mathrm{01}}}} }  \ottnt{C}     ,  \Gamma_{{\mathrm{12}}}   \vdash  \ottnt{a}  :^{  \ottnt{q}  \cdot  \ottnt{q_{{\mathrm{0}}}}  }   \ottnt{A_{{\mathrm{1}}}}  +  \ottnt{A_{{\mathrm{2}}}}  $ and $ \Delta  =      \lfloor  \Gamma_{{\mathrm{11}}}  \rfloor   ,   \ottmv{z}  :  \ottnt{C}     ,   \lfloor  \Gamma_{{\mathrm{12}}}  \rfloor   $ and\\ $      \Gamma_{{\mathrm{21}}}  ,   \ottmv{z}  :^{ \ottnt{r_{{\mathrm{02}}}} }  \ottnt{C}     ,  \Gamma_{{\mathrm{22}}}    ,   \ottmv{x_{{\mathrm{1}}}}  :^{  \ottnt{q}  \cdot  \ottnt{q_{{\mathrm{0}}}}  }  \ottnt{A_{{\mathrm{1}}}}    \vdash  \ottnt{b_{{\mathrm{1}}}}  :^{ \ottnt{q} }   \ottnt{B}  \{   \mathbf{inj}_1 \:   \ottmv{x_{{\mathrm{1}}}}    /  w  \}  $ and\\ $      \Gamma_{{\mathrm{21}}}  ,   \ottmv{z}  :^{ \ottnt{r_{{\mathrm{02}}}} }  \ottnt{C}     ,  \Gamma_{{\mathrm{22}}}    ,   \ottmv{x_{{\mathrm{2}}}}  :^{  \ottnt{q}  \cdot  \ottnt{q_{{\mathrm{0}}}}  }  \ottnt{A_{{\mathrm{2}}}}    \vdash  \ottnt{b_{{\mathrm{2}}}}  :^{ \ottnt{q} }   \ottnt{B}  \{   \mathbf{inj}_2 \:   \ottmv{x_{{\mathrm{2}}}}    /  w  \}  $.\\
Further, $ \Gamma  \vdash  \ottnt{c}  :^{  \ottnt{r_{{\mathrm{01}}}}  +  \ottnt{r_{{\mathrm{02}}}}  }  \ottnt{C} $ where $  \lfloor  \Gamma  \rfloor   =   \lfloor  \Gamma_{{\mathrm{11}}}  \rfloor  $.\\
Need to show: $      \Gamma_{{\mathrm{11}}}  +  \Gamma_{{\mathrm{21}}}    +  \Gamma    ,      \Gamma_{{\mathrm{12}}}  \{  \ottnt{c}  /  \ottmv{z}  \}   +  \Gamma_{{\mathrm{22}}}   \{  \ottnt{c}  /  \ottmv{z}  \}     \vdash    \mathbf{case}_{ \ottnt{q_{{\mathrm{0}}}} } \:   \ottnt{a}  \{  \ottnt{c}  /  \ottmv{z}  \}   \: \mathbf{of} \:  \ottmv{x_{{\mathrm{1}}}}  .   \ottnt{b_{{\mathrm{1}}}}  \{  \ottnt{c}  /  \ottmv{z}  \}   \: ; \:  \ottmv{x_{{\mathrm{2}}}}  .  \ottnt{b_{{\mathrm{2}}}}   \{  \ottnt{c}  /  \ottmv{z}  \}   :^{ \ottnt{q} }    \ottnt{B}  \{  \ottnt{c}  /  \ottmv{z}  \}   \{   \ottnt{a}  \{  \ottnt{c}  /  \ottmv{z}  \}   /  w  \}  $.\\
By lemma \ref{BLDSplitP}, $\exists \Gamma_{{\mathrm{31}}}, \Gamma_{{\mathrm{32}}}$ such that $ \Gamma_{{\mathrm{31}}}  \vdash  \ottnt{c}  :^{ \ottnt{r_{{\mathrm{01}}}} }  \ottnt{C} $ and $ \Gamma_{{\mathrm{32}}}  \vdash  \ottnt{c}  :^{ \ottnt{r_{{\mathrm{02}}}} }  \ottnt{C} $ and $ \Gamma  =   \Gamma_{{\mathrm{31}}}  +  \Gamma_{{\mathrm{32}}}  $.\\
By IH, $     \Gamma_{{\mathrm{11}}}  +  \Gamma_{{\mathrm{31}}}    ,  \Gamma_{{\mathrm{12}}}   \{  \ottnt{c}  /  \ottmv{z}  \}   \vdash   \ottnt{a}  \{  \ottnt{c}  /  \ottmv{z}  \}   :^{  \ottnt{q}  \cdot  \ottnt{q_{{\mathrm{0}}}}  }     \ottnt{A_{{\mathrm{1}}}}  \{  \ottnt{c}  /  \ottmv{z}  \}   +  \ottnt{A_{{\mathrm{2}}}}   \{  \ottnt{c}  /  \ottmv{z}  \}  $ and\\
$        \Gamma_{{\mathrm{21}}}  +  \Gamma_{{\mathrm{32}}}    ,  \Gamma_{{\mathrm{22}}}   \{  \ottnt{c}  /  \ottmv{z}  \}    ,   \ottmv{x_{{\mathrm{1}}}}  :^{  \ottnt{q}  \cdot  \ottnt{q_{{\mathrm{0}}}}  }  \ottnt{A_{{\mathrm{1}}}}    \{  \ottnt{c}  /  \ottmv{z}  \}   \vdash   \ottnt{b_{{\mathrm{1}}}}  \{  \ottnt{c}  /  \ottmv{z}  \}   :^{ \ottnt{q} }    \ottnt{B}  \{   \mathbf{inj}_1 \:   \ottmv{x_{{\mathrm{1}}}}    /  w  \}   \{  \ottnt{c}  /  \ottmv{z}  \}  $ and\\ $        \Gamma_{{\mathrm{21}}}  +  \Gamma_{{\mathrm{32}}}    ,  \Gamma_{{\mathrm{22}}}   \{  \ottnt{c}  /  \ottmv{z}  \}    ,   \ottmv{x_{{\mathrm{2}}}}  :^{  \ottnt{q}  \cdot  \ottnt{q_{{\mathrm{0}}}}  }  \ottnt{A_{{\mathrm{2}}}}    \{  \ottnt{c}  /  \ottmv{z}  \}   \vdash   \ottnt{b_{{\mathrm{2}}}}  \{  \ottnt{c}  /  \ottmv{z}  \}   :^{ \ottnt{q} }    \ottnt{B}  \{   \mathbf{inj}_2 \:   \ottmv{x_{{\mathrm{2}}}}    /  w  \}   \{  \ottnt{c}  /  \ottmv{z}  \}  $.\\  
This case, then, follows by \rref{PTS-Case}.

\item \Rref{PTS-SubL, PTS-SubR}. By IH.

\end{itemize}

\end{proof}

%---------------------------------------------------------------------------------------------------

\begin{theorem}[Preservation (Theorem \ref{preservationDep})] \label{Dpreserve}
If $ \Gamma  \vdash  \ottnt{a}  :^{ \ottnt{q} }  \ottnt{A} $ and $ \vdash  \ottnt{a}  \leadsto  \ottnt{a'} $, then $ \Gamma  \vdash  \ottnt{a'}  :^{ \ottnt{q} }  \ottnt{A} $.
\end{theorem}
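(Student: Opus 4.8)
The plan is to prove subject reduction by induction on the derivation of $\Gamma \vdash a :^q A$, performing inversion on $\vdash a \leadsto a'$ in each case, following the same skeleton as the simply-typed proof (Theorem \ref{SimplePreserveP}) but supplying two extra ingredients needed for the pure type system setting: the substitution lemma now rewrites both the type and the tail of the context (Lemma \ref{BLDSubstP}), and canonical-forms inversion must see through the conversion rule \rref{PTS-Conv}. For every congruence step — reduction inside the head of an application, inside the scrutinee of a $\mathbf{let}$ or $\mathbf{case}$, inside the domain or codomain of a $\Pi$ or $\Sigma$, etc. — the conclusion follows immediately from the induction hypothesis and re-application of the corresponding typing rule, since the assigned type is syntactically unchanged. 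The \rref{PTS-Conv}, \rref{PTS-SubL}, and \rref{PTS-SubR} cases are likewise immediate from the induction hypothesis (for \rref{PTS-Conv} using that $\beta$-equality is unaffected, for the subsumption rules using monotonicity, exactly as in the simply-typed proof).

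The substantive work is in the three $\beta$-rules. For \rref{Step-AppBeta}, from \rref{PTS-App} we have $\Gamma_1 + \Gamma_2 \vdash b\,a^r :^q B\{a/x\}$ with $\Gamma_1 \vdash b :^q \Pi x:^r A.B$ and $\Gamma_2 \vdash a :^{q\cdot r} A$, and inversion on the step forces $b = \lambda^r x:A'.b'$ and $a' = b'\{a/x\}$. I first apply an inversion lemma for $\lambda$-abstractions to $\Gamma_1 \vdash \lambda^r x:A'.b' :^q \Pi x:^r A.B$: this produces a grade $q_0 <: q$ and a well-typed body which, after using injectivity of $\Pi$ under $\beta$-equality together with \rref{PTS-Conv}, gives $\Gamma_1, x:^{q_0\cdot r} A \vdash b' :^{q_0} B$. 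If $q_0 = 0$ then $q = 0$ and the claim is just Lemma \ref{BLDSubstP} applied directly. Otherwise I factor $b'$ via Lemma \ref{BLDFactP}, obtaining $\Gamma_1', x:^{r_1} A \vdash b' :^1 B$ with $\Gamma_1 <: q_0\cdot\Gamma_1'$ and $q_0\cdot r <: q_0\cdot r_1$; since $q_0 \neq 0$ this yields $r <: r_1$, so \rref{PTS-SubL} gives $\Gamma_1', x:^r A \vdash b' :^1 B$, multiplying by $q$ (Lemma \ref{BLDMultP}) and \rref{PTS-SubL} gives $\Gamma_1, x:^{q\cdot r} A \vdash b' :^q B$, and finally Lemma \ref{BLDSubstP} with $\Gamma_2 \vdash a :^{q\cdot r} A$ yields $\Gamma_1 + \Gamma_2 \vdash b'\{a/x\} :^q B\{a/x\}$, precisely the type \rref{PTS-App} assigned to the redex. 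The \rref{Step-LetPairBeta} and $\mathbf{case}$-$\beta$ cases are analogous: invert the typing of the value constructor ($(a_1^r, a_2)$, respectively $\mathbf{inj}_i\,a_i$), split the context using Lemma \ref{BLDSplitP} as needed, and apply Lemma \ref{BLDSubstP} once (for $\mathbf{case}$) or twice (for $\mathbf{let}$); here one also checks the routine identity $B\{(x^r,y)/w\}\{a_1/x\}\{a_2/y\} = B\{(a_1^r,a_2)/w\}$, which holds because $x,y$ are not free in the motive $B$ (it is typed in a context containing only $w$).

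The main obstacle is the $\lambda$-inversion step in the application $\beta$-case — the only place where genuine PTS metatheory is invoked. I need that $\Gamma_1 \vdash \lambda^r x:A'.b' :^q C$ with $C =_\beta \Pi x:^r A.B$ entails a well-typed body whose codomain is $\beta$-equal to $B$ and whose graded domain still carries the grade $r$. This rests on confluence of $\beta$-reduction and injectivity of the dependent function (and $\Sigma$, $+$) type constructors with respect to $=_\beta$; the congruence rules \rref{Eq-PiCong}, \rref{Eq-LamCong}, \rref{Eq-AppCong} and their companions already require matching grade annotations, so injectivity transports $r$ unchanged and no new grading subtlety appears. Apart from this, the entire argument reuses the arithmetic already packaged in Lemmas \ref{BLDMultP}, \ref{BLDFactP}, \ref{BLDSplitP}, \ref{BLDWeakP}, and \ref{BLDSubstP}, none of which is affected by the move from simple types to pure type systems, so I expect the graded structure to cause no difficulty beyond what \rref{PTS-SubL}/\rref{PTS-SubR} and the multiplication/factorization machinery already handle in the simply-typed proof.
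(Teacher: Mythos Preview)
Your proposal is essentially the paper's proof: same induction scheme, same case split on $q_0 = 0$ versus $q_0 \neq 0$ in the application $\beta$-case, same use of factorization then multiplication then substitution, and the same appeal to $\Pi$-injectivity hidden in the $\lambda$-inversion lemma.

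One detail you overlooked: in the congruence cases for the dependent eliminators your claim that ``the assigned type is syntactically unchanged'' is false. When the scrutinee $a$ of a $\mathbf{let}$ or $\mathbf{case}$ steps to $a'$, re-applying \rref{PTS-LetPair} or \rref{PTS-Case} yields type $B\{a'/z\}$, not the original $B\{a/z\}$. The paper closes this gap with an explicit application of \rref{PTS-Conv}, using that $a =_\beta a'$ implies $B\{a/z\} =_\beta B\{a'/z\}$. This is the only place where your sketch needs patching; the rest matches. (Your aside about congruence ``inside the domain or codomain of a $\Pi$ or $\Sigma$'' is vacuous in this call-by-name semantics, since $\Pi$- and $\Sigma$-types are values, and the splitting lemma is not actually needed in the pair-$\beta$ case because inversion on the pair already produces the context split.)
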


\begin{proof}

By induction on $ \Gamma  \vdash  \ottnt{a}  :^{ \ottnt{q} }  \ottnt{A} $ and inversion on $ \vdash  \ottnt{a}  \leadsto  \ottnt{a'} $.

\begin{itemize}

\item \Rref{PTS-App}. Have: $  \Gamma_{{\mathrm{1}}}  +  \Gamma_{{\mathrm{2}}}   \vdash   \ottnt{b}  \:  \ottnt{a} ^{ \ottnt{r} }   :^{ \ottnt{q} }   \ottnt{B}  \{  \ottnt{a}  /  \ottmv{x}  \}  $ where $ \Gamma_{{\mathrm{1}}}  \vdash  \ottnt{b}  :^{ \ottnt{q} }   \Pi  \ottmv{x}  :^{ \ottnt{r} } \!  \ottnt{A}  .  \ottnt{B}  $ and $ \Gamma_{{\mathrm{2}}}  \vdash  \ottnt{a}  :^{  \ottnt{q}  \cdot  \ottnt{r}  }  \ottnt{A} $. \\ Let $ \vdash   \ottnt{b}  \:  \ottnt{a} ^{ \ottnt{r} }   \leadsto  \ottnt{c} $. By inversion:

\begin{itemize}
\item $ \vdash   \ottnt{b}  \:  \ottnt{a} ^{ \ottnt{r} }   \leadsto   \ottnt{b'}  \:  \ottnt{a} ^{ \ottnt{r} }  $, when $ \vdash  \ottnt{b}  \leadsto  \ottnt{b'} $. \\
Need to show: $  \Gamma_{{\mathrm{1}}}  +  \Gamma_{{\mathrm{2}}}   \vdash   \ottnt{b'}  \:  \ottnt{a} ^{ \ottnt{r} }   :^{ \ottnt{q} }   \ottnt{B}  \{  \ottnt{a}  /  \ottmv{x}  \}  $.\\
Follows by IH and \rref{PTS-App}.

\item $\ottnt{b} =  \lambda^{ \ottnt{r} }  \ottmv{x}  :  \ottnt{A'}  .  \ottnt{b'} $ and $ \vdash   \ottnt{b}  \:  \ottnt{a} ^{ \ottnt{r} }   \leadsto   \ottnt{b'}  \{  \ottnt{a}  /  \ottmv{x}  \}  $.\\
Need to show: $  \Gamma_{{\mathrm{1}}}  +  \Gamma_{{\mathrm{2}}}   \vdash   \ottnt{b'}  \{  \ottnt{a}  /  \ottmv{x}  \}   :^{ \ottnt{q} }   \ottnt{B}  \{  \ottnt{a}  /  \ottmv{x}  \}  $.\\
By inversion on $ \Gamma_{{\mathrm{1}}}  \vdash   \lambda^{ \ottnt{r} }  \ottmv{x}  :  \ottnt{A'}  .  \ottnt{b'}   :^{ \ottnt{q} }   \Pi  \ottmv{x}  :^{ \ottnt{r} } \!  \ottnt{A}  .  \ottnt{B}  $, we get $  \Gamma_{{\mathrm{1}}}  ,   \ottmv{x}  :^{  \ottnt{q_{{\mathrm{0}}}}  \cdot  \ottnt{r}  }  \ottnt{A''}    \vdash  \ottnt{b'}  :^{ \ottnt{q_{{\mathrm{0}}}} }  \ottnt{B} $ for some $ \ottnt{q_{{\mathrm{0}}}}  <:  \ottnt{q} $ and $ \ottnt{A''}  =_{\beta}  \ottnt{A} $.\\
Now, there are two cases to consider.
\begin{itemize}
\item $\ottnt{q_{{\mathrm{0}}}} = 0$. Since $ \ottnt{q_{{\mathrm{0}}}}  <:  \ottnt{q} $, so $\ottnt{q} = 0$.\\
Then, by \rref{PTS-Conv} and the substitution lemma, $  \Gamma_{{\mathrm{1}}}  +  \Gamma_{{\mathrm{2}}}   \vdash   \ottnt{b'}  \{  \ottnt{a}  /  \ottmv{x}  \}   :^{  0  }   \ottnt{B}  \{  \ottnt{a}  /  \ottmv{x}  \}  $. 
\item $\ottnt{q_{{\mathrm{0}}}} \neq 0$. By lemma \ref{BLDFactP}, $\exists \Gamma'_{{\mathrm{1}}}$ and $\ottnt{r'}$ such that $  \Gamma'_{{\mathrm{1}}}  ,   \ottmv{x}  :^{ \ottnt{r'} }  \ottnt{A''}    \vdash  \ottnt{b'}  :^{  1  }  \ottnt{B} $ and $ \Gamma_{{\mathrm{1}}}  <:   \ottnt{q_{{\mathrm{0}}}}  \cdot  \Gamma'_{{\mathrm{1}}}  $ and $  \ottnt{q_{{\mathrm{0}}}}  \cdot  \ottnt{r}   <:   \ottnt{q_{{\mathrm{0}}}}  \cdot  \ottnt{r'}  $.\\
Since $\ottnt{q_{{\mathrm{0}}}} \neq 0$, therefore $ \ottnt{r}  <:  \ottnt{r'} $. Hence, by \rref{PTS-SubL}, $  \Gamma'_{{\mathrm{1}}}  ,   \ottmv{x}  :^{ \ottnt{r} }  \ottnt{A''}    \vdash  \ottnt{b'}  :^{  1  }  \ottnt{B} $.\\
Now, by lemma \ref{BLDMultP}, $   \ottnt{q}  \cdot  \Gamma'_{{\mathrm{1}}}   ,   \ottmv{x}  :^{  \ottnt{q}  \cdot  \ottnt{r}  }  \ottnt{A''}    \vdash  \ottnt{b'}  :^{ \ottnt{q} }  \ottnt{B} $. By \rref{PTS-SubL}, $  \Gamma_{{\mathrm{1}}}  ,   \ottmv{x}  :^{  \ottnt{q}  \cdot  \ottnt{r}  }  \ottnt{A''}    \vdash  \ottnt{b'}  :^{ \ottnt{q} }  \ottnt{B} $.\\
This case, then, follows by \rref{PTS-Conv} and the substitution lemma.
\end{itemize}

\end{itemize} 

\item \Rref{PTS-LetUnit}. Have: $  \Gamma_{{\mathrm{1}}}  +  \Gamma_{{\mathrm{2}}}   \vdash   \mathbf{let}_{ \ottnt{q_{{\mathrm{0}}}} } \: \mathbf{unit} \: \mathbf{be} \:  \ottnt{a}  \: \mathbf{in} \:  \ottnt{b}   :^{ \ottnt{q} }   \ottnt{B}  \{  \ottnt{a}  /  \ottmv{z}  \}  $ where $  \Delta  ,   \ottmv{z}  :   \mathbf{Unit}     \vdash_{0}  \ottnt{B}  :   \ottmv{s}  $ and $\Delta =  \lfloor  \Gamma_{{\mathrm{1}}}  \rfloor $ and $ \Gamma_{{\mathrm{1}}}  \vdash  \ottnt{a}  :^{  \ottnt{q}  \cdot  \ottnt{q_{{\mathrm{0}}}}  }   \mathbf{Unit}  $ and $ \Gamma_{{\mathrm{2}}}  \vdash  \ottnt{b}  :^{ \ottnt{q} }   \ottnt{B}  \{   \mathbf{unit}   /  \ottmv{z}  \}  $.\\ Let $ \vdash   \mathbf{let}_{ \ottnt{q_{{\mathrm{0}}}} } \: \mathbf{unit} \: \mathbf{be} \:  \ottnt{a}  \: \mathbf{in} \:  \ottnt{b}   \leadsto  \ottnt{c} $. By inversion:

\begin{itemize}

\item $ \vdash   \mathbf{let}_{ \ottnt{q_{{\mathrm{0}}}} } \: \mathbf{unit} \: \mathbf{be} \:  \ottnt{a}  \: \mathbf{in} \:  \ottnt{b}   \leadsto   \mathbf{let}_{ \ottnt{q_{{\mathrm{0}}}} } \: \mathbf{unit} \: \mathbf{be} \:  \ottnt{a'}  \: \mathbf{in} \:  \ottnt{b}  $, when $ \vdash  \ottnt{a}  \leadsto  \ottnt{a'} $.\\
Need to show: $  \Gamma_{{\mathrm{1}}}  +  \Gamma_{{\mathrm{2}}}   \vdash   \mathbf{let}_{ \ottnt{q_{{\mathrm{0}}}} } \: \mathbf{unit} \: \mathbf{be} \:  \ottnt{a'}  \: \mathbf{in} \:  \ottnt{b}   :^{ \ottnt{q} }   \ottnt{B}  \{  \ottnt{a}  /  \ottmv{z}  \}  $.\\
Follows by IH and \rref{PTS-LetUnit,PTS-Conv}.

\item $ \vdash   \mathbf{let}_{ \ottnt{q_{{\mathrm{0}}}} } \: \mathbf{unit} \: \mathbf{be} \:   \mathbf{unit}   \: \mathbf{in} \:  \ottnt{b}   \leadsto  \ottnt{b} $.\\
Need to show: $  \Gamma_{{\mathrm{1}}}  +  \Gamma_{{\mathrm{2}}}   \vdash  \ottnt{b}  :^{ \ottnt{q} }   \ottnt{B}  \{   \mathbf{unit}   /  \ottmv{z}  \}  $.\\
Follows by \rref{PTS-SubL} (note that in case of $ \mathbb{N}_{=} $, $ \overline{ \Gamma_{{\mathrm{1}}} }  = \overline{0}$).

\end{itemize}

\item \Rref{PTS-LetPair}. Have: $  \Gamma_{{\mathrm{1}}}  +  \Gamma_{{\mathrm{2}}}   \vdash   \mathbf{let}_{ \ottnt{q_{{\mathrm{0}}}} } \: (  \ottmv{x} ^{ \ottnt{r} } ,  \ottmv{y}  ) \: \mathbf{be} \:  \ottnt{a}  \: \mathbf{in} \:  \ottnt{b}   :^{ \ottnt{q} }   \ottnt{B}  \{  \ottnt{a}  /  \ottmv{z}  \}  $ where $  \Delta  ,   \ottmv{z}  :   \Sigma  \ottmv{x}  :^{ \ottnt{r} } \!  \ottnt{A_{{\mathrm{1}}}}  .  \ottnt{A_{{\mathrm{2}}}}     \vdash_{0}  \ottnt{B}  :   \ottmv{s}  $ and $\Delta =  \lfloor  \Gamma_{{\mathrm{1}}}  \rfloor $ and $ \Gamma_{{\mathrm{1}}}  \vdash  \ottnt{a}  :^{  \ottnt{q}  \cdot  \ottnt{q_{{\mathrm{0}}}}  }   \Sigma  \ottmv{x}  :^{ \ottnt{r} } \!  \ottnt{A_{{\mathrm{1}}}}  .  \ottnt{A_{{\mathrm{2}}}}  $ and $    \Gamma_{{\mathrm{2}}}  ,   \ottmv{x}  :^{    \ottnt{q}  \cdot  \ottnt{q_{{\mathrm{0}}}}    \cdot  \ottnt{r}  }  \ottnt{A_{{\mathrm{1}}}}     ,   \ottmv{y}  :^{  \ottnt{q}  \cdot  \ottnt{q_{{\mathrm{0}}}}  }  \ottnt{A_{{\mathrm{2}}}}    \vdash  \ottnt{b}  :^{ \ottnt{q} }   \ottnt{B}  \{   (   \ottmv{x}  ^{ \ottnt{r} } ,   \ottmv{y}   )   /  \ottmv{z}  \}  $. \\ Let $ \vdash   \mathbf{let}_{ \ottnt{q_{{\mathrm{0}}}} } \: (  \ottmv{x} ^{ \ottnt{r} } ,  \ottmv{y}  ) \: \mathbf{be} \:  \ottnt{a}  \: \mathbf{in} \:  \ottnt{b}   \leadsto  \ottnt{c} $. By inversion:

\begin{itemize}
\item $ \vdash   \mathbf{let}_{ \ottnt{q_{{\mathrm{0}}}} } \: (  \ottmv{x} ^{ \ottnt{r} } ,  \ottmv{y}  ) \: \mathbf{be} \:  \ottnt{a}  \: \mathbf{in} \:  \ottnt{b}   \leadsto   \mathbf{let}_{ \ottnt{q_{{\mathrm{0}}}} } \: (  \ottmv{x} ^{ \ottnt{r} } ,  \ottmv{y}  ) \: \mathbf{be} \:  \ottnt{a'}  \: \mathbf{in} \:  \ottnt{b}  $, when $ \vdash  \ottnt{a}  \leadsto  \ottnt{a'} $.\\
Need to show: $  \Gamma_{{\mathrm{1}}}  +  \Gamma_{{\mathrm{2}}}   \vdash   \mathbf{let}_{ \ottnt{q_{{\mathrm{0}}}} } \: (  \ottmv{x} ^{ \ottnt{r} } ,  \ottmv{y}  ) \: \mathbf{be} \:  \ottnt{a'}  \: \mathbf{in} \:  \ottnt{b}   :^{ \ottnt{q} }   \ottnt{B}  \{  \ottnt{a}  /  \ottmv{z}  \}  $.\\
Follows by IH and \rref{PTS-LetPair,PTS-Conv}.

\item $ \vdash   \mathbf{let}_{ \ottnt{q_{{\mathrm{0}}}} } \: (  \ottmv{x} ^{ \ottnt{r} } ,  \ottmv{y}  ) \: \mathbf{be} \:   (  \ottnt{a_{{\mathrm{1}}}} ^{ \ottnt{r} } ,  \ottnt{a_{{\mathrm{2}}}}  )   \: \mathbf{in} \:  \ottnt{b}   \leadsto    \ottnt{b}  \{  \ottnt{a_{{\mathrm{1}}}}  /  \ottmv{x}  \}   \{  \ottnt{a_{{\mathrm{2}}}}  /  \ottmv{y}  \}  $.\\
Need to show: $  \Gamma_{{\mathrm{1}}}  +  \Gamma_{{\mathrm{2}}}   \vdash    \ottnt{b}  \{  \ottnt{a_{{\mathrm{1}}}}  /  \ottmv{x}  \}   \{  \ottnt{a_{{\mathrm{2}}}}  /  \ottmv{y}  \}   :^{ \ottnt{q} }   \ottnt{B}  \{   (  \ottnt{a_{{\mathrm{1}}}} ^{ \ottnt{r} } ,  \ottnt{a_{{\mathrm{2}}}}  )   /  \ottmv{z}  \}  $.\\
By inversion on $ \Gamma_{{\mathrm{1}}}  \vdash   (  \ottnt{a_{{\mathrm{1}}}} ^{ \ottnt{r} } ,  \ottnt{a_{{\mathrm{2}}}}  )   :^{  \ottnt{q}  \cdot  \ottnt{q_{{\mathrm{0}}}}  }   \Sigma  \ottmv{x}  :^{ \ottnt{r} } \!  \ottnt{A_{{\mathrm{1}}}}  .  \ottnt{A_{{\mathrm{2}}}}  $, we have:\\
$\exists \Gamma_{{\mathrm{11}}}, \Gamma_{{\mathrm{12}}}$ such that $ \Gamma_{{\mathrm{11}}}  \vdash  \ottnt{a_{{\mathrm{1}}}}  :^{    \ottnt{q}  \cdot  \ottnt{q_{{\mathrm{0}}}}    \cdot  \ottnt{r}  }  \ottnt{A_{{\mathrm{1}}}} $ and $ \Gamma_{{\mathrm{12}}}  \vdash  \ottnt{a_{{\mathrm{2}}}}  :^{  \ottnt{q}  \cdot  \ottnt{q_{{\mathrm{0}}}}  }   \ottnt{A_{{\mathrm{2}}}}  \{  \ottnt{a_{{\mathrm{1}}}}  /  \ottmv{x}  \}  $ and $\Gamma_{{\mathrm{1}}} =  \Gamma_{{\mathrm{11}}}  +  \Gamma_{{\mathrm{12}}} $.\\
Applying the substitution lemma, we get, $     \Gamma_{{\mathrm{2}}}  +  \Gamma_{{\mathrm{11}}}    ,   \ottmv{y}  :^{  \ottnt{q}  \cdot  \ottnt{q_{{\mathrm{0}}}}  }  \ottnt{A_{{\mathrm{2}}}}    \{  \ottnt{a_{{\mathrm{1}}}}  /  \ottmv{x}  \}   \vdash   \ottnt{b}  \{  \ottnt{a_{{\mathrm{1}}}}  /  \ottmv{x}  \}   :^{ \ottnt{q} }   \ottnt{B}  \{   (  \ottnt{a_{{\mathrm{1}}}} ^{ \ottnt{r} } ,   \ottmv{y}   )   /  \ottmv{z}  \}  $.\\
Applying the lemma again, we get, $    \Gamma_{{\mathrm{2}}}  +  \Gamma_{{\mathrm{11}}}    +  \Gamma_{{\mathrm{12}}}   \vdash    \ottnt{b}  \{  \ottnt{a_{{\mathrm{1}}}}  /  \ottmv{x}  \}   \{  \ottnt{a_{{\mathrm{2}}}}  /  \ottmv{y}  \}   :^{ \ottnt{q} }   \ottnt{B}  \{   (  \ottnt{a_{{\mathrm{1}}}} ^{ \ottnt{r} } ,  \ottnt{a_{{\mathrm{2}}}}  )   /  \ottmv{z}  \}  $, as required.
\end{itemize}

\item \Rref{PTS-Case}. Have: $  \Gamma_{{\mathrm{1}}}  +  \Gamma_{{\mathrm{2}}}   \vdash   \mathbf{case}_{ \ottnt{q_{{\mathrm{0}}}} } \:  \ottnt{a}  \: \mathbf{of} \:  \ottmv{x_{{\mathrm{1}}}}  .  \ottnt{b_{{\mathrm{1}}}}  \: ; \:  \ottmv{x_{{\mathrm{2}}}}  .  \ottnt{b_{{\mathrm{2}}}}   :^{ \ottnt{q} }   \ottnt{B}  \{  \ottnt{a}  /  \ottmv{z}  \}  $ where $  \Delta  ,   \ottmv{z}  :   \ottnt{A_{{\mathrm{1}}}}  +  \ottnt{A_{{\mathrm{2}}}}     \vdash_{0}  \ottnt{B}  :   \ottmv{s}  $ and $\Delta =  \lfloor  \Gamma_{{\mathrm{1}}}  \rfloor $ and $ \Gamma_{{\mathrm{1}}}  \vdash  \ottnt{a}  :^{  \ottnt{q}  \cdot  \ottnt{q_{{\mathrm{0}}}}  }   \ottnt{A_{{\mathrm{1}}}}  +  \ottnt{A_{{\mathrm{2}}}}  $ and $  \Gamma_{{\mathrm{2}}}  ,   \ottmv{x_{{\mathrm{1}}}}  :^{  \ottnt{q}  \cdot  \ottnt{q_{{\mathrm{0}}}}  }  \ottnt{A_{{\mathrm{1}}}}    \vdash  \ottnt{b_{{\mathrm{1}}}}  :^{ \ottnt{q} }   \ottnt{B}  \{   \mathbf{inj}_1 \:   \ottmv{x_{{\mathrm{1}}}}    /  \ottmv{z}  \}  $ and $  \Gamma_{{\mathrm{2}}}  ,   \ottmv{x_{{\mathrm{2}}}}  :^{  \ottnt{q}  \cdot  \ottnt{q_{{\mathrm{0}}}}  }  \ottnt{A_{{\mathrm{2}}}}    \vdash  \ottnt{b_{{\mathrm{2}}}}  :^{ \ottnt{q} }   \ottnt{B}  \{   \mathbf{inj}_2 \:   \ottmv{x_{{\mathrm{2}}}}    /  \ottmv{z}  \}  $.\\ Let $ \vdash   \mathbf{case}_{ \ottnt{q_{{\mathrm{0}}}} } \:  \ottnt{a}  \: \mathbf{of} \:  \ottmv{x_{{\mathrm{1}}}}  .  \ottnt{b_{{\mathrm{1}}}}  \: ; \:  \ottmv{x_{{\mathrm{2}}}}  .  \ottnt{b_{{\mathrm{2}}}}   \leadsto  \ottnt{c} $. By inversion:

\begin{itemize}
\item $ \vdash   \mathbf{case}_{ \ottnt{q_{{\mathrm{0}}}} } \:  \ottnt{a}  \: \mathbf{of} \:  \ottmv{x_{{\mathrm{1}}}}  .  \ottnt{b_{{\mathrm{1}}}}  \: ; \:  \ottmv{x_{{\mathrm{2}}}}  .  \ottnt{b_{{\mathrm{2}}}}   \leadsto   \mathbf{case}_{ \ottnt{q_{{\mathrm{0}}}} } \:  \ottnt{a'}  \: \mathbf{of} \:  \ottmv{x_{{\mathrm{1}}}}  .  \ottnt{b_{{\mathrm{1}}}}  \: ; \:  \ottmv{x_{{\mathrm{2}}}}  .  \ottnt{b_{{\mathrm{2}}}}  $, when $ \vdash  \ottnt{a}  \leadsto  \ottnt{a'} $.\\
Need to show: $  \Gamma_{{\mathrm{1}}}  +  \Gamma_{{\mathrm{2}}}   \vdash   \mathbf{case}_{ \ottnt{q_{{\mathrm{0}}}} } \:  \ottnt{a'}  \: \mathbf{of} \:  \ottmv{x_{{\mathrm{1}}}}  .  \ottnt{b_{{\mathrm{1}}}}  \: ; \:  \ottmv{x_{{\mathrm{2}}}}  .  \ottnt{b_{{\mathrm{2}}}}   :^{ \ottnt{q} }   \ottnt{B}  \{  \ottnt{a}  /  \ottmv{z}  \}  $.\\
Follows by IH and \rref{PTS-Case,PTS-Conv}.

\item $ \vdash   \mathbf{case}_{ \ottnt{q_{{\mathrm{0}}}} } \:   (   \mathbf{inj}_1 \:  \ottnt{a_{{\mathrm{1}}}}   )   \: \mathbf{of} \:  \ottmv{x_{{\mathrm{1}}}}  .  \ottnt{b_{{\mathrm{1}}}}  \: ; \:  \ottmv{x_{{\mathrm{2}}}}  .  \ottnt{b_{{\mathrm{2}}}}   \leadsto   \ottnt{b_{{\mathrm{1}}}}  \{  \ottnt{a_{{\mathrm{1}}}}  /  \ottmv{x_{{\mathrm{1}}}}  \}  $.\\
Need to show $  \Gamma_{{\mathrm{1}}}  +  \Gamma_{{\mathrm{2}}}   \vdash   \ottnt{b_{{\mathrm{1}}}}  \{  \ottnt{a_{{\mathrm{1}}}}  /  \ottmv{x_{{\mathrm{1}}}}  \}   :^{ \ottnt{q} }   \ottnt{B}  \{   \mathbf{inj}_1 \:  \ottnt{a_{{\mathrm{1}}}}   /  \ottmv{z}  \}  $.\\
By inversion on $ \Gamma_{{\mathrm{1}}}  \vdash   \mathbf{inj}_1 \:  \ottnt{a_{{\mathrm{1}}}}   :^{  \ottnt{q}  \cdot  \ottnt{q_{{\mathrm{0}}}}  }   \ottnt{A_{{\mathrm{1}}}}  +  \ottnt{A_{{\mathrm{2}}}}  $, we have $ \Gamma_{{\mathrm{1}}}  \vdash  \ottnt{a_{{\mathrm{1}}}}  :^{  \ottnt{q}  \cdot  \ottnt{q_{{\mathrm{0}}}}  }  \ottnt{A_{{\mathrm{1}}}} $.\\
This case, then, follows by applying the substitution lemma.

\item $ \vdash   \mathbf{case}_{ \ottnt{q_{{\mathrm{0}}}} } \:   (   \mathbf{inj}_2 \:  \ottnt{a_{{\mathrm{2}}}}   )   \: \mathbf{of} \:  \ottmv{x_{{\mathrm{1}}}}  .  \ottnt{b_{{\mathrm{1}}}}  \: ; \:  \ottmv{x_{{\mathrm{2}}}}  .  \ottnt{b_{{\mathrm{2}}}}   \leadsto   \ottnt{b_{{\mathrm{2}}}}  \{  \ottnt{a_{{\mathrm{2}}}}  /  \ottmv{x_{{\mathrm{2}}}}  \}  $.\\
Similar to the previous case.

\end{itemize}

\item \Rref{PTS-Weak,PTS-Conv,PTS-SubL,PTS-SubR}. Follows by IH.

\end{itemize}

\end{proof}

%----------------------------------------------------------------------------------------------------

\begin{theorem}[Progress (Theorem \ref{progressDep})] \label{Dprogress}
If $  \emptyset   \vdash  \ottnt{a}  :^{ \ottnt{q} }  \ottnt{A} $, then either $\ottnt{a}$ is a value or there exists $\ottnt{a'}$ such that $ \vdash  \ottnt{a}  \leadsto  \ottnt{a'} $.
\end{theorem}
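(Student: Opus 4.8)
The plan is to proceed by induction on the derivation of $\emptyset \vdash a :^q A$, following the same template as the simply-typed Progress theorem (Theorem \ref{BLSProg}), but now accounting for the extra type formers, the conversion rule, and dependency in codomains.

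The routine cases split into two kinds. First, the structural rules \rref{PTS-Weak}, \rref{PTS-Conv}, \rref{PTS-SubL}, and \rref{PTS-SubR} leave the subject term unchanged, so the induction hypothesis on the premise gives the conclusion directly. Second, the type-former and introduction rules --- \rref{PTS-Axiom}, \rref{PTS-Pi}, \rref{PTS-Sigma}, \rref{PTS-Sum} for types and \rref{PTS-Lam}, \rref{PTS-Pair}, \rref{PTS-InjOne} (together with the unit- and $\mathbf{inj}_2$-introduction rules) for terms --- all produce terms that are syntactic values under the call-by-name value grammar, so nothing needs to be shown. The rule \rref{PTS-Var} cannot be the last rule applied, since the context is empty.

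The substantive cases are the elimination rules \rref{PTS-App}, \rref{PTS-LetPair}, \rref{PTS-LetUnit}, and \rref{PTS-Case}. In each, I apply the induction hypothesis to the principal premise --- the head $b$ of an application $ \ottnt{b}  \:  \ottnt{a} ^{ \ottnt{r} } $, or the scrutinee $a$ of a $\mathbf{let}$ or $\mathbf{case}$. If that subterm steps, the corresponding congruence step rule lets the whole term step. If it is a value, I invoke a canonical forms lemma to determine its shape: a closed value of type $ \Pi  \ottmv{x}  :^{ \ottnt{r} } \!  \ottnt{A}  .  \ottnt{B} $ is some $\lambda^{r}x{:}A'.b'$, a closed value of type $ \Sigma  \ottmv{x}  :^{ \ottnt{r} } \!  \ottnt{A}  .  \ottnt{B} $ is some $ (  \ottnt{a_{{\mathrm{1}}}} ^{ \ottnt{r} } ,  \ottnt{a_{{\mathrm{2}}}}  ) $, a closed value of type $ \mathbf{Unit} $ is $ \mathbf{unit} $, and a closed value of type $ \ottnt{A_{{\mathrm{1}}}}  +  \ottnt{A_{{\mathrm{2}}}} $ is $ \mathbf{inj}_1 \:  \ottnt{a_{{\mathrm{1}}}} $ or $ \mathbf{inj}_2 \:  \ottnt{a_{{\mathrm{2}}}} $. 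With the shape known, the appropriate $\beta$-rule from the operational semantics fires; the matching of the grade annotation $r$ between the introduction and elimination forms is guaranteed by the typing inversion packaged into the canonical forms lemma, exactly as in the simply-typed proof.

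The main obstacle is the canonical forms lemma itself, which is no longer a shallow inversion because of \rref{PTS-Conv}: a value could, a priori, have been retyped through an arbitrary chain of $\beta$-conversions. The standard route is to first establish that $\beta$-reduction on terms is confluent (Church--Rosser), and hence that the relation $=_{\beta}$ generated by the congruence rules of Figure \ref{equality} cannot identify two terms with distinct outermost type-constructor heads --- a $\Pi$-type is never $\beta$-convertible to a $\Sigma$-type, to a sum type, or to a sort, and so on. Combining this ``no confusion'' fact with inversion on the value-introduction rules (peeling off any \rref{PTS-Conv}, \rref{PTS-SubL}, and \rref{PTS-SubR} steps, the latter being harmless on grades) yields the canonical forms lemma. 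Everything else is bookkeeping over the grade annotations, which behaves just as it does in the proof of Theorem \ref{BLSProg}.
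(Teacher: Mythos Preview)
Your proposal is correct and follows essentially the same approach as the paper: induction on the typing derivation, with the structural rules handled by the induction hypothesis, the introduction and type-former rules yielding values, \rref{PTS-Var} vacuous in the empty context, and the elimination rules handled via the induction hypothesis on the principal subterm together with canonical forms and the matching $\beta$-rule. You are in fact more explicit than the paper about the canonical forms issue in the presence of \rref{PTS-Conv}; the paper simply writes ``by inversion'' at that point, tacitly relying on the confluence-based no-confusion argument you spell out.
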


\begin{proof}

By induction on $  \emptyset   \vdash  \ottnt{a}  :^{ \ottnt{q} }  \ottnt{A} $.
\begin{itemize}

\item \Rref{PTS-App}. Have: $  \emptyset   \vdash   \ottnt{b}  \:  \ottnt{a} ^{ \ottnt{r} }   :^{ \ottnt{q} }   \ottnt{B}  \{  \ottnt{a}  /  \ottmv{x}  \}  $ where $  \emptyset   \vdash  \ottnt{b}  :^{ \ottnt{q} }   \Pi  \ottmv{x}  :^{ \ottnt{r} } \!  \ottnt{A}  .  \ottnt{B}  $ and $  \emptyset   \vdash  \ottnt{a}  :^{  \ottnt{q}  \cdot  \ottnt{r}  }  \ottnt{A} $. \\
Need to show: $\exists c,  \vdash   \ottnt{b}  \:  \ottnt{a} ^{ \ottnt{r} }   \leadsto  \ottnt{c} $.\\
By IH, $\ottnt{b}$ is either a value or $ \vdash  \ottnt{b}  \leadsto  \ottnt{b'} $.\\
If $\ottnt{b}$ is a value, then, by inversion, $\ottnt{b} =  \lambda^{ \ottnt{r} }  \ottmv{x}  :  \ottnt{A'}  .  \ottnt{b'} $ for some $\ottnt{b'}$. Therefore, $ \vdash   \ottnt{b}  \:  \ottnt{a} ^{ \ottnt{r} }   \leadsto   \ottnt{b'}  \{  \ottnt{a}  /  \ottmv{x}  \}  $.\\
Otherwise, $ \vdash   \ottnt{b}  \:  \ottnt{a} ^{ \ottnt{r} }   \leadsto   \ottnt{b'}  \:  \ottnt{a} ^{ \ottnt{r} }  $.

\item \Rref{PTS-LetUnit}. Have: $  \emptyset   \vdash   \mathbf{let}_{ \ottnt{q_{{\mathrm{0}}}} } \: \mathbf{unit} \: \mathbf{be} \:  \ottnt{a}  \: \mathbf{in} \:  \ottnt{b}   :^{ \ottnt{q} }   \ottnt{B}  \{  \ottnt{a}  /  \ottmv{z}  \}  $ where $  \ottmv{z}  :   \mathbf{Unit}    \vdash_{0}  \ottnt{B}  :   \ottmv{s}  $ and $  \emptyset   \vdash  \ottnt{a}  :^{  \ottnt{q}  \cdot  \ottnt{q_{{\mathrm{0}}}}  }   \mathbf{Unit}  $ and $  \emptyset   \vdash  \ottnt{b}  :^{ \ottnt{q} }   \ottnt{B}  \{   \mathbf{unit}   /  \ottmv{z}  \}  $.\\
Ned to show: $\exists c,  \vdash   \mathbf{let}_{ \ottnt{q_{{\mathrm{0}}}} } \: \mathbf{unit} \: \mathbf{be} \:  \ottnt{a}  \: \mathbf{in} \:  \ottnt{b}   \leadsto  \ottnt{c} $.\\
By IH, $\ottnt{a}$ is either a value or $ \vdash  \ottnt{a}  \leadsto  \ottnt{a'} $.\\
If $\ottnt{a}$ is a value, then, by inversion, $\ottnt{a} =  \mathbf{unit} $. Therefore, $ \vdash   \mathbf{let}_{ \ottnt{q_{{\mathrm{0}}}} } \: \mathbf{unit} \: \mathbf{be} \:  \ottnt{a}  \: \mathbf{in} \:  \ottnt{b}   \leadsto  \ottnt{b} $.\\
Otherwise, $ \vdash   \mathbf{let}_{ \ottnt{q_{{\mathrm{0}}}} } \: \mathbf{unit} \: \mathbf{be} \:  \ottnt{a}  \: \mathbf{in} \:  \ottnt{b}   \leadsto   \mathbf{let}_{ \ottnt{q_{{\mathrm{0}}}} } \: \mathbf{unit} \: \mathbf{be} \:  \ottnt{a'}  \: \mathbf{in} \:  \ottnt{b}  $.

\item \Rref{PTS-LetPair}. Have: $  \emptyset   \vdash   \mathbf{let}_{ \ottnt{q_{{\mathrm{0}}}} } \: (  \ottmv{x} ^{ \ottnt{r} } ,  \ottmv{y}  ) \: \mathbf{be} \:  \ottnt{a}  \: \mathbf{in} \:  \ottnt{b}   :^{ \ottnt{q} }   \ottnt{B}  \{  \ottnt{a}  /  \ottmv{z}  \}  $ where $  \ottmv{z}  :   \Sigma  \ottmv{x}  :^{ \ottnt{r} } \!  \ottnt{A_{{\mathrm{1}}}}  .  \ottnt{A_{{\mathrm{2}}}}    \vdash_{0}  \ottnt{B}  :   \ottmv{s}  $ and $  \emptyset   \vdash  \ottnt{a}  :^{  \ottnt{q}  \cdot  \ottnt{q_{{\mathrm{0}}}}  }   \Sigma  \ottmv{x}  :^{ \ottnt{r} } \!  \ottnt{A_{{\mathrm{1}}}}  .  \ottnt{A_{{\mathrm{2}}}}  $ and $   \ottmv{x}  :^{    \ottnt{q}  \cdot  \ottnt{q_{{\mathrm{0}}}}    \cdot  \ottnt{r}  }  \ottnt{A_{{\mathrm{1}}}}   ,   \ottmv{y}  :^{  \ottnt{q}  \cdot  \ottnt{q_{{\mathrm{0}}}}  }  \ottnt{A_{{\mathrm{2}}}}    \vdash  \ottnt{b}  :^{ \ottnt{q} }   \ottnt{B}  \{   (   \ottmv{x}  ^{ \ottnt{r} } ,   \ottmv{y}   )   /  \ottmv{z}  \}  $.\\
Need to show: $\exists c,  \vdash   \mathbf{let}_{ \ottnt{q_{{\mathrm{0}}}} } \: (  \ottmv{x} ^{ \ottnt{r} } ,  \ottmv{y}  ) \: \mathbf{be} \:  \ottnt{a}  \: \mathbf{in} \:  \ottnt{b}   \leadsto  \ottnt{c} $.\\
By IH, $\ottnt{a}$ is either a value or $ \vdash  \ottnt{a}  \leadsto  \ottnt{a'} $.\\
If $\ottnt{a}$ is a value, then, by inversion, $\ottnt{a} =  (  \ottnt{a_{{\mathrm{1}}}} ^{ \ottnt{r} } ,  \ottnt{a_{{\mathrm{2}}}}  ) $. Therefore,
$ \vdash   \mathbf{let}_{ \ottnt{q_{{\mathrm{0}}}} } \: (  \ottmv{x} ^{ \ottnt{r} } ,  \ottmv{y}  ) \: \mathbf{be} \:  \ottnt{a}  \: \mathbf{in} \:  \ottnt{b}   \leadsto    \ottnt{b}  \{  \ottnt{a_{{\mathrm{1}}}}  /  \ottmv{x}  \}   \{  \ottnt{a_{{\mathrm{2}}}}  /  \ottmv{y}  \}  $.\\
Otherwise, $ \vdash   \mathbf{let}_{ \ottnt{q_{{\mathrm{0}}}} } \: (  \ottmv{x} ^{ \ottnt{r} } ,  \ottmv{y}  ) \: \mathbf{be} \:  \ottnt{a}  \: \mathbf{in} \:  \ottnt{b}   \leadsto   \mathbf{let}_{ \ottnt{q_{{\mathrm{0}}}} } \: (  \ottmv{x} ^{ \ottnt{r} } ,  \ottmv{y}  ) \: \mathbf{be} \:  \ottnt{a'}  \: \mathbf{in} \:  \ottnt{b}  $.

\item \Rref{ST-Case}. Have: $  \emptyset   \vdash   \mathbf{case}_{ \ottnt{q_{{\mathrm{0}}}} } \:  \ottnt{a}  \: \mathbf{of} \:  \ottmv{x_{{\mathrm{1}}}}  .  \ottnt{b_{{\mathrm{1}}}}  \: ; \:  \ottmv{x_{{\mathrm{2}}}}  .  \ottnt{b_{{\mathrm{2}}}}   :^{ \ottnt{q} }   \ottnt{B}  \{  \ottnt{a}  /  \ottmv{z}  \}  $ where $  \ottmv{z}  :   \ottnt{A_{{\mathrm{1}}}}  +  \ottnt{A_{{\mathrm{2}}}}    \vdash_{0}  \ottnt{B}  :   \ottmv{s}  $ and $  \emptyset   \vdash  \ottnt{a}  :^{  \ottnt{q}  \cdot  \ottnt{q_{{\mathrm{0}}}}  }   \ottnt{A_{{\mathrm{1}}}}  +  \ottnt{A_{{\mathrm{2}}}}  $ and $  \ottmv{x_{{\mathrm{1}}}}  :^{  \ottnt{q}  \cdot  \ottnt{q_{{\mathrm{0}}}}  }  \ottnt{A_{{\mathrm{1}}}}   \vdash  \ottnt{b_{{\mathrm{1}}}}  :^{ \ottnt{q} }   \ottnt{B}  \{   \mathbf{inj}_1 \:   \ottmv{x_{{\mathrm{1}}}}    /  \ottmv{z}  \}  $ and $  \ottmv{x_{{\mathrm{2}}}}  :^{  \ottnt{q}  \cdot  \ottnt{q_{{\mathrm{0}}}}  }  \ottnt{A_{{\mathrm{2}}}}   \vdash  \ottnt{b_{{\mathrm{2}}}}  :^{ \ottnt{q} }   \ottnt{B}  \{   \mathbf{inj}_2 \:   \ottmv{x_{{\mathrm{2}}}}    /  \ottmv{z}  \}  $.\\
Need to show: $\exists c,  \vdash   \mathbf{case}_{ \ottnt{q_{{\mathrm{0}}}} } \:  \ottnt{a}  \: \mathbf{of} \:  \ottmv{x_{{\mathrm{1}}}}  .  \ottnt{b_{{\mathrm{1}}}}  \: ; \:  \ottmv{x_{{\mathrm{2}}}}  .  \ottnt{b_{{\mathrm{2}}}}   \leadsto  \ottnt{c} $.\\
By IH, $\ottnt{a}$ is either a value or $ \vdash  \ottnt{a}  \leadsto  \ottnt{a'} $.\\
If $\ottnt{a}$ is a value, then $\ottnt{a} =  \mathbf{inj}_1 \:  \ottnt{a_{{\mathrm{1}}}} $ or $\ottnt{a} =  \mathbf{inj}_2 \:  \ottnt{a_{{\mathrm{2}}}} $. \\
Then, $ \vdash   \mathbf{case}_{ \ottnt{q_{{\mathrm{0}}}} } \:  \ottnt{a}  \: \mathbf{of} \:  \ottmv{x_{{\mathrm{1}}}}  .  \ottnt{b_{{\mathrm{1}}}}  \: ; \:  \ottmv{x_{{\mathrm{2}}}}  .  \ottnt{b_{{\mathrm{2}}}}   \leadsto   \ottnt{b_{{\mathrm{1}}}}  \{  \ottnt{a_{{\mathrm{1}}}}  /  \ottmv{x_{{\mathrm{1}}}}  \}  $ or $ \vdash   \mathbf{case}_{ \ottnt{q_{{\mathrm{0}}}} } \:  \ottnt{a}  \: \mathbf{of} \:  \ottmv{x_{{\mathrm{1}}}}  .  \ottnt{b_{{\mathrm{1}}}}  \: ; \:  \ottmv{x_{{\mathrm{2}}}}  .  \ottnt{b_{{\mathrm{2}}}}   \leadsto   \ottnt{b_{{\mathrm{2}}}}  \{  \ottnt{a_{{\mathrm{2}}}}  /  \ottmv{x_{{\mathrm{2}}}}  \}  $.\\
Otherwise, $ \vdash   \mathbf{case}_{ \ottnt{q_{{\mathrm{0}}}} } \:  \ottnt{a}  \: \mathbf{of} \:  \ottmv{x_{{\mathrm{1}}}}  .  \ottnt{b_{{\mathrm{1}}}}  \: ; \:  \ottmv{x_{{\mathrm{2}}}}  .  \ottnt{b_{{\mathrm{2}}}}   \leadsto   \mathbf{case}_{ \ottnt{q_{{\mathrm{0}}}} } \:  \ottnt{a'}  \: \mathbf{of} \:  \ottmv{x_{{\mathrm{1}}}}  .  \ottnt{b_{{\mathrm{1}}}}  \: ; \:  \ottmv{x_{{\mathrm{2}}}}  .  \ottnt{b_{{\mathrm{2}}}}  $.

\item \Rref{PTS-Weak,PTS-Conv,PTS-SubL,PTS-SubR}. Follows by IH.

\item \Rref{PTS-Var}. Does not apply since the context here is empty.

\item The terms typed by the remaining rules are values.

\end{itemize}

\end{proof}

%-----------------------------------------------------------------------------------------------------

%----------------------------------------------------------------------------------
\section{Heap Semantics for PTS Version of LDC} \label{heapPTS}
%----------------------------------------------------------------------------------

Recall that to show heap soundness for a dependent type system, we need to allow delayed substitution in types. Towards this end, we extend contexts with definitions, which mimic substitutions. The definitions are only used to derive type equalities and are orthogonal to linearity and dependency analyses. The typing rules for definitions and the conversion rule where definitions are used are shown in Figure \ref{def}. We add these extra rules to the type system of the calculus. 

\begin{figure}
\text{context}, $\Gamma ::=  \emptyset  \: | \:  \Gamma  ,   \ottmv{x}  :^{ \ottnt{q} }  \ottnt{A}   \: | \:  \Gamma  ,   \ottmv{x}  =  \ottnt{a}  :^{ \ottnt{q} }  \ottnt{A}  $
\drules[PTS]{$ \Gamma  \vdash  \ottnt{a}  :^{ \ottnt{q} }  \ottnt{A} $}{Extra Rules}{DefVar,DefWeak,DefConv}
\caption{Typing Rules With Definitions}
\label{def}
\end{figure}  

There are a few things to note:
\begin{itemize}
\item Definitions do not interact with the analyses, as we see in \rref{PTS-DefVar,PTS-DefWeak}.
\item The conversion \rref{PTS-DefConv} checks for equality of types after substituting the definitions in the types. Here, $ \ottnt{A}  \{  \Delta  \} $ denotes the type $\ottnt{A}$ with the definitions in $\Delta$ substituted in reverse order. With this rule, we can show: $  \ottmv{x}  =   \mathbf{Unit}   :^{  0  }   \ottmv{s}    \vdash   \lambda^{  1  }  \ottmv{y}  :   \ottmv{x}   .   \ottmv{y}    :^{  1  }   \Pi  \ottmv{y}  :^{  1  } \!   \mathbf{Unit}   .   \mathbf{Unit}   $.
\end{itemize}

The definitions allow us to communicate to the type system the substitutions that have been delayed by the heap. To enable this communication, we also need to update the compatibility relation, as shown in Figure \ref{defcompat}.
\begin{figure}
\drules[HeapCompat]{$ \ottnt{H}  \models  \Gamma $}{Updated Compatibility}{DefEmpty,DefCons}
\caption{Compatibility Relation With Definitions}
\label{defcompat}
\end{figure} 

These extensions do not alter the essential character of the underlying system. The following lemmas establish the correspondence between the underlying system and the extended system. To distinguish, let us denote the typing and the compatibility judgments of the underlying system as $ \Gamma  \vdash^{\mathbf{u} }  \ottnt{a}  :^{ \ottnt{q} }  \ottnt{A} $ and $ \ottnt{H}  \models^{\mathbf{u} }  \Gamma $ and those of the extended system as $ \Gamma  \vdash^{\mathbf{e} }  \ottnt{a}  :^{ \ottnt{q} }  \ottnt{A} $ and $ \ottnt{H}  \models^{\mathbf{e} }  \Gamma $ respectively. Also, for $ \ottnt{H}  \models^{\mathbf{u} }  \Gamma $, let $ \Gamma _{ \ottnt{H} } $ be the context $\Gamma$ with the variables defined according to $\ottnt{H}$. Then, the multi-substitution lemma below says that given a derivation in the extended system, substituting the definitions gives us a derivation in the underlying system. The elaboration lemma says that given a derivation in the underlying system, adding appropriate definitions to the context gives us a derivation in the extended system. % the context with definitions gives we can derive in the extended system with variables in the context defined according to a compatible heap.  
\begin{lemma}[Multi-Substitution]\label{MultiSubst}
If $ \ottnt{H}  \models^{\mathbf{e} }  \Gamma $ and $ \Gamma  \vdash^{\mathbf{e} }  \ottnt{a}  :^{ \ottnt{q} }  \ottnt{A} $, then $  \emptyset   \vdash^{\mathbf{u} }   \ottnt{a}  \{  \Delta  \}   :^{ \ottnt{q} }   \ottnt{A}  \{  \Delta  \}  $ where $ \Delta  =   \lfloor  \Gamma  \rfloor  $.
\end{lemma}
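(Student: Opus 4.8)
The plan is to prove the Multi-Substitution lemma by induction on the derivation $\ottnt{H} \models^{\mathbf{e}} \Gamma$, peeling off one entry of the heap/context at a time from the right (equivalently, processing the heap in reverse order, which matches the order in which $\ottnt{a}\{\Delta\}$ substitutes the definitions). The base case is $\ottnt{H} = \emptyset$ and $\Gamma = \emptyset$ (rule \rref{HeapCompat-DefEmpty}), where there is nothing to substitute, so $\ottnt{a}\{\emptyset\} = \ottnt{a}$ and $\ottnt{A}\{\emptyset\} = \ottnt{A}$, and the conclusion is immediate since an extended-system derivation over an empty context is already an underlying-system derivation. For the inductive step, the last entry of $\Gamma$ is either an ordinary assumption $\ottmv{x} :^{\ottnt{q_0}} \ottnt{C}$ paired with a heap entry by \rref{HeapCompat-DefCons}, or a definition $\ottmv{x} = \ottnt{c} :^{\ottnt{q_0}} \ottnt{C}$ paired with $\ottmv{x} \overset{\ottnt{q_0}}{\mapsto} \ottnt{c}$ in the heap.

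The key move in the inductive step is to combine two ingredients: first, invert the compatibility judgment to extract, for the last heap entry $\ottmv{x} \overset{\ottnt{r_0}}{\mapsto} \ottnt{c}$, a typing derivation $\Gamma' \vdash^{\mathbf{e}} \ottnt{c} :^{\ottnt{r_0}} \ottnt{C}$ over the prefix context, together with $\ottnt{H_0} \models^{\mathbf{e}} \Gamma'$; second, apply the ordinary Substitution lemma of the extended system (Lemma~\ref{DSubst}, whose statement and proof carry over to the extended calculus since the definition rules are orthogonal to the analyses) to replace $\ottmv{x}$ by $\ottnt{c}$ in the derivation $\Gamma \vdash^{\mathbf{e}} \ottnt{a} :^{\ottnt{q}} \ottnt{A}$. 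This yields a derivation of $\ottnt{a}\{\ottnt{c}/\ottmv{x}\} :^{\ottnt{q}} \ottnt{A}\{\ottnt{c}/\ottmv{x}\}$ over a context whose underlying bare context is one entry shorter, and whose heap-closure is compatible with the shortened heap $\ottnt{H_0}$. I would then apply the induction hypothesis to this shorter derivation and note that substituting the remaining definitions $\Delta_0 = \lfloor\Gamma'\rfloor$ into $\ottnt{a}\{\ottnt{c}/\ottmv{x}\}$ produces exactly $\ottnt{a}\{\Delta\}$, by acyclicity and uniqueness of the heap (the definition of $\ottnt{c}$ does not mention $\ottmv{x}$ or later-defined variables, so the nested substitutions compose cleanly into the single reverse-order multi-substitution), and similarly for $\ottnt{A}$.

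There is a subtlety in the \rref{HeapCompat-DefCons} case (an ordinary, non-definition assumption): here the heap still records a value $\ottmv{x} \overset{\ottnt{r_0}}{\mapsto}\ottnt{c}$ even though $\Gamma$ carries $\ottmv{x}$ as a plain assumption rather than a definition. I expect the compatibility rules are set up so that this case still provides a well-typed $\ottnt{c}$ at the right grade (read off from \rref{HeapCompat-Cons} / \rref{HeapCompat-DefCons}), so the same substitution argument applies; one just has to be careful that the grade bookkeeping ($\ottnt{r_0}$ in the heap versus the grade in $\Gamma$) lines up, using the Multiplication and Splitting lemmas (Lemmas~\ref{BLSMult}, \ref{BLSSplit} and their PTS analogues) exactly as in the proof of Lemma~\ref{heaphelper} if any redistribution of grades is needed. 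The main obstacle I anticipate is precisely this interplay between the grade annotations in the heap and those in the context during the substitution step, together with verifying that the delayed-substitution definitions can be discharged in the correct order — i.e., that $(\cdots(\ottnt{a}\{\ottnt{c_n}/\ottmv{x_n}\})\cdots)\{\ottnt{c_1}/\ottmv{x_1}\}$ equals $\ottnt{a}\{\Delta\}$, which relies essentially on the acyclicity invariant. Everything else is routine bookkeeping that I would not spell out in full.
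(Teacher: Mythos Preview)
Your approach is correct and matches the paper's: the paper's proof is literally ``By induction on $ \ottnt{H}  \models^{\mathbf{e} }  \Gamma $,'' and your plan spells out exactly that induction, peeling off one heap/context entry at a time and invoking the Substitution lemma, with acyclicity ensuring the substitutions compose into the multi-substitution $\ottnt{a}\{\Delta\}$. One minor point: your worry about the ``ordinary, non-definition assumption'' case is unnecessary, since the updated compatibility rules \rref{HeapCompat-DefEmpty,HeapCompat-DefCons} always produce definition entries in $\Gamma$ (this is noted in the proof of Lemma~\ref{heaphelperDep}, where \rref{PTS-Var,PTS-Weak} are ruled out precisely because every assumption is a definition), so that case never arises and the grade-redistribution machinery you anticipate is not needed here.
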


\begin{proof}
By induction on $ \ottnt{H}  \models^{\mathbf{e} }  \Gamma $.
\end{proof}

\begin{lemma}[Elaboration]\label{Elaboration}
If $ \ottnt{H}  \models^{\mathbf{u} }  \Gamma $ and $ \Gamma  \vdash^{\mathbf{u} }  \ottnt{a}  :^{ \ottnt{q} }  \ottnt{A} $, then $ \ottnt{H}  \models^{\mathbf{e} }   \Gamma _{ \ottnt{H} }  $ and $  \Gamma _{ \ottnt{H} }   \vdash^{\mathbf{e} }  \ottnt{a}  :^{ \ottnt{q} }  \ottnt{A} $.
\end{lemma}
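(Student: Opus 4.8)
The plan is to prove both conclusions of the lemma simultaneously by induction on the derivation of $ \ottnt{H}  \models^{\mathbf{u} }  \Gamma $. The two parts are genuinely intertwined: \rref{HeapCompat-Cons} carries a typing side condition on the defining term, and lifting that side condition from the underlying to the extended system is precisely an instance of the typing conclusion for a shorter heap. The observation that keeps the typing part tractable is that the extended system is obtained from the underlying one \emph{purely} by adding rules (\rref{PTS-DefVar}, \rref{PTS-DefWeak}, \rref{PTS-DefConv}) and by enlarging the context grammar; since the underlying rules cannot introduce definitions into contexts, a derivation $ \Gamma  \vdash^{\mathbf{u} }  \ottnt{a}  :^{ \ottnt{q} }  \ottnt{A} $ is, verbatim, an extended derivation $ \Gamma  \vdash^{\mathbf{e} }  \ottnt{a}  :^{ \ottnt{q} }  \ottnt{A} $ over the very same definition-free context. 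The only gap is that we need the conclusion over $ \Gamma _{ \ottnt{H} } $, i.e.\ over the context in which every binding that $\ottnt{H}$ defines has been promoted to a defined binding.

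To close that gap I would first isolate a \emph{definitional promotion} lemma for the extended system: if $   \Gamma_{{\mathrm{1}}}  ,     \ottmv{x}  :^{ \ottnt{q_{{\mathrm{0}}}} }  \ottnt{C}   ,  \Gamma_{{\mathrm{2}}}     \vdash^{\mathbf{e} }  \ottnt{b}  :^{ \ottnt{q} }  \ottnt{B} $ and $ \Gamma_{{\mathrm{1}}}  \vdash^{\mathbf{e} }  \ottnt{c}  :^{ \ottnt{q_{{\mathrm{0}}}} }  \ottnt{C} $, then $   \Gamma_{{\mathrm{1}}}  ,     \ottmv{x}  =  \ottnt{c}  :^{ \ottnt{q_{{\mathrm{0}}}} }  \ottnt{C}   ,  \Gamma_{{\mathrm{2}}}     \vdash^{\mathbf{e} }  \ottnt{b}  :^{ \ottnt{q} }  \ottnt{B} $. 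This is a routine induction on the extended derivation: the only cases that use the context are \rref{PTS-Var} fired on $\ottmv{x}$, replaced by \rref{PTS-DefVar}, and a weakening by $\ottmv{x}$, replaced by \rref{PTS-DefWeak}; every other rule is reproduced unchanged. The well-formedness premises ($\vdash_0$ judgments in \rref{PTS-Pi,PTS-LetPair,PTS-Case,PTS-Weak}) are themselves extended typing judgments and so are handled by the same induction, while the $ \lfloor  \cdot  \rfloor $-equality side conditions survive because promoting a single binding changes its erasure in exactly the same way on both sides of each such equation. Iterating this lemma over the finitely many variables that $\ottnt{H}$ defines turns any derivation over $\Gamma$ into one over $ \Gamma _{ \ottnt{H} } $.

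With the helper in hand the induction is short. In the base case $\ottnt{H} =  \emptyset $, $\Gamma =  \emptyset $ (\rref{HeapCompat-Empty}), so $ \Gamma _{ \ottnt{H} }  =  \emptyset $, $ \emptyset   \models^{\mathbf{e} }   \emptyset $ by \rref{HeapCompat-DefEmpty}, and the underlying derivation is already an extended one. In the cons case $\ottnt{H} =   \ottnt{H_{{\mathrm{0}}}}  ,   \ottmv{x}  \overset{ \ottnt{r_{{\mathrm{0}}}} }{\mapsto}  \ottnt{c}  $; inverting \rref{HeapCompat-Cons} yields $ \ottnt{H_{{\mathrm{0}}}}  \models^{\mathbf{u} }  \Gamma_{{\mathrm{0}}} $ and $ \Gamma_{{\mathrm{0}}}  \vdash^{\mathbf{u} }  \ottnt{c}  :^{ \ottnt{r_{{\mathrm{0}}}} }  \ottnt{C} $ for the appropriate $\Gamma_{{\mathrm{0}}}$, with $\Gamma$ that context (after the rule's padding/addition) extended by $ \ottmv{x}  :^{ \ottnt{r_{{\mathrm{0}}}} }  \ottnt{C} $. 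The induction hypothesis gives $ \ottnt{H_{{\mathrm{0}}}}  \models^{\mathbf{e} }   \Gamma_{{\mathrm{0}}} _{ \ottnt{H_{{\mathrm{0}}}} }  $ and, applied to the side-condition derivation, $  \Gamma_{{\mathrm{0}}} _{ \ottnt{H_{{\mathrm{0}}}} }   \vdash^{\mathbf{e} }  \ottnt{c}  :^{ \ottnt{r_{{\mathrm{0}}}} }  \ottnt{C} $; feeding these to \rref{HeapCompat-DefCons} yields $ \ottnt{H}  \models^{\mathbf{e} }   \Gamma _{ \ottnt{H} }  $. For the typing conclusion, read $ \Gamma  \vdash^{\mathbf{u} }  \ottnt{a}  :^{ \ottnt{q} }  \ottnt{A} $ as an extended derivation over $\Gamma$ and apply the promotion lemma once per binding defined by $\ottnt{H}$ (the needed well-typedness of each defining term being extracted by inversion from the just-established $ \ottnt{H}  \models^{\mathbf{e} }   \Gamma _{ \ottnt{H} }  $), landing at $  \Gamma _{ \ottnt{H} }   \vdash^{\mathbf{e} }  \ottnt{a}  :^{ \ottnt{q} }  \ottnt{A} $.

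I expect the main obstacle to be the bookkeeping around context erasure: one must fix exactly how $ \lfloor  \cdot  \rfloor $ acts on definition-carrying bindings so that it retains enough structure for \rref{PTS-DefConv}'s substitution $ \ottnt{A}  \{  \Delta  \} $ to be well-formed while still agreeing with the definition-free erasure on everything the other premises inspect, and then check that promoting a plain binding to a defined one never invalidates a premise of any rule. This also means the precise shapes of \rref{HeapCompat-Cons} (the padding/splitting it performs) and of the erasure operator, only sketched in the main text, must be nailed down before the argument is fully rigorous; none of it is deep, but it is where the proof does its real work.
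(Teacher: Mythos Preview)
Your proposal is correct and follows the same route the paper indicates: induction on the derivation of $ \ottnt{H}  \models^{\mathbf{u} }  \Gamma $. The paper's proof is a one-liner (``By induction on $ \ottnt{H}  \models^{\mathbf{u} }  \Gamma $''), and your write-up is essentially a faithful expansion of that hint; the definitional-promotion helper you isolate is exactly the kind of auxiliary lemma one needs to make the inductive step go through, and your caveat about pinning down the precise shape of \rref{HeapCompat-Cons} and of $ \lfloor  \cdot  \rfloor $ on definition-carrying contexts is well taken.
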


\begin{proof}
By induction on $ \ottnt{H}  \models^{\mathbf{u} }  \Gamma $.
\end{proof}

Next, we prove the soundness theorem for the extended system.
%---------------------------------------------------------------------------------------------------

\begin{lemma} \label{heaphelperDep}
If $ \ottnt{H}  \models   \Gamma_{{\mathrm{1}}}  +  \Gamma_{{\mathrm{2}}}  $ and $ \Gamma_{{\mathrm{2}}}  \vdash  \ottnt{a}  :^{ \ottnt{q} }  \ottnt{A} $ and $q \neq 0$, then either $\ottnt{a}$ is a value or there exists $\ottnt{H'}, \Gamma'_{{\mathrm{2}}}, \ottnt{a'}$ such that:
\begin{itemize}
\item $ [  \ottnt{H}  ]  \ottnt{a}  \Longrightarrow^{ \ottnt{q} }_{ \ottnt{S} } [  \ottnt{H'}  ]  \ottnt{a'} $
\item $ \ottnt{H'}  \models   \Gamma_{{\mathrm{1}}}  +  \Gamma'_{{\mathrm{2}}}  $
\item $ \Gamma'_{{\mathrm{2}}}  \vdash  \ottnt{a'}  :^{ \ottnt{q} }  \ottnt{A} $
\end{itemize} 
\end{lemma}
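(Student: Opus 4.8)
The plan is to prove Lemma~\ref{heaphelperDep} by induction on the typing derivation $\Gamma_{{\mathrm{2}}} \vdash \ottnt{a} :^{\ottnt{q}} \ottnt{A}$, mirroring exactly the structure of the proof of Lemma~\ref{heaphelper} in the simply-typed setting. The extra burden in the PTS setting is purely bookkeeping about types: the return type $\ottnt{A}$ may contain the term being eliminated (as in \rref{PTS-App}, \rref{PTS-LetPair}, \rref{PTS-Case}), and delayed substitutions loaded into the heap cause the type of the reduct to ``lag behind.'' This is precisely the difficulty that the extended type system with definitions (Figures~\ref{def} and \ref{defcompat}) is designed to absorb, so the statement should be read as referring to $\vdash^{\mathbf{e}}$ and $\models^{\mathbf{e}}$.

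First I would set up the case analysis on the last rule of $\Gamma_{{\mathrm{2}}} \vdash \ottnt{a} :^{\ottnt{q}} \ottnt{A}$. For \rref{PTS-Var}, I would proceed as in the simply-typed case: invert $\ottnt{H} \models \Gamma_{{\mathrm{1}}} + \Gamma_{{\mathrm{2}}}$ to locate the definition $\ottmv{x} \overset{\ottnt{q'}}{\mapsto} \ottnt{a}$ in the heap with $q' = q_1 + q_2 + q_0$, split the residual context via Lemma~\ref{BLDSplitP}, fire \rref{HeapStep-Var}, and produce the updated heap and context; the only new point is that $\ottnt{A}$ is well-formed in the extended context, which is preserved by \rref{PTS-DefVar} / weakening. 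For the elimination rules \rref{PTS-App}, \rref{PTS-LetUnit}, \rref{PTS-LetPair}, \rref{PTS-Case}, the recipe is: either the scrutinee steps, in which case I invoke the IH (with the support set augmented by the free variables of the other subterms, as in \rref{HeapStep-AppL} and its analogues) and reassemble the elimination form using \rref{HeapStep-Discard} to reconcile the label $q \cdot q_0$ with $q$; or the scrutinee is a value, in which case inversion on its typing gives the premises needed to add new heap entries and extend the typing context, closing the case by the relevant congruence rule. When the reduct's type only matches up to substituting a freshly-added heap definition, I would apply \rref{PTS-DefConv} to bridge $\ottnt{B}\{\ottnt{a}/\ottmv{x}\}$ and $\ottnt{B}\{\ottmv{x}/\ottmv{x}\}$ under the extended context, exactly as the example $\ottmv{x} \overset{0}{\mapsto} \mathbf{Unit}$ in Section~\ref{heapPTS} suggests. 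The structural rules \rref{PTS-Weak}, \rref{PTS-DefWeak} are handled by threading the extra (possibly-defined) assumption through, \rref{PTS-Conv} and \rref{PTS-DefConv} follow by IH, and \rref{PTS-SubL}, \rref{PTS-SubR} are handled exactly as in Lemma~\ref{heaphelper}: for \rref{PTS-SubL} absorb the slack context $\Gamma_0$ into $\Gamma_{{\mathrm{1}}}$ before the IH and re-weaken the resulting typing afterward; for \rref{PTS-SubR} use \rref{HeapStep-Discard} to raise the label and \rref{PTS-SubR} to raise the typing label of the reduct.

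The main obstacle I anticipate is the interaction in the beta-step cases between (i) the delayed substitution that puts the new definition in the heap and (ii) the dependency of $\ottnt{A}$ on the eliminated subterm. Concretely, after \rref{HeapStep-AppBeta} the heap gains $\ottmv{x} \overset{q \cdot r}{\mapsto} \ottnt{a}$ and the body $\ottnt{b'}$ has type $\ottnt{B}$ in a context mentioning $\ottmv{x}$, but the ambient type annotation is $\ottnt{B}\{\ottnt{a}/\ottmv{x}\}$; I must show these agree under the extended context with the new definition, which is where \rref{PTS-DefConv} together with the fact that $\ottnt{B}\{\Delta'\} =_\beta \ottnt{B}\{\Delta\}$ once the definition $\ottmv{x} = \ottnt{a}$ is in $\Delta'$ does the work. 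Keeping the compatibility invariant $\ottnt{H'} \models^{\mathbf{e}} \Gamma_{{\mathrm{1}}} + \Gamma'_{{\mathrm{2}}}$ in sync with these definitions --- so that \rref{HeapCompat-DefCons} is applicable with the right grade --- is the delicate part, and it is essentially the reason the extended system was introduced. Everything else is a routine recapitulation of Lemma~\ref{heaphelper}, using Lemma~\ref{BLDSplitP} (splitting), Lemma~\ref{BLDMultP} (multiplication), and Lemma~\ref{BLDSubstP} (substitution) in place of their simply-typed counterparts. Finally, Theorem~\ref{heapDep} follows by instantiating this lemma with $\Gamma_{{\mathrm{1}}} := 0 \cdot \Gamma$ and $\Gamma_{{\mathrm{2}}} := \Gamma$, and then transporting back to the underlying system via Lemma~\ref{Elaboration} and Lemma~\ref{MultiSubst}.
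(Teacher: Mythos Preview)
Your plan matches the paper's approach, and your recognition that the lemma must be read in the extended system (with \rref{PTS-DefVar}, \rref{PTS-DefWeak}, \rref{PTS-DefConv} and \rref{HeapCompat-DefCons}) is exactly right. Two points need sharpening.

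First, a small correction: in the extended system, the compatibility relation $\ottnt{H} \models^{\mathbf{e}} \Gamma$ built by \rref{HeapCompat-DefCons} produces only contexts in which every assumption is a \emph{definition}. Consequently \rref{PTS-Var} and \rref{PTS-Weak} can never be the last rule of $\Gamma_{2} \vdash^{\mathbf{e}} \ottnt{a} :^{\ottnt{q}} \ottnt{A}$ under the hypothesis $\ottnt{H} \models^{\mathbf{e}} \Gamma_{1} + \Gamma_{2}$; the cases that play the role of \rref{ST-Var} and weakening from Lemma~\ref{heaphelper} are \rref{PTS-DefVar} and \rref{PTS-DefWeak}. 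Your description conflates these.

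Second, and more substantively: you correctly locate one use of \rref{PTS-DefConv} in the $\beta$ sub-cases, but you treat the congruence sub-case (``the scrutinee steps'') of the dependent eliminators as routine. It is not. In \rref{PTS-LetPair}, when the scrutinee $\ottnt{a}$ heap-steps to $\ottnt{a'}$, reassembling via \rref{PTS-LetPair} gives the reduct the type $\ottnt{B}\{\ottnt{a'}/\ottmv{z}\}$, not the required $\ottnt{B}\{\ottnt{a}/\ottmv{z}\}$. The paper closes this gap using the Similarity lemma (Lemma~\ref{HeapSim}), which yields $\ottnt{a}\{\Delta_{2}\} =_{\beta} \ottnt{a'}\{\Delta'_{2}\}$ after unfolding the heap definitions, and then \rref{PTS-DefConv} converts back to $\ottnt{B}\{\ottnt{a}/\ottmv{z}\}$. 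The same manoeuvre is needed in \rref{PTS-LetUnit} and \rref{PTS-Case}. This is precisely the ``lag behind'' phenomenon you flag in your opening paragraph, but it surfaces in the congruence sub-case as well as the $\beta$ sub-case, and it is the one genuinely new ingredient over Lemma~\ref{heaphelper} that your recipe omits.
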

(Note that $+$ is overloaded here: $ \Gamma_{{\mathrm{1}}}  +  \Gamma_{{\mathrm{2}}} $ denotes addition of contexts $\Gamma_{{\mathrm{1}}}$ and $\Gamma_{{\mathrm{2}}}$ after padding them as necessary.)

\begin{proof}
By induction on $ \Gamma_{{\mathrm{2}}}  \vdash  \ottnt{a}  :^{ \ottnt{q} }  \ottnt{A} $.

\begin{itemize}

\item \Rref{PTS-DefVar}. Have: $    0   \cdot  \Gamma_{{\mathrm{21}}}   ,   \ottmv{x}  =  \ottnt{a}  :^{ \ottnt{q_{{\mathrm{2}}}} }  \ottnt{A}    \vdash   \ottmv{x}   :^{ \ottnt{q_{{\mathrm{2}}}} }  \ottnt{A} $ where $ \Delta  \vdash_{0}  \ottnt{a}  :  \ottnt{A} $ and $ \Delta  =   \lfloor  \Gamma_{{\mathrm{21}}}  \rfloor  $.\\
Further, $ \ottnt{H}  \models    (   \Gamma_{{\mathrm{11}}}  ,   \ottmv{x}  =  \ottnt{a}  :^{ \ottnt{q_{{\mathrm{1}}}} }  \ottnt{A}    )   +   (     0   \cdot  \Gamma_{{\mathrm{21}}}   ,   \ottmv{x}  =  \ottnt{a}  :^{ \ottnt{q_{{\mathrm{2}}}} }  \ottnt{A}    )   $.\\
By inversion, $\exists \ottnt{H_{{\mathrm{1}}}}, \Gamma_{{\mathrm{0}}}$ such that $H =  \ottnt{H_{{\mathrm{1}}}}  ,   \ottmv{x}  \overset{  \ottnt{q_{{\mathrm{1}}}}  +  \ottnt{q_{{\mathrm{2}}}}  }{\mapsto}  \ottnt{a}  $ and $ \Gamma_{{\mathrm{0}}}  \vdash  \ottnt{a}  :^{  \ottnt{q_{{\mathrm{1}}}}  +  \ottnt{q_{{\mathrm{2}}}}  }  \ottnt{A} $.\\
By lemma \ref{BLDSplitP}, $\exists \Gamma_{{\mathrm{01}}}, \Gamma_{{\mathrm{02}}}$ such that $ \Gamma_{{\mathrm{01}}}  \vdash  \ottnt{a}  :^{ \ottnt{q_{{\mathrm{1}}}} }  \ottnt{A} $ and $ \Gamma_{{\mathrm{02}}}  \vdash  \ottnt{a}  :^{ \ottnt{q_{{\mathrm{2}}}} }  \ottnt{A} $ and $ \Gamma_{{\mathrm{0}}}  =   \Gamma_{{\mathrm{01}}}  +  \Gamma_{{\mathrm{02}}}  $.\\
Then, we have,
\begin{itemize}
\item $ [   \ottnt{H_{{\mathrm{1}}}}  ,   \ottmv{x}  \overset{  \ottnt{q_{{\mathrm{1}}}}  +  \ottnt{q_{{\mathrm{2}}}}  }{\mapsto}  \ottnt{a}    ]   \ottmv{x}   \Longrightarrow^{ \ottnt{q_{{\mathrm{2}}}} }_{ \ottnt{S} } [   \ottnt{H_{{\mathrm{1}}}}  ,   \ottmv{x}  \overset{ \ottnt{q_{{\mathrm{1}}}} }{\mapsto}  \ottnt{a}    ]  \ottnt{a} $.
\item $  \ottnt{H_{{\mathrm{1}}}}  ,   \ottmv{x}  \overset{ \ottnt{q_{{\mathrm{1}}}} }{\mapsto}  \ottnt{a}    \models    (   \Gamma_{{\mathrm{11}}}  ,   \ottmv{x}  =  \ottnt{a}  :^{ \ottnt{q_{{\mathrm{1}}}} }  \ottnt{A}    )   +   (   \Gamma_{{\mathrm{02}}}  ,   \ottmv{x}  =  \ottnt{a}  :^{  0  }  \ottnt{A}    )   $.
\item $  \Gamma_{{\mathrm{02}}}  ,   \ottmv{x}  =  \ottnt{a}  :^{  0  }  \ottnt{A}    \vdash  \ottnt{a}  :^{ \ottnt{q_{{\mathrm{2}}}} }  \ottnt{A} $.
\end{itemize}

\item \Rref{PTS-DefWeak}. Have: $  \Gamma_{{\mathrm{21}}}  ,   \ottmv{y}  =  \ottnt{b}  :^{  0  }  \ottnt{B}    \vdash  \ottnt{a}  :^{ \ottnt{q} }  \ottnt{A} $ where $ \Gamma_{{\mathrm{21}}}  \vdash  \ottnt{a}  :^{ \ottnt{q} }  \ottnt{A} $ and $ \Delta  \vdash_{0}  \ottnt{b}  :  \ottnt{B} $ and $ \Delta  =   \lfloor  \Gamma_{{\mathrm{21}}}  \rfloor  $.\\
Further, $ \ottnt{H}  \models    (   \Gamma_{{\mathrm{11}}}  ,   \ottmv{y}  =  \ottnt{b}  :^{ \ottnt{q_{{\mathrm{1}}}} }  \ottnt{B}    )   +   (   \Gamma_{{\mathrm{21}}}  ,   \ottmv{y}  =  \ottnt{b}  :^{  0  }  \ottnt{B}    )   $.\\
By inversion, $\exists \ottnt{H_{{\mathrm{1}}}}, \Gamma_{{\mathrm{0}}}$ such that $H =  \ottnt{H_{{\mathrm{1}}}}  ,   \ottmv{y}  \overset{ \ottnt{q_{{\mathrm{1}}}} }{\mapsto}  \ottnt{b}  $ and $ \Gamma_{{\mathrm{0}}}  \vdash  \ottnt{b}  :^{ \ottnt{q_{{\mathrm{1}}}} }  \ottnt{B} $ and $ \ottnt{H_{{\mathrm{1}}}}  \models     \Gamma_{{\mathrm{11}}}  +  \Gamma_{{\mathrm{21}}}    +  \Gamma_{{\mathrm{0}}}  $.\\
By IH, $\exists \ottnt{H'_{{\mathrm{1}}}} , \ottnt{H'_{{\mathrm{2}}}} , \Gamma'_{{\mathrm{21}}} , \Gamma'_{{\mathrm{22}}}, \ottnt{a'}$ such that
\begin{itemize}
\item $ [  \ottnt{H_{{\mathrm{1}}}}  ]  \ottnt{a}  \Longrightarrow^{ \ottnt{q} }_{  \ottnt{S}  \, \cup \,   \{  \ottmv{y}  \}   } [   \ottnt{H'_{{\mathrm{1}}}}  ,  \ottnt{H'_{{\mathrm{2}}}}   ]  \ottnt{a'} $ where $\lvert \ottnt{H'_{{\mathrm{1}}}} \rvert = \lvert \ottnt{H_{{\mathrm{1}}}} \rvert$
\item $  \ottnt{H'_{{\mathrm{1}}}}  ,  \ottnt{H'_{{\mathrm{2}}}}   \models      \Gamma_{{\mathrm{11}}}  +  \Gamma_{{\mathrm{0}}}   +  \Gamma'_{{\mathrm{21}}}    ,  \Gamma'_{{\mathrm{22}}}  $ 
\item $  \Gamma'_{{\mathrm{21}}}  ,  \Gamma'_{{\mathrm{22}}}   \vdash  \ottnt{a'}  :^{ \ottnt{q} }  \ottnt{A} $
\end{itemize}
Then, we have,
\begin{itemize}
\item $ [   \ottnt{H_{{\mathrm{1}}}}  ,   \ottmv{y}  \overset{ \ottnt{q_{{\mathrm{1}}}} }{\mapsto}  \ottnt{b}    ]  \ottnt{a}  \Longrightarrow^{ \ottnt{q} }_{ \ottnt{S} } [   \ottnt{H'_{{\mathrm{1}}}}  ,     \ottmv{y}  \overset{ \ottnt{q_{{\mathrm{1}}}} }{\mapsto}  \ottnt{b}   ,  \ottnt{H'_{{\mathrm{2}}}}     ]  \ottnt{a'} $
\item $  \ottnt{H'_{{\mathrm{1}}}}  ,     \ottmv{y}  \overset{ \ottnt{q_{{\mathrm{1}}}} }{\mapsto}  \ottnt{b}   ,  \ottnt{H'_{{\mathrm{2}}}}     \models    (    (   \Gamma_{{\mathrm{11}}}  ,   \ottmv{y}  =  \ottnt{b}  :^{ \ottnt{q_{{\mathrm{1}}}} }  \ottnt{B}    )   +   (   \Gamma'_{{\mathrm{21}}}  ,   \ottmv{y}  =  \ottnt{b}  :^{  0  }  \ottnt{B}    )    )   ,  \Gamma'_{{\mathrm{22}}}  $
\item $  \Gamma'_{{\mathrm{21}}}  ,     \ottmv{y}  =  \ottnt{b}  :^{  0  }  \ottnt{B}   ,  \Gamma'_{{\mathrm{22}}}     \vdash  \ottnt{a'}  :^{ \ottnt{q} }  \ottnt{A} $
\end{itemize}

\item \Rref{PTS-Var,PTS-Weak}. Does not apply since whenever $ \ottnt{H}  \models   \Gamma_{{\mathrm{1}}}  +  \Gamma_{{\mathrm{2}}}  $ holds, every assumption in $\Gamma_{{\mathrm{2}}}$ is a definition.

\item \Rref{PTS-App}. Have: $  \Gamma_{{\mathrm{21}}}  +  \Gamma_{{\mathrm{22}}}   \vdash   \ottnt{b}  \:  \ottnt{a} ^{ \ottnt{r} }   :^{ \ottnt{q} }   \ottnt{B}  \{  \ottnt{a}  /  \ottmv{x}  \}  $ where $ \Gamma_{{\mathrm{21}}}  \vdash  \ottnt{b}  :^{ \ottnt{q} }   \Pi  \ottmv{x}  :^{ \ottnt{r} } \!  \ottnt{A}  .  \ottnt{B}  $ and $ \Gamma_{{\mathrm{22}}}  \vdash  \ottnt{a}  :^{  \ottnt{q}  \cdot  \ottnt{r}  }  \ottnt{A} $.\\
Further, $ \ottnt{H}  \models   \Gamma_{{\mathrm{1}}}  +   (   \Gamma_{{\mathrm{21}}}  +  \Gamma_{{\mathrm{22}}}   )   $.\\
If $\ottnt{b}$ steps, then this case follows by IH.\\
Otherwise, $\ottnt{b}$ is a value. By inversion, $b =  \lambda^{ \ottnt{r} }  \ottmv{x}  :  \ottnt{A'}  .  \ottnt{b'} $.\\
Further, $\exists \ottnt{A_{{\mathrm{1}}}} , \ottnt{B_{{\mathrm{1}}}}$ such that $  \Gamma_{{\mathrm{21}}}  ,   \ottmv{x}  :^{  \ottnt{q}  \cdot  \ottnt{r}  }  \ottnt{A_{{\mathrm{1}}}}    \vdash  \ottnt{b'}  :^{ \ottnt{q} }  \ottnt{B_{{\mathrm{1}}}} $ and $   \Pi  \ottmv{x}  :^{ \ottnt{r} } \!  \ottnt{A_{{\mathrm{1}}}}  .  \ottnt{B_{{\mathrm{1}}}}   \{  \Delta_{{\mathrm{2}}}  \}   =_{\beta}    \Pi  \ottmv{x}  :^{ \ottnt{r} } \!  \ottnt{A}  .  \ottnt{B}   \{  \Delta_{{\mathrm{2}}}  \}  $ where $ \Delta_{{\mathrm{2}}}  =   \lfloor  \Gamma_{{\mathrm{21}}}  \rfloor  $.\\
Then, by \rref{PTS-DefConv}, $ \Gamma_{{\mathrm{22}}}  \vdash  \ottnt{a}  :^{  \ottnt{q}  \cdot  \ottnt{r}  }  \ottnt{A_{{\mathrm{1}}}} $. Further, $  \Gamma_{{\mathrm{21}}}  ,   \ottmv{x}  =  \ottnt{a}  :^{  \ottnt{q}  \cdot  \ottnt{r}  }  \ottnt{A_{{\mathrm{1}}}}    \vdash  \ottnt{b'}  :^{ \ottnt{q} }  \ottnt{B_{{\mathrm{1}}}} $.\\
Now,
\begin{itemize}
\item $ [  \ottnt{H}  ]    (   \lambda^{ \ottnt{r} }  \ottmv{x}  :  \ottnt{A'}  .  \ottnt{b'}   )   \:  \ottnt{a} ^{ \ottnt{r} }   \Longrightarrow^{ \ottnt{q} }_{ \ottnt{S} } [   \ottnt{H}  ,   \ottmv{x}  \overset{  \ottnt{q}  \cdot  \ottnt{r}  }{\mapsto}  \ottnt{a}    ]  \ottnt{b'} $
\item $  \ottnt{H}  ,   \ottmv{x}  \overset{  \ottnt{q}  \cdot  \ottnt{r}  }{\mapsto}  \ottnt{a}    \models    (   \Gamma_{{\mathrm{1}}}  +  \Gamma_{{\mathrm{21}}}   )   ,   \ottmv{x}  =  \ottnt{a}  :^{  \ottnt{q}  \cdot  \ottnt{r}  }  \ottnt{A_{{\mathrm{1}}}}   $ 
\item $  \Gamma_{{\mathrm{21}}}  ,   \ottmv{x}  =  \ottnt{a}  :^{  \ottnt{q}  \cdot  \ottnt{r}  }  \ottnt{A_{{\mathrm{1}}}}    \vdash  \ottnt{b'}  :^{ \ottnt{q} }   \ottnt{B}  \{  \ottnt{a}  /  \ottmv{x}  \}  $ by \rref{PTS-DefConv}. 
\end{itemize}

\item \Rref{PTS-DefConv}. Have: $ \Gamma_{{\mathrm{2}}}  \vdash  \ottnt{a}  :^{ \ottnt{q} }  \ottnt{B} $ where $ \Gamma_{{\mathrm{2}}}  \vdash  \ottnt{a}  :^{ \ottnt{q} }  \ottnt{A} $ and $  \ottnt{A}  \{  \Delta_{{\mathrm{2}}}  \}   =_{\beta}   \ottnt{B}  \{  \Delta_{{\mathrm{2}}}  \}  $ and $ \Delta_{{\mathrm{2}}}  =   \lfloor  \Gamma_{{\mathrm{2}}}  \rfloor  $.\\
Further, $ \ottnt{H}  \models   \Gamma_{{\mathrm{1}}}  +  \Gamma_{{\mathrm{2}}}  $.\\
By IH, $ [  \ottnt{H}  ]  \ottnt{a}  \Longrightarrow^{ \ottnt{q} }_{ \ottnt{S} } [  \ottnt{H'}  ]  \ottnt{a'} $ and $ \ottnt{H'}  \models   \Gamma_{{\mathrm{1}}}  +  \Gamma'_{{\mathrm{2}}}  $ and $ \Gamma'_{{\mathrm{2}}}  \vdash  \ottnt{a'}  :^{ \ottnt{q} }  \ottnt{A} $.\\
Let $ \Delta'_{{\mathrm{2}}}  =   \lfloor  \Gamma'_{{\mathrm{2}}}  \rfloor  $. Then, $ \ottnt{A}  \{  \Delta'_{{\mathrm{2}}}  \}  =   \ottnt{A}  \{  \Delta_{{\mathrm{2}}}  \}   =_{\beta}   \ottnt{B}  \{  \Delta_{{\mathrm{2}}}  \}   =  \ottnt{B}  \{  \Delta'_{{\mathrm{2}}}  \} $ ($\because  \textit{fv} \:  \ottnt{A}  ,  \textit{fv} \:  \ottnt{B}  \in \text{dom} \Delta_{{\mathrm{2}}} \subseteq \text{dom} \Delta'_{{\mathrm{2}}}$).\\
This case, then, follows by \rref{PTS-DefConv}.

\item \Rref{PTS-LetPair}. Have: $  \Gamma_{{\mathrm{21}}}  +  \Gamma_{{\mathrm{22}}}   \vdash   \mathbf{let}_{ \ottnt{q_{{\mathrm{0}}}} } \: (  \ottmv{x} ^{ \ottnt{r} } ,  \ottmv{y}  ) \: \mathbf{be} \:  \ottnt{a}  \: \mathbf{in} \:  \ottnt{b}   :^{ \ottnt{q} }   \ottnt{B}  \{  \ottnt{a}  /  \ottmv{z}  \}  $ where $  \Delta_{{\mathrm{2}}}  ,   \ottmv{z}  :   \Sigma  \ottmv{x}  :^{ \ottnt{r} } \!  \ottnt{A_{{\mathrm{1}}}}  .  \ottnt{A_{{\mathrm{2}}}}     \vdash_{0}  \ottnt{B}  :   \ottmv{s}  $ and $ \Gamma_{{\mathrm{21}}}  \vdash  \ottnt{a}  :^{  \ottnt{q}  \cdot  \ottnt{q_{{\mathrm{0}}}}  }   \Sigma  \ottmv{x}  :^{ \ottnt{r} } \!  \ottnt{A_{{\mathrm{1}}}}  .  \ottnt{A_{{\mathrm{2}}}}  $ and $  \Gamma_{{\mathrm{22}}}  ,     \ottmv{x}  :^{  \ottnt{q}  \cdot    \ottnt{q_{{\mathrm{0}}}}  \cdot  \ottnt{r}    }  \ottnt{A_{{\mathrm{1}}}}   ,   \ottmv{y}  :^{  \ottnt{q}  \cdot  \ottnt{q_{{\mathrm{0}}}}  }  \ottnt{A_{{\mathrm{2}}}}      \vdash  \ottnt{b}  :^{ \ottnt{q} }   \ottnt{B}  \{   (   \ottmv{x}  ^{ \ottnt{r} } ,   \ottmv{y}   )   /  \ottmv{z}  \}  $ and $ \Delta_{{\mathrm{2}}}  =   \lfloor  \Gamma_{{\mathrm{21}}}  \rfloor  $.\\
Further, $ \ottnt{H}  \models   \Gamma_{{\mathrm{1}}}  +   (   \Gamma_{{\mathrm{21}}}  +  \Gamma_{{\mathrm{22}}}   )   $.\\
If $\ottnt{a}$ steps, then $ [  \ottnt{H}  ]  \ottnt{a}  \Longrightarrow^{  \ottnt{q}  \cdot  \ottnt{q_{{\mathrm{0}}}}  }_{  \ottnt{S}  \, \cup \,   \textit{fv} \:  \ottnt{b}   } [  \ottnt{H'}  ]  \ottnt{a'} $ and $ \ottnt{H'}  \models   \Gamma_{{\mathrm{1}}}  +   (   \Gamma'_{{\mathrm{21}}}  +  \Gamma_{{\mathrm{22}}}   )   $ and $ \Gamma'_{{\mathrm{21}}}  \vdash  \ottnt{a'}  :^{  \ottnt{q}  \cdot  \ottnt{q_{{\mathrm{0}}}}  }   \Sigma  \ottmv{x}  :^{ \ottnt{r} } \!  \ottnt{A_{{\mathrm{1}}}}  .  \ottnt{A_{{\mathrm{2}}}}  $.\\
Then, 
\begin{itemize}
\item $ [  \ottnt{H}  ]   \mathbf{let}_{ \ottnt{q_{{\mathrm{0}}}} } \: (  \ottmv{x} ^{ \ottnt{r} } ,  \ottmv{y}  ) \: \mathbf{be} \:  \ottnt{a}  \: \mathbf{in} \:  \ottnt{b}   \Longrightarrow^{  \ottnt{q}  \cdot  \ottnt{q_{{\mathrm{0}}}}  }_{ \ottnt{S} } [  \ottnt{H'}  ]   \mathbf{let}_{ \ottnt{q_{{\mathrm{0}}}} } \: (  \ottmv{x} ^{ \ottnt{r} } ,  \ottmv{y}  ) \: \mathbf{be} \:  \ottnt{a'}  \: \mathbf{in} \:  \ottnt{b}  $.\\
By \rref{HeapStep-Discard}, $ [  \ottnt{H}  ]   \mathbf{let}_{ \ottnt{q_{{\mathrm{0}}}} } \: (  \ottmv{x} ^{ \ottnt{r} } ,  \ottmv{y}  ) \: \mathbf{be} \:  \ottnt{a}  \: \mathbf{in} \:  \ottnt{b}   \Longrightarrow^{ \ottnt{q} }_{ \ottnt{S} } [  \ottnt{H'}  ]   \mathbf{let}_{ \ottnt{q_{{\mathrm{0}}}} } \: (  \ottmv{x} ^{ \ottnt{r} } ,  \ottmv{y}  ) \: \mathbf{be} \:  \ottnt{a'}  \: \mathbf{in} \:  \ottnt{b}  $. ($\because  \ottnt{q_{{\mathrm{0}}}}  <:   1  $)
\item $ \ottnt{H'}  \models   \Gamma_{{\mathrm{1}}}  +   (   \Gamma'_{{\mathrm{21}}}  +  \Gamma_{{\mathrm{22}}}   )   $
\item By \rref{ST-LetPair}, $  \Gamma'_{{\mathrm{21}}}  +  \Gamma_{{\mathrm{22}}}   \vdash   \mathbf{let}_{ \ottnt{q_{{\mathrm{0}}}} } \: (  \ottmv{x} ^{ \ottnt{r} } ,  \ottmv{y}  ) \: \mathbf{be} \:  \ottnt{a'}  \: \mathbf{in} \:  \ottnt{b}   : \:   \ottnt{B}  \{  \ottnt{a'}  /  \ottmv{z}  \}  $.\\
Now, by lemma \ref{HeapSim}, $  \ottnt{a'}  \{  \Delta'_{{\mathrm{2}}}  \}   =_{\beta}   \ottnt{a}  \{  \Delta_{{\mathrm{2}}}  \}  $ where $ \Delta'_{{\mathrm{2}}}  =   \lfloor  \Gamma'_{{\mathrm{21}}}  \rfloor  $.\\
Then, $  \ottnt{B}  \{  \ottnt{a}  /  \ottmv{z}  \}   \{  \Delta_{{\mathrm{2}}}  \}  =   \ottnt{B}  \{  \Delta_{{\mathrm{2}}}  \}   \{   \ottnt{a}  \{  \Delta_{{\mathrm{2}}}  \}   /  \ottmv{z}  \}  =    \ottnt{B}  \{  \Delta'_{{\mathrm{2}}}  \}   \{   \ottnt{a}  \{  \Delta_{{\mathrm{2}}}  \}   /  \ottmv{z}  \}   =_{\beta}    \ottnt{B}  \{  \Delta'_{{\mathrm{2}}}  \}   \{   \ottnt{a'}  \{  \Delta'_{{\mathrm{2}}}  \}   /  \ottmv{z}  \}   =   \ottnt{B}  \{  \ottnt{a'}  /  \ottmv{z}  \}   \{  \Delta'_{{\mathrm{2}}}  \} $.\\
By \rref{PTS-DefConv}, $  \Gamma'_{{\mathrm{21}}}  +  \Gamma_{{\mathrm{22}}}   \vdash   \mathbf{let}_{ \ottnt{q_{{\mathrm{0}}}} } \: (  \ottmv{x} ^{ \ottnt{r} } ,  \ottmv{y}  ) \: \mathbf{be} \:  \ottnt{a'}  \: \mathbf{in} \:  \ottnt{b}   : \:   \ottnt{B}  \{  \ottnt{a}  /  \ottmv{z}  \}  $.
\end{itemize}
 
Otherwise, $\ottnt{a}$ is a value. By inversion, $a =  (  \ottnt{a_{{\mathrm{1}}}} ^{ \ottnt{r} } ,  \ottnt{a_{{\mathrm{2}}}}  ) $.\\
Further, $\exists \Gamma_{{\mathrm{211}}}, \Gamma_{{\mathrm{212}}}$ such that $ \Gamma_{{\mathrm{211}}}  \vdash  \ottnt{a_{{\mathrm{1}}}}  :^{  \ottnt{q}  \cdot    \ottnt{q_{{\mathrm{0}}}}  \cdot  \ottnt{r}    }  \ottnt{A_{{\mathrm{1}}}} $ and $ \Gamma_{{\mathrm{212}}}  \vdash  \ottnt{a_{{\mathrm{2}}}}  :^{  \ottnt{q}  \cdot  \ottnt{q_{{\mathrm{0}}}}  }   \ottnt{A_{{\mathrm{2}}}}  \{  \ottnt{a_{{\mathrm{1}}}}  /  \ottmv{x}  \}  $ and $\Gamma_{{\mathrm{21}}} =  \Gamma_{{\mathrm{211}}}  +  \Gamma_{{\mathrm{212}}} $.\\
Then,
\begin{itemize}
\item $ [  \ottnt{H}  ]   \mathbf{let}_{ \ottnt{q_{{\mathrm{0}}}} } \: (  \ottmv{x} ^{ \ottnt{r} } ,  \ottmv{y}  ) \: \mathbf{be} \:   (  \ottnt{a_{{\mathrm{1}}}} ^{ \ottnt{r} } ,  \ottnt{a_{{\mathrm{2}}}}  )   \: \mathbf{in} \:  \ottnt{b}   \Longrightarrow^{ \ottnt{q} }_{ \ottnt{S} } [     \ottnt{H}  ,   \ottmv{x}  \overset{  \ottnt{q}  \cdot    \ottnt{q_{{\mathrm{0}}}}  \cdot  \ottnt{r}    }{\mapsto}  \ottnt{a_{{\mathrm{1}}}}     ,   \ottmv{y}  \overset{  \ottnt{q}  \cdot  \ottnt{q_{{\mathrm{0}}}}  }{\mapsto}  \ottnt{a_{{\mathrm{2}}}}    ]  \ottnt{b} $ (assuming $x, \ottmv{y}  \: \textit{fresh} $)
\item $    \ottnt{H}  ,   \ottmv{x}  \overset{  \ottnt{q}  \cdot    \ottnt{q_{{\mathrm{0}}}}  \cdot  \ottnt{r}    }{\mapsto}  \ottnt{a_{{\mathrm{1}}}}     ,   \ottmv{y}  \overset{  \ottnt{q}  \cdot  \ottnt{q_{{\mathrm{0}}}}  }{\mapsto}  \ottnt{a_{{\mathrm{2}}}}    \models    (   \Gamma_{{\mathrm{1}}}  +  \Gamma_{{\mathrm{22}}}   )   ,     \ottmv{x}  =  \ottnt{a_{{\mathrm{1}}}}  :^{  \ottnt{q}  \cdot    \ottnt{q_{{\mathrm{0}}}}  \cdot  \ottnt{r}    }  \ottnt{A_{{\mathrm{1}}}}   ,   \ottmv{y}  =  \ottnt{a_{{\mathrm{2}}}}  :^{  \ottnt{q}  \cdot  \ottnt{q_{{\mathrm{0}}}}  }  \ottnt{A_{{\mathrm{2}}}}     $
\item $  \Gamma_{{\mathrm{22}}}  ,     \ottmv{x}  =  \ottnt{a_{{\mathrm{1}}}}  :^{  \ottnt{q}  \cdot    \ottnt{q_{{\mathrm{0}}}}  \cdot  \ottnt{r}    }  \ottnt{A_{{\mathrm{1}}}}   ,   \ottmv{y}  =  \ottnt{a_{{\mathrm{2}}}}  :^{  \ottnt{q}  \cdot  \ottnt{q_{{\mathrm{0}}}}  }  \ottnt{A_{{\mathrm{2}}}}      \vdash  \ottnt{b}  :^{ \ottnt{q} }   \ottnt{B}  \{   (  \ottnt{a_{{\mathrm{1}}}} ^{ \ottnt{r} } ,  \ottnt{a_{{\mathrm{2}}}}  )   /  \ottmv{z}  \}  $.
\end{itemize}

\item \Rref{PTS-LetUnit}. Similar to \rref{PTS-LetPair}.

\item \Rref{PTS-Case}. Have: $  \Gamma_{{\mathrm{21}}}  +  \Gamma_{{\mathrm{22}}}   \vdash   \mathbf{case}_{ \ottnt{q_{{\mathrm{0}}}} } \:  \ottnt{a}  \: \mathbf{of} \:  \ottmv{x_{{\mathrm{1}}}}  .  \ottnt{b_{{\mathrm{1}}}}  \: ; \:  \ottmv{x_{{\mathrm{2}}}}  .  \ottnt{b_{{\mathrm{2}}}}   :^{ \ottnt{q} }   \ottnt{B}  \{  \ottnt{a}  /  \ottmv{z}  \}  $ where $  \Delta_{{\mathrm{2}}}  ,   \ottmv{z}  :   \ottnt{A_{{\mathrm{1}}}}  +  \ottnt{A_{{\mathrm{2}}}}     \vdash_{0}  \ottnt{B}  :   \ottmv{s}  $ and $ \Gamma_{{\mathrm{21}}}  \vdash  \ottnt{a}  :^{  \ottnt{q}  \cdot  \ottnt{q_{{\mathrm{0}}}}  }   \ottnt{A_{{\mathrm{1}}}}  +  \ottnt{A_{{\mathrm{2}}}}  $ and $  \Gamma_{{\mathrm{22}}}  ,   \ottmv{x_{{\mathrm{1}}}}  :^{  \ottnt{q}  \cdot  \ottnt{q_{{\mathrm{0}}}}  }  \ottnt{A_{{\mathrm{1}}}}    \vdash  \ottnt{b_{{\mathrm{1}}}}  :^{ \ottnt{q} }   \ottnt{B}  \{   \mathbf{inj}_1 \:   \ottmv{x_{{\mathrm{1}}}}    /  \ottmv{z}  \}  $ and $  \Gamma_{{\mathrm{22}}}  ,   \ottmv{x_{{\mathrm{2}}}}  :^{  \ottnt{q}  \cdot  \ottnt{q_{{\mathrm{0}}}}  }  \ottnt{A_{{\mathrm{2}}}}    \vdash  \ottnt{b_{{\mathrm{2}}}}  :^{ \ottnt{q} }   \ottnt{B}  \{   \mathbf{inj}_2 \:   \ottmv{x_{{\mathrm{2}}}}    /  \ottmv{z}  \}  $ and $ \Delta_{{\mathrm{2}}}  =   \lfloor  \Gamma_{{\mathrm{21}}}  \rfloor  $.\\
Further, $ \ottnt{H}  \models   \Gamma_{{\mathrm{1}}}  +   (   \Gamma_{{\mathrm{21}}}  +  \Gamma_{{\mathrm{22}}}   )   $.\\
If $\ottnt{a}$ steps, then this case follows by IH.\\
Otherwise, $\ottnt{a}$ is a value. By inversion, $\ottnt{a} =  \mathbf{inj}_1 \:  \ottnt{a_{{\mathrm{1}}}} $ or $\ottnt{a} =  \mathbf{inj}_2 \:  \ottnt{a_{{\mathrm{2}}}} $.\\
Say $\ottnt{a} =  \mathbf{inj}_1 \:  \ottnt{a_{{\mathrm{1}}}} $. Then, $ \Gamma_{{\mathrm{21}}}  \vdash  \ottnt{a_{{\mathrm{1}}}}  :^{  \ottnt{q}  \cdot  \ottnt{q_{{\mathrm{0}}}}  }  \ottnt{A_{{\mathrm{1}}}} $.\\
Now, 
\begin{itemize}
\item  $ [  \ottnt{H}  ]   \mathbf{case}_{ \ottnt{q_{{\mathrm{0}}}} } \:   (   \mathbf{inj}_1 \:  \ottnt{a_{{\mathrm{1}}}}   )   \: \mathbf{of} \:  \ottmv{x_{{\mathrm{1}}}}  .  \ottnt{b_{{\mathrm{1}}}}  \: ; \:  \ottmv{x_{{\mathrm{2}}}}  .  \ottnt{b_{{\mathrm{2}}}}   \Longrightarrow^{ \ottnt{q} }_{ \ottnt{S} } [   \ottnt{H}  ,   \ottmv{x_{{\mathrm{1}}}}  \overset{  \ottnt{q}  \cdot  \ottnt{q_{{\mathrm{0}}}}  }{\mapsto}  \ottnt{a_{{\mathrm{1}}}}    ]  \ottnt{b_{{\mathrm{1}}}} $ (assuming $ \ottmv{x_{{\mathrm{1}}}}  \: \textit{fresh} $)
\item $  \ottnt{H}  ,   \ottmv{x_{{\mathrm{1}}}}  \overset{  \ottnt{q}  \cdot  \ottnt{q_{{\mathrm{0}}}}  }{\mapsto}  \ottnt{a_{{\mathrm{1}}}}    \models    (   \Gamma_{{\mathrm{1}}}  +  \Gamma_{{\mathrm{22}}}   )   ,   \ottmv{x_{{\mathrm{1}}}}  =  \ottnt{a_{{\mathrm{1}}}}  :^{  \ottnt{q}  \cdot  \ottnt{q_{{\mathrm{0}}}}  }  \ottnt{A_{{\mathrm{1}}}}   $
\item $  \Gamma_{{\mathrm{22}}}  ,   \ottmv{x_{{\mathrm{1}}}}  =  \ottnt{a_{{\mathrm{1}}}}  :^{  \ottnt{q}  \cdot  \ottnt{q_{{\mathrm{0}}}}  }  \ottnt{A_{{\mathrm{1}}}}    \vdash  \ottnt{b_{{\mathrm{1}}}}  :^{ \ottnt{q} }   \ottnt{B}  \{   \mathbf{inj}_1 \:  \ottnt{a_{{\mathrm{1}}}}   /  \ottmv{z}  \}  $
\end{itemize}
The case when $\ottnt{a} =  \mathbf{inj}_2 \:  \ottnt{a_{{\mathrm{2}}}} $ is similar.
 
\end{itemize}

\end{proof}

%----------------------------------------------------------------------------------------------------

\begin{theorem}[Soundness (Theorem \ref{heapDep})]
If $ \ottnt{H}  \models  \Gamma $ and $ \Gamma  \vdash  \ottnt{a}  :^{ \ottnt{q} }  \ottnt{A} $ and $q \neq 0$, then either $\ottnt{a}$ is a value or there exists $\ottnt{H'}, \Gamma', \ottnt{a'}$ such that $ [  \ottnt{H}  ]  \ottnt{a}  \Longrightarrow^{ \ottnt{q} }_{ \ottnt{S} } [  \ottnt{H'}  ]  \ottnt{a'} $ and $ \ottnt{H'}  \models  \Gamma' $ and $ \Gamma'  \vdash  \ottnt{a'}  :^{ \ottnt{q} }  \ottnt{A} $.
\end{theorem}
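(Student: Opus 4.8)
The plan is to re-run, essentially verbatim, the metatheoretic development already carried out for LDC($\mathcal{Q}_{\mathbb{N}}$) (Lemmas~\ref{BLSMult}, \ref{BLSFact}, \ref{BLSSplit}, \ref{BLSWeak}, \ref{BLSSubst}, \ref{DSubst} and Theorems~\ref{SimplePreserve}, \ref{BLSProg}, \ref{preservationDep}, \ref{progressDep}, \ref{heapsound}, \ref{heapDep}), now instantiating the parameter at one of $\mathbb{N}_{=}^{\omega}$, $\mathbb{N}_{\geq}^{\omega}$, $\mathcal{Q}_{\text{Lin}}$, $\mathcal{Q}_{\text{Aff}}$. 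Because the type system differs from that of LDC($\mathcal{Q}_{\mathbb{N}}$) only in the $\lambda$-rule (now \rref{ST-LamOmega}, resp.\ \rref{PTS-LamOmega}), every case of every induction except the $\lambda$-case is unchanged, so the real work is twofold: (i) recording the algebraic facts peculiar to $\omega$ --- $\omega + q = q + \omega = \omega$ for all $q$, $\omega \cdot q = q \cdot \omega = \omega$ for $q \neq 0$, and $\omega <: q$ for all $q$ (so $\omega$ is the bottom element in each of these semirings) --- and (ii) re-checking the $\lambda$-case of each statement against the new rule. To keep the grade-bookkeeping manageable I would work throughout with the equivalent two-rule presentation \rref{ST-LamOmegaZero} together with \rref{ST-Omega} (and their PTS analogues): then the ordinary $\lambda$-rule is recovered whenever the observer grade is a natural number, and the only genuinely new rule is the promotion rule ``from $\Gamma \vdash a :^{q_0} A$ with $q_0 \neq 0$, conclude $\omega\cdot\Gamma \vdash a :^{\omega} A$''.

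With that presentation the structural lemmas go through quickly. \emph{Multiplication} needs, in the new case, only associativity together with $r_0 \neq 0 \Rightarrow r_0 \cdot \omega = \omega$ and $0 \cdot (\omega\cdot\Gamma) = 0\cdot\Gamma$. \emph{Factorization} (with its precondition $q \neq 0$) handles the promotion case by applying the induction hypothesis to the $q_0$-world premise and then using $\omega\cdot(q_0\cdot\Gamma') = \omega\cdot\Gamma'$, valid since $q_0 \neq 0$. \emph{Splitting} follows from Multiplication and Factorization exactly as before, the only additions being that $q_1 + q_2 = 0$ forces $q_1 = q_2 = 0$ in each of our semirings, while for $q_1 + q_2 = \omega$ one has $q_1\cdot\Gamma' + q_2\cdot\Gamma' = (q_1+q_2)\cdot\Gamma' = \omega\cdot\Gamma'$ by distributivity; in the variants whose order is not discrete one still writes $\Gamma = (q_1+q_2)\cdot\Gamma' + \Gamma_0$ for a suitable residual $\Gamma_0$, just as in the $\mathbb{N}_{\geq}$ proof. \emph{Weakening} is untouched, and \emph{Substitution} is again the Splitting-driven induction, with the promotion case dispatched by splitting the substitutee's derivation at the grade at which the substituted variable is held and appealing to Multiplication.

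The crux --- and the step I expect to be the main obstacle --- is \emph{Preservation}, specifically the $\beta$-case of \rref{ST-App} (resp.\ \rref{PTS-App}). There one inverts the typing of the value $\lambda^{r}x.b'$ at the \emph{exact} grade $q$ required by the application rule, and must rebuild a derivation of $b'\{a/x\}$ at grade $q$ from a body premise in which $x$ is held at (a scaling of) $q\cdot r$ and from the argument derivation $\Gamma_2 \vdash a :^{q\cdot r} A$. The genuinely new case is $q = \omega$: here the side condition ``$q = \omega \Rightarrow r = \omega$'' built into \rref{ST-LamOmega} is precisely what rescues soundness, because it forces $r = \omega$, whence $q \cdot r = \omega\cdot\omega = \omega$, so the grade at which the argument is supplied coincides with the grade $\omega$ at which $x$ is consumed in the body --- the very mismatch that makes the naive \rref{ST-Lam} unsound for $\omega$. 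Concretely I would invert \rref{ST-Omega} when $q = \omega$ (and \rref{ST-LamOmegaZero} when $q$ is a natural number), use Factorization and Multiplication to lift the underlying $q_0$-world body derivation to the $\omega$-world with $x$ held at $\omega\cdot r = \omega$, and then apply the Substitution lemma; the remaining $\beta$-rules and all of \emph{Progress} are as in LDC($\mathcal{Q}_{\mathbb{N}}$), noting only that \rref{ST-LamOmega}-typed terms are still values. Finally, heap soundness transfers once type soundness is in place: the heap reduction relation, the compatibility relation, and the statements of Lemmas~\ref{HeapUnchanged} and~\ref{irrel} and of Theorems~\ref{heapsound}/\ref{heapDep} are unchanged, so it suffices to re-inspect the single point where the $\lambda$-rule enters the proof of Lemmas~\ref{heaphelper}/\ref{heaphelperDep} --- the value case of \rref{ST-App}, where the heap is extended with $x \overset{q\cdot r}{\mapsto} a$ --- and to observe, once more, that $q = \omega$ forces $q\cdot r = \omega$, keeping the new binding's weight consistent with the inverted typing of the body. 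This establishes both type soundness and heap soundness for LDC($\mathcal{Q}_{\mathbb{N}}^{\omega}$).
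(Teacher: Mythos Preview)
You appear to be addressing a different result: your entire write-up concerns the $\omega$-extension (Theorem~\ref{Bwsound}, proved in the paper via Lemma~\ref{heaphelperDepw} and Theorem~\ref{Dwheap}), whereas the statement here is heap soundness for the PTS version of LDC over $\mathcal{Q}_{\mathbb{N}}$, \emph{without} $\omega$. The paper's proof is a one-line instantiation of the helper Lemma~\ref{heaphelperDep}, and the real content lives in that lemma and the machinery surrounding it in Appendix~\ref{heapPTS}: the type system is first extended with \emph{definitions} $x = a :^q A$ in contexts (Figure~\ref{def}) so that the delayed substitutions performed by the heap can be tracked via a new conversion rule \textsc{PTS-DefConv}; the compatibility relation is updated accordingly (Figure~\ref{defcompat}); and Multi-Substitution and Elaboration (Lemmas~\ref{MultiSubst}, \ref{Elaboration}) bridge the extended and original systems. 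Your sentence ``heap soundness transfers once type soundness is in place'' elides exactly this dependent-types obstacle---the term-lags-behind-type problem the paper explicitly flags.

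Reading your proposal instead as an argument for the $\omega$-extension, there is still a concrete gap. The claim that ``every case of every induction except the $\lambda$-case is unchanged'' fails for Factorization. In the $\mathcal{Q}_{\mathbb{N}}$ proof of, say, the \textsc{LetPair} case (Lemmas~\ref{BLSFactP}/\ref{BLDFactP}), after applying the induction hypothesis one concludes $q_0\cdot r <: r'$ from $q\cdot q_0\cdot r <: q\cdot r'$ by cancelling the nonzero $q$; this step is invalid for $q=\omega$ (e.g.\ $\omega\cdot 2 = \omega\cdot 3$ in $\mathbb{N}_{=}^{\omega}$ yet $2 \not<: 3$). The paper's remedy is an additional Independence lemma (Lemma~\ref{ind}) and the $\Gamma^{\{0,\omega\}}$ construction: whenever the conclusion grade is $\omega$, every context entry whose grade is not in $\{0,\omega\}$ is first zeroed out, and then \textsc{SubR} drops the grade to $1$. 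This manoeuvre is required in \textsc{LetPair} and \textsc{Case} as well as in the $\lambda$-rule, so the two-rule presentation does not by itself localise the new work to a single case.
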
 

\begin{proof}
Use lemma \ref{heaphelperDep} with $\Gamma_{{\mathrm{1}}} :=   0   \cdot  \Gamma $ and $\Gamma_{{\mathrm{2}}} := \Gamma$.
\end{proof}

%----------------------------------------------------------------------------------------------------

\begin{corollary}[Corollary \ref{Null}]
In LDC($ \mathcal{Q}_{\mathbb{N} } $): If $  \emptyset   \vdash  \ottnt{f}  :^{  1  }   \Pi  \ottmv{x}  :^{  0  } \!   \ottmv{s}   .   \ottmv{x}   $ and $  \emptyset   \vdash_{0}  \ottnt{A}  :   \ottmv{s}  $, then $ \ottnt{f}  \:  \ottnt{A} ^{  0  } $ must diverge.
\end{corollary}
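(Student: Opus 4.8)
The plan is to argue by contradiction using the heap soundness theorem (Theorem~\ref{heapDep}) together with the Irrelevant lemma (Lemma~\ref{irrel}), following the same template as Corollary~\ref{nonint}. Suppose, towards a contradiction, that $ \ottnt{f}  \:  \ottnt{A} ^{  0  } $ terminates. Fix two concrete types that are syntactically distinct as values, say $\ottnt{A_{{\mathrm{1}}}} :=  \mathbf{Unit} $ and $\ottnt{A_{{\mathrm{2}}}} :=  \mathbf{Bool} $ (so I would work in LDC($ \mathcal{Q}_{\mathbb{N} } $) extended with these base types, which does not affect the argument). Since $ \ottnt{f} $ is a closed term typed at grade $ 1 $ and $\Pi  \ottmv{x}  :^{  0  } \!   \ottmv{s}   .   \ottmv{x} $ is a $\Pi$-type, repeated application of progress/preservation and the heap reduction rules forces $ \ottnt{f} $ to reduce (in some heap $\ottnt{H}$ built from delayed substitutions) to a $\lambda$-abstraction of the form $ \lambda^{  0  }  \ottmv{x}  .  \ottnt{b} $; crucially the argument's grade is $0$, so in both instantiations the heap reduction $ [  \ottnt{H}  ]    (   \lambda^{  0  }  \ottmv{x}  .  \ottnt{b}   )   \:  \ottnt{A_{\ottmv{i}}} ^{  0  }   \Longrightarrow^{  1  }_{ \ottnt{S} } [   \ottnt{H}  ,   \ottmv{x}  \overset{  0  }{\mapsto}  \ottnt{A_{\ottmv{i}}}    ]  \ottnt{b} $ adds the definition of $\ottmv{x}$ at weight $0$ (choosing $\ottmv{x}$ fresh).

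Next I would invoke Lemma~\ref{irrel}: since the assignment $ \ottmv{x}  \overset{  0  }{\mapsto}  \ottnt{A_{\ottmv{i}}} $ is held at weight $0$, which satisfies $\neg(\exists q_0,\, 0 = 1 + q_0)$, any heap reduction of $\ottnt{b}$ starting from $ [   \ottnt{H}  ,   \ottmv{x}  \overset{  0  }{\mapsto}  \ottnt{A_{{\mathrm{1}}}}    ]  \ottnt{b} $ can be mirrored step for step by a reduction starting from $ [   \ottnt{H}  ,   \ottmv{x}  \overset{  0  }{\mapsto}  \ottnt{A_{{\mathrm{2}}}}    ]  \ottnt{b} $, yielding identical terms at every stage (the weight $0$ never changes by Lemma~\ref{HeapUnchanged}, and the support-set bookkeeping goes through because $\ottmv{x}$ never gets looked up). So the two reductions are lockstep-identical. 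By the termination assumption, this common reduction eventually reaches a value $\ottnt{b'}$. By heap soundness applied in each instantiation, $\ottnt{b'}$ must be typeable at grade $1$ with type $\ottnt{A_{{\mathrm{1}}}} =  \mathbf{Unit} $ in the first run and with type $\ottnt{A_{{\mathrm{2}}}} =  \mathbf{Bool} $ in the second run — but $\ottnt{b'}$ is a single fixed term, and no value inhabits both $ \mathbf{Unit} $ and $ \mathbf{Bool} $. Contradiction; hence $ \ottnt{f}  \:  \ottnt{A} ^{  0  } $ diverges for every $\ottnt{A}$ with $  \emptyset   \vdash_{0}  \ottnt{A}  :   \ottmv{s}  $.

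The main obstacle I expect is making precise the claim that $ \ottnt{f} $, before the argument is consumed, reduces to a $\lambda$-abstraction \emph{independently of the instantiation}. In a call-by-name heap setting $ \ottnt{f} $ itself may need several heap steps (look-ups, beta steps, case reductions) to expose a $\lambda$; since $ \ottnt{f} $ is closed and its argument plays no role until after the $\lambda$ is exposed, the reductions $ [   \emptyset   ]   \ottnt{f}  \:  \ottnt{A_{{\mathrm{1}}}} ^{  0  }   \Longrightarrow^{  1  }_{ \ottnt{j} } [  \ottnt{H}  ]    (   \lambda^{  0  }  \ottmv{x}  .  \ottnt{b}   )   \:  \ottnt{A_{{\mathrm{1}}}} ^{  0  }  $ and $ [   \emptyset   ]   \ottnt{f}  \:  \ottnt{A_{{\mathrm{2}}}} ^{  0  }   \Longrightarrow^{  1  }_{ \ottnt{j} } [  \ottnt{H}  ]    (   \lambda^{  0  }  \ottmv{x}  .  \ottnt{b}   )   \:  \ottnt{A_{{\mathrm{2}}}} ^{  0  }  $ coincide; this requires a small auxiliary observation that heap reduction of an application $\ottnt{e}  \:  \ottnt{a} ^{ \ottnt{r} }$ that only touches $\ottnt{e}$ is insensitive to $\ottnt{a}$, which follows from the structure of \rref{HeapStep-AppL} and its support-set premise. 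The remainder is routine bookkeeping, and the corollary that no such $ \ottnt{f} $ exists in a strongly normalizing PTS is then immediate.
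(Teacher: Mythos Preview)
Your proposal is correct and follows essentially the same argument as the paper's own proof: contradiction via two instantiations $\mathbf{Unit}$ and $\mathbf{Bool}$, reduction of $f$ to a $\lambda$ independently of the argument, the $\beta$-step loading the type at weight $0$, lockstep reduction via the Irrelevant lemma, and the final type clash from heap soundness. The auxiliary observation you flag about \rref{HeapStep-AppL} is exactly what the paper relies on (implicitly) to justify that the pre-$\beta$ reductions coincide.
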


\begin{proof}
To see why, let us assume, towards contradiction, that $ \ottnt{f}  \:  \ottnt{A} ^{  0  } $ terminates. 

Let $\ottnt{A_{{\mathrm{1}}}} :=  \mathbf{Unit} $ and $\ottnt{A_{{\mathrm{2}}}} :=  \mathbf{Bool} $.

Then, for some $\ottnt{j}$, $\ottnt{H}$ and $\ottnt{b}$, we have, $ [   \emptyset   ]   \ottnt{f}  \:  \ottnt{A_{{\mathrm{1}}}} ^{  0  }   \Longrightarrow^{  1  }_{ \ottnt{j} } [  \ottnt{H}  ]    (   \lambda^{  0  }  \ottmv{x}  .  \ottnt{b}   )   \:  \ottnt{A_{{\mathrm{1}}}} ^{  0  }  $ and  $ [   \emptyset   ]   \ottnt{f}  \:  \ottnt{A_{{\mathrm{2}}}} ^{  0  }   \Longrightarrow^{  1  }_{ \ottnt{j} } [  \ottnt{H}  ]    (   \lambda^{  0  }  \ottmv{x}  .  \ottnt{b}   )   \:  \ottnt{A_{{\mathrm{2}}}} ^{  0  }  $.

Again, $ [  \ottnt{H}  ]    (   \lambda^{  0  }  \ottmv{x}  .  \ottnt{b}   )   \:  \ottnt{A_{{\mathrm{1}}}} ^{  0  }   \Longrightarrow^{  1  }_{ \ottnt{S} } [   \ottnt{H}  ,   \ottmv{x}  \overset{  0  }{\mapsto}  \ottnt{A_{{\mathrm{1}}}}    ]  \ottnt{b} $ and $ [  \ottnt{H}  ]    (   \lambda^{  0  }  \ottmv{x}  .  \ottnt{b}   )   \:  \ottnt{A_{{\mathrm{2}}}} ^{  0  }   \Longrightarrow^{  1  }_{ \ottnt{S} } [   \ottnt{H}  ,   \ottmv{x}  \overset{  0  }{\mapsto}  \ottnt{A_{{\mathrm{2}}}}    ]  \ottnt{b} $ (say, $ \ottmv{x}  \: \textit{fresh} $).

Now, if $ [   \ottnt{H}  ,   \ottmv{x}  \overset{  0  }{\mapsto}  \ottnt{A_{{\mathrm{1}}}}    ]  \ottnt{b}  \Longrightarrow^{  1  }_{ \ottnt{k} } [     \ottnt{H'}  ,   \ottmv{x}  \overset{  0  }{\mapsto}  \ottnt{A_{{\mathrm{1}}}}     ,  \ottnt{H''}   ]  \ottnt{b'} $, then $ [   \ottnt{H}  ,   \ottmv{x}  \overset{  0  }{\mapsto}  \ottnt{A_{{\mathrm{2}}}}    ]  \ottnt{b}  \Longrightarrow^{  1  }_{ \ottnt{k} } [     \ottnt{H'}  ,   \ottmv{x}  \overset{  0  }{\mapsto}  \ottnt{A_{{\mathrm{2}}}}     ,  \ottnt{H''}   ]  \ottnt{b'} $, for any $\ottnt{k}$ (By Lemma \ref{irrelP}).

As and when the reduction stops, $\ottnt{b'}$ is a value. By soundness, $\ottnt{b'}$ must be a value of  types $\ottnt{A_{{\mathrm{1}}}}$ and $\ottnt{A_{{\mathrm{2}}}}$ simultaneously, a contradiction. ($\because$ there is no value that has types both $ \mathbf{Unit} $ and $ \mathbf{Bool} $.)\\
So $ \ottnt{f}  \:  \ottnt{A_{{\mathrm{1}}}} ^{  0  } $, $ \ottnt{f}  \:  \ottnt{A_{{\mathrm{2}}}} ^{  0  } $, and therefore $ \ottnt{f}  \:  \ottnt{A} ^{  0  } $, must diverge for any $\ottnt{A}$. 

As a corollary, we see that in a strongly normalizing PTS, no such $\ottnt{f}$ exists.

Using the same argument as above, we can also show that if $  \emptyset   \vdash  \ottnt{f}  :^{  \mathbf{L}  }   \Pi  \ottmv{x}  :^{  \mathbf{H}  } \!   \ottmv{s}   .   \ottmv{x}   $ and $  \emptyset   \vdash  \ottnt{A}  :^{  \mathbf{H}  }   \ottmv{s}  $, then $ \ottnt{f}  \:  \ottnt{A} ^{  \mathbf{H}  } $ must diverge.
\end{proof}

%----------------------------------------------------------------------------------------------------

\begin{corollary}[Corollary \ref{Identity}]
In LDC($ \mathcal{Q}_{\mathbb{N} } $): In a strongly normalizing PTS, if $  \emptyset   \vdash  \ottnt{f}  :^{  1  }   \Pi  \ottmv{x}  :^{  0  } \!   \ottmv{s}   .   \Pi  \ottmv{y}  :^{  1  } \!   \ottmv{x}   .   \ottmv{x}    $ and $  \emptyset   \vdash_{0}  \ottnt{A}  :   \ottmv{s}  $ and $  \emptyset   \vdash  \ottnt{a}  :^{  1  }  \ottnt{A} $, then $    \ottnt{f}  \:  \ottnt{A} ^{  0  }    \:  \ottnt{a} ^{  1  }   =_{\beta}  \ottnt{a} $.
\end{corollary}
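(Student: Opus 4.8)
The plan is to mimic the strategy used for Corollary \ref{Identity}'s simply-typed ancestor (the ``Identity Type'' example in the commented-out fragment) but phrased via the heap soundness theorem (Theorem \ref{heapDep}) and the Irrelevant lemma (Lemma \ref{irrelP}), exploiting strong normalization to guarantee that the relevant reductions terminate. The key idea is that $ \ottnt{f}  \:  \ottnt{A} ^{  0  } $ is applied at grade $ 0 $, so the type argument $\ottnt{A}$ is loaded into the heap at weight $ 0 $ and, being irrelevant, plays no role in any reduction; meanwhile the term argument $\ottnt{a}$ is used exactly once, so the reduction of $    \ottnt{f}  \:  \ottnt{A} ^{  0  }    \:  \ottnt{a} ^{  1  } $ must look up $\ottmv{y}$ (the bound variable of the inner $\lambda$) precisely once, and at that point well-typedness forces the redex to be $\ottmv{y}$ itself, so the result is $\ottnt{a}$.

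First I would instantiate at two concrete, incompatible choices (e.g. $\ottnt{A_{{\mathrm{1}}}} :=  \mathbf{Unit} $, $\ottnt{a_{{\mathrm{1}}}} :=  \mathbf{unit} $ and $\ottnt{A_{{\mathrm{2}}}} :=  \mathbf{Bool} $, $\ottnt{a_{{\mathrm{2}}}} :=  \mathbf{true} $), assuming WLOG that $ \mathbf{Unit} $ and $ \mathbf{Bool} $ are available in the PTS (one can always adjoin them, or use two closed terms of distinct normal-form types). By strong normalization and Theorem \ref{heapDep}, both $ [   \emptyset   ]     \ottnt{f}  \:  \ottnt{A_{\ottmv{i}}} ^{  0  }    \:  \ottnt{a_{\ottmv{i}}} ^{  1  } $ reduce to values in the heap semantics. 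I would then trace this reduction: since $\ottnt{f}$ is closed and reduces first, there is a common prefix $ [   \emptyset   ]   \ottnt{f}  \:  \ottnt{A_{\ottmv{i}}} ^{  0  }   \Longrightarrow [  \ottnt{H_{{\mathrm{1}}}}  ]    (   \lambda^{  0  }  \ottmv{x}  .  \ottnt{b}   )   \:  \ottnt{A_{\ottmv{i}}} ^{  0  }  $ where $\ottnt{H_{{\mathrm{1}}}}$ and $\ottnt{b}$ do not depend on $\ottmv{i}$. After the $\beta$-step for the type argument we get $ [   \ottnt{H_{{\mathrm{1}}}}  ,   \ottmv{x}  \overset{  0  }{\mapsto}  \ottnt{A_{\ottmv{i}}}    ]   \ottnt{b}  \:  \ottnt{a_{\ottmv{i}}} ^{  1  }  $. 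Now Lemma \ref{irrelP} applies: the assignment $\ottmv{x} \overset{ 0 }{\mapsto}$ is at a weight with $\neg(\exists q_0, 1 = 0 + q_0)$ in $ \mathbb{N}_{=} $ (and for $ \mathbb{N}_{\geq} $ as well since $  0   <:   1  $ fails... actually in $ \mathbb{N}_{\geq} $, $  1   <: 0 $, so I must double-check the hypothesis $\neg(\exists q_0, r = q + q_0)$ with $r = 0$, $q = 1$; in $ \mathbb{N}_{\geq} $, $0 = 1 + q_0$ iff $q_0 \sqsupseteq$ ... this needs care), so the reductions for $\ottmv{i} = 1$ and $\ottmv{i} = 2$ proceed in lockstep, accumulating identical heap suffixes, until the point where $\ottmv{y}$ is looked up.

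The remaining step is to argue that when the value of $\ottmv{y}$ is first consumed, the redex is exactly $\ottmv{y}$ and therefore the reduction yields $\ottnt{a_{\ottmv{i}}}$. This uses well-typedness of the reduct via heap soundness together with the fact that $\ottnt{f}$, applied to the two arguments, is typed at return type $\ottmv{x}$, which under the heap becomes $\ottnt{A_{\ottmv{i}}}$; a value of type $\ottnt{A_{{\mathrm{1}}}} =  \mathbf{Unit} $ must be $ \mathbf{unit} $ and a value of type $\ottnt{A_{{\mathrm{2}}}} =  \mathbf{Bool} $ must be $ \mathbf{true} $ or $ \mathbf{false} $; since before the look-up the two reductions are literally the same term in structurally matching heaps, the terminal value can only be one that is uniformly forced, i.e. it must be $\ottnt{a_{\ottmv{i}}}$ obtained by returning the contents of $\ottmv{y}$. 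Hence $    \ottnt{f}  \:  \ottnt{A} ^{  0  }    \:  \ottnt{a} ^{  1  }   \leadsto^* \ottnt{a}$ up to the heap, and by Lemma \ref{HeapSim} (similarity of heap reduction to standard reduction) this descends to $    \ottnt{f}  \:  \ottnt{A} ^{  0  }    \:  \ottnt{a} ^{  1  }   =_{\beta}  \ottnt{a} $. The main obstacle I anticipate is the final case analysis pinning down ``the redex at the look-up is $\ottmv{y}$ itself'': one must rule out the possibility that $\ottmv{y}$ is consumed inside some larger elimination form whose result is not $\ottnt{a}$, which requires combining the typing of the reduct (the type is the variable $\ottmv{x}$, hence abstract, hence its values cannot be eliminated nontrivially at a closed type) with the structure of the heap-reduction rules; getting this argument airtight in the dependent PTS setting — where the ``lag'' between term and type must be handled via the definition-extended system of Appendix \ref{heapPTS} — is the delicate part.
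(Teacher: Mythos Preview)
Your proposal is correct and follows essentially the same route as the paper: instantiate at two incompatible type/term pairs (the paper uses $\mathbf{Unit},\mathbf{unit}$ and $\mathbf{Bool},\mathbf{true}$), run the heap reduction in lockstep via Lemma~\ref{irrelP} once the type argument lands at weight $0$, and then argue that at the first look-up of $\ottmv{y}$ the redex must be $\ottmv{y}$ itself. Your worry about $\mathbb{N}_{\geq}$ is harmless: the hypothesis $\neg(\exists q_0,\;0 = 1 + q_0)$ holds in both $\mathbb{N}_{=}$ and $\mathbb{N}_{\geq}$ simply because $1 + q_0 \ge 1$ in $\mathbb{N}$. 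For the ``delicate part'' you flag, the paper's phrasing may help: at the moment of look-up the term is either $\ottmv{y}$ or a \emph{proper path} (nested elimination forms) headed by $\ottmv{y}$, and the latter is excluded because no well-typed path can be headed by both a $\mathbf{Unit}$ value and a $\mathbf{Bool}$ value.
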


\begin{proof}
To see why, let's look at how $   \ottnt{f}  \:  \ottnt{A} ^{  0  }    \:  \ottnt{a} ^{  1  } $ reduces.

Let, $\ottnt{A_{{\mathrm{1}}}} :=  \mathbf{Unit} $ and $\ottnt{A_{{\mathrm{2}}}} :=  \mathbf{Bool} $. Also, let $\ottnt{a_{{\mathrm{1}}}} :=  \mathbf{unit} $ and $\ottnt{a_{{\mathrm{2}}}} :=  \mathbf{true} $.

Then, for some $\ottnt{j}$, $\ottnt{H_{{\mathrm{1}}}}$ and $\ottnt{b}$, we have, $ [   \emptyset   ]     \ottnt{f}  \:  \ottnt{A_{{\mathrm{1}}}} ^{  0  }    \:  \ottnt{a_{{\mathrm{1}}}} ^{  1  }   \Longrightarrow^{  1  }_{ \ottnt{j} } [  \ottnt{H_{{\mathrm{1}}}}  ]      (   \lambda^{  0  }  \ottmv{x}  .  \ottnt{b}   )   \:  \ottnt{A_{{\mathrm{1}}}} ^{  0  }    \:  \ottnt{a_{{\mathrm{1}}}} ^{  1  }  $ and  $ [   \emptyset   ]     \ottnt{f}  \:  \ottnt{A_{{\mathrm{2}}}} ^{  0  }    \:  \ottnt{a_{{\mathrm{2}}}} ^{  1  }   \Longrightarrow^{  1  }_{ \ottnt{j} } [  \ottnt{H_{{\mathrm{1}}}}  ]      (   \lambda^{  0  }  \ottmv{x}  .  \ottnt{b}   )   \:  \ottnt{A_{{\mathrm{2}}}} ^{  0  }    \:  \ottnt{a_{{\mathrm{2}}}} ^{  1  }  $.  

Now, $ [  \ottnt{H_{{\mathrm{1}}}}  ]      (   \lambda^{  0  }  \ottmv{x}  .  \ottnt{b}   )   \:  \ottnt{A_{{\mathrm{1}}}} ^{  0  }    \:  \ottnt{a_{{\mathrm{1}}}}   \Longrightarrow^{  1  }_{ \ottnt{S} } [   \ottnt{H_{{\mathrm{1}}}}  ,   \ottmv{x}  \overset{  0  }{\mapsto}  \ottnt{A_{{\mathrm{1}}}}    ]   \ottnt{b}  \:  \ottnt{a_{{\mathrm{1}}}} ^{  1  }  $ and $ [  \ottnt{H_{{\mathrm{1}}}}  ]      (   \lambda^{  0  }  \ottmv{x}  .  \ottnt{b}   )   \:  \ottnt{A_{{\mathrm{2}}}} ^{  0  }    \:  \ottnt{a_{{\mathrm{2}}}} ^{  1  }   \Longrightarrow^{  1  }_{ \ottnt{S} } [   \ottnt{H_{{\mathrm{1}}}}  ,   \ottmv{x}  \overset{  0  }{\mapsto}  \ottnt{A_{{\mathrm{2}}}}    ]   \ottnt{b}  \:  \ottnt{a_{{\mathrm{2}}}} ^{  1  }  $ (say $ \ottmv{x}  \: \textit{fresh} $).

Then, $ [   \ottnt{H_{{\mathrm{1}}}}  ,   \ottmv{x}  \overset{  0  }{\mapsto}  \ottnt{A_{{\mathrm{1}}}}    ]   \ottnt{b}  \:  \ottnt{a_{{\mathrm{1}}}} ^{  1  }   \Longrightarrow^{  1  }_{ \ottnt{k} } [     \ottnt{H'_{{\mathrm{1}}}}  ,   \ottmv{x}  \overset{  0  }{\mapsto}  \ottnt{A_{{\mathrm{1}}}}     ,  \ottnt{H_{{\mathrm{2}}}}   ]    (   \lambda^{  1  }  \ottmv{y}  .  \ottnt{c}   )   \:  \ottnt{a_{{\mathrm{1}}}} ^{  1  }  $ and $ [   \ottnt{H_{{\mathrm{1}}}}  ,   \ottmv{x}  \overset{  0  }{\mapsto}  \ottnt{A_{{\mathrm{2}}}}    ]   \ottnt{b}  \:  \ottnt{a_{{\mathrm{2}}}} ^{  1  }   \Longrightarrow^{  1  }_{ \ottnt{k} } [     \ottnt{H'_{{\mathrm{1}}}}  ,   \ottmv{x}  \overset{  0  }{\mapsto}  \ottnt{A_{{\mathrm{2}}}}     ,  \ottnt{H_{{\mathrm{2}}}}   ]    (   \lambda^{  1  }  \ottmv{y}  .  \ottnt{c}   )   \:  \ottnt{a_{{\mathrm{2}}}} ^{  1  }  $, for some $k$ (By Lemma \ref{irrelP}). 

Next, $ [     \ottnt{H'_{{\mathrm{1}}}}  ,   \ottmv{x}  \overset{  0  }{\mapsto}  \ottnt{A_{{\mathrm{1}}}}     ,  \ottnt{H_{{\mathrm{2}}}}   ]    (   \lambda^{  1  }  \ottmv{y}  .  \ottnt{c}   )   \:  \ottnt{a_{{\mathrm{1}}}} ^{  1  }   \Longrightarrow^{  1  }_{ \ottnt{S} } [     \ottnt{H'_{{\mathrm{1}}}}  ,   \ottmv{x}  \overset{  0  }{\mapsto}  \ottnt{A_{{\mathrm{1}}}}     ,    \ottnt{H_{{\mathrm{2}}}}  ,   \ottmv{y}  \overset{  1  }{\mapsto}  \ottnt{a_{{\mathrm{1}}}}      ]  \ottnt{c} $ and \\ $ [     \ottnt{H'_{{\mathrm{1}}}}  ,   \ottmv{x}  \overset{  0  }{\mapsto}  \ottnt{A_{{\mathrm{2}}}}     ,  \ottnt{H_{{\mathrm{2}}}}   ]    (   \lambda^{  1  }  \ottmv{y}  .  \ottnt{c}   )   \:  \ottnt{a_{{\mathrm{2}}}} ^{  1  }   \Longrightarrow^{  1  }_{ \ottnt{S} } [     \ottnt{H'_{{\mathrm{1}}}}  ,   \ottmv{x}  \overset{  0  }{\mapsto}  \ottnt{A_{{\mathrm{2}}}}     ,    \ottnt{H_{{\mathrm{2}}}}  ,   \ottmv{y}  \overset{  1  }{\mapsto}  \ottnt{a_{{\mathrm{2}}}}      ]  \ottnt{c} $ (say $ \ottmv{y}  \: \textit{fresh} $).

Since the reductions terminate, the value of $\ottmv{y}$ gets looked up. Till that look-up, the two reductions are indistinguishable from one another. 

Let $ [     \ottnt{H'_{{\mathrm{1}}}}  ,   \ottmv{x}  \overset{  0  }{\mapsto}  \ottnt{A_{{\mathrm{1}}}}     ,    \ottnt{H_{{\mathrm{2}}}}  ,   \ottmv{y}  \overset{  1  }{\mapsto}  \ottnt{a_{{\mathrm{1}}}}      ]  \ottnt{c}  \Longrightarrow^{  1  }_{ \ottnt{k'} } [     \ottnt{H''_{{\mathrm{1}}}}  ,   \ottmv{x}  \overset{  0  }{\mapsto}  \ottnt{A_{{\mathrm{1}}}}     ,      \ottnt{H'_{{\mathrm{2}}}}  ,   \ottmv{y}  \overset{  1  }{\mapsto}  \ottnt{a_{{\mathrm{1}}}}     ,  \ottnt{H_{{\mathrm{3}}}}     ]  \ottnt{c'} $ be the point at which $y$ is looked-up. Analyzing the stepping rules, $\ottnt{c'} = y$ or $\ottnt{c'}$ is a proper path headed by $y$ (where a path is a series of nested elimination forms headed by a variable). By soundness, $ \ottnt{c'}$ is well-typed. But there is no well-typed proper path that can be headed by both $\ottnt{a_{{\mathrm{1}}}}$ and $\ottnt{a_{{\mathrm{2}}}}$ because $  \emptyset   \vdash  \ottnt{a_{{\mathrm{1}}}}  : \:   \mathbf{Unit}  $ and $  \emptyset   \vdash  \ottnt{a_{{\mathrm{2}}}}  : \:   \mathbf{Bool}  $. Therefore, $\ottnt{c'} = y $. 

This means that $   \ottnt{f}  \:  \ottnt{A_{{\mathrm{1}}}} ^{  0  }    \:  \ottnt{a_{{\mathrm{1}}}} ^{  1  } $ reduces to $\ottnt{a_{{\mathrm{1}}}}$ and $   \ottnt{f}  \:  \ottnt{A_{{\mathrm{2}}}} ^{  0  }    \:  \ottnt{a_{{\mathrm{2}}}} ^{  1  } $ to $\ottnt{a_{{\mathrm{2}}}}$, and hence $   \ottnt{f}  \:  \ottnt{A} ^{  0  }    \:  \ottnt{a} ^{  1  } $ to $\ottnt{a}$, for any $\ottnt{A}$ and $\ottnt{a}$. Therefore, $    \ottnt{f}  \:  \ottnt{A} ^{  0  }    \:  \ottnt{a} ^{  1  }   =_{\beta}  \ottnt{a} $.

\end{proof}

%----------------------------------------------------------------------------------------------------

%---------------------------------------------------------------------------------------------------
\section{Dependency Analysis in PTS Version of LDC}
%---------------------------------------------------------------------------------------------------

\begin{lemma}[Multiplication] \label{DDMultP}
If $ \Gamma  \vdash  \ottnt{a}  :^{ \ell }  \ottnt{A} $, then $  \ottnt{r_{{\mathrm{0}}}}  \sqcup  \Gamma   \vdash  \ottnt{a}  :^{  \ottnt{r_{{\mathrm{0}}}}  \: \sqcup \:  \ell  }  \ottnt{A} $.
\end{lemma}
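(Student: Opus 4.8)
The statement is the PTS analogue of Lemma~\ref{DSMultP} (the simply-typed multiplication lemma for the dependency instantiation). The plan is to prove it by induction on the derivation $ \Gamma  \vdash  \ottnt{a}  :^{ \ell }  \ottnt{A} $, closely following the structure of the proof of Lemma~\ref{DSMultP} but carrying the extra PTS machinery (sorts, $\Pi$/$\Sigma$ types, the conversion rule, weakening, definitions) through each case. Recall that in the dependency instantiation $+,\cdot,0,1, <: $ are read as $\sqcap,\sqcup,\top,\bot, \sqsubseteq $, and the context operations $ \ottnt{r_{{\mathrm{0}}}}  \sqcup  \Gamma $ and $ \Gamma   \sqsubseteq   \Gamma' $ are the pointwise liftings. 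The key algebraic facts I will use repeatedly are: associativity and commutativity of $\sqcup$; idempotence of $\sqcup$; and, crucially, the distributive \emph{inequality} $  \ell_{{\mathrm{0}}}  \: \sqcup \:   (   \ell_{{\mathrm{1}}}  \: \sqcap \:  \ell_{{\mathrm{2}}}   )    \sqsubseteq    (   \ell_{{\mathrm{0}}}  \: \sqcup \:  \ell_{{\mathrm{1}}}   )   \: \sqcap \:   (   \ell_{{\mathrm{0}}}  \: \sqcup \:  \ell_{{\mathrm{2}}}   )   $, which holds in \emph{every} lattice (no distributivity assumption needed). This inequality, combined with \rref{PTS-SubL} (subsumption on the context, which in the dependency reading allows moving \emph{down} the lattice order on context entries), is what lets the induction go through at the rules whose conclusion context is formed by a meet.

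\textbf{Key cases.} For \rref{PTS-Axiom} and \rref{PTS-Var}, the conclusion context has all nontrivial entries at $\top$ (i.e. $ \top   \sqcup  \Gamma_{{\mathrm{0}}} $), so upgrading $\ell$ to $ \ottnt{r_{{\mathrm{0}}}}  \: \sqcup \:  \ell $ while leaving the context of that shape is immediate from the same rule, using $ \ottnt{r_{{\mathrm{0}}}}  \sqcup   (    \top   \sqcup  \Gamma_{{\mathrm{0}}}   )   =   \top   \sqcup   (   \ottnt{r_{{\mathrm{0}}}}  \sqcup  \Gamma_{{\mathrm{0}}}   )  $. For \rref{PTS-Weak} and the definition rules \rref{PTS-DefVar}, \rref{PTS-DefWeak}, the added assumption sits at $\top$ (resp. is a definition orthogonal to the analysis), so applying the IH to the sub-derivation and re-applying the rule works, since $ \ottnt{r_{{\mathrm{0}}}}  \: \sqcup \:   \top   =  \top $. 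For \rref{PTS-Pi}, \rref{PTS-Lam}, \rref{PTS-Sigma}, \rref{PTS-Pair}, \rref{PTS-LetPair}, \rref{PTS-Sum}, \rref{PTS-InjOne}, \rref{PTS-Case}: these mirror the corresponding simply-typed cases. Where the rule threads the \emph{same} level $\ell$ through premises and binder annotations, I apply the IH to each premise, use associativity of $\sqcup$ to rearrange the binder annotations (e.g. $ \ottnt{r_{{\mathrm{0}}}}  \sqcup   (   \ell  \: \sqcup \:  \ottnt{m}   )   =   (   \ottnt{r_{{\mathrm{0}}}}  \: \sqcup \:  \ell   )   \: \sqcup \:  \ottnt{m} $), re-apply the rule to get a context formed as a \emph{meet of upgraded contexts}, and finally use the distributive inequality together with \rref{PTS-SubL} to bring that down to $ \ottnt{r_{{\mathrm{0}}}}  \sqcup   (   \Gamma_{{\mathrm{1}}}  \sqcap  \Gamma_{{\mathrm{2}}}   )  $ as required. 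The application rule \rref{PTS-App} needs one extra wrinkle beyond its simply-typed counterpart: the result type is $ \ottnt{B}  \{  \ottnt{a}  /  \ottmv{x}  \} $, and since substitution does not touch grade annotations the type is unchanged by the upgrade, so no new reasoning is needed there. For \rref{PTS-Conv} and \rref{PTS-DefConv}, apply the IH and re-apply the conversion rule, noting that $\beta$-equality (resp. equality after substituting definitions) is unaffected by the grade change. For \rref{PTS-SubL} and \rref{PTS-SubR}, apply the IH and observe that $ \sqsubseteq $ is a congruence for $\sqcup$ (monotonicity: $ \Gamma   \sqsubseteq   \Gamma' $ gives $  \ottnt{r_{{\mathrm{0}}}}  \sqcup  \Gamma    \sqsubseteq    \ottnt{r_{{\mathrm{0}}}}  \sqcup  \Gamma'  $, and similarly for the level), then re-apply the rule.

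\textbf{Main obstacle.} I expect the only real subtlety — exactly as in the simply-typed proof — to be the cases \rref{PTS-App}, \rref{PTS-Pair}, \rref{PTS-LetPair}, \rref{PTS-Case} where the conclusion context is a pointwise meet. The naive attempt to push $ \ottnt{r_{{\mathrm{0}}}}  \sqcup  (-) $ inside the meet fails because $\sqcup$ does not distribute over $\sqcap$ in a general lattice; what saves the day is that only the inequality direction $  \ottnt{r_{{\mathrm{0}}}}  \: \sqcup \:   (   \ell_{{\mathrm{1}}}  \: \sqcap \:  \ell_{{\mathrm{2}}}   )    \sqsubseteq    (   \ottnt{r_{{\mathrm{0}}}}  \: \sqcup \:  \ell_{{\mathrm{1}}}   )   \: \sqcap \:   (   \ottnt{r_{{\mathrm{0}}}}  \: \sqcup \:  \ell_{{\mathrm{2}}}   )   $ is needed, and \rref{PTS-SubL} consumes it in precisely that direction. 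So the proof obligation at each such case is: after the IH and re-application of the rule one has a judgment with context $  (   \ottnt{r_{{\mathrm{0}}}}  \sqcup  \Gamma_{{\mathrm{1}}}   )   \sqcap   (   \ottnt{r_{{\mathrm{0}}}}  \sqcup  \Gamma_{{\mathrm{2}}}   )  $, and since $ \ottnt{r_{{\mathrm{0}}}}  \sqcup   (   \Gamma_{{\mathrm{1}}}  \sqcap  \Gamma_{{\mathrm{2}}}   )    \sqsubseteq     (   \ottnt{r_{{\mathrm{0}}}}  \sqcup  \Gamma_{{\mathrm{1}}}   )   \sqcap   (   \ottnt{r_{{\mathrm{0}}}}  \sqcup  \Gamma_{{\mathrm{2}}}   )  $ pointwise, \rref{PTS-SubL} lets us conclude with context $ \ottnt{r_{{\mathrm{0}}}}  \sqcup   (   \Gamma_{{\mathrm{1}}}  \sqcap  \Gamma_{{\mathrm{2}}}   )  $. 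Everything else is routine bookkeeping with associativity of $\sqcup$ and the observation that grade annotations are stable under substitution. I would note, for the record, that unlike the simply-typed case there is no factorization lemma available here (and none is needed), since the dependency splitting lemma (Lemma~\ref{DSSplitP}) is trivial.
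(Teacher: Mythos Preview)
Your proposal is correct and matches the paper's approach: the paper's proof is simply ``By induction on $\Gamma \vdash a :^{\ell} A$. Follow the proof of lemma~\ref{DSMultP},'' and you have spelled out precisely what that entails, including the key use of the lattice distributive inequality together with \rref{PTS-SubL} at the meet-forming rules. One minor note: the definition rules (\rref{PTS-DefVar}, \rref{PTS-DefWeak}, \rref{PTS-DefConv}) are only added in the extended system for heap soundness (Appendix~\ref{heapPTS}), so they are not present in the system for which this lemma is stated --- but your handling of them is harmless.
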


\begin{proof}
By induction on $ \Gamma  \vdash  \ottnt{a}  :^{ \ell }  \ottnt{A} $. Follow the proof of lemma \ref{DSMultP}.
\end{proof}

%---------------------------------------------------------------------------------------------------

\begin{lemma}[Splitting] \label{DDSplitP}
If $ \Gamma  \vdash  \ottnt{a}  :^{  \ell_{{\mathrm{1}}}  \: \sqcap \:  \ell_{{\mathrm{2}}}  }  \ottnt{A} $, then there exists $\Gamma_{{\mathrm{1}}}$ and $\Gamma_{{\mathrm{2}}}$ such that $ \Gamma_{{\mathrm{1}}}  \vdash  \ottnt{a}  :^{ \ell_{{\mathrm{1}}} }  \ottnt{A} $ and $ \Gamma_{{\mathrm{2}}}  \vdash  \ottnt{a}  :^{ \ell_{{\mathrm{2}}} }  \ottnt{A} $ and $ \Gamma  =   \Gamma_{{\mathrm{1}}}  \sqcap  \Gamma_{{\mathrm{2}}}  $. 
\end{lemma}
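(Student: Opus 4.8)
The plan is to carry over the simply-typed argument for Lemma~\ref{DSSplitP} essentially verbatim, since nothing about the Pure Type System structure interferes with it. The key observation is that the ``splitting'' required here is not a genuine decomposition of resources --- as it is in the linearity case, Lemma~\ref{BLDSplitP}, where one genuinely needs the factorization lemma --- but merely a consequence of two elementary facts: the meet $\ell_{{\mathrm{1}}} \sqcap \ell_{{\mathrm{2}}}$ sits below both $\ell_{{\mathrm{1}}}$ and $\ell_{{\mathrm{2}}}$ in the lattice order, and the pointwise meet on contexts is idempotent.

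Concretely, I would proceed as follows. First, from the hypothesis $\Gamma \vdash \ottnt{a} :^{\ell_{{\mathrm{1}}} \sqcap \ell_{{\mathrm{2}}}} \ottnt{A}$ together with the lattice inequalities $\ell_{{\mathrm{1}}} \sqcap \ell_{{\mathrm{2}}} \sqsubseteq \ell_{{\mathrm{1}}}$ and $\ell_{{\mathrm{1}}} \sqcap \ell_{{\mathrm{2}}} \sqsubseteq \ell_{{\mathrm{2}}}$, I would apply the right-subsumption rule \rref{PTS-SubR} --- which, under the dependency reading of the parametrizing structure, weakens the observer's level upward whenever $\ell \sqsubseteq \ell'$ --- twice, obtaining $\Gamma \vdash \ottnt{a} :^{\ell_{{\mathrm{1}}}} \ottnt{A}$ and $\Gamma \vdash \ottnt{a} :^{\ell_{{\mathrm{2}}}} \ottnt{A}$. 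Then set $\Gamma_{{\mathrm{1}}} := \Gamma$ and $\Gamma_{{\mathrm{2}}} := \Gamma$. Since both copies share the same underlying ungraded context, $\Gamma_{{\mathrm{1}}} \sqcap \Gamma_{{\mathrm{2}}}$ is defined, and $\Gamma_{{\mathrm{1}}} \sqcap \Gamma_{{\mathrm{2}}} = \Gamma \sqcap \Gamma = \Gamma$ because the lattice meet is idempotent and the context-level meet is computed pointwise. This discharges all three conjuncts.

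There is no real obstacle here: unlike the other metatheoretic lemmas, this one does not proceed by induction on the typing derivation at all, and the definitional-equality and dependent-binding machinery of the PTS version plays no role whatsoever. The only thing worth double-checking along the way is that \rref{PTS-SubR} in Figure~\ref{pts} has exactly the expected shape under the interpretation of $<:$ as $\sqsubseteq$, namely that it permits upgrading the label to the right of the turnstile in the lattice order; this is immediate from the statement of the simply-typed analogue \rref{ST-SubRD} and from the uniform treatment of the parametrizing structure adopted throughout.
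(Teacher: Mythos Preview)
Your proposal is correct and matches the paper's proof exactly: apply \rref{PTS-SubR} (under the lattice reading, what the paper denotes \rref{PTS-SubRD}) twice to obtain $\Gamma \vdash a :^{\ell_1} A$ and $\Gamma \vdash a :^{\ell_2} A$, then set $\Gamma_1 := \Gamma_2 := \Gamma$ and use idempotency of $\sqcap$. The paper's proof is just a terser rendering of precisely this argument.
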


\begin{proof}
Have: $ \Gamma  \vdash  \ottnt{a}  :^{  \ell_{{\mathrm{1}}}  \: \sqcap \:  \ell_{{\mathrm{2}}}  }  \ottnt{A} $. By \rref{PTS-SubRD}, $ \Gamma  \vdash  \ottnt{a}  :^{ \ell_{{\mathrm{1}}} }  \ottnt{A} $ and $ \Gamma  \vdash  \ottnt{a}  :^{ \ell_{{\mathrm{2}}} }  \ottnt{A} $. The lemma follows by setting $\Gamma_{{\mathrm{1}}} := \Gamma$ and $\Gamma_{{\mathrm{2}}} := \Gamma$.
\end{proof}

%--------------------------------------------------------------------------------------------------

\begin{lemma}[Weakening]
If $  \Gamma_{{\mathrm{1}}}  ,  \Gamma_{{\mathrm{2}}}   \vdash  \ottnt{a}  :^{ \ell }  \ottnt{A} $ and $ \Delta_{{\mathrm{1}}}  \vdash_{0}  \ottnt{C}  :   \ottmv{s}  $ and $  \lfloor  \Gamma_{{\mathrm{1}}}  \rfloor   =  \Delta_{{\mathrm{1}}} $, then $    \Gamma_{{\mathrm{1}}}  ,   \ottmv{z}  :^{  \top  }  \ottnt{C}     ,  \Gamma_{{\mathrm{2}}}   \vdash  \ottnt{a}  :^{ \ell }  \ottnt{A} $.
\end{lemma}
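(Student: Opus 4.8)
The statement to prove is the dependency-version weakening lemma for the PTS calculus: if $\Gamma_1, \Gamma_2 \vdash a :^{\ell} A$ and $\Delta_1 \vdash_0 C : s$ with $\lfloor \Gamma_1 \rfloor = \Delta_1$, then $\Gamma_1, z :^{\top} C, \Gamma_2 \vdash a :^{\ell} A$.

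\textbf{Approach.} The plan is to proceed by induction on the derivation of $\Gamma_1, \Gamma_2 \vdash a :^{\ell} A$, exactly mirroring the proof of the linearity-version PTS weakening lemma (Lemma~\ref{BLDWeakP}), since the only difference is the interpretation of the algebraic operations: $0$ becomes $\top$, $+$ becomes $\sqcap$, $\cdot$ becomes $\sqcup$, and $<:$ becomes $\sqsubseteq$. The new variable $z$ is inserted at grade $\top$ (the dependency analogue of grade $0$), which is the identity for $\sqcap$ and an annihilator for $\sqcup$, so inserting it never affects any context operation. Concretely, whenever a rule combines contexts via $\sqcap$ (as in \rref{PTS-AppD}, \rref{PTS-PairD}, etc.) or scales via $\sqcup$, the new $z :^{\top} C$ column is preserved because $\top \sqcap m = m$ wait---rather, one checks that $\top$-columns are exactly what the operations produce: $\top \sqcap \top = \top$ and $m \sqcup \top = \top$ for the scaling in the type-position, and in the context-combining rules both premises carry $z$ at $\top$ so their $\sqcap$ still gives $\top$.

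\textbf{Key steps, in order.} First I would handle the base case \rref{PTS-VarD}: here the derived context is $\top \sqcup \Gamma_1', x :^{\ell} A, \top \sqcup \Gamma_2'$, and I need to reinsert $z :^{\top} C$ somewhere in the $\top$-scaled region; since $\top \sqcup (\text{anything}) = \top$, the extended context is still of the required shape and \rref{PTS-VarD} re-applies directly (the well-formedness side condition $\Delta \vdash_0 A : s$ is extended using the hypothesis $\Delta_1 \vdash_0 C : s$ and the underlying weakening for the $\vdash_0$ judgment, or by an appeal to \rref{PTS-DefWeak}/\rref{PTS-Weak}-style reasoning on the zero-world fragment). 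Next I would dispatch \rref{PTS-Weak}/\rref{PTS-DefWeak}, which are straightforward since they already introduce a $\top$-graded assumption. Then the structural rules---\rref{PTS-PiD}, \rref{PTS-LamD}, \rref{PTS-AppD}, \rref{PTS-PairD}, \rref{PTS-LetPairD}, \rref{PTS-SumD}, \rref{PTS-InjD}, \rref{PTS-CaseD}, \rref{PTS-ConvD}---each follow by applying the IH to every premise and reassembling with the same rule; the only verification needed is that the context operations on the extended contexts still produce $z :^{\top} C$ in the right spot, which holds because $\top$ is idempotent for $\sqcap$ and absorbing for $\sqcup$. Finally \rref{PTS-SubLD} and \rref{PTS-SubRD} follow from the IH together with the monotonicity observation that inserting a $\top$-column respects $\sqsubseteq$ (since $\top$ is the top element, any column dominates, and in particular $\Gamma \sqsubseteq \Gamma'$ extends to $(\Gamma, z:^{\top}C) \sqsubseteq (\Gamma', z:^{\top}C)$).

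\textbf{Main obstacle.} The routine cases are genuinely routine; the one place warranting care is the handling of the well-formedness side conditions on types in \rref{PTS-VarD}, \rref{PTS-LetPairD}, and \rref{PTS-CaseD}, which involve a $\vdash_0$ (zero-world) subderivation over a context $\Delta$ that must be extended with $z : C$. This requires knowing that weakening holds for the $\vdash_0$ judgment---but $\vdash_0$ is just the special case of the main judgment at world $\top$, so this is exactly the lemma being proved, applied at $\ell = \top$; the induction therefore goes through simultaneously for these auxiliary judgments, and there is no genuine circularity since the $\vdash_0$ subderivations are structurally smaller. An alternative, cleaner framing is to first establish weakening for $\vdash_0$ directly (it is a mild generalization of standard PTS weakening, independent of the grading machinery) and then cite it; I expect the cleanest writeup states the lemma with the $\vdash_0$ premise $\Delta_1 \vdash_0 C : s$ and simply threads it through, noting that the $\vdash_0$ fragment's own weakening is the $\ell = \top$ instance.
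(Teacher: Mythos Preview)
Your proposal is correct and takes the same approach as the paper: the paper's proof is simply ``By induction on $\Gamma_1, \Gamma_2 \vdash a :^{\ell} A$,'' and your detailed case analysis is a faithful (if much more explicit) unfolding of exactly that induction. The observation you flag about the $\vdash_0$ side conditions is the right thing to watch for, and your resolution---either simultaneous induction or a prior weakening lemma for the $\vdash_0$ fragment---is standard.
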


\begin{proof}
By induction on $  \Gamma_{{\mathrm{1}}}  ,  \Gamma_{{\mathrm{2}}}   \vdash  \ottnt{a}  :^{ \ell }  \ottnt{A} $.
\end{proof}

%---------------------------------------------------------------------------------------------------

\begin{lemma}[Substitution (Lemma \ref{DSubst})]
If $    \Gamma_{{\mathrm{1}}}  ,   \ottmv{z}  :^{ \ottnt{m_{{\mathrm{0}}}} }  \ottnt{C}     ,  \Gamma_{{\mathrm{2}}}   \vdash  \ottnt{a}  :^{ \ell }  \ottnt{A} $ and $ \Gamma  \vdash  \ottnt{c}  :^{ \ottnt{m_{{\mathrm{0}}}} }  \ottnt{C} $ and $  \lfloor  \Gamma_{{\mathrm{1}}}  \rfloor   =   \lfloor  \Gamma  \rfloor  $, then $     \Gamma_{{\mathrm{1}}}  \sqcap  \Gamma    ,  \Gamma_{{\mathrm{2}}}   \{  \ottnt{c}  /  \ottmv{z}  \}   \vdash   \ottnt{a}  \{  \ottnt{c}  /  \ottmv{z}  \}   :^{ \ell }   \ottnt{A}  \{  \ottnt{c}  /  \ottmv{z}  \}  $. 
\end{lemma}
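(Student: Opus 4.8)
The statement to prove is the PTS-version Substitution lemma (Lemma \ref{DSubst} for dependency): if $\Gamma_1, z :^{m_0} C, \Gamma_2 \vdash a :^{\ell} A$ and $\Gamma \vdash c :^{m_0} C$ with $\lfloor \Gamma_1 \rfloor = \lfloor \Gamma \rfloor$, then $\Gamma_1 \sqcap \Gamma, \Gamma_2\{c/z\} \vdash a\{c/z\} :^{\ell} A\{c/z\}$.

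The plan is to proceed by induction on the derivation of $\Gamma_1, z :^{m_0} C, \Gamma_2 \vdash a :^{\ell} A$, following the template already established by the simply-typed dependency substitution lemma (Lemma \ref{DSSubstP}) and the PTS linearity substitution lemma (Lemma \ref{BLDSubstP}). The key change relative to the simply-typed case is that $+$ on contexts is replaced by $\sqcap$, $\cdot$ by $\sqcup$, $0$ by $\top$, $1$ by $\bot$, and the order $<:$ by $\sqsubseteq$; relative to the PTS linearity case, the substitution must also be carried through the type $A$. The main technical tool I would use repeatedly is the dependency Splitting lemma (Lemma \ref{DSSplitP}/\ref{DDSplitP}): whenever the binary rules (\rref{PTS-AppD}, \rref{PTS-PairD}, \rref{PTS-LetPairD}, \rref{PTS-CaseD}, \rref{PTS-PiD}, \rref{PTS-SumD}) split the turnstile level or the variable grade as $\ell_1 \sqcap \ell_2$ (resp. $m_{01} \sqcap m_{02}$), I split $\Gamma \vdash c :^{m_0} C$ into $\Gamma_{31} \vdash c :^{m_{01}} C$ and $\Gamma_{32} \vdash c :^{m_{02}} C$ with $\Gamma = \Gamma_{31} \sqcap \Gamma_{32}$, and distribute these to the two (or three) sub-derivations via the IH. This works cleanly because $\sqcap$ is idempotent and associative/commutative, so the Splitting lemma for dependency is essentially free and places no distributivity demand on the lattice.

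The steps, in order, are: (1) \rref{PTS-VarD} — three subcases depending on whether the occurrence is to the left of $z$, is $z$ itself, or is to the right; the first and third are handled by \rref{PTS-VarD} together with \rref{PTS-SubLD} to absorb the extra $\Gamma$ (using $\Gamma_{11} \sqsubseteq \top \sqcup \Gamma_{11}$, etc.), and the middle case is where the substitute $c$ is actually inserted, with its type possibly $\beta$-converted via \rref{PTS-ConvD} and its grade matched via restricted upgrading (Lemma \ref{ResUpP}); (2) \rref{PTS-WeakD} and \rref{PTS-DefWeakD} — two subcases (the weakened variable versus the substituted variable $z$), each routine by IH and re-application of the weakening rule, noting $y \notin \mathrm{fv}(a) \cup \mathrm{fv}(A)$ so the substitution is vacuous on $B$; (3) \rref{PTS-PiD}, \rref{PTS-LamD}, \rref{PTS-AppD} — the $\Pi$ and application cases need the commutation $A_2\{a_1/x\}\{c/z\} = A_2\{c/z\}\{a_1\{c/z\}/x\}$ to reconcile the substituted dependent codomain, exactly as in the PTS linearity proof; (4) \rref{PTS-ConvD}/\rref{PTS-DefConvD} — apply IH and note $A =_\beta B$ implies $A\{c/z\} =_\beta B\{c/z\}$ (and for \rref{PTS-DefConvD}, that extending the definition context does not disturb the equality since the free variables of $A,B$ are already in its domain); (5) \rref{PTS-SigmaD}, \rref{PTS-PairD}, \rref{PTS-LetPairD}, \rref{PTS-SumD}, \rref{PTS-InjOneD}/\rref{PTS-InjTwoD}, \rref{PTS-CaseD} — all by Splitting + IH + reassembly, with the $\Sigma$-eliminators requiring the motive substitution bookkeeping $B\{a/w\}\{c/z\} = B\{c/z\}\{a\{c/z\}/w\}$; (6) \rref{PTS-SubLD}, \rref{PTS-SubRD} — \rref{PTS-SubLD} needs one extra move: since the premise derives $c$ at $m_0$ but the sub-derivation assumes $z$ at the larger $m_0'$, I first use \rref{PTS-SubRD} to get $\Gamma \vdash c :^{m_0'} C$ before invoking the IH, then close with \rref{PTS-SubLD}, using monotonicity of $\sqcap$ under $\sqsubseteq$; \rref{PTS-SubRD} is immediate by IH.

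I expect the main obstacle to be the \rref{PTS-VarD} case at the occurrence of $z$, combined with the conversion rule: one must show $\Gamma_1 \sqcap \Gamma, \Gamma_2\{c/z\} \vdash c :^{\ell} A\{c/z\}$ from $\Gamma \vdash c :^{m_0} C$, where the variable was held at $\ell \sqcap m$ for some $m$, so $\ell \sqsubseteq m_0$ need not hold directly — this is where restricted upgrading (Lemma \ref{ResUpP}) and weakening must be applied in the right order, and where the dependent type $A$ (which may mention $z$) has to be handled via \rref{PTS-ConvD}; getting the context arithmetic $\Gamma_1 \sqcap \Gamma$ to come out exactly (rather than merely up to $\sqsubseteq$) may require careful use of \rref{PTS-SubLD} together with the identity $\Gamma_2\{c/z\} = \Gamma_2$ when $z$ does not occur in the remaining types. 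A secondary nuisance is that, unlike the simply-typed proof, every binding site now introduces a variable whose \emph{type} may contain $z$, so each inductive call produces a substituted type in the context that must be threaded consistently; this is mechanical but must be tracked everywhere, mirroring precisely the PTS linearity proof with $+ \mapsto \sqcap$.
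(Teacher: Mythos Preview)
Your proposal is correct and takes essentially the same approach as the paper: the paper's proof is the one-liner ``By induction on $\Gamma_1, z :^{m_0} C, \Gamma_2 \vdash a :^{\ell} A$. Follow lemma \ref{DSSubstP},'' and your plan is a faithful elaboration of what that instruction entails, merging the case structure of the simply-typed dependency substitution (Lemma \ref{DSSubstP}) with the dependent-type bookkeeping of the PTS linearity substitution (Lemma \ref{BLDSubstP}).

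One small simplification: the \rref{PTS-Var} case is easier than you anticipate. Because \rref{PTS-Var} places the variable at the \emph{end} of the context, there are only two subcases (either the typed variable is $z$, or $z$ lies strictly to its left), not three; the ``$z$ to the right'' situation is absorbed into the \rref{PTS-Weak} case. Moreover, when the typed variable is $z$ itself, the grade on $z$ in the context equals the judgment level, i.e.\ $m_0 = \ell$, and the type $A$ is literally $C$, so neither restricted upgrading (Lemma \ref{ResUpP}) nor \rref{PTS-Conv} is needed there --- the case closes by weakening alone, exactly as in Lemma \ref{DSSubstP}. Your anticipated obstacle does not materialise.
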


\begin{proof}
By induction on $    \Gamma_{{\mathrm{1}}}  ,   \ottmv{z}  :^{ \ottnt{m_{{\mathrm{0}}}} }  \ottnt{C}     ,  \Gamma_{{\mathrm{2}}}   \vdash  \ottnt{a}  :^{ \ell }  \ottnt{A} $. Follow lemma \ref{DSSubstP}.
\end{proof}

%---------------------------------------------------------------------------------------------------

\begin{theorem}[Preservation (Theorem \ref{preservationDep})]
If $ \Gamma  \vdash  \ottnt{a}  :^{ \ell }  \ottnt{A} $ and $ \vdash  \ottnt{a}  \leadsto  \ottnt{a'} $, then $ \Gamma  \vdash  \ottnt{a'}  :^{ \ell }  \ottnt{A} $.
\end{theorem}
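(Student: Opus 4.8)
The plan is to mirror the two preservation arguments already in the appendix — Theorem~\ref{Dpreserve} (the usage PTS version) and Theorem~\ref{DSPreserve} (the simply-typed dependency version) — by structural induction on the derivation of $\Gamma \vdash a :^{\ell} A$ with inversion on $\vdash a \leadsto a'$. For each typing rule the step is either a congruence step into an immediate subterm or a head $\beta$-step, and the only rules where anything happens are the four eliminators carrying $\beta$-rules: \rref{PTS-App}, the $\mathbf{let}$-unit rule, \rref{PTS-LetPair}, and \rref{PTS-Case}. For \rref{PTS-Weak}, \rref{PTS-Conv}, \rref{PTS-SubL}, \rref{PTS-SubR} the result is immediate from the induction hypothesis (re-applying the rule), and the introduction forms type values, which do not step.

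For a congruence step inside an eliminator I would apply the induction hypothesis to the reduced subterm and re-apply the same rule. The one wrinkle carried over from dependent typing is that for the $\mathbf{let}$-unit, \rref{PTS-LetPair}, and \rref{PTS-Case} rules the conclusion type has the form $B\{a/z\}$ where $a$ is the scrutinee; if $a \leadsto a'$, the reassembled term naturally receives type $B\{a'/z\}$, and I would bridge this with \rref{PTS-Conv} using $B\{a/z\} =_{\beta} B\{a'/z\}$. In the \rref{PTS-App} congruence case it is the \emph{function} that steps, so the argument $a$ and hence the type $B\{a/x\}$ are unchanged and no conversion is needed.

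The substantive case is the $\beta$-step for application, where dependency-label bookkeeping and PTS conversion interact. Given $\Gamma_1 \sqcap \Gamma_2 \vdash (\lambda^{m} x{:}A'.\,b')\; a^{m} :^{\ell} B\{a/x\}$, with $\Gamma_1 \vdash \lambda^{m} x{:}A'.\,b' :^{\ell} \Pi x{:}^{m}\! A.\,B$ and $\Gamma_2 \vdash a :^{\ell \sqcup m} A$, I would invert the $\lambda$-typing to get $\Gamma_1, x{:}^{\ell_0 \sqcup m} A'' \vdash b' :^{\ell_0} B$ for some $\ell_0 \sqsubseteq \ell$ and $A'' =_{\beta} A$. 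Then: (i) raise the observer level from $\ell_0$ to $\ell$ on the conclusion via \rref{PTS-SubR} (dependency interpretation); (ii) raise the variable annotation from $\ell_0 \sqcup m$ to $\ell \sqcup m$ using the PTS analogue of the Restricted Upgrading lemma (Lemma~\ref{ResUpP}), which is a one-line consequence of the PTS multiplication lemma (Lemma~\ref{DDMultP}) and \rref{PTS-SubL}; (iii) convert $A''$ to $A$ with \rref{PTS-Conv}; (iv) apply the substitution lemma (Lemma~\ref{DSubst}, dependency interpretation) to $b'$ and $a$, yielding $\Gamma_1 \sqcap \Gamma_2 \vdash b'\{a/x\} :^{\ell} B\{a/x\}$. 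For the other $\beta$-cases I would invert the typing of the eliminated value (a pair, $\mathbf{unit}$, or an injection), use the splitting lemma (Lemma~\ref{DDSplitP}) to distribute the scrutinee's context when a pair is destructed, apply the substitution lemma once (for $\mathbf{case}$) or twice (for \rref{PTS-LetPair}), and absorb $\Gamma_1$ into $\Gamma_1 \sqcap \Gamma_2$ with \rref{PTS-SubL}, exactly as in Theorem~\ref{DSPreserve}.

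The main obstacle I expect is step~(ii): realigning the bound-variable annotation after the $\beta$-reduction needs the Restricted Upgrading lemma in the dependency PTS setting, so I would first have to check that its proof transcribes verbatim from the simply-typed case — it does, since it uses only Lemma~\ref{DDMultP} and monotonicity of $\sqcap$ — but it is the one spot where the reasoning is genuinely more delicate than a mechanical merge of the two preservation proofs already established. Everything else is routine.
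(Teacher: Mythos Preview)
Your proposal is correct and matches the paper's intended approach: the paper's own proof is the one-liner ``follow Theorem~\ref{Dpreserve}'', and what you have written is precisely the correct unpacking of that instruction --- carrying over the PTS-level structure (conversion to handle $B\{a/z\}$ versus $B\{a'/z\}$ in the scrutinee cases, substitution for the $\beta$-cases) while replacing the factorization-based argument in the application $\beta$-case by the \rref{PTS-SubR} + Restricted Upgrading manoeuvre, just as in Theorem~\ref{DSPreserve}. One small remark: in the \rref{PTS-LetPair} $\beta$-case you do not actually need the splitting lemma --- inverting the typing of the pair already hands you the split contexts $\Gamma_{11}$ and $\Gamma_{12}$ with $\Gamma_1 = \Gamma_{11} \sqcap \Gamma_{12}$, exactly as in the usage proof --- but since splitting is trivial over a lattice this is harmless.
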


\begin{proof}

By induction on $ \Gamma  \vdash  \ottnt{a}  :^{ \ell }  \ottnt{A} $ and inversion on $ \vdash  \ottnt{a}  \leadsto  \ottnt{a'} $. Follow theorem \ref{Dpreserve}.

\end{proof}

%----------------------------------------------------------------------------------------------------

\begin{theorem}[Progress (Theorem \ref{progressDep})]
If $  \emptyset   \vdash  \ottnt{a}  :^{ \ell }  \ottnt{A} $, then either $\ottnt{a}$ is a value or there exists $\ottnt{a'}$ such that $ \vdash  \ottnt{a}  \leadsto  \ottnt{a'} $.
\end{theorem}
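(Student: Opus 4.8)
The plan is to reduce Theorem \ref{Bwsound} to re-establishing, for the modified system, precisely the chain of results that yields type soundness (Theorems \ref{preservationDep} and \ref{progressDep}) and heap soundness (Theorem \ref{heapDep}) for LDC($\mathcal{Q}_{\mathbb{N}}$). Since the \emph{only} change in passing from LDC($\mathcal{Q}_{\mathbb{N}}$) to LDC($\mathcal{Q}_{\mathbb{N}}^{\omega}$) is the replacement of \rref{ST-Lam} and \rref{PTS-Lam} by \rref{ST-LamOmega} and \rref{PTS-LamOmega}, my first move is to catalogue which lemmas actually pattern-match on the Lambda rule. Multiplication (Lemma \ref{BLSMult}), Factorization (Lemma \ref{BLSFact}), Substitution (Lemma \ref{BLSSubst}), the beta case of Preservation (which inverts a Lambda under \rref{ST-App}/\rref{PTS-App}), and the application case of the heap helper (Lemma \ref{heaphelper}) all inspect Lambda; every other lemma---Weakening, Splitting, Progress, and all non-Lambda cases---is untouched and transcribes verbatim once Multiplication and Factorization are in hand. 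Progress is in fact immediate for the Lambda case, since $\lambda$-terms remain values and canonical forms for arrow and $\Pi$ types is unaffected: no rule other than the Lambda rule produces a value of function type, so the App case of Progress still extracts $b = \lambda^{r} x . b'$ when $b$ is a value.

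The cleanest route through the Lambda cases is to exploit the equivalent two-rule presentation the text supplies, namely \textsc{ST-LamOmegaZero} together with the promotion rule \textsc{ST-Omega} (and likewise for the PTS version). The former is exactly the $q = 1$ instance of the old Lambda rule, whose behaviour under the structural lemmas is already established, while the latter re-grades a grade-$1$ derivation to a grade-$\omega$ one by pre-multiplying the context by $\omega$. I would therefore prove each structural lemma by adding \textsc{ST-Omega} as a single extra inductive case rather than re-deriving the compound \rref{ST-LamOmega} each time. For Multiplication the \textsc{ST-Omega} case reduces to $r_{0} \cdot (\omega \cdot \Gamma) = \omega \cdot (r_{0} \cdot \Gamma)$ when $r_{0} \neq 0$ and to the grade-$0$ instance when $r_{0} = 0$, using that $\omega$ is an additive annihilator and an almost-multiplicative annihilator; for Substitution the case commutes with promotion because substitution acts pointwise on grades. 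Factorization is then surprisingly clean on this rule: given $\omega \cdot \Gamma \vdash a :^{\omega} A$ derived by \textsc{ST-Omega} from $\Gamma \vdash a :^{1} A$, the witness $\Gamma' := \Gamma$ works directly, since $\Gamma \vdash a :^{1} A$ holds and $\omega \cdot \Gamma <: \omega \cdot \Gamma'$ by reflexivity (indeed $\omega <: q$ for every $q$), with no recursion into the premise required.

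The main obstacle is the generation (inversion) lemma for the modified Lambda, which feeds the App-beta case of Preservation and, identically, the App case of the heap helper. In the LDC($\mathcal{Q}_{\mathbb{N}}$) proof, inverting $\Gamma_{1} \vdash \lambda^{r} x . b' :^{q} {}^{r}\! A \to B$ yields a body judgment $\Gamma_{1}, x :^{q_{0} \cdot r} A \vdash b' :^{q_{0}} B$ for some $q_{0} <: q$, folding the \rref{ST-SubL} and \rref{ST-SubR} wrappers into $q_{0}$; I must re-derive this statement so that it also peels off any \textsc{ST-Omega} promotion and records the side condition $q = \omega \Rightarrow r = \omega$. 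The delicate point is the branch $q = \omega$, where the promotion has contributed the context factor and the bound variable's grade $q \cdot r = \omega \cdot r$ must be reconciled with the body's own grading; here the annihilation identities of $\omega$ discharge the arithmetic, but I must check the constraint is stable under the $<:$ adjustments. Once this generation lemma is in place, the App-beta case proceeds exactly as in the natural-number proof: split on whether the inverted body grade is $0$ (forcing $q = 0$, handled as before) or nonzero (apply $\omega$-aware Factorization and Multiplication to re-grade the body before Substitution fires). The heap-soundness obstacle is the same inversion inside the App case of the heap helper, where the bound definition $x \overset{q \cdot r}{\mapsto} a$ must be re-graded to match a compatible context extension; so once the Preservation beta case goes through, the heap helper and hence Theorem \ref{heapDep} follow by the identical structural recursion, and the corollaries of Section \ref{heapsimplesound} are inherited unchanged.
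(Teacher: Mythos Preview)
You have misidentified the target. The statement is the Progress theorem for the \emph{lattice}-parametrized PTS version of LDC (note the grade is $\ell$, not $q$), and the paper's proof is a two-line affair: induction on the typing derivation, following exactly the case analysis of Theorem~\ref{Dprogress}. For each elimination rule one applies the IH to the scrutinee; if it steps we are done by the congruence rule, and if it is a value, canonical forms yield the right shape to fire the corresponding $\beta$-rule. None of Multiplication, Factorization, Substitution, or Lambda inversion enters the argument, and $\omega$ is nowhere in sight.

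Your proposal is instead an outline of Theorem~\ref{Bwsound}, the omnibus result that LDC($\mathcal{Q}_{\mathbb{N}}^{\omega}$) enjoys type and heap soundness after the \rref{PTS-LamOmega} modification. That is a different theorem in a different section about a different parametrizing structure. Within your outline there is one sentence that touches Progress---``Progress is in fact immediate for the Lambda case, since $\lambda$-terms remain values and canonical forms for arrow and $\Pi$ types is unaffected''---and that sentence is morally correct and is, in fact, all that is needed here (transposed to the lattice setting, where there is no Lambda-rule modification at all). Everything else in your proposal---the two-rule decomposition via \textsc{ST-Omega}, the re-derivation of Factorization under $\omega$, the generation lemma for the modified Lambda, the App-beta case of Preservation, the heap helper---is aimed at the wrong statement and should be discarded.
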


\begin{proof}

By induction on $  \emptyset   \vdash  \ottnt{a}  :^{ \ell }  \ottnt{A} $. Follow theorem \ref{Dprogress}.

\end{proof}

%-----------------------------------------------------------------------------------------------------

\begin{lemma} \label{heaphelperDDep}
If $ \ottnt{H}  \models   \Gamma_{{\mathrm{1}}}  \sqcap  \Gamma_{{\mathrm{2}}}  $ and $ \Gamma_{{\mathrm{2}}}  \vdash  \ottnt{a}  :^{ \ell }  \ottnt{A} $ and $l \neq \top$, then either $\ottnt{a}$ is a value or there exists $\ottnt{H'}, \Gamma'_{{\mathrm{2}}}, \ottnt{a'}$ such that:
\begin{itemize}
\item $ [  \ottnt{H}  ]  \ottnt{a}  \Longrightarrow^{ \ell }_{ \ottnt{S} } [  \ottnt{H'}  ]  \ottnt{a'} $
\item $ \ottnt{H'}  \models   \Gamma_{{\mathrm{1}}}  \sqcap  \Gamma'_{{\mathrm{2}}}  $
\item $ \Gamma'_{{\mathrm{2}}}  \vdash  \ottnt{a'}  :^{ \ell }  \ottnt{A} $
\end{itemize} 
\end{lemma}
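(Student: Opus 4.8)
Lemma \ref{heaphelperDDep} is the dependency-analysis analogue of Lemma \ref{heaphelperDep}, which we have just proved for the linearity-analysis parametrization of the PTS version of LDC. The plan is to prove it by induction on the derivation $ \Gamma_{{\mathrm{2}}}  \vdash  \ottnt{a}  :^{ \ell }  \ottnt{A} $, following exactly the same structure as the proof of Lemma \ref{heaphelperDep}, but with the semiring operations $+, \cdot, 0, 1, <: $ systematically reinterpreted as the lattice operations $\sqcap, \sqcup, \top, \bot, \sqsubseteq$. Concretely, every use of context addition $\Gamma_{{\mathrm{1}}} + \Gamma_{{\mathrm{2}}}$ becomes the pointwise meet $\Gamma_{{\mathrm{1}}} \sqcap \Gamma_{{\mathrm{2}}}$, every scalar multiplication becomes a join, the side-condition $q \neq 0$ becomes $\ell \neq \top$, and the crucial splitting step that in the linearity case invokes Lemma \ref{BLDSplitP} will here instead invoke Lemma \ref{DDSplitP} (which holds trivially by idempotence of $\sqcap$). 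The structural cases --- \rref{PTS-App}, \rref{PTS-LetPair}, \rref{PTS-LetUnit}, \rref{PTS-Case}, \rref{PTS-DefVar}, \rref{PTS-DefWeak}, \rref{PTS-DefConv} --- go through mutatis mutandis, using the dependency-flavoured heap reduction rules and the dependency substitution lemma (Lemma \ref{DSubst} in its PTS form, already reinterpreted for lattices) in place of their linearity counterparts.

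First I would set up the induction and dispatch the easy cases: \rref{PTS-Var} and \rref{PTS-Weak} do not arise, since $ \ottnt{H}  \models  \Gamma $ forces every assumption in $\Gamma_{{\mathrm{2}}}$ to be a definition; \rref{PTS-DefVar} is the key base case where the term is a variable $\ottmv{x}$ bound in the heap, and here I would use \rref{HeapStep-Var} together with Lemma \ref{DDSplitP} to split the heap-stored derivation of the definition body, observing that the heap weight decreases from $\ottnt{m_{{\mathrm{1}}}} \sqcap \ottnt{m_{{\mathrm{2}}}}$ to $\ottnt{m_{{\mathrm{1}}}}$ (corresponding, in the semiring proof, to the decrease from $q_1 + q_2$ to $q_1$), which in the lattice reading is the fact that $  \ottnt{m_{{\mathrm{1}}}}  \: \sqcap \:  \ottnt{m_{{\mathrm{2}}}}   \sqsubseteq  \ottnt{m_{{\mathrm{1}}}} $ so the new heap remains compatible. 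Then I would handle \rref{PTS-App}: if the function steps, apply the IH; if it is a $\lambda$-value, use \rref{HeapStep-AppBeta} to extend the heap with a fresh definition $\ottmv{x}  \overset{ \ell  \: \sqcup \:  \ottnt{m} }{\mapsto} \ottnt{a}$ (noting the new weight is $\ell \sqcup \ottnt{m}$, analogous to $q \cdot r$), re-type the body using \rref{PTS-DefConv} to absorb the $=_\beta$ equality on the domain type that inversion on the $\lambda$ produces, and rebuild compatibility. The \rref{PTS-LetPair}, \rref{PTS-LetUnit}, and \rref{PTS-Case} cases are analogous, and each uses \rref{HeapStep-Discard} to justify the label of the conclusion stepping judgment via the side-condition $\ell_{{\mathrm{0}}} \sqsubseteq \bot$ (the lattice analogue of $q_0 <: 1$), plus Lemma \ref{HeapSim} and \rref{PTS-DefConv} to repair the type of the reduct after a delayed substitution.

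The main obstacle, and the only place where real thought beyond mechanical translation is needed, will be the handling of delayed substitutions in types in the \rref{PTS-LetPair} (and symmetrically \rref{PTS-Case}) reduction steps where $\ottnt{a}$ itself takes a step: there we must show that the conclusion type $ \ottnt{B}  \{  \ottnt{a}  /  \ottmv{z}  \} $ and the new type $ \ottnt{B}  \{  \ottnt{a'}  /  \ottmv{z}  \} $ become definitionally equal once the heap definitions are substituted in, which requires invoking Lemma \ref{HeapSim} to get $ \ottnt{a'}  \{  \Delta'_{{\mathrm{2}}}  \}   =_{\beta}   \ottnt{a}  \{  \Delta_{{\mathrm{2}}}  \} $ and then a careful argument that $\text{dom} \Delta_{{\mathrm{2}}} \subseteq \text{dom} \Delta'_{{\mathrm{2}}}$ so the substitutions commute appropriately --- exactly as in the \rref{PTS-DefConv} and \rref{PTS-LetPair} cases of Lemma \ref{heaphelperDep}. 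Since the dependency-analysis reinterpretation does not touch this part of the argument at all (the delayed-substitution machinery is, as the excerpt notes, orthogonal to which algebraic structure parametrizes the calculus), I expect this obstacle to be no harder here than in the linearity case; indeed the proof can legitimately be abbreviated to ``by induction on $ \Gamma_{{\mathrm{2}}}  \vdash  \ottnt{a}  :^{ \ell }  \ottnt{A} $, following the proof of Lemma \ref{heaphelperDep} with $+, \cdot, 0, 1, <: $ read as $\sqcap, \sqcup, \top, \bot, \sqsubseteq$ and Lemma \ref{BLDSplitP} replaced by Lemma \ref{DDSplitP}.''
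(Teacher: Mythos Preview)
Your proposal is correct and takes essentially the same approach as the paper: the paper's proof is the single line ``By induction on $ \Gamma_{{\mathrm{2}}}  \vdash  \ottnt{a}  :^{ \ell }  \ottnt{A} $. Follow lemma \ref{heaphelperDep},'' and your closing sentence is precisely this abbreviation, with the rest of your proposal supplying the (correct) details that the paper leaves implicit.
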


\begin{proof}
By induction on $ \Gamma_{{\mathrm{2}}}  \vdash  \ottnt{a}  :^{ \ell }  \ottnt{A} $. Follow lemma \ref{heaphelperDep}.
\end{proof}

%---------------------------------------------------------------------------------------------------

\begin{theorem}[Soundness (Theorem \ref{heapDep})]
If $ \ottnt{H}  \models  \Gamma $ and $ \Gamma  \vdash  \ottnt{a}  :^{ \ell }  \ottnt{A} $ and $l \neq \top$, then either $\ottnt{a}$ is a value or there exists $\ottnt{H'}, \Gamma', \ottnt{a'}$ such that $ [  \ottnt{H}  ]  \ottnt{a}  \Longrightarrow^{ \ell }_{ \ottnt{S} } [  \ottnt{H'}  ]  \ottnt{a'} $ and $ \ottnt{H'}  \models  \Gamma' $ and $ \Gamma'  \vdash  \ottnt{a'}  :^{ \ell }  \ottnt{A} $.
\end{theorem}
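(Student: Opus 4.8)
The plan is to follow verbatim the template already used for LDC($ \mathcal{Q}_{\mathbb{N} } $): prove a strengthened helper lemma — the lattice analogue of Lemma~\ref{heaphelperDep}, namely Lemma~\ref{heaphelperDDep} — and then obtain the theorem by a one-line specialisation. So first I would establish: if $ \ottnt{H}  \models   \Gamma_{{\mathrm{1}}}  \sqcap  \Gamma_{{\mathrm{2}}}  $ and $ \Gamma_{{\mathrm{2}}}  \vdash  \ottnt{a}  :^{ \ell }  \ottnt{A} $ with $\ell \neq \top$, then either $\ottnt{a}$ is a value or there exist $\ottnt{H'}, \Gamma'_{{\mathrm{2}}}, \ottnt{a'}$ with $ [  \ottnt{H}  ]  \ottnt{a}  \Longrightarrow^{ \ell }_{ \ottnt{S} } [  \ottnt{H'}  ]  \ottnt{a'} $, $ \ottnt{H'}  \models   \Gamma_{{\mathrm{1}}}  \sqcap  \Gamma'_{{\mathrm{2}}}  $, and $ \Gamma'_{{\mathrm{2}}}  \vdash  \ottnt{a'}  :^{ \ell }  \ottnt{A} $. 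The final soundness statement then follows immediately by taking $\Gamma_{{\mathrm{1}}} :=  \top   \sqcup  \Gamma $ and $\Gamma_{{\mathrm{2}}} := \Gamma$ and using idempotence of $\sqcap$, exactly as Theorem~\ref{heapDep} is derived from Lemma~\ref{heaphelperDep} in the semiring case.

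For the helper lemma I would work in the type system extended with context definitions — rules \rref{PTS-DefVar,PTS-DefWeak,PTS-DefConv} together with the updated compatibility rules \rref{HeapCompat-DefEmpty,HeapCompat-DefCons} — precisely as in the appendix proof of Lemma~\ref{heaphelperDep}, so that Lemmas~\ref{MultiSubst} and~\ref{Elaboration} transfer the result back to the underlying calculus. The argument proceeds by induction on $ \Gamma_{{\mathrm{2}}}  \vdash  \ottnt{a}  :^{ \ell }  \ottnt{A} $. The lookup case \rref{PTS-DefVar} uses the dependency splitting lemma (Lemma~\ref{DDSplitP}, which for lattices is just an application of \rref{PTS-SubRD} since $\sqcap$ is idempotent) to distribute the heap weight recorded for the defined variable, and the hypothesis $\ell \neq \top$ discharges the side condition of \rref{HeapStep-Var}. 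The congruence cases (\rref{PTS-App,PTS-LetPair,PTS-LetUnit,PTS-Case}) apply the induction hypothesis to the head subterm and reassemble, with \rref{HeapStep-Discard} used to reconcile the label at which the elimination form steps with that of its eliminee; the $\beta$-cases add a new heap binding at the meet of the judgment label and the term's annotation, and re-establish compatibility against the definition-extended context via \rref{HeapCompat-DefCons}. The structural rules \rref{PTS-DefWeak,PTS-Weak,PTS-SubLD,PTS-SubRD} and the conversion rules \rref{PTS-Conv,PTS-DefConv} follow from the induction hypothesis together with the lattice multiplication lemma (Lemma~\ref{DDMultP}) and, where needed, the similarity lemma (Lemma~\ref{HeapSim}).

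I expect the main obstacle to be the familiar one that forced the definition-extended calculus in the first place: in the $\beta$-reduction cases the substituted value is loaded into the heap rather than substituted into the term, so the reduct's syntactic type can "lag behind" the expected type $ \ottnt{A}  \{  \ottnt{a}  /  \ottmv{x}  \} $ (or its sum/pair analogues). Handling this cleanly requires carrying the freshly added heap binding as a context definition and then invoking \rref{PTS-DefConv}, using Lemma~\ref{HeapSim} to witness that the lagged type and the intended type are $\beta$-equal after substituting the definitions of the updated context. Everything else — the compatibility bookkeeping, the meet/splitting manipulations, and the $\ell \neq \top$ side conditions — is a routine transcription of the LDC($ \mathcal{Q}_{\mathbb{N} } $) development, since under the interpretation of $+,\cdot,0,1,{<:}$ as $\sqcap,\sqcup,\top,\bot,\sqsubseteq$ the two proofs are structurally identical.
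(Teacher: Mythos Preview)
Your proposal is correct and follows the paper's approach essentially verbatim: the paper proves the theorem by invoking the helper Lemma~\ref{heaphelperDDep} (whose proof is ``follow Lemma~\ref{heaphelperDep}'') with the specialisation $\Gamma_{1} := \top \sqcup \Gamma$ and $\Gamma_{2} := \Gamma$, exactly as you outline. One small slip: in the $\beta$-cases the new heap binding is added at the \emph{join} $\ell \sqcup m$ (since $\cdot$ is interpreted as $\sqcup$), not the meet.
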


\begin{proof}
Use lemma \ref{heaphelperDDep} with $\Gamma_{{\mathrm{1}}} :=   \top   \sqcup  \Gamma $ and $\Gamma_{{\mathrm{2}}} := \Gamma$.
\end{proof}

%----------------------------------------------------------------------------------------------------

%----------------------------------------------------------------------------------------------------
\section{Unrestricted Usage}
%----------------------------------------------------------------------------------------------------

\begin{lemma}[Multiplication] \label{BLwDMultP}
If $ \Gamma  \vdash  \ottnt{a}  :^{ \ottnt{q} }  \ottnt{A} $, then $  \ottnt{r_{{\mathrm{0}}}}  \cdot  \Gamma   \vdash  \ottnt{a}  :^{  \ottnt{r_{{\mathrm{0}}}}  \cdot  \ottnt{q}  }  \ottnt{A} $.
\end{lemma}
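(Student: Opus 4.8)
The final statement to prove is the Multiplication lemma for the PTS version of LDC with unrestricted usage, i.e. LDC($\mathcal{Q}_{\mathbb{N}}^{\omega}$): if $\Gamma \vdash \ottnt{a} :^{\ottnt{q}} \ottnt{A}$, then $\ottnt{r_{{\mathrm{0}}}} \cdot \Gamma \vdash \ottnt{a} :^{\ottnt{r_{{\mathrm{0}}}} \cdot \ottnt{q}} \ottnt{A}$.

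The plan is to proceed by induction on the typing derivation $\Gamma \vdash \ottnt{a} :^{\ottnt{q}} \ottnt{A}$, following the structure of the proofs of Lemma~\ref{BLSMultP} (simply-typed) and Lemma~\ref{BLDMultP} (PTS version without $\omega$). For each typing rule, I would apply the induction hypothesis to the premises and then reassemble the conclusion, using associativity of multiplication, distributivity of multiplication over addition, and the fact that $0$ annihilates under multiplication, together with the fact that $<:$ respects multiplication (so that \rref{PTS-SubL} and \rref{PTS-SubR} cases go through). Almost every case is a routine recomputation: for \rref{PTS-Var} and \rref{PTS-Weak} we note that type-checking of types happens in the $0$-world, which is preserved since $\ottnt{r_{{\mathrm{0}}}} \cdot 0 = 0$; for \rref{PTS-Pi}, \rref{PTS-Lam}, \rref{PTS-App}, \rref{PTS-Sigma}, \rref{PTS-Pair}, \rref{PTS-LetPair}, \rref{PTS-Sum}, \rref{PTS-InjOne}, \rref{PTS-Case} we push the scalar $\ottnt{r_{{\mathrm{0}}}}$ through the context operations using associativity and distributivity; for \rref{PTS-Conv} the equality $\ottnt{A} =_\beta \ottnt{B}$ is unaffected by the change of grade.

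The one genuinely new case, and the main obstacle, is the modified lambda rule \rref{PTS-LamOmega} (and correspondingly \rref{ST-LamOmega} in the simply-typed statement of this section). Here the premise is of the form $\Gamma_{{\mathrm{0}}} \vdash \ottnt{b} :^{\ottnt{q}} \ottnt{B}$ with $\ottnt{q_{{\mathrm{0}}}} \cdot \Gamma_{{\mathrm{0}}}$ (extended by $\ottmv{x}$) in the conclusion, subject to the side conditions $\ottnt{q_{{\mathrm{0}}}} \neq 0$ and $(\ottnt{q_{{\mathrm{0}}}} \cdot \ottnt{q} = \omega \Rightarrow \ottnt{r} = \omega)$ (the implication $\ottnt{q} = \omega \Rightarrow \ottnt{r} = \omega$ with the multiplier folded in). I would apply the induction hypothesis to get $\ottnt{r_{{\mathrm{0}}}} \cdot \Gamma_{{\mathrm{0}}}$ (suitably extended) $\vdash \ottnt{b} :^{\ottnt{r_{{\mathrm{0}}}} \cdot \ottnt{q}} \ottnt{B}$, and then re-apply \rref{PTS-LamOmega} with the \emph{same} inner grade $\ottnt{q_{{\mathrm{0}}}}$ but conclusion multiplier adjusted so that the outer context becomes $(\ottnt{r_{{\mathrm{0}}}} \cdot \ottnt{q_{{\mathrm{0}}}}) \cdot \Gamma_{{\mathrm{0}}} = \ottnt{r_{{\mathrm{0}}}} \cdot (\ottnt{q_{{\mathrm{0}}}} \cdot \Gamma_{{\mathrm{0}}})$. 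The delicate points are: first, checking that the side condition $\ottnt{q_{{\mathrm{0}}}} \neq 0$ is preserved — if $\ottnt{r_{{\mathrm{0}}}} = 0$ the conclusion is trivial (both sides become the all-$0$ / $0$-world judgment, derivable via weakening and the $0$-world degeneracy), and if $\ottnt{r_{{\mathrm{0}}}} \neq 0$ then $\ottnt{r_{{\mathrm{0}}}} \cdot \ottnt{q_{{\mathrm{0}}}} \neq 0$ in each of the semirings $\mathbb{N}_{=}^{\omega}, \mathbb{N}_{\geq}^{\omega}, \mathcal{Q}_{\text{Lin}}, \mathcal{Q}_{\text{Aff}}$ since they have no zero divisors; and second, checking the $\omega$-implication side condition is preserved, which follows because in these semirings $\ottnt{r_{{\mathrm{0}}}} \cdot \ottnt{q_{{\mathrm{0}}}} \cdot \ottnt{q} = \omega$ forces (for $\ottnt{r_{{\mathrm{0}}}} \neq 0$) $\ottnt{q_{{\mathrm{0}}}} \cdot \ottnt{q} = \omega$ by the annihilator characterization of $\omega$, so the original hypothesis already gives $\ottnt{r} = \omega$.

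Alternatively — and I think this is cleaner — I would use the decomposition noted in the excerpt that \rref{ST-LamOmega} is equivalent to the two simpler rules \rref{ST-LamOmegaZero} and \rref{ST-Omega} (and similarly for the PTS version), and prove the lemma with the type system presented in that equivalent form. Then the lambda case reduces to the ordinary lambda case (handled exactly as in Lemma~\ref{BLDMultP}) plus a separate \rref{ST-Omega}-style case where one must show $\ottnt{r_{{\mathrm{0}}}} \cdot \Gamma \vdash \ottnt{a} :^{\ottnt{r_{{\mathrm{0}}}} \cdot \omega}$, i.e. $:^{\omega}$ when $\ottnt{r_{{\mathrm{0}}}} \neq 0$ and $:^{0}$ when $\ottnt{r_{{\mathrm{0}}}} = 0$ — the former being immediate from the induction hypothesis together with subsumption or the rule itself, the latter following from Lemma~\ref{BLDWeakP}-style reasoning in the $0$-world. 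I would present the argument this way if space permits, since it isolates the only substantive content of the proof into a short, self-contained sub-case and lets everything else be dispatched by "follow the proof of Lemma~\ref{BLDMultP}."
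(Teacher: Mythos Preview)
Your primary approach to the \rref{PTS-LamOmega} case has a genuine gap. After applying the induction hypothesis, the premise lives at grade $\ottnt{r_{{\mathrm{0}}}} \cdot \ottnt{q}$, and to re-apply \rref{PTS-LamOmega} you must discharge the side condition on this \emph{new} inner grade, namely $(\ottnt{r_{{\mathrm{0}}}} \cdot \ottnt{q}) = \omega \Rightarrow \ottnt{r} = \omega$. Your claim that a product equalling $\omega$ forces the non-$\ottnt{r_{{\mathrm{0}}}}$ factor to be $\omega$ is false in every target semiring when $\ottnt{r_{{\mathrm{0}}}} = \omega$: take $\ottnt{r_{{\mathrm{0}}}} = \omega$, $\ottnt{q} = 1$, $\ottnt{r} = 1$. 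The original side condition $\ottnt{q} = \omega \Rightarrow \ottnt{r} = \omega$ holds vacuously, yet $\ottnt{r_{{\mathrm{0}}}} \cdot \ottnt{q} = \omega$ while $\ottnt{r} \neq \omega$, so the rule cannot fire on the IH-transformed premise. (The same counterexample defeats your phrasing in terms of $\ottnt{r_{{\mathrm{0}}}} \cdot \ottnt{q_{{\mathrm{0}}}} \cdot \ottnt{q}$; note also that the side condition is on the premise grade $\ottnt{q}$, not the conclusion grade, so your ``with the multiplier folded in'' reading is already a misstatement of the rule.)

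The paper avoids this by \emph{not} applying the induction hypothesis when $\ottnt{r_{{\mathrm{0}}}} \neq 0$. The multiplier $\ottnt{q_{{\mathrm{0}}}}$ in \rref{PTS-LamOmega} is a free parameter of the rule: keeping the original premise $\Gamma, \ottmv{x} :^{\ottnt{q} \cdot \ottnt{r}} \ottnt{A} \vdash \ottnt{b} :^{\ottnt{q}} \ottnt{B}$ and its side condition untouched, one simply re-instantiates the rule with the new multiplier $\ottnt{r_{{\mathrm{0}}}} \cdot \ottnt{q_{{\mathrm{0}}}}$, which is nonzero since the semirings at hand have no zero divisors. The side condition, being a statement about the unchanged $\ottnt{q}$, carries over verbatim. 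Only in the sub-case $\ottnt{r_{{\mathrm{0}}}} = 0$ does the paper invoke the induction hypothesis, collapsing to the $0$-world where the implication is vacuous. Your alternative decomposition via \rref{ST-LamOmegaZero} and \rref{ST-Omega} does not obviously escape the problem as you describe it: the $\omega$-implication lives in the \rref{ST-LamOmegaZero}-style rule, and an inductive treatment of that case (``handled exactly as in Lemma~\ref{BLDMultP}'') reproduces the same obstruction.
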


\begin{proof}
By induction on $ \Gamma  \vdash  \ottnt{a}  :^{ \ottnt{q} }  \ottnt{A} $. All the cases other than \rref{PTS-LamOmega} are similar to those of lemma \ref{BLDMultP}.

\begin{itemize}
\item \Rref{PTS-LamOmega}. Have: $  \ottnt{q_{{\mathrm{0}}}}  \cdot  \Gamma   \vdash   \lambda^{ \ottnt{r} }  \ottmv{x}  :  \ottnt{A}  .  \ottnt{b}   :^{  \ottnt{q_{{\mathrm{0}}}}  \cdot  \ottnt{q}  }   \Pi  \ottmv{x}  :^{ \ottnt{r} } \!  \ottnt{A}  .  \ottnt{B}  $ where $  \Gamma  ,   \ottmv{x}  :^{  \ottnt{q}  \cdot  \ottnt{r}  }  \ottnt{A}    \vdash  \ottnt{b}  :^{ \ottnt{q} }  \ottnt{B} $ and $  \ottnt{q}  =   \omega    \Rightarrow   \ottnt{r}  =   \omega   $ and $ \ottnt{q_{{\mathrm{0}}}}  \neq   0  $.\\
Need to show: $  \ottnt{r_{{\mathrm{0}}}}  \cdot    \ottnt{q_{{\mathrm{0}}}}  \cdot  \Gamma     \vdash   \lambda^{ \ottnt{r} }  \ottmv{x}  :  \ottnt{A}  .  \ottnt{b}   :^{  \ottnt{r_{{\mathrm{0}}}}  \cdot    \ottnt{q_{{\mathrm{0}}}}  \cdot  \ottnt{q}    }   \Pi  \ottmv{x}  :^{ \ottnt{r} } \!  \ottnt{A}  .  \ottnt{B}  $.\\
There are two cases to consider:
\begin{itemize}
\item $ \ottnt{r_{{\mathrm{0}}}}  =   0  $. By IH, $    0   \cdot  \Gamma   ,   \ottmv{x}  :^{  0  }  \ottnt{A}    \vdash  \ottnt{b}  :^{  0  }  \ottnt{B} $.\\
This case, then, follows by \rref{PTS-LamOmega}.
\item $ \ottnt{r_{{\mathrm{0}}}}  \neq   0  $. Therefore, $  \ottnt{r_{{\mathrm{0}}}}  \cdot  \ottnt{q_{{\mathrm{0}}}}   \neq   0  $.\\
This case, then, follows by \rref{PTS-LamOmega}.
\end{itemize}
\end{itemize}

\end{proof}

%--------------------------------------------------------------------------------------------------

\begin{lemma}[Independence] \label{ind}
If $    \Gamma_{{\mathrm{1}}}  ,   \ottmv{z}  :^{ \ottnt{r_{{\mathrm{0}}}} }  \ottnt{C}     ,  \Gamma_{{\mathrm{2}}}   \vdash  \ottnt{a}  :^{ \ottnt{q} }  \ottnt{A} $ and $\neg ( \ottnt{r_{{\mathrm{0}}}}  \ll:  \ottnt{q} )$, then $    \Gamma_{{\mathrm{1}}}  ,   \ottmv{z}  :^{  0  }  \ottnt{C}     ,  \Gamma_{{\mathrm{2}}}   \vdash  \ottnt{a}  :^{ \ottnt{q} }  \ottnt{A} $. Here, $ \ottnt{r_{{\mathrm{0}}}}  \ll:  \ottnt{q}  \triangleq \exists \ottnt{q_{{\mathrm{0}}}},  \ottnt{r_{{\mathrm{0}}}}  =   \ottnt{q}  +  \ottnt{q_{{\mathrm{0}}}}  $.
\end{lemma}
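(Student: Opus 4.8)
The plan is to prove the Independence Lemma by induction on the derivation $    \Gamma_{{\mathrm{1}}}  ,   \ottmv{z}  :^{ \ottnt{r_{{\mathrm{0}}}} }  \ottnt{C}     ,  \Gamma_{{\mathrm{2}}}   \vdash  \ottnt{a}  :^{ \ottnt{q} }  \ottnt{A} $, tracking in each case how the hypothesis $\neg(\ottnt{r_{{\mathrm{0}}}} \ll: \ottnt{q})$ propagates to the subderivations. The key arithmetic fact I would isolate first is a monotonicity property of the relation $\ll:$ with respect to context-splitting and multiplication: if the grade of $\ottmv{z}$ in a conclusion judgment is $\ottnt{r_{{\mathrm{0}}}}$ and in a premise it is some $\ottnt{r'_{{\mathrm{0}}}}$ obtained as a summand of $\ottnt{r_{{\mathrm{0}}}}$ (via \rref{PTS-App}, \rref{PTS-Pair}, etc.) or as $\ottnt{q} \cdot \ottnt{q_{{\mathrm{0}}}} \cdot \ldots$ scaled appropriately, then $\neg(\ottnt{r_{{\mathrm{0}}}} \ll: \ottnt{q})$ forces $\neg(\ottnt{r'_{{\mathrm{0}}}} \ll: \ottnt{q})$ as well. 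Concretely, if $\ottnt{r_{{\mathrm{0}}}} = \ottnt{r_{{\mathrm{01}}}} + \ottnt{r_{{\mathrm{02}}}}$ and, say, $\ottnt{r_{{\mathrm{01}}}} \ll: \ottnt{q}$, i.e. $\ottnt{r_{{\mathrm{01}}}} = \ottnt{q} + \ottnt{q'}$, then $\ottnt{r_{{\mathrm{0}}}} = \ottnt{q} + (\ottnt{q'} + \ottnt{r_{{\mathrm{02}}}})$, contradicting $\neg(\ottnt{r_{{\mathrm{0}}}} \ll: \ottnt{q})$; this is exactly the style of reasoning already used in the proof of Lemma~\ref{HeapUnchanged}, so I would reuse that pattern.

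The base case is \rref{PTS-Var} (and \rref{PTS-DefVar} if definitions are in play): here the only subtlety is whether $\ottmv{z}$ is the variable being looked up. If $\ottmv{z}$ is the looked-up variable, then its grade in the conclusion is exactly $\ottnt{q}$, and $\ottnt{q} \ll: \ottnt{q}$ holds (take $\ottnt{q_{{\mathrm{0}}}} = 0$, using $\ottnt{q} + 0 = \ottnt{q}$), so the hypothesis $\neg(\ottnt{r_{{\mathrm{0}}}} \ll: \ottnt{q})$ is vacuously unsatisfiable and there is nothing to prove; if $\ottmv{z}$ is a different (zeroed-out) variable, then its grade is already $0$ and the conclusion is immediate. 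For the structural cases \rref{PTS-Lam}, \rref{PTS-LamOmega}, \rref{PTS-App}, \rref{PTS-Pair}, \rref{PTS-LetPair}, \rref{PTS-LetUnit}, \rref{PTS-Case}, \rref{PTS-Inj1}, \rref{PTS-Inj2}, \rref{PTS-Sum}, \rref{PTS-Pi}, \rref{PTS-Conv}, \rref{PTS-Weak}, \rref{PTS-SubL} and \rref{PTS-SubR}, I would: read off how $\ottnt{r_{{\mathrm{0}}}}$ decomposes across the premises; invoke the monotonicity fact above to transport $\neg(\cdot \ll: \ottnt{q})$ (or $\neg(\cdot \ll: \ottnt{q_{0}})$ where the premise observer level differs, e.g. the multiplied variants inside \rref{PTS-LetPair}) to each premise where $\ottmv{z}$ appears; apply the induction hypothesis to zero out $\ottmv{z}$ there; and reassemble with the same rule. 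A minor point in rules like \rref{PTS-Lam} where the premise has observer level $\ottnt{q}$ but the bound variable's grade is $\ottnt{q} \cdot \ottnt{r}$: since $\ottmv{z} \neq \ottmv{x}$, the grade of $\ottmv{z}$ is unchanged across the premise, so the hypothesis transports verbatim.

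The main obstacle I anticipate is \rref{PTS-SubR}, where the conclusion observer level $\ottnt{q'}$ is an upgrade of the premise level $\ottnt{q}$, i.e. $\ottnt{q} <: \ottnt{q'}$. From $\neg(\ottnt{r_{{\mathrm{0}}}} \ll: \ottnt{q'})$ I need $\neg(\ottnt{r_{{\mathrm{0}}}} \ll: \ottnt{q})$ to apply the IH. This requires the implication $\ottnt{r_{{\mathrm{0}}}} \ll: \ottnt{q} \Rightarrow \ottnt{r_{{\mathrm{0}}}} \ll: \ottnt{q'}$ whenever $\ottnt{q} <: \ottnt{q'}$ — which is true in the semirings of interest because $\ottnt{q} <: \ottnt{q'}$ means $\ottnt{q} = \ottnt{q'} + \ottnt{q_{{\mathrm{1}}}}$ for some $\ottnt{q_{{\mathrm{1}}}}$, so $\ottnt{r_{{\mathrm{0}}}} = \ottnt{q} + \ottnt{q_{{\mathrm{0}}}} = \ottnt{q'} + (\ottnt{q_{{\mathrm{1}}}} + \ottnt{q_{{\mathrm{0}}}})$ — but it needs the specific structure of $\mathbb{N}^{\omega}_{=}$ and $\mathbb{N}^{\omega}_{\geq}$ and does not hold for arbitrary preordered semirings; I would state this as an explicit property of $\ottnt{Q^{\omega}_{\mathbb{N}}}$ and discharge it by the case analysis on $\{0,1,\ldots,\omega\}$. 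The \rref{PTS-SubL} case is symmetric but easier since only the contexts change and $\ottmv{z}$'s grade can only decrease, which preserves $\neg(\cdot \ll: \ottnt{q})$ by the same argument. Once these arithmetic lemmas about $\ll:$ are in hand, every case is a routine "transport the hypothesis, apply IH, reapply the rule" step.
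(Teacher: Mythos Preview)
Your overall strategy—induction on the derivation, transporting the hypothesis $\neg(\cdot \ll: \cdot)$ to each premise—is exactly what the paper does, and your handling of \rref{PTS-Var}, \rref{PTS-SubL}, and \rref{PTS-SubR} is fine. But the case you flag as the main obstacle is not where the real difficulty lies.

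The genuine obstacle is in \rref{PTS-App} and \rref{PTS-Pair}, where one premise is checked at observer level $q \cdot r$ rather than $q$. Your plan relies on transporting $\neg(r_{02} \ll: q \cdot r)$ to that premise. When $r \neq 0$ this works: writing $r = 1 + r'$ gives $q \cdot r = q + q \cdot r'$, so $r_{02} \ll: q \cdot r$ implies $r_{02} \ll: q$, hence $r_{01}+r_{02} \ll: q$, contradicting the conclusion hypothesis. But when $r = 0$ the premise is at level $0$, and $r_{02} \ll: 0$ holds for \emph{every} $r_{02}$ (take witness $q_{0} = r_{02}$), so you cannot establish $\neg(r_{02} \ll: 0)$ and the induction hypothesis simply does not apply to that premise. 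The paper handles this by an explicit case split on $r = 0$ versus $r \neq 0$; in the $r = 0$ branch it argues directly that the $z$-grade in the level-$0$ premise can be taken to be $0$ (using that level-$0$ judgments are resource-agnostic in the $\mathcal{Q}_{\mathbb{N}}^{\omega}$ semirings), bypassing the IH entirely. The analogous split is needed in \rref{PTS-Pair}. No such split is needed in \rref{PTS-LetPair}, \rref{PTS-LetUnit}, or \rref{PTS-Case}, because there the side condition $q_{0} <: 1$ forces $q_{0} \neq 0$ in these semirings, so the scrutinee premise is never at level $0$.

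By contrast, the \rref{PTS-SubR} case you single out is treated by the paper under ``the other cases follow similarly''; your argument for it is correct and routine.
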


\begin{proof}

By induction on $    \Gamma_{{\mathrm{1}}}  ,   \ottmv{z}  :^{ \ottnt{r_{{\mathrm{0}}}} }  \ottnt{C}     ,  \Gamma_{{\mathrm{2}}}   \vdash  \ottnt{a}  :^{ \ottnt{q} }  \ottnt{A} $.

\begin{itemize}

\item \Rref{PTS-Var}. Have $    0   \cdot  \Gamma   ,   \ottmv{x}  :^{ \ottnt{q} }  \ottnt{A}    \vdash   \ottmv{x}   :^{ \ottnt{q} }  \ottnt{A} $ where $ \Delta  \vdash_{0}  \ottnt{A}  :   \ottmv{s}  $ and $\Delta =  \lfloor  \Gamma  \rfloor $.\\
This case follows immediately because $ \ottnt{q}  \ll:  \ottnt{q} $ and the other grades are all $0$.

\item \Rref{PTS-Pi}. Have: $     \Gamma_{{\mathrm{11}}}  +  \Gamma_{{\mathrm{21}}}    ,   \ottmv{z}  :^{  \ottnt{r_{{\mathrm{01}}}}  +  \ottnt{r_{{\mathrm{02}}}}  }  \ottnt{C}    ,    \Gamma_{{\mathrm{12}}}  +  \Gamma_{{\mathrm{22}}}     \vdash   \Pi  \ottmv{x}  :^{ \ottnt{r} } \!  \ottnt{A}  .  \ottnt{B}   :^{ \ottnt{q} }   \ottmv{s_{{\mathrm{3}}}}  $ where $   \Gamma_{{\mathrm{11}}}  ,   \ottmv{z}  :^{ \ottnt{r_{{\mathrm{01}}}} }  \ottnt{C}    ,  \Gamma_{{\mathrm{12}}}   \vdash  \ottnt{A}  :^{ \ottnt{q} }   \ottmv{s_{{\mathrm{1}}}}  $ and $    \Gamma_{{\mathrm{21}}}  ,   \ottmv{z}  :^{ \ottnt{r_{{\mathrm{02}}}} }  \ottnt{C}    ,  \Gamma_{{\mathrm{22}}}   ,   \ottmv{x}  :^{ \ottnt{r'} }  \ottnt{A}    \vdash  \ottnt{B}  :^{ \ottnt{q} }   \ottmv{s_{{\mathrm{2}}}}  $ and $ \mathcal{R} ( \ottmv{s_{{\mathrm{1}}}}  ,  \ottmv{s_{{\mathrm{2}}}}  ,  \ottmv{s_{{\mathrm{3}}}} ) $ and $ \neg(  \ottnt{r_{{\mathrm{01}}}}  +  \ottnt{r_{{\mathrm{02}}}}   \ll:  \ottnt{q} ) $.\\
Now, if $ \ottnt{r_{{\mathrm{01}}}}  \ll:  \ottnt{q} $ or $ \ottnt{r_{{\mathrm{02}}}}  \ll:  \ottnt{q} $, then $  \ottnt{r_{{\mathrm{01}}}}  +  \ottnt{r_{{\mathrm{02}}}}   \ll:  \ottnt{q} $, a contradiction.\\
Therefore, $ \neg( \ottnt{r_{{\mathrm{01}}}}  \ll:  \ottnt{q} ) $ and $ \neg( \ottnt{r_{{\mathrm{02}}}}  \ll:  \ottnt{q} ) $.\\
This case, then, follows by IH and \rref{PTS-Pi}.

\item \Rref{PTS-LamOmega}. Have: $    \ottnt{q_{{\mathrm{0}}}}  \cdot  \Gamma_{{\mathrm{1}}}   ,   \ottmv{z}  :^{  \ottnt{q_{{\mathrm{0}}}}  \cdot  \ottnt{r_{{\mathrm{0}}}}  }  \ottnt{C}    ,   \ottnt{q_{{\mathrm{0}}}}  \cdot  \Gamma_{{\mathrm{2}}}    \vdash   \lambda^{ \ottnt{r} }  \ottmv{x}  :  \ottnt{A}  .  \ottnt{b}   :^{  \ottnt{q_{{\mathrm{0}}}}  \cdot  \ottnt{q}  }   \Pi  \ottmv{x}  :^{ \ottnt{r} } \!  \ottnt{A}  .  \ottnt{B}  $ where $    \Gamma_{{\mathrm{1}}}  ,   \ottmv{z}  :^{ \ottnt{r_{{\mathrm{0}}}} }  \ottnt{C}    ,  \Gamma_{{\mathrm{2}}}   ,   \ottmv{x}  :^{  \ottnt{q}  \cdot  \ottnt{r}  }  \ottnt{A}    \vdash  \ottnt{b}  :^{ \ottnt{q} }  \ottnt{B} $ and $  \ottnt{q}  =   \omega    \Rightarrow   \ottnt{r}  =   \omega   $ and $ \ottnt{q_{{\mathrm{0}}}}  \neq   0  $ and $ \neg(  \ottnt{q_{{\mathrm{0}}}}  \cdot  \ottnt{r_{{\mathrm{0}}}}   \ll:   \ottnt{q_{{\mathrm{0}}}}  \cdot  \ottnt{q}  ) $.\\
If $ \ottnt{r_{{\mathrm{0}}}}  \ll:  \ottnt{q} $, then $  \ottnt{q_{{\mathrm{0}}}}  \cdot  \ottnt{r_{{\mathrm{0}}}}   \ll:   \ottnt{q_{{\mathrm{0}}}}  \cdot  \ottnt{q}  $, a contradiction. Therefore, $ \neg( \ottnt{r_{{\mathrm{0}}}}  \ll:  \ottnt{q} ) $.\\
This case, then, follows by IH and \rref{PTS-LamOmega}.

\item \Rref{PTS-App}. Have: $     \Gamma_{{\mathrm{11}}}  +  \Gamma_{{\mathrm{21}}}   ,   \ottmv{z}  :^{  \ottnt{r_{{\mathrm{01}}}}  +  \ottnt{r_{{\mathrm{02}}}}  }  \ottnt{C}    ,  \Gamma_{{\mathrm{12}}}   +  \Gamma_{{\mathrm{22}}}   \vdash   \ottnt{b}  \:  \ottnt{a} ^{ \ottnt{r} }   :^{ \ottnt{q} }   \ottnt{B}  \{  \ottnt{a}  /  \ottmv{x}  \}  $ where $   \Gamma_{{\mathrm{11}}}  ,   \ottmv{z}  :^{ \ottnt{r_{{\mathrm{01}}}} }  \ottnt{C}    ,  \Gamma_{{\mathrm{12}}}   \vdash  \ottnt{b}  :^{ \ottnt{q} }   \Pi  \ottmv{x}  :^{ \ottnt{r} } \!  \ottnt{A}  .  \ottnt{B}  $ and $   \Gamma_{{\mathrm{21}}}  ,   \ottmv{z}  :^{ \ottnt{r_{{\mathrm{02}}}} }  \ottnt{C}    ,  \Gamma_{{\mathrm{22}}}   \vdash  \ottnt{a}  :^{  \ottnt{q}  \cdot  \ottnt{r}  }  \ottnt{A} $ and $ \neg(  \ottnt{r_{{\mathrm{01}}}}  +  \ottnt{r_{{\mathrm{02}}}}   \ll:  \ottnt{q} ) $.\\
There are two cases to consider.
\begin{itemize}
\item $ \ottnt{r}  =   0  $. Then, $   \Gamma_{{\mathrm{21}}}  ,   \ottmv{z}  :^{  0  }  \ottnt{C}    ,  \Gamma_{{\mathrm{22}}}   \vdash  \ottnt{a}  :^{  0  }  \ottnt{A} $.\\
Now, if $ \ottnt{r_{{\mathrm{01}}}}  \ll:  \ottnt{q} $, then $  \ottnt{r_{{\mathrm{01}}}}  +  \ottnt{r_{{\mathrm{02}}}}   \ll:  \ottnt{q} $, a contradiction. Therefore, $ \neg( \ottnt{r_{{\mathrm{01}}}}  \ll:  \ottnt{q} ) $.\\
This case, then, follows by IH and \rref{PTS-App}.

\item $ \ottnt{r}  \neq   0  $. Then, $\ottnt{r} =   1   +  \ottnt{r'} $, for some $\ottnt{r'}$. Therefore, $ \neg( \ottnt{r_{{\mathrm{01}}}}  \ll:  \ottnt{q} ) $ and $ \neg( \ottnt{r_{{\mathrm{02}}}}  \ll:   \ottnt{q}  \cdot  \ottnt{r}  ) $.\\
This case, then, follows by IH and \rref{PTS-App}.
\end{itemize}

\item \Rref{PTS-Pair}. Have: $     \Gamma_{{\mathrm{11}}}  +  \Gamma_{{\mathrm{21}}}   ,   \ottmv{z}  :^{  \ottnt{r_{{\mathrm{01}}}}  +  \ottnt{r_{{\mathrm{02}}}}  }  \ottnt{C}    ,  \Gamma_{{\mathrm{12}}}   +  \Gamma_{{\mathrm{22}}}   \vdash   (  \ottnt{a_{{\mathrm{1}}}} ^{ \ottnt{r} } ,  \ottnt{a_{{\mathrm{2}}}}  )   :^{ \ottnt{q} }   \Sigma  \ottmv{x}  :^{ \ottnt{r} } \!  \ottnt{A_{{\mathrm{1}}}}  .  \ottnt{A_{{\mathrm{2}}}}  $ where $   \Gamma_{{\mathrm{11}}}  ,   \ottmv{z}  :^{ \ottnt{r_{{\mathrm{01}}}} }  \ottnt{C}    ,  \Gamma_{{\mathrm{12}}}   \vdash  \ottnt{a_{{\mathrm{1}}}}  :^{  \ottnt{q}  \cdot  \ottnt{r}  }  \ottnt{A_{{\mathrm{1}}}} $ and $   \Gamma_{{\mathrm{21}}}  ,   \ottmv{z}  :^{ \ottnt{r_{{\mathrm{02}}}} }  \ottnt{C}    ,  \Gamma_{{\mathrm{22}}}   \vdash  \ottnt{a_{{\mathrm{2}}}}  :^{ \ottnt{q} }   \ottnt{A_{{\mathrm{2}}}}  \{  \ottnt{a_{{\mathrm{1}}}}  /  \ottmv{x}  \}  $ and $ \neg(  \ottnt{r_{{\mathrm{01}}}}  +  \ottnt{r_{{\mathrm{02}}}}   \ll:  \ottnt{q} ) $.\\
Here too, there are two cases to consider.
\begin{itemize}
\item $ \ottnt{r}  =   0  $. Then, $   \Gamma_{{\mathrm{11}}}  ,   \ottmv{z}  :^{  0  }  \ottnt{C}    ,  \Gamma_{{\mathrm{12}}}   \vdash  \ottnt{a_{{\mathrm{1}}}}  :^{  0  }  \ottnt{A_{{\mathrm{1}}}} $.\\
Now, if $ \ottnt{r_{{\mathrm{02}}}}  \ll:  \ottnt{q} $, then $  \ottnt{r_{{\mathrm{01}}}}  +  \ottnt{r_{{\mathrm{02}}}}   \ll:  \ottnt{q} $, a contradiction. Therefore, $ \neg( \ottnt{r_{{\mathrm{02}}}}  \ll:  \ottnt{q} ) $.\\
This case, then, follows by IH and \rref{PTS-Pair}.

\item $ \ottnt{r}  \neq   0  $. Then, $\ottnt{r} =   1   +  \ottnt{r'} $, for some $\ottnt{r'}$. Therefore, $ \neg( \ottnt{r_{{\mathrm{01}}}}  \ll:   \ottnt{q}  \cdot  \ottnt{r}  ) $ and $ \neg( \ottnt{r_{{\mathrm{02}}}}  \ll:  \ottnt{q} ) $.\\
This case, then, follows by IH and \rref{PTS-Pair}.
\end{itemize}

\item \Rref{PTS-LetPair}. Have: $     \Gamma_{{\mathrm{11}}}  +  \Gamma_{{\mathrm{21}}}   ,   \ottmv{z}  :^{  \ottnt{r_{{\mathrm{01}}}}  +  \ottnt{r_{{\mathrm{02}}}}  }  \ottnt{C}    ,  \Gamma_{{\mathrm{12}}}   +  \Gamma_{{\mathrm{22}}}   \vdash   \mathbf{let}_{ \ottnt{q_{{\mathrm{0}}}} } \: (  \ottmv{x} ^{ \ottnt{r} } ,  \ottmv{y}  ) \: \mathbf{be} \:  \ottnt{a}  \: \mathbf{in} \:  \ottnt{b}   :^{ \ottnt{q} }   \ottnt{B}  \{  \ottnt{a}  /  \ottmv{z}  \}  $ where $  \Delta  ,   \ottmv{z}  :   \Sigma  \ottmv{x}  :^{ \ottnt{r} } \!  \ottnt{A_{{\mathrm{1}}}}  .  \ottnt{A_{{\mathrm{2}}}}     \vdash_{0}  \ottnt{B}  :   \ottmv{s}  $ and $   \Gamma_{{\mathrm{11}}}  ,   \ottmv{z}  :^{ \ottnt{r_{{\mathrm{01}}}} }  \ottnt{C}    ,  \Gamma_{{\mathrm{12}}}   \vdash  \ottnt{a}  :^{  \ottnt{q}  \cdot  \ottnt{q_{{\mathrm{0}}}}  }   \Sigma  \ottmv{x}  :^{ \ottnt{r} } \!  \ottnt{A_{{\mathrm{1}}}}  .  \ottnt{A_{{\mathrm{2}}}}  $ and $     \Gamma_{{\mathrm{21}}}  ,   \ottmv{z}  :^{ \ottnt{r_{{\mathrm{02}}}} }  \ottnt{C}    ,  \Gamma_{{\mathrm{22}}}   ,   \ottmv{x}  :^{  \ottnt{q}  \cdot   \ottnt{q_{{\mathrm{0}}}}  \cdot  \ottnt{r}   }  \ottnt{A_{{\mathrm{1}}}}    ,   \ottmv{y}  :^{  \ottnt{q}  \cdot  \ottnt{q_{{\mathrm{0}}}}  }  \ottnt{A_{{\mathrm{2}}}}    \vdash  \ottnt{b}  :^{ \ottnt{q} }   \ottnt{B}  \{   (   \ottmv{x}  ^{ \ottnt{r} } ,   \ottmv{y}   )   /  \ottmv{z}  \}  $ and $ \ottnt{q_{{\mathrm{0}}}}  <:   1  $ and $ \neg(  \ottnt{r_{{\mathrm{01}}}}  +  \ottnt{r_{{\mathrm{02}}}}   \ll:  \ottnt{q} ) $.\\
Since $ \ottnt{q_{{\mathrm{0}}}}  <:   1  $, so $\ottnt{q_{{\mathrm{0}}}} =   1   +  \ottnt{q'_{{\mathrm{0}}}} $, for some $\ottnt{q'_{{\mathrm{0}}}}$. Therefore, $ \neg( \ottnt{r_{{\mathrm{01}}}}  \ll:   \ottnt{q}  \cdot  \ottnt{q_{{\mathrm{0}}}}  ) $ and $ \neg( \ottnt{r_{{\mathrm{02}}}}  \ll:  \ottnt{q} ) $.\\
This case, then, follows by IH and \rref{PTS-LetPair}.

\item The other cases follow similarly. 

\end{itemize}

\end{proof}

%--------------------------------------------------------------------------------------------------

\begin{lemma}[Factorization] \label{BLwDFactP}
If $ \Gamma  \vdash  \ottnt{a}  :^{ \ottnt{q} }  \ottnt{A} $ and $q \neq 0$, then there exists $\Gamma'$ such that $ \Gamma'  \vdash  \ottnt{a}  :^{  1  }  \ottnt{A} $ and $ \Gamma  <:   \ottnt{q}  \cdot  \Gamma'  $.
\end{lemma}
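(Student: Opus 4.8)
The statement is the exact PTS-version analogue of Lemma~\ref{BLSFact} (the factorization lemma for the simply-typed calculus), now with the added wrinkle of unrestricted usage, i.e., the semiring is one of the $\mathcal{Q}_{\mathbb{N}}^{\omega}$'s and the lambda rule has been replaced by \rref{PTS-LamOmega}. The plan is to proceed by induction on the derivation $\Gamma \vdash \ottnt{a} :^{\ottnt{q}} \ottnt{A}$, reusing wholesale the case analysis from the proof of Lemma~\ref{BLDFactP} (the PTS factorization lemma for the purely quantitative fragment, before $\omega$ was added) for all rules except \rref{PTS-LamOmega}. For those reused cases --- \rref{PTS-Var,PTS-Pi,PTS-App,PTS-Conv,PTS-Pair,PTS-LetPair,PTS-Sum,PTS-Inj1,PTS-Inj2,PTS-Case,PTS-SubL,PTS-SubR,PTS-Weak} --- the arguments go through verbatim: they appeal only to associativity/distributivity of the semiring, to the multiplication lemma (now Lemma~\ref{BLwDMultP}, already established for the $\omega$-semirings in this appendix), and to \rref{PTS-SubL}, none of which are affected by the presence of $\omega$. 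The only genuinely new work is the \rref{PTS-LamOmega} case, which replaces the old \rref{PTS-Lam} case.

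\textbf{The \rref{PTS-LamOmega} case.} Here we have $\ottnt{q_{{\mathrm{0}}}} \cdot \Gamma_{{\mathrm{1}}} \vdash \lambda^{\ottnt{r}} \ottmv{x} : \ottnt{A} . \ottnt{b} :^{\ottnt{q_{{\mathrm{0}}}} \cdot \ottnt{q}} \Pi \ottmv{x} :^{\ottnt{r}}\! \ottnt{A} . \ottnt{B}$ with $\Gamma_{{\mathrm{1}}} , \ottmv{x} :^{\ottnt{q} \cdot \ottnt{r}} \ottnt{A} \vdash \ottnt{b} :^{\ottnt{q}} \ottnt{B}$, the side condition $\ottnt{q} = \omega \Rightarrow \ottnt{r} = \omega$, and $\ottnt{q_{{\mathrm{0}}}} \neq 0$; the grade at which we must factorize is $\ottnt{q_{{\mathrm{0}}}} \cdot \ottnt{q}$, which is nonzero (since $\ottnt{q_{{\mathrm{0}}}} \neq 0$ and $\ottnt{q_{{\mathrm{0}}}} \cdot \ottnt{q} \neq 0$ forces $\ottnt{q} \neq 0$ --- note that in $\mathcal{Q}_{\mathbb{N}}^{\omega}$, a product is nonzero iff both factors are). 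First I would apply the induction hypothesis to the body derivation, which is at grade $\ottnt{q} \neq 0$: this yields $\Gamma'$ and $\ottnt{r'}$ with $\Gamma' , \ottmv{x} :^{\ottnt{r'}} \ottnt{A} \vdash \ottnt{b} :^{1} \ottnt{B}$ and $\Gamma_{{\mathrm{1}}} <: \ottnt{q} \cdot \Gamma'$ and $\ottnt{q} \cdot \ottnt{r} <: \ottnt{q} \cdot \ottnt{r'}$. Since $\ottnt{q} \neq 0$, the last inequality gives $\ottnt{r} <: \ottnt{r'}$ (cancellation of nonzero factors holds in these semirings --- this is exactly the step used in Lemma~\ref{BLSFactP}), so by \rref{PTS-SubL} we get $\Gamma' , \ottmv{x} :^{\ottnt{r}} \ottnt{A} \vdash \ottnt{b} :^{1} \ottnt{B}$. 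To reapply \rref{PTS-LamOmega} I need a derivation whose context-multiplier equals its grade-multiplier and whose grade is $1$; so I take $\ottnt{q_{{\mathrm{0}}}}' := 1$, giving $\Gamma' , \ottmv{x} :^{1 \cdot \ottnt{r}} \ottnt{A} \vdash \ottnt{b} :^{1} \ottnt{B}$ with $1 \neq 0$, and the side condition $1 = \omega \Rightarrow \ottnt{r} = \omega$ holds vacuously. Applying \rref{PTS-LamOmega} yields $\Gamma' \vdash \lambda^{\ottnt{r}} \ottmv{x} : \ottnt{A} . \ottnt{b} :^{1} \Pi \ottmv{x} :^{\ottnt{r}}\! \ottnt{A} . \ottnt{B}$, and it remains to check $\ottnt{q_{{\mathrm{0}}}} \cdot \Gamma_{{\mathrm{1}}} <: (\ottnt{q_{{\mathrm{0}}}} \cdot \ottnt{q}) \cdot \Gamma'$: this follows from $\Gamma_{{\mathrm{1}}} <: \ottnt{q} \cdot \Gamma'$ by premultiplying both sides by $\ottnt{q_{{\mathrm{0}}}}$ (context order is respected by scalar multiplication), using associativity to rewrite $\ottnt{q_{{\mathrm{0}}}} \cdot (\ottnt{q} \cdot \Gamma')$ as $(\ottnt{q_{{\mathrm{0}}}} \cdot \ottnt{q}) \cdot \Gamma'$. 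So $\Gamma' $ witnesses the claim.

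\textbf{Main obstacle.} I expect the subtle point to be ensuring that the side condition of \rref{PTS-LamOmega} is re-established when we reapply the rule at the factorized grade $1$. Fortunately, since $1$ is never $\omega$ in any of the $\mathcal{Q}_{\mathbb{N}}^{\omega}$'s, the implication $1 = \omega \Rightarrow \ottnt{r} = \omega$ is vacuously true, so this causes no trouble --- the real content of the factorization lemma, namely that the residual grade $\ottnt{r'}$ on $\ottmv{x}$ can be lowered to $\ottnt{r}$, is handled exactly as in the original proof via cancellation and \rref{PTS-SubL}. A secondary point worth a sentence of care is the nonzero-product fact in $\mathcal{Q}_{\mathbb{N}}^{\omega}$ (used to deduce $\ottnt{q} \neq 0$ from $\ottnt{q_{{\mathrm{0}}}} \cdot \ottnt{q} \neq 0$ and to justify cancellation): one should note $0$ is the annihilator and $\omega$ is a ``near-annihilator'' only away from $0$, so the only way a product is $0$ is if a factor is $0$. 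With these two observations in place, the remaining cases are entirely routine transcriptions of the proof of Lemma~\ref{BLDFactP}.
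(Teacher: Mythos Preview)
Your proposal has a genuine gap: you assert that ``cancellation of nonzero factors holds in these semirings,'' and rely on it both in the \rref{PTS-LamOmega} case and, implicitly, when you say the remaining cases carry over verbatim from Lemma~\ref{BLDFactP}. But cancellation fails precisely when the factor is $\omega$: in $\mathbb{N}_{=}^{\omega}$, for instance, $\omega \cdot 1 = \omega \cdot 3 = \omega$, yet $1 \not<: 3$. So from $q \cdot r <: q \cdot r'$ with $q = \omega$ you cannot deduce $r <: r'$. For \rref{PTS-LamOmega} you are accidentally saved, because the side condition forces $r = \omega$ when $q = \omega$, and $\omega$ is the bottom element, so $r <: r'$ holds trivially --- but not by cancellation. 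The real damage is to \rref{PTS-LetPair} (and, by the same shape, \rref{PTS-Case}), which you list among the ``verbatim'' cases. The proof of Lemma~\ref{BLDFactP} for \rref{PTS-LetPair} cancels $q$ to pass from $q \cdot q_{0} \cdot r <: q \cdot r'_1$ and $q \cdot q_{0} <: q \cdot r'_2$ to $q_{0} \cdot r <: r'_1$ and $q_{0} <: r'_2$; with $q = \omega$ there is no side condition to rescue you, and the IH may hand back, say, $r'_2 = 3$ while $q_{0} = 1$.

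The paper repairs this with a uniform $q = \omega$ versus $q \neq \omega$ split in both \rref{PTS-LamOmega} and \rref{PTS-LetPair}. When $q \notin \{0,\omega\}$, cancellation is valid and your plan goes through. When $q = \omega$, the paper abandons the IH-then-cancel route entirely and instead invokes the Independence Lemma (Lemma~\ref{ind}): any context entry graded outside $\{0,\omega\}$ is invisible at observer level $\omega$, so the whole derivation can be retyped in $\Gamma^{\{0,\omega\}}$, the context with all such grades zeroed. Then \rref{PTS-SubR} (using $\omega <: 1$) drops the grade to $1$, and the required inequality $\Gamma <: \omega \cdot \Gamma^{\{0,\omega\}}$ follows because $\Gamma^{\{0,\omega\}} = \omega \cdot \Gamma^{\{0,\omega\}}$ and $\Gamma <: \Gamma^{\{0,\omega\}}$. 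You need this extra lemma and construction; they are the missing ingredient in your plan.
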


\begin{proof}
By induction on $ \Gamma  \vdash  \ottnt{a}  :^{ \ottnt{q} }  \ottnt{A} $.

\begin{itemize}

\item \Rref{PTS-Pi}. Have: $  \Gamma_{{\mathrm{1}}}  +  \Gamma_{{\mathrm{2}}}   \vdash   \Pi  \ottmv{x}  :^{ \ottnt{r} } \!  \ottnt{A}  .  \ottnt{B}   :^{ \ottnt{q} }   \ottmv{s_{{\mathrm{3}}}}  $ where $ \Gamma_{{\mathrm{1}}}  \vdash  \ottnt{A}  :^{ \ottnt{q} }   \ottmv{s_{{\mathrm{1}}}}  $ and $  \Gamma_{{\mathrm{2}}}  ,   \ottmv{x}  :^{ \ottnt{r_{{\mathrm{0}}}} }  \ottnt{A}    \vdash  \ottnt{B}  :^{ \ottnt{q} }   \ottmv{s_{{\mathrm{2}}}}  $.\\
By IH, $\exists \Gamma'_{{\mathrm{1}}}, \Gamma'_{{\mathrm{2}}}$ such that $ \Gamma'_{{\mathrm{1}}}  \vdash  \ottnt{A}  :^{  1  }   \ottmv{s_{{\mathrm{1}}}}  $ and $  \Gamma'_{{\mathrm{2}}}  ,   \ottmv{x}  :^{ \ottnt{r'} }  \ottnt{A}    \vdash  \ottnt{B}  :^{  1  }   \ottmv{s_{{\mathrm{2}}}}  $ where $ \Gamma_{{\mathrm{1}}}  <:   \ottnt{q}  \cdot  \Gamma'_{{\mathrm{1}}}  $ and $ \Gamma_{{\mathrm{2}}}  <:   \ottnt{q}  \cdot  \Gamma'_{{\mathrm{2}}}  $.\\
This case, then, follows by \rref{PTS-Pi}.

\item \Rref{PTS-LamOmega}. Have: $  \ottnt{q_{{\mathrm{0}}}}  \cdot  \Gamma   \vdash   \lambda^{ \ottnt{r} }  \ottmv{x}  :  \ottnt{A}  .  \ottnt{b}   :^{  \ottnt{q_{{\mathrm{0}}}}  \cdot  \ottnt{q}  }   \Pi  \ottmv{x}  :^{ \ottnt{r} } \!  \ottnt{A}  .  \ottnt{B}  $ where $  \Gamma  ,   \ottmv{x}  :^{  \ottnt{q}  \cdot  \ottnt{r}  }  \ottnt{A}    \vdash  \ottnt{b}  :^{ \ottnt{q} }  \ottnt{B} $ and $  \ottnt{q}  =   \omega    \Rightarrow   \ottnt{r}  =   \omega   $ and $ \ottnt{q_{{\mathrm{0}}}}  \neq   0  $.\\
There are two cases to consider:
\begin{itemize}
\item $ \ottnt{q}  =   \omega  $. Then $  \ottnt{q_{{\mathrm{0}}}}  \cdot  \ottnt{q}   =   \omega  $. \\
Here, we need to define an operation on contexts. For a context $\Gamma$, define $ \Gamma ^{ \{ 0 , \omega \} } $ as:
\begin{align*}
  \emptyset  ^{ \{ 0 , \omega \} }  & =  \emptyset  \\
  (   \Gamma  ,   \ottmv{x}  :^{ \ottnt{q} }  \ottnt{A}    )  ^{ \{ 0 , \omega \} }  & = \begin{cases}
                                \Gamma ^{ \{ 0 , \omega \} }   ,   \ottmv{x}  :^{ \ottnt{q} }  \ottnt{A}   \text{ if } q \in \{ 0 ,  \omega  \} \\
                                \Gamma ^{ \{ 0 , \omega \} }   ,   \ottmv{x}  :^{  0  }  \ottnt{A}   \text{ otherwise}
                             \end{cases}
\end{align*}
Now, since $  \ottnt{q_{{\mathrm{0}}}}  \cdot  \Gamma   \vdash   \lambda^{ \ottnt{r} }  \ottmv{x}  :  \ottnt{A}  .  \ottnt{b}   :^{  \omega  }    \Pi  \ottmv{x}  :^{ \ottnt{r} } \!  \ottnt{A}  .  \ottnt{B}   $, by lemma \ref{ind}, $   (   \ottnt{q_{{\mathrm{0}}}}  \cdot  \Gamma   )  ^{ \{ 0 , \omega \} }   \vdash   \lambda^{ \ottnt{r} }  \ottmv{x}  :  \ottnt{A}  .  \ottnt{b}   :^{  \omega  }    \Pi  \ottmv{x}  :^{ \ottnt{r} } \!  \ottnt{A}  .  \ottnt{B}   $.\\
By \rref{PTS-SubR}, $   (   \ottnt{q_{{\mathrm{0}}}}  \cdot  \Gamma   )  ^{ \{ 0 , \omega \} }   \vdash   \lambda^{ \ottnt{r} }  \ottmv{x}  :  \ottnt{A}  .  \ottnt{b}   :^{  1  }   \Pi  \ottmv{x}  :^{ \ottnt{r} } \!  \ottnt{A}  .  \ottnt{B}  $.\\
Next, $  \ottnt{q_{{\mathrm{0}}}}  \cdot  \Gamma   <:    (   \ottnt{q_{{\mathrm{0}}}}  \cdot  \Gamma   )  ^{ \{ 0 , \omega \} }   =   \omega   \cdot    (   \ottnt{q_{{\mathrm{0}}}}  \cdot  \Gamma   )  ^{ \{ 0 , \omega \} }   =   \ottnt{q_{{\mathrm{0}}}}  \cdot  \ottnt{q}   \cdot    (   \ottnt{q_{{\mathrm{0}}}}  \cdot  \Gamma   )  ^{ \{ 0 , \omega \} }  $.\\
Note that in case of $ \mathbb{N}_{=}^{\omega} $ and $ \mathcal{Q}_{\text{Lin} } $, we have, $ \ottnt{q_{{\mathrm{0}}}}  \cdot  \Gamma  =   (   \ottnt{q_{{\mathrm{0}}}}  \cdot  \Gamma   )  ^{ \{ 0 , \omega \} } $. 

\item $ \ottnt{q}  \neq   \omega  $. By IH, $  \Gamma'  ,   \ottmv{x}  :^{ \ottnt{r'} }  \ottnt{A}    \vdash  \ottnt{b}  :^{  1  }  \ottnt{B} $ where $ \Gamma  <:   \ottnt{q}  \cdot  \Gamma'  $ and $  \ottnt{q}  \cdot  \ottnt{r}   <:   \ottnt{q}  \cdot  \ottnt{r'}  $.\\
Since $  \ottnt{q}  \cdot  \ottnt{r}   <:   \ottnt{q}  \cdot  \ottnt{r'}  $ and $q \notin \{ 0 ,  \omega  \}$, so $ \ottnt{r}  <:  \ottnt{r'} $.\\
By \rref{PTS-SubL}, $  \Gamma'  ,   \ottmv{x}  :^{ \ottnt{r} }  \ottnt{A}    \vdash  \ottnt{b}  :^{  1  }  \ottnt{B} $.\\
This case, then, follows by \rref{PTS-LamOmega}.
 
\end{itemize}

\item \Rref{PTS-App}. Have: $  \Gamma_{{\mathrm{1}}}  +  \Gamma_{{\mathrm{2}}}   \vdash   \ottnt{b}  \:  \ottnt{a} ^{ \ottnt{r} }   :^{ \ottnt{q} }   \ottnt{B}  \{  \ottnt{a}  /  \ottmv{x}  \}  $ where $ \Gamma_{{\mathrm{1}}}  \vdash  \ottnt{b}  :^{ \ottnt{q} }   \Pi  \ottmv{x}  :^{ \ottnt{r} } \!  \ottnt{A}  .  \ottnt{B}  $ and $ \Gamma_{{\mathrm{2}}}  \vdash  \ottnt{a}  :^{  \ottnt{q}  \cdot  \ottnt{r}  }  \ottnt{A} $.\\
There are two cases to consider:
\begin{itemize}
\item $ \ottnt{r}  =   0  $. By IH, $\exists \Gamma'_{{\mathrm{1}}}$ such that $ \Gamma'_{{\mathrm{1}}}  \vdash  \ottnt{b}  :^{  1  }   \Pi  \ottmv{x}  :^{ \ottnt{r} } \!  \ottnt{A}  .  \ottnt{B}  $ and $ \Gamma_{{\mathrm{1}}}  <:   \ottnt{q}  \cdot  \Gamma'_{{\mathrm{1}}}  $.\\
Next, by the multiplication lemma, $   0   \cdot  \Gamma_{{\mathrm{2}}}   \vdash  \ottnt{a}  :^{  0  }  \ottnt{A} $.\\
This case, then, follows by \rref{PTS-App}.
\item $ \ottnt{r}  \neq   0  $. By IH, $\exists \Gamma'_{{\mathrm{1}}}, \Gamma'_{{\mathrm{2}}}$ such that $ \Gamma'_{{\mathrm{1}}}  \vdash  \ottnt{b}  :^{  1  }   \Pi  \ottmv{x}  :^{ \ottnt{r} } \!  \ottnt{A}  .  \ottnt{B}  $ and $ \Gamma'_{{\mathrm{2}}}  \vdash  \ottnt{a}  :^{  1  }  \ottnt{A} $ where $ \Gamma_{{\mathrm{1}}}  <:   \ottnt{q}  \cdot  \Gamma'_{{\mathrm{1}}}  $ and $ \Gamma_{{\mathrm{2}}}  <:    \ottnt{q}  \cdot  \ottnt{r}   \cdot  \Gamma'_{{\mathrm{2}}}  $.\\
Then, by the multiplication lemma, $  \ottnt{r}  \cdot  \Gamma'_{{\mathrm{2}}}   \vdash  \ottnt{a}  :^{ \ottnt{r} }  \ottnt{A} $.\\
This case, then, follows by \rref{PTS-App}. 
\end{itemize}

\item \Rref{PTS-Pair}. Have: $  \Gamma_{{\mathrm{1}}}  +  \Gamma_{{\mathrm{2}}}   \vdash   (  \ottnt{a_{{\mathrm{1}}}} ^{ \ottnt{r} } ,  \ottnt{a_{{\mathrm{2}}}}  )   :^{ \ottnt{q} }   \Sigma  \ottmv{x}  :^{ \ottnt{r} } \!  \ottnt{A_{{\mathrm{1}}}}  .  \ottnt{A_{{\mathrm{2}}}}  $ where $ \Gamma_{{\mathrm{1}}}  \vdash  \ottnt{a_{{\mathrm{1}}}}  :^{  \ottnt{q}  \cdot  \ottnt{r}  }  \ottnt{A_{{\mathrm{1}}}} $ and $ \Gamma_{{\mathrm{2}}}  \vdash  \ottnt{a_{{\mathrm{2}}}}  :^{ \ottnt{q} }   \ottnt{A_{{\mathrm{2}}}}  \{  \ottnt{a_{{\mathrm{1}}}}  /  \ottmv{x}  \}  $.\\
There are two cases to consider.

\begin{itemize}

\item $r = 0$. Then, by the multiplication lemma, $   0   \cdot  \Gamma_{{\mathrm{1}}}   \vdash  \ottnt{a_{{\mathrm{1}}}}  :^{  0  }  \ottnt{A_{{\mathrm{1}}}} $.\\
By IH, $\exists \Gamma'_{{\mathrm{2}}}$ such that $ \Gamma'_{{\mathrm{2}}}  \vdash  \ottnt{a_{{\mathrm{2}}}}  :^{  1  }   \ottnt{A_{{\mathrm{2}}}}  \{  \ottnt{a_{{\mathrm{1}}}}  /  \ottmv{x}  \}  $ and $ \Gamma_{{\mathrm{2}}}  <:   \ottnt{q}  \cdot  \Gamma'_{{\mathrm{2}}}  $.\\
This case, then, follows by \rref{PTS-Pair}.

\item $r \neq 0$. By IH, $\exists \Gamma'_{{\mathrm{1}}}, \Gamma'_{{\mathrm{2}}}$ such that $ \Gamma'_{{\mathrm{1}}}  \vdash  \ottnt{a_{{\mathrm{1}}}}  :^{  1  }  \ottnt{A_{{\mathrm{1}}}} $ and $ \Gamma'_{{\mathrm{2}}}  \vdash  \ottnt{a_{{\mathrm{2}}}}  :^{  1  }   \ottnt{A_{{\mathrm{2}}}}  \{  \ottnt{a_{{\mathrm{1}}}}  /  \ottmv{x}  \}  $ and $ \Gamma_{{\mathrm{1}}}  <:    (   \ottnt{q}  \cdot  \ottnt{r}   )   \cdot  \Gamma'_{{\mathrm{1}}}  $ and $ \Gamma_{{\mathrm{2}}}  <:   \ottnt{q}  \cdot  \Gamma'_{{\mathrm{2}}}  $.\\
Then, by the multiplication lemma, $  \ottnt{r}  \cdot  \Gamma'_{{\mathrm{1}}}   \vdash  \ottnt{a_{{\mathrm{1}}}}  :^{ \ottnt{r} }  \ottnt{A_{{\mathrm{1}}}} $.\\
This case, then, follows by \rref{PTS-Pair}.

\end{itemize}

\item \Rref{PTS-LetPair}. Have: $  \Gamma_{{\mathrm{1}}}  +  \Gamma_{{\mathrm{2}}}   \vdash   \mathbf{let}_{ \ottnt{q_{{\mathrm{0}}}} } \: (  \ottmv{x} ^{ \ottnt{r} } ,  \ottmv{y}  ) \: \mathbf{be} \:  \ottnt{a}  \: \mathbf{in} \:  \ottnt{b}   :^{ \ottnt{q} }   \ottnt{B}  \{  \ottnt{a}  /  \ottmv{z}  \}  $ where $  \Delta  ,   \ottmv{z}  :   \Sigma  \ottmv{x}  :^{ \ottnt{r} } \!  \ottnt{A_{{\mathrm{1}}}}  .  \ottnt{A_{{\mathrm{2}}}}     \vdash_{0}  \ottnt{B}  :   \ottmv{s}  $ and $ \Gamma_{{\mathrm{1}}}  \vdash  \ottnt{a}  :^{  \ottnt{q}  \cdot  \ottnt{q_{{\mathrm{0}}}}  }   \Sigma  \ottmv{x}  :^{ \ottnt{r} } \!  \ottnt{A_{{\mathrm{1}}}}  .  \ottnt{A_{{\mathrm{2}}}}  $ and $   \Gamma_{{\mathrm{2}}}  ,   \ottmv{x}  :^{  \ottnt{q}  \cdot   \ottnt{q_{{\mathrm{0}}}}  \cdot  \ottnt{r}   }  \ottnt{A_{{\mathrm{1}}}}    ,   \ottmv{y}  :^{  \ottnt{q}  \cdot  \ottnt{q_{{\mathrm{0}}}}  }  \ottnt{A_{{\mathrm{2}}}}    \vdash  \ottnt{b}  :^{ \ottnt{q} }   \ottnt{B}  \{   (   \ottmv{x}  ^{ \ottnt{r} } ,   \ottmv{y}   )   /  \ottmv{z}  \}  $.\\
There are two cases to consider:
\begin{itemize}

\item $ \ottnt{q}  =   \omega  $. Now, by lemma \ref{ind}, $   (   \Gamma_{{\mathrm{1}}}  +  \Gamma_{{\mathrm{2}}}   )  ^{ \{ 0 , \omega \} }   \vdash   \mathbf{let}_{ \ottnt{q_{{\mathrm{0}}}} } \: (  \ottmv{x} ^{ \ottnt{r} } ,  \ottmv{y}  ) \: \mathbf{be} \:  \ottnt{a}  \: \mathbf{in} \:  \ottnt{b}   :^{  \omega  }   \ottnt{B}  \{  \ottnt{a}  /  \ottmv{z}  \}  $.\\
By \rref{PTS-SubR}, $   (   \Gamma_{{\mathrm{1}}}  +  \Gamma_{{\mathrm{2}}}   )  ^{ \{ 0 , \omega \} }   \vdash   \mathbf{let}_{ \ottnt{q_{{\mathrm{0}}}} } \: (  \ottmv{x} ^{ \ottnt{r} } ,  \ottmv{y}  ) \: \mathbf{be} \:  \ottnt{a}  \: \mathbf{in} \:  \ottnt{b}   :^{  1  }   \ottnt{B}  \{  \ottnt{a}  /  \ottmv{z}  \}  $.\\
Next, $  \Gamma_{{\mathrm{1}}}  +  \Gamma_{{\mathrm{2}}}   <:    (   \Gamma_{{\mathrm{1}}}  +  \Gamma_{{\mathrm{2}}}   )  ^{ \{ 0 , \omega \} }   =   \omega   \cdot    (   \Gamma_{{\mathrm{1}}}  +  \Gamma_{{\mathrm{2}}}   )  ^{ \{ 0 , \omega \} }  $.\\
Note that in case of $ \mathbb{N}_{=}^{\omega} $ and $ \mathcal{Q}_{\text{Lin} } $, we have, $ \Gamma_{{\mathrm{1}}}  +  \Gamma_{{\mathrm{2}}}  =   (   \Gamma_{{\mathrm{1}}}  +  \Gamma_{{\mathrm{2}}}   )  ^{ \{ 0 , \omega \} } $. 

\item $ \ottnt{q}  \neq   \omega  $. By IH, $ \Gamma'_{{\mathrm{1}}}  \vdash  \ottnt{a}  :^{  1  }   \Sigma  \ottmv{x}  :^{ \ottnt{r} } \!  \ottnt{A_{{\mathrm{1}}}}  .  \ottnt{A_{{\mathrm{2}}}}  $ and $   \Gamma'_{{\mathrm{2}}}  ,   \ottmv{x}  :^{ \ottnt{r'_{{\mathrm{1}}}} }  \ottnt{A_{{\mathrm{1}}}}    ,   \ottmv{y}  :^{ \ottnt{r'_{{\mathrm{2}}}} }  \ottnt{A_{{\mathrm{2}}}}    \vdash  \ottnt{b}  :^{  1  }   \ottnt{B}  \{   (   \ottmv{x}  ^{ \ottnt{r} } ,   \ottmv{y}   )   /  \ottmv{z}  \}  $ where $ \Gamma_{{\mathrm{1}}}  <:    \ottnt{q}  \cdot  \ottnt{q_{{\mathrm{0}}}}   \cdot  \Gamma'_{{\mathrm{1}}}  $ and $ \Gamma_{{\mathrm{2}}}  <:   \ottnt{q}  \cdot  \Gamma'_{{\mathrm{2}}}  $ and $   \ottnt{q}  \cdot  \ottnt{q_{{\mathrm{0}}}}   \cdot  \ottnt{r}   <:   \ottnt{q}  \cdot  \ottnt{r'_{{\mathrm{1}}}}  $ and $  \ottnt{q}  \cdot  \ottnt{q_{{\mathrm{0}}}}   <:   \ottnt{q}  \cdot  \ottnt{r'_{{\mathrm{2}}}}  $.\\
Now, since $q \notin \{ 0 ,  \omega  \}$, so $  \ottnt{q_{{\mathrm{0}}}}  \cdot  \ottnt{r}   <:  \ottnt{r'_{{\mathrm{1}}}} $ and $ \ottnt{q_{{\mathrm{0}}}}  <:  \ottnt{r'_{{\mathrm{2}}}} $.\\ 
Therefore, by \rref{PTS-SubL}, $   \Gamma'_{{\mathrm{2}}}  ,   \ottmv{x}  :^{  \ottnt{q_{{\mathrm{0}}}}  \cdot  \ottnt{r}  }  \ottnt{A_{{\mathrm{1}}}}    ,   \ottmv{y}  :^{ \ottnt{q_{{\mathrm{0}}}} }  \ottnt{A_{{\mathrm{2}}}}    \vdash  \ottnt{b}  :^{  1  }   \ottnt{B}  \{   (   \ottmv{x}  ^{ \ottnt{r} } ,   \ottmv{y}   )   /  \ottmv{z}  \}  $.\\
Next, by the multiplication lemma, $  \ottnt{q_{{\mathrm{0}}}}  \cdot  \Gamma'_{{\mathrm{1}}}   \vdash  \ottnt{a}  :^{ \ottnt{q_{{\mathrm{0}}}} }   \Sigma  \ottmv{x}  :^{ \ottnt{r} } \!  \ottnt{A_{{\mathrm{1}}}}  .  \ottnt{A_{{\mathrm{2}}}}  $.\\
This case, then, follows by \rref{PTS-LetPair}.

\end{itemize}

\item The other cases follow similarly.

\end{itemize}

\end{proof}

%---------------------------------------------------------------------------------------------------

\begin{lemma}[Splitting] \label{BLwDSplitP}
If $ \Gamma  \vdash  \ottnt{a}  :^{  \ottnt{q_{{\mathrm{1}}}}  +  \ottnt{q_{{\mathrm{2}}}}  }  \ottnt{A} $, then there exists $\Gamma_{{\mathrm{1}}}$ and $\Gamma_{{\mathrm{2}}}$ such that $ \Gamma_{{\mathrm{1}}}  \vdash  \ottnt{a}  :^{ \ottnt{q_{{\mathrm{1}}}} }  \ottnt{A} $ and $ \Gamma_{{\mathrm{2}}}  \vdash  \ottnt{a}  :^{ \ottnt{q_{{\mathrm{2}}}} }  \ottnt{A} $ and $ \Gamma  =   \Gamma_{{\mathrm{1}}}  +  \Gamma_{{\mathrm{2}}}  $. 
\end{lemma}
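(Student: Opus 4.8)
The statement to prove is the Splitting Lemma for the PTS version with unrestricted usage ($\mathcal{Q}_{\mathbb{N}}^{\omega}$): if $\Gamma \vdash a :^{q_1 + q_2} A$, then there exist $\Gamma_1, \Gamma_2$ with $\Gamma_1 \vdash a :^{q_1} A$, $\Gamma_2 \vdash a :^{q_2} A$, and $\Gamma = \Gamma_1 + \Gamma_2$.

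The plan is to follow exactly the same strategy used to prove the earlier Splitting lemmas (Lemma~\ref{BLSSplit} and Lemma~\ref{BLDSplit}, i.e. Lemma~\ref{BLSSplitP} and Lemma~\ref{BLDSplitP}), namely to reduce splitting to factorization plus multiplication. First I would case-split on whether $q_1 + q_2 = 0$. If it is, then in $\mathcal{Q}_{\mathbb{N}}^{\omega}$ this forces $q_1 = q_2 = 0$ (since $0$ is still the only element whose self-sum is itself at the $0$ end; note $\omega + \omega = \omega \neq 0$), and I set $\Gamma_1 := 0 \cdot \Gamma$ and $\Gamma_2 := \Gamma$, so that $\Gamma_1 + \Gamma_2 = \Gamma$ because $0 \cdot \Gamma + \Gamma = \Gamma$ pointwise. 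Otherwise, I invoke the Factorization lemma for this setting --- Lemma~\ref{BLwDFactP}, which is stated and proved earlier in this very appendix section and already accounts for $\omega$ --- to obtain $\Gamma'$ with $\Gamma' \vdash a :^{1} A$ and $\Gamma <: (q_1 + q_2) \cdot \Gamma'$. Then I write $\Gamma = (q_1 + q_2) \cdot \Gamma' + \Gamma_0$ for some residual context $\Gamma_0$ (in $\mathbb{N}_{=}^{\omega}$ and $\mathcal{Q}_{\text{Lin}}$ the residual is forced to be a $\mathbf{0}$ vector, while in $\mathbb{N}_{\geq}^{\omega}$ and $\mathcal{Q}_{\text{Aff}}$ it may be nonzero; either way such a $\Gamma_0$ exists by definition of $<:$ on contexts).

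Next I apply the Multiplication lemma for this setting, Lemma~\ref{BLwDMultP}, twice: from $\Gamma' \vdash a :^{1} A$ I get $q_1 \cdot \Gamma' \vdash a :^{q_1} A$ and $q_2 \cdot \Gamma' \vdash a :^{q_2} A$. Then, to absorb the residual, I use distributivity of $+$ over $\cdot$ on contexts to note $(q_1 + q_2) \cdot \Gamma' = q_1 \cdot \Gamma' + q_2 \cdot \Gamma'$, and apply \rref{PTS-SubL} to the $q_2$ judgment to push the residual in: $q_2 \cdot \Gamma' + \Gamma_0 \vdash a :^{q_2} A$ (this is sound because $q_2 \cdot \Gamma' + \Gamma_0 <: q_2 \cdot \Gamma'$, using monotonicity of addition under $<:$ and the fact that in the discrete-order cases $\Gamma_0$ is zero). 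Finally I set $\Gamma_1 := q_1 \cdot \Gamma'$ and $\Gamma_2 := q_2 \cdot \Gamma' + \Gamma_0$, and check $\Gamma_1 + \Gamma_2 = q_1 \cdot \Gamma' + q_2 \cdot \Gamma' + \Gamma_0 = (q_1 + q_2) \cdot \Gamma' + \Gamma_0 = \Gamma$, which is exactly what was needed.

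The proof itself is short once Factorization and Multiplication are in hand, so the real work lives in those two lemmas, not here. The step I expect to warrant the most care --- and the only place $\omega$ causes a genuine subtlety --- is making sure the residual decomposition $\Gamma = (q_1 + q_2) \cdot \Gamma' + \Gamma_0$ behaves correctly across all four parametrizing semirings; in particular, in $\mathcal{Q}_{\text{Lin}}$ and $\mathbb{N}_{=}^{\omega}$ the order $<:$ is discrete enough that $<:$ collapses to equality on the relevant grades, so $\Gamma_0$ has a zero grade vector, whereas in the affine cases $\Gamma_0$ may be genuinely nonzero but is still harmless under \rref{PTS-SubL}. Since Lemma~\ref{BLwDFactP} was already proved with these cases in mind (its $q = \omega$ sub-cases use the $(\cdot)^{\{0,\omega\}}$ construction precisely for this reason), the splitting argument inherits the case analysis without needing to redo it. I would also remark, as the analogous earlier proofs do, that splitting is needed downstream to prove the Substitution lemma (Lemma~\ref{DSubst}), which is why \citet{atkey}'s counterexample over arbitrary semirings does not bite for the $\mathcal{Q}_{\mathbb{N}}^{\omega}$ family.
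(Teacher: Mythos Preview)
Your proposal is correct and follows essentially the same route as the paper's proof: case-split on $q_1+q_2=0$, otherwise invoke Factorization (Lemma~\ref{BLwDFactP}) and Multiplication (Lemma~\ref{BLwDMultP}), absorb the residual $\Gamma_0$ into $\Gamma_2$ via \rref{PTS-SubL}, and set $\Gamma_1 := q_1\cdot\Gamma'$, $\Gamma_2 := q_2\cdot\Gamma' + \Gamma_0$.

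One small imprecision worth flagging: in $\mathbb{N}_{=}^{\omega}$ and $\mathcal{Q}_{\text{Lin}}$ the order is \emph{not} discrete (since $\omega <: q$ for every $q$), so the residual $\Gamma_0$ need not have a zero grade vector; for instance if $\Gamma$ carries $\omega$ at a position where $(q_1+q_2)\cdot\Gamma'$ carries a finite grade, the residual there must be $\omega$. The paper handles this by explicitly \emph{choosing} $\Gamma_0$ so that $\Gamma_0 = \Gamma_0^{\{0,\omega\}}$, i.e.\ all its grades lie in $\{0,\omega\}$; this choice is always available and is exactly what makes $q_2\cdot\Gamma' + \Gamma_0 <: q_2\cdot\Gamma'$ go through in those two semirings. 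You gesture at this construction in your final paragraph, so the idea is there, but your earlier sentence overstates the discreteness.
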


\begin{proof}
If $ \ottnt{q_{{\mathrm{1}}}}  +  \ottnt{q_{{\mathrm{2}}}}  = 0$, then $\Gamma_{{\mathrm{1}}} :=   0   \cdot  \Gamma $ and $\Gamma_{{\mathrm{2}}} := \Gamma$.\\
Otherwise, by lemma \ref{BLwDFactP}, $\exists \Gamma'$ such that $ \Gamma'  \vdash  \ottnt{a}  :^{  1  }  \ottnt{A} $ and $ \Gamma  <:    (   \ottnt{q_{{\mathrm{1}}}}  +  \ottnt{q_{{\mathrm{2}}}}   )   \cdot  \Gamma'  $.\\
Then, $\Gamma =    (   \ottnt{q_{{\mathrm{1}}}}  +  \ottnt{q_{{\mathrm{2}}}}   )   \cdot  \Gamma'   +  \Gamma_{{\mathrm{0}}} $ for some $\Gamma_{{\mathrm{0}}}$. In case of $ \mathbb{N}_{=}^{\omega} $ and $ \mathcal{Q}_{\text{Lin} } $, we choose $\Gamma_{{\mathrm{0}}}$ such that $\Gamma_{{\mathrm{0}}} =  \Gamma_{{\mathrm{0}}} ^{ \{ 0 , \omega \} } $.\\
Now, by lemma \ref{BLwDMultP}, $  \ottnt{q_{{\mathrm{1}}}}  \cdot  \Gamma'   \vdash  \ottnt{a}  :^{ \ottnt{q_{{\mathrm{1}}}} }  \ottnt{A} $ and $  \ottnt{q_{{\mathrm{2}}}}  \cdot  \Gamma'   \vdash  \ottnt{a}  :^{ \ottnt{q_{{\mathrm{2}}}} }  \ottnt{A} $.\\
Next, $   \ottnt{q_{{\mathrm{2}}}}  \cdot  \Gamma'   +  \Gamma_{{\mathrm{0}}}   <:   \ottnt{q_{{\mathrm{2}}}}  \cdot  \Gamma'  $. Then, by \rref{PTS-SubL}, $   \ottnt{q_{{\mathrm{2}}}}  \cdot  \Gamma'   +  \Gamma_{{\mathrm{0}}}   \vdash  \ottnt{a}  :^{ \ottnt{q_{{\mathrm{2}}}} }  \ottnt{A} $.\\
The lemma follows by setting $\Gamma_{{\mathrm{1}}} :=  \ottnt{q_{{\mathrm{1}}}}  \cdot  \Gamma' $ and $\Gamma_{{\mathrm{2}}} :=   \ottnt{q_{{\mathrm{2}}}}  \cdot  \Gamma'   +  \Gamma_{{\mathrm{0}}} $.
\end{proof}

%-------------------------------------------------------------------------------------------

\begin{lemma}[Weakening] \label{BLwDWeakP}
If $  \Gamma_{{\mathrm{1}}}  ,  \Gamma_{{\mathrm{2}}}   \vdash  \ottnt{a}  :^{ \ottnt{q} }  \ottnt{A} $ and $ \Delta_{{\mathrm{1}}}  \vdash_{0}  \ottnt{C}  :   \ottmv{s}  $ and $  \lfloor  \Gamma_{{\mathrm{1}}}  \rfloor   =  \Delta_{{\mathrm{1}}} $, then $    \Gamma_{{\mathrm{1}}}  ,   \ottmv{z}  :^{  0  }  \ottnt{C}     ,  \Gamma_{{\mathrm{2}}}   \vdash  \ottnt{a}  :^{ \ottnt{q} }  \ottnt{A} $.
\end{lemma}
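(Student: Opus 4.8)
The plan is to prove this by a straightforward induction on the derivation of $\Gamma_{{\mathrm{1}}}, \Gamma_{{\mathrm{2}}} \vdash a :^q A$, mirroring the proof of the PTS weakening lemma without unrestricted usage (Lemma~\ref{BLDWeakP}). The insertion of a fresh assumption $z :^0 C$ is transparent to essentially every rule: context addition, scalar multiplication, and the order relation all act pointwise, and since $0$ is an additive and (almost) multiplicative annihilator, a $0$-graded slot contributes nothing to any split, sum, or scaling of contexts. Thus in the variable, application, $\Pi$, $\Sigma$, pair, let-pair, let-unit, sum, injection, case, conversion, and subsumption cases one simply places $z :^0 C$ in the appropriate position --- after all of $\Gamma_{{\mathrm{1}}}$, before $\Gamma_{{\mathrm{2}}}$, and before any variable bound by the rule --- applies the induction hypothesis to each premise, and re-applies the rule.

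The only genuinely new case compared to Lemma~\ref{BLDWeakP} is \rref{PTS-LamOmega}, whose conclusion context has the shape $\ottnt{q_{{\mathrm{0}}}} \cdot (\Gamma_{{\mathrm{1}}}, \Gamma_{{\mathrm{2}}})$ and which carries the side conditions $ \ottnt{q} =  \omega  \Rightarrow  \ottnt{r} =  \omega $ and $ \ottnt{q_{{\mathrm{0}}}} \neq  0 $. Here I would apply the induction hypothesis to the premise $\Gamma_{{\mathrm{1}}}, \Gamma_{{\mathrm{2}}}, \ottmv{x} :^{\ottnt{q} \cdot \ottnt{r}} \ottnt{A'} \vdash \ottnt{b} :^{\ottnt{q}} \ottnt{B}$ (with $\ottmv{x}$ fresh), obtaining $\Gamma_{{\mathrm{1}}}, z :^0 C, \Gamma_{{\mathrm{2}}}, \ottmv{x} :^{\ottnt{q} \cdot \ottnt{r}} \ottnt{A'} \vdash \ottnt{b} :^{\ottnt{q}} \ottnt{B}$, and then re-apply \rref{PTS-LamOmega}. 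This goes through because neither side condition mentions the typing context, and because $\ottnt{q_{{\mathrm{0}}}} \cdot 0 = 0$ in every parametrizing semiring $\mathcal{Q}_{\mathbb{N}}^{\omega}$, so the new slot in the conclusion context $\ottnt{q_{{\mathrm{0}}}} \cdot (\Gamma_{{\mathrm{1}}}, z :^0 C, \Gamma_{{\mathrm{2}}})$ is again graded $0$, as required. (Alternatively one may replace \rref{PTS-LamOmega} by the two simpler rules it is equivalent to and argue analogously.)

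The one point requiring care --- though not a real obstacle --- is the treatment of the well-formedness side conditions of the form $\Delta' \vdash_0 D : s'$ appearing in \rref{PTS-Var}, \rref{PTS-Weak}, \rref{PTS-LetPair}, and \rref{PTS-Case}: when $\Delta'$ is a proper extension of $\Delta_{{\mathrm{1}}}$, one must first weaken that judgment to insert $z : C$ at the correct place. This is itself an instance of the statement being proved, specialized to $\ottnt{q} = 0$ with an empty $\Gamma_{{\mathrm{2}}}$-part, so the induction is most cleanly organized as a single induction on derivations covering the $\vdash_0$ judgments simultaneously (equivalently, one invokes the already-established weakening for $\vdash_0$). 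Beyond this bookkeeping the proof is entirely routine, and I expect no part of it to present a substantive difficulty: all of the nontrivial semiring reasoning for the $\omega$-annotated rules has already been isolated in the multiplication, independence, and factorization lemmas (Lemmas~\ref{BLwDMultP}, \ref{ind}, and \ref{BLwDFactP}), none of which is needed here.
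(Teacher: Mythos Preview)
Your proposal is correct and takes essentially the same approach as the paper, which simply records ``By induction on $\Gamma_{{\mathrm{1}}}, \Gamma_{{\mathrm{2}}} \vdash a :^{q} A$.'' Your elaboration of the \rref{PTS-LamOmega} case and the handling of the $\vdash_0$ side conditions is accurate, but the paper regards all of this as routine and omits the details entirely.
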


\begin{proof}
By induction on $  \Gamma_{{\mathrm{1}}}  ,  \Gamma_{{\mathrm{2}}}   \vdash  \ottnt{a}  :^{ \ottnt{q} }  \ottnt{A} $.
\end{proof}

%---------------------------------------------------------------------------------------------------

\begin{lemma}[Substitution] \label{BLwDSubstP}
If $    \Gamma_{{\mathrm{1}}}  ,   \ottmv{z}  :^{ \ottnt{r_{{\mathrm{0}}}} }  \ottnt{C}     ,  \Gamma_{{\mathrm{2}}}   \vdash  \ottnt{a}  :^{ \ottnt{q} }  \ottnt{A} $ and $ \Gamma  \vdash  \ottnt{c}  :^{ \ottnt{r_{{\mathrm{0}}}} }  \ottnt{C} $ and $  \lfloor  \Gamma_{{\mathrm{1}}}  \rfloor   =   \lfloor  \Gamma  \rfloor  $, then $     \Gamma_{{\mathrm{1}}}  +  \Gamma    ,  \Gamma_{{\mathrm{2}}}   \{  \ottnt{c}  /  \ottmv{z}  \}   \vdash   \ottnt{a}  \{  \ottnt{c}  /  \ottmv{z}  \}   :^{ \ottnt{q} }   \ottnt{A}  \{  \ottnt{c}  /  \ottmv{z}  \}  $. 
\end{lemma}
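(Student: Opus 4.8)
The statement is the substitution lemma for the PTS version of LDC extended with unrestricted usage, i.e.\ with \rref{PTS-LamOmega} in place of \rref{PTS-Lam}. The plan is to proceed by induction on the derivation of $    \Gamma_{{\mathrm{1}}}  ,   \ottmv{z}  :^{ \ottnt{r_{{\mathrm{0}}}} }  \ottnt{C}     ,  \Gamma_{{\mathrm{2}}}   \vdash  \ottnt{a}  :^{ \ottnt{q} }  \ottnt{A} $, following the structure of the proof of Lemma~\ref{BLDSubstP} (the substitution lemma for LDC($ \mathcal{Q}_{\mathbb{N} } $) in the PTS setting), since the two calculi differ only in the $\lambda$-rule. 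For every rule whose premises split the context additively (\rref{PTS-Pi}, \rref{PTS-App}, \rref{PTS-Pair}, \rref{PTS-LetPair}, \rref{PTS-Case}, \rref{PTS-Sum}), I would first invoke the splitting lemma for the unrestricted setting, Lemma~\ref{BLwDSplitP}, to decompose $\Gamma$ into two parts that can derive $\ottnt{c}$ at the two sub-grades $\ottnt{r_{{\mathrm{01}}}}$ and $\ottnt{r_{{\mathrm{02}}}}$, then apply the induction hypothesis on each premise, then re-assemble with the same rule, relying on the standard substitution commutation identities for types (e.g.\ $  \ottnt{B}  \{  \ottnt{a}  /  \ottmv{x}  \}   \{  \ottnt{c}  /  \ottmv{z}  \}  =   \ottnt{B}  \{  \ottnt{c}  /  \ottmv{z}  \}   \{   \ottnt{a}  \{  \ottnt{c}  /  \ottmv{z}  \}   /  \ottmv{x}  \} $). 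The \rref{PTS-Var}, \rref{PTS-Weak}, \rref{PTS-Conv} cases are handled exactly as in Lemma~\ref{BLDSubstP}, using Lemma~\ref{BLwDWeakP} for the weakening sub-case and noting that $\beta$-equality is stable under substitution; the \rref{PTS-SubL} and \rref{PTS-SubR} cases are routine, using \rref{PTS-SubR} to lift $\ottnt{c}$'s grade where needed.

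\textbf{The main obstacle} will be the \rref{PTS-LamOmega} case. Here the conclusion has the form $  \ottnt{q_{{\mathrm{0}}}}  \cdot   (      \Gamma_{{\mathrm{1}}}  ,   \ottmv{z}  :^{ \ottnt{r_{{\mathrm{0}}}} }  \ottnt{C}     ,  \Gamma_{{\mathrm{2}}}    )    \vdash   \lambda^{ \ottnt{r} }  \ottmv{x}  :  \ottnt{A}  .  \ottnt{b}   :^{  \ottnt{q_{{\mathrm{0}}}}  \cdot  \ottnt{q}  }   \Pi  \ottmv{x}  :^{ \ottnt{r} } \!  \ottnt{A}  .  \ottnt{B}  $ with premise $      \Gamma_{{\mathrm{1}}}  ,   \ottmv{z}  :^{ \ottnt{r_{{\mathrm{0}}}} }  \ottnt{C}     ,    \Gamma_{{\mathrm{2}}}  ,   \ottmv{x}  :^{  \ottnt{q}  \cdot  \ottnt{r}  }  \ottnt{A}      \vdash  \ottnt{b}  :^{ \ottnt{q} }  \ottnt{B} $ and side conditions $  \ottnt{q}  =   \omega    \Rightarrow   \ottnt{r}  =   \omega   $ and $ \ottnt{q_{{\mathrm{0}}}}  \neq   0  $. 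The difficulty is that the substitute $\ottnt{c}$ is provided at grade $\ottnt{r_{{\mathrm{0}}}}$, but in the conclusion the $\ottmv{z}$-assumption is held at $ \ottnt{q_{{\mathrm{0}}}}  \cdot  \ottnt{r_{{\mathrm{0}}}} $, not at $\ottnt{r_{{\mathrm{0}}}}$; the premise, meanwhile, holds $\ottmv{z}$ at $\ottnt{r_{{\mathrm{0}}}}$. So after applying the induction hypothesis to the premise (which needs $\ottnt{c}$ at grade $\ottnt{r_{{\mathrm{0}}}}$, available by hypothesis) to obtain $     \Gamma_{{\mathrm{1}}}  +  \Gamma    ,      \Gamma_{{\mathrm{2}}}  \{  \ottnt{c}  /  \ottmv{z}  \}   ,   \ottmv{x}  :^{  \ottnt{q}  \cdot  \ottnt{r}  }   \ottnt{A}  \{  \ottnt{c}  /  \ottmv{z}  \}      \vdash   \ottnt{b}  \{  \ottnt{c}  /  \ottmv{z}  \}   :^{ \ottnt{q} }   \ottnt{B}  \{  \ottnt{c}  /  \ottmv{z}  \}  $, I still need to reach a context of the form $ \ottnt{q_{{\mathrm{0}}}}  \cdot   (     \Gamma_{{\mathrm{1}}}  +  \Gamma'    ,  \Gamma_{{\mathrm{2}}}   \{  \ottnt{c}  /  \ottmv{z}  \}  )  $ for some $\Gamma'$ derived at $\ottnt{r_{{\mathrm{0}}}}$, so that \rref{PTS-LamOmega} applies and the result reduces $\ottnt{c}$ at the correct aggregate grade $ \ottnt{q_{{\mathrm{0}}}}  \cdot  \ottnt{r_{{\mathrm{0}}}} $. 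The resolution, mirroring what the simply-typed and non-omega PTS proofs do, is to first peel off the $\ottnt{q_{{\mathrm{0}}}}$-scaling: the premise of \rref{PTS-LamOmega} is stated over the unscaled context, and the conclusion is the premise scaled by $\ottnt{q_{{\mathrm{0}}}}$ and re-checked. Since $ \ottnt{q_{{\mathrm{0}}}}  \neq   0  $, I can split the incoming hypothesis $ \Gamma  \vdash  \ottnt{c}  :^{ \ottnt{r_{{\mathrm{0}}}} }  \ottnt{C} $ appropriately — but the cleanest route is to observe that the outer derivation's context is literally $  \ottnt{q_{{\mathrm{0}}}}  \cdot  \Gamma_{{\mathrm{1}}}   ,   \ottmv{z}  :^{  \ottnt{q_{{\mathrm{0}}}}  \cdot  \ottnt{r_{{\mathrm{0}}}}  }  \ottnt{C}  , \dots$, so the $\ottnt{r_{{\mathrm{0}}}}$ in the lemma statement is actually the $\ottnt{q_{{\mathrm{0}}}}$-scaled grade; I apply the IH to the \emph{unscaled} premise with $\ottnt{c}$ at grade $\ottnt{r_{{\mathrm{0}}}}$ (the unscaled grade, call it $\ottnt{r'_{{\mathrm{0}}}}$, with $ \ottnt{r_{{\mathrm{0}}}}  =   \ottnt{q_{{\mathrm{0}}}}  \cdot  \ottnt{r'_{{\mathrm{0}}}}  $), obtained from the given $ \Gamma  \vdash  \ottnt{c}  :^{ \ottnt{r_{{\mathrm{0}}}} }  \ottnt{C} $ by Lemma~\ref{BLwDFactP} (factorization) to get $\Gamma'$ with $ \Gamma'  \vdash  \ottnt{c}  :^{  1  }  \ottnt{C} $ and $ \Gamma  <:   \ottnt{r_{{\mathrm{0}}}}  \cdot  \Gamma'  $, then by Lemma~\ref{BLwDMultP} get $ \Gamma'  \vdash  \ottnt{c}  :^{ \ottnt{r'_{{\mathrm{0}}}} }  \ottnt{C} $-analogue at the unscaled grade; after the IH, re-scale by $\ottnt{q_{{\mathrm{0}}}}$ using Lemma~\ref{BLwDMultP}, restore the original context via \rref{PTS-SubL} using $ \Gamma  <:   \ottnt{r_{{\mathrm{0}}}}  \cdot  \Gamma'  $, and conclude with \rref{PTS-LamOmega}, whose side condition $  \ottnt{q}  =   \omega    \Rightarrow   \ottnt{r}  =   \omega   $ is carried over unchanged from the premise (it does not mention the substituted variable).

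\textbf{Remaining remarks.} Apart from the $\lambda$-case, no genuinely new ideas beyond those in Lemma~\ref{BLDSubstP} are needed: the splitting lemma, multiplication lemma, and weakening lemma all have unrestricted-usage analogues already established earlier (Lemmas~\ref{BLwDSplitP}, \ref{BLwDMultP}, \ref{BLwDWeakP}), and the type-level substitution bookkeeping is identical. One detail worth double-checking is that in \rref{PTS-App} and \rref{PTS-Pair}, when the annotation $\ottnt{r}$ is $ \omega $ and we are in $ \mathbb{N}_{\geq}^{\omega} $, the splitting lemma still yields the right additive decomposition of $\ottnt{c}$'s grade — but since $  \omega   +  \ottnt{q}   =   \omega  $ for all $\ottnt{q}$, Lemma~\ref{BLwDSplitP} handles this uniformly. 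I therefore expect the proof to go through by a straightforward but lengthy case analysis, with the \rref{PTS-LamOmega} case being the only place where the extra scaling argument (factorization followed by re-multiplication) is required, exactly as flagged above.
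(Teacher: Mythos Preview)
Your plan matches the paper's proof: induction on the typing derivation, with every case except \rref{PTS-LamOmega} handled as in Lemma~\ref{BLDSubstP}, and the \rref{PTS-LamOmega} case handled by factorizing $\Gamma$ (Lemma~\ref{BLwDFactP}) to get $\Gamma' \vdash c :^{1} C$, multiplying to obtain $c$ at the \emph{unscaled} grade, applying the IH to the unscaled premise, re-applying \rref{PTS-LamOmega} to restore the $q_{0}$-scaling, and finishing with \rref{PTS-SubL}.

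One concrete omission: factorization requires a nonzero grade, so your argument as written does not cover the sub-case where the unscaled grade of $z$ is $0$ (equivalently, the scaled grade $q_{0}\cdot r_{0}$ is $0$). The paper treats this separately: when $r_{0}=0$, use Lemma~\ref{BLwDMultP} to get $0\cdot\Gamma \vdash c :^{0} C$, apply the IH directly, then \rref{PTS-LamOmega} and \rref{PTS-SubL}. Also, your exposition conflates the scaled and unscaled grades (both called $r_{0}$ at different points); keeping them straight, as the paper does, makes the argument cleaner.
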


\begin{proof}
By induction on $    \Gamma_{{\mathrm{1}}}  ,   \ottmv{z}  :^{ \ottnt{r_{{\mathrm{0}}}} }  \ottnt{C}     ,  \Gamma_{{\mathrm{2}}}   \vdash  \ottnt{a}  :^{ \ottnt{q} }  \ottnt{A} $. All the cases other than \rref{PTS-LamOmega} are similar to those of lemma \ref{BLDSubstP}.

\begin{itemize}
\item \Rref{PTS-LamOmega}. Have: $    \ottnt{q_{{\mathrm{0}}}}  \cdot  \Gamma_{{\mathrm{1}}}   ,   \ottmv{z}  :^{  \ottnt{q_{{\mathrm{0}}}}  \cdot  \ottnt{r_{{\mathrm{0}}}}  }  \ottnt{C}    ,   \ottnt{q_{{\mathrm{0}}}}  \cdot  \Gamma_{{\mathrm{2}}}    \vdash   \lambda^{ \ottnt{r} }  \ottmv{x}  :  \ottnt{A}  .  \ottnt{b}   :^{  \ottnt{q_{{\mathrm{0}}}}  \cdot  \ottnt{q}  }   \Pi  \ottmv{x}  :^{ \ottnt{r} } \!  \ottnt{A}  .  \ottnt{B}  $ where $    \Gamma_{{\mathrm{1}}}  ,   \ottmv{z}  :^{ \ottnt{r_{{\mathrm{0}}}} }  \ottnt{C}    ,  \Gamma_{{\mathrm{2}}}   ,   \ottmv{x}  :^{  \ottnt{q}  \cdot  \ottnt{r}  }  \ottnt{A}    \vdash  \ottnt{b}  :^{ \ottnt{q} }  \ottnt{B} $ and $  \ottnt{q}  =   \omega    \Rightarrow   \ottnt{r}  =   \omega   $ and $ \ottnt{q_{{\mathrm{0}}}}  \neq   0  $.\\
Further, $ \Gamma  \vdash  \ottnt{c}  :^{  \ottnt{q_{{\mathrm{0}}}}  \cdot  \ottnt{r_{{\mathrm{0}}}}  }  \ottnt{C} $ where $  \lfloor  \Gamma  \rfloor   =   \lfloor  \Gamma_{{\mathrm{1}}}  \rfloor  $.\\
There are two cases to consider:
\begin{itemize}

\item $ \ottnt{r_{{\mathrm{0}}}}  =   0  $. Then, $   0   \cdot  \Gamma   \vdash  \ottnt{c}  :^{  0  }  \ottnt{C} $.\\
By IH, $     \Gamma_{{\mathrm{1}}}  ,  \Gamma_{{\mathrm{2}}}   \{  \ottnt{c}  /  \ottmv{z}  \}   ,   \ottmv{x}  :^{  \ottnt{q}  \cdot  \ottnt{r}  }  \ottnt{A}    \{  \ottnt{c}  /  \ottmv{z}  \}   \vdash   \ottnt{b}  \{  \ottnt{c}  /  \ottmv{z}  \}   :^{ \ottnt{q} }   \ottnt{B}  \{  \ottnt{c}  /  \ottmv{z}  \}  $.\\
This case, then, follows by \rref{PTS-LamOmega,PTS-SubL}.

\item $ \ottnt{r_{{\mathrm{0}}}}  \neq   0  $. Then $  \ottnt{q_{{\mathrm{0}}}}  \cdot  \ottnt{r_{{\mathrm{0}}}}   \neq   0  $.\\
By the factorization lemma, $\exists \Gamma'$ such that $ \Gamma'  \vdash  \ottnt{c}  :^{  1  }  \ottnt{C} $ where $ \Gamma  <:    \ottnt{q_{{\mathrm{0}}}}  \cdot  \ottnt{r_{{\mathrm{0}}}}   \cdot  \Gamma'  $.\\
By the multiplication lemma, $  \ottnt{r_{{\mathrm{0}}}}  \cdot  \Gamma'   \vdash  \ottnt{c}  :^{ \ottnt{r_{{\mathrm{0}}}} }  \ottnt{C} $.\\
By IH, $      \Gamma_{{\mathrm{1}}}  +   \ottnt{r_{{\mathrm{0}}}}  \cdot  \Gamma'    ,  \Gamma_{{\mathrm{2}}}   \{  \ottnt{c}  /  \ottmv{z}  \}   ,   \ottmv{x}  :^{  \ottnt{q}  \cdot  \ottnt{r}  }  \ottnt{A}    \{  \ottnt{c}  /  \ottmv{z}  \}   \vdash   \ottnt{b}  \{  \ottnt{c}  /  \ottmv{z}  \}   :^{ \ottnt{q} }   \ottnt{B}  \{  \ottnt{c}  /  \ottmv{z}  \}  $.\\
By \rref{PTS-LamOmega}, $     \ottnt{q_{{\mathrm{0}}}}  \cdot  \Gamma_{{\mathrm{1}}}   +    \ottnt{q_{{\mathrm{0}}}}  \cdot  \ottnt{r_{{\mathrm{0}}}}   \cdot  \Gamma'    ,   \ottnt{q_{{\mathrm{0}}}}  \cdot  \Gamma_{{\mathrm{2}}}    \{  \ottnt{c}  /  \ottmv{z}  \}   \vdash    \lambda^{ \ottnt{r} }  \ottmv{x}  :   \ottnt{A}  \{  \ottnt{c}  /  \ottmv{z}  \}   .  \ottnt{b}   \{  \ottnt{c}  /  \ottmv{z}  \}   :^{  \ottnt{q_{{\mathrm{0}}}}  \cdot  \ottnt{q}  }    \Pi  \ottmv{x}  :^{ \ottnt{r} } \!   \ottnt{A}  \{  \ottnt{c}  /  \ottmv{z}  \}   .  \ottnt{B}   \{  \ottnt{c}  /  \ottmv{z}  \}  $.\\
This case, then, follows by \rref{PTS-SubL}.

\end{itemize}

\end{itemize}
\end{proof}

%---------------------------------------------------------------------------------------------------

\begin{lemma}[Lambda Inversion] \label{lambdainv}
If $ \Gamma  \vdash   \lambda^{ \ottnt{r} }  \ottmv{x}  :  \ottnt{A}  .  \ottnt{b}   :^{ \ottnt{q} }  \ottnt{C} $, then $\exists \ottnt{A'} , \ottnt{B'}, \Gamma_{{\mathrm{0}}}, \ottnt{q_{{\mathrm{0}}}}, \ottnt{q_{{\mathrm{1}}}}$ such that:
\begin{itemize}
\item $  \Gamma_{{\mathrm{0}}}  ,   \ottmv{x}  :^{  \ottnt{q_{{\mathrm{0}}}}  \cdot  \ottnt{r}  }  \ottnt{A'}    \vdash  \ottnt{b}  :^{ \ottnt{q_{{\mathrm{0}}}} }  \ottnt{B'} $ 
\item $ \Gamma  <:   \ottnt{q_{{\mathrm{1}}}}  \cdot  \Gamma_{{\mathrm{0}}}  $ and $  \ottnt{q_{{\mathrm{1}}}}  \cdot  \ottnt{q_{{\mathrm{0}}}}   <:  \ottnt{q} $
\item $  \ottnt{q_{{\mathrm{0}}}}  =   \omega    \Rightarrow   \ottnt{r}  =   \omega   $ and $ \ottnt{q_{{\mathrm{1}}}}  \neq   0  $
\item  $ \ottnt{A}  =_{\beta}  \ottnt{A'} $ and $ \ottnt{C}  =_{\beta}   \Pi  \ottmv{x}  :^{ \ottnt{r} } \!  \ottnt{A'}  .  \ottnt{B'}  $ and $ \Delta  \vdash_{0}  \ottnt{C}  :   \ottmv{s}  $, where $ \Delta  =   \lfloor  \Gamma  \rfloor  $.
\end{itemize}
\end{lemma}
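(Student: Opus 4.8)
The plan is to prove this by induction on the derivation of $ \Gamma \vdash \lambda^{ \ottnt{r} } \ottmv{x} : \ottnt{A} . \ottnt{b} :^{ \ottnt{q} } \ottnt{C} $. Since the subject is a $\lambda$-abstraction, only five rules can have concluded this judgment: the lambda rule \rref{PTS-LamOmega}, the conversion rule \rref{PTS-Conv}, the weakening rule \rref{PTS-Weak}, and the two subsumption rules \rref{PTS-SubL} and \rref{PTS-SubR}. (Every other rule introduces a different head form, and \rref{PTS-Pi} concludes a type.) So the case analysis is short, and in the inductive cases the witnesses are essentially inherited from the IH and patched using transitivity of $=_{\beta}$ and of $ <: $.

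For the base case \rref{PTS-LamOmega}, after renaming the rule's metavariables to match the statement, the derivation ends with $ \ottnt{q_{{\mathrm{1}}}} \cdot \Gamma_{{\mathrm{0}}} \vdash \lambda^{ \ottnt{r} } \ottmv{x} : \ottnt{A'} . \ottnt{b} :^{ \ottnt{q_{{\mathrm{1}}}} \cdot \ottnt{q_{{\mathrm{0}}}} } \Pi \ottmv{x} :^{ \ottnt{r} } \! \ottnt{A'} . \ottnt{B'} $, derived from $ \Gamma_{{\mathrm{0}}} , \ottmv{x} :^{ \ottnt{q_{{\mathrm{0}}}} \cdot \ottnt{r} } \ottnt{A'} \vdash \ottnt{b} :^{ \ottnt{q_{{\mathrm{0}}}} } \ottnt{B'} $, with side conditions $ \ottnt{q_{{\mathrm{0}}}} = \omega \Rightarrow \ottnt{r} = \omega $ and $ \ottnt{q_{{\mathrm{1}}}} \neq 0 $, and with $\ottnt{A} = \ottnt{A'}$, $\ottnt{C} = \Pi \ottmv{x} :^{ \ottnt{r} } \! \ottnt{A'} . \ottnt{B'}$. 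Then $\ottnt{A'}, \ottnt{B'}, \Gamma_{{\mathrm{0}}}, \ottnt{q_{{\mathrm{0}}}}, \ottnt{q_{{\mathrm{1}}}}$ witness the claim: the typing premise is the first bullet verbatim; the context and grade inequalities hold by reflexivity of $ <: $; the side conditions are the third bullet; and $ \ottnt{A} =_{\beta} \ottnt{A'} $, $ \ottnt{C} =_{\beta} \Pi \ottmv{x} :^{ \ottnt{r} } \! \ottnt{A'} . \ottnt{B'} $ hold trivially. The only obligation requiring thought is $ \Delta \vdash_{0} \ottnt{C} : \ottmv{s} $; I expect this to come directly from the codomain-well-formedness premise that \rref{PTS-LamOmega} carries (as is standard for a PTS $\lambda$-rule), or else from a regularity lemma for the calculus, proved by a separate routine induction.

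For the inductive cases the work is routine. In \rref{PTS-Conv} ($\ottnt{C}$ replaced by a $\beta$-equal $\ottnt{C''}$), the IH gives witnesses with $ \ottnt{C''} =_{\beta} \Pi \ottmv{x} :^{ \ottnt{r} } \! \ottnt{A'} . \ottnt{B'} $, so transitivity yields $ \ottnt{C} =_{\beta} \Pi \ottmv{x} :^{ \ottnt{r} } \! \ottnt{A'} . \ottnt{B'} $, while $ \Delta \vdash_{0} \ottnt{C} : \ottmv{s} $ is the conversion rule's own premise. In \rref{PTS-SubL} (larger context $\Gamma'$ with $ \Gamma <: \Gamma' $), the IH gives $ \Gamma' <: \ottnt{q_{{\mathrm{1}}}} \cdot \Gamma_{{\mathrm{0}}} $, hence $ \Gamma <: \ottnt{q_{{\mathrm{1}}}} \cdot \Gamma_{{\mathrm{0}}} $; the well-typedness of $\ottnt{C}$ transfers unchanged since $ \lfloor \Gamma \rfloor = \lfloor \Gamma' \rfloor $. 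In \rref{PTS-SubR} (grade upgraded from $\ottnt{q}$ to $\ottnt{q'}$), transitivity gives $ \ottnt{q_{{\mathrm{1}}}} \cdot \ottnt{q_{{\mathrm{0}}}} <: \ottnt{q} <: \ottnt{q'} $. In \rref{PTS-Weak} (new $0$-graded assumption $\ottmv{y}$), one extends the witnessing context to $ \Gamma_{{\mathrm{0}}} , \ottmv{y} :^{ 0 } \ottnt{B} $, re-derives the body judgment via the weakening lemma (Lemma \ref{BLwDWeakP}), checks $ 0 <: \ottnt{q_{{\mathrm{1}}}} \cdot 0 $, and weakens $ \lfloor \Gamma \rfloor \vdash_{0} \ottnt{C} : \ottmv{s} $ using the rule's well-formedness premise for $\ottnt{B}$.

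I expect the main obstacle to be bookkeeping rather than depth: keeping the renaming in the \rref{PTS-LamOmega} case consistent (the rule scales both the context and the conclusion grade by a single factor, which is exactly what $\ottnt{q_{{\mathrm{1}}}}$ abstracts, while $\ottnt{q_{{\mathrm{0}}}}$ is the grade at which the body is actually checked), and pinning down the $ \Delta \vdash_{0} \ottnt{C} : \ottmv{s} $ clause — if the lambda rule does not itself carry codomain well-formedness, this clause is what forces us to first prove regularity. Everything else reduces to transitivity of the preorder and of $\beta$-equality, plus one application of weakening.
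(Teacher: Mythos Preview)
Your proposal is correct and takes essentially the same approach as the paper, which simply states ``By induction on $\Gamma \vdash \lambda^{r} x : A . b :^{q} C$'' without further detail. Your case analysis and the handling of each case (reading off witnesses in \rref{PTS-LamOmega}, threading transitivity through \rref{PTS-Conv}, \rref{PTS-SubL}, \rref{PTS-SubR}, and extending $\Gamma_0$ by a $0$-graded assumption in \rref{PTS-Weak}) are exactly what the induction unpacks to; your caveat about $\Delta \vdash_0 C : s$ possibly relying on a regularity argument or a well-formedness premise of the lambda rule is the only point left implicit, and it is a fair one.
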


\begin{proof}
By induction on $ \Gamma  \vdash   \lambda^{ \ottnt{r} }  \ottmv{x}  :  \ottnt{A}  .  \ottnt{b}   :^{ \ottnt{q} }  \ottnt{C} $.
\end{proof}

%----------------------------------------------------------------------------------------------------

\begin{theorem}[Preservation] \label{Dwpreserve}
If $ \Gamma  \vdash  \ottnt{a}  :^{ \ottnt{q} }  \ottnt{A} $ and $ \vdash  \ottnt{a}  \leadsto  \ottnt{a'} $, then $ \Gamma  \vdash  \ottnt{a'}  :^{ \ottnt{q} }  \ottnt{A} $.
\end{theorem}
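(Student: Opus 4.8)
The plan is to prove Preservation for the $\omega$-extended calculus by induction on the typing derivation $\Gamma \vdash \ottnt{a} :^{\ottnt{q}} \ottnt{A}$, followed by inversion on the step relation $\vdash \ottnt{a} \leadsto \ottnt{a'}$, exactly mirroring the structure of Theorem \ref{Dpreserve}. For every rule other than the lambda-introduction rules, the proof of Theorem \ref{Dpreserve} goes through verbatim, since the step rules are unchanged and the only modification to the type system is replacing \rref{PTS-Lam} with \rref{PTS-LamOmega}. So the real work is confined to the cases where a $\lambda$-abstraction is eliminated, i.e. the \rref{PTS-App} case where the function has stepped to a value and we must handle the $\beta$-redex $(\lambda^{\ottnt{r}} \ottmv{x} : \ottnt{A'} . \ottnt{b'}) \: \ottnt{a}^{\ottnt{r}} \leadsto \ottnt{b'} \{ \ottnt{a} / \ottmv{x} \}$.

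For that case, I would first apply IH and \rref{PTS-App} congruence handling to dispatch the subcase where the function steps internally. For the $\beta$ subcase, instead of using plain inversion on the lambda typing judgment (which no longer has the simple shape it had under \rref{PTS-Lam}), I would invoke the Lambda Inversion lemma (Lemma \ref{lambdainv}) on $\Gamma_{{\mathrm{1}}} \vdash \lambda^{\ottnt{r}} \ottmv{x} : \ottnt{A'} . \ottnt{b'} :^{\ottnt{q}} \Pi \ottmv{x} :^{\ottnt{r}} \! \ottnt{A} . \ottnt{B}$. This gives $\Gamma_{{\mathrm{0}}}, \ottmv{x} :^{\ottnt{q_{{\mathrm{0}}}} \cdot \ottnt{r}} \ottnt{A''} \vdash \ottnt{b'} :^{\ottnt{q_{{\mathrm{0}}}}} \ottnt{B'}$ together with $\Gamma_{{\mathrm{1}}} <: \ottnt{q_{{\mathrm{1}}}} \cdot \Gamma_{{\mathrm{0}}}$, $\ottnt{q_{{\mathrm{1}}}} \cdot \ottnt{q_{{\mathrm{0}}}} <: \ottnt{q}$, $\ottnt{q_{{\mathrm{1}}}} \neq 0$, and the $\beta$-equalities $\ottnt{A} =_{\beta} \ottnt{A''}$, $\Pi \ottmv{x} :^{\ottnt{r}} \! \ottnt{A} . \ottnt{B} =_{\beta} \Pi \ottmv{x} :^{\ottnt{r}} \! \ottnt{A''} . \ottnt{B'}$. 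Then I would case-split on whether $\ottnt{q_{{\mathrm{0}}}} = 0$: if so, $\ottnt{q} = 0$ (since $\ottnt{q_{{\mathrm{1}}}} \cdot \ottnt{q_{{\mathrm{0}}}} <: \ottnt{q}$ and $\ottnt{q_{{\mathrm{1}}}} \neq 0$), and the Substitution lemma (Lemma \ref{BLwDSubstP}) plus \rref{PTS-Conv} closes the case; if $\ottnt{q_{{\mathrm{0}}}} \neq 0$, I factor via Lemma \ref{BLwDFactP} to a unit-weight derivation, adjust the bound-variable grade down to $\ottnt{r}$ using \rref{PTS-SubL} (valid since $\ottnt{q_{{\mathrm{0}}}} \neq 0$ lets us cancel the multiplier in $\ottnt{q_{{\mathrm{0}}}} \cdot \ottnt{r} <: \ottnt{q_{{\mathrm{0}}}} \cdot \ottnt{r'}$), re-multiply by $\ottnt{q}$ with Lemma \ref{BLwDMultP}, and finish with the Substitution lemma and \rref{PTS-Conv}. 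This is essentially the same dance as in the $\beta$ subcase of Theorem \ref{Dpreserve}, now routed through Lemmas \ref{lambdainv}, \ref{BLwDFactP}, \ref{BLwDMultP}, \ref{BLwDSubstP} rather than their non-$\omega$ analogues.

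The remaining reduction cases — the congruence steps for \rref{PTS-App}, \rref{PTS-LetUnit}, \rref{PTS-LetPair}, \rref{PTS-Case}, the $\beta$-reductions for \texttt{let} and \texttt{case}, and the structural \rref{PTS-Weak}, \rref{PTS-Conv}, \rref{PTS-SubL}, \rref{PTS-SubR} cases — carry over unchanged from Theorem \ref{Dpreserve}, using the $\omega$-versions of the metatheoretic lemmas (Substitution, Splitting, Multiplication, Factorization) which have already been established in the "Unrestricted Usage" appendix section. I would simply state "follows as in the proof of Theorem \ref{Dpreserve}, using Lemmas \ref{BLwDMultP}, \ref{BLwDFactP}, \ref{BLwDSplitP}, \ref{BLwDSubstP} in place of their simply-typed/non-$\omega$ counterparts" for those.

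The main obstacle is ensuring that Lambda Inversion (Lemma \ref{lambdainv}) delivers exactly the data needed and that the grade-cancellation step in the $\ottnt{q_{{\mathrm{0}}}} \neq 0$ subcase is legitimate for \emph{all} the parametrizing semirings $\mathcal{Q}_{\mathbb{N}}^{\omega}$ — in particular that $\ottnt{q_{{\mathrm{0}}}} \cdot \ottnt{r} <: \ottnt{q_{{\mathrm{0}}}} \cdot \ottnt{r'}$ with $\ottnt{q_{{\mathrm{0}}}} \neq 0$ implies $\ottnt{r} <: \ottnt{r'}$. This left-cancellation property holds for $\mathbb{N}_{=}^{\omega}$, $\mathbb{N}_{\geq}^{\omega}$, $\mathcal{Q}_{\text{Lin}}$, and $\mathcal{Q}_{\text{Aff}}$ (it is already used implicitly in the proof of Lemma \ref{BLwDFactP}'s $\ottnt{q} \neq \omega$ branch), so the obstacle is really just bookkeeping: keeping track of which semiring-specific facts are invoked and confirming that the $\omega$-as-annihilator behavior handled by \rref{PTS-LamOmega} does not reintroduce any of the unsoundness that motivated the rule change in the first place — the constraint $\ottnt{q_{{\mathrm{0}}}} = \omega \Rightarrow \ottnt{r} = \omega$ threaded through Lemma \ref{lambdainv} is precisely what prevents that.
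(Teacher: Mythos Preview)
Your overall strategy matches the paper's: induction on typing, inversion on the step, with all the work concentrated in the $\beta$ subcase of \rref{PTS-App}, handled via Lemma \ref{lambdainv} together with the $\omega$-extended Factorization, Multiplication, and Substitution lemmas. The non-$\beta$ cases do carry over unchanged from Theorem \ref{Dpreserve}.

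There is, however, a genuine gap in your $\beta$ subcase. You split into $\ottnt{q_{{\mathrm{0}}}} = 0$ versus $\ottnt{q_{{\mathrm{0}}}} \neq 0$, and in the second branch you claim left-cancellation: from $\ottnt{q_{{\mathrm{0}}}} \cdot \ottnt{r} <: \ottnt{q_{{\mathrm{0}}}} \cdot \ottnt{r'}$ and $\ottnt{q_{{\mathrm{0}}}} \neq 0$ conclude $\ottnt{r} <: \ottnt{r'}$. This is \emph{false} when $\ottnt{q_{{\mathrm{0}}}} = \omega$. For instance in $\mathbb{N}_{=}^{\omega}$ one has $\omega \cdot 1 = \omega = \omega \cdot 2$, so $\omega \cdot 1 <: \omega \cdot 2$, yet $1 <: 2$ fails under the discrete order. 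Your own parenthetical acknowledges that the paper's Factorization proof uses cancellation only in the $\ottnt{q} \neq \omega$ branch --- that restriction is essential.

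The paper accordingly uses a \emph{three}-way split on $\ottnt{q_{{\mathrm{0}}}}$: the cases $\ottnt{q_{{\mathrm{0}}}} = 0$ and $\ottnt{q_{{\mathrm{0}}}} \notin \{0,\omega\}$ go as you describe, but the case $\ottnt{q_{{\mathrm{0}}}} = \omega$ is handled separately without factorizing. There the constraint $\ottnt{q_{{\mathrm{0}}}} = \omega \Rightarrow \ottnt{r} = \omega$ from Lemma \ref{lambdainv} gives $\ottnt{r} = \omega$, hence (for $\ottnt{q} \neq 0$) $\ottnt{q} \cdot \ottnt{r} = \omega$; one then multiplies the body derivation up, applies \rref{PTS-SubL}, and finishes with Substitution and \rref{PTS-Conv,PTS-SubR}. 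You mention the constraint at the end of your proposal but never actually deploy it; it is precisely what replaces the invalid cancellation step. Your two-way split can be salvaged by observing that when $\ottnt{q_{{\mathrm{0}}}} = \omega$, the constraint forces $\ottnt{r} = \omega$, which is the bottom element, so $\ottnt{r} <: \ottnt{r'}$ holds trivially --- but this argument must be made explicit, and the blanket cancellation claim withdrawn.
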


\begin{proof}
By induction on $ \Gamma  \vdash  \ottnt{a}  :^{ \ottnt{q} }  \ottnt{A} $ and inversion on $ \vdash  \ottnt{a}  \leadsto  \ottnt{a'} $. All the cases other than \rref{PTS-App} are similar to those of lemma \ref{Dpreserve}.

\begin{itemize}
\item \Rref{PTS-App}. Have: $  \Gamma_{{\mathrm{1}}}  +  \Gamma_{{\mathrm{2}}}   \vdash   \ottnt{b}  \:  \ottnt{a} ^{ \ottnt{r} }   :^{ \ottnt{q} }   \ottnt{B}  \{  \ottnt{a}  /  \ottmv{x}  \}  $ where $ \Gamma_{{\mathrm{1}}}  \vdash  \ottnt{b}  :^{ \ottnt{q} }   \Pi  \ottmv{x}  :^{ \ottnt{r} } \!  \ottnt{A}  .  \ottnt{B}  $ and $ \Gamma_{{\mathrm{2}}}  \vdash  \ottnt{a}  :^{  \ottnt{q}  \cdot  \ottnt{r}  }  \ottnt{A} $. \\ Let $ \vdash   \ottnt{b}  \:  \ottnt{a} ^{ \ottnt{r} }   \leadsto  \ottnt{c} $. By inversion:

\begin{itemize}

\item $ \vdash   \ottnt{b}  \:  \ottnt{a} ^{ \ottnt{r} }   \leadsto   \ottnt{b'}  \:  \ottnt{a} ^{ \ottnt{r} }  $, when $ \vdash  \ottnt{b}  \leadsto  \ottnt{b'} $. \\
Need to show: $  \Gamma_{{\mathrm{1}}}  +  \Gamma_{{\mathrm{2}}}   \vdash   \ottnt{b'}  \:  \ottnt{a} ^{ \ottnt{r} }   :^{ \ottnt{q} }   \ottnt{B}  \{  \ottnt{a}  /  \ottmv{x}  \}  $.\\
Follows by IH and \rref{PTS-App}.

\item $\ottnt{b} =  \lambda^{ \ottnt{r} }  \ottmv{x}  :  \ottnt{A'}  .  \ottnt{b'} $ and $ \vdash   \ottnt{b}  \:  \ottnt{a} ^{ \ottnt{r} }   \leadsto   \ottnt{b'}  \{  \ottnt{a}  /  \ottmv{x}  \}  $.\\
Need to show: $  \Gamma_{{\mathrm{1}}}  +  \Gamma_{{\mathrm{2}}}   \vdash   \ottnt{b'}  \{  \ottnt{a}  /  \ottmv{x}  \}   :^{ \ottnt{q} }   \ottnt{B}  \{  \ottnt{a}  /  \ottmv{x}  \}  $.\\
Applying lemma \ref{lambdainv} on $ \Gamma_{{\mathrm{1}}}  \vdash   \lambda^{ \ottnt{r} }  \ottmv{x}  :  \ottnt{A'}  .  \ottnt{b'}   :^{ \ottnt{q} }   \Pi  \ottmv{x}  :^{ \ottnt{r} } \!  \ottnt{A}  .  \ottnt{B}  $, we get:
\begin{itemize}
\item $  \Gamma_{{\mathrm{0}}}  ,   \ottmv{x}  :^{  \ottnt{q_{{\mathrm{0}}}}  \cdot  \ottnt{r}  }  \ottnt{A''}    \vdash  \ottnt{b'}  :^{ \ottnt{q_{{\mathrm{0}}}} }  \ottnt{B'} $ 
\item $ \Gamma_{{\mathrm{1}}}  <:   \ottnt{q_{{\mathrm{1}}}}  \cdot  \Gamma_{{\mathrm{0}}}  $ and $  \ottnt{q_{{\mathrm{1}}}}  \cdot  \ottnt{q_{{\mathrm{0}}}}   <:  \ottnt{q} $
\item $  \ottnt{q_{{\mathrm{0}}}}  =   \omega    \Rightarrow   \ottnt{r}  =   \omega   $ and $ \ottnt{q_{{\mathrm{1}}}}  \neq   0  $
\item $ \ottnt{A''}  =_{\beta}  \ottnt{A'} $ and $  \Pi  \ottmv{x}  :^{ \ottnt{r} } \!  \ottnt{A}  .  \ottnt{B}   =_{\beta}   \Pi  \ottmv{x}  :^{ \ottnt{r} } \!  \ottnt{A''}  .  \ottnt{B'}  $.
\end{itemize}

Now, there are three cases to consider.
\begin{itemize}

\item $\ottnt{q_{{\mathrm{0}}}} = 0$. Since $  \ottnt{q_{{\mathrm{1}}}}  \cdot  \ottnt{q_{{\mathrm{0}}}}   <:  \ottnt{q} $, so $\ottnt{q} = 0$.\\
Then, by the substitution lemma and \rref{PTS-SubL}, $  \Gamma_{{\mathrm{1}}}  +  \Gamma_{{\mathrm{2}}}   \vdash   \ottnt{b'}  \{  \ottnt{a}  /  \ottmv{x}  \}   :^{  0  }   \ottnt{B}  \{  \ottnt{a}  /  \ottmv{x}  \}  $. 

\item $ \ottnt{q_{{\mathrm{0}}}}  =   \omega  $. So $ \ottnt{r}  =   \omega  $.\\
 Now, if $ \ottnt{q}  =   0  $, then by the substitution lemma and \rref{PTS-SubL}, $  \Gamma_{{\mathrm{1}}}  +  \Gamma_{{\mathrm{2}}}   \vdash   \ottnt{b'}  \{  \ottnt{a}  /  \ottmv{x}  \}   :^{  0  }   \ottnt{B}  \{  \ottnt{a}  /  \ottmv{x}  \}  $. \\
 If $ \ottnt{q}  \neq   0  $, then $  \ottnt{q}  \cdot  \ottnt{r}   =   \omega  $. So $ \Gamma_{{\mathrm{2}}}  \vdash  \ottnt{a}  :^{  \omega  }  \ottnt{A} $. \\ %By \rref{PTS-Conv}, $ \Gamma_{{\mathrm{2}}}  \vdash  \ottnt{a}  :^{  \omega  }  \ottnt{A''} $.\\
 Next, $  \Gamma_{{\mathrm{0}}}  ,   \ottmv{x}  :^{  \omega  }  \ottnt{A''}    \vdash  \ottnt{b'}  :^{  \omega  }  \ottnt{B'} $.\\
 By the multiplication lemma and \rref{PTS-SubL}, $  \Gamma_{{\mathrm{1}}}  ,   \ottmv{x}  :^{  \omega  }  \ottnt{A''}    \vdash  \ottnt{b'}  :^{  \omega  }  \ottnt{B'} $.\\  
 By the substitution lemma, $  \Gamma_{{\mathrm{1}}}  +  \Gamma_{{\mathrm{2}}}   \vdash   \ottnt{b'}  \{  \ottnt{a}  /  \ottmv{x}  \}   :^{  \omega  }   \ottnt{B'}  \{  \ottnt{a}  /  \ottmv{x}  \}  $.\\
 This case, then, follows by \rref{PTS-Conv,PTS-SubR}.

\item $\ottnt{q_{{\mathrm{0}}}} \notin \{ 0 ,  \omega  \}$. By lemma \ref{BLwDFactP}, $\exists \Gamma'_{{\mathrm{0}}}$ and $\ottnt{r'}$ such that $  \Gamma'_{{\mathrm{0}}}  ,   \ottmv{x}  :^{ \ottnt{r'} }  \ottnt{A''}    \vdash  \ottnt{b'}  :^{  1  }  \ottnt{B'} $ and $ \Gamma_{{\mathrm{0}}}  <:   \ottnt{q_{{\mathrm{0}}}}  \cdot  \Gamma'_{{\mathrm{0}}}  $ and $  \ottnt{q_{{\mathrm{0}}}}  \cdot  \ottnt{r}   <:   \ottnt{q_{{\mathrm{0}}}}  \cdot  \ottnt{r'}  $.\\
Since $\ottnt{q_{{\mathrm{0}}}} \notin \{ 0 ,  \omega  \} $, so $ \ottnt{r}  <:  \ottnt{r'} $. Hence, by \rref{PTS-SubL}, $  \Gamma'_{{\mathrm{0}}}  ,   \ottmv{x}  :^{ \ottnt{r} }  \ottnt{A''}    \vdash  \ottnt{b'}  :^{  1  }  \ottnt{B'} $.\\
Now, by lemma \ref{BLwDMultP}, $   \ottnt{q}  \cdot  \Gamma'_{{\mathrm{0}}}   ,   \ottmv{x}  :^{  \ottnt{q}  \cdot  \ottnt{r}  }  \ottnt{A''}    \vdash  \ottnt{b'}  :^{ \ottnt{q} }  \ottnt{B'} $. By \rref{PTS-SubL}, $  \Gamma_{{\mathrm{1}}}  ,   \ottmv{x}  :^{  \ottnt{q}  \cdot  \ottnt{r}  }  \ottnt{A''}    \vdash  \ottnt{b'}  :^{ \ottnt{q} }  \ottnt{B'} $.\\
This case, then, follows by \rref{PTS-Conv} and the substitution lemma.
\end{itemize}

\end{itemize} 
\end{itemize}
\end{proof}

%----------------------------------------------------------------------------------------------------

\begin{theorem}[Progress] \label{Dwprogress}
If $  \emptyset   \vdash  \ottnt{a}  :^{ \ottnt{q} }  \ottnt{A} $, then either $\ottnt{a}$ is a value or there exists $\ottnt{a'}$ such that $ \vdash  \ottnt{a}  \leadsto  \ottnt{a'} $.
\end{theorem}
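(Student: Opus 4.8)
\textbf{Proof plan for Progress (Theorem \ref{Dwprogress}).}
The plan is to prove this exactly as its predecessors, Theorem \ref{Dprogress} (for LDC($\mathcal{Q}_{\mathbb{N}}$)) and Theorem \ref{BLSProgP}, by induction on the derivation $\emptyset \vdash \ottnt{a} :^{\ottnt{q}} \ottnt{A}$. The only genuine change is the replacement of \rref{PTS-Lam} by \rref{PTS-LamOmega}, and the introduction of the definitional-equality bookkeeping already present in the PTS version; crucially, none of this affects \emph{progress}, which only inspects the head constructor of a well-typed closed term. So I would reuse the proof of Theorem \ref{Dprogress} essentially verbatim, case-analyzing the last rule used.

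First I would dispense with the rules \rref{PTS-Var} (vacuous: the context is empty) and \rref{PTS-Weak,PTS-Conv,PTS-SubL,PTS-SubR} (and their definitional variants, where applicable), each of which follows immediately from the induction hypothesis since they do not change the term. Next, the introduction-form rules --- \rref{PTS-Axiom,PTS-Pi,PTS-LamOmega,PTS-Sigma,PTS-Pair,PTS-Sum,PTS-InjOne}, and the second injection --- all type terms that are by definition values, so there is nothing to show. The substantive cases are the elimination forms \rref{PTS-App,PTS-LetUnit,PTS-LetPair,PTS-Case}. In each, the induction hypothesis applied to the eliminated subterm $\ottnt{b}$ (resp.\ $\ottnt{a}$) gives that it either steps --- in which case the whole term steps by the corresponding congruence rule --- or is a value, in which case I need a canonical-forms fact: a closed value of type $\Pi \ottmv{x} :^{\ottnt{r}} \! \ottnt{A} . \ottnt{B}$ is a $\lambda$-abstraction, a closed value of type $\mathbf{Unit}$ is $\mathbf{unit}$, a closed value of a $\Sigma$-type is a pair, and a closed value of a sum type is an injection. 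Given the canonical form, the appropriate $\beta$-rule (from Figure \ref{SmallStep} and its extensions) fires. Each canonical-forms fact is obtained by inversion on the typing derivation of the value, noting that \rref{PTS-Conv}/\rref{PTS-DefConv} preserves the head type constructor up to $\beta$-equivalence and that no value form other than the expected one can be assigned the type in question.

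I do not expect a serious obstacle here: the modification from \rref{PTS-Lam} to \rref{PTS-LamOmega} is invisible to progress, because \rref{PTS-LamOmega} still types exactly the $\lambda$-abstractions, so the canonical-forms lemma for $\Pi$-types is unchanged (this is the same observation that makes the Progress statement for LDC($\mathcal{Q}_{\mathbb{N}}^{\omega}$) identical to that for LDC($\mathcal{Q}_{\mathbb{N}}$)). The only mild care point --- and the closest thing to a ``hard part'' --- is making the canonical-forms inversions robust against the conversion rules, i.e.\ confirming that a chain of \rref{PTS-Conv}/\rref{PTS-SubR}/\rref{PTS-SubL} steps capping an introduction rule cannot retype, say, a $\lambda$ at a non-$\Pi$ type; this follows from the fact that $\beta$-equivalence is a congruence that does not identify distinct type constructors (a standard consequence of confluence for the $\beta$-reduction relation of Figure \ref{equality}, already implicitly used in Theorem \ref{Dprogress} and Theorem \ref{Dwpreserve}). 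With that in hand the proof is a routine rule-by-rule replay, and I would simply write ``By induction on $\emptyset \vdash \ottnt{a} :^{\ottnt{q}} \ottnt{A}$; follow Theorem \ref{Dprogress}'' with at most a sentence flagging the \rref{PTS-LamOmega} case as unchanged.
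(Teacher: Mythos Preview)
Your proposal is correct and matches the paper's approach exactly: the paper's proof is literally ``By induction on $\emptyset \vdash a :^{q} A$. Follow the proof of theorem \ref{Dprogress},'' which is precisely what you conclude. Your elaboration of why the \rref{PTS-LamOmega} modification is invisible to progress is accurate and captures the reason the paper feels no need to say more.
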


\begin{proof}
By induction on $  \emptyset   \vdash  \ottnt{a}  :^{ \ottnt{q} }  \ottnt{A} $. Follow the proof of theorem \ref{Dprogress}.
\end{proof}

%----------------------------------------------------------------------------------------------------

\textbf{Note:} The multi-substitution and elaboration lemmas (Lemmas \ref{MultiSubst} and \ref{Elaboration}) are true of LDC($ \mathcal{Q}_{\mathbb{N} }^{\omega} $).

%----------------------------------------------------------------------------------------------------

\begin{lemma} \label{heaphelperDepw}
If $ \ottnt{H}  \models   \Gamma_{{\mathrm{1}}}  +  \Gamma_{{\mathrm{2}}}  $ and $ \Gamma_{{\mathrm{2}}}  \vdash  \ottnt{a}  :^{ \ottnt{q} }  \ottnt{A} $ and $q \neq 0$, then either $\ottnt{a}$ is a value or there exists $\ottnt{H'}, \Gamma'_{{\mathrm{2}}}, \ottnt{a'}$ such that:
\begin{itemize}
\item $ [  \ottnt{H}  ]  \ottnt{a}  \Longrightarrow^{ \ottnt{q} }_{ \ottnt{S} } [  \ottnt{H'}  ]  \ottnt{a'} $
\item $ \ottnt{H'}  \models   \Gamma_{{\mathrm{1}}}  +  \Gamma'_{{\mathrm{2}}}  $
\item $ \Gamma'_{{\mathrm{2}}}  \vdash  \ottnt{a'}  :^{ \ottnt{q} }  \ottnt{A} $
\end{itemize} 
\end{lemma}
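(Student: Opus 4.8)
The plan is to prove Lemma \ref{heaphelperDepw} by induction on the typing derivation $\Gamma_{{\mathrm{2}}} \vdash \ottnt{a} :^{\ottnt{q}} \ottnt{A}$, exactly mirroring the structure of Lemma \ref{heaphelperDep} (the PTS heap-helper lemma for LDC($\mathcal{Q}_{\mathbb{N}}$)). The key observation is that the heap reduction relation for LDC($\mathcal{Q}_{\mathbb{N}}^{\omega}$) is \emph{the same} as for LDC($\mathcal{Q}_{\mathbb{N}}$), and the only change to the type system is replacing \rref{PTS-Lam} with \rref{PTS-LamOmega}. So the body of the proof is almost verbatim that of Lemma \ref{heaphelperDep}, invoking instead the $\omega$-versions of the supporting lemmas — \rref{BLwDMultP} (multiplication), \rref{BLwDFactP} (factorization), \rref{BLwDSplitP} (splitting), \rref{BLwDSubstP} (substitution), and the multi-substitution and elaboration lemmas, which hold for LDC($\mathcal{Q}_{\mathbb{N}}^{\omega}$) by the note preceding this lemma. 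As in Lemma \ref{heaphelperDep}, the statement is really about the extended calculus with definitions in contexts (\rref{PTS-DefVar,PTS-DefWeak,PTS-DefConv}), so the cases to handle are \rref{PTS-DefVar} (variable look-up via \rref{HeapStep-Var}, using \rref{BLwDSplitP} to split the heap binding's grade), \rref{PTS-DefWeak} (recurse through a heap binding), \rref{PTS-App} (if the function steps, use IH; otherwise it is a $\lambda$, so apply $\beta$ via \rref{HeapStep-AppBeta}, adding the argument to the heap and using \rref{PTS-DefConv} to reconcile the type), the eliminators \rref{PTS-LetUnit,PTS-LetPair,PTS-Case} (step the scrutinee via IH plus \rref{HeapStep-Discard} to absorb $\ottnt{q_{{\mathrm{0}}}}$, or when it is a value add the components to the heap and use \rref{PTS-DefConv} together with Lemma \ref{HeapSim} to push the $\beta$-equality through the result type), and \rref{PTS-DefConv} itself (recurse and re-apply the conversion, noting domains only grow).

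The first step I would take is to state that the heap-based reduction rules and the underlying reduction rules are unchanged from the $\mathcal{Q}_{\mathbb{N}}$ setting, so that all the structural machinery (the compatibility relation, the support-set bookkeeping, Lemma \ref{HeapSim}, Lemma \ref{irrelP}) transfers directly. Then I would run the induction case by case. For each case, the argument is: invert the typing premises, apply the IH or a direct heap step, and reassemble using the corresponding $\omega$-lemma in place of its $\mathcal{Q}_{\mathbb{N}}$ counterpart. The one genuinely new subtlety is the $\lambda$-case in \rref{PTS-App}: when $\ottnt{b}$ is a value we must invert its typing, but now the inversion is governed by \rref{PTS-LamOmega} (via Lemma \ref{lambdainv}), which yields $\Gamma_{\mathbf{0}}, \ottmv{x} :^{\ottnt{q_{{\mathrm{0}}}} \cdot \ottnt{r}} \ottnt{A''} \vdash \ottnt{b'} :^{\ottnt{q_{{\mathrm{0}}}}} \ottnt{B'}$ with $\Gamma_{{\mathrm{1}}} <: \ottnt{q_{{\mathrm{1}}}} \cdot \Gamma_{\mathbf{0}}$, $\ottnt{q_{{\mathrm{1}}}} \cdot \ottnt{q_{{\mathrm{0}}}} <: \ottnt{q}$, $\ottnt{q_{{\mathrm{1}}}} \neq 0$, and the constraint $\ottnt{q_{{\mathrm{0}}}} = \omega \Rightarrow \ottnt{r} = \omega$. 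I would case split on whether $\ottnt{q_{{\mathrm{0}}}}$ is $0$, $\omega$, or neither, exactly as in the proof of Theorem \ref{Dwpreserve}, and in each branch produce the needed typing of the reduct $\ottnt{b'}$ after adding $\ottmv{x} \overset{\ottnt{q}\cdot\ottnt{r}}{\mapsto} \ottnt{a}$ to the heap, using \rref{BLwDFactP} and \rref{BLwDMultP} to re-grade and \rref{PTS-DefConv} to align the type with $\ottnt{B}\{\ottnt{a}/\ottmv{x}\}$.

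The main obstacle I expect is precisely this $\lambda$-case interacting with the definition-extended context: one must check that the grade $\ottnt{q} \cdot \ottnt{r}$ at which the new heap binding $\ottmv{x} \overset{\ottnt{q}\cdot\ottnt{r}}{\mapsto}\ottnt{a}$ is added is exactly the grade the reduct $\ottnt{b'}$ demands of $\ottmv{x}$, \emph{and} that the compatibility relation $\ottnt{H'} \models \Gamma_{{\mathrm{1}}} + \Gamma'_{{\mathrm{2}}}$ is preserved when $\omega$ appears, since $\omega$ is an additive (and near-multiplicative) annihilator and the context-splitting in \rref{BLwDSplitP} behaves differently for $\mathbb{N}_{=}^{\omega}$ and $\mathcal{Q}_{\text{Lin}}$ versus $\mathbb{N}_{\geq}^{\omega}$ — the proof of \rref{BLwDSplitP} handles this via the $\{0,\omega\}$-context operation, and I would need to make sure that discipline propagates correctly here. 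Once that case is verified, the remaining cases are routine adaptations of Lemma \ref{heaphelperDep}. Finally, Theorem \ref{Bwsound}'s heap-soundness half follows by instantiating this lemma with $\Gamma_{{\mathrm{1}}} := 0 \cdot \Gamma$ and $\Gamma_{{\mathrm{2}}} := \Gamma$, just as Theorem \ref{heapDep} follows from Lemma \ref{heaphelperDep}.
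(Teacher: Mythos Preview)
Your proposal is correct and matches the paper's approach: induction on the typing derivation, with all cases handled as in Lemma \ref{heaphelperDep} except \rref{PTS-App}, where the $\lambda$-value subcase is dealt with via Lemma \ref{lambdainv} and a case split on $\ottnt{q_{{\mathrm{0}}}}$. The only cosmetic difference is that the paper uses a two-way split ($\ottnt{q_{{\mathrm{0}}}} = \omega$ versus $\ottnt{q_{{\mathrm{0}}}} \neq \omega$), immediately dismissing $\ottnt{q_{{\mathrm{0}}}} = 0$ inside the second branch since $\ottnt{q_{{\mathrm{1}}}}\cdot\ottnt{q_{{\mathrm{0}}}} <: \ottnt{q}$ and $\ottnt{q} \neq 0$ force $\ottnt{q_{{\mathrm{0}}}} \neq 0$; your three-way split following Theorem \ref{Dwpreserve} simply makes that dismissal its own (vacuous) case.
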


\begin{proof}
By induction on $ \Gamma_{{\mathrm{2}}}  \vdash  \ottnt{a}  :^{ \ottnt{q} }  \ottnt{A} $. All the cases other than \rref{PTS-App} are similar to those of lemma \ref{heaphelperDep}.

\begin{itemize}

\item \Rref{PTS-App}. Have: $  \Gamma_{{\mathrm{21}}}  +  \Gamma_{{\mathrm{22}}}   \vdash   \ottnt{b}  \:  \ottnt{a} ^{ \ottnt{r} }   :^{ \ottnt{q} }   \ottnt{B}  \{  \ottnt{a}  /  \ottmv{x}  \}  $ where $ \Gamma_{{\mathrm{21}}}  \vdash  \ottnt{b}  :^{ \ottnt{q} }   \Pi  \ottmv{x}  :^{ \ottnt{r} } \!  \ottnt{A}  .  \ottnt{B}  $ and $ \Gamma_{{\mathrm{22}}}  \vdash  \ottnt{a}  :^{  \ottnt{q}  \cdot  \ottnt{r}  }  \ottnt{A} $.\\
Further, $ \ottnt{H}  \models   \Gamma_{{\mathrm{1}}}  +   (   \Gamma_{{\mathrm{21}}}  +  \Gamma_{{\mathrm{22}}}   )   $.\\
If $\ottnt{b}$ steps, then this case follows by IH.\\

Otherwise, $\ottnt{b}$ is a value. By inversion, $b =  \lambda^{ \ottnt{r} }  \ottmv{x}  :  \ottnt{A'}  .  \ottnt{b'} $.\\
Using lemma \ref{lambdainv} on $ \Gamma_{{\mathrm{21}}}  \vdash   \lambda^{ \ottnt{r} }  \ottmv{x}  :  \ottnt{A'}  .  \ottnt{b'}   :^{ \ottnt{q} }   \Pi  \ottmv{x}  :^{ \ottnt{r} } \!  \ottnt{A}  .  \ottnt{B}  $, we get:
\begin{itemize}
\item $  \Gamma_{{\mathrm{20}}}  ,   \ottmv{x}  :^{  \ottnt{q_{{\mathrm{0}}}}  \cdot  \ottnt{r}  }  \ottnt{A''}    \vdash  \ottnt{b'}  :^{ \ottnt{q_{{\mathrm{0}}}} }  \ottnt{B'} $ 
\item $ \Gamma_{{\mathrm{21}}}  <:   \ottnt{q_{{\mathrm{1}}}}  \cdot  \Gamma_{{\mathrm{20}}}  $ and $  \ottnt{q_{{\mathrm{1}}}}  \cdot  \ottnt{q_{{\mathrm{0}}}}   <:  \ottnt{q} $
\item $  \ottnt{q_{{\mathrm{0}}}}  =   \omega    \Rightarrow   \ottnt{r}  =   \omega   $ and $ \ottnt{q_{{\mathrm{1}}}}  \neq   0  $
\item $  \ottnt{A''}  \{  \Delta_{{\mathrm{2}}}  \}   =_{\beta}   \ottnt{A'}  \{  \Delta_{{\mathrm{2}}}  \}  $ and $   \Pi  \ottmv{x}  :^{ \ottnt{r} } \!  \ottnt{A}  .  \ottnt{B}   \{  \Delta_{{\mathrm{2}}}  \}   =_{\beta}    \Pi  \ottmv{x}  :^{ \ottnt{r} } \!  \ottnt{A''}  .  \ottnt{B'}   \{  \Delta_{{\mathrm{2}}}  \}  $ where $ \Delta_{{\mathrm{2}}}  =   \lfloor  \Gamma_{{\mathrm{21}}}  \rfloor  $.
\end{itemize}

Now, there are two cases to consider.
\begin{itemize} 

\item $ \ottnt{q_{{\mathrm{0}}}}  =   \omega  $. So $ \ottnt{r}  =   \omega  $. Further, since $ \ottnt{q}  \neq   0  $, so $  \ottnt{q}  \cdot  \ottnt{r}   =   \omega  $.\\
 So, $  \Gamma_{{\mathrm{20}}}  ,   \ottmv{x}  :^{  \omega  }  \ottnt{A''}    \vdash  \ottnt{b'}  :^{  \omega  }  \ottnt{B'} $.\\
 By the multiplication lemma and \rref{PTS-SubL}, $  \Gamma_{{\mathrm{21}}}  ,   \ottmv{x}  :^{  \omega  }  \ottnt{A''}    \vdash  \ottnt{b'}  :^{  \omega  }  \ottnt{B'} $.\\
 Then, by \rref{PTS-DefConv}, $ \Gamma_{{\mathrm{22}}}  \vdash  \ottnt{a}  :^{  \omega  }  \ottnt{A''} $.\\
 Further, $  \Gamma_{{\mathrm{21}}}  ,   \ottmv{x}  =  \ottnt{a}  :^{  \omega  }  \ottnt{A''}    \vdash  \ottnt{b'}  :^{  \omega  }  \ottnt{B'} $. And by \rref{PTS-SubR}, $  \Gamma_{{\mathrm{21}}}  ,   \ottmv{x}  =  \ottnt{a}  :^{  \omega  }  \ottnt{A''}    \vdash  \ottnt{b'}  :^{ \ottnt{q} }  \ottnt{B'} $.\\
 Now,
\begin{itemize}
\item $ [  \ottnt{H}  ]    (   \lambda^{ \ottnt{r} }  \ottmv{x}  :  \ottnt{A'}  .  \ottnt{b'}   )   \:  \ottnt{a} ^{ \ottnt{r} }   \Longrightarrow^{ \ottnt{q} }_{ \ottnt{S} } [   \ottnt{H}  ,   \ottmv{x}  \overset{  \omega  }{\mapsto}  \ottnt{a}    ]  \ottnt{b'} $
\item $  \ottnt{H}  ,   \ottmv{x}  \overset{  \omega  }{\mapsto}  \ottnt{a}    \models    (   \Gamma_{{\mathrm{1}}}  +  \Gamma_{{\mathrm{21}}}   )   ,   \ottmv{x}  =  \ottnt{a}  :^{  \omega  }  \ottnt{A''}   $ 
\item $  \Gamma_{{\mathrm{21}}}  ,   \ottmv{x}  =  \ottnt{a}  :^{  \omega  }  \ottnt{A''}    \vdash  \ottnt{b'}  :^{ \ottnt{q} }   \ottnt{B}  \{  \ottnt{a}  /  \ottmv{x}  \}  $ by \rref{PTS-DefConv}. 
\end{itemize} 

\item $ \ottnt{q_{{\mathrm{0}}}}  \neq   \omega  $. Note that if $ \ottnt{q_{{\mathrm{0}}}}  =   0  $, then $ \ottnt{q}  =   0  $, a contradiction. So $\ottnt{q_{{\mathrm{0}}}} \notin \{ 0 ,  \omega  \}$. \\
By lemma \ref{BLwDFactP}, $\exists \Gamma'_{{\mathrm{20}}}$ and $\ottnt{r'}$ such that $  \Gamma'_{{\mathrm{20}}}  ,   \ottmv{x}  :^{ \ottnt{r'} }  \ottnt{A''}    \vdash  \ottnt{b'}  :^{  1  }  \ottnt{B'} $ and $ \Gamma_{{\mathrm{20}}}  <:   \ottnt{q_{{\mathrm{0}}}}  \cdot  \Gamma'_{{\mathrm{20}}}  $ and $  \ottnt{q_{{\mathrm{0}}}}  \cdot  \ottnt{r}   <:   \ottnt{q_{{\mathrm{0}}}}  \cdot  \ottnt{r'}  $.\\
Since $\ottnt{q_{{\mathrm{0}}}} \notin \{ 0 ,  \omega  \} $, therefore $ \ottnt{r}  <:  \ottnt{r'} $. Hence, by \rref{PTS-SubL}, $  \Gamma'_{{\mathrm{20}}}  ,   \ottmv{x}  :^{ \ottnt{r} }  \ottnt{A''}    \vdash  \ottnt{b'}  :^{  1  }  \ottnt{B'} $.\\
Now, by lemma \ref{BLwDMultP}, $   \ottnt{q}  \cdot  \Gamma'_{{\mathrm{20}}}   ,   \ottmv{x}  :^{  \ottnt{q}  \cdot  \ottnt{r}  }  \ottnt{A''}    \vdash  \ottnt{b'}  :^{ \ottnt{q} }  \ottnt{B'} $. By \rref{PTS-SubL}, $  \Gamma_{{\mathrm{21}}}  ,   \ottmv{x}  :^{  \ottnt{q}  \cdot  \ottnt{r}  }  \ottnt{A''}    \vdash  \ottnt{b'}  :^{ \ottnt{q} }  \ottnt{B'} $.\\
Next, by \rref{PTS-DefConv}, $ \Gamma_{{\mathrm{22}}}  \vdash  \ottnt{a}  :^{  \ottnt{q}  \cdot  \ottnt{r}  }  \ottnt{A''} $. Then, $  \Gamma_{{\mathrm{21}}}  ,   \ottmv{x}  =  \ottnt{a}  :^{  \ottnt{q}  \cdot  \ottnt{r}  }  \ottnt{A''}    \vdash  \ottnt{b'}  :^{ \ottnt{q} }  \ottnt{B'} $.\\
 Therefore,
\begin{itemize}
\item $ [  \ottnt{H}  ]    (   \lambda^{ \ottnt{r} }  \ottmv{x}  :  \ottnt{A'}  .  \ottnt{b'}   )   \:  \ottnt{a} ^{ \ottnt{r} }   \Longrightarrow^{ \ottnt{q} }_{ \ottnt{S} } [   \ottnt{H}  ,   \ottmv{x}  \overset{  \ottnt{q}  \cdot  \ottnt{r}  }{\mapsto}  \ottnt{a}    ]  \ottnt{b'} $
\item $  \ottnt{H}  ,   \ottmv{x}  \overset{  \ottnt{q}  \cdot  \ottnt{r}  }{\mapsto}  \ottnt{a}    \models    (   \Gamma_{{\mathrm{1}}}  +  \Gamma_{{\mathrm{21}}}   )   ,   \ottmv{x}  =  \ottnt{a}  :^{  \ottnt{q}  \cdot  \ottnt{r}  }  \ottnt{A''}   $ 
\item $  \Gamma_{{\mathrm{21}}}  ,   \ottmv{x}  =  \ottnt{a}  :^{  \ottnt{q}  \cdot  \ottnt{r}  }  \ottnt{A''}    \vdash  \ottnt{b'}  :^{ \ottnt{q} }   \ottnt{B}  \{  \ottnt{a}  /  \ottmv{x}  \}  $ by \rref{PTS-DefConv}. 
\end{itemize}

\end{itemize}
\end{itemize}

\end{proof}

%-------------------------------------------------------------------------------------------------

\begin{theorem}[Soundness] \label{Dwheap}
If $ \ottnt{H}  \models  \Gamma $ and $ \Gamma  \vdash  \ottnt{a}  :^{ \ottnt{q} }  \ottnt{A} $ and $q \neq 0$, then either $\ottnt{a}$ is a value or there exists $\ottnt{H'}, \Gamma', \ottnt{a'}$ such that $ [  \ottnt{H}  ]  \ottnt{a}  \Longrightarrow^{ \ottnt{q} }_{ \ottnt{S} } [  \ottnt{H'}  ]  \ottnt{a'} $ and $ \ottnt{H'}  \models  \Gamma' $ and $ \Gamma'  \vdash  \ottnt{a'}  :^{ \ottnt{q} }  \ottnt{A} $.
\end{theorem}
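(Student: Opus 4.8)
The plan is to derive Theorem \ref{Dwheap} as a one-line corollary of the preceding helper lemma, exactly as was done for the earlier soundness theorems in the excerpt (Theorems \ref{heapsound} and \ref{heapDep}). The helper lemma here is Lemma \ref{heaphelperDepw}, which proves a strictly more general statement: it allows the context to be split as $\Gamma_{{\mathrm{1}}} + \Gamma_{{\mathrm{2}}}$, with the heap compatible with the sum and the term typed only against $\Gamma_{{\mathrm{2}}}$. The soundness theorem is the special case where the ``extra'' context $\Gamma_{{\mathrm{1}}}$ contributes nothing to the available resources.

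Concretely, I would instantiate Lemma \ref{heaphelperDepw} with $\Gamma_{{\mathrm{1}}} :=  0  \cdot  \Gamma $ and $\Gamma_{{\mathrm{2}}} := \Gamma$. The hypotheses of the lemma then become: $ \ottnt{H}  \models    0   \cdot  \Gamma   +  \Gamma  $, which is equivalent to the theorem's hypothesis $ \ottnt{H}  \models  \Gamma $ since $  0   \cdot  \Gamma   +  \Gamma  = \Gamma$ (the zero-scaled context is an additive identity up to the padding convention used throughout); the typing hypothesis $ \Gamma  \vdash  \ottnt{a}  :^{ \ottnt{q} }  \ottnt{A} $; and the side condition $q \neq 0$. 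The conclusion of the lemma gives either that $\ottnt{a}$ is a value, or the existence of $\ottnt{H'}, \Gamma'_{{\mathrm{2}}}, \ottnt{a'}$ with $ [  \ottnt{H}  ]  \ottnt{a}  \Longrightarrow^{ \ottnt{q} }_{ \ottnt{S} } [  \ottnt{H'}  ]  \ottnt{a'} $, $ \ottnt{H'}  \models    0   \cdot  \Gamma   +  \Gamma'_{{\mathrm{2}}}  $, and $ \Gamma'_{{\mathrm{2}}}  \vdash  \ottnt{a'}  :^{ \ottnt{q} }  \ottnt{A} $. Setting $\Gamma' := \Gamma'_{{\mathrm{2}}}$ and again using $  0   \cdot  \Gamma   +  \Gamma'_{{\mathrm{2}}}  = \Gamma'_{{\mathrm{2}}}$, this is precisely the desired conclusion. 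This mirrors verbatim the two-line proofs of Theorems \ref{heapsound} and \ref{heapDep}.

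The key thing to confirm is that the correspondence between the underlying system and the definition-extended system carries over to the $ \mathcal{Q}_{\mathbb{N} }^{\omega} $ setting, since Lemma \ref{heaphelperDepw} (like its $ \mathcal{Q}_{\mathbb{N} } $ counterpart \ref{heaphelperDep}) is naturally proved in the extended system with definitional contexts. The excerpt already flags this with the note that the multi-substitution and elaboration lemmas (Lemmas \ref{MultiSubst} and \ref{Elaboration}) hold for LDC($ \mathcal{Q}_{\mathbb{N} }^{\omega} $). Thus the same elaborate-then-multi-substitute strategy of Section \ref{heapPTS} applies unchanged: one proves heap soundness for the extended system via Lemma \ref{heaphelperDepw}, and transfers it back to the original system using the equivalence. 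All the machinery that Lemma \ref{heaphelperDepw} actually invokes---the multiplication, factorization, splitting, and substitution lemmas (Lemmas \ref{BLwDMultP}, \ref{BLwDFactP}, \ref{BLwDSplitP}, \ref{BLwDSubstP}), together with lambda inversion (Lemma \ref{lambdainv})---has already been established for $ \mathcal{Q}_{\mathbb{N} }^{\omega} $ earlier in this section.

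The only genuine subtlety, and the step I expect to require the most care, is the treatment of the $ \omega $ grade in the application case of Lemma \ref{heaphelperDepw}, which the theorem depends on through the lemma. When the function's bound-variable grade $\ottnt{q_{{\mathrm{0}}}}$ equals $ \omega $, the modified \rref{PTS-LamOmega} forces $ \ottnt{r}  =   \omega  $, and combined with $ \ottnt{q}  \neq   0  $ this yields $  \ottnt{q}  \cdot  \ottnt{r}   =   \omega  $; the argument must then be loaded into the heap at weight $ \omega $ and the reduct re-typed at the observer grade $\ottnt{q}$ via \rref{PTS-SubR} after a multiplication step. The non-$\omega$ case instead routes through factorization (Lemma \ref{BLwDFactP}) exactly as in the non-unrestricted development. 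Since this case analysis lives entirely inside Lemma \ref{heaphelperDepw}, which I am assuming as already proved, the proof of Theorem \ref{Dwheap} itself is immediate; the obstacle is purely one of verifying that the lemma's instantiation and the context-arithmetic identities $  0   \cdot  \Gamma   +  \Gamma  = \Gamma$ are legitimate, both of which hold by the definitions in Section \ref{subsec:linsimple}.
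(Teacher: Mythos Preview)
Your proposal is correct and matches the paper's proof exactly: the paper's proof is the single line ``Use lemma \ref{heaphelperDepw} with $\Gamma_{{\mathrm{1}}} := 0 \cdot \Gamma$ and $\Gamma_{{\mathrm{2}}} := \Gamma$,'' which is precisely the instantiation you describe. Your additional commentary about the extended system, the $\omega$ case in the helper lemma, and the context-arithmetic identity $0 \cdot \Gamma + \Gamma = \Gamma$ is accurate supporting detail but belongs to the proof of Lemma \ref{heaphelperDepw} rather than to this theorem itself.
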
 

\begin{proof}
Use lemma \ref{heaphelperDepw} with $\Gamma_{{\mathrm{1}}} :=   0   \cdot  \Gamma $ and $\Gamma_{{\mathrm{2}}} := \Gamma$.
\end{proof}

%-------------------------------------------------------------------------------------------------

\begin{theorem}[Theorem \ref{Bwsound}]
LDC($ \mathcal{Q}_{\mathbb{N} }^{\omega} $) satisfies type soundness and heap soundness.
\end{theorem}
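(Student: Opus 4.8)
The plan is to collect together the individual metatheoretic results that have already been established in the appendix for LDC($\mathcal{Q}_{\mathbb{N}}^{\omega}$), since Theorem~\ref{Bwsound} is precisely the statement that these results hold. Concretely, type soundness is the conjunction of the Preservation and Progress theorems, and heap soundness is the Soundness-with-respect-to-heap-semantics theorem. So the proof reduces to observing that, with \rref{ST-Lam} and \rref{PTS-Lam} replaced by \rref{ST-LamOmega} and \rref{PTS-LamOmega}, every lemma in the chain leading to these theorems still goes through. I would carry this out by first citing the \emph{unrestricted}-usage versions of the basic lemmas that were already reproved in the appendix for the modified system --- Multiplication (Lemma~\ref{BLwDMultP}), Independence (Lemma~\ref{ind}), Factorization (Lemma~\ref{BLwDFactP}), Splitting (Lemma~\ref{BLwDSplitP}), Weakening (Lemma~\ref{BLwDWeakP}), Substitution (Lemma~\ref{BLwDSubstP}), and Lambda Inversion (Lemma~\ref{lambdainv}). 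These are exactly the ingredients needed for the next step.

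Next I would invoke the type-soundness half directly: Preservation for LDC($\mathcal{Q}_{\mathbb{N}}^{\omega}$) is Theorem~\ref{Dwpreserve} and Progress is Theorem~\ref{Dwprogress}, both of which have already been proved in the appendix using precisely the lemmas listed above (the only genuinely new case in Preservation being \rref{PTS-App} against a $\lambda$-redex, which is handled via the three-way case split on whether $q_0 = 0$, $q_0 = \omega$, or $q_0 \notin \{0,\omega\}$, using Lambda Inversion and Factorization). For heap soundness I would then point to Theorem~\ref{Dwheap}, which follows from the helper Lemma~\ref{heaphelperDepw} by instantiating $\Gamma_1 := 0 \cdot \Gamma$ and $\Gamma_2 := \Gamma$, exactly as in the $\mathcal{Q}_{\mathbb{N}}$ case; Lemma~\ref{heaphelperDepw} in turn reuses the same chain of lemmas plus the new \rref{PTS-App}/$\lambda$-redex analysis. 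I would also note, as the excerpt already does, that the Multi-Substitution and Elaboration lemmas (Lemmas~\ref{MultiSubst} and \ref{Elaboration}) continue to hold for LDC($\mathcal{Q}_{\mathbb{N}}^{\omega}$), so the extended-context machinery used to state heap soundness in the PTS setting is unaffected.

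The main obstacle --- and the reason the appendix had to reprove rather than merely cite --- is the Factorization lemma in the presence of $\omega$. When $q = \omega$, one cannot simply divide the context by $q$; instead one must first apply the Independence lemma to zero out every grade that is neither $0$ nor $\omega$, replace the context by its $\{0,\omega\}$-saturation $\Gamma^{\{0,\omega\}}$, and only then observe that $\omega \cdot \Gamma^{\{0,\omega\}} = \Gamma^{\{0,\omega\}}$ and $\Gamma <: \Gamma^{\{0,\omega\}}$. Checking that this construction is compatible with the $\lambda$- and $\mathbf{let}$-rules (in particular that the side condition $q = \omega \Rightarrow r = \omega$ of \rref{PTS-LamOmega} is exactly what makes Factorization go through in the $\lambda$ case) is the technical heart of the argument. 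Everything downstream --- Splitting, Substitution, Preservation, and the heap helper --- then inherits the $\omega$-handling from Factorization without further surprises, so once that lemma is in place the remaining steps are routine adaptations of the $\mathcal{Q}_{\mathbb{N}}$ proofs.
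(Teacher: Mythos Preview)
Your proposal is correct and takes essentially the same approach as the paper: the paper's proof is the one-liner ``Follows by theorems \ref{Dwpreserve}, \ref{Dwprogress} and \ref{Dwheap},'' and your write-up simply unpacks what those three theorems are and which auxiliary lemmas feed into them. Your additional commentary on the $\omega$-case of Factorization and the role of the Independence lemma is accurate background but already absorbed into the cited theorems, so for the statement at hand no further argument is needed.
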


\begin{proof}
Follows by theorems \ref{Dwpreserve}, \ref{Dwprogress} and \ref{Dwheap}.
\end{proof}

%--------------------------------------------------------------------------------------------------

%--------------------------------------------------------------------------------------------------
\section{Comparison of LDC with Other Calculi}
%--------------------------------------------------------------------------------------------------

\begin{theorem}[Theorem \ref{LNLTyping}]
The translation from LNL $\lambda$-calculus to LDC($ \mathcal{Q}_{\text{Lin} } $), shown in Figure \ref{termT}, is sound:
\begin{itemize}
\item If $ \Theta  ;  \Gamma  \vdash_{\mathcal{L} }   \ottmv{e}   :  \ottnt{A} $, then $     \overline{ \Theta }  ^{  \omega  }   ,   \overline{ \Gamma }   ^{  1  }   \vdash   \overline{  \ottmv{e}  }   :^{  1  }   \overline{ \ottnt{A} }  $. 
\item If $ \Theta  \vdash_{\mathcal{C} }   \ottmv{t}   :  \ottnt{X} $, then $   \overline{ \Theta }  ^{  \omega  }   \vdash   \overline{  \ottmv{t}  }   :^{  \omega  }   \overline{ \ottnt{X} }  $.
\item If $  \ottmv{e}   =_{\beta}  \ottnt{f} $, then $  \overline{  \ottmv{e}  }   =_{\beta}   \overline{ \ottnt{f} }  $. If $  \ottmv{s}   =_{\beta}   \ottmv{t}  $ then $  \overline{  \ottmv{s}  }   =_{\beta}   \overline{  \ottmv{t}  }  $.
\end{itemize}
\end{theorem}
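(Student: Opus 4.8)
The plan is to prove all three parts by induction on the relevant derivation of the LNL $\lambda$-calculus. Parts one and two (preservation of typing) should be proved \emph{simultaneously} by mutual induction on the derivations of $ \Theta  ;  \Gamma  \vdash_{\mathcal{L} }   \ottmv{e}   :  \ottnt{A} $ and $ \Theta  \vdash_{\mathcal{C} }   \ottmv{t}   :  \ottnt{X} $, since the LNL calculus's linear and nonlinear typing judgments are themselves mutually recursive (the $F$-elimination and $G$-introduction rules cross between the two worlds). Part three (preservation of $\beta$-equivalence) is then proved afterwards by induction on the derivation of $ \ottmv{e}   =_{\beta}  \ottnt{f} $ (and $ \ottmv{s}   =_{\beta}   \ottmv{t}  $), again mutually, using the fact that $=_{\beta}$ in LDC is a congruence closed under LDC's $\beta$-reductions (Figure \ref{equality} and the small-step rules of Figure \ref{SmallStep}).

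\textbf{Key steps for parts one and two.} First I would fix the bookkeeping: establish that the type translation $ \overline{ \ottnt{A} } $ commutes with the context translation in the sense that every assumption of a nonlinear context $\Theta$ lands at grade $ \omega $ and every assumption of a linear context $\Gamma$ lands at grade $1$, and note the key arithmetic facts in $ \mathcal{Q}_{\text{Lin} } $: $  \omega   +   \omega   =   \omega  $, $  1   +   1   =   \omega  $, $  \omega   \cdot   1   =   \omega  $, $  1   <:   \omega  $ do not hold but $  \omega   <:   1  $ does. Then I would walk through each LNL typing rule. The pure intuitionistic fragment ($\lambda$-abstraction, application, pairing, projections over $\Theta$) translates to LDC rules with all grades $ \omega $; here the duplicability of nonlinear variables is matched by $ \omega $ being idempotent for $+$, so context-splitting in LDC's multiplicative rules (\rref{PTS-App}, \rref{PTS-Pair}) is absorbed. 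The linear fragment ($\multimap$-introduction/elimination, $\otimes$, $I$) translates to LDC rules with grade $1$ on the linear part and $ \omega $ on $\Theta$, and LDC's context addition splits the linear context exactly as LNL's does. The interesting cases are the two adjoint constructors: $F$ takes a nonlinear term to a linear one, translated as $ (   \overline{  \ottmv{s}  }  ^{  \omega  } ,   \mathbf{unit}   ) $, and its type $F \:  \ottnt{X} $ as $ {}^{  \omega  }\!    \overline{ \ottnt{X} }    \: \times \:   \mathbf{Unit}  $; $G$ takes a linear term \emph{in an empty linear context} to a nonlinear one, translated as the identity on terms, with $G \:  \ottnt{A} $ translated as $ \overline{ \ottnt{A} } $. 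For $G$-introduction one must check that a linear judgment $ \Theta  ;  \emptyset  \vdash_{\mathcal{L} }   \ottmv{e}   :  \ottnt{A} $, which by IH gives $  \overline{ \Theta }  ^{  \omega  }   \vdash   \overline{  \ottmv{e}  }   :^{  1  }   \overline{ \ottnt{A} }  $, can be re-derived at grade $ \omega $ to yield $  \overline{ \Theta }  ^{  \omega  }   \vdash   \overline{  \ottmv{e}  }   :^{  \omega  }   \overline{ \ottnt{A} }  $ --- this is exactly the Multiplication Lemma (Lemma \ref{BLDMultP}) applied with $\ottnt{r_{{\mathrm{0}}}} =  \omega $, noting $  \omega   \cdot  \overline{ \Theta }  ^{  \omega  }  =  \overline{ \Theta }  ^{  \omega  } $. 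For $F$-elimination, which in LNL discharges a nonlinear variable, one uses the translation's $ \mathbf{let}_{  1  } $ pattern-matching together with the Substitution Lemma to push the translated scrutinee into place, and the side conditions $ \ottnt{q_{{\mathrm{0}}}}  <:   1  $ are satisfied by $\ottnt{q_{{\mathrm{0}}}} =  1 $.

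\textbf{Key steps for part three and the main obstacle.} For part three I would check that each generating rule of LNL's $\beta$-equivalence maps to a chain of LDC $\beta$-steps (hence $=_{\beta}$) under the translation: e.g.\ the LNL reduction $ (   \lambda  \ottnt{a}  :  \ottnt{A}  .   \ottmv{e}   )   \:  \ottnt{f}  \leadsto  \ottnt{e}  \{  \ottnt{f}  /  \ottnt{a}  \} $ translates to an LDC \rref{Step-AppBeta} at grade $1$, $\mathbf{fst}( (   \ottmv{s}   ,   \ottmv{t}   ) ) \leadsto \ottmv{s}$ translates to an LDC \rref{Step-LetPairBeta} (using that the translation of $\mathbf{fst}$ is $ \mathbf{let}_{  1  } \: (  \ottmv{x} ^{  \omega  } ,  \ottmv{y}  ) \: \mathbf{be} \: \ldots$), and $ \mathbf{let} \: F  \ottmv{x}  \: \mathbf{be} \:   (  F \:   \ottmv{s}  )   \: \mathbf{in} \:  \ottnt{f}  \leadsto  \ottnt{f}  \{  \ottmv{s}  /  \ottmv{x}  \} $ translates to a two-step LDC reduction unfolding the nested $ \mathbf{let}_{  1  } $ pattern; the congruence and equivalence closure rules are handled by the corresponding closure properties of LDC's $=_{\beta}$. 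The main obstacle I anticipate is \emph{not} any single case but getting the grade arithmetic to line up uniformly in the adjoint cases --- in particular making sure that whenever the IH delivers a derivation at one grade but the conclusion needs it at another (the $G$ and $F$ cases, and any LNL rule that uses a nonlinear term in a linear position via $\mathbf{derelict}$), the passage is justified by exactly one of the Multiplication Lemma, the Substitution Lemma, or \rref{PTS-SubR} with $  \omega   <:   1  $, and that the translation of types was chosen so that no spurious grade mismatch arises (this is why $ \overline{  \ottnt{X}  \to  \ottnt{Y}  } $ and $ \overline{  \ottnt{X}  \: \times \:  \ottnt{Y}  } $ carry $ \omega $ annotations). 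A secondary subtlety is the freshness side conditions on the translated $ \mathbf{let} $-binders ($x, y$ fresh), which must be threaded through so that substitution lemmas apply without capture; this is routine but must be stated.
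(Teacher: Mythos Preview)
Your proposal is correct and takes essentially the same approach as the paper, whose proof is only the one-line sketch ``by mutual induction on $\Theta;\Gamma \vdash_{\mathcal{L}} e : A$ and $\Theta \vdash_{\mathcal{C}} t : X$ for typing; by case analysis on $e =_\beta f$ and $s =_\beta t$ for $\beta$-equality.'' Your elaboration of the adjoint cases---using the Multiplication Lemma with $r_0 = \omega$ for $G$-introduction and \rref{ST-SubR} with $\omega <: 1$ for $\mathbf{derelict}$---correctly fills in the details the paper omits; the only routine ingredient you leave implicit is that the translation commutes with substitution, $\overline{e\{f/a\}} = \overline{e}\{\overline{f}/a\}$, which is needed in part three to close each $\beta$-axiom case.
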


\begin{proof}
By mutual induction on $ \Theta  ;  \Gamma  \vdash_{\mathcal{L} }   \ottmv{e}   :  \ottnt{A} $ and $ \Theta  \vdash_{\mathcal{C} }   \ottmv{t}   :  \ottnt{X} $ for typing.\\
By case analysis on $  \ottmv{e}   =_{\beta}  \ottnt{f} $ and $  \ottmv{s}   =_{\beta}   \ottmv{t}  $ for $\beta$-equality. 
\end{proof}

%----------------------------------------------------------------------------------------------------

\begin{theorem}[Theorem \ref{GraD}]
With $ \mathcal{Q}_{\mathbb{N} }^{\omega} $ as the parametrizing structure, if $ \Gamma  \vdash  \ottnt{a}  : \:  \ottnt{A} $ in \textsc{GraD}, then $ \Gamma  \vdash  \ottnt{a}  :^{  1  }  \ottnt{A} $ in LDC. Further, if $ \vdash  \ottnt{a}  \leadsto  \ottnt{a'} $ in \textsc{GraD}, then $ \vdash  \ottnt{a}  \leadsto  \ottnt{a'} $ in LDC.
\end{theorem}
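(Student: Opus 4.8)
The plan is to prove Theorem~\ref{GraD} by exhibiting \textsc{GraD} (over a fixed $ \mathcal{Q}_{\mathbb{N} }^{\omega} $) as a syntactic sub-system of LDC in which the observer level is always pinned at $ 1 $. Since \textsc{GraD} uses typing judgments of the form $ \Gamma  \vdash  \ottnt{a}  : \:  \ottnt{A} $ with graded contexts but no label on the turnstile, the natural embedding sends this to the LDC judgment $ \Gamma  \vdash  \ottnt{a}  :^{  1  }  \ottnt{A} $, leaving terms, types, and contexts untouched. First I would carefully line up the two rule sets: the context operations ($ \lfloor  \Gamma  \rfloor $, $ \Gamma_{{\mathrm{1}}}  +  \Gamma_{{\mathrm{2}}} $, $\ottnt{q} \cdot \Gamma$, $ \Gamma  <:  \Gamma' $) are defined identically in both systems (Section~\ref{subsec:linsimple} and the discussion of \textsc{GraD}), and each \textsc{GraD} rule has a corresponding LDC rule specialized to $\ottnt{q} =  1 $ — \rref{PTS-Var}, \rref{PTS-Pi}, \rref{PTS-App}, \rref{PTS-Conv}, \rref{PTS-Sigma}, \rref{PTS-Pair}, \rref{PTS-LetPair}, \rref{PTS-Sum}, \rref{PTS-Inj1}, \rref{PTS-Case}, \rref{PTS-Weak}, \rref{PTS-SubL}, \rref{PTS-SubR}, and the lambda rule.

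The proof is then by induction on the \textsc{GraD} typing derivation. For most rules this is immediate: the premise contexts are reproduced verbatim, the IH gives the LDC derivations at label $ 1 $, and the corresponding LDC rule instantiated at $\ottnt{q} =  1 $ fires directly, using that $  1   \cdot  \Gamma   = \Gamma$ and $  1   \cdot  \ottnt{r}   = \ottnt{r}$ so the side-conditions match. The one place needing care is the lambda rule: LDC uses \rref{PTS-LamOmega}, whose conclusion is multiplied by an auxiliary grade $\ottnt{q_{{\mathrm{0}}}} \neq  0 $ and which carries the constraint $ \ottnt{q}  =   \omega    \Rightarrow   \ottnt{r}  =   \omega  $. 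Taking $\ottnt{q_{{\mathrm{0}}}} =  1 $ and $\ottnt{q} =  1 $, the multiplication is trivial, and the constraint $  1   =   \omega    \Rightarrow \ldots$ is vacuously true, so \rref{PTS-LamOmega} degenerates to exactly \textsc{GraD}'s lambda rule — provided \textsc{GraD}'s lambda rule checks the body at the same grade as the conclusion, which it does. I would also check that \textsc{GraD}'s rule for well-formed contexts / sorts ($\mathsf{axiom}$) maps onto \rref{PTS-Axiom}, and that the $\mathsf{conv}$ rule matches \rref{PTS-Conv} (both use $\beta$-equality closed the same way; LDC's congruence rules additionally check grade annotations, but these annotations are identical on both sides of the equation here since we have not altered the term syntax).

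The operational-semantics half is easier: the call-by-name small-step relation of LDC (Figure~\ref{SmallStep} together with the standard congruence rules) is defined so that $\beta$-rules match the grade on the introduction form with the grade on the elimination form, and \textsc{GraD}'s reduction relation is designed the same way; since the embedding is the identity on terms, a \textsc{GraD} step $ \vdash  \ottnt{a}  \leadsto  \ottnt{a'} $ is literally an LDC step. I would handle this by a short induction on the derivation of $ \vdash  \ottnt{a}  \leadsto  \ottnt{a'} $, noting each \textsc{GraD} step rule is (a fortiori an instance of) an LDC step rule. The main obstacle I anticipate is purely bookkeeping: confirming that \textsc{GraD}'s presentation of the lambda/$\Pi$ rules and its subsumption/weakening rules are literally the $\ottnt{q} =  1 $, $\ottnt{q_{{\mathrm{0}}}} =  1 $ specializations of ours rather than differing in some minor convention (e.g. whether weakening adds a $ 0 $-graded assumption, or whether $\Pi$-formation tracks grades in the codomain); any such mismatch would need a small lemma bridging the two, but given that LDC was explicitly designed following \textsc{GraD} (as the paper remarks), I expect these to coincide and the theorem to follow without surprises.
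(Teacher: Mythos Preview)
Your proposal is correct and takes essentially the same approach as the paper: the paper's proof is the single line ``By induction on \textsc{GraD} typing judgment,'' and you have simply spelled out the cases, correctly identifying that the only non-trivial point is the lambda rule, where instantiating \rref{PTS-LamOmega} with $\ottnt{q}= 1 $ and $\ottnt{q_{{\mathrm{0}}}}= 1 $ makes the $\omega$-constraint vacuous and recovers \textsc{GraD}'s rule exactly.
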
 

\begin{proof}
By induction on \textsc{GraD} typing judgment.
\end{proof}

%----------------------------------------------------------------------------------------------------

\begin{theorem}[Theorem \ref{DDCT}]
With $\mathcal{L}$ as the parametrizing structure, if $ \Gamma  \vdash  \ottnt{a}  :^{ \ell }  \ottnt{A} $ in $\text{DDC}^{\top}$, then $ \Gamma  \vdash  \ottnt{a}  :^{ \ell }  \ottnt{A} $ in LDC. Further, if $ \vdash  \ottnt{a}  \leadsto  \ottnt{a'} $ in $\text{DDC}^{\top}$, then $ \vdash  \ottnt{a}  \leadsto  \ottnt{a'} $ in LDC.
\end{theorem}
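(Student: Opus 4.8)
The plan is to prove Theorem~\ref{DDCT} by a straightforward induction on the typing derivation of $\text{DDC}^{\top}$, exhibiting for each $\text{DDC}^{\top}$ rule a corresponding derivation in LDC($\mathcal{L}$). First I would set up the correspondence between the syntax of the two calculi: since $\text{DDC}^{\top}$ is the system from \citet{ddc} on which LDC($\mathcal{L}$) was modeled, the term and type formers should match up essentially on the nose, modulo the notational conventions already noted in Section~\ref{DCCandBLDC} (annotated $\Pi$/$\Sigma$ types in LDC versus modal types $T_\ell\,A$ in SDC, which are definable as $\Sigma x :^{\ell}\! A.\, \mathbf{Unit}$, and function/product annotations corresponding to $T_\ell$-prefixed domains). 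I would state the translation of contexts and types explicitly (likely it is the identity or close to it) and then proceed rule by rule.

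The core of the argument is the rule-by-rule check. For the structural rules (variable, weakening, conversion, the sub-left and sub-right rules) the LDC counterparts are immediate, using that the lattice operations $\sqcap, \sqcup, \top, \bot, \sqsubseteq$ of LDC($\mathcal{L}$) are interpreted exactly as in $\text{DDC}^{\top}$. For the introduction and elimination rules of $\Pi$, $\Sigma$, and sums, I would invoke the metatheoretic lemmas already established for LDC($\mathcal{L}$) — in particular the Substitution lemma (Lemma~\ref{DSSubst}/Lemma~\ref{DSubst}), the Multiplication lemma (Lemma~\ref{DSMult}), and the Restricted Upgrading lemma (Lemma~\ref{ResUpP}) — to bridge any mismatch between how $\text{DDC}^{\top}$ and LDC handle the observer level when typing bodies of abstractions and pair eliminations. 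Any place where $\text{DDC}^{\top}$ uses its protection judgment or a bind-style elimination, I would translate into the labeled-judgment machinery of LDC, exactly as the Sealing Calculus embedding into SDC does. For the operational part of the statement, since both calculi use the same call-by-name $\beta$-reduction and the translation on terms is (essentially) the identity, the step relation is preserved trivially by inspection of the $\beta$-rules in Figure~\ref{SmallStep}.

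The main obstacle I anticipate is not any single hard lemma but rather the bookkeeping around the precise form of $\text{DDC}^{\top}$'s rules versus LDC's: $\text{DDC}^{\top}$ may track types in a slightly different world-discipline (e.g.\ where type-level annotations are zeroed or sent to $\top$) than LDC's ``GraD-style'' separation of consumptive and contemplative use, so the induction hypothesis has to be phrased carefully enough that the $\Pi$- and $\Sigma$-formation and conversion cases go through. I would handle this by checking that the two disciplines agree on the fragment of $\text{DDC}^{\top}$ being embedded — concretely, that LDC($\mathcal{L}$)'s \rref{PTS-Var}, \rref{PTS-Pi}, and \rref{PTS-Conv} subsume the corresponding $\text{DDC}^{\top}$ rules when the semiring is instantiated to a lattice — and, if a direct match fails, strengthen the statement to carry along the auxiliary invariant that every type in the context is derivable at level $\top$ (as in the $\vdash_0$ shorthand). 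Once that invariant is in place, each case reduces to an application of one of the already-proved LDC lemmas plus the corresponding LDC typing rule, and the proof closes by induction.
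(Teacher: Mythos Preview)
Your high-level plan---induction on the $\text{DDC}^{\top}$ typing derivation---is exactly what the paper does, and the paper's entire proof is the one line ``By induction on $\text{DDC}^{\top}$ typing judgment.'' So the approach is right, but you are working far harder than necessary and a couple of your anticipations are mistaken.

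First, there is no translation to set up: the theorem is stated with literally the same $\Gamma$, $a$, $\ell$, $A$ on both sides, and this is because LDC($\mathcal{L}$) was built \emph{from} $\text{DDC}^{\top}$ (Section~\ref{subsec:res}). The judgment form~(\ref{ddctyp}) is already (\ref{typform}) specialized to a lattice, so each $\text{DDC}^{\top}$ rule is directly an instance of the corresponding LDC rule; no Substitution, Multiplication, or Restricted Upgrading lemma is needed to close any case. Second, $\text{DDC}^{\top}$ does not have a protection judgment or a bind-style elimination---you are conflating it with DCC. $\text{DDC}^{\top}$ already uses the labeled-judgment style, which is precisely why the embedding is trivial. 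Third, the ``main obstacle'' you anticipate about type-level discipline does not arise: LDC's zeroing-out of types \`{a} la \textsc{GraD} was chosen to agree with $\text{DDC}^{\top}$ on the lattice fragment, so the $\Pi$/$\Sigma$-formation and conversion cases go through without any auxiliary invariant.

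In short: keep the induction, drop the translation layer and the appeal to metatheoretic lemmas, and the proof collapses to a rule-by-rule inspection that each $\text{DDC}^{\top}$ rule is already an LDC($\mathcal{L}$) rule.
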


\begin{proof}
By induction on $\text{DDC}^{\top}$ typing judgment.
\end{proof}

%-----------------------------------------------------------------------------------------------------

\section{Derivations for Join and Fork in LDC($ \mathcal{L} $)} \label{ap:der}

\begin{proposition}
$  \emptyset   \vdash  \ottnt{c_{{\mathrm{1}}}}  : \:     T_{ \ell_{{\mathrm{1}}} } \:   T_{ \ell_{{\mathrm{2}}} } \:  \ottnt{A}     \to   T_{  \ell_{{\mathrm{1}}}  \: \sqcup \:  \ell_{{\mathrm{2}}}  } \:  \ottnt{A}   $ where \[ \ottnt{c_{{\mathrm{1}}}} :=  \lambda  \ottmv{x}  .   \eta_{  \ell_{{\mathrm{1}}}  \: \sqcup \:  \ell_{{\mathrm{2}}}  } \:   (   \mathbf{let} \: (  \ottmv{y} ^{ \ell_{{\mathrm{2}}} } , \_  ) \: \mathbf{be} \:   (   \mathbf{let} \: (  \ottmv{z} ^{ \ell_{{\mathrm{1}}} } , \_  ) \: \mathbf{be} \:   \ottmv{x}   \: \mathbf{in} \:   \ottmv{z}    )   \: \mathbf{in} \:   \ottmv{y}    )   , \text{ where }  \eta_{ \ell } \:  \ottnt{a}  \triangleq  (  \ottnt{a} ^{ \ell } ,   \mathbf{unit}   )  \] .
\end{proposition}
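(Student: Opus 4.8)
The plan is to exhibit the typing derivation by working outward from the innermost subterm. Since $T_{\ell}\:A$ unfolds to ${}^{\ell}\!A \times \mathbf{Unit}$ and $\eta_{\ell}\:a$ to $(a^{\ell},\mathbf{unit})$, the term $\ottnt{c_{{\mathrm{1}}}}$ is really a $\lambda$ whose body is a pair-introduction wrapping two nested let-pair eliminations. So I would set up the derivation under the hypothesis $\ottmv{x} :^{\bot}  T_{\ell_{{\mathrm{1}}}}\:(T_{\ell_{{\mathrm{2}}}}\:A)$, i.e. $\ottmv{x} :^{\bot} {}^{\ell_{{\mathrm{1}}}}\!({}^{\ell_{{\mathrm{2}}}}\!A \times \mathbf{Unit}) \times \mathbf{Unit}$, recalling that the erased annotation on the function domain is $\bot$ and the observer's level on the whole judgment is $\bot$.

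First I would type the inner elimination $\mathbf{let}\:(\ottmv{z}^{\ell_{{\mathrm{1}}}},\_)\:\mathbf{be}\:\ottmv{x}\:\mathbf{in}\:\ottmv{z}$ using \rref{ST-LetPairD}: from $\ottmv{x}$ at level $\bot$ one extracts $\ottmv{z} :^{\bot \sqcup \ell_{{\mathrm{1}}}} {}^{\ell_{{\mathrm{2}}}}\!A \times \mathbf{Unit}$, i.e. $\ottmv{z} :^{\ell_{{\mathrm{1}}}} T_{\ell_{{\mathrm{2}}}}\:A$ (using $\bot \sqcup \ell_{{\mathrm{1}}} = \ell_{{\mathrm{1}}}$), and returns $\ottmv{z}$ — but here I need to be careful: \rref{ST-LetPairD} checks the body at level $\ell$ with $\ottmv{z}$ held at $\ell \sqcup \ell_{{\mathrm{1}}}$, so to hand $\ottmv{z}$ on at level $\ell_{{\mathrm{1}}}$ I should run the outer judgment at some level $\ell$ with $\ell \sqsupseteq \ell_{{\mathrm{1}}}$; the natural choice is to type the entire nested expression at level $\ell_{{\mathrm{1}}} \sqcup \ell_{{\mathrm{2}}}$ throughout, since that is where $\eta_{\ell_{{\mathrm{1}}} \sqcup \ell_{{\mathrm{2}}}}$ ultimately deposits it. Then the middle elimination $\mathbf{let}\:(\ottmv{y}^{\ell_{{\mathrm{2}}}},\_)\:\mathbf{be}\:(\ldots)\:\mathbf{in}\:\ottmv{y}$, again by \rref{ST-LetPairD}, takes the result of type $T_{\ell_{{\mathrm{2}}}}\:A$ and yields $\ottmv{y} :^{(\ell_{{\mathrm{1}}} \sqcup \ell_{{\mathrm{2}}}) \sqcup \ell_{{\mathrm{2}}}} A = \ottmv{y} :^{\ell_{{\mathrm{1}}} \sqcup \ell_{{\mathrm{2}}}} A$, returning $\ottmv{y}$ at level $\ell_{{\mathrm{1}}} \sqcup \ell_{{\mathrm{2}}}$. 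Finally $\eta_{\ell_{{\mathrm{1}}} \sqcup \ell_{{\mathrm{2}}}}$ applied to a term of type $A$ that is observable at $\ell_{{\mathrm{1}}} \sqcup \ell_{{\mathrm{2}}}$ produces, via \rref{ST-PairD} (the $\mathbf{WPairD}$ rule) and \rref{ST-UnitD}, a term of type ${}^{\ell_{{\mathrm{1}}} \sqcup \ell_{{\mathrm{2}}}}\!A \times \mathbf{Unit} = T_{\ell_{{\mathrm{1}}} \sqcup \ell_{{\mathrm{2}}}}\:A$, and crucially this introduction form is well-typed at level $\bot$ because the emboxing rule lets a term observable at $\ell_{{\mathrm{1}}} \sqcup \ell_{{\mathrm{2}}}$ be released at any lower level. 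Then \rref{ST-LamD} discharges $\ottmv{x}$ and gives the desired closed term at level $\bot$.

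I expect the main obstacle to be bookkeeping the observer-level annotations consistently across the three nested \rref{ST-LetPairD} applications: each such rule forces the eliminated scrutinee, the bound variable, and the body to sit at related but not identical levels, and the context in the conclusion is the pointwise meet of the premise contexts. I need to check that propagating level $\ell_{{\mathrm{1}}} \sqcup \ell_{{\mathrm{2}}}$ as the working level for the body does not clash with the requirement that the scrutinee $\ottmv{x}$ only be observed at $\bot$ — this is exactly where the flexibility of a labelled typing judgment (varying the observer's level, unavailable in graded-context systems) does the work, and where one must invoke \rref{ST-SubRD}/\rref{ST-SubLD} to reconcile the levels. A clean way to organize the write-up is to present the derivation tree bottom-up with each node annotated, deferring the routine meet-computations on contexts to a remark; since all contexts here are single-variable ($\ottmv{x}$) or extensions by fresh variables used linearly-in-the-lattice-sense, these computations collapse to idempotence of $\sqcap$ and the identities $\bot \sqcup \ell = \ell$, $\ell \sqcup \ell = \ell$.
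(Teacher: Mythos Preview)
Your proposal is correct and follows essentially the same approach as the paper: unfold $T_\ell$ and $\eta_\ell$, type the two nested \textsc{ST-LetPairD} eliminations, wrap with \textsc{ST-PairD}/\textsc{ST-UnitD} for $\eta_{\ell_1\sqcup\ell_2}$, and discharge $x$ with \textsc{ST-LamD}. The only cosmetic difference is that the paper types the inner let at level $\ell_1$ and then applies \textsc{ST-SubRD} to raise the result to $\ell_1\sqcup\ell_2$, whereas you propose running both eliminations at $\ell_1\sqcup\ell_2$ from the start (applying \textsc{ST-SubRD} to $x$ earlier); either placement of the sub-rule works. Your worry that ``the scrutinee $x$ only be observed at $\bot$'' is a slight misreading: the context grade $\bot$ on $x$ means $x$ is observable at \emph{every} level, so there is no clash to resolve beyond the routine \textsc{ST-SubRD} you already anticipate.
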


\begin{proof}
\begin{enumerate}
\setlength\itemsep{5pt}
\item \label{31} $  \ottmv{x}  :^{  \bot  }   T_{ \ell_{{\mathrm{1}}} } \:   T_{ \ell_{{\mathrm{2}}} } \:  \ottnt{A}     \vdash   \ottmv{x}   :^{ \ell_{{\mathrm{1}}} }   T_{ \ell_{{\mathrm{1}}} } \:   T_{ \ell_{{\mathrm{2}}} } \:  \ottnt{A}   $ [By \rref{ST-VarD,ST-SubRD}]
\item \label{32} $   \ottmv{x}  :^{  \bot  }   T_{ \ell_{{\mathrm{1}}} } \:   T_{ \ell_{{\mathrm{2}}} } \:  \ottnt{A}     ,   \ottmv{z}  :^{ \ell_{{\mathrm{1}}} }   T_{ \ell_{{\mathrm{2}}} } \:  \ottnt{A}     \vdash   \ottmv{z}   :^{ \ell_{{\mathrm{1}}} }   T_{ \ell_{{\mathrm{2}}} } \:  \ottnt{A}  $
\item \label{33} $  \ottmv{x}  :^{  \bot  }   T_{ \ell_{{\mathrm{1}}} } \:   T_{ \ell_{{\mathrm{2}}} } \:  \ottnt{A}     \vdash   \mathbf{let} \: (  \ottmv{z} ^{ \ell_{{\mathrm{1}}} } , \_  ) \: \mathbf{be} \:   \ottmv{x}   \: \mathbf{in} \:   \ottmv{z}    :^{ \ell_{{\mathrm{1}}} }   T_{ \ell_{{\mathrm{2}}} } \:  \ottnt{A}  $ [From (\ref{31}) and (\ref{32}), by \rref{ST-LetPairD}]
\item \label{34} $  \ottmv{x}  :^{  \bot  }   T_{ \ell_{{\mathrm{1}}} } \:   T_{ \ell_{{\mathrm{2}}} } \:  \ottnt{A}     \vdash   \mathbf{let} \: (  \ottmv{z} ^{ \ell_{{\mathrm{1}}} } , \_  ) \: \mathbf{be} \:   \ottmv{x}   \: \mathbf{in} \:   \ottmv{z}    :^{  \ell_{{\mathrm{1}}}  \: \sqcup \:  \ell_{{\mathrm{2}}}  }   T_{ \ell_{{\mathrm{2}}} } \:  \ottnt{A}  $ [From  (\ref{33}), by \rref{ST-SubRD}]
\item \label{35} $   \ottmv{x}  :^{  \bot  }   T_{ \ell_{{\mathrm{1}}} } \:   T_{ \ell_{{\mathrm{2}}} } \:  \ottnt{A}     ,   \ottmv{y}  :^{  \ell_{{\mathrm{1}}}  \: \sqcup \:  \ell_{{\mathrm{2}}}  }  \ottnt{A}    \vdash   \ottmv{y}   :^{  \ell_{{\mathrm{1}}}  \: \sqcup \:  \ell_{{\mathrm{2}}}  }  \ottnt{A} $
\item \label{36} $  \ottmv{x}  :^{  \bot  }   T_{ \ell_{{\mathrm{1}}} } \:   T_{ \ell_{{\mathrm{2}}} } \:  \ottnt{A}     \vdash   \mathbf{let} \: (  \ottmv{y} ^{ \ell_{{\mathrm{2}}} } , \_  ) \: \mathbf{be} \:   (   \mathbf{let} \: (  \ottmv{z} ^{ \ell_{{\mathrm{1}}} } , \_  ) \: \mathbf{be} \:   \ottmv{x}   \: \mathbf{in} \:   \ottmv{z}    )   \: \mathbf{in} \:   \ottmv{y}    :^{  \ell_{{\mathrm{1}}}  \: \sqcup \:  \ell_{{\mathrm{2}}}  }  \ottnt{A} $ [From (\ref{34}) and (\ref{35}), by \rref{ST-LetPairD}]
\item \label{37} $  \ottmv{x}  :^{  \bot  }   T_{ \ell_{{\mathrm{1}}} } \:   T_{ \ell_{{\mathrm{2}}} } \:  \ottnt{A}     \vdash   \eta_{  \ell_{{\mathrm{1}}}  \: \sqcup \:  \ell_{{\mathrm{2}}}  } \:   (   \mathbf{let} \: (  \ottmv{y} ^{ \ell_{{\mathrm{2}}} } , \_  ) \: \mathbf{be} \:   (   \mathbf{let} \: (  \ottmv{z} ^{ \ell_{{\mathrm{1}}} } , \_  ) \: \mathbf{be} \:   \ottmv{x}   \: \mathbf{in} \:   \ottmv{z}    )   \: \mathbf{in} \:   \ottmv{y}    )    :^{  \bot  }   T_{  \ell_{{\mathrm{1}}}  \: \sqcup \:  \ell_{{\mathrm{2}}}  } \:  \ottnt{A}  $
\end{enumerate}
\end{proof}

%-----------------------------------------------------------------------------------------------------

\begin{proposition}
$  \emptyset   \vdash  \ottnt{c_{{\mathrm{2}}}}  : \:     T_{  \ell_{{\mathrm{1}}}  \: \sqcup \:  \ell_{{\mathrm{2}}}  } \:  \ottnt{A}    \to   T_{ \ell_{{\mathrm{1}}} } \:   T_{ \ell_{{\mathrm{2}}} } \:  \ottnt{A}    $ where $$\ottnt{c_{{\mathrm{2}}}} :=  \lambda  \ottmv{x}  .   \eta_{ \ell_{{\mathrm{1}}} } \:   \eta_{ \ell_{{\mathrm{2}}} } \:   (   \mathbf{let} \: (  \ottmv{y} ^{  \ell_{{\mathrm{1}}}  \: \sqcup \:  \ell_{{\mathrm{2}}}  } , \_  ) \: \mathbf{be} \:   \ottmv{x}   \: \mathbf{in} \:   \ottmv{y}    )    , \text{ where }  \eta_{ \ell } \:  \ottnt{a}  \triangleq  (  \ottnt{a} ^{ \ell } ,   \mathbf{unit}   )  $$.
\end{proposition}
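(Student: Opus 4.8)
The plan is to mirror the derivation just given for $\ottnt{c_{{\mathrm{1}}}}$, but in the opposite direction: build the body of $\ottnt{c_{{\mathrm{2}}}}$ from the inside out, starting with the variable $\ottmv{x}$ held at the joined level $ \ell_{{\mathrm{1}}}  \: \sqcup \:  \ell_{{\mathrm{2}}} $, unpacking it with \rref{ST-LetPairD} to obtain the payload at that same level, then re-wrapping it in two nested $\eta$-boxes so that it ends up typed at level $ \bot $ with type $ T_{ \ell_{{\mathrm{1}}} } \:   T_{ \ell_{{\mathrm{2}}} } \:  \ottnt{A}  $. Recall that $ T_{ \ottnt{m} } \:  \ottnt{A} $ abbreviates $ {}^{ \ottnt{m} }\!  \ottnt{A}  \: \times \:   \mathbf{Unit}  $ and $ \eta_{ \ell } \:  \ottnt{a}  \triangleq  (  \ottnt{a} ^{ \ell } ,   \mathbf{unit}   ) $, so $\eta_\ell$ is just the $\mathbf{WPair}$ introduction form specialized to a trivial second component.

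I would carry out the steps in the following order. First, by \rref{ST-VarD} and \rref{ST-SubRD}, derive $  \ottmv{x}  :^{  \bot  }   T_{  \ell_{{\mathrm{1}}}  \: \sqcup \:  \ell_{{\mathrm{2}}}  } \:  \ottnt{A}     \vdash   \ottmv{x}   :^{  \ell_{{\mathrm{1}}}  \: \sqcap \:  \ell_{{\mathrm{2}}}  }   T_{  \ell_{{\mathrm{1}}}  \: \sqcup \:  \ell_{{\mathrm{2}}}  } \:  \ottnt{A}  $ --- more precisely, at whatever observer level the elimination rule needs; since $ \bot  \sqsubseteq  \ell $ for every $\ell$, \rref{ST-SubRD} lets me place the scrutinee at the required level. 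Second, with the assumption $ \ottmv{y}  :^{  \ell_{{\mathrm{1}}}  \: \sqcup \:  \ell_{{\mathrm{2}}}  }  \ottnt{A} $ available from the binding side, apply \rref{ST-LetPairD} to get $  \ottmv{x}  :^{  \bot  }   T_{  \ell_{{\mathrm{1}}}  \: \sqcup \:  \ell_{{\mathrm{2}}}  } \:  \ottnt{A}     \vdash   \mathbf{let} \: (  \ottmv{y} ^{  \ell_{{\mathrm{1}}}  \: \sqcup \:  \ell_{{\mathrm{2}}}  } , \_  ) \: \mathbf{be} \:   \ottmv{x}   \: \mathbf{in} \:   \ottmv{y}    :^{ \ell }  \ottnt{A} $ for the appropriate $\ell$; the key point is that unboxing an $( \ell_{{\mathrm{1}}}  \: \sqcup \:  \ell_{{\mathrm{2}}} )$-secure box is legal precisely when the observer level is at least $ \ell_{{\mathrm{1}}}  \: \sqcup \:  \ell_{{\mathrm{2}}} $, i.e. at least both $\ell_{{\mathrm{1}}}$ and $\ell_{{\mathrm{2}}}$. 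Third, apply \rref{ST-WPairD} (i.e. form $ \eta_{ \ell_{{\mathrm{2}}} } \: (-) $): since the body is observable at a level $\sqsupseteq \ell_{{\mathrm{2}}}$, it may be emboxed into an $\ell_{{\mathrm{2}}}$-secure box and released at level $\ell_{{\mathrm{1}}}$, giving something of type $ T_{ \ell_{{\mathrm{2}}} } \:  \ottnt{A} $ observable at $\ell_{{\mathrm{1}}}$. Fourth, apply \rref{ST-WPairD} again ($ \eta_{ \ell_{{\mathrm{1}}} } \: (-) $) to embox into an $\ell_{{\mathrm{1}}}$-secure box and release at level $ \bot $, yielding type $ T_{ \ell_{{\mathrm{1}}} } \:   T_{ \ell_{{\mathrm{2}}} } \:  \ottnt{A}  $ at level $ \bot $. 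Finally, discharge the assumption with \rref{ST-LamD} (the domain annotation $ \ell_{{\mathrm{1}}}  \: \sqcup \:  \ell_{{\mathrm{2}}} $ matching the annotation we used on $\ottmv{x}$), and if needed massage the context with \rref{ST-SubLD} to reach $ \emptyset $, concluding $  \emptyset   \vdash  \ottnt{c_{{\mathrm{2}}}}  : \:     T_{  \ell_{{\mathrm{1}}}  \: \sqcup \:  \ell_{{\mathrm{2}}}  } \:  \ottnt{A}    \to   T_{ \ell_{{\mathrm{1}}} } \:   T_{ \ell_{{\mathrm{2}}} } \:  \ottnt{A}    $.

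The only real subtlety --- and the step I would be most careful about --- is bookkeeping the observer levels so that each \rref{ST-LamD}, \rref{ST-WPairD}, and \rref{ST-LetPairD} side condition is met simultaneously: \rref{ST-LamD} forces the body to be checked at a level at least the domain annotation, \rref{ST-LetPairD} forces the scrutinee's box level to be below the current observer level, and each \rref{ST-WPairD} drops the observer level while recording the box level. Because $ \bot $ is the bottom element, once we are checking the lambda body we are working at level $\geq \ell_{{\mathrm{1}}}  \: \sqcup \:  \ell_{{\mathrm{2}}}$ (that is the domain annotation), so the \rref{ST-LetPairD} unboxing of an $( \ell_{{\mathrm{1}}}  \: \sqcup \:  \ell_{{\mathrm{2}}} )$-box goes through, and the two successive $\eta$-wrappings peel the level back down first to $\ell_{{\mathrm{1}}}$ and then to $ \bot $. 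Modulo threading these level annotations, the derivation is entirely mechanical, just as for $\ottnt{c_{{\mathrm{1}}}}$; I expect no genuine obstacle, only the need to keep the $\sqcup$/$\sqcap$ inequalities straight, which I would do by writing out the derivation tree in the same enumerated style used above for $\ottnt{c_{{\mathrm{1}}}}$.
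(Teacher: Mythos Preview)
Your overall plan is the paper's: derive the $\mathbf{let}$ at observer level $\ell_1 \sqcup \ell_2$, wrap twice with $\eta$ to step the level down to $\ell_1$ and then $\bot$, and close with \rref{ST-LamD}. That is exactly the five-line derivation in the appendix.

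However, your account of the level flow is backwards in a way that would bite if you tried to write the tree as you describe it. The domain annotation on the $\lambda$ (the $m$ in $\lambda^m$ and in ${}^{m}\!A\to B$) is $\bot$, not $\ell_1\sqcup\ell_2$: erased annotations default to $\bot$, and $c_2$ is written $\lambda x.\ldots$ with no superscript. The $\ell_1\sqcup\ell_2$ you are pointing to is the box annotation \emph{inside} the domain type $T_{\ell_1\sqcup\ell_2}A$, which is a different slot. Consequently the lambda body $\eta_{\ell_1}(\eta_{\ell_2}(\ldots))$ is checked at observer level $\bot$, not ``at level $\geq \ell_1\sqcup\ell_2$'' as you wrote; and \rref{ST-LamD} does not force the body's level above the domain annotation (it only puts $x$ into the context at grade $\ell\sqcup m$, here $\bot\sqcup\bot=\bot$). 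What actually raises the observer level to $\ell_1\sqcup\ell_2$ by the time you reach the $\mathbf{let}$ is \rref{ST-PairD} applied to each $\eta$: checking $(a^{m},\mathbf{unit}):^{\ell}T_{m}B$ requires $a:^{\ell\sqcup m}B$, so going \emph{inward} through $\eta_{\ell_1}$ takes the level from $\bot$ to $\ell_1$, and through $\eta_{\ell_2}$ from $\ell_1$ to $\ell_1\sqcup\ell_2$. With that correction the derivation goes through exactly, and you will not need any \rref{ST-SubLD} at the end since discharging the sole assumption $x$ already leaves you at $\emptyset$.
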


\begin{proof}
\begin{enumerate}
\setlength\itemsep{5pt}
\item \label{41} $  \ottmv{x}  :^{  \bot  }   T_{  \ell_{{\mathrm{1}}}  \: \sqcup \:  \ell_{{\mathrm{2}}}  } \:  \ottnt{A}    \vdash   \ottmv{x}   :^{  \ell_{{\mathrm{1}}}  \: \sqcup \:  \ell_{{\mathrm{2}}}  }   T_{  \ell_{{\mathrm{1}}}  \: \sqcup \:  \ell_{{\mathrm{2}}}  } \:  \ottnt{A}  $ [By \rref{ST-VarD,ST-SubRD}]
\item \label{42} $   \ottmv{x}  :^{  \bot  }   T_{  \ell_{{\mathrm{1}}}  \: \sqcup \:  \ell_{{\mathrm{2}}}  } \:  \ottnt{A}    ,   \ottmv{y}  :^{  \ell_{{\mathrm{1}}}  \: \sqcup \:  \ell_{{\mathrm{2}}}  }  \ottnt{A}    \vdash   \ottmv{y}   :^{  \ell_{{\mathrm{1}}}  \: \sqcup \:  \ell_{{\mathrm{2}}}  }  \ottnt{A} $
\item \label{43} $  \ottmv{x}  :^{  \bot  }   T_{  \ell_{{\mathrm{1}}}  \: \sqcup \:  \ell_{{\mathrm{2}}}  } \:  \ottnt{A}    \vdash   \mathbf{let} \: (  \ottmv{y} ^{  \ell_{{\mathrm{1}}}  \: \sqcup \:  \ell_{{\mathrm{2}}}  } , \_  ) \: \mathbf{be} \:   \ottmv{x}   \: \mathbf{in} \:   \ottmv{y}    :^{  \ell_{{\mathrm{1}}}  \: \sqcup \:  \ell_{{\mathrm{2}}}  }  \ottnt{A} $ [From  (\ref{41}) and (\ref{42}), by \rref{ST-LetPairD}]
\item \label{44} $  \ottmv{x}  :^{  \bot  }   T_{  \ell_{{\mathrm{1}}}  \: \sqcup \:  \ell_{{\mathrm{2}}}  } \:  \ottnt{A}    \vdash   \eta_{ \ell_{{\mathrm{2}}} } \:   (   \mathbf{let} \: (  \ottmv{y} ^{  \ell_{{\mathrm{1}}}  \: \sqcup \:  \ell_{{\mathrm{2}}}  } , \_  ) \: \mathbf{be} \:   \ottmv{x}   \: \mathbf{in} \:   \ottmv{y}    )    :^{ \ell_{{\mathrm{1}}} }   T_{ \ell_{{\mathrm{2}}} } \:  \ottnt{A}  $ 
\item \label{45} $  \ottmv{x}  :^{  \bot  }   T_{  \ell_{{\mathrm{1}}}  \: \sqcup \:  \ell_{{\mathrm{2}}}  } \:  \ottnt{A}    \vdash   \eta_{ \ell_{{\mathrm{1}}} } \:   \eta_{ \ell_{{\mathrm{2}}} } \:   (   \mathbf{let} \: (  \ottmv{y} ^{  \ell_{{\mathrm{1}}}  \: \sqcup \:  \ell_{{\mathrm{2}}}  } , \_  ) \: \mathbf{be} \:   \ottmv{x}   \: \mathbf{in} \:   \ottmv{y}    )     :^{  \bot  }   T_{ \ell_{{\mathrm{1}}} } \:   T_{ \ell_{{\mathrm{2}}} } \:  \ottnt{A}   $ 
\end{enumerate}
\end{proof}

%----------------------------------------------------------------------------------------------------

\end{document}